\documentclass[a4paper,10pt]{article}
\usepackage[utf8]{inputenc}
\usepackage[margin=1in]{geometry}
\usepackage{amssymb,amsmath,amsthm}
\usepackage{verbatim,hyperref}
\usepackage{dsfont}
\usepackage{algorithm}
\usepackage[noend]{algpseudocode}
\usepackage{thmtools,thm-restate}

\usepackage{thmtools}
\usepackage{mathtools}
\usepackage{enumerate}
 
\newcommand{\Q}{\widehat{Q}}

\newcommand{\m}{\widehat{m}}
\newcommand{\G}{\mathcal{G}}
\newcommand{\N}{\mathbb{N}}

\newcommand{\R}{\mathbb{R}}

\newtheorem{theorem}{Theorem}
\newtheorem{lemma}{Lemma}

\newtheorem{corollary}{Corollary}

\newtheorem{claim}{Claim}
\newtheorem{proposition}{Proposition}
\newtheorem{fact}{Fact}
\newtheorem{remark}{Remark}

\theoremstyle{definition}
\newtheorem{definition}{Definition}

\newcommand{\e}{\epsilon}

\newcommand{\wh}[1]{\widehat{#1}}
\newcommand{\poly}{\text{poly}}

\usepackage{todonotes}
\newcounter{todocounter}

\newcommand{\gggg}[1]{\stepcounter{todocounter}
  {\color{blue!90} Grzegorz: \thetodocounter: #1}}
\newcommand{\adp}[1]
  {\ensuremath{\left\langle #1 \right\rangle_{\scriptscriptstyle apx}}}
\newcommand{\rdp}[1]
  {\ensuremath{\left\langle #1 \right\rangle}} 
\newcommand{\an}[1]
  {\ensuremath{\left\lVert#1\right\rVert_{\scriptscriptstyle apx}}}
\newcommand{\rn}[1]
  {\ensuremath{|| #1 ||}} 
\newcommand{\Ca}[1]
  {\ensuremath{C^{\scriptscriptstyle apx}_{#1}}} 
\newcommand{\Cr}[1]
  {\ensuremath{C_{#1}}}

\newcommand{\wt}[1]{\widetilde{#1}}

\def\polylog{\operatorname{polylog}}


\title{Spectral Clustering Oracles in Sublinear Time}
\author{Grzegorz Gluch\\EPFL
\and Michael Kapralov\\EPFL 
\and Silvio Lattanzi\\Google Research
\and Aida Mousavifar \\EPFL 
\and Christian Sohler\\University of Cologne \footnote{Work was partially done while author was visiting 
researcher at Google Research, Switzerland}}
\date{}

{\makeatletter
	\gdef\xxxmark{%
		\expandafter\ifx\csname @mpargs\endcsname\relax 
		\expandafter\ifx\csname @captype\endcsname\relax 
		\marginpar{xxx}
		\else
		xxx 
		\fi
		\else
		xxx 
		\fi}
	\gdef\xxx{\@ifnextchar[\xxx@lab\xxx@nolab}
	\long\gdef\xxx@lab[#1]#2{{\bf [\xxxmark #2 ---{\sc #1}]}}
	\long\gdef\xxx@nolab#1{{\bf [\xxxmark #1]}}
}

\begin{document}

\maketitle

\begin{abstract}
Given a graph $G$ that can be partitioned into $k$ disjoint expanders with outer conductance upper bounded by $\epsilon\ll 1$, can we efficiently construct a small space data structure that allows quickly classifying vertices of $G$ according to the expander (cluster) they belong to? Formally, we would like an efficient local computation algorithm that misclassifies at most an $O(\epsilon)$ fraction of vertices in every expander. We refer to such a data structure as a \emph{spectral clustering oracle}.

Our main result is a spectral clustering oracle with query time $O^*(n^{1/2+O(\epsilon)})$ and preprocessing time $2^{O(\frac{1}{\e} k^4 \log^2(k))} n^{1/2+O(\epsilon)}$ that provides misclassification error $O(\epsilon \log k)$ per cluster for any $\epsilon \ll 1/\log k$. More generally, query time can be reduced at the expense of increasing the preprocessing time appropriately (as long as the product is about $n^{1+O(\epsilon)}$) -- this in particular gives a nearly linear time spectral clustering primitive. 

The main technical contribution is a sublinear time oracle that provides dot product access to the spectral embedding of $G$ by estimating distributions of short random walks from vertices in $G$. The distributions themselves provide a poor approximation to the spectral embedding, but we show that an appropriate linear transformation can be used to achieve high precision dot product access. We give an estimator for this linear transformation and analyze it using spectral perturbation bounds and a novel upper bound on the leverage scores of the spectral embedding matrix of a $k$-clusterable graph. We then show that dot product access to the spectral embedding is sufficient to design a clustering oracle.  At a high level our approach amounts to hyperplane partitioning in the spectral embedding of $G$, but crucially operates on a nested sequence of carefully defined subspaces in the spectral embedding to achieve per cluster recovery guarantees.
\end{abstract}
\setcounter{page}{0}

\newpage

\tableofcontents
\setcounter{page}{0}

\newpage

\section{Introduction}

As a central problem in unsupervised learning, graph clustering has been extensively studied in the past decades. Several formalizations of the problem have been considered in the literature. In this paper, we focus on the following (informal) variant of graph clustering:  Given a graph $G$ and an integer $k$, we are interested in finding $k$ nonoverlapping sets $C_1,C_2,\dots,C_k$ that are internally well-connected and that have a sparse cut to the outside. 
A popular approach to this problem is spectral clustering~\cite{KVV04,ng2002spectral,shi2000normalized,von2007tutorial}:
One embeds vertices of the graph into $k$ dimensional Euclidean space using the bottom $k$ eigenvectors of the Laplacian, and clusters the points in Euclidean space using the $k$-means algorithm (in practice), or using a more careful space partitioning approach (in theory). Spectral clustering has been applied in the context of a wide variety of problems, for example, image segmentation \cite{shi2000normalized}, speech separation \cite{BJ06}, clustering of protein sequences \cite{paccanaro2006spectral}, and predicting landslides in geophysics~\cite{bellugi2015spectral}. 
Spectral clustering usually requires to process the graph in two steps. First one computes the spectral embedding and then one clusters the resulting point set. This two stage approach seems to be highly non-local and it seems to be hard to obtain faster methods, if one only has to determine the cluster membership for a small subset of the vertices. However, such a sublinear time access is desirable in some applications. As a basic step towards such a sublinear time 
clustering algorithm, we need a way to quickly access the spectral embedding in some way. Therefore, we ask the following question, where we use $f_x \in \R^k$ to denote the spectral embedding of vertex $x$:

\vspace{0.1in}
\fbox{
\parbox{0.9\textwidth}{
\begin{center}
Is it possible to obtain dot product access to the spectral embedding of a graph in sublinear time? In other words, given a pair of vertices $x, y\in V$, can we quickly approximate the dot product $\langle f_x, f_y\rangle$ in $o(n)$ time?
\end{center}
}
}
\vspace{0.1in}

If such access is possible, it appears plausible that one can design a {\em sublinear spectral clustering oracle}, a small space data structure that provides fast query access to a good clustering of the graph. Our main result in this paper is {\bf (a)} a small space data structure that provides query access to dot products in the spectral embedding, as above, and {\bf (b)} a sublinear time spectral clustering oracle that uses this data structure.

We study a popular version of the spectral clustering problem where one assumes the existence of a planted solution, namely that the input graph can be partitioned into clusters $C_1,\ldots, C_k$ whose internal connectivity is nontrivially higher than the external connectivity. The goal is to recover the clusters approximately.  An average case version of this problem, where the clusters induce Erd\H{o}s-R\'{e}nyi graphs (or random regular graphs), and the edges across clusters are similarly random, has been studied extensively in the literature on the stochastic block model (SBM)~\cite{DBLP:journals/ftcit/Abbe18} for its close relationship to the community detection problem. In this work we study a worst-case version of this problem:

\vspace{0.1in}

\begin{center}
\fbox{
\parbox{0.9\textwidth}{
\begin{center}
Given a graph $G=(V, E)$ that admits a partitioning into a disjoint union of $k$ induced expanders  $C_1,\ldots, C_k$ with outer conductance bounded by $\e\ll 1$, output an approximation to $C_1,\ldots, C_k$ that is correct up to a $O(\e)$ error {\bf on every cluster}.

\end{center}
}
}
\end{center}

We define a {\em spectral clustering oracle with per cluster error $\delta\in (0, 1)$} as a small space data structure that  implicitly defines disjoint subsets $\wh{C}_1,\ldots, \wh{C}_k$ of $V$ such that for some permutation $\pi$ on $k$ elements one has $|C_i\Delta \wh{C}_{\pi(i)}| \leq \delta |C_i|$ for every $i=1,\ldots, k$. The oracle must provide fast query access to such a clustering. The focus of this paper is:
 
\vspace{0.1in} 

\fbox{
\parbox{0.9\textwidth}{
\begin{center}
Design a sublinear time spectral clustering oracle with per cluster error $\approx O(\e)$.
\end{center}
}
}

\vspace{0.1in}

Our main result is a spectral clustering oracle as above, with a slight loss in error parameter. Specifically, our spectral clustering oracle is correct up to $O(\e \log k)$ error  on every cluster:
\begin{theorem}[Informal]\label{thm:mainresult-inf}
There exists a spectral clustering oracle that for every graph $G=(V, E)$ that admits a partitioning into a disjoint union of $k$ induced expanders  $C_1,\ldots, C_k$ with outer conductance bounded by $\e\ll \frac1{\log k}$ achieves error $O(\e \log k)$ per cluster, query time $\approx n^{1/2+O(\e)}$, preprocessing time $\approx 2^{O(\frac{1}{\e} k^4 \log^2(k))} n^{1/2+O(\e)}$ and space $\approx n^{1/2+O(\e)}$. 

Query times can be made faster at the expense of increased space and prepropcessing time, as long as the product of query time and preprocessing time is $\approx n^{1+O(\e)}$, leading in particular to a nearly linear time algorithm for spectral clustering.
\end{theorem}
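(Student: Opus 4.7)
The plan is to decompose the problem into two modular components, in line with the structure suggested in the abstract. First, I would build a data structure that provides sublinear-time dot product access to the spectral embedding $f_x \in \R^k$, where $f_x$ is obtained from the bottom $k$ eigenvectors of the (normalized) Laplacian restricted to the row indexed by $x$. Second, I would use this primitive as a black box inside a clustering routine that hyperplane-partitions the embedding on nested subspaces and recovers each cluster with error $O(\e \log k)$.

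For the first component, the key observation is that the $t$-step lazy random walk distribution from $x$, call it $\m_x^t$, decomposes in the eigenbasis as $\sum_i (1-\lambda_i/2)^t \langle u_i, e_x\rangle u_i$. Under the $k$-clusterable assumption, $\lambda_1,\ldots,\lambda_k$ are $O(\e)$ while $\lambda_{k+1}$ is bounded away from $0$, so for $t \asymp \log n /\lambda_{k+1}$ the tail $i>k$ is negligible and $\langle \m_x^t, \m_y^t\rangle \approx f_x^\top \Sigma\, f_y$ for a nearly diagonal $k\times k$ matrix $\Sigma$ depending on the top spectrum. Estimating $\langle \m_x^t, \m_y^t\rangle$ reduces to running $\approx n^{1/2+O(\e)}$ random walks of length $t$ and applying a collision-counting / $\ell_2$-distribution inner product tester. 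To invert the linear correction $\Sigma$, I would sample $s=\poly(k,1/\e)$ anchor vertices in preprocessing, compute their full walk distributions to high accuracy, and solve a least squares system whose conditioning is controlled by the novel leverage score bound on the spectral embedding matrix. Setting up and inverting this system is what pays the $2^{O(k^4 \log^2(k)/\e)}$ preprocessing overhead; each query then amounts to $O(k)$ anchor-against-query dot product estimations.

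For the second component, I would exploit the geometry of $f$: in an exactly $k$-clusterable graph, $f_x$ is essentially constant on each $C_i$ and the $k$ centroids $\mu_i := \frac{1}{|C_i|}\sum_{x \in C_i} f_x$ are mutually near-orthogonal once rescaled by $\sqrt{|C_i|}$. The natural idea is to sample $O(k\log k)$ random anchors so that each cluster is hit, cluster these anchors by repeatedly invoking the dot product oracle, pick a representative $c_i$ per cluster, and then classify a query $x$ by $\argmax_i \langle f_x, f_{c_i}\rangle$. This alone only yields $O(\e k)$ \emph{total} error, which is useless when some cluster is small. To obtain $O(\e \log k)$ error \emph{per cluster}, I would peel clusters from largest to smallest: identify the heaviest cluster $C_{i_1}$ among the anchors, project the embedding onto the orthogonal complement of $f_{c_{i_1}}$, and recurse on the remaining $k-1$ directions. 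Because each peeling level loses only an $O(\e)$ fraction of each surviving cluster (by spectral perturbation applied to the restricted subspace), a recursion of depth $\log k$ accumulates the claimed $O(\e \log k)$ per-cluster error.

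The main obstacle I expect is the interaction between the two sources of error in the dot product oracle: the truncation error from discarding eigenvalues $i>k$, which scales like $(1-\lambda_{k+1}/2)^{2t}$, and the sampling noise in estimating $\langle \m_x^t, \m_y^t\rangle$ from $\approx n^{1/2+O(\e)}$ walks. Making the \emph{resulting} error on $\langle f_x, f_y\rangle$ only $O(\e)$ (rather than $O(\e)$ on the raw walk distribution) is precisely where the spectral perturbation analysis of $\Sigma$ and the leverage score bound enter: the condition number of the anchor least squares system, and hence the amplification of sampling noise, is governed by the coherence of $U_{[k]}$, and the paper's bound $\norm{U_{[k]}^\top e_x}^2 \lesssim \poly(k)/|C(x)|$ (up to lower-order terms in $\e$) is what keeps that condition number $\poly(k)$ and makes the claimed $n^{1/2+O(\e)}$ query time achievable. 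The peeling step inherits this same error budget, so the final per-cluster error remains $O(\e \log k)$ as claimed.
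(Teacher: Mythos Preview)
Your two-component decomposition and the random-walk approach to the dot product oracle are essentially what the paper does, and you correctly identify that the leverage-score / $\ell_\infty$ bound on eigenvectors is what controls the conditioning. However, two of your attributions are wrong and one of them hides a real gap. First, the $2^{O(k^4\log^2(k)/\e)}$ factor does \emph{not} come from any least-squares system in the oracle; the oracle's preprocessing is only $\poly(k/\e)\cdot n^{1/2+O(\e)}$. The exponential factor arises in the clustering component: the paper samples $s=\Theta(k^4\log(k)/\e)$ vertices and \emph{enumerates all $k^s$ partitions} of them to find one whose empirical means $\widehat\mu_i$ are close to the true $\mu_i$. Your proposal to ``cluster these anchors by repeatedly invoking the dot product oracle'' is precisely the step the paper does not know how to do cheaply---it is the clustering problem itself on a small instance, and dot products alone do not obviously solve it (the counterexample in Section~3.2.1 shows why naive thresholding fails).

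Second, and more seriously, your peeling scheme as described removes \emph{one} cluster per level and then recurses on $k-1$ directions, which is depth $k$, not $\log k$; the error would accumulate to $O(\e k)$, exactly the bound you said was ``useless.'' The paper's key structural lemma (Lemma~20/21) is that when you project onto the span of the $b$ remaining centers, the total mass in the pairwise intersections $\bigcup_{i\neq j} C_{\Pi\mu_i,0.9}\cap C_{\Pi\mu_j,0.9}$ is only $O(\e b\cdot n/k)$, so by Markov at least \emph{half} of the $b$ candidate clusters have small intersection and can be peeled simultaneously. This is what gives $O(\log k)$ stages. But you cannot tell \emph{which} half are safe from size alone (all clusters have comparable size by assumption), so ``largest to smallest'' is not a usable criterion. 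The paper instead estimates the outer conductance of each candidate cluster by sampling (Algorithm~10, Lemma~31) and peels exactly those that pass; Lemma~14 then certifies that any $k$ disjoint low-conductance sets must match the true clusters. Without this conductance test and the half-per-stage argument, your scheme does not achieve the claimed $O(\e\log k)$ per-cluster error.
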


As byproduct of our main result we also obtain new efficient clustering algorithms in the Local Computation Algorithms (LCA) model (see \cite{DBLP:journals/corr/abs-1104-1377} for introduction of the model and \cite{DBLP:conf/soda/AlonRVX12} for LCA with limited randomness). 

A very important feature of the problem above is the fact that our algorithms recovers a $1-O(\e \log k)$ fraction of {\bf every cluster} as opposed to just classifying a $1-O(\e\log k)$ fraction of vertices of the graph correctly (this latter question allows one to output fewer than $k$ clusters, and is much easier to solve). 
To put this in perspective, it is instructive to apply multiway Cheeger inequalities (e.g.,~\cite{lee2014multiway}, \cite{KwokMultiwaySpectrumGap}) to our setting, noting that the $k$-th eigenvalue $\lambda_k$ of the normalized Laplacian of a graph that can be partitioned into $k$ clusters as above is bounded by $O(\e)$. This means that multiway Cheeger inequalities can be used to recover $k$ clusters with outer conductance $k^2 \sqrt{\e}$ (see~\cite{lee2014multiway}), which becomes trivial unless $\e<1/k^4$ (we note that our problem admits a much simpler solution when $\e\ll 1/k$). One may note that multiway Cheeger inequalities can also recover $0.9k$ clusters with outer conductance $\log^{O(1)} k \sqrt{\e}$ in our setting (e.q. \cite{LouisMultiwayCheegerFraction}), but, as mentioned above, recovering most clusters is much easier that recovering each cluster to $1\pm O(\e)$ multiplicative error, and does not solve our problem.  The most relevant prior result is due to Sinop~\cite{DBLP:conf/soda/Sinop16}, where the author achieves error $O(\sqrt{\e})$ per cluster using spectral techniques. Sinop's result improves up on previous work of~\cite{AwasthiS12}, which achieved per cluster error of $O(\e k)$ (or, rather, is somewhat incomparable to~\cite{AwasthiS12} due to the worst dependence on $\e$, but a lack of dependence on $k$). As we argue below, Sinop's techniques are hard to extend to the sublinear time regime. At the same time, one should note that our result improves on~\cite{AwasthiS12} under the assumption that cluster sizes are comparable while using only sublinear time in the size of the input graph.

\paragraph{Main challenges and comparison to results on testing cluster structure.} This problem is related the well-studied expansion testing problem~\cite{KaleS08,NachmiasS10,DBLP:books/sp/goldreich2011/GoldreichR11,DBLP:journals/cpc/CzumajS10,KalePS13}, which corresponds to the setting of one or two clusters, as well as to the problem of testing cluster structure of graphs, where one essentially wants to determine $k$, the number of clusters in $G$. The problem of {\em testing} cluster structure has recently been considered in the literature~\cite{DBLP:conf/stoc/CzumajPS15, chiplunkar2018testing}: given access to a graph $G$ as above, compute the value of $k$ (in fact, both results~\cite{DBLP:conf/stoc/CzumajPS15} and~\cite{chiplunkar2018testing} apply to the harder property testing problem of distinguishing between graphs that are $k$-clusterable according to the definition above and graphs that are $\e$-far from $k$-clusterable, but a procedure for computing $k$ is the centerpiece of both results). It is interesting to note that the work of~\cite{DBLP:conf/stoc/CzumajPS15} also yields an algorithm for our problem, but only under very strong assumptions on the outer conductance of the clusters (one needs $\e\ll \frac1{\text{poly}(k)\log n}$). The recent work of Peng~\cite{Peng20} considers a robust version of testing cluster structure, but requires $\e\ll \frac1{\text{poly}(k)\log n}$, just like the work of~\cite{DBLP:conf/stoc/CzumajPS15}.

The recent work of~\cite{chiplunkar2018testing} on testing cluster structure yields an optimal tester, which works for any $\e$ smaller than a constant and achieves essentially optimal runtime, but unfortunately their techniques do no extend to the `learning' version of the problem. The reason is very simple: the algorithm of~\cite{chiplunkar2018testing} needs to distinguish between the graph $G$ being a union of $k$ clusters and $k+1$ clusters, and their approach amounts to verifying whether a graph can be partitioned into $k$ clusters. To do so it suffices to check whether the spectral embedding is effectively $k$-dimensional, i.e. whether it spans a nontrivial $(k+1)$-dimensional volume. In order to certify this, however, it suffices to exhibit $k+1$ vertices that span a nontrivial $(k+1)$-dimensional volume. For that, one essentially only needs to locate at least one `typical' point in every cluster, which is much easier than our task of correctly recovering almost all, i.e. a $1-O(\e)$ fraction of vertices in every cluster. In other words, testing graph cluster structure requires only a rather basic access to and control of the spectral embedding. The main technical contribution of our paper is a set of tools for getting precise dot product access to this embedding, together with several new structural claims about it that enable our clustering algorithm.

\paragraph{Comparison to the work of Sinop~\cite{DBLP:conf/soda/Sinop16}.} The work of Sinop~\cite{DBLP:conf/soda/Sinop16} gives a nearly linear time algorithm for recovering every cluster up to error of $1\pm O(\sqrt{\e})$ using spectral techniques\footnote{One must note that the work of~\cite{DBLP:conf/soda/Sinop16} does not require the bounded degree assumption, and can handle clusters of significantly different size.}, for sufficiently small $\e$. The algorithm would be very hard to implement in sublinear time, since one of its central tools (the \textsc{Round} procedure, which controls propagation of error i.e., Lemma 5.4 of ~\cite{DBLP:conf/soda/Sinop16}) heavily relies on the ability to have explicit access to the eigendecomposition of the Laplacian. Specifically, Sinop's algorithm first finds a crude approximation $S$ to a cluster to be recovered, and then improves the approximation by explicitly constructing the corresponding submatrix of the spectral embedding and performing an SVD. One could plausibly envision implementing this using random walks, but that would be challenging, since one would need to consider a random walk induced on a rather unstructured subset of vertices of the graph.

\paragraph{Our contributions: sublinear time access to the spectral embedding.} 

Let $G=(V,E)$ be a $d$-regular graph with $n=|V|$. 
Without loss of generality we assume that $V=\{1,\ldots,n\}$. We assume that $n$ and $d$ are given to the algorithm and that we have oracle access to $G$: We can specify a 
vertex $x\in V$ and a number $i, 1 \le i \le d$, and we will be given in constant time the $i$-th neighbor of $x$. This is also called the bounded degree
graph model.

In this paper we will consider $d$-regular graphs that have a certain cluster structure. We parameterize this cluster structure using the internal and external conductance parameters.

\begin{definition}[\textbf{Internal and external conductance}]
\label{def:conductance}
Let $G = (V,E)$ be a graph. For a set $S\subseteq C\subseteq V$, let $E(S,C\setminus S)$ be the set of edges with one endpoint in $S$ and the other in $C\setminus S$.
The \textit{conductance of a set $S$ within $C$} is $\phi^G_C(S)=\frac{|E(S,C\setminus S)|}{d|S|}$. The \textit{external-conductance} of set $C$ is defined to be $\phi^G_{V}(C)=\frac{|E(C,V\setminus C)|}{d|C|}$.
The \textit{internal-conductance} of set $C\subseteq V$, denoted by $\phi^G(C)$, is  $$\min_{S\subseteq C\text{,} 0<|S|\leq \frac{|C|}{2}}\phi^G_C(S)$$ if $|C|>1$ and one otherwise. 
\end{definition}

\begin{remark}
For simplicity we present all the proofs for $d$-regular graphs, even though all the proofs also work for $d$-bounded graphs, with the same definition of conductance as in Definition~\ref{def:conductance} (i.e., with normalization by $d |S|$ as opposed to the volume of $S$; the two notions of conductance can in the worst case differ by a factor of $d$).
Note that this is equivalent to converting a $d$-bounded degree graph $G$ to a $d$-regular graph $G^{\text{reg}}$ by adding $d-\text{deg}(v)$ self-loops to each vertex $v$ with degree $\text{deg}(v)$. Let $L^{\text{reg}}$ be the \textit{normalized Laplacian} of $G^{\text{reg}}$. Then the random walk on graph $G$ is exactly same as a lazy random
walk on graph $G^{\text{reg}}$ and the definition of conductance is consistent.
\end{remark}

Based on the conductance, clusterability of graphs is defined as follows.

\begin{definition}[$(k,\varphi,\epsilon)$-\textbf{clustering}]
\label{def:clusterable}
Let $G=(V,E)$ be a $d$-regular graph. A $(k,\varphi,\epsilon)$-clustering of $G$ is a partition of vertices $V$ into disjoint subsets $C_1\cup \ldots \cup C_k$ such that for all $i\in [k]$, $\phi^G(C_i)\geq\varphi$, $\phi^G_V(C_i)\leq\epsilon$ and for all $i,j \in [k]$ one has $\frac{|C_i|}{|C_j|} \in O(1)$.
$G$ is called $(k,\varphi,\epsilon)$-clusterable if there exists a $(k,\varphi,\epsilon)$-clustering for $G$.
\end{definition}

We also need for formally define spectral embedding.
\begin{definition}[Spectral embedding]\label{def:embedding}
For a $d$-regular graph $G=(V, E)$ and integer $2\leq k\leq n$ we define the spectral embedding of $G$ as follows. Let $U\in \R^{k\times n}$ denote the matrix of the bottom $k$ eigenvectors of the  normalized Laplacian of $G$ (this choice is not unique; fix any such matrix $U$). Then for every $x\in V$ the spectral embedding $f_x\in \R^k$ of $x$ is the $x$-th column of the matrix $U$, which we write as $U=(f_y)_{y\in V}$.
\end{definition}

\begin{remark} We note that the spectral embedding $f_x, x\in V$ is not uniquely defined. However, in this paper we are only interested in obtaining dot product access to this embedding, i.e. in fast algorithms for computing $\langle f_x, f_y\rangle$ for $x, y\in V$. Such dot products are in fact uniquely defined for any $G$ that is $(k, \varphi, \e)$-clusterable with $\e/\varphi^2$ smaller than an absolute constant -- see Remark~\ref{rm:dot-products-well-defined} below.
\end{remark}

Our first algorithmic result is a sublinear time spectral dot product oracle:
\begin{restatable}{theorem}{thmdot} [Spectral Dot Product Oracle]
\label{thm:dot}
Let $\epsilon, \varphi\in (0,1)$ with $\epsilon \leq \frac{\varphi^2}{10^5}$.
Let $G=(V,E)$ be a $d$-regular graph that admits a $(k,\varphi,\epsilon)$-clustering $C_1,\dots,C_k$.
Let  $ \frac{1}{n^5} < \xi <1$. Then \textsc{InitializeOracle($G,1/2,\xi$)} (Algorithm \ref{alg:LearnEmbedding})
computes in time $(\frac{k}{\xi})^{O(1)} \cdot n^{1/2+O(\epsilon/\varphi^2)}\cdot (\log n)^3 \cdot \frac{1}{\varphi^2}$ a sublinear space data structure $\mathcal{D}$ of size $(\frac{k}{\xi})^{O(1)}\cdot n^{1/2+O(\epsilon/\varphi^2)}\cdot (\log n)^3 $
such that with probability at least $1-n^{-100}$ the following property is satisfied: 

For every pair of vertices $x,y\in V$, \textsc{SpectralDotProduct($G,x,y,1/2,\xi,\mathcal{D}$)} (Algorithm \ref{alg:dotProduct}) 
computes an output value $\adp{f_{x},f_y}$ such that with probability at least
$1-n^{-100}$ 
\[
\left|\ \adp{f_{x},f_y} - \langle f_x , f_y \rangle   \right|\leq \frac{\xi}{n}.
\]
The running time of \textsc{SpectralDotProduct($G,x,y,1/2,\xi,\mathcal{D}$)} is $(\frac{k}{\xi})^{O(1)} \cdot n^{1/2+O(\epsilon/\varphi^2)}\cdot (\log n)^2 \cdot \frac{1}{\varphi^2}$.

Furthermore, for any $0\leq \delta \leq 1/2 $, one can obtain the following trade-offs between preprocessing time and query time: 
Algorithm \textsc{SpectralDotProduct($G,x,y,\delta,\xi,\mathcal{D}$)} requires $(\frac{k}{\xi})^{O(1)} \cdot n^{\delta+O(\epsilon/\varphi^2)}\cdot (\log n)^2 \cdot \frac{1}{\varphi^2}$ per query when the prepressing time of 
Algorithm \textsc{InitializeOracle($G,\delta,\xi$)} is increased to $(\frac{k}{\xi})^{O(1)} \cdot n^{1-\delta+O(\epsilon/\varphi^2)}\cdot (\log n)^3\cdot \frac{1}{\varphi^2}$.
\end{restatable}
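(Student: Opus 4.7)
The plan is to reduce spectral dot-product access to the estimation of short random walk distributions, then invert an unknown diagonal scaling implicitly via pseudo-inverses of a sample Gram matrix. For $t=\Theta(\log n/\varphi^2)$ and the lazy random walk matrix $W$, define $m_x := W^t e_x$. Writing $U\in\R^{n\times k}$ for the matrix of bottom-$k$ Laplacian eigenvectors with eigenvalues $\lambda_1\le\cdots\le\lambda_k$, we have $\lambda_k = O(\e)$ by higher-order Cheeger and $\lambda_{k+1}=\Omega(\varphi^2)$ because $G$ is $(k,\varphi,\e)$-clusterable. Letting $D\in\R^{k\times k}$ be diagonal with $D_{ii}=(1-\lambda_i/2)^t$, this choice of $t$ suppresses the component of $m_x$ orthogonal to the bottom-$k$ eigenspace to norm $\le n^{-10}$, so $m_x \approx UDf_x$ where $f_x=U^Te_x$ is the spectral embedding, and in particular $\langle m_x,m_y\rangle\approx f_x^TD^2f_y$. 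The algorithmic task is to strip off the unknown factor $D^2$.

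The algorithm samples an anchor set $S=\{s_1,\ldots,s_s\}$ of $s=n^{1/2+O(\e/\varphi^2)}$ uniformly random vertices in \textsc{InitializeOracle} and simulates $R=n^{1/2+O(\e/\varphi^2)}\cdot\poly(k,\log n,1/\xi)$ independent $t$-step walks from each anchor; from the endpoints it constructs a cross-collision estimator $\wh G\in\R^{s\times s}$ of the Gram matrix $G_{ij}=\langle m_{s_i},m_{s_j}\rangle$. On query $(x,y)$, \textsc{SpectralDotProduct} runs $R$ walks each from $x$ and $y$ and forms $\wh v_x,\wh v_y\in\R^s$ estimating $v_x(i)=\langle m_x,m_{s_i}\rangle$. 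With $F_S\in\R^{k\times s}$ denoting the matrix whose columns are $f_{s_1},\ldots,f_{s_s}$ one has $v_x=F_S^TD^2f_x$ and $G=F_S^TD^2F_S$, and a direct SVD calculation yields the exact identity
\[
v_x^T G^{\dagger 2} v_y \;=\; f_x^T D\,(DF_SF_S^TD)^{-1}\,D f_y .
\]
Since $\sum_{x\in V}f_xf_x^T=I_k$ and the leverage-score bound $\|f_x\|_2^2=O(k/n)$ holds for all $x\in V$ (a separate structural fact I would prove using the cluster structure of $G$), matrix Chernoff gives $F_SF_S^T=(s/n)I_k+E$ with $\|E\|\ll s/n$ provided $s\gtrsim k\log k$, which makes the right-hand side above equal to $(n/s)\langle f_x,f_y\rangle$ up to controllable error. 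The oracle therefore returns $\adp{f_x,f_y} := (s/n)\cdot \wh v_x^T \wh G^{\dagger 2} \wh v_y$.

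The main obstacle is error amplification through $\wh G^{\dagger 2}$: the smallest nontrivial singular value of $G$ is on the order of $(s/n)\min_i D_{ii}^2 = (s/n)\cdot n^{-O(\e/\varphi^2)}$, so a naive pseudo-inverse would amplify an additive perturbation of $\wh G$ by a factor of $n^{O(\e/\varphi^2)}$. I would handle this by truncating $\wh G^\dagger$ to its top-$k$ singular directions and invoking Wedin's theorem (or Davis--Kahan in the symmetric case) to show that the top-$k$ invariant subspaces of $\wh G$ and $G$ agree to high precision; this is precisely where the hypothesis $\e\le\varphi^2/10^5$ is crucial, since it controls the gap between the $k$-th and $(k+1)$-st singular values of $G$. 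A secondary but more routine obstacle is the variance analysis of the cross-collision estimators: since $\|m_x\|_2^2\approx k/n$, entries $v_x(i)$ can be as small as $\Theta(k/n^2)$, and a Bernstein/Chebyshev argument on (lightly dependent) collision indicators is needed to show that $R=n^{1/2+O(\e/\varphi^2)}$ walks yield entrywise accuracy $\xi/(n\cdot\poly(k))$ with probability $1-n^{-\Omega(1)}$; a union bound over all $n^2$ queries and $s^2$ Gram entries then produces the stated $1-n^{-100}$ guarantee.

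Finally, for the preprocessing/query trade-off with parameter $\delta\in[0,1/2]$ one takes $s=n^{1-\delta+O(\e/\varphi^2)}$ anchors and per-query walk count $R_{\mathrm{query}}=n^{\delta+O(\e/\varphi^2)}$; the identities above are insensitive to $\delta$, only the variance analysis is rescaled (the relevant product $s\cdot R_{\mathrm{query}}=n^{1+O(\e/\varphi^2)}$ is preserved), and the resulting preprocessing and query costs match the stated bounds up to the claimed $\poly(k,\log n,1/\xi)$ overhead.
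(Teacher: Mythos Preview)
Your algebraic skeleton is essentially the paper's: the identity $v_x^TG^{\dagger 2}v_y=f_x^TD(DF_SF_S^TD)^{-1}Df_y$ is exactly what underlies the paper's estimator, and the truncated pseudoinverse together with Davis--Kahan is also how the paper controls error amplification. However, two substantive gaps make the proposal, as written, not work.

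\textbf{The leverage score bound is both too strong and the actual crux.} You assert $\|f_x\|_2^2=O(k/n)$ for \emph{all} $x$ as ``a separate structural fact''. This is false in general: the correct uniform bound is $\|f_x\|_2^2=O(k^2\cdot n^{O(\epsilon/\varphi^2)}/n)$, and the extra $n^{O(\epsilon/\varphi^2)}$ factor is unavoidable. Establishing even this weaker bound is the paper's main technical contribution (its $\ell_\infty$ bound on bottom-$k$ eigenvectors via a threshold-set expansion argument, which amounts to an $O(\varphi^2/\epsilon)$-th moment bound rather than the second-moment bound you implicitly appeal to). Without this, your matrix Chernoff step fails: you cannot conclude $F_SF_S^T\approx(s/n)I_k$ from $s\gtrsim k\log k$ samples; you need $s\gtrsim k^{O(1)}n^{O(\epsilon/\varphi^2)}$, and that only after proving the leverage bound.

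\textbf{Placing $n^{1/2}$ in the anchor count $s$ breaks the runtime.} With $s=n^{1/2+O(\epsilon/\varphi^2)}$ (and $s=n^{1-\delta+O(\epsilon/\varphi^2)}$ for the trade-off), already the top-$k$ SVD of the $s\times s$ matrix $\widehat G$ costs $\Omega(s^2k)\ge n^{1+O(\epsilon/\varphi^2)}$, and forming $\widehat v_x\in\R^s$ per query costs $\Omega(s\cdot R_{\mathrm{query}})=n^{1+O(\epsilon/\varphi^2)}$, both exceeding the claimed bounds. The paper instead fixes $s=n^{O(\epsilon/\varphi^2)}\cdot\poly(k,\log n,1/\xi)$ (small, independent of $\delta$) and puts the $n^{1-\delta}$/$n^\delta$ trade-off in the \emph{walk counts} $R_{\text{init}}$ and $R_{\text{query}}$. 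This is possible precisely because the leverage bound above guarantees that a tiny anchor set already makes $F_SF_S^T$ concentrate; the two issues are linked.
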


\paragraph{Our results: a spectral clustering oracle.}
 Our goal is to compute a data structure that provides sublinear time access to a $(k,\varphi,\e)$-clustering of $G$. Such a data structure is called a $(k,\varphi,\e)$-clustering oracle.
 We now formally define a spectral clustering oracle in the Local Computation (LCA) model: 
\begin{definition}[Spectral clustering oracle]\label{def:oracle}
A randomized algorithm $\mathcal{O}$ is a $(k,\varphi,\e)$-clustering oracle if, when given query access to a $d$-regular graph $G=(V,E)$ that admits a $(k,\varphi,\epsilon)$-clustering  $C_1, \ldots , C_k$,  the algorithm $\mathcal{O}$ provides
consistent query access to a partition $\widehat{P}=(\widehat{C}_1,\ldots, \widehat{C}_k)$ of $V$. The partition $\widehat{P}$ is determined solely by $G$ and the algorithm's random seed.
Moreover, with probability at least $9/10$ over the random bits of $\mathcal{O}$ the partition $\widehat{P}$ has the following property:
for some permutation $\pi$ on $k$ elements one has for every $i \in [k]$:
$$|C_i \triangle \widehat{C}_{\pi(i)}| \leq O\left(\frac{\epsilon \cdot \log(k)}{\varphi^3}\right) |C_i|\text{.}$$
\end{definition}

\begin{remark}
Note that it is crucial that $\mathcal{O}$ provides consistent answers, i.e. classifies a given $x\in V$ in the same way every time it is queried (for a fixing of its random seed).
\end{remark}

We are interested in clustering oracles that perform few probes per query. Our main contribution is:

\begin{restatable}{theorem}{thmmainresult}\label{thm:mainresult}
For every integer $k\geq 2$, every $\varphi\in (0, 1)$, every $\epsilon \ll \frac{\varphi^3}{\log k}$, every $\delta \in (0,1/2]$ there exists a $(k,\varphi,\e)$-clustering oracle that:
\begin{itemize}
    \item has $\widetilde{O}_{\varphi} \left( 2^{O \left(\frac{\varphi^2}{\e} k^4 \log^2(k) \right)} \cdot n^{1 - \delta + O(\e/\varphi^2)} \right)$ preprocessing time,
    \item has $\widetilde{O}_{\varphi} \left( \left(\frac{k}{\e}  \right)^{O(1)} \cdot n^{\delta + O(\e/\varphi^2)} \right)$ query time,
    \item uses $\widetilde{O}_{\varphi} \left(\left(\frac{k}{\e} \right)^{O(1)} \cdot n^{1 - \delta + O(\e/\varphi^2)} \right)$ space,
    \item uses $\widetilde{O}_{\varphi} \left( \left(\frac{k}{\e} \right)^{O(1)} \cdot n^{O(\e/\varphi^2)} \right)$ random bits,
\end{itemize}
where $O_{\varphi}$ suppresses dependence on $\varphi$ and $\widetilde{O}$ hides all $\polylog(n)$ factors.


\end{restatable}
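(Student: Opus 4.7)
The plan is to use the spectral dot product oracle from Theorem~\ref{thm:dot} as a black box and layer on top of it a clustering procedure that converts approximate dot product access into a per-cluster accurate partition. The structural fact I would exploit is that in a $(k,\varphi,\e)$-clusterable graph the spectral embedding is (nearly) a union of $k$ clusters concentrated around $k$ (nearly) mutually orthogonal centroids $\mu_i \in \R^k$, each of squared norm $\approx 1/|C_i|$: one has $\langle f_x, f_y\rangle \approx 1/|C_i|$ when $x,y$ both lie in the same cluster $C_i$, and $\langle f_x, f_y\rangle \approx 0$ when $x,y$ lie in different clusters, up to an additive error of order $\e/\varphi^2$ for a $1-O(\e/\varphi^2)$ fraction of vertices. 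With $\xi$ in Theorem~\ref{thm:dot} chosen small enough (e.g.\ $\xi = \poly(\e \varphi/k)$), such dot products are estimated in sublinear time.

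During preprocessing, the algorithm draws a set $I \subseteq V$ of $\widetilde{O}((k/\e)^{O(1)})$ uniformly random vertices; by the balanced-cluster assumption each cluster is hit by $\Theta(|I|/k)$ samples, the overwhelming majority of which are ``good'' (well-concentrated around their centroid). Calling the dot product oracle on all $O(|I|^2)$ pairs and enumerating partitions of $I$, the algorithm identifies $I = I_1 \sqcup \cdots \sqcup I_k$ so that, up to a permutation, each $I_i$ consists of good vertices of cluster $C_i$. The super-polynomial factor $2^{O((\varphi^2/\e)\, k^4 \log^2 k)}$ in preprocessing time pays for this enumeration and is the only non-polynomial-in-$k$ cost.

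Given $I_1,\ldots,I_k$ we store proxy centroids $\wh{\mu}_i = \frac{1}{|I_i|}\sum_{y \in I_i} f_y$ implicitly, and answer a query $q\in V$ by estimating, via Theorem~\ref{thm:dot}, the $k$ quantities $\langle f_q, \wh{\mu}_i\rangle = \frac{1}{|I_i|}\sum_{y \in I_i} \langle f_q, f_y\rangle$ and returning the index $i$ that maximises it. For a typical $q \in C_i$ one has $\langle f_q, \wh{\mu}_i\rangle \approx 1/|C_i|$ while $\langle f_q, \wh{\mu}_j\rangle \approx 0$ for $j \neq i$, so the classification is correct. Each query calls the dot product oracle $O(k|I|)$ times, each at cost $\widetilde{O}(n^{\delta + O(\e/\varphi^2)})$ by Theorem~\ref{thm:dot}, matching the claimed query time; consistency across queries is immediate because the whole scheme is a deterministic function of $G$ and of the single random seed used to build the data structure and to draw $I$. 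The space and random-bit bounds are inherited from Theorem~\ref{thm:dot} plus the $O(k|I|)$ bits used to store $I_1,\ldots,I_k$.

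The main obstacle, where this sketch must be refined, is establishing the \emph{per-cluster} error $O(\e \log k/\varphi^3)$ rather than a bound on the total misclassified mass. A direct Markov argument over the set of bad vertices controls only the global misclassification, and in principle a small cluster could contain a disproportionate share of bad vertices. Following the outline in the introduction, I would work with a nested sequence of subspaces in the spectral embedding: after tentatively committing to a cluster $\wh{C}_i$ I would restrict the remaining analysis to the orthogonal complement of the span of $\{\wh{\mu}_1,\ldots,\wh{\mu}_i\}$ and rerun the centroid argument on the remaining $k-i$ clusters, so that the mass that can ``leak'' out of any particular cluster is bounded by a geometric series and accrues only a multiplicative $O(\log k)$ factor overall. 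This step also forces $\xi$ to be chosen small enough that cumulative dot-product error, summed over the $k$ levels of the recursion and the $|I|$ samples per centroid, stays well below the cluster separation $1/|C_i|$; the resulting choice $\xi = \poly(\e\varphi/k)$ is absorbed in the polynomial dependencies of the stated bounds.
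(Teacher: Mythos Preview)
Your high-level plan---sample $I$, enumerate partitions, build proxy centroids, classify by dot product---matches the paper's outline, but there are two genuine gaps that the paper fills with nontrivial machinery you have not supplied.

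First, your query procedure as stated (return $\arg\max_i \langle f_q,\wh\mu_i\rangle$) does \emph{not} yield the per-cluster guarantee; this is exactly the ``natural hyperplane partitioning'' that the paper constructs an explicit counterexample against in Section~\ref{sec:counterexample}. You acknowledge this and sketch a fix via nested subspaces, but the fix you describe---commit to $\wh C_1$, project out $\wh\mu_1$, recurse---is still the sequential carving that the counterexample defeats. The paper's actual query procedure (\textsc{HyperplanePartitioning}, Algorithm~\ref{alg:ballcarving}) takes as input an \emph{ordered partition} $(T_1,\dots,T_b)$ of the centroids, computed during preprocessing, where $b=O(\log k)$ and each $T_i$ may contain many centroids. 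The key algorithmic idea you are missing is how this ordered partition is produced: in each stage the algorithm forms candidate clusters for \emph{all} remaining centroids, estimates the outer conductance of each candidate by random sampling (Algorithm~\ref{alg:estimateconductande}), and commits only to those whose outer conductance is below $O(\e\log k/\varphi^2)$. Lemma~\ref{lem:induction} shows at least half pass in every stage, and Lemma~\ref{lem:howtocluster} converts small outer conductance into the per-cluster symmetric-difference bound. Without the conductance test there is no principled way to decide which centroids to project out at each stage.

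Second, you say the algorithm ``identifies $I=I_1\sqcup\cdots\sqcup I_k$'' after enumerating partitions, but you give no test to recognise the correct partition among the $k^{|I|}$ candidates. Pairwise dot products on $I$ alone are insufficient because a constant fraction of sampled points can be outliers. In the paper, recognition is again done by \textsc{ComputeOrderedPartition}: for each guessed partition of $I$ it runs the staged conductance test above and accepts if and only if all $k$ clusters eventually pass. The analysis (Theorem~\ref{thm:uglythm}) shows that the correct guess passes with high probability, and that \emph{any} guess that passes yields a clustering with small outer conductance on every piece, hence the per-cluster bound via Lemma~\ref{lem:howtocluster}. This acceptance criterion, and the fact that it is sound even for incorrect guesses, is the missing piece in your preprocessing.
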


To the best of our knowledge, our algorithm is the first sublinear spectral clustering algorithm in literature. We hope that our main technique for providing sublinear time access to the spectral embedding will have further applications in sublinear time spectral graph theory.  Our simple algorithm for recovering clusters using hyperplane partitioning in a carefully defined sequence of subspaces may also be of independent interest in spectral partitioning problems. We provide a detailed overview of the analysis and the main ideas are involved in Section~\ref{sec:tech-overview}.

\paragraph{Other related work.} Besides the work on property testing and the work on clustering with labelled, data another closely related area is local clustering. In local clustering one is interested of finding the entire cluster around a node $v$ in time proportional to the size of the cluster. Several algorithms are known for this problem~\cite{DBLP:journals/im/AndersenCL08, DBLP:journals/jacm/AndersenGPT16, DBLP:conf/soda/OrecchiaZ14, DBLP:journals/siammax/SpielmanT14, DBLP:conf/icml/ZhuLM13} but unfortunately they cannot be applied to solve our problem because when the clusters have linear size they take linear time (in addition, the output clusters may overlap). In this paper instead we focus on solving the problem using strictly sublinear time. 


\section{Preliminaries}\label{sec_prelim}

\if 0
\begin{definition}
Let $G=(V,E)$ be a $d$-regular graph which admits $(k,\varphi,\epsilon)$-clustering. Let $P=C_1, \ldots , C_k$ denote a $(k,\varphi,\epsilon)$-clustering of $G$.  An oracle $O$ is a \emph{clustering oracle} of $G$ if, given query access to the graph $G$ in the bounded degree graph model, 
the oracle $O$ provides query access to a partition $\widehat{P}=(\widehat{C_1},\ldots, \widehat{C_k})$ of $V$ , where $\widehat{P}$ 
is determined by $G$ and the internal randomness of the oracle. Namely, on input $x \in V$, the oracle classifies $x$ into a cluster 
and for any sequence of queries \footnote{CS: Do we allow repeated queries?}, $O$ answers consistently with the same $\widehat{P}$. 
An oracle $O$ is a $(k,\varphi,\epsilon)$-clustering oracle with respect to $G$ if partition $\widehat{P}$ has the following property:
for some permutation $\pi$ on $k$ elements one has for every $i \in [1,\dots,k]$:
$$|C_i \triangle \widehat{C}_{\pi(i)}| \leq O\left(\frac{\epsilon}{\varphi^3}\right) |C_i|\text{.}$$
\end{definition}
\fi 

In this paper we mostly use the matrix notation to represent graphs.  For a vertex $x\in V$, we say that $\mathds{1}_x\in\mathbb{R}^{n}$ is the indicator of $x$, that is, the vector which is $1$ at index $x$ and $0$ elsewhere. For a (multi) set $I_S=\{x_1,\ldots,x_s\}$ of vertices from $V$ we abuse notation and also denote by $S$ the $n \times s$ matrix whose $i^{\text{\tiny{th}}}$ column is $\mathds{1}_{x_i}$. For $i \in \mathbb{N}$ we use $[i]$ to denote the set $\{1,2, \dots, i \}$.

 For a symmetric matrix $A$, we write $\nu_i(A)$ (resp. $\nu_{\max}(A), \nu_{\min}(A))$ to denote the $i^{\text{th}}$ largest (resp. maximum, minimum) eigenvalue of $A$. 
 
\if 0
Let $m\leq n$ be integers.  For any matrix $A\in \R^{n\times m}$ with singular value decomposition (SVD) $A=Y\Gamma Z^T$ we assume $Y\in \R^{n\times n}$ and $Z\in \R^{m\times m}$ are orthonormal matrices and  $\Gamma \in \R^{n\times m}$ is a diagonal matrix of singular values.  For any integer $q\in[m]$ we denote $Y_{[q]}\in \R^{n \times q}$ as the first $q$ columns of $Y$ and $Y_{-[q]}$ to denote the matrix of the remaining columns of $Y$.  We also denote $Z^T_{[q]}\in \R^{q \times n}$ as the first $q$ rows of $Z^T$ and $Z^T_{-[q]}$ to denote the matrix of the remaining rows of $Z$. Finally we denote $\Gamma^T_{[q]}\in \R^{q \times q}$ as the first $q$ rows and columns of $\Gamma$ and we use $\Gamma_{-[q]}$ as the last $n-q$ rows and columns of $\Gamma$. So for any $q\in[m]$ the span of $Y_{-[q]}$ is the orthogonal complement of the span of $Y_{[q]}$, also the span of $Z_{-[q]}$ is the orthogonal complement of the span of $Z_{[q]}$. Thus we can write $A=Y_{[q]}\Gamma_{[q]}Z^T_{[q]} + Y_{-[q]}\Gamma_{-[q]}Z^T_{-[q]}$.
\fi

Let $m\leq n$ be integers.  For any matrix $A\in \R^{n\times m}$ with singular value decomposition (SVD) $A=Y\Gamma Z^T$ we assume $Y\in \R^{n\times n}$, $\Gamma \in \R^{n\times n}$ is a diagonal matrix of singular values and $Z\in \R^{m\times n}$ (this is a slightly non-standard definition of the SVD, but having $\Gamma$ be a square matrix will be convenient).  $Y$ has orthonormal columns, the first $m$ columns of $Z$ are orthonormal, and the rest of the columns of $Z$ are zero. For any integer $q\in[m]$ we denote $Y_{[q]}\in \R^{n \times q}$ as the first $q$ columns of $Y$ and $Y_{-[q]}$ to denote the matrix of the remaining columns of $Y$.  We also denote by $Z_{[q]}\in \R^{m \times q}$ as the first $q$ columns of $Z$ and $Z_{-[q]}$ to denote the matrix of the remaining $n-q$ columns of $Z$. Finally we denote by $\Gamma_{[q]}\in \R^{q \times q}$ the submatrix of $\Gamma$ corresponding to the first $q$ rows and columns of $\Gamma$ and we use $\Gamma_{-[q]}$ to denote the submatrix corresponding to the last $n-q$ rows and $n-q$ columns of $\Gamma$. So for any $q\in[m]$ the span of $Y_{-[q]}$ is the orthogonal complement of the span of $Y_{[q]}$ in $\R^n$, also the span of the columns of $Z_{-[q]}$ is the orthogonal complement of the span of $Z_{[q]}$ in $\R^m$. Thus we can write $A=Y_{[q]}\Gamma_{[q]}Z^T_{[q]} + Y_{-[q]}\Gamma_{-[q]}Z^T_{-[q]}$. 

We also denote with $A_G$ the adjacency matrix of $G$ and with $L$ the \textit{normalized Laplacian} of $G$ where $L=I-\frac{A_G}{d}$. For $L$ we denote its eigenvalues with  $0\leq\lambda_1\leq\ldots\leq\lambda_n\leq 2$ and we write $\Lambda$ to refer to the diagonal matrix of these eigenvalues in ascending order. We also denote with $(u_1,\ldots,u_n)$  an orthonormal basis of eigenvectors of $L$ and with $U \in \R^{n\times n}$  the matrix whose columns are the orthonormal eigenvectors of $L$ arranged in increasing order of eigenvalues. Therefore the eigendecomposition of $L$ is $L=U \Lambda U^T$. We  write $U_{[k]}\in \R^{n\times k}$ for the matrix whose columns are the first $k$ columns of $U$ and also define $F=U_{[k]}^T$. For every vertex $x$ we denote the spectral embedding of vertex $x$ on the bottom $k$ eigenvectors of $L$ with $f_x\in \R^k$, i.e. $f_x=F  \mathds{1}_x $.  For pairs of vertices $x, y\in V$ we use the notation 
$$
\langle f_x, f_y\rangle:=f_x^Tf_y
$$
to denote the dot product in the embedded domain. 
\begin{remark}\label{rm:dot-products-well-defined}
We note that if $G$ is a $(k,\varphi,\epsilon)$-clusterable graph with $\e/\varphi^2$ smaller than a constant, the space spanned by the bottom $k$ eigenvectors of the normalized Laplacian of $G$ is uniquely defined, i.e. the choice of $U_{[k]}$ is unique up to multiplication by an orthonormal matrix $R\in \R^{k\times k}$ on the right.  Indeed, by Lemma~\ref{lem:bnd-lambda} below one has $\lambda_k\leq 2\e$ and by Lemma~\ref{lem_Cheeger}  below one has $\lambda_{k+1}\geq \varphi^2/2$. Thus, since we assume that $\e/\varphi^2$ is smaller than an absolute constant, we have $2\e<\varphi^2/2$, and therefore the subspace spanned by the bottom $k$ eigenvectors of the Laplacian, i.e. the space of $U_{[k]}$, is uniquely defined, as required. We note that while the choice of $f_x$ for $x\in V$ is not unique, but  the dot product between the spectral embedding of $x\in V$ and $y\in V$ is well defined, since for every orthonormal $R\in \R^{k\times k}$ one has 
$\langle Rf_x, Rf_y\rangle=(Rf_x)^T(Rf_y)=f_x^T (R^TR) f_y=f_x^Tf_y.$
\end{remark}

In this paper we also consider the transition matrix of the \textit{random walk associated with} $G$ $M=\frac{1}{2}\cdot \left(I+\frac{A}{d}\right)$. From any vertex $v$, this random walk takes every edge incident to $v$ with probability $\frac{1}{2d}$, and stays on $v$ with the remaining probability which is at least $\frac{1}{2}$. Note that this random walk is exactly same as a lazy random walk on $G$ and that $M=I-\frac{L}{2}$. Observe that $\forall i$ $u_i$ is also an eigenvector of $M$, with eigenvalue $1-\frac{\lambda_i}{2}$.
We denote with $\Sigma$ the diagonal matrix of the eigenvalues of $M$ in descending order.  Therefore the eigendecomposition of $M$ is $M=U \Sigma U^T$. We  write $\Sigma_{[k]}\in \R^{k\times k}$ for the matrix whose columns are the first $k$ rows and columns of $\Sigma$. Furthermore, for any $t$, $M^t$ is a transition matrix of random walks of length $t$. For any vertex $x$, we denote the probability distribution of a $t$-step random walk starting from $x$ by $m_x=M^t \mathds{1}_x$. For a (multi) set $I_S=\{x_1,\ldots,x_s\}$ of vertices from $V$, let matrix $M^t S \in \R^{n\times s}$ is a matrix whose columns are probability distributions of $t$-step random walks starting from vertices in $I_S$. More formally the $i$th column of $M^t S$ is $m_{x_i}$. For any vertex $x\in V$ let $\mathcal{N}(x):\{y\in V: \{x,y\}\in E\}$ denote the set of vertices that are adjacent to the vertex $x$.


\begin{definition}[\textbf{Cluster Centers}]
\label{def:centers}
Let $G=(V,E)$ be a $d$-regular graph. Let $C_1, \ldots, C_k$ be a $(k,\varphi,\epsilon)$-clustering of $G$. We define the \emph{spectral center}
of cluster $C_i$ as
\[\mu_i := \frac{1}{|C_i|}\sum_{x \in C_i} f_x \text{.}\] 
For vertex $x \in V$, we define $\mu_x$ as the cluster center of the cluster  which $x$ belongs to.
\end{definition}

In our analysis we use the following standard results on eigenvalues and matrix norms.  Recall that for any $m\times n$ matrix $A$, the multi-sets of nonzero eigenvalues of $AA^\top$ and $A^\top A$ are equal. 

\begin{lemma}[\cite{chiplunkar2018testing}]
\label{lem_Cheeger}
Let $G$ be any graph which is composed of $k$ components $C_1,\ldots C_k$ such that $\phi^G(C_i)\geq\varphi$ for any $i\in [k]$. Let $L$ be the normalized Laplacian matrix of $G$, and $\lambda_{k+1}$ be the $(k+1)$st smallest eigenvalue of $L$. Then $\lambda_{k+1}\geq\frac{\varphi^2}{2}$.
\end{lemma}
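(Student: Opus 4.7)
The plan is to reduce this to the classical (single-set) Cheeger inequality applied to each connected component separately. The key structural observation is that since $G$ splits into $k$ connected components $C_1,\ldots,C_k$ with no edges across, the normalized Laplacian $L$ is block diagonal: after permuting vertices by component, $L=\operatorname{diag}(L^{(1)},\ldots,L^{(k)})$, where $L^{(i)}$ is the normalized Laplacian of the induced subgraph $G[C_i]$. Moreover, because $G$ is $d$-regular and there are no edges leaving $C_i$, every vertex of $C_i$ has all $d$ of its neighbors inside $C_i$, so $G[C_i]$ is itself $d$-regular and $L^{(i)}=I-A_{G[C_i]}/d$ is a genuine normalized Laplacian.

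First, I would record that the multiset of eigenvalues of $L$ is the disjoint union of the multisets of eigenvalues of the $L^{(i)}$'s (since $L$ is block diagonal and each eigenvector of a single block, extended by zero on the other coordinates, is an eigenvector of $L$ with the same eigenvalue). Next, since $G[C_i]$ is connected and $d$-regular, $L^{(i)}$ has $\nu_{\min}(L^{(i)})=0$ with multiplicity exactly one, with the constant vector on $C_i$ as eigenvector. Hence $L$ has exactly $k$ zero eigenvalues, namely $\lambda_1=\cdots=\lambda_k=0$.

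Now for each $i$ I would apply the standard Cheeger inequality to the connected $d$-regular graph $G[C_i]$: its second smallest Laplacian eigenvalue $\nu_2(L^{(i)})$ satisfies
\[
\nu_2(L^{(i)}) \;\geq\; \frac{h(G[C_i])^2}{2},
\]
where $h(G[C_i])$ is the Cheeger conductance of $G[C_i]$. By Definition~\ref{def:conductance}, $h(G[C_i])$ is exactly $\phi^G(C_i)$, because the normalization $d|S|$ in the definition of $\phi^G_{C_i}(S)$ matches the degree of $G[C_i]$. Thus $\nu_2(L^{(i)})\geq \varphi^2/2$ for every $i\in[k]$.

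Combining the two ingredients, the nonzero eigenvalues of $L$ are precisely the nonzero eigenvalues coming from the individual blocks, and each such eigenvalue is at least $\min_{i\in[k]}\nu_2(L^{(i)})\geq \varphi^2/2$. Since there are exactly $k$ zero eigenvalues, the $(k+1)$-st smallest eigenvalue $\lambda_{k+1}$ of $L$ must be one of these nonzero eigenvalues, yielding $\lambda_{k+1}\geq \varphi^2/2$. The proof is essentially routine modulo Cheeger, so there is no real obstacle; the only point to check carefully is that the $d$-regularity passes to each induced subgraph $G[C_i]$ so that the internal conductance of $C_i$ coincides with the Cheeger constant used in the classical inequality.
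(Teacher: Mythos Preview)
The paper does not supply its own proof of this lemma (it is cited from~\cite{chiplunkar2018testing}), so there is nothing to compare against directly. However, your argument has a genuine gap: you assume that the $C_i$'s are \emph{disconnected} components of $G$ (``no edges across''), and your entire proof rests on the resulting block-diagonal structure of $L$. The lemma as stated---and, crucially, as applied in Lemma~\ref{lem:bnd-lambda} to a $(k,\varphi,\epsilon)$-clusterable graph---must cover the case where there \emph{are} edges between the $C_i$'s (indeed, up to $\epsilon d|C_i|$ of them). In that setting $L$ is not block diagonal, $G[C_i]$ is not $d$-regular, and the zero eigenspace of $L$ need not be $k$-dimensional, so your argument breaks down.

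The standard fix is a variational argument. For each $i$ let $H_i$ be the $d$-regular graph obtained from $G[C_i]$ by adding $d-d_i(x)$ self-loops at each vertex $x\in C_i$ (exactly as in the proof of Lemma~\ref{lem:dirvariance}), and let $L_i$ be its normalized Laplacian. Since $\phi^G(C_i)\geq\varphi$ coincides with the Cheeger constant of $H_i$, Cheeger's inequality gives $\lambda_2(H_i)\geq\varphi^2/2$. Now take any $z$ orthogonal to $\operatorname{span}(\mathds{1}_{C_1},\ldots,\mathds{1}_{C_k})$; then the restriction $z|_{C_i}$ is orthogonal to $\mathds{1}$ on $C_i$, so
\[
z^T L z \;\geq\; \frac{1}{d}\sum_{i=1}^k \sum_{\substack{(x,y)\in E\\ x,y\in C_i}} (z(x)-z(y))^2 \;=\; \sum_{i=1}^k (z|_{C_i})^T L_i (z|_{C_i}) \;\geq\; \frac{\varphi^2}{2}\sum_{i=1}^k \|z|_{C_i}\|_2^2 \;=\; \frac{\varphi^2}{2}\,z^Tz.
\]
By Courant--Fischer, $\lambda_{k+1}\geq \min_{z\perp \operatorname{span}(\mathds{1}_{C_i})} \frac{z^TLz}{z^Tz}\geq \varphi^2/2$. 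Your block-diagonal argument is the special case $\epsilon=0$ of this.
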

For a $d$-regular graph $G$, let $\rho_G(k)$ denote the minimum value of the maximum conductance over any possible $k$ disjoint nonempty subsets. That is
\[\rho_G(k)\leq \min_{\text{disjoint }S_1,\ldots, S_k}\max_{i}\phi_{G}(S_i)\]
\begin{lemma}[\cite{lee2014multiway}]
\label{lem:multi}
For any  $d$-regular graph $G$ and any $k \geq 2$, it holds that
\[\lambda_k\leq 2\rho_G(k)\text{.}\]
\end{lemma}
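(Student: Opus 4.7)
The plan is to apply the Courant--Fischer min--max characterization of eigenvalues by exhibiting a $k$-dimensional test subspace on which the Rayleigh quotient of the normalized Laplacian $L$ is at most $2\rho_G(k)$. First I would fix $k$ disjoint nonempty vertex subsets $S_1,\ldots,S_k$ achieving $\max_i \phi_G(S_i) \leq \rho_G(k)$, and use their indicator vectors $\mathds{1}_{S_1},\ldots,\mathds{1}_{S_k}$ as test vectors. Because the $S_i$ are pairwise disjoint, these indicators are pairwise orthogonal in $\R^n$ and therefore span a genuine $k$-dimensional subspace $W$, which is exactly the kind of object Courant--Fischer asks for.

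The main computation is then to bound the Rayleigh quotient $v^T L v / v^T v$ uniformly over $v \in W \setminus \{0\}$. Writing $v = \sum_{i=1}^{k} c_i \mathds{1}_{S_i}$ and using the standard identity $v^T L v = \frac{1}{d}\sum_{\{u,w\} \in E} (v_u - v_w)^2$ valid for a $d$-regular graph, I would classify the contribution of each edge: an edge contained in some single $S_i$ contributes $0$; an edge between $S_i$ and $V \setminus \bigcup_j S_j$ contributes $c_i^2/d$; and an edge between $S_i$ and $S_j$ with $i \neq j$ contributes $(c_i - c_j)^2/d \leq 2(c_i^2 + c_j^2)/d$ by the elementary inequality $(a-b)^2 \leq 2(a^2+b^2)$. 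Charging each contribution to the boundary counts yields
\[
v^T L v \;\leq\; \frac{2}{d} \sum_{i=1}^{k} c_i^2 \,\bigl|E(S_i, V \setminus S_i)\bigr| \;=\; 2\sum_{i=1}^{k} c_i^2 \,|S_i|\, \phi_G(S_i) \;\leq\; 2\rho_G(k)\sum_{i=1}^{k} c_i^2 |S_i|.
\]

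On the denominator side, orthogonality of the indicators gives $v^T v = \sum_{i=1}^{k} c_i^2 |S_i|$, so the Rayleigh quotient is bounded by $2\rho_G(k)$ for every nonzero $v \in W$. Plugging this subspace into Courant--Fischer, $\lambda_k = \min_{\dim(W')=k}\,\max_{v \in W'\setminus\{0\}} v^T L v / v^T v$, immediately yields $\lambda_k \leq 2\rho_G(k)$. The only mildly delicate point in this argument is the bookkeeping behind the factor of $2$: a cross edge between $S_i$ and $S_j$ gets counted in both $|E(S_i,V\setminus S_i)|$ and $|E(S_j,V\setminus S_j)|$, so the right-hand side of the edge-charging inequality already accommodates the cross terms $(c_i-c_j)^2$ without any additional slack, which is exactly why the bound comes out proportional to $\sum_i \phi_G(S_i)|S_i| c_i^2$ and not worse.
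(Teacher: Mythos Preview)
Your argument is correct. The paper does not prove this lemma; it is stated with a citation to \cite{lee2014multiway} and used as a black box, so there is no in-paper proof to compare against. What you have written is the standard proof of the easy direction of the higher-order Cheeger inequality: test the Courant--Fischer characterization with the span of the indicators of the optimal disjoint sets, and bound the cross-edge contributions via $(c_i-c_j)^2\le 2(c_i^2+c_j^2)$, which is precisely what makes the charging to $\sum_i c_i^2\,|E(S_i,V\setminus S_i)|$ go through with the factor $2$.
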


\begin{lemma}
\label{lem:bnd-lambda}
Let $G=(V,E)$ be a $d$ regular graph that admits a $(k,\varphi,\epsilon)$-clustering $C_1, \ldots , C_k$. Let $L$ be the normalized Laplacian matrix of $G$. Let $\lambda_1\leq \ldots \leq \lambda_n$ be eigenvalues of $L$, then we have $\lambda_{k+1}\geq \frac{\varphi^2}{2}$ and $\lambda_k\leq 2\epsilon$. 
\end{lemma}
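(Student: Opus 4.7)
The plan is to prove the two bounds separately: $\lambda_k \leq 2\epsilon$ via higher-order Cheeger applied directly to the given clustering, and $\lambda_{k+1} \geq \varphi^2/2$ by a perturbation argument that reduces to Lemma~\ref{lem_Cheeger}.

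For the upper bound $\lambda_k \leq 2\epsilon$, the argument is immediate. The $k$ clusters $C_1,\dots,C_k$ form $k$ disjoint nonempty subsets of $V$, and by definition of $(k,\varphi,\epsilon)$-clusterability we have $\phi^G_V(C_i) \leq \epsilon$ for every $i$. Hence $\rho_G(k) \leq \max_i \phi^G_V(C_i) \leq \epsilon$, and Lemma~\ref{lem:multi} gives $\lambda_k \leq 2\rho_G(k) \leq 2\epsilon$.

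For the lower bound $\lambda_{k+1} \geq \varphi^2/2$, the idea is to compare $L$ with the normalized Laplacian of an auxiliary graph $G'$ obtained from $G$ by (i) deleting all inter-cluster edges and (ii) attaching $\deg_{\text{ext}}(v)$ self-loops to each vertex $v$ so that $G'$ is again $d$-regular. Writing $A = A_{\text{int}} + A_{\text{ext}}$ and $D_{\text{ext}} = \mathrm{diag}(\deg_{\text{ext}}(v))$, the adjacency matrix of $G'$ is $A' = A_{\text{int}} + D_{\text{ext}}$ and its normalized Laplacian is $L' = I - A'/d$. A direct computation gives
\[
L - L' = \tfrac{1}{d}\bigl(D_{\text{ext}} - A_{\text{ext}}\bigr),
\]
which is (up to the factor $1/d$) the combinatorial Laplacian of the graph of external edges and therefore positive semidefinite. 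Hence $L \succeq L'$ in the Loewner order, and the Courant--Fischer (min--max) characterization yields $\lambda_{k+1}(L) \geq \lambda_{k+1}(L')$.

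It remains to bound $\lambda_{k+1}(L')$ from below. Since self-loops do not cross cuts, $G'$ has exactly $k$ connected components $C_1,\dots,C_k$, and the internal conductance of each $C_i$ is preserved: $\phi^{G'}(C_i) = \phi^G(C_i) \geq \varphi$. Applying Lemma~\ref{lem_Cheeger} to $G'$ then gives $\lambda_{k+1}(L') \geq \varphi^2/2$, and combining with the previous comparison concludes the proof. The only mildly delicate step is the self-loop bookkeeping in the construction of $G'$ and the verification that $D_{\text{ext}} - A_{\text{ext}}$ is a genuine Laplacian (hence PSD); everything else is a direct invocation of the already-stated lemmas.
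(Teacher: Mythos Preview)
Your proof is correct. The upper bound $\lambda_k \le 2\epsilon$ is argued exactly as in the paper, via Lemma~\ref{lem:multi}.

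For the lower bound $\lambda_{k+1}\ge \varphi^2/2$, your argument is valid but takes a longer route than the paper. The paper simply invokes Lemma~\ref{lem_Cheeger} directly on $G$: that lemma, as stated (following~\cite{chiplunkar2018testing}), applies to any graph \emph{partitioned} into $k$ subsets $C_1,\dots,C_k$ with $\phi^G(C_i)\ge\varphi$; the word ``components'' there does not mean connected components. So no auxiliary graph or Loewner comparison is needed. Your construction of $G'$ (delete inter-cluster edges, add self-loops), the identity $L-L'=\tfrac{1}{d}(D_{\mathrm{ext}}-A_{\mathrm{ext}})\succeq 0$, and the Courant--Fischer monotonicity are all correct, and the observation that $\phi^{G'}(C_i)=\phi^G(C_i)$ is right since self-loops do not cross any cut and the normalization by $d$ is preserved. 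This detour essentially re-derives the reduction that is already packaged inside Lemma~\ref{lem_Cheeger}; it buys you a more self-contained argument if you were worried that Lemma~\ref{lem_Cheeger} only covered genuinely disconnected graphs, but under the intended reading it is unnecessary.
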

\begin{proof}
Note that $G$ is composed of $k$ components $C_1, \ldots C_k$ such that for all $1\leq i\leq k$ we have $\phi^G(C_i)\geq \varphi$. Hence, by Lemma \ref{lem_Cheeger} we get $\lambda_{k+1}\geq \frac{\varphi^2}{2}\text{.}$ Moreover for all $1\leq i\leq k$, we have $\phi^G_V(C_i)\leq \epsilon$. Thus by Lemma \ref{lem:multi} we have $\lambda_k\leq 2\epsilon\text{.}$
\end{proof}

Since we assume that the maximum ratio of cluster sizes is bounded by a constant, we have
\begin{proposition}
\label{lem:min_size}
Let $G=(V,E)$ be a $d$ regular graph that admits a $(k,\varphi,\epsilon)$-clustering $C_1, \ldots , C_k$. Then we have $\min_{i\in \{1,\ldots,k\}}|C_i|= \Omega\left(\frac{ n}{k}\right)$ and $\max_{i\in \{1,\ldots,k\}}|C_i|= O\left(\frac{ n}{k}\right)$.
\end{proposition}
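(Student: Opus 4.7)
The proposition is essentially a bookkeeping consequence of the assumption, built into the definition of $(k,\varphi,\epsilon)$-clustering, that the cluster sizes are pairwise within a constant factor of each other, i.e.\ $|C_i|/|C_j|\in O(1)$ for all $i,j\in[k]$. Call that absolute constant $c$, so $|C_i|\le c\,|C_j|$ for all $i,j$.

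My plan is to extract the max and min cluster sizes and play them against the trivial identity $\sum_{i=1}^k |C_i|=n$. Let $i_{\max}=\argmax_i |C_i|$ and $i_{\min}=\argmin_i |C_i|$. For the lower bound on the minimum, write
\[
n \;=\; \sum_{i=1}^k |C_i| \;\le\; k\cdot |C_{i_{\max}}| \;\le\; ck\cdot |C_{i_{\min}}|,
\]
which gives $|C_{i_{\min}}|\ge n/(ck)=\Omega(n/k)$. For the upper bound on the maximum, go the other direction:
\[
n \;=\; \sum_{i=1}^k |C_i| \;\ge\; k\cdot |C_{i_{\min}}| \;\ge\; \frac{k}{c}\cdot |C_{i_{\max}}|,
\]
so $|C_{i_{\max}}|\le cn/k=O(n/k)$. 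Combining these two bounds with the trivial monotonicity $|C_{i_{\min}}|\le |C_i|\le |C_{i_{\max}}|$ for every $i$ gives exactly the claim.

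There is no real obstacle here: the proof uses only the cluster-balance hypothesis of Definition~\ref{def:clusterable} together with $\sum_i|C_i|=n$, and in particular does not invoke any spectral fact. The only thing worth flagging is that the $O(\cdot)$ and $\Omega(\cdot)$ in the conclusion hide exactly the constant $c$ from the definition of $(k,\varphi,\epsilon)$-clustering, so the proposition is really a restatement of that balance assumption in the most convenient form for later use.
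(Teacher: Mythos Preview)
Your proof is correct and matches the paper's approach: the paper simply states that the proposition follows from the assumption that the maximum ratio of cluster sizes is bounded by a constant, without spelling out the arithmetic. Your two displayed inequalities make that arithmetic explicit and are exactly what is needed.
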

\if 0 \begin{proof}
Note that for all $i,j\in[k]$ we have $\frac{|C_i|}{|C_j|}\in O(1)$. Note that there must be a cluster with size at least $\frac{n}{k}$. Hence, $\max_{i\in [k]}|C_i|\geq \frac{n}{k}$. Therefore, $\min_{i\in [k]}|C_i|\in O\left(\frac{ n}{k}\right)\text{.}$
\end{proof}
\fi 

A symmetric $n \times n$ matrix is positive semi-definite, 
if and only if
 all its eigenvalues are non-negative. 
The spectral norm of matrix $A\in \R^{n\times n}$ is defined as
$\max_{x \in \R^n, x \not= 0} \frac{\|Ax\|_2}{\|x\|_2}$ that equals the square root of the largest eigenvalue of the matrix $A^TA$. The Frobenius norm of a matrix $A$ is defined as $\sqrt{\sum_{i,j} (A_{i,j})^2}$. For matrices $A,\widetilde{A}\in \R^{n\times n}$, we write $A\preccurlyeq \widetilde{A}$, if $\forall x \in \R^n$ we have $x^TAx \leq x^T \widetilde{A} x$.

\section{Technical overview}\label{sec:tech-overview}
In this section we give an overview of the analysis and the main technical contributions of the paper. Recall that we denote the matrix of bottom $k$ eigenvectors of the normalized Laplacian of $G$ by $U_{[k]}$.
The spectral embedding of a vertex $x\in V$, denoted by $f_x\in \R^k$, is simply the $x$-th column of $U_{[k]}^T$. The main intuition behind spectral clustering is that the points $f_x\in \R^k$ are well-concentrated around cluster means $\mu_i\in \R^k$, defined for every $i=1,\ldots, k$ by 
\begin{equation}\label{eq:cluster-means}
\mu_i=\frac1{|C_i|}\sum_{x\in C_i} f_x.
\end{equation}
See Fig.~\ref{fig:example} for an illustration.

\begin{figure}
\centering
\pgfmathsetseed{5}
\begin{tikzpicture}[scale=0.6] 
\draw [fill] (0,0) circle [radius=0.04];
\draw [very thick,->] (0,0,0) -- (10,0,0);
\draw [very thick,->] (0,0,0) -- (0,10,0);
\draw [very thick,->] (0,0,0) -- (0,0,10);




\foreach \i in {0,...,60}
        \shadedraw [ball color= gray] (7 - 0.5 + rnd,rnd - 0.5, rnd - 0.5) circle (0.1cm);
\foreach \i in {0,...,60}
        \shadedraw [ball color= gray] (rnd - 0.5, rnd - 0.5, 7 + rnd - 0.5) circle (0.1cm);
\foreach \i in {0,...,60}
        \shadedraw [ball color= gray] (rnd - 0.5, 7+  rnd - 0.5, rnd - 0.5) circle (0.1cm);
        
\node[] at (7,1.1,0) {$\mu_1$};
\node[] at (0,1.1,7) {$\mu_2$};
\node[] at (-0.8,7.8,0) (t) {$\mu_3$};


\node[] at (7,0,4) (n) {$C_1$};
\node[] at (0,3,12.5) (n) {$C_2$};
\node[] at (-1,11.5,0) (n) {$C_3$};


\draw [->] (7,0,3.3) -- (7,0,1.5); 
\draw [->] (0,2.4,11.4) -- (0,1,8.5); 
\draw [->] (-1,11,0) -- (-0.2,7.7,0); 

\end{tikzpicture}

\caption{\label{fig:example} Example of a spectral embedding where points are concentrated around means.}
\end{figure}
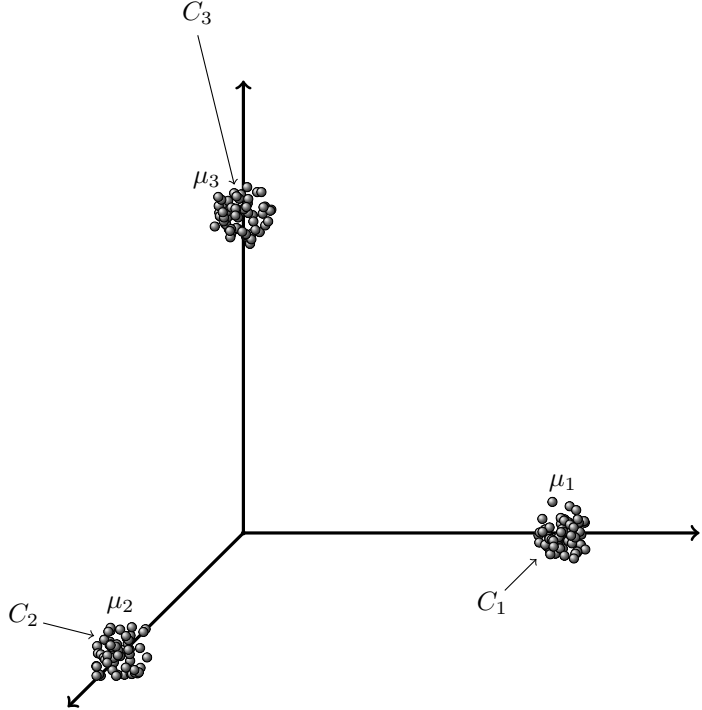

The contributions of our paper are twofold. Our first contribution is a primitive that provides dot product access to the spectral embedding of a graph in sublinear time: we show in Theorem~\ref{thm:dot} how, given any pair of vertices $x, y\in V$ one can compute 
\begin{equation}\label{eq:dot-tech-overview}
\langle f_x, f_y\rangle_{apx}\approx \langle f_x, f_y\rangle,
\end{equation}
in time $\approx n^{1/2+O(\e)}$ per evaluation (see~Algorithm \ref{alg:dotProduct-tt} in Section~\ref{sec:dot} for the formal definition of $\langle\cdot, \cdot \rangle_{apx}$ and its analysis).

Our second contribution is to show how dot product access as in~\eqref{eq:dot-tech-overview} above allows one to solve the cluster recovery problem.  Both of these contributions are based on a new property of the spectral embedding that we establish. This property allows us to quantify the intuitive statement that vertices in the embedding concentrate around cluster means defined in~\eqref{eq:dot-tech-overview} above in a very strong formal sense. 

In the rest of this section we first present our sublinear time dot product oracle (in Section~\ref{sec:dp-overview}) and then outline how access to such an oracle can be used to design a simple spectral clustering algorithm (in Section~\ref{sec:sp-overview}). We assume that the inner conductance of the clusters $\varphi$ is constant for the purposes of this overview to simplify notation.

\subsection{Sublinear time dot product access to the spectral embedding}\label{sec:dp-overview}

We start with a description of the main underlying ideas underlying the proof of Theorem~\ref{thm:dot}. Our starting point from earlier work is the observation that  
collision statistics of random walks can be used to exhibit the structure of a $(k,\varphi,\epsilon)$-clusterable graph. In particular, in 
$(k,\varphi, \epsilon)$-clusterable graphs, there is a gap between $\lambda_k$ and $\lambda_{k+1}$, and the behavior of random walks is essentially determined 
by the bottom $k$ eigenvectors of the Laplacian and the corresponding eigenvalues. This suggests that we can potentially use random walks to determine the
spectral embedding. The spectral embedding is of course not necessarily unique (for example, if not all of the 
bottom $k$ eigenvalues are unique). However, the dot product of the embedded vertices is still well-defined as a function of the subspace spanned by the bottom $k$ eigenvectors of the Laplacian, as the subspace itself is uniquely defined because of the aforementioned gap between $\lambda_k$ and $\lambda_{k+1}$. See Remark~\ref{rm:dot-products-well-defined} for more details. We now give an overview of our approach.

Fix two vertices $x, y\in V$. We would like to compute
$$
\langle f_x, f_y\rangle=(F\mathds{1}_x)^T(F\mathds{1}_y)=\mathds{1}_x^TU_{[k]} U_{[k]}^T \mathds{1}_y.
$$

The direct approach to this would amount to computing an eigendecomposition of $M$ to obtain $U_{[k]}$, but that would take at least $\Omega(n)$ time and is too expensive for our purposes. On the other hand, it is well-known that we are able to estimate, in about $n^{1/2}$ time, the dot product 
$$
(M^t \mathds{1}_x)^T (M^t \mathds{1}_y)=\mathds{1}_x^T M^{2t} \mathds{1}_y.
$$
Note that $\mathds{1}_x^T M^{2t} \mathds{1}_y=\mathds{1}_x^T U\Sigma^{2t}U^T \mathds{1}_y$. Thus to get $U_{[k]} U_{[k]}^T$ from  $\mathds{1}_x^T M^{2t} \mathds{1}_y$ we need to remove the matrix $\Sigma^{2t}$ from the middle. Specifically, we can estimate the quantity above as follows. For some precision parameter $\xi\in (0, 1)$ we first run $\approx n^{1/2+O(\e/\varphi^2)}/\xi^2$ random walks from $x$, letting $\wh{m}_x\in  \mathbb{R}^n$ denote a vector whose $a$'th component is the fraction of random walks from $x$ that end up at $a$. Similarly, we run $\approx n^{1/2+O(\e/\varphi^2)}/\xi^2$ random walks from $y$, letting $\wh{m}_y\in  \mathbb{R}^n$ denote a vector whose $a$'th component is the fraction of random walks from $y$ that end up at $a$. One can show\footnote{This calculation is mostly amounts to a rather standard collision counting calculation that relies on the birthday paradox if one wants to establish the claim {\bf for most vertices} $x, y\in V$ (this was done in~\cite{DBLP:conf/stoc/CzumajPS15} and~\cite{chiplunkar2018testing} for example). Our new moment bounds for the spectral embedding (see Lemmas~\ref{lem:tail_bound} and~\ref{lem:l-inf-bnd} in Section~\ref{sebsec:moment-bounds}) allow us to establish such a claim {\bf for all vertices $x, y\in V$} -- see Lemma~\ref{lem:Mt-bnd}.} that with high (constant) probability we have 
\begin{equation}\label{eq:94hg9jg3gwgadg}
\left|\wh{m}_x^T\wh{m}_y-\mathds{1}_x^T M^{2t} \mathds{1}_y\right|\leq \xi\cdot \frac1{n}.
\end{equation}
While~\eqref{eq:94hg9jg3gwgadg} is not directly useful, a primitive for constructing empirical distributions $\wh{m}_x$ and $\wh{m}_y$ as above is a central part of our approach.  We formalize it as Algorithm~\ref{alg:random-walk-tt}  (\textsc{RunRandomWalks}) below:
\begin{algorithm}[H]
\caption{\textsc{RunRandomWalks}($G,R,t,x$)}
\label{alg:random-walk-tt}
\begin{algorithmic}[1]
	\State Run $R$ random walks of length $t$ starting from $x$
	\State Let $\m_x(y)$ be the fraction of random walks that ends at $y$ \Comment vector $\m_x$ has support at most $R$ 
	\State \Return $\m_x$
\end{algorithmic}
\end{algorithm}

Even if we cannot apply~\eqref{eq:94hg9jg3gwgadg} directly, it lets us compute a seemingly related to quantity $\mathds{1}_x^T M^{2t} \mathds{1}_y$ quickly by invoking Algorithm~\ref{alg:random-walk-tt} and computing one dot product. 
In order to get from $\mathds{1}_x^T M^{2t} \mathds{1}_y$ to $\mathds{1}_x^TU_{[k]} U_{[k]}^T \mathds{1}_y$, we need to somehow apply a linear transformation on the random walk distributions {\bf before} computing the dot product between them, i.e. we need a different dot product operation. It is easy to see that the correct linear transformation is given by the matrix $U_{[k]} {\Sigma}_{[k]}^{-2t} U_{[k]}^T$, where $M^t=U\Sigma^tU^T$ is the eigendecomposition of $M$ and $U_{[k]}$ stands for the matrix of bottom $k$ eigenvectors of the Laplacian\footnote{Note that this matrix is not well defined in the presence of repeated eigenvectors, but any fixed choice of this matrix suffices for our purposes. It is also interesting to note that while we use a canonical choice of the eigendecomposition of $M$ throughout the paper, all our bounds are oblivious to the choice of this basis, and hold for the {\em subspace} of bottom $k$ eigenvectors, which is well defined since there is a gap between the $k$-th and $(k+1)$-th eigenvalues in $k$-clusterable graphs.}.  Specifically, we have
$$
(M^t \mathds{1}_x)^T (U_{[k]} {\Sigma}_{[k]}^{-2t} U_{[k]}^T) (M^t \mathds{1}_y)=\mathds{1}_x^T U_{[k]} U_{[k]}^T \mathds{1}_y=\langle f_x, f_y\rangle,
$$
which is exactly the quantity we are interested in. Of course, there is a major problem with this approach, since $U_{[k]} {\Sigma}_{[k]}^{-2t} U_{[k]}^T$ is an $n\times n$ matrix! To get around this issue, we approximate $U_{[k]} {\Sigma}_{[k]}^{-2t} U_{[k]}^T$ by a sparse low rank matrix, as we describe below. Specifically, we let $I_S$ be a multiset of $s\ll n$ vertices selected uniformly at random. Let $S$ be the $n\times s$ matrix whose $j$-th column equals $\mathds{1}_{i_j}$  and let $\widetilde{W} \widetilde{\Sigma}^{2t} \widetilde{W}^T$ denote the eigendecomposition of $\frac{n}{s}\cdot (M^t S)^T(M^tS)$\footnote{We abuse notation somewhat by writing $S$ to denote the $n\times s$ matrix whose $(a, j)$-th entry equals $1$ if the $j$-th sampled vertex equals $a$ and $0$ otherwise.}. We show that with an appropriate choice of the sampling parameter $s\ll n$ one has
\begin{equation}\label{eq:90h24hg932hg}
U_{[k]} {\Sigma}_{[k]}^{-2t} U_{[k]}^T\approx M^t S\cdot \widetilde{\Psi}\cdot S^T M^t,
\end{equation}
where 
\begin{equation}
\widetilde{\Psi}=\frac{n}{s}\cdot\widetilde{W}_{[k]} \widetilde{\Sigma}_{[k]}^{-4t} \widetilde{W}_{[k]}^T
\end{equation}
is an $s\times s$ matrix that can be computed explicitly. The corresponding primitive to compute $(M^tS)^T(M^tS)$ is presented as Algorithm~\ref{alg:gram-tt} (\textsc{EstimateCollisionProbabilities}) below. It basically estimates the Gram matrix of random walk distributions out of $I_S$ (denoted by $\G$) by counting collisions, and taking medians of estimates to reduce failure probability appropriately. After computing the approximate Gram matrix, we derive from it the matrix $\Psi=\frac{n}{s}\cdot\wh{W}_{[k]} \wh{\Sigma}_{[k]}^{-2} \wh{W}_{[k]}^T$, where $\G=\wh{W} \wh{\Sigma} \wh{W}^T$ is the eigendecomposition of $\G$ (see line \eqref{ln:QSVD} and line \eqref{ln:final-init} of Algorithm \ref{alg:LearnEmbedding-tt}; note that $G$ is a symmetric matrix, and hence an eigendecomposition exists).

\begin{algorithm}[H]
\caption{\textsc{EstimateCollisionProbabilities($G,I_S,R,t$)}  }\label{alg:gram-tt}
\begin{algorithmic}[1]
	\For{$i=1$ to $O(\log n)$}
		\State  ${\Q_i}:=\textsc{EstimateTransitionMatrix}(G,I_S,R,t)$ 
		\State  ${\widehat{P}_i}:=\textsc{EstimateTransitionMatrix}(G,I_S,R,t)$ 
		\State $\G_i:=\frac{1}{2}\left(\widehat{P}_i^T \Q_i+\Q_i^T \widehat{P}_i\right)$  \Comment{$\G_i$ is symmetric}
	\EndFor
	\State Let $\G$ be a matrix obtained by taking the entrywise median of $\mathcal{G}_i$'s \Comment{$\G$ is symmetric}
	\State \Return $\G$  \Comment $\G\in \R^{s\times s}$
\end{algorithmic}
\end{algorithm}
Algorithm~\ref{alg:gram-tt} uses an auxiliary primitive presented as
\begin{algorithm}[H]
\caption{\textsc{EstimateTransitionMatrix$(G,I_S,R,t)$}}
\begin{algorithmic}[1]
	\For{each sample $x\in I_S$}
		\State $\m_x:=\textsc{RunRandomWalks}(G,R,t,x)$  
	\EndFor
	\State Let ${\Q}$ be the matrix whose columns are $\m_x$  for $x\in I_S$ 
	\State \Return $\Q$ \Comment $\Q$ has at most $Rs$ non-zeros
\end{algorithmic}
\end{algorithm}

The proof of~\eqref{eq:90h24hg932hg} relies on matrix perturbation bounds (the Davis-Kahan $\sin \theta$ theorem) as well as spectral concentration inequalities, crucially coupled with our tail bounds on the spectral embedding (see Lemma~\ref{lem:tail_bound} and Lemma~\ref{lem:l-inf-bnd}). In particular Lemma~\ref{lem:tail_bound}  and it's consequence - Lemma~\ref{lem:l-inf-bnd}  can be used to bound the leverage scores of $U_{[k]}$ (i.e. $||f_x||^2_2$ for $x\in V$). This part of the analysis is presented in Section~\ref{subsubsec:topk}.

\begin{restatable}{lemma}{lemmatail}[Tail-bound]\label{lem:tail_bound}
Let $\varphi\in(0,1)$ and $\epsilon \leq \frac{\varphi^2}{100}$, and let $G=(V,E)$ be a $d$-regular graph that admits $(k,\varphi,\e)$-clustering $C_1, \ldots, C_k$.   Let $L$ be the normalized Laplacian of $G$. Let $u$ be a normalized eigenvector of $L$ with $||u||_2=1$ and with eigenvalue at most $2\epsilon$.  Then for any $\beta>1$ we have \[
\frac1{n}\cdot\biggl|
\left \{x\in V: |u(x)|\geq \beta\cdot \sqrt{\frac{10}{\min_{i\in [k]}|C_i|}} \right \}\biggl| \leq \left(\frac{\beta}{2}\right) ^{-{\varphi^2/20\cdot\epsilon}}\text{.}\]
\end{restatable}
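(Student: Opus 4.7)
The plan is to combine the eigenvector identity $M^t u = (1-\lambda/2)^t u$ for the lazy walk $M = I - L/2$ (with $\lambda \le 2\epsilon$) and the spectral gap $\lambda_{k+1} \ge \varphi^2/2$ from Lemma~\ref{lem_Cheeger}, extracting the decay exponent from the ratio $\varphi^2/\epsilon$ by passing to high moments of $u$ and choosing the walk length as a function of $\beta$.

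First I would apply Jensen's inequality to the identity $(M^t u)(x) = (1-\lambda/2)^t u(x)$, viewing $(M^t u^{2p})(x)$ as an expectation $\mathbb{E}[u(X_t)^{2p}\mid X_0 = x]$ of a convex function, to obtain the pointwise bound
$$(1-\lambda/2)^{2pt}\, u(x)^{2p} \;\le\; (M^t u^{2p})(x)$$
for every integer $p \ge 1$ and every $t \ge 0$. Summing over $x$ in the level set $A_\alpha := \{x \colon |u(x)| \ge \alpha\}$ and using the symmetry of $M^t$ on a regular graph yields
$$|A_\alpha|\cdot \alpha^{2p}(1-\lambda/2)^{2pt} \;\le\; \sum_y u(y)^{2p}\,\Pr[X_t \in A_\alpha \mid X_0 = y].$$

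Next I would control the right-hand side using the spectral structure. The trace bound $\operatorname{tr}(M^{2t}) \le k + n(1-\varphi^2/4)^{2t}$, which follows directly from splitting the eigenvalues at $k$ via Lemma~\ref{lem_Cheeger}, together with the fact that the bottom-$k$ eigenspace is close to $\operatorname{span}(\chi_1,\dots,\chi_k)$ for the normalized cluster indicators $\chi_i = \mathbf{1}_{C_i}/\sqrt{|C_i|}$, lets me bound the collision probabilities $\Pr[X_t \in A_\alpha \mid X_0 = y]$ in terms of $|A_\alpha|$ and $\min_i|C_i|$. Choosing $t = \Theta(\log(\beta)/\varphi^2)$ so that the $(1-\varphi^2/4)^{2t}$ term is negligible, and $p = \Theta(\varphi^2/\epsilon)$ to trade the polynomial decay $\alpha^{-2p} \sim (\min_i|C_i|/10)^p\cdot \beta^{-2p}$ against the damping factor $(1-\lambda/2)^{-2pt} \le e^{p\lambda t} = (\beta/2)^{O(p\epsilon/\varphi^2)}$, should yield an exponent of order $\varphi^2/(20\epsilon)$ in $\beta/2$, as required.

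The main obstacle is that a direct Markov bound using only $\|u\|_2 = 1$ and $\operatorname{tr}(M^{2t})$ produces only $\beta^{-2}$ decay, far weaker than the target $(\beta/2)^{-\varphi^2/(20\epsilon)}$, so the passage to higher moments $u^{2p}$ is essential; controlling $\sum_y u(y)^{2p}$ cannot go through $\|u\|_\infty$ since that would be circular. I would sidestep this by splitting $u = \Pi u + r$ using the Davis-Kahan-type projection onto $\operatorname{span}(\chi_i)$, noting that the piecewise-constant part satisfies $|(\Pi u)(x)| \le 1/\sqrt{\min_i|C_i|}$ deterministically while the residual $r$ has $\|r\|_2^2 = O(\epsilon/\varphi^2)$ by the Poincar\'e inequality within each cluster (which holds with constant $2/\varphi^2$ because $\phi^G(C_i) \ge \varphi$). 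The delicate part is iterating this decomposition to obtain the matching high-moment bound on $r$ without losing constants, so that the final exponent comes out to exactly $\varphi^2/(20\epsilon)$.
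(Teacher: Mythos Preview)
Your approach is genuinely different from the paper's, and the central step you flag as ``delicate'' is in fact a real gap that your outline does not close.

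The circularity you identify is the heart of the matter: once you pass to the inequality
\[
|A_\alpha|\,\alpha^{2p}(1-\lambda/2)^{2pt}\;\le\;\sum_y u(y)^{2p}\,\Pr[X_t\in A_\alpha\mid X_0=y],
\]
any useful bound on the right requires controlling $\sum_y u(y)^{2p}$ for $p\asymp\varphi^2/\epsilon$, which is equivalent (via layer-cake) to the very tail bound you are proving. Your Davis--Kahan split $u=\Pi u+r$ handles only half of this: the piecewise-constant part has $\|\Pi u\|_\infty\le 1/\sqrt{s_{\min}}$, but the residual $r$ is controlled \emph{only} in $\ell_2$ by Poincar\'e, and $\|r\|_2^2=O(\epsilon/\varphi^2)$ says nothing about $\|r\|_{2p}$ without an $\ell_\infty$ bound, which is again circular. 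The phrase ``iterating this decomposition'' does not name an operation --- $r$ is not itself an eigenvector, so the same argument does not re-apply. A genuine bootstrap (tail bound at one scale $\Rightarrow$ moment bound $\Rightarrow$ tighter tail bound) might be workable, but you have not described one, and tracking the exponent through such an iteration to land on exactly $\varphi^2/(20\epsilon)$ would require substantially more.

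The paper's proof avoids moments and random walks entirely. It works directly with the level sets $S(\theta)=\{x:u(x)\ge\theta\}$ and the eigenvector equation in the form $\sum_{y\sim x}(u(x)-u(y))=d\lambda\,u(x)$. Summing this over $S(\theta)$ shows that the total ``drop'' across the boundary $\partial S(\theta)$ is at most $d\lambda\cdot p(\theta)$, where $p(\theta)=\sum_{x\in S(\theta)}u(x)$. Since small sets expand (Proposition~\ref{lem:S_exp} gives $\phi^G(S(\theta))\ge\varphi$ whenever $|S(\theta)|\le s_{\min}/2$), there are many boundary edges, so on average the drop per edge is $O(\epsilon/\varphi)\cdot\theta$. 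Hence lowering the threshold to $\theta'=\theta(1-8\epsilon/\varphi)$ captures a constant-$\varphi$ fraction more vertices: $|S(\theta')|\ge(1+\varphi/2)|S(\theta)|$ (Lemma~\ref{lem:maintool}). A short argument (Lemma~\ref{lem:concentrationnearboundary}) furnishes a starting threshold with the needed potential condition $p(\theta)/|S(\theta)|\le 2\theta$. Iterating until $|S(\theta')|$ reaches $s_{\min}/10$ and counting iterations gives the exponent directly: the number of multiplicative shrinks of $\theta$ by $(1-8\epsilon/\varphi)$ is at most the number of multiplicative growths of $|S(\theta)|$ by $(1+\varphi/2)$, and comparing these two geometric rates produces the $\varphi^2/(20\epsilon)$ exponent. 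No moment bound, no walk mixing, no projection onto cluster indicators is needed --- only the raw eigenvector identity and small-set expansion.
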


\begin{restatable}{lemma}{lemmalinf}\label{lem:l-inf-bnd}
Let $\varphi\in(0,1)$ and $\epsilon \leq \frac{\varphi^2}{100}$, and let $G=(V,E)$ be a $d$-regular graph that admits $(k,\varphi,\e)$-clustering $C_1, \ldots, C_k$.  Let $u$ be a normalized eigenvector of $L$ with $||u||_2=1$ and with eigenvalue at most $2\epsilon$.     Then we have 
\[||u||_\infty \leq  n^{20\cdot\epsilon /\varphi^2}\cdot \sqrt{\frac{160}{\min_{i\in k}|C_i|}} \text{.} \]
\end{restatable}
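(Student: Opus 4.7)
The plan is to derive Lemma~\ref{lem:l-inf-bnd} as an essentially immediate consequence of the tail bound Lemma~\ref{lem:tail_bound}, by contrapositive: if some coordinate of $u$ is very large, then the ``large coordinate'' level set on the left-hand side of Lemma~\ref{lem:tail_bound} has at least one element, which forces a lower bound on its measure of $1/n$, and this in turn caps how large $\beta$ (and therefore $\|u\|_\infty$) can be.

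Concretely, I would let $x^* \in \argmax_{x \in V} |u(x)|$ and define
\[
\beta^* := \frac{|u(x^*)|}{\sqrt{10/\min_{i \in [k]} |C_i|}}.
\]
If $\beta^* \leq 2 \cdot n^{20\epsilon/\varphi^2}$, then already
\[
\|u\|_\infty \;\leq\; 2\cdot n^{20\epsilon/\varphi^2}\cdot \sqrt{\tfrac{10}{\min_i |C_i|}} \;\leq\; n^{20\epsilon/\varphi^2}\cdot \sqrt{\tfrac{160}{\min_i |C_i|}},
\]
and we are done. So I only need to handle the case $\beta^* > 2\cdot n^{20\epsilon/\varphi^2} \geq 1$, which in particular makes $\beta^*$ a valid choice of $\beta$ in Lemma~\ref{lem:tail_bound}. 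In that case the set $\{x \in V : |u(x)| \geq \beta^* \sqrt{10/\min_i |C_i|}\}$ contains $x^*$ and hence has cardinality at least $1$, so Lemma~\ref{lem:tail_bound} gives
\[
\frac{1}{n} \;\leq\; \left(\frac{\beta^*}{2}\right)^{-\varphi^2/(20\epsilon)}.
\]
Rearranging yields $\beta^*/2 \leq n^{20\epsilon/\varphi^2}$, i.e.\ $\beta^* \leq 2 \cdot n^{20\epsilon/\varphi^2}$, contradicting our assumption. Thus the alternative case never occurs, and the stated $\ell_\infty$ bound holds.

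There is essentially no obstacle here: the entire argument is a one-line inversion of the tail inequality, and the constants in the statement (notably $\sqrt{160}$) are chosen loosely enough to absorb the factor of $2$ coming from the $\beta/2$ inside Lemma~\ref{lem:tail_bound} without further work. The only thing worth being careful about is checking that $\beta^* > 1$ in the nontrivial case so that the base of the exponent in Lemma~\ref{lem:tail_bound} is greater than one (otherwise the tail bound is vacuous), but this is automatic once $\beta^* > 2 \cdot n^{20\epsilon/\varphi^2} \geq 2$.
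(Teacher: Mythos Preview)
Your proposal is correct and takes essentially the same approach as the paper: both invert the tail bound of Lemma~\ref{lem:tail_bound} to force the set of ``too-large'' coordinates to be empty. The paper does this directly by fixing $\beta = 4\cdot n^{20\epsilon/\varphi^2}$ and computing that the set has size strictly less than $1$, while you phrase the identical argument as a contrapositive via $\beta^*$; the two are interchangeable, and in fact your Case~1 threshold yields a slightly sharper constant ($\sqrt{40}$ instead of $\sqrt{160}$) that you then relax.
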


We note that the number of samples $s$ is chosen as $s\approx k^{O(1)} n^{O(\e/\varphi^2)}$ (see Algorithm~\ref{alg:LearnEmbedding-tt}) , where the second factor is due to our upper bound on the $\ell_\infty$ norm of the bottom $k$ eigenvectors of the Laplacian of a $(k,\varphi,\epsilon)$-clusterable graph proved in Section~\ref{sebsec:moment-bounds}.

Once we establish~\eqref{eq:90h24hg932hg} in Section~\ref{subsubsec:cols} (see Lemma \ref{lem:bnd-e1}),  we get for every $x, y\in V$
\begin{equation}
\begin{split}
(M^t \mathds{1}_x)^T M^t S \cdot \widetilde{\Psi} \cdot S^T M^t (M^t\mathds{1}_y)\approx \mathds{1}_x^TU_{[k]}U_{[k]}^T\mathds{1}_y,
\end{split}
\end{equation}
which is what we would like to compute. One issue remains at this point, which is that we cannot compute $M^t\mathds{1}_x$ or $M^t \mathds{1}_y$ explicitly, and neither can we store and compute our approximation $M^t S\cdot \Psi\cdot S^T M^t$, since it is a dense, albeit low rank, matrix. We resolve this problem by running an appropriate number of random walks out of the sampled nodes $I_S$, as well as the queried nodes $x, y\in V$. Specifically, we run $\approx n^{1/2+O(\e)}$ random walks from every sampled node in $I_S$, defining an $n\times s$ matrix $Q$ whose $(a, b)$-th entry is the fraction of walks from $a$ that ended at $b$ and using the matrix $Q$ as a proxy for $M^tS$ (note that the expectation of $Q$ is exactly $M^tS$).  Such a matrix $Q$ is computed as per line \eqref{ln:Qi} and line \eqref{ln:Pi} of 
Algorithm~\ref{alg:gram-tt} (\textsc{EstimateCollisionProbabilities}). 
We note that Algorithm~\ref{alg:LearnEmbedding-tt} (\textsc{InitializeOracle}) performs $O(\log n)$ independent estimates that we ultimately use to boost confidence (by the median trick). The entire preprocessing is summarized in Algorithm~\ref{alg:LearnEmbedding-tt} (\textsc{InitializeOracle}) below:

\begin{algorithm}[H]
\caption{\textsc{InitializeOracle($G,\delta,\xi$)}  \Comment Need: $\epsilon/\varphi^2 \leq \frac{1}{10^5}$} \label{alg:LearnEmbedding-tt}
\begin{algorithmic}[1]
	\State $t:= \frac{20\cdot \log n}{\varphi^2}$	
	\State $R_{\text{init}}:=O{(n^{1-\delta +980 \cdot\epsilon / \varphi^2}  \cdot k^{17}/{\xi}^{2})}$	
	
	\State $s:= O(n^{480\cdot \epsilon / \varphi^2}\cdot \log n \cdot k^{8}/{\xi}^2)$  
	
	\State Let $I_S$ be the multiset of $s$ indices chosen independently and uniformly at random from $\{1,\ldots,n\}$
	
	\For{$i=1$ to $O(\log n)$}
		\State  ${\Q_i}:=\textsc{EstimateTransitionMatrix}(G,I_S,R_{\text{init}},t)$  \Comment $\Q_i$ has at most $R_{\text{init}}\cdot s$ non-zeros
	\EndFor	
	\State $\G:=$\textsc{EstimateCollisionProbabilities}$(G,I_S,R_{\text{init}},t)$

	\State Let $\frac{n}{s}\cdot\G:=\widehat{W}\widehat{\Sigma} \widehat{W}^T$ be the eigendecomposition of $\frac{n}{s}\cdot\G$ \Comment $\G\in \R^{s\times s}$
	\If{$\widehat{\Sigma}^{-1}$ exists}
	\State $\Psi:=\frac{n}{s}\cdot\widehat{W}_{[k]}\widehat{\Sigma}_{[k]}^{-2} \widehat{W}_{[k]}^T$ \Comment $\Psi\in\R^{s\times s}$ 
	\State \Return  $\mathcal{D}:=\{\Psi,\Q_1,\ldots ,\Q_{O(\log n)}\}$  
	\EndIf
	
\end{algorithmic}
\end{algorithm}

Equipped with the primitives presented above, we can now state our final dot product estimate:
\begin{equation}\label{eq:290hg92hg}
\wh{m}_x^T Q \Psi Q^T \wh{m}_y\approx \mathds{1}_x^T U_{[k]}U_{[k]}^T \mathds{1}_y=\langle f_x, f_y\rangle,
\end{equation}
where $\wh{m}_x$ and $\wh{m}_y$ are empirical distributions of $\approx n^{1/2+O(\e/\phi^2)}$ out of $x$ and $y$ respectively, $Q$ is an $n\times s$ matrix with $\approx n^{1/2+O(\e/\phi^2)}$ nonzeros per column, and $\Psi$ is a possibly dense $s\times s$ matrix, where the number of sampled vertices $s$ is ultimately chosen to be $k^{O(1)}n^{O(\e/\phi^2)}$.  The analysis of the error incurred in replacing~\eqref{eq:90h24hg932hg} with~\eqref{eq:290hg92hg} is presented in Section~\ref{subsubsec:rows}.  It relies on a birthday paradox style variance computation similar to previous sublinear time algorithms for testing graph cluster structure. The actual query procedure that implements~\eqref{eq:290hg92hg} is given by Algorithm~\ref{alg:dotProduct-tt} below. 
\begin{algorithm}[H]
\caption{\textsc{SpectralDotProductOracle}($G,x,y, \delta, \xi, \mathcal{D}$)  \Comment Need: $\epsilon/\varphi^2 \leq \frac{1}{10^5}$
\newline \text{ }\Comment $\mathcal{D}:=\{\Psi,\Q_1,\ldots ,\Q_{O(\log n)}\}$}
\label{alg:dotProduct-tt}
\begin{algorithmic}[1]
		\State $R_{\text{query}}:=O{(n^{\delta +500 \cdot\epsilon / \varphi^2}\cdot  k^{9}/{\xi}^{2})}$
				
	\For{$i=1$ to $O(\log n)$}
		\State ${\m^i_x:=\textsc{RunRandomWalks}(G,R_{\text{query}},t,x)}$
		\State ${\m^i_y:=\textsc{RunRandomWalks}(G,R_{\text{query}},t,y)}$ 
	\EndFor
	\State Let ${\alpha}_x$ be a vector obtained by taking the entrywise median of $(\Q_i)^T(\m^i_x)$ over all runs 
	\State Let ${\alpha}_y$ be a vector  obtained by taking the entrywise median of $(\Q_i)^T(\m^i_y)$ over all runs 
	\State  \Return $\adp{f_{x},f_y}:={\alpha}_x^T \Psi {\alpha}_y$  
\end{algorithmic}
\end{algorithm}
\paragraph{Trading off preprocessing time for query time.} Finally, we note that one can reduce query time (i.e., runtime of \textsc{SpectralDotProductOracle}) at the expense of increased preprocessing time and size of data structure. Specifically, one can run $\approx n^{\delta+O(\e/\phi^2)}$ random walks from nodes $x, y$ whose dot product is being estimated by \textsc{SpectralDotProductOracle} at the expense of increasing the number of random walks run to generate the matrix $Q$ in \textsc{InitializeOracle} to $\approx n^{1-\delta+O(\e/\phi^2)}$, for any $\delta\leq 1/2$. This in particular leads to a nearly linear time spectral clustering algorithm.

\subsection{Geometry of the spectral embedding}\label{sec:sp-overview}

We now describe our spectral clustering algorithm. Since we only have dot product access to the spectral embedding, the algorithm must be very simple. Indeed, our algorithm amounts to performing hyperplane partitioning in a sequence of carefully crafted subspaces of the embedding space, using (a good approximation to) cluster means $\mu_i$. 

We first present a simple hyperplane partitioning, then we give an example embedding to show why it might be hard to prove that this scheme works. After that we design a modification of the hyperplane partitioning scheme that, through the course of carving, carefully projects out some directions of the embedding. This modification is an idealized version of our final algorithm for which we can prove per cluster recovery guarantees.

First we assume that the cluster means~\eqref{eq:cluster-means} are known.  In that case we define, for every $i=1,\ldots, k$, the sets 
$$
\wt{C}_i:=\{x\in V: \langle f_x, \mu_i\rangle\geq 0.9 ||\mu_i||^2\}
$$
of points that are nontrivially correlated with the $i$-th cluster mean $\mu_i$. Note that $\wt{C}_i=C_{\mu_i, 0.9}$ in terms of Definition~\ref{def:thresholdsets}, but since $\mu_i$'s are fixed in this overview, we use the simpler notation.
We next define, for every $i=1,\ldots, k$,
\begin{equation}\label{eq:ball-carving}
\wh{C}_i:=\wt{C}_i\setminus \bigcup_{j=1}^{i-1} \wt{C}_j.
\end{equation}
In other words, this is a natural `hyperplane-carving' approach: points that belong to the first hyperplane $\wt{C}_1$ are taken as the first cluster, points in the second hyperplane $\wt{C}_2$ that were not captured by the first hyperplane are taken as the second cluster, etc. This is a natural high dimensional analog of the Cheeger cut that has been used in many results on spectral partitioning. The hope here would be to show that there exists a permutation $\pi$ on $[k]$ such that
\begin{equation}\label{eq:symm-small}
|\wh{C}_i\Delta C_{\pi(i)}|\leq O(\e) \cdot |C_{\pi(i)}|,
\end{equation}
for every $i=1,\ldots, k$, where we assume that the inner conductance $\phi$ of the clusters is constant. Here $\Delta$ stands for the symmetric difference operation. 

One natural approach to establishing~\eqref{eq:symm-small}  would be to prove that for every $i=1,\ldots, k$ vertices $x\in C_i$ concentrate well around cluster means $\mu_i$ (see Fig.~\ref{fig:example}). This would seem to suggest that $\wt{C}_i$'s are close to the $C_i$'s, and so are the $\wh{C}_i$'s. This property of the spectral embedding is quite natural to expect, and versions of this property have been used in the literature. For example, one can show that for every $\alpha\in \R^k, ||\alpha||_2=1$,
\begin{equation}\label{eq:223t}
\sum_{i=1}^k \sum_{x\in C_i} \langle f_x-\mu_i, \alpha\rangle^2\leq O(\e).
\end{equation}
The bound in~\eqref{eq:223t} follows using rather standard techniques -- see Section~\ref{sec:standard-variance-bounds} for this and related claims. One can check that~\eqref{eq:223t} suffices to show that $\wt{C}_i$'s are very close to $C_i$'s, namely that for every $i=1,\ldots, k$ there exists $j \in [k]$ such that
\begin{equation}\label{eq:923u5392}
|\wt{C}_i\Delta C_j|=O(\e)\cdot |C_j|.
\end{equation}
The formal proof is given in Section~\ref{sec:bound_int}.  The result in~\eqref{eq:923u5392} is encouraging and suggests that the clusters $\wh{C}_i$ defined by the simple hyperplane partitioning process approximate the $C_i$'s, but this is not the case! The problem lies in the fact that while $\wt{C}_i$'s approximate the $C_i$'s well as per~\eqref{eq:923u5392}, the bound in~\eqref{eq:923u5392} does not preclude nontrivial overlaps in the $\wt{C}_i$'s -- we give an example in below.

\subsubsection{Hard instance for natural hyperplane partitioning}\label{sec:counterexample}

We now give an example configuration of vertices in Euclidean space such that {\bf (a)} the configuration does not contradict~\eqref{eq:223t} and {\bf (b)} the natural hyperplane partitioning algorithm \eqref{eq:ball-carving} fails for this configuration. This shows why we develop a different algorithm that can deal with configurations like the one presented in this subsection.

Consider the following configuration of $C_i$'s and $\mu_i$'s. Suppose that all cluster sizes are equal $\frac{n}{k}$, and let $k=\frac1{\e}$. Let $\mu_i$'s form an orthogonal system and for each $i \in [k]$ let $||\mu_i||_2 = \sqrt{\frac{k}{n}}$. For all $i <k= 1/\e$ for all $x \in C_i$ we set $f_x = \mu_i$, that is points from all clusters except for $1/\e$'th one are tightly concentrated around cluster means -- see Fig.~\ref{fig:counterexample} for an illustration with $k=3$. Then for cluster $C_{1/\e}$ we distribute points as follows. For every $i = 1,\dots,1/\e - 1$ we move $\e/2$ fraction of its points to $\mu_{1/\e} + \mu_i$, and another $\e/2$ fraction of the points to $\mu_{1/\e}-\mu_i$. The remaining $\e$ fraction of $C_{1/\e}$ stays at $\mu_{1/\e}$. Now observe that all cluster means are where they should be, since we applied symmetric perturbations. Secondly notice that \eqref{eq:223t} is satisfied for every direction $\alpha$. Intuitively it is the case because we moved $1/\e - 1$ disjoint subsets of $C_{1/\e}$ of size $\e \frac{n}{k}$ in $1/\e - 1$ \textbf{orthogonal} directions. Lastly observe what happens to $\wt{C}_i$'s. For all $i = 1,\dots,1/\e - 1$ set $\wt{C}_i$ contains $C_i$ and $\e/2$ fraction of $C_{1/\e}$ that was moved in direction $\mu_i$. One can verify that this is perfectly consistent with~\eqref{eq:223t}, and in particular with \eqref{eq:923u5392}. The problem is that many clusters have large overlap with one particular cluster, namely $C_{1/\e}$. Indeed notice that the ball carving process returns $\wh{C}_{1/\e}$ such that $|\wh{C}_{1/\e} \cap C_{1/\e}| = (\frac{1+\e}{2})\frac{n}{k}$. That means that constant (almost $1/2$) fraction of cluster $C_{1/\e}$ is not recovered! 

\begin{figure}
\centering
\pgfmathsetseed{5}
\begin{tikzpicture}[scale=0.6] 
\draw [fill] (0,0) circle [radius=0.04];
\draw [very thick,->] (0,0,0) -- (10,0,0);
\draw [very thick,->] (0,0,0) -- (0,10,0);
\draw [very thick,->] (0,0,0) -- (0,0,10);
\draw [->] (0,7,0) -- (8,7,0);
\draw [->] (0,7,0) -- (-8,7,0);
\draw [->] (0,7,0) -- (0,7,9);
\draw [->] (0,7,0) -- (0,7,-9);




\foreach \i in {0,...,60}
        \shadedraw [ball color= gray] (7 - 0.5 + rnd,rnd - 0.5, rnd - 0.5) circle (0.1cm);
\foreach \i in {0,...,60}
        \shadedraw [ball color= gray] (rnd - 0.5, rnd - 0.5, 7 + rnd - 0.5) circle (0.1cm);
\foreach \i in {0,...,20}
        \shadedraw [ball color= gray] (rnd - 0.5, 7+  rnd - 0.5, rnd - 0.5) circle (0.1cm);
        
\foreach \i in {0,...,10}
        \shadedraw [ball color= gray] (7 + rnd/2 - 0.25, 7+  rnd/2 - 0.25, rnd/2 - 0.25) circle (0.1cm);
\foreach \i in {0,...,10}
        \shadedraw [ball color= gray] (-7 + rnd/2 - 0.25, 7+  rnd/2 - 0.25, rnd/2 - 0.25) circle (0.1cm);
\foreach \i in {0,...,10}
        \shadedraw [ball color= gray] (rnd/2 - 0.25, 7+  rnd/2 - 0.25, 7 +  rnd/2 - 0.25) circle (0.1cm);
\foreach \i in {0,...,10}
        \shadedraw [ball color= gray] (rnd/2 - 0.25, 7+  rnd/2 - 0.25, -7 + rnd/2 - 0.25) circle (0.1cm);
        
\node[] at (7,1.1,0) {$\mu_1$};
\node[] at (0,1.1,7) {$\mu_2$};
\node[] at (-0.8,7.8,0) (t) {$\mu_3$};


\node[] at (7,0,4) (n) {$C_1$};
\node[] at (0,3,12.5) (n) {$C_2$};
\node[] at (-1,11.5,0) (n) {$C_3$};


\draw [->] (7,0,3.3) -- (7,0,1.5); 
\draw [->] (0,2.4,11.4) -- (0,1,8.5); 
\draw [->] (-1,11,0) -- (-0.2,7.7,0); 
\draw [->] (-1,11,0) -- (6.6,7.2,0); 
\draw [->] (-1,11,0) -- (-6.6,7.2,0); 
\draw [->] (-1,11,0) -- (0,7.2,6.6); 
\draw [->] (-1,11,0) -- (0,7.2,-6.6); 

\end{tikzpicture}

\caption{\label{fig:counterexample} Example of a spectral embedding that is consistent with \eqref{eq:223t} and \eqref{eq:923u5392} but for which the natural hyperplane partitioning would not work.}
\end{figure}
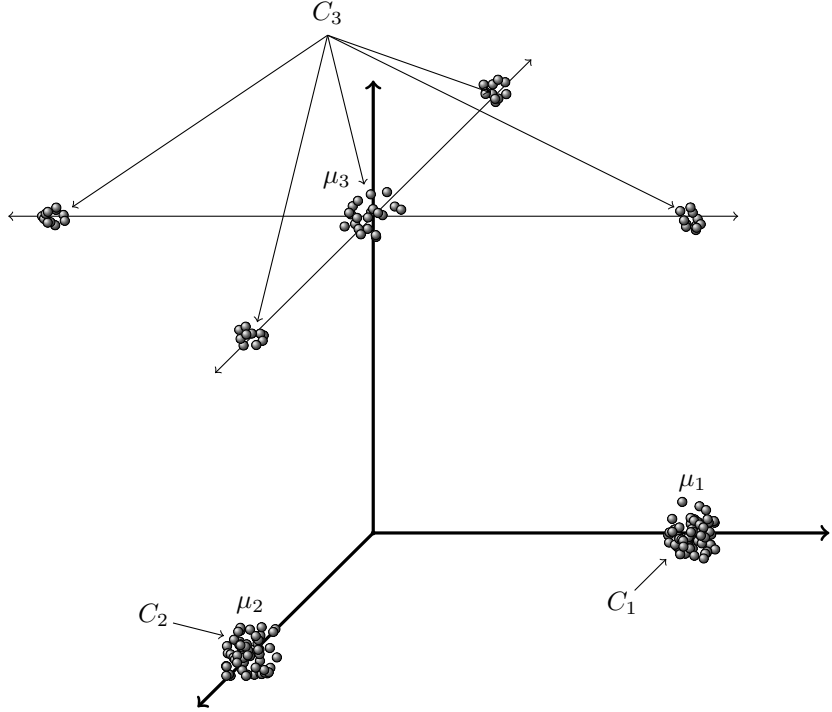

\subsubsection{Our hyperplane partitioning scheme}

The example in Section~\ref{sec:counterexample} suggests that we need to develop a diffferent algorithm. Our main contribution here is an algorithm that more carefully deals with the overlaps of $\widetilde{C}_i$'s. The high level idea for the algorithm is to recover clusters in stages and after every stage project out the directions corresponding to recovered clusters.

First we observe the following property of $(k,\varphi,\e)$-clusterable graphs (see Lemma~\ref{lem:howtocluster}). Any collection of pairwise disjoint sets with small outer-conductance matches the original clusters well. More precisely for every collection $\{ \wh{C}_1, \dots, \wh{C}_k \}$ of pairwise disjoint sets satisfying for every $i \in [k]$ $\phi(\wh{C}_i) \leq O(\e \log(k))$ there exists a permutation $\pi$ on $[k]$ such that
\begin{equation}\label{eq:symm-small2}
|\wh{C}_i\Delta C_{\pi(i)}|\leq O(\e \log(k)) \cdot |C_{\pi(i)}|,
\end{equation}
In the algorithm we will test many candidate clusters and the property above allows us to test if a particular candidate $\wh{C}$ is good by only computing its outer-conductance.

Now we describe our algorithm more formally. The algorithm proceeds in $O(\log(k))$ stages. In the first stage it considers $k$ candidate clusters $\wh{C}_i$, where $x \in \wh{C}_i$ if it has big correlation with $\mu_i$ but small correlation with all other $\mu_j$'s. More formally
\begin{equation}
\wh{C}_i := \widetilde{C}_i \setminus \bigcup_{j \neq i} \widetilde{C}_j \text{,}   
\end{equation}
which is equivalent to:
$$\rdp{f_x, \mu_i} \geq 0.9 \rn{\mu_i}^2 \text{ and for all } j \neq i \rdp{f_x, \mu_j} < 0.9 \rn{\mu_j}^2 \text{.}$$
Note that by definition all these clusters are disjoint. At this point we return all candidate clusters $\wh{C}_i$ for which $\phi(\wh{C}_i) \leq O(\e)$, remove the corresponding vertices from the graph, remove the corresponding $\mu$'s from the set $\{\mu_1, \dots, \mu_k \}$ of centers and proceed to the next stage.

In the next stage we restrict our attention to a lower dimensional subspace $\Pi$ of $\R^k$. Intuitively we want to project out all the directions corresponding to the removed cluster centers. Formally we define $\Pi$ to be the subspace orthogonal to all $\mu$'s removed up to this point (we overload notation by also using $\Pi$ for the orthogonal projection onto this subspace). We will see that $\mu$'s are close to being orthogonal (see Lemma~\ref{lem:dotmu}). This fact means that $\Pi \approx \text{span}(\{\mu_1, \dots, \mu_b \})$, where $\{\mu_1, \dots, \mu_b \}$ is the set of $\mu$'s that were not removed in the first step. Now the algorithm considers $b$ candidate clusters where the condition for $x$ being in a cluster $i$ changes to: 
$$\rdp{f_x, \Pi\mu_i} \geq 0.9 \rn{\Pi\mu_i}^2 \text{ and for all } j \in [b], j \neq i \rdp{f_x, \Pi\mu_j} < 0.9 \rn{\Pi\mu_j}^2 \text{.}$$
Now we return all candidate clusters that satisfy $\phi(\wh{C}_i) \leq O (\e ) $ but this time the constant hidden in the $O$ notation is bigger than in the first stage. In general at any stage $t$ we change the test to $O (\e \cdot t )$. At the end of the stage we proceed in a similar fashion by returning the clusters, removing the corresponding vertices and $\mu$'s and considering a lower dimensional subspace of $\Pi$ in the next stage.

The algorithm continues in such a fashion for $O(\log(k))$ stages. Thus for all returned clusters $\wh{C}_i$ it is true that there exists $j$ such that\footnote{Note that this algorithm may not return a partition of the graph but only a collection of disjoint clusters. Later, in Section~\ref{sec:combining} in Proposition~\ref{prop:partitioncollection}, we present a simple reduction that shows that an algorithm that guarantees \eqref{eq:symm-small2} is enough to construct a clustering oracle that, as required by Definition~\ref{def:oracle}, returns a partition. The high level idea is to assign the remaining vertices to clusters randomly.}:
$$|\wh{C}_{i} \triangle C_j| \leq O \left(\e \log(k) \right) \cdot |C_j| \text{.}$$

Let's analyze how this algorithm works for the configuration presented in Section~\ref{sec:counterexample}. In the first stage we have that, for all $i \neq \frac{1}{\e}$, $\wh{C}_i = C_i$ and moreover $|\wh{C}_{1/\e} \cap C_{1/\e}| = (\frac{1 + \e}{2}) \frac{n}{k}$. So all candidate cluster $\wh{C}_i$ for $i \neq 1/\e$ are returned but crucially this time (in contrast with the natural hyperplane partitioning) cluster $C_{1/\e}$ is left untouched. Then directions $\{ \mu_1, \dots, \mu_{1/\e -1 } \}$ are projected out. In the second stage the algorithm considers only vertices from $C_{1/\e}$ projected onto one dimensional subspace $\text{span}(\mu_{1/\e})$ and recovers this cluster up to $O(\e)$ error.

Because of the robustness property \eqref{eq:symm-small2}, to show that this algorithm works we only need to argue that at the end of $O(\log(k))$ stages $k$ sets are returned. We do that by showing that in every stage at least half of the remaining clusters is recovered. It is done in Lemma~\ref{lem:induction} and crucially relies on the following fact. 
When the algorithm considers a subspace $\Pi$ then the number of points in the union of sets:
$$\{ x \in V : \rdp{f_x, \Pi\mu_i} \geq 0.9 \rn{\Pi\mu_i}^2 \} \cap \{ x \in V : \rdp{f_x, \Pi\mu_j} \geq 0.9 \rn{\Pi\mu_j}^2 \} \text{,}$$
for all $i,j \in [b], i \neq j$
is bounded by $O(\e \cdot b \cdot \frac{n}{k})$ (see Lemma~\ref{lem:pointsoutside} and Remark~\ref{rem:twothresholds}). To prove that we observe that every point $x$ in this intersections has big projection onto some two $\mu_i, \mu_j$ from $\{\mu_1, \dots, \mu_b\}$.
Then using the fact that $\mu$'s are close to being orthogonal we deduce that $\Pi \approx \text{span}(\{\mu_1, \dots, \mu_b\})$  this in particular means that $\Pi\mu_i \approx \mu_i$, $\Pi\mu_j \approx \mu_j$. Because of that $f_x$ is abnormally far (further by a factor of $1/\e$ with respect to the average) from it's center $\mu_x$. Now applying \eqref{eq:223t} for an orthonormal basis of $\Pi$ and summing the inequalities we get that that the number of points in the intersections is bounded by $O(\e \cdot b \cdot \frac{n}{k})$. Having this bound we can argue that at least half of the remaining clusters is recovered as on average only  $O(\e \cdot \frac{n}{k})$ points from each cluster belong to the intersections. The formal argument is given in Section~\ref{sec:realcenterswork}.

The use of subspaces is crucial for our approach. If we relied solely on the bounds on norms (i.e. bounds on $\rn{f_x}$) we could only claim a recovery guarantee of $O(\e k)$ per cluster. One of the reasons is that there can be $\Theta(\e n)$ vertices of abnormally big norm and all of them can belong to one cluster (as it happens in the example from Section~\ref{sec:counterexample}). The use of carefully crafted sequence of subspaces solves this issue as it allows to derive better bounds for the number of abnormal vertices in each stage. It is possible as we can show that the "variance of the distribution" of $f_x$'s cannot concentrate on subspaces. This leads to an $O(\e \log(k))$ error guarantee per cluster.

What remains is to remove the assumption that the cluster means $\mu_i$ are known to the algorithm. We show, using our tail bounds from Lemma~\ref{lem:tail_bound}, that a random sample of $O( 1/\e \cdot k^3 \log k)$ points in every cluster is likely to concentrate around the mean. This allows us to take a $O(1/\e \cdot k^4 \log k)$ size sample of points, guess in exponential (in $1/\e \cdot k^4\log^2 k$) time which points belong to which cluster, and ultimately find surrogates $\widehat{\mu}_i$ that are sufficiently close to the actual $\mu_i$'s for the analysis to go through. This part of the analysis is presented in Section~\ref{sec:approx-muis}. We also need a mechanism for testing if a set of approximate $\wh{\mu}$'s induces (via our partitioning algorithm) a good clustering. We accomplish this goal by designing a simple sampling based tester that determines whether or not the clusters induced by a particular collection of candidate cluster means have the right size and outer conductance properties. See Section~\ref{sec:h_works} for this part of the analysis.

To design our spectral clustering algorithm we need to perform tests like $\rdp{f_x, \Pi\mu} \stackrel{?}{\geq} 0.9 \rn{\Pi\mu}_2^2$ for a given vertex $x$, a candidate cluster mean $\mu$, and the projection matrix $\Pi$. Hence, we need tools to approximate $\rdp{f_x, \Pi\mu}$ and $\rn{\Pi\mu}_2^2$. As explained above, instead of exact cluster means i.e. $\mu$ we will perform the test for approximate cluster means i.e, $\wh{\mu}=\frac{1}{|S|}\sum_{y\in S} f_y$, where $S$ is a small subset $S$ of sampled nodes. First observe that for any vertex $x$ one can estimate $\adp{f_x, \wh{\mu}}$ as follows:
$$\adp{f_x, \wh{\mu}} =\frac{1}{|S|}\sum_{y\in S}\adp{f_x, f_y} $$
where $\adp{f_x, f_y}$ can be computed using  (\textsc{SpectralDotProductOracle}) Algorithm \ref{alg:dotProduct-tt}. Next we will explain how to compute $\adp{f_x,\wh{\Pi} f_y}$ for $x,y\in V$. Recall that $\wh{\Pi}$ is the subspace orthogonal to all $\wh{\mu}$'s removed so far. Let $\{\wh{\mu}_1,\ldots, \wh{\mu}_r\}$ denote the set of removed cluster means, and let $X\in \R^{k\times r}$ denote a matrix whose columns are $\wh{\mu}_i$'s. Therefore the projection matrix onto the span of $\{\wh{\mu}_1,\ldots, \wh{\mu}_r\}$ is given by $X(X^TX)^{-1}X$. Hence, we have $\wh{\Pi}=I-X(X^TX)^{-1}X$ and we can compute  $\adp{f_x,\wh{\Pi} f_y}$ as follows:
$$
 \adp{f_x,\wh{\Pi} f_y} =  \adp{f_x, f_y} - (f_x^T X)(X^TX)^{-1}(X f_y) \text{.}
$$
Note that the $i$-th column of matrix $X$ is $\wh{\mu}_i$, thus $f_x^T X\in \R^{r}$ is a vector whose $i$-th entry can be computed by $\adp{f_x,\wh{\mu}_i}$. Moreover notice that $X^TX \in \R^{r\times r}$ is matrix such that its $(i,j)$-th entry can be computed by $\adp{\wh{\mu}_i, \wh{\mu}_j}$. Therefore $(f_x^T X)$, $(X f_y)$ and $(X^TX)^{-1}$ all can be computed explicitly which let us compute $\adp{f_x,\wh{\Pi} f_y}$. Given the primitive to compute $\adp{f_x,\wh{\Pi} f_y}$ we are able to estimate $\rdp{f_x, \Pi(\mu)}$ and $\rn{\Pi(\mu)}_2^2$ as follows:
 $$
\adp{f_x,\widehat{\Pi} \widehat{\mu}}:=  \frac{1}{|B|}\cdot\sum_{y\in B} \adp{f_x,\wh{\Pi} f_y} \text{,}
$$
$$
\an{\widehat{\Pi} \widehat{\mu}}^2:=  \frac{1}{|B|}\cdot\sum_{x\in B} \adp{f_x,\widehat{\Pi} \widehat{\mu}}  \text{.}
$$
This part of the analysis is presented in Section~\ref{sec:dotproductcomp}.

\section{Properties of the spectral embedding of $(k,\varphi,\epsilon)$-clusterable graphs} \label{sebsec:moment-bounds}

In this section we study the spectral embedding of $(k, \varphi,\epsilon)$-clusterable graphs. Recall that the spectral embedding maps every vertex $x\in V$ to a $k$-dimensional vector $f_x$.
We are interested in understanding the geometric properties of this embedding. We start by recalling some standard properties of the embedding: We show that the cluster means 
$$
\mu_i = \frac{1}{|C_i|} \sum_{x \in C_i} f_x
$$
 are almost orthogonal and of length roughly $1/\sqrt{|C_i|}$ (Lemma~\ref{lem:dotmu} below). Then we give a bound on the directional variance, by which we mean the sum of squared distances of points $f_x$ to their corresponding cluster centers when projected on direction $\alpha$.
We show in Lemma \ref{lem:dirvariance} below that the directional variance is bounded by $O(\epsilon/\varphi^2)$ for every direction $\alpha\in \R^k, \|\alpha\|=1$. This in particular implies (see Lemma~\ref{lem:spectraldistance} below) that `rounding' the spectral embedding by mapping each vertex to its corresponding cluster center results in a matrix $U$ that spectrally approximates the matrix of bottom $k$ eigenvectors of the Laplacian. These bounds are rather standard, and their proofs are provided  for completeness.  The main shortcoming of the standard bounds is that they can only allow us to apply averaging arguments, and are thus unable to rule out that some of the embedded points are quite far away from their corresponding cluster center. For example, they do not rule out the possibility of an $\Omega(1/k)$ fraction of the points being $\approx \sqrt{k}$ further away from their corresponding centers. Since we would like to recover every cluster to up an $O(\e)$ error, such bounds are not sufficient on their own.

For this reason we consider the distribution of the projection of the embedded points on the direction of any of the first $k$ eigenvectors and we give stronger tail bounds for these distributions (in Lemma \ref{lem:tail_bound}) than what follows from variance calculations only. Basically, we give a strong bound on the $O(\varphi^2/\e)$-th moment of the spectral embedding as opposed to just on the second moment, as above. These higher moment bounds are then crucially used to achieve sublinear time access to dot products in the embedded space in Section~\ref{sec:dot} (we need them to establish spectral concentration of a small number of random samples in Section~\ref{subsubsec:topk}) as well as to argue that a small sample of vertices contains a good approximation to the true cluster means $\mu_i, i=1,\ldots, k$ in its span in Section~\ref{sec:approx-muis}.

\subsection{Standard bounds on cluster means and directional variance}\label{sec:standard-variance-bounds}
The lemma below bounds the variance of the spectral embedding in any direction. 
\begin{restatable}{lemma}{lemmdirvariance}(Variance bounds)\label{lem:dirvariance}
 Let  $k \geq 2$ be an integer, $\varphi \in (0,1)$, and $\e \in (0,1)$. Let $G=(V,E)$ be a $d$-regular graph that admits $(k,\varphi,\e)$-clustering $C_1, \ldots, C_k$. Then for all $\alpha \in \R^k$, with $\|\alpha\|=1$ we have
$$
\sum_{i=1}^k \sum_{x \in C_i}  \rdp{f_x - \mu_i,  \alpha}^2 \leq \frac{4\epsilon}{\varphi^2}.
$$
\end{restatable}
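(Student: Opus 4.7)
The plan is to translate the target sum into a statement about a single vector in $\R^n$, and then use the gap between $\lambda_k$ and the inner conductance of the clusters. Let $v := U_{[k]}\alpha \in \R^n$; since the columns of $U_{[k]}$ are orthonormal and $\|\alpha\|=1$, we have $\|v\|=1$. For every $x\in V$,
\[
\rdp{f_x,\alpha} = \alpha^T U_{[k]}^T \mathds{1}_x = v(x),
\qquad
\rdp{\mu_i,\alpha} = \frac{1}{|C_i|}\sum_{y\in C_i}v(y) =: \bar v_i.
\]
Hence, writing $P_C := \sum_{i=1}^k \tfrac{1}{|C_i|}\chi_i\chi_i^T$ for the orthogonal projection onto the span of the cluster indicators, the quantity to be bounded equals the within-cluster sum of squares
\[
\sum_{i=1}^k\sum_{x\in C_i}\rdp{f_x-\mu_i,\alpha}^2 = \sum_{i=1}^k\sum_{x\in C_i}(v(x)-\bar v_i)^2 = \|v-P_C v\|_2^2.
\]

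Next I would use the spectral gap. Because $v$ lies in the span of the bottom $k$ eigenvectors of $L$, Lemma~\ref{lem:bnd-lambda} gives $v^T L v \leq \lambda_k \leq 2\epsilon$. Now decompose $L = L^{\mathrm{in}} + L^{\mathrm{out}}$, where $L^{\mathrm{in}}$ is the contribution of edges inside clusters and $L^{\mathrm{out}}$ the contribution of cross-cluster edges; both summands are PSD since
\[
u^T L^{\mathrm{in}} u = \tfrac{1}{d}\sum_{\{x,y\}\in E,\ x,y\in C_i\text{ for some }i}(u(x)-u(y))^2,
\]
and similarly for $L^{\mathrm{out}}$. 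Hence $v^T L^{\mathrm{in}} v \leq v^T L v \leq 2\epsilon$.

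The key observation is that $L^{\mathrm{in}}$ restricted to cluster $C_i$ coincides with the normalized Laplacian of the $d$-regular graph $\tilde G_i$ obtained from $G[C_i]$ by adding $d-\deg_{G[C_i]}(x)$ self-loops at every $x\in C_i$; indeed, $L^{\mathrm{in}}_{ii} = D^{\mathrm{in}}_i/d - A^{\mathrm{in}}_i/d$, and the induced-in-cluster degree of $x$ equals $d$ minus the number of self-loops added. In $\tilde G_i$ the constant vector $\chi_i$ is the eigenvector of eigenvalue $0$, so $L^{\mathrm{in}}(P_C v)=0$ and therefore $v^T L^{\mathrm{in}} v = (v-P_Cv)^T L^{\mathrm{in}}(v-P_Cv)$. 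Moreover, the conductance of $\tilde G_i$ coincides with the inner conductance of $C_i$ in $G$ (both numerator and denominator are $d$-normalized), so $\phi(\tilde G_i)\geq \varphi$, and Lemma~\ref{lem_Cheeger} yields $\lambda_2(L^{\mathrm{in}}_{ii})\geq \varphi^2/2$. Applied to $w_i := (v-P_Cv)|_{C_i}$, which by construction is orthogonal to $\chi_i$, this gives
\[
(v-P_Cv)^T L^{\mathrm{in}}(v-P_Cv) = \sum_{i=1}^k w_i^T L^{\mathrm{in}}_{ii} w_i \geq \tfrac{\varphi^2}{2}\sum_i \|w_i\|_2^2 = \tfrac{\varphi^2}{2}\|v-P_Cv\|_2^2.
\]

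Combining the two bounds gives $\tfrac{\varphi^2}{2}\|v-P_Cv\|_2^2 \leq v^T L^{\mathrm{in}} v \leq 2\epsilon$, i.e.\ $\|v-P_Cv\|_2^2 \leq 4\epsilon/\varphi^2$, which is exactly the desired inequality. The only non-routine step is the verification that $L^{\mathrm{in}}$ restricted to $C_i$ is the normalized Laplacian of a $d$-regular graph with conductance at least $\varphi$, so that Cheeger gives the correct $\varphi^2/2$ spectral gap with the normalization we need; everything else is a straightforward Pythagorean decomposition.
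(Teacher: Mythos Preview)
Your proof is correct and follows essentially the same approach as the paper: both set $v=U_{[k]}\alpha$, use $v^TLv\le\lambda_k\le 2\epsilon$, drop the cross-cluster edge contributions (your $L^{\text{out}}\succeq 0$ is exactly the paper's inequality $\sum_{(x,y)\in E}(z(x)-z(y))^2\ge \sum_i\sum_{x,y\in C_i}(z(x)-z(y))^2$), and then apply Cheeger's inequality on each self-loop-regularized induced subgraph $\tilde G_i$ to the centered vector. The only cosmetic difference is that you phrase things via the projection $P_C$ and the block decomposition $L=L^{\text{in}}+L^{\text{out}}$, whereas the paper works directly with the Rayleigh quotient on each $H_i$; the substance is identical.
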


\begin{proof}
For each $i \in [k]$, and any vertex $x\in C_i$, let $d_i(x)$ denote the degree of vertex $x$ in the subgraph $C_i$. Let $H_i$ be a graph obtained by adding $d-d_i(x)$ self-loops to each vertex $x\in C_i$. Let $L$ denote the normalized Laplacian of graph $G$. For each $i\in [k]$ and let $L_i$ denote the normalized Laplacian of $H_i$, and let $\lambda_2({H_i})$ be the second smallest eigenvalue of $L_i$.  

Let $z = U_{[k]} \alpha$. Note that $||z||_2=1$. By Lemma \ref{lem:bnd-lambda} we have $\lambda_1\leq\ldots \leq \lambda_k\leq 2\epsilon$, where $\lambda_i$ is the $i^{\text{th}}$ smallest eigenvalue of of $L$. Therefore we have
\begin{equation}\label{eq:zinsmallspace}
\langle z , Lz \rangle  \leq \lambda_k \leq 2\epsilon
\end{equation}
Fix some $i\in[k]$, let $z' \in \R^n$ be a vector such that $z'(x) := z(x) - \langle \mu_i, \alpha \rangle$. For any $S \subseteq V$, we define $z'_{S} \in \mathbb{R}^n$ to be a vector such that for all $x \in V$ $z'_S(x) = z'(x)$ if $x \in S$ and $z'_S(x) = 0$ otherwise. Note that $z(x)=\rdp{f_x,\alpha}$, thus we have
\[\sum_{x\in V} z'_{C_i}(x) = \sum_{x\in C_i} z'(x)=\sum_{x\in C_i} z(x)-\rdp{\mu_i,\alpha} =\sum_{x\in C_i} \rdp{f_x-\mu_i,\alpha} = 0\]
Thus we have  $z'_{C_i}\perp \mathds{1}$, so by  properties of Rayleigh quotient we get
\begin{equation}\label{eq:rayleighquotient}
\frac{\langle z'_{C_i}, L_i z'_{C_i} \rangle}{\langle z'_{C_i}, z'_{C_i} \rangle} =
\frac{1}{d} \frac{\sum_{x,y \in C_i, (x,y) \in E} (z'(x)-z'(y))^2}{\sum_{x \in C_i} (z'(x))^2 } =
\frac{1}{d} \frac{\sum_{x,y \in C_i, (x,y) \in E} (z(x)-z(y))^2}{\sum_{x \in C_i} (z(x) - \langle \mu_i, \alpha \rangle)^2 } \geq \lambda_2({H_i})
\end{equation}
Furthermore, by Cheeger's inequality for any $i\in [k]$ we have $\lambda_2(H_i)\geq \frac{\varphi^2}{2}$. Hence, for any $i\in [k]$ we have
\[\frac{\sum_{x,y \in C_i, (x,y) \in E} (z(x)-z(y))^2}{d\sum_{x \in C_i} (z(x) - \langle \mu_i, \alpha \rangle)^2 } \geq \lambda_2({H_i}) \geq \frac{\varphi^2}{2} \]
Now observe the following:
\begin{align*}
2\epsilon 
&\geq
\langle z, Lz \rangle &&\text{By (\ref{eq:zinsmallspace})}\\
&=
\frac{1}{d}\cdot \sum_{(x,y) \in E}(z(x)-z(y))^2 \\ 
&\geq \frac{1}{d}\cdot \sum_{i=1}^k \sum_{x,y \in C_i, (x,y) \in E} (z(x)-z(y))^2 \\ 
&\geq \frac{\varphi^2}{2} \cdot\sum_{i=1}^k \sum_{x \in C_i} (z(x) - \langle \mu_i, \alpha \rangle)^2    &&\text{By \eqref{eq:rayleighquotient}} 
\end{align*}
Recall that for all $x\in V$, $z(x)=\rdp{f_x,\alpha}$. Therefore for for any $\alpha \in \R^k$ with $\|\alpha\| = 1$ we have
\[
    \sum_{i=1}^k \sum_{x \in C_i} \langle f_x - \mu_i, \alpha \rangle^2 \leq \frac{4\epsilon}{\varphi^2}
\]
\end{proof}
The following lemma shows that the length of the cluster mean of cluster $C_i$ is roughly $1/\sqrt{|C_i|}$ and that cluster means are almost orthogonal.
\begin{restatable}{lemma}{lemmdotmu}(Cluster means)\label{lem:dotmu}
Let  $k \geq 2$ be an integer, $\varphi \in (0,1)$, and $\e \in (0,1)$. Let $G=(V,E)$ be a $d$-regular graph that admits $(k,\varphi,\e)$-clustering $C_1, \ldots, C_k$. Then we have
\begin{enumerate}
    \item $\text{ for all } i \in [k]$, $\left| ||\mu_i||_2^2 - \frac{1}{|C_i|} \right| \leq \frac{4\sqrt{\epsilon}}{\varphi} \frac{1}{|C_i|}$\label{itm1:lem-dotmu}
    \item $\text{ for all } i \neq j \in [k]$, $\left| \langle \mu_i, \mu_j \rangle \right| \leq \frac{8\sqrt{\epsilon}}{\varphi }\frac{1}{\sqrt{|C_i\|C_j|}}$ \label{itm2:lem-dotmu}
\end{enumerate}
\end{restatable}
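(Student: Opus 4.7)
The plan is to exploit the standard link between the normalized cluster indicator vectors $\chi_i := \mathds{1}_{C_i}/\sqrt{|C_i|}$ and the span of the bottom $k$ eigenvectors of $L$. The family $\{\chi_i\}_{i=1}^k$ is orthonormal since the clusters are disjoint, and a direct expansion of $L = I - A_G/d$ (using $\sum_{x\in C_i}\deg(x) = d|C_i|$ to rewrite $2|E(C_i,C_i)|$) gives the elementary identity $\chi_i^T L \chi_i = \phi^G_V(C_i) \leq \epsilon$.

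Next I would decompose $\chi_i = U_{[k]} a_i + w_i$, where $a_i \in \R^k$ records the coordinates of $\chi_i$ in the bottom-$k$ eigenbasis and $w_i$ lies in the orthogonal complement of the column span of $U_{[k]}$. Lemma~\ref{lem:bnd-lambda} supplies the spectral gap $\lambda_{k+1} \geq \varphi^2/2$, and feeding this into the Rayleigh-quotient bound from the previous step (expanding $\chi_i^T L \chi_i$ in the full eigenbasis of $L$ and discarding the contribution from the top $k$ eigenvectors, which is non-negative) yields $\|w_i\|_2^2 \leq 2\epsilon/\varphi^2$, and hence $\|a_i\|_2^2 \in [1 - 2\epsilon/\varphi^2,\,1]$.

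The key identity, immediate from the definitions of $\mu_i$ and $f_x$ together with $U_{[k]}^T w_i = 0$, is
\[
\mu_i = \frac{1}{|C_i|}\, U_{[k]}^T \mathds{1}_{C_i} = \frac{1}{\sqrt{|C_i|}}\, a_i.
\]
Part~(\ref{itm1:lem-dotmu}) then drops out: $\bigl|\,\|\mu_i\|_2^2 - 1/|C_i|\,\bigr| = \|w_i\|_2^2/|C_i| \leq 2\epsilon/(\varphi^2|C_i|)$, which sits inside the claimed bound under the hypothesis $\epsilon \leq \varphi^2$ (so that $\epsilon/\varphi^2 \leq \sqrt{\epsilon}/\varphi$). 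For part~(\ref{itm2:lem-dotmu}), the orthogonality $\chi_i^T\chi_j = 0$ forces $a_i^T a_j = -w_i^T w_j$, and Cauchy-Schwarz gives
\[
|\langle \mu_i,\mu_j\rangle| \;=\; \frac{|a_i^T a_j|}{\sqrt{|C_i||C_j|}} \;\leq\; \frac{\|w_i\|_2\,\|w_j\|_2}{\sqrt{|C_i||C_j|}} \;\leq\; \frac{2\epsilon}{\varphi^2\sqrt{|C_i||C_j|}},
\]
again comfortably inside the stated bound.

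The only step requiring any real care is the spectral-gap argument bounding $\|w_i\|_2$; once that is in place the rest is bookkeeping and in fact yields slightly sharper constants than the statement claims, so the factors $4$ and $8$ appearing in the lemma presumably reflect a simpler loose rearrangement by the authors rather than any genuine loss in the argument.
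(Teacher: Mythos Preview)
Your argument is correct and in fact yields the sharper constants $2\epsilon/\varphi^2$ in both parts. One loose end: the lemma as stated does not assume $\epsilon \leq \varphi^2$, but when $\sqrt{\epsilon}/\varphi > 1$ the claimed bounds already exceed $1/|C_i|$ (resp.\ $1/\sqrt{|C_i||C_j|}$), and since $\|w_i\|_2 \leq 1$ your identities make that regime trivial---a one-line fix.

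The paper takes a different route. It first proves a directional variance bound (Lemma~\ref{lem:dirvariance}), $\sum_i \sum_{x\in C_i}\langle f_x-\mu_i,\alpha\rangle^2 \leq 4\epsilon/\varphi^2$, via Cheeger on each induced cluster, then upgrades this through a Cauchy--Schwarz step to the matrix statement (Lemma~\ref{lem:spectraldistance}) that $(HW)^T(HW)$ is $\tfrac{4\sqrt\epsilon}{\varphi}$-close to the identity in spectral norm, where $H=[\mu_1\cdots\mu_k]$ and $W=\mathrm{diag}(\sqrt{|C_i|})$; reading off the diagonal and the $(i,j)$ entry (via $\alpha=\mathds{1}_i$ and $\alpha=(\mathds{1}_i+\mathds{1}_j)/\sqrt 2$) gives the lemma. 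The Cauchy--Schwarz is exactly where the square root is lost, confirming your suspicion about the constants. Your approach---decompose the normalized indicator $\chi_i$ in the eigenbasis and read off $\mu_i = a_i/\sqrt{|C_i|}$ directly---is both more elementary and quantitatively tighter for this particular statement. The paper's detour is less efficient here, but Lemmas~\ref{lem:dirvariance} and~\ref{lem:spectraldistance} are reused repeatedly downstream (e.g.\ Lemmas~\ref{lem:dosubspace}, \ref{lem:QQ-1}, \ref{lem:apxQQ-1}), so deriving everything from a single variance primitive has organizational value for the authors even at the cost of a looser bound here.
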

To prove Lemma \ref{lem:dotmu} we need Lemma \ref{lem:spectraldistance} in which we will use the following result from \cite{HornJ} (Theorem 1.3.20 on page 53).
\begin{lemma}[\cite{HornJ}]\label{lem_commute}
Let $h,m,n$ be integers such that $1\leq h\leq m \leq n$. For any  matrix $A\in \R^{m\times n}$ and matrix $B\in \R^{n\times m}$, the multisets of nonzero eigenvalues of $AB$ and $BA$ are equal. In particular, if one of $AB$ and $BA$ is positive semidefinite, then $\nu_h(AB)=\nu_h(BA)$.
\end{lemma}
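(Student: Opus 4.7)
The plan is to prove the classical fact via the characteristic polynomial identity
\[
\lambda^n \det(\lambda I_m - AB) = \lambda^m \det(\lambda I_n - BA),
\]
which already encodes equality of nonzero eigenvalues together with their algebraic multiplicities, and then derive the ordered statement about $\nu_h$ from it. I would obtain this identity by a standard block Schur complement computation. Consider the $(m+n)\times(m+n)$ block matrix $M = \begin{pmatrix} \lambda I_m & A \\ B & I_n \end{pmatrix}$. Left-multiplying by $\begin{pmatrix} I_m & -A \\ 0 & I_n \end{pmatrix}$ (determinant $1$) gives $\begin{pmatrix} \lambda I_m - AB & 0 \\ B & I_n \end{pmatrix}$, so $\det M = \det(\lambda I_m - AB)$. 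On the other hand, for $\lambda \neq 0$ one has the factorization $M = \begin{pmatrix} \lambda I_m & 0 \\ B & I_n \end{pmatrix} \begin{pmatrix} I_m & A/\lambda \\ 0 & I_n - BA/\lambda \end{pmatrix}$, which gives $\det M = \lambda^m \det(I_n - BA/\lambda) = \lambda^{m-n} \det(\lambda I_n - BA)$. Equating the two expressions and multiplying through by $\lambda^n$ yields the identity for all $\lambda\neq 0$, and hence for all $\lambda$ by polynomial continuity.

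From this identity the first claim follows immediately: comparing the nonzero roots of the two sides (with multiplicities) shows that the multisets of nonzero eigenvalues of $AB$ and $BA$ are the same, while the powers $\lambda^n$ on the left and $\lambda^m$ on the right merely account for the extra zero eigenvalues forced on the larger matrix by the dimension difference.

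For the ordered statement, suppose $AB$ is positive semidefinite (the case of $BA$ being PSD is symmetric). Then $AB$ has $m$ real nonnegative eigenvalues $\nu_1(AB)\geq \cdots \geq \nu_m(AB) \geq 0$. By the multiset equality just established, every nonzero eigenvalue of $BA$ equals some nonzero eigenvalue of $AB$, hence is nonnegative, so the $n$ eigenvalues of $BA$ are all nonnegative real and can be ordered as $\nu_1(BA)\geq \cdots \geq \nu_n(BA)\geq 0$. Moreover, the multiset equality implies that $BA$'s eigenvalue list is obtained from $AB$'s by appending $n-m$ zeros. Sorting both lists in decreasing order therefore yields $\nu_h(AB) = \nu_h(BA)$ for every $1 \leq h \leq m$, which is exactly what is asserted.

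The only subtlety is handling ties at zero carefully when $\nu_h(AB) = 0$: here one argues that the number of positive eigenvalues of $AB$ equals the number of positive eigenvalues of $BA$ (both counted with multiplicity, by the multiset equality), and so $\nu_h(AB) = 0$ forces $\nu_h(BA) = 0$ as well. I do not expect a real obstacle in the proof; the main care point is simply being explicit about the block factorization on a polynomial-identity level so that the argument is valid for all $\lambda$, not just generic $\lambda \neq 0$.
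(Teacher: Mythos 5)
Your proof is correct. Note that the paper does not prove this lemma at all: it is imported verbatim from Horn and Johnson (Theorem 1.3.20), so there is no internal argument to compare against. Your block Schur-complement derivation of the identity $\lambda^n \det(\lambda I_m - AB) = \lambda^m \det(\lambda I_n - BA)$ is the standard textbook proof, and your two factorizations of $\begin{pmatrix} \lambda I_m & A \\ B & I_n \end{pmatrix}$ check out; the polynomial-continuity step to extend from $\lambda \neq 0$ to all $\lambda$ is handled correctly. The deduction of the ordered statement is also sound, including the tie-at-zero case: since the two spectra differ only by $n-m$ appended zeros and (under the PSD hypothesis on one of the products) all eigenvalues of both are real and nonnegative, sorting gives $\nu_h(AB)=\nu_h(BA)$ for $1 \leq h \leq m$. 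One cosmetic remark: the paper defines $\nu_h$ only for symmetric matrices, whereas the lemma as stated applies it to a product that need not be symmetric; your observation that the eigenvalues are nevertheless real and nonnegative (counted with algebraic multiplicity via the characteristic polynomial) is exactly what makes the notation meaningful here, and in all of the paper's applications both products are in fact symmetric.
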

\begin{lemma}\label{lem:spectraldistance}
 Let  $k \geq 2$ be an integer, $\varphi \in (0,1)$, and $\e \in (0,1)$. Let $G=(V,E)$ be a $d$-regular graph that admits $(k,\varphi,\e)$-clustering $C_1, \ldots, C_k$.  Let $H\in \R^{k \times k}$ be a matrix whose $i$-th column is $\mu_i$. Let $W\in \R^{k\times k}$ be a diagonal matrix such that $W(i,i)=\sqrt{|C_i|}$. Then for any $\alpha \in \R^k$, $\|\alpha\|=1$, we have
 \begin{enumerate}
 \item $  |\alpha^T \left((HW)(HW)^T - I\right)\alpha| \leq \frac{4\sqrt{\epsilon}}{\varphi}$ \label{lem-spec-dist-itm1}
 \item $|\alpha^T \left((HW)^T(HW) - I\right)\alpha| \leq \frac{4\sqrt{\epsilon}}{\varphi}$ \label{lem-spec-dist-itm2}
 \end{enumerate}
\end{lemma}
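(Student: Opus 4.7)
The plan is to express both quadratic forms in terms of the directional variance bound from Lemma~\ref{lem:dirvariance}, exploiting the fact that $F = U_{[k]}^T$ has orthonormal rows.

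First I would compute $(HW)(HW)^T$ explicitly. Since the $i$-th column of $HW$ equals $\sqrt{|C_i|}\,\mu_i$, one has
\[
(HW)(HW)^T \;=\; \sum_{i=1}^k |C_i|\,\mu_i\mu_i^T.
\]
Fix a unit vector $\alpha\in\R^k$ and set $a_x := \langle f_x,\alpha\rangle$ for $x\in V$ and $b_i := \langle \mu_i,\alpha\rangle$ for $i\in[k]$. By the definition of $\mu_i$, we have $b_i = \tfrac{1}{|C_i|}\sum_{x\in C_i}a_x$, so the within-cluster ANOVA identity gives
\[
\sum_{x\in V} a_x^2 \;=\; \sum_{i=1}^k |C_i|\,b_i^2 \;+\; \sum_{i=1}^k\sum_{x\in C_i}(a_x-b_i)^2.
\]
The left-hand side equals $\alpha^T FF^T\alpha = \alpha^T U_{[k]}^T U_{[k]}\alpha = \|\alpha\|^2 = 1$ since $U_{[k]}$ has orthonormal columns. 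The first sum on the right equals $\alpha^T (HW)(HW)^T\alpha$, and the second sum is precisely $\sum_i\sum_{x\in C_i}\langle f_x-\mu_i,\alpha\rangle^2$, which by Lemma~\ref{lem:dirvariance} is at most $4\epsilon/\varphi^2$.

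Combining these facts yields
\[
1 - \tfrac{4\epsilon}{\varphi^2} \;\leq\; \alpha^T(HW)(HW)^T\alpha \;\leq\; 1,
\]
and hence $|\alpha^T((HW)(HW)^T - I)\alpha|\leq 4\epsilon/\varphi^2 \leq 4\sqrt{\epsilon}/\varphi$ (the last inequality uses $\epsilon/\varphi^2 \leq 1$, which holds in the regime of interest). This proves item~\ref{lem-spec-dist-itm1}.

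For item~\ref{lem-spec-dist-itm2}, I would invoke Lemma~\ref{lem_commute} with $A=HW$ and $B=(HW)^T$, which ensures that the two $k\times k$ matrices $(HW)(HW)^T$ and $(HW)^T(HW)$ have identical multisets of nonzero eigenvalues. Since both are $k\times k$ and PSD, they have identical eigenvalues altogether. The bound from item~\ref{lem-spec-dist-itm1} therefore transfers immediately: every eigenvalue of $(HW)^T(HW)$ lies in $[1-4\epsilon/\varphi^2,\,1]$, so for any unit $\alpha$ we again have $|\alpha^T((HW)^T(HW) - I)\alpha|\leq 4\epsilon/\varphi^2 \leq 4\sqrt{\epsilon}/\varphi$. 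There is no real obstacle here; the entire proof is essentially a one-line consequence of Lemma~\ref{lem:dirvariance} once one recognizes that $\sum_x a_x^2 = 1$ exactly, and that the symmetry $AA^T \leftrightarrow A^TA$ is free in the square case.
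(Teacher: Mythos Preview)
Your proof is correct and in fact slightly sharper than the paper's. Both proofs identify $(HW)(HW)^T=\sum_i|C_i|\mu_i\mu_i^T$ and both reduce to Lemma~\ref{lem:dirvariance}, but the mechanism differs. The paper writes $\alpha^T((HW)(HW)^T-I)\alpha=\sum_x(\tilde z(x)^2-z(x)^2)$ with $z(x)=\langle f_x,\alpha\rangle$ and $\tilde z(x)=\langle\mu_x,\alpha\rangle$, then applies Cauchy--Schwarz to this difference of squares, picking up a factor $\sqrt{\sum_x(z(x)+\tilde z(x))^2}\le 2$ and thereby landing at $4\sqrt{\epsilon}/\varphi$. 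Your ANOVA decomposition recognizes that this difference is \emph{exactly} the nonnegative within-cluster variance $\sum_i\sum_{x\in C_i}(a_x-b_i)^2$, so you obtain the tighter one-sided bound $0\le 1-\alpha^T(HW)(HW)^T\alpha\le 4\epsilon/\varphi^2$ without any loss. The deduction of item~\ref{lem-spec-dist-itm2} from item~\ref{lem-spec-dist-itm1} via equality of eigenvalues (Lemma~\ref{lem_commute}) is identical to the paper's.

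One small caveat: the inequality $4\epsilon/\varphi^2\le 4\sqrt{\epsilon}/\varphi$ requires $\epsilon\le\varphi^2$, which the lemma does not assume. But when $\epsilon>\varphi^2$ your argument still gives $\alpha^T(HW)(HW)^T\alpha\in[0,1]$, hence $|\alpha^T((HW)(HW)^T-I)\alpha|\le 1<4\sqrt{\epsilon}/\varphi$, so the stated bound holds trivially in that regime.
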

\begin{proof}
\textbf{Proof of item \eqref{lem-spec-dist-itm1}:}
Let $Y\in \R^{k\times n}$ denote a matrix whose $x$-th column is $\mu_x$ for any $x\in V$. Note that 
\[YY^T=\sum_{i=1}^k |C_i|\mu_i\mu_i^T= (HW)(HW)^T\text{.}\] 
We define $\widetilde{z} := Y^T\alpha$, and $z := U_{[k]} \alpha$. Note that $U_{[k]}^TU_{[k]}=I$. Therefore we have
\begin{align}
|\alpha^T \left((HW)(HW)^T - I\right)\alpha| &= |\alpha^T (YY^T - U_{[k]}^TU_{[k]})\alpha| \nonumber\\ 
&= \left|\sum_{x \in V} \widetilde{z}(x)^2 - z(x)^2\right|  &&\text{From definition of $z(x)$ and  $\widetilde{z}(x)$}\nonumber\\
 &= \left| \sum_{x \in V} \left( z(x)-\widetilde{z}(x)\right)\left( z(x)+\widetilde{z}(x)\right) \right| \nonumber\\ 
 &\leq 
 \sqrt{\sum_{x \in V} ( z(x)- \widetilde{z}(x))^2 \sum_{x \in V} (\widetilde{z}(x) + z(x))^2}  && \text{By Cauchy-Schwarz inequality } \label{eq:hwI}
\end{align}
Note that for any $x\in V$, we have $z(x)=\rdp{f_x,\alpha}$ and $\widetilde{z}(x)=\rdp{\mu_x,\alpha}$. Therefore by Lemma \ref{lem:dirvariance} we have
\begin{equation}\label{eq:var-z}
\sqrt{\sum_{x \in V} ( z(x)- \widetilde{z}(x))^2} =  \sqrt{\sum_{x \in V} \rdp{f_x-\mu_x, \alpha}^2} \leq \frac{2\sqrt{\epsilon}}{\varphi}
\end{equation}
To complete the proof it suffices  to show that $\sum_{x \in V} (\tilde{z}(x) + z(x))^2 \leq 4$. Note that
\begin{align*}
\sum_{x \in V} \tilde{z}(x)^2 
&= \sum_{x \in V} \rdp{\alpha, \mu_x}^2 \\
&= \sum_{i} |C_i| \rdp{\alpha, \frac{\sum_{x \in C_i} f_x}{|C_i|}}^2 \\
&= \sum_{i} |C_i| \left( \frac{\sum_{x \in C_i} \rdp{\alpha,f_x}}{|C_i|} \right)^2 \\
&\leq \sum_{i} \sum_{x \in C_i} \rdp{\alpha,f_x}^2 && \text{By Jensen's inequality} \\
&= \sum_{x \in V} z(x)^2
\end{align*}
Thus we have
\begin{equation}\label{eq:means}
\sum_{x \in V} (\tilde{z}(x) + z(x))^2 \leq \sum_{x \in V} 2(\tilde{z}(x)^2 + z(x)^2) \leq 2 + 2\sum_{x \in V} \tilde{z}(x)^2 \leq 4
\end{equation}
In the first inequality we used the fact that $(\tilde{z}(x) - z(x))^2 \geq 0$ and for the second inequality we used the fact that $||z||_2^2=||U_{[k]}\alpha||_2^2=1$. Putting \eqref{eq:means}, \eqref{eq:var-z}, and \eqref{eq:hwI} together we get
\[|\alpha^T \left((HW)(HW)^T - I\right)\alpha| \leq \frac{4\sqrt{\epsilon}}{\varphi} \text{.}\]
\textbf{Proof of item \eqref{lem-spec-dist-itm2}:} 
Note that by item \eqref{lem-spec-dist-itm2} for any vector $
\alpha$ with $||\alpha||_2=1$ we have
\[ 1- \frac{4\sqrt{\epsilon}}{\varphi}\leq \alpha^T \left((HW)(HW)^T \right)\alpha \leq 1+\frac{4\sqrt{\epsilon}}{\varphi}\] 
Thus by Lemma \ref{lem_commute} we have that the set of eigenvalues of $(HW)(HW)^T$ and $(HW)^T(HW)$ are the same, and all of the eigenvalues lie in the interval $[1- \frac{4\sqrt{\epsilon}}{\varphi}, 1+\frac{4\sqrt{\epsilon}}{\varphi}]$. Thus for any vector $\alpha$ with $||\alpha||_2=1$ we have
\[ 1- \frac{4\sqrt{\epsilon}}{\varphi}\leq \alpha^T \left((HW)^T(HW) \right)\alpha \leq 1+\frac{4\sqrt{\epsilon}}{\varphi} \text{.}\] 
\end{proof}
Now we are able to prove Lemma \ref{lem:dotmu}.
\lemmdotmu*
\begin{proof}
\textbf{Proof of item \eqref{itm1:lem-dotmu}:}
Let $H\in \R^{k \times k}$ be a matrix whose $i$-th column is $\mu_i$. Let $W\in \R^{k\times k}$ be a diagonal matrix whose such that $W(i,i)=\sqrt{|C_i|}$. Thus by Lemma \ref{lem:spectraldistance} item \eqref{lem-spec-dist-itm2} for any $\alpha \in \R^k$ with $\|\alpha\|=1$, we have
\[|\alpha^T \left((HW)^T(HW) - I\right)\alpha| \leq \frac{4\sqrt{\epsilon}}{\varphi}\]
Let $\alpha = \mathds{1}_i$. Thus we have 
\begin{equation}\label{eq:smalldiag}
|((HW)^T(HW) )(i,i) - 1| \leq \frac{4\sqrt{\epsilon}}{\varphi}
\end{equation} 
Note that $((HW)^T(HW) )(i,i) = (WH^THW)(i,i)=||\mu_i||^2_2 |C_i|$.Therefore we get
\[\left|||\mu_i||^2_2 -\frac{1}{|C_i|}\right|\leq \frac{4\sqrt{\epsilon}}{\varphi} \cdot \frac{1}{|C_i|} \]
\textbf{Proof of item \eqref{itm2:lem-dotmu}:}
Let $\alpha = \frac{1}{\sqrt{2}}(\mathds{1}_i + \mathds{1}_j)$. Note that $||\alpha||_2=1$. Thus by Lemma \ref{lem:spectraldistance} item \eqref{lem-spec-dist-itm2} we have
\[|\alpha^T \left((HW)^T(HW) - I\right)\alpha| \leq \frac{4\sqrt{\epsilon}}{\varphi}\] 
Note that 
\[\left| \alpha^T \left((HW)^T(HW)-I\right) \alpha \right| = \left| \frac{1}{2}\left(||\mu_i||^2_2 |C_i|+||\mu_j||^2_2 |C_j|+2\rdp{\mu_i,\mu_j}\sqrt{|C_i||C_j|}-2 \right)\right| \]
Therefore we get
\[ \left|  ||\mu_i||^2_2 |C_i|+||\mu_j||^2_2 |C_j|+2\rdp{\mu_i,\mu_j}\sqrt{|C_i||C_j|}-2 \right| \leq \frac{8\sqrt{\epsilon}}{\varphi}\]
Thus
\begin{align*}
\left|   \rdp{\mu_i,\mu_j}\sqrt{|C_i||C_j|} \right| &\leq \left|  \frac{1}{2}\left(1-||\mu_i||^2_2 |C_i|\right)+\frac{1}{2}\left(1-||\mu_j||^2_2 |C_j|\right) \right| +\frac{4\sqrt{\epsilon}}{\varphi} \\
&\leq \frac{1}{2}\cdot \frac{4\sqrt{\epsilon}}{\varphi}  + \frac{1}{2}\cdot \frac{4\sqrt{\epsilon}}{\varphi} +  \frac{4\sqrt{\epsilon}}{\varphi} &&\text{By item \eqref{itm1:lem-dotmu}} \\
&\leq \frac{8\sqrt{\epsilon}}{\varphi}
\end{align*}
Therefore we get
\[\left|  \rdp{\mu_i,\mu_j} \right| \leq \frac{8\sqrt{\epsilon}}{\varphi}
\cdot\frac{1}{\sqrt{|C_i||C_j|}} \text{.}\]
\end{proof}

\subsection{Strong Tail Bounds on the Spectral Embedding}\label{sec:moment-bounds}
The main results of this section are the following two lemmas. The first lemma gives an upper bound on the length of the projection of any point $f_x$ on an arbitrary direction $\alpha \in \R^k$.
The second lemma considers the distribution of the lengths of projected $f_x$ and we get tail bounds that show that the fraction of points whose projected length exceeds the `expectation' (which is about $1/\sqrt{|C_i|}$ for the smallest cluster $C_i$) by a factor of $\beta$ is bounded by $\beta^{-\varphi^2/{10\epsilon}}$. In other words, we bound the $O(\varphi^2/\e)$-th moment as opposed to the second moment, which gives us tight control over the embedding when $\e/\varphi^2\ll1/\log k$.

\lemmalinf*

\lemmatail*

We are interested in deriving moment bounds for the distribution of the entries of the first $k$ eigenvectors $u$ of $L$ (i.e., eigenvectors with eigenvalue smaller than $2\epsilon$), and 
specifically in the distribution of the absolute values of the entries of $u$. In order to be able to analyze this distribution, we define the sets of all entries in $u$ that are bigger than a threshold $\theta$:

\begin{definition}[Threshold sets]
\label{def:th-sets}
Let $G=(V,E)$ be a graph with normalized Laplacian $L$.  Let $u$ be a normalized eigenvector of $L$ with $||u||_2=1$. Then for the vector $u$ and a threshold  $\theta\in \R^{+}$ we define the threshold set $S(\theta)$ with respect to the eigenvector $u$ and threshold $\theta$ as
$$
S(\theta) := \{x \in V : u(x) \geq \theta \}\text{.}
$$
\end{definition}

Our arguments will use that for every vertex $x$, we have  $u(x)\approx \frac{1}{d}\sum_{\{x,y\}\in E} u(y)$. So nodes neighboring other nodes with large $u(\cdot)$ values are likely to have large $u(\cdot)$ values as well. 
This motivates the following definition of the potential of a threshold set.
\begin{definition}[Potential of a threshold set]
\label{def:pot-th-set}
Let $G=(V,E)$ be a graph with normalized Laplacian $L$. Let $u$ be a normalized eigenvector of $L$ with $||u||_2=1$. Then for vector $u$ and a threshold  $\theta\in \R^{+}$ we define the potential of a threshold set $S(\theta)$ as
$$
p(\theta)= \sum_{x \in S(\theta)} u(x)\text{.}
$$
\end{definition}

We start by proving a core bound on the threshold sets  (Lemma~\ref{lem:maintool} below) that forms the basis of our approach: the main technical results of this section (Lemma \ref{lem:l-inf-bnd} and Lemma \ref{lem:tail_bound}) essentially follow by repeated application of Lemma~\ref{lem:maintool}. Specifically, we now argue that if a threshold set $S(\theta)$  expands in the graph $G$ and the relative potential of the set (i.e., $p(\theta)/|S(\theta)|$) is at most $2\theta$, then we can slightly decrease $\theta$ to obtain a new $\theta'$ such that the corresponding
threshold set is a constant factor larger that $S(\theta)$ and the relative potential is bounded by $2\theta'$. 

\begin{lemma}[Threshold shift for expanding threshold sets]\label{lem:maintool}
Let $G=(V,E)$ be a $d$-regular graph with normalized Laplacian $L$.   Let $u$ be a normalized eigenvector of $L$ with $||u||_2=1$ and with eigenvalue $\lambda\leq 2\epsilon$. Let $\theta \in \R^+$ be a threshold. Suppose that $S(\theta)$  is the  threshold set with respect to $u$ and $\theta$ such that $S(\theta)$ is non-empty, $\phi^G(S(\theta)) \geq \varphi$ and $\frac{p(\theta)}{|S(\theta)|} \leq 2\theta$. Then the following holds for $\theta'=\theta\left(1 -\frac{8\epsilon}{\varphi}\right)$:
\begin{enumerate}
\item $|S(\theta')| \geq (1+\varphi/2)|S(\theta)|
\text{, and}$ \label{item1:lem-th-shif}
\item $\frac{p(\theta')}{|S(\theta')|} \leq 2{\theta'}$. \label{item2:lem-th-shif}
\end{enumerate}
\end{lemma}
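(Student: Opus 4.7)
The plan is to exploit the eigenvector identity $\sum_{y\sim x} u(y) = d(1-\lambda) u(x)$, which follows from $Lu = \lambda u$. Summing this over $x \in S := S(\theta)$ and regrouping the left-hand side by the opposite endpoint yields, after using that every $y \in S$ has exactly $d - d_{\text{out}}(y)$ neighbors inside $S$ (where $d_{\text{out}}(y) := |N(y)\setminus S|$) and every $y \notin S$ has $e(y,S) := |N(y) \cap S|$ neighbors in $S$,
\[
  \sum_{y \notin S} u(y)\, e(y,S) \;=\; -d\lambda\, p(\theta) \;+\; \sum_{y\in S} u(y)\, d_{\text{out}}(y).
\]
This quantity will be bounded from below using the given hypotheses on $S$, and from above using the enlarged threshold set.

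For the lower bound, since $u(y) \geq \theta$ on $S$ and $\sum_{y\in S} d_{\text{out}}(y) = |E(S, V\setminus S)| \geq \varphi d|S|$, while $p(\theta) \leq 2\theta|S|$ and $\lambda\le 2\epsilon$, the right-hand side is at least $\theta\varphi d|S| - 4 d \epsilon \theta |S|$. For the upper bound, I let $T := S(\theta') \setminus S$ and split the outer sum according to whether $y \in T$ (so $u(y) < \theta$, since $y \notin S$) or $y \notin S \cup T$ (so $u(y) < \theta'$). Discarding vertices with $u(y) < 0$ (which only strengthen an upper bound on a positive quantity) and writing $F := \sum_{y \in T} e(y,S) \leq d|T|$,
\[
  \sum_{y\notin S} u(y)\, e(y,S) \;\leq\; \theta F + \theta'(|E(S,V\setminus S)| - F) \;\leq\; \theta' |E(S,V\setminus S)| + (\theta - \theta')\, d|T|.
\]
Combining the two bounds gives $(\theta-\theta')(|E(S,V\setminus S)| - d|T|) \leq d\lambda\, p(\theta)$. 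Plugging in $\theta - \theta' = 8\epsilon\theta/\varphi$, $\lambda \leq 2\epsilon$, $p(\theta) \leq 2\theta|S|$ and $|E(S,V\setminus S)| \geq \varphi d|S|$ yields $d|T| \geq \varphi d|S| - d\varphi|S|/2 = d\varphi|S|/2$, proving item 1.

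Item 2 then follows from a short direct calculation: $p(\theta') \leq p(\theta) + \theta|T| \leq 2\theta|S| + \theta|T|$, so the target inequality $p(\theta') \leq 2\theta'(|S|+|T|)$ reduces to $(16\epsilon/\varphi)|S| \leq (1 - 16\epsilon/\varphi)|T|$. Under the standing assumption $\epsilon \leq \varphi^2/100$ one has $16\epsilon/\varphi \leq 0.16\varphi$, so together with the bound $|T| \geq \varphi|S|/2$ from item 1 this inequality holds comfortably.

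The main obstacle I anticipate is the upper bound on $\sum_{y\notin S} u(y)\, e(y,S)$: since $u$ may take large negative values on $V \setminus S$, the pointwise bounds $u(y) < \theta$ on $T$ and $u(y) < \theta'$ on $V \setminus (S \cup T)$ do not by themselves suffice, because terms with $u(y)<0$ could make the sum arbitrarily small. The fix is that we are upper bounding a quantity that we have already shown is positive (at least $\theta\varphi d|S|$ minus a small correction), so negative contributions can simply be dropped; after this, the proof reduces to careful bookkeeping with the eigenvalue identity and the prescribed relation $\theta' = \theta(1 - 8\epsilon/\varphi)$.
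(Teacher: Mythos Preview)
Your proof is correct and takes a genuinely different route from the paper's for item~1. The paper proceeds by summing the identity $\sum_{y\sim x}(u(x)-u(y))=d\lambda u(x)$ over $x\in S(\theta)$ to obtain $\sum_{e\in E(S,V\setminus S)}\Delta(e)=d\lambda\,p(\theta)$, then applies an averaging argument: at least half the crossing edges have $\Delta(e)\le \frac{2d\lambda p(\theta)}{|E(S,V\setminus S)|}$, and the outer endpoints of those edges form an explicit set $V_L\subseteq S(\theta')\setminus S(\theta)$ of size at least $\varphi|S(\theta)|/2$. You instead work directly with the weighted boundary sum $\sum_{y\notin S}u(y)e(y,S)$, sandwich it between a lower bound (from $u\ge\theta$ on $S$) and an upper bound (from $u<\theta$ on $T$ and $u<\theta'$ beyond), and extract $|T|\ge\varphi|S|/2$ algebraically. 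Your approach is arguably cleaner, avoiding the explicit construction via Markov's inequality; the paper's is a bit more constructive. For item~2 the two arguments are essentially the same computation.

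One minor comment: your stated ``obstacle'' about negative values of $u$ is not actually an obstacle, and your justification for dropping them is slightly off. Dropping negative terms from a sum always yields an upper bound on that sum, regardless of whether the original sum is positive; the positivity you derived from the lower bound plays no role here. The argument goes through exactly as you wrote it, but the reasoning you gave for why it works is not the right one.
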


\begin{proof}
\textbf{Proof of item \eqref{item1:lem-th-shif}:}
Note that $\lambda u = Lu=(I-\frac{A}{d})u$. Thus for any $x\in V$ we have $\left(Lu\right)(x)=u(x)-\frac{1}{d}\sum_{\{x,y\}\in E} u(y)$. Thus we have, 
$$u(x)-\frac{1}{d}\sum_{\{x,y\}\in E} u(y)=\lambda \cdot u(x)\text{.}$$ 
We write the above as
\begin{equation} \label{eq:sum_dif_eigvec}
\sum_{y\in \mathcal{N}(x)} (u(x) - u(y)) = d \cdot \lambda \cdot u(x)\text{,}
\end{equation}
where $\mathcal{N}(x) = \{y\in V: \exists \{x,y\} \in E\}$. Summing \eqref{eq:sum_dif_eigvec} over all $x\in S(\theta)$ we get
\begin{equation} \label{eeq:sum_P}
\sum_{x\in S(\theta)} \sum_{y\in \mathcal{N}(x)} (u(x)-u(y)) = \sum_{x\in S(\theta)} \lambda \cdot d \cdot u(x) = \lambda\cdot d\cdot p(\theta) \text{,}
\end{equation}
and note that
\begin{equation} \label{eeq:sum_boundry}
\sum_{x\in S(\theta)} \sum_{y\in \mathcal{N}(x)} (u(x)-u(y)) = \sum_{\substack{\{x,y\}\in E\\
                  x \in S(\theta), y \not\in S(\theta)}} (u(x)-u(y)) \text{.}
\end{equation}
For any edge $e=\{x,y\}\in E$, we define $\Delta(e)=|u(x)-u(y)|$.
Note that for any $e=\{x,y\}$ such that $x\in S(\theta)$ and $y\not\in S(\theta)$ we have $u(x)\geq \theta > u(y)$, hence $\Delta(e)=u(x)-u(y)$. Therefore,
putting \eqref{eeq:sum_boundry} and \eqref{eeq:sum_P} together we get
\[
\sum_{e \in E(S(\theta), V\setminus S(\theta))} \Delta(e) = \lambda \cdot d \cdot p(\theta) \text{.}
\]
By an averaging argument there exists a set $E_L \subseteq E(S_\theta, V\setminus S_\theta)$ such that $|E_L|\geq \frac{|E(S(\theta), V\setminus S(\theta))|}{2}$ and all edges $e \in E_L$
satisfy $\Delta(e) \leq \frac{2\cdot\lambda \cdot d \cdot p(\theta)}{|E(S(\theta), V\setminus S(\theta))|} $.
We define $V_L$ as a subset of vertices of $V\setminus S(\theta)$ that are connected to vertices of $S(\theta)$ by edges in $E_L$, i.e.
\[V_L=\{y \in V\setminus S(\theta): \exists \ \{x,y\}\in E_L, x\in S(\theta)  \} \text{.}\]
Note that 
\begin{equation}
\label{L_i-size}
|V_L|\geq \frac{|E_L|}{d} \geq \frac{|E(S(\theta), V\setminus S(\theta))|}{2d}\text{.}
\end{equation}
Using the assumption of the lemma that $\phi^G(S(\theta))\geq\varphi$ we obtain 
\begin{equation}
\label{eeq:S_i-exp}
|E(S(\theta), V\setminus S(\theta))| \geq \varphi \cdot d \cdot |S(\theta)|\text{.}
\end{equation}
Putting~\eqref{eeq:S_i-exp} and~\eqref{L_i-size} together we get
\begin{equation}
\label{eeq:Li-size-final}
|V_L|\geq \frac{\varphi |S(\theta)|}{2} \text{.}
\end{equation}
Recall that for all $e \in E_L$ we have $\Delta(e) \leq \frac{2\cdot\lambda\cdot d\cdot p(\theta)}{|E(S(\theta), V\setminus S(\theta))|}$. We have $\lambda\leq 2\epsilon$, therefore for all $e \in E_L$ we have $\Delta(e) \leq \frac{4\cdot\epsilon\cdot d\cdot p(\theta)}{|E(S(\theta), V\setminus S(\theta))|}$. Thus for all $y\in V_L$ we get
\begin{equation}
\label{eeq:L_i-val}
u(y) \geq \theta- \frac{4\cdot\epsilon\cdot d\cdot p(\theta)}{|E(S(\theta), V\setminus S(\theta))|} \text{.}
\end{equation}
By the assumption of the lemma we have $
\frac{p(\theta)}{|S(\theta)|} \leq 2\theta$, hence, by inequality~\eqref{eeq:S_i-exp} we get
\begin{equation}
\label{eeq:theta_i+1-lw}
\theta-\frac{4\cdot\epsilon\cdot d\cdot p(\theta)}{|E(S(\theta), V\setminus S(\theta))|} \geq \theta-\frac{4\cdot\epsilon\cdot d\cdot p(\theta)}{\varphi \cdot d \cdot |S(\theta)|} = \theta-\frac{4\epsilon}{\varphi}\cdot\frac{p(\theta)}{ |S(\theta)|}
\geq \theta\left(1-\frac{8\epsilon}{\varphi}\right)\text{.}
\end{equation}
Putting~\eqref{eeq:theta_i+1-lw} and~\eqref{eeq:L_i-val} together we get for all $y\in V_L$, $ u(y) \geq \theta\left(1-\frac{8\epsilon}{\varphi}\right)$.  Let $\theta' := \theta(1- \frac{8\epsilon}{\varphi})$. Thus \[S(\theta) \cup V_L \subseteq S(\theta'){.}\]
By definition of $V_L$ we have $V_L\cap S(\theta)=\emptyset$. Therefore, $|S(\theta')|\geq |S(\theta)|+|V_L|$. Thus by inequality~\eqref{eeq:Li-size-final} we get
\begin{equation}
\label{eq:constantexpansion}
|S(\theta')|\geq |S(\theta)| \left(1+\frac{\varphi}{2}\right) \text{.}
\end{equation}

This concludes the proof of the first part of the lemma.

\textbf{Proof of item \eqref{item2:lem-th-shif}:}
Now using that for all $x\not\in S(\theta)$ we have $u(x) < \theta$ and that $p(\theta) \leq 2 \theta |S(\theta)|$ by assumption of the lemma we obtain

\begin{align*}
p(\theta')&=\sum_{u\in S(\theta')} u(x) \\
&= \sum_{x\in S(\theta)} u(x) + \sum_{x\in S(\theta')\setminus S(\theta)} u(x) \\
&\leq p(\theta) + \theta |S(\theta')\setminus S(\theta)|\\
&\leq   2\theta |S(\theta)| + \theta|S(\theta')\setminus S(\theta)|\text{.} && \text{Since }p(\theta) \leq 2 \theta |S(\theta)|\\
\end{align*}  
By \eqref{eq:constantexpansion} we have $|S(\theta')\setminus S(\theta)|\geq \frac{\varphi}{2}|S(\theta)|$. Therefore, using $\epsilon \leq \frac{\varphi^2}{100}$ we get
\begin{eqnarray*}
\frac{p(\theta')}{|S(\theta')|}&\leq& \frac{2\theta|S(\theta)|+\theta|S(\theta')\setminus S(\theta)|}{|S(\theta)|+|S(\theta')\setminus S(\theta)|} 
= \theta\cdot\frac{2+\frac{|S(\theta')\setminus S(\theta)|}{|S(\theta)|}}{1+\frac{|S(\theta')\setminus S(\theta)|}{|S(\theta)|}} 
\leq \theta\cdot\frac{2+\frac{\varphi}{2}}{1+\frac{\varphi}{2}}
\leq \theta \cdot 2\left(1-\frac{8\epsilon}{\varphi}\right)
\leq 2 \theta'
\end{eqnarray*}

\end{proof}

We would like to apply Lemma~\ref{lem:maintool} iteratively, but there is one hurdle: while the first condition on the threshold set $S(\theta)$ naturally follows as long as $S(\theta)$ is not too large (by Proposition \ref{lem:S_exp}), the second condition needs to be established at the beginning of the iterative process. Lemma~\ref{lem:concentrationnearboundary} accomplishes exactly that: we prove that for any value $\theta_1$ with threshold set $S(\theta_1)$  not empty or not too large, there exists a close value $\theta$ that meets the conditions of previous lemma.

\begin{proposition}
\label{lem:S_exp}
Let $G=(V,E)$ be a $d$-regular graph that admits a $(k,\varphi,\epsilon)$-clustering $C_1, \ldots, C_k$. For any set $S\subseteq V$ with size $|S|\leq \frac{1}{2}\cdot \min_{i\in k}|C_i|$ we have $\phi^G(S)\geq\varphi$.
\end{proposition}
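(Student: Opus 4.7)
The plan is to reduce the global cut estimate for $S$ in $G$ to the internal conductance hypothesis on each individual cluster $C_i$, applied to the pieces $S \cap C_i$.

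First, I would intersect $S$ with the clusters: for $i \in [k]$ set $S_i := S \cap C_i$, so that $S = \bigsqcup_i S_i$ and $|S| = \sum_i |S_i|$. The key observation is that the size hypothesis $|S| \leq \tfrac12 \min_i |C_i|$ gives $|S_i| \leq |S| \leq \tfrac12 |C_i|$ for every $i$, which is exactly the regime where the internal conductance bound of Definition~\ref{def:conductance} applies. Hence for every $i$ with $S_i \neq \emptyset$, the hypothesis $\phi^G(C_i) \geq \varphi$ yields
\[
|E(S_i, C_i \setminus S_i)| \;\geq\; \varphi \cdot d \cdot |S_i|.
\]

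Next, I would sum these cut bounds over $i$ and observe that all the edges counted on the left lie in $E(S, V \setminus S)$: indeed, $S_i \subseteq S$, while $C_i \setminus S_i \subseteq V \setminus S$ (since any vertex of $C_i \setminus S_i$ belongs to $C_i$ but not to $S \cap C_i$, and so not to $S$). Moreover the edge sets $E(S_i, C_i \setminus S_i)$ for different $i$ are pairwise disjoint, because each such edge lies entirely inside the cluster $C_i$, and the clusters are disjoint. Therefore
\[
|E(S, V\setminus S)| \;\geq\; \sum_{i=1}^k |E(S_i, C_i \setminus S_i)| \;\geq\; \varphi \cdot d \cdot \sum_{i=1}^k |S_i| \;=\; \varphi \cdot d \cdot |S|,
\]
which rearranges to $\phi^G(S) = |E(S, V\setminus S)| / (d|S|) \geq \varphi$, as claimed.

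There is essentially no obstacle here: the only subtlety is to correctly verify that $|S_i| \leq |C_i|/2$ so that the internal conductance bound is triggered, and to notice that the intra-cluster boundary edges of the $S_i$ are disjoint and contribute to the global boundary of $S$. Both are immediate from the disjointness of the clusters and the size hypothesis on $S$.
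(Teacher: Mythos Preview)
Your proof is correct and follows essentially the same approach as the paper: intersect $S$ with each cluster, apply the internal conductance bound to each piece (valid since $|S_i|\le |C_i|/2$), and sum the resulting intra-cluster boundary edges to lower-bound $|E(S,V\setminus S)|$. You are slightly more explicit than the paper about why the intra-cluster boundaries are disjoint and sit inside the global boundary of $S$, but the argument is the same.
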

\begin{proof}
For any $1\leq i \leq k$ we define $S_i=S\cap C_i$. Note that \[|S_i|\leq |S|\leq \frac{1}{2}\cdot \min_{i\in k}|C_i| \leq \frac{|C_i|}{2}\text{.}\] Therefore since $\phi^G(C_i)\geq\varphi$ we have $E(S_i, C_i\setminus S_i) \geq \varphi d |S_i|$. Thus we get 
\[E(S,V\setminus S) \geq \sum_{i=1}^k  E(S_i, C_i\setminus S_i) \geq    \varphi d  \sum_{i=1}^k |S_i| = \varphi d |S| \text{.}\]
Hence, $\phi^G(S)\geq\varphi$.
\end{proof}

\begin{lemma} \label{lem:concentrationnearboundary}
Let $\varphi\in(0,1)$ and $\epsilon \leq \frac{\varphi^2}{100}$, and let $G=(V,E)$ be a $d$-regular graph that admits $(k,\varphi,\e)$-clustering $C_1, \ldots, C_k$.  Let $L$ denote the normalized Laplacian of $G$.
  Let $u$ be a normalized eigenvector of $L$ with $||u||_2=1$ and with eigenvalue $\lambda\leq 2\epsilon$. Let $\theta_1 \in \R^+$ be a threshold. Let $S(\theta_1)$  be the  threshold set with respect to $u$ and $\theta_1$. Suppose that 
  $1 \leq |S(\theta_1)| \leq \frac{1}{2} \cdot \min_{i\in \{1,\dots, k\}}|C_i|$. Then there exists a threshold $\theta_2$ such that the following holds:
\begin{enumerate}
\item $\theta_1\left(1-\frac{8\epsilon}{\varphi}\right) \leq \theta_2 \leq \theta_1\text{, and}$
\item $ \frac{p(\theta_2)}{|S(\theta_2)|} \leq 2\theta_2 $
\end{enumerate}

\end{lemma}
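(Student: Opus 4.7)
I would split into two cases on whether the desired ratio condition already holds at $\theta_1$. Set $\theta'_1 := \theta_1(1-8\epsilon/\varphi)$. If $p(\theta_1)/|S(\theta_1)| \leq 2\theta_1$, then $\theta_2 := \theta_1$ satisfies both conclusions trivially. The interesting case is $p(\theta_1) > 2\theta_1 |S(\theta_1)|$, where I would locate $\theta_2$ somewhere in $[\theta'_1, \theta_1)$.

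In this case, I would study the function $r(\theta) := p(\theta)/|S(\theta)|$ on $[\theta'_1, \theta_1]$. The first observation is that $r$ is non-increasing as $\theta$ decreases: whenever a new vertex $x$ enters $S(\theta)$, its value $u(x)$ equals the current threshold $\theta$, which is at most the current running average $r(\theta)$ (since every $y \in S(\theta)$ already satisfies $u(y) \geq \theta$). Since $2\theta$ is strictly decreasing in $\theta$, I would define
\[
\theta_2 := \sup\{\theta \in [\theta'_1, \theta_1] : r(\theta) \leq 2\theta\},
\]
and verify that this supremum is attained (by right-continuity of $|S(\cdot)|$ and $p(\cdot)$ at the finitely many jump points) and lies in $[\theta'_1, \theta_1]$. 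The nontrivial step is ruling out that $r(\theta) > 2\theta$ holds throughout the entire interval. For this I would argue by contradiction, using the eigenvector identity summed over $S(\theta)$, namely
\[
\lambda\, p(\theta) \;=\; \frac{1}{d}\sum_{\substack{\{x,y\}\in E\\ x\in S(\theta),\, y\notin S(\theta)}} \bigl(u(x)-u(y)\bigr),
\]
exactly as in the proof of Lemma~\ref{lem:maintool}. Combined with $\phi^G(S(\theta)) \geq \varphi$ (which holds by Proposition~\ref{lem:S_exp} throughout the interval, since $|S(\theta)| \leq \tfrac{1}{2}\min_i|C_i|$ is preserved as long as the contradiction assumption is in force), this forces the existence of many boundary vertices with $u$-value within a factor $(1-8\epsilon/\varphi)$ of $\theta$, making $|S(\theta)|$ grow rapidly as $\theta$ decreases. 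Iterating or integrating this expansion would then contradict the upper bound $p(\theta) \leq \sqrt{|S(\theta)|}$ obtained from Cauchy--Schwarz together with $\|u\|_2 = 1$, once the set has expanded enough.

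The main obstacle I anticipate is quantitative: the contradiction has to be calibrated so that the slack $8\epsilon/\varphi$ in the width of the interval is sufficient under the assumption $\epsilon \leq \varphi^2/100$. The averaging argument identifying boundary vertices with small $u(x)-u(y)$ gap is essentially the same as the one used to produce the set $V_L$ in the proof of Lemma~\ref{lem:maintool}, but here it must be run without assuming the ratio inequality as a hypothesis; the ratio inequality enters only at the end, in the form of its assumed failure, to close the contradiction.
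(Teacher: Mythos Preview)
Your downward search has a real gap. The expansion mechanism you want to invoke---many boundary vertices with $u$-value within a factor $(1-8\epsilon/\varphi)$ of $\theta$---comes from the averaging step in Lemma~\ref{lem:maintool}, and that step needs the ratio bound $p(\theta)/|S(\theta)|\le 2\theta$ as a \emph{hypothesis}: the average edge gap is $\le \tfrac{4\epsilon}{\varphi}\cdot \tfrac{p(\theta)}{|S(\theta)|}$, and only the ratio bound turns this into $\le \tfrac{8\epsilon}{\varphi}\theta$. In your contradiction branch you are assuming precisely that this ratio bound fails on all of $[\theta_1',\theta_1]$, so you have no control on the boundary gaps and no growth of $|S(\theta)|$. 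Concretely, nothing rules out a gap in the values of $u$ immediately below $\theta_1$: if $S(\theta)=S(\theta_1)$ for every $\theta\in[\theta_1',\theta_1]$, then $r(\theta)\equiv r(\theta_1)>2\theta_1>2\theta$ throughout, your supremum is empty, and the Cauchy--Schwarz bound $p\le\sqrt{|S|}$ gives nothing without an independent lower bound on $|S(\theta)|$. Your parenthetical that ``$|S(\theta)|\le\tfrac12\min_i|C_i|$ is preserved as long as the contradiction assumption is in force'' is also unjustified: the failure of the ratio bound only yields $|S(\theta)|<1/(4\theta^2)$, which says nothing useful without a lower bound on $\theta_1$.

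The paper avoids all of this by searching \emph{upward} rather than downward. Set $\theta^*:=\min\{\theta\ge\theta_1: S(\theta)\neq\emptyset\text{ and }p(\theta)/|S(\theta)|\le 2\theta\}$; this exists because at $\theta_{\max}=\max_x u(x)$ the ratio equals $\theta_{\max}$. Since $\theta^*\ge\theta_1$, the set $S(\theta^*)\subseteq S(\theta_1)$ is still small, so Proposition~\ref{lem:S_exp} gives expansion and Lemma~\ref{lem:maintool} applies \emph{with all hypotheses satisfied}: $\theta^*(1-8\epsilon/\varphi)$ also satisfies the ratio bound. Minimality of $\theta^*$ then forces $\theta^*(1-8\epsilon/\varphi)<\theta_1$, and $\theta_2:=\theta^*(1-8\epsilon/\varphi)$ works. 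The trick is that going up keeps the size constraint for free and lands you at a point where the ratio bound holds, so one clean application of Lemma~\ref{lem:maintool} finishes.
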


\begin{proof}
Let 
$$
\theta^* := \min\left\{\theta \geq \theta_1 \; \biggl| \; S(\theta)\not=\emptyset \text{ and } \frac{p(\theta)}{|S(\theta)|} \leq 2\theta\right\} \text{.}
$$ 
We can conclude that $\theta^*$ exists, as by the assumption of the lemma we have $|S(\theta_1)|\ge 1$ and for $\theta_{\max} = \max_{x \in V} u(x)$ we have
$\frac{p(\theta_{\max})}{|S(\theta_{\max})|} = \theta_{\max}$. We also have $|S(\theta^*)| \leq \min_{i\in \{1,\dots,k\}} |C_i|/2$ as $\theta^* \geq \theta_1$ and by the assumption
of the lemma. So Proposition~\ref{lem:S_exp} implies
\begin{equation}\label{eq:bigexpansion}
    \phi^G(S(\theta^*)) \geq \varphi \text{.}
\end{equation}
Now Lemma~\ref{lem:maintool} implies 
$$
\frac{p(\theta^* (1 -\frac{8\epsilon}{\varphi}))}{|S(\theta^* (1 -\frac{8\epsilon}{\varphi}))|} \leq 2\theta^*\left(1 -\frac{8\epsilon}{\varphi}\right)$$
and by minimality of $\theta^*$ we have that:
$$
\theta_1\left(1-\frac{8\epsilon}{\varphi}\right) \leq \theta^*\left(1-\frac{8\epsilon}{\varphi}\right) \leq \theta_1 \text{.}
$$
So we can set $\theta_2 := \theta^*\left(1-\frac{8\epsilon}{\varphi}\right)$.
\end{proof}

We are now ready to prove our tail bound. The main idea behind the proof is to use Lemma~\ref{lem:maintool} and Lemma~\ref{lem:concentrationnearboundary} to show that if a vertex has a large entry along one of the  bottom $k$ eigenvectors this implies that many other vertices also have a relatively large value along the same eigenvector. Thus, not too many $f_x$ can have such a large value. 

\lemmatail*

\begin{proof}
Let $s_{\min}=\min_{i\in \{1,\dots,k\}}|C_i|$. We define 
\[S^{+}=\left\{x\in V: u(x) \geq \beta\cdot\sqrt{\frac{10}{s_{\min}}}\right\}\text{,}\]
and
\[S^{-}=\left\{x\in V: -u(x) \geq \beta\cdot\sqrt{\frac{10}{s_{\min}}}\right\}\]
Note that $-u$ is also an eigenvector of $L$ with the same eigenvalue as $u$, hence, without loss of generality suppose that $|S^{+}|\geq |S^{-}|$. Let $T=\left\{x\in V: u(x)^2 \geq \frac{10}{s_{\min}}\right\}\text{.}$ Since, $1=\|u\|^2_2 = \sum_{x\in V} u(x)^2$, an averaging argument implies $|T|\leq \frac{s_{\min}}{10}$.  Let 
\[T^{+}=\left\{x\in V: u(x) \geq \sqrt{\frac{10}{s_{\min}}}\right\}\text{.}\]
Note that $\beta>1$, hence, ${S^{+}} \subseteq {T^{+}} \subseteq T$, and so we have $|{S^{+}}| \leq |T^+|\leq|T|\leq \frac{s_{\min}}{10}$.
We may assume that $S^{+}$ is non-empty as otherwise the lemma follows immediately. Let $\theta_{0}=\beta\cdot\sqrt{\frac{10}{s_{\min}}}$. Note that $S^{+}=S(\theta_0)$. Hence, $1\leq |S(\theta_{0})|\leq \frac{s_{\min}}{10}$.  Therefore by Lemma \ref{lem:concentrationnearboundary} there exists a threshold $\theta_1$  such that
\begin{equation}
\label{1eq:thet0}
\left(1 - \frac{8\epsilon}{\varphi}\right)\beta\cdot\sqrt{\frac{10}{s_{\min}}} \leq \theta_1 \leq \beta\cdot\sqrt{\frac{10}{s_{\min}}}\text{, and}
\end{equation}
\[
\frac{p(\theta_1)}{|S(\theta_1)|} \leq 2\theta_1\text{.}\] 
For any $t\geq 1$ we define $\theta_{t+1}=\theta_t(1-\frac{8\epsilon}{\varphi})$. For some $t'\geq 0$ we must have $\theta_{t'+1}\leq \sqrt{\frac{10}{s_{\min}}} \leq \theta_{t'}$.
Thus by \eqref{1eq:thet0} we have
\begin{equation}
\label{1eq:theta-t-lwb}
\theta_{t'} = \left(1-\frac{8\epsilon}{\varphi}\right)^{t'-1}\theta_1 \geq \left(1-\frac{8\epsilon}{\varphi}\right)^{t'}\cdot\beta\cdot\sqrt{\frac{10}{s_{\min}}} \text{,}
\end{equation}
and
\begin{equation}
\label{1eq:theta-t-upb}
\theta_{t'}\leq \frac{\theta_{t'+1}}{\left(1-\frac{8\epsilon}{\varphi}\right)} \leq \frac{\sqrt{\frac{10}{s_{\min}}}}{\left(1-\frac{8\epsilon}{\varphi}\right)}
\end{equation}
Putting~\eqref{1eq:theta-t-lwb} and ~\eqref{1eq:theta-t-upb} together we get 
\begin{equation}
\label{e:t0up}
\beta \leq  \left(1-\frac{8\epsilon}{\varphi}\right)^{-
t'-1}
\end{equation}
Recall that for all $t\geq 1$ we have $\theta_{t+1}=\theta_t(1-\frac{8\epsilon}{\varphi})$, thus 
\[S^{+}={S(\theta_0)} \subseteq {S(\theta_1)} \subseteq {S(\theta_2)} \subseteq \ldots \subseteq S(\theta_{t'})\subseteq T^{+}\text{.} \]
Therefore for all $0\leq t\leq t'$ we have
\begin{equation}
\label{1eq:S_i-size}
|S^{+}|\leq |S(\theta_t)|\leq |T^{+}|\leq \frac{s_{\min}}{10} \text{.}
\end{equation}
Since $|S(\theta_t)|\leq \frac{\min_{i\in\{1,\dots,k\}}|C_i|}{10}=\frac{s_{\min}}{10}$, by Lemma~\ref{lem:maintool} for all $1\leq t\leq t'$ we have 
\begin{equation}
\label{1eeq:S_i+1-size}
|S(\theta_{t+1})|\geq |S(\theta_t)| \left(1+\frac{\varphi}{2}\right) \text{.}
\end{equation}
Therefore 
\begin{align}
t'&\leq \log_{1+\frac{\varphi}{2}}\left(\frac{|T^{+}|}{|S^{+}|}\right) &&\text{By \eqref{1eeq:S_i+1-size}} \nonumber\\ 
&\leq \log_{1+\frac{\varphi}{2}}\left(\frac{s_{\min}}{10\cdot|S^{+}|}\right)  &&\text{By \eqref{1eq:S_i-size}} \nonumber \\
&\leq \log_{1+\frac{\varphi}{2}}\left(\frac{s_{\min}}{5\cdot|S^+\cup S^-|}\right) &&\text{By the assumption }|S^{+}|\geq |S^{-}| \label{e:t0lw}
\end{align}
Putting \eqref{e:t0up} and \eqref{e:t0lw} together we get
\begin{align}
\beta &\leq  \left(1-\frac{8\epsilon}{\varphi}\right)^{-
t'-1} &&\text{By \eqref{e:t0up}} \nonumber \\
&\leq \left(1-\frac{8\epsilon}{\varphi}\right)^{-1-
\log_{1+\frac{\varphi}{2}}\left(\frac{s_{\min}}{5\cdot |S^+\cup S^-|}\right)}  &&\text{By \eqref{e:t0lw}} \nonumber \\
&\leq 2\cdot \left(\frac{s_{\min}}{5\cdot |S^+\cup S^-|} \right)^{-
\log_{1+\frac{\varphi}{2}}\left( 1-\frac{8\epsilon}{\varphi} \right)}  &&\text{Since }\frac{\epsilon}{\varphi^2}\leq \frac{1}{100} \label{eq-tail-beta-bnd}
\end{align}
Note that for any $x\in \R$ we have $1+x\leq e^x$, and for any $x<0.01$ we have $1-x\geq e^{-1.2 x}$, thus given $\frac{\epsilon}{\varphi}<0.01$ we have 
\begin{equation}
{\log_{1+\frac{\varphi}{2}}\left( 1-\frac{8\epsilon}{\varphi} \right)}=\frac{\ln \left( 1-\frac{8\epsilon}{\varphi} \right)}{\ln \left(1+\frac{\varphi}{2} \right)} \geq \frac{-\frac{10\epsilon}{\varphi}}{\frac{\varphi}{2}}\geq -\frac{20\cdot\epsilon}{\varphi^2} \label{eq-power-bnd}  
\end{equation}
Putting \eqref{eq-tail-beta-bnd} and \eqref{eq-power-bnd} together we get
\[\frac{\beta}{2}\leq \left(\frac{s_{\min}}{5\cdot|S^+\cup S^-|}\right)^{(20\cdot\epsilon / \varphi^2)} \]
Therefore we have
\[ |S^+\cup S^-| \leq s_{\min}\cdot \left(\frac{\beta}{2}\right)^{-\left(\varphi^2/20\cdot\epsilon\right)}\leq n\cdot \left(\frac{\beta}{2}\right)^{-\left(\varphi^2/20\cdot\epsilon\right)} \text{.}\]

\end{proof}

As a consequence of our tail bound we can prove a bound on $\ell_\infty$-norm on any unit vector in the eigenspace spanned by the bottom $k$ eigenvectors of $L$, i.e. $U_{[k]}$.

\lemmalinf*

\begin{proof}
We define \[S=\left \{x\in V: |u(x)|\geq n^{20\epsilon /\varphi^2}\cdot \sqrt{\frac{160}{\min_{i\in k}|C_i|}}\right \}\]
Let $\beta= 4\cdot n^{20\epsilon /\varphi^2}$.
By Lemma \ref{lem:tail_bound} we have
\begin{align*}
|S|&\leq n\cdot\left( \frac{\beta}{2}\right) ^{-{\varphi^2/20\cdot \epsilon}}  \leq  n\cdot \left(2\cdot 
n^{20\epsilon /\varphi^2}\right)^{-{\varphi^2/20\cdot\epsilon}} <  1
\end{align*}
Therefore $S=\emptyset$, hence 
\[\|u\|_\infty \leq n^{20\epsilon /\varphi^2}\cdot \sqrt{\frac{160}{\min_{i\in k}|C_i|}} \text{.} \]
\end{proof}

\subsection{Centers are strongly orthogonal}
The main result of this section is Lemma \ref{lem:dosubspace} which generalizes Lemma \ref{lem:dotmu} to the orthogonal projection of cluster centers into the subspace spanned by some of the centers. To prove Lemma \ref{lem:dosubspace} we first need to prove Lemma \ref{lem:neg-spectral-close}, Lemma \ref{lem:QQ-1} and Lemma \ref{lem:QQT}.

\begin{restatable}{lemma}{lemdosubspace}\label{lem:dosubspace}
Let $k \geq 2$, $\varphi \in (0,1)$ and $\frac{\e}{\varphi^2}$ be smaller than an absolute positive constant. Let $G=(V,E)$ be a $d$-regular graph that admits $(k,\varphi,\e)$-clustering $C_1, \ldots, C_k$. Let $S \subset \{\mu_1, \dots, \mu_k\}$ denote a subset of cluster means. Let $\Pi\in \R^{k\times k}$ denote the orthogonal projection matrix onto $span(S)^{\perp}$. Then the following holds:
\begin{enumerate}
\item For all ${\mu}_i \in \{{\mu}_1, \ldots, {\mu}_k\} \setminus S$  we have $\left| \|\Pi{\mu}_i \|_2^2 - ||\mu_i||_2^2 \right| \leq \frac{16\sqrt{\e}}{\varphi}\cdot ||\mu_i||_2^2 \text{.}$ \label{lem-dosubspace:itm1}
\item For all ${\mu}_i \neq \mu_j\in \{{\mu}_1, \ldots, {\mu}_k\} \setminus S$  we have $|\langle \Pi{\mu}_i, \Pi{\mu}_j \rangle | \leq \frac{40\sqrt{\e}}{\varphi}\cdot\frac{1}{\sqrt{|C_i|\cdot|C_j|}} \text{.}$ \label{lem-dosubspace:itm2}
\end{enumerate}
\end{restatable}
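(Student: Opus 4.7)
The plan is to reduce the claim to a short matrix-analytic computation about the orthogonal projection onto $\operatorname{span}(S)$, and then to invoke Lemma~\ref{lem:spectraldistance} and Lemma~\ref{lem:dotmu} to handle the quadratic forms that arise. Let $\bar M \in \R^{k \times |S|}$ be the matrix whose columns are the rescaled cluster means $\sqrt{|C_j|}\,\mu_j$ for $\mu_j \in S$. Since $\bar M$ will turn out to have full column rank, we can write $\Pi = I - \bar M (\bar M^T \bar M)^{-1} \bar M^T$, so that both items reduce to controlling the single correction $\mu_i^T \bar M (\bar M^T \bar M)^{-1} \bar M^T \mu_j$.

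First I would argue that $\bar M^T \bar M$ is spectrally close to $I_{|S|}$. Let $H\in\R^{k\times k}$ be the matrix of \emph{all} cluster means and $W=\mathrm{diag}(\sqrt{|C_1|},\ldots,\sqrt{|C_k|})$, so that $\bar M^T \bar M$ is a principal submatrix of $(HW)^T(HW)$. By Lemma~\ref{lem:spectraldistance}(\ref{lem-spec-dist-itm2}), $\bigl(1-\tfrac{4\sqrt{\e}}{\varphi}\bigr)I_k \preccurlyeq (HW)^T(HW) \preccurlyeq \bigl(1+\tfrac{4\sqrt{\e}}{\varphi}\bigr)I_k$; the same quadratic-form inequality is inherited by every principal submatrix (extend an $|S|$-vector by zeros outside $S$), giving $\bigl(1-\tfrac{4\sqrt{\e}}{\varphi}\bigr)I_{|S|} \preccurlyeq \bar M^T \bar M \preccurlyeq \bigl(1+\tfrac{4\sqrt{\e}}{\varphi}\bigr)I_{|S|}$. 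In particular, for $\e/\varphi^2$ below a small absolute constant, $\bar M$ has full column rank and $\|(\bar M^T \bar M)^{-1}\|_{\mathrm{op}} \leq 2$.

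Next I would bound $\|\bar M^T \mu_i\|^2$ for $\mu_i \notin S$. Expanding, $\|\bar M^T \mu_i\|^2 = \sum_{\mu_j \in S}|C_j|\,\langle \mu_j,\mu_i\rangle^2 \leq \sum_{j\neq i}|C_j|\,\langle \mu_j,\mu_i\rangle^2$. The full sum $\sum_{j}|C_j|\,\langle \mu_j,\mu_i\rangle^2$ equals $\mu_i^T (HW)(HW)^T\mu_i$, which by Lemma~\ref{lem:spectraldistance}(\ref{lem-spec-dist-itm1}) is at most $(1+\tfrac{4\sqrt\e}{\varphi})\|\mu_i\|^2$; the $j=i$ term equals $|C_i|\|\mu_i\|^4$, and by Lemma~\ref{lem:dotmu}(\ref{itm1:lem-dotmu}) this is at least $(1-\tfrac{4\sqrt\e}{\varphi})\|\mu_i\|^2$. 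Subtracting yields
\[
\|\bar M^T \mu_i\|^2 \;\leq\; \tfrac{8\sqrt{\e}}{\varphi}\,\|\mu_i\|^2.
\]

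Item~\ref{lem-dosubspace:itm1} now follows from Pythagoras: since $\Pi$ is an orthogonal projection, $\|\mu_i\|^2 - \|\Pi\mu_i\|^2 = \|(I-\Pi)\mu_i\|^2 = (\bar M^T\mu_i)^T (\bar M^T\bar M)^{-1}(\bar M^T\mu_i) \leq 2 \cdot \tfrac{8\sqrt\e}{\varphi}\|\mu_i\|^2 \leq \tfrac{16\sqrt\e}{\varphi}\|\mu_i\|^2$. For item~\ref{lem-dosubspace:itm2} I would use $\langle \Pi\mu_i,\Pi\mu_j\rangle = \mu_i^T \Pi \mu_j = \langle \mu_i,\mu_j\rangle - (\bar M^T \mu_i)^T (\bar M^T\bar M)^{-1}(\bar M^T\mu_j)$, bound the second term by Cauchy--Schwarz and the two estimates above as $2\cdot \tfrac{8\sqrt\e}{\varphi}\|\mu_i\|\|\mu_j\|$, convert $\|\mu_i\|\|\mu_j\|$ into $(1+O(\sqrt\e/\varphi))/\sqrt{|C_i||C_j|}$ via Lemma~\ref{lem:dotmu}(\ref{itm1:lem-dotmu}), and then combine with $|\langle \mu_i,\mu_j\rangle| \leq \tfrac{8\sqrt\e}{\varphi\sqrt{|C_i||C_j|}}$ from Lemma~\ref{lem:dotmu}(\ref{itm2:lem-dotmu}). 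The only real obstacle is bookkeeping the $(1\pm O(\sqrt\e/\varphi))$ factors introduced at each step and checking that they fit within the target constants $16$ and $40$ once $\e/\varphi^2$ is taken sufficiently small.
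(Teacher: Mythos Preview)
Your proposal is correct and essentially coincides with the paper's proof: both compute $(I-\Pi)\mu_i$ via the normal-equations formula $\bar M(\bar M^T\bar M)^{-1}\bar M^T\mu_i$, control $(\bar M^T\bar M)^{-1}$ using Lemma~\ref{lem:spectraldistance}, bound $\|\bar M^T\mu_i\|^2$ by subtracting the $j=i$ term from $\mu_i^T(HW)(HW)^T\mu_i$, and finish item~\ref{lem-dosubspace:itm2} via Cauchy--Schwarz plus Lemma~\ref{lem:dotmu}. The only cosmetic difference is that the paper first enlarges $S$ to $S'=\{\mu_j:j\neq i\}$ (stating the intermediate estimates as separate Lemmas~\ref{lem:QQ-1} and~\ref{lem:QQT} for that case), whereas you work directly with the given $S$ by observing that principal submatrices inherit the spectral sandwich and that summing over $S$ only decreases $\sum_j |C_j|\langle\mu_j,\mu_i\rangle^2$; this is a mild streamlining of the same argument.
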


Matrix $A\in \R^{n}$ is poitive definite if  $x^TAx>0$ for all $x\neq 0$, and it is positive semidefinite if $x^TAx\geq0$ for all $x\in \R^n$. We write $A\succ 0$ to indicate that $A$ is positive definite, and $A\succcurlyeq0$ to indicate that it is positive
semidefinite. We use the semidefinite ordering on matrices, writing $A\succcurlyeq B$ if and only if $A -B\succcurlyeq0$.
\begin{theorem}[\cite{toda2011operator}]
\label{thm:operator-monotonicity}
Let $A,B\in \R^{n\times n}$ be invertible, positive definite matrices. Then $A\succcurlyeq B \implies B^{-1} \succcurlyeq A^{-1}$.
\end{theorem}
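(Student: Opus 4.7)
The plan is to reduce the claim to the elementary fact that if a positive definite matrix $M$ satisfies $M \succcurlyeq I$, then $M^{-1} \preccurlyeq I$. To get into this form, I will symmetrize $A \succcurlyeq B$ by conjugating with $B^{-1/2}$, which is well-defined because $B$ is positive definite. The simultaneous diagonalization trick via $B^{-1/2}$ is the standard device; once we are in a basis where $B$ looks like the identity, operator monotonicity of $t \mapsto -1/t$ on positive reals becomes a scalar statement about eigenvalues.

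Concretely, first I would observe that since $B \succ 0$ is invertible, the symmetric square root $B^{-1/2}$ exists and is positive definite. Conjugating the hypothesis $A - B \succcurlyeq 0$ on both sides by $B^{-1/2}$ preserves semidefiniteness (for any $x$, set $y = B^{-1/2} x$ and use $y^T(A-B)y \geq 0$), yielding
\[
B^{-1/2} A B^{-1/2} \succcurlyeq B^{-1/2} B B^{-1/2} = I.
\]
Call this matrix $M$; it is symmetric positive definite. Since $M \succcurlyeq I$, every eigenvalue of $M$ is at least $1$, hence every eigenvalue of $M^{-1}$ is at most $1$, which gives $M^{-1} \preccurlyeq I$. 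Equivalently, $B^{1/2} A^{-1} B^{1/2} \preccurlyeq I$.

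Finally, I would undo the conjugation: applying $B^{-1/2}$ on both sides of $I - B^{1/2} A^{-1} B^{1/2} \succcurlyeq 0$ (again, a valid similarity-like manipulation for semidefiniteness, since $x^T B^{-1/2}(I - B^{1/2}A^{-1}B^{1/2}) B^{-1/2} x \geq 0$ for all $x$) yields $B^{-1} - A^{-1} \succcurlyeq 0$, i.e., $B^{-1} \succcurlyeq A^{-1}$, which is the desired conclusion. There is no real obstacle here; the only subtlety worth stating cleanly is that conjugation by an invertible symmetric matrix preserves the semidefinite order, which is immediate from the definition $x^T(\cdot)x \geq 0$ by substitution. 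The entire argument is two lines of matrix manipulation plus the scalar monotonicity of $t \mapsto 1/t$ on eigenvalues.
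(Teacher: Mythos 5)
Your proof is correct, but it takes a different route from the paper's. You symmetrize the hypothesis by conjugating with $B^{-1/2}$, reduce to the statement $M \succcurlyeq I \implies M^{-1} \preccurlyeq I$ via the spectral theorem, and then undo the conjugation; all three steps are sound, and the observation that congruence by an invertible matrix preserves the semidefinite order is exactly the justification needed. The paper instead argues variationally, without ever forming a square root: it starts from $0 \leq \langle y - B^{-1}x,\, B(y - B^{-1}x)\rangle$, expands to get $2\langle x,y\rangle - \langle y, By\rangle \leq \langle x, B^{-1}x\rangle$ for all $x,y$, uses $A \succcurlyeq B$ to replace $\langle y,By\rangle$ by $\langle y,Ay\rangle$, and then chooses $y = A^{-1}x$ to conclude $\langle x, A^{-1}x\rangle \leq \langle x, B^{-1}x\rangle$. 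Your approach buys conceptual transparency (the statement literally becomes a scalar inequality on eigenvalues), at the cost of invoking the existence of $B^{-1/2}$ and the spectral theorem. The paper's approach is more elementary in its toolkit — only quadratic-form algebra — and, more importantly for this paper, it adapts directly to the pseudoinverse setting with projections (see Claim~\ref{cl:384g8g834gDD}, which runs essentially the same completion-of-the-square argument with $B^+$ and $\Pi_A A^+$ in place of exact inverses), where a clean square-root conjugation is no longer available.
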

\begin{proof}
By symmetry, we only need to show $A\succcurlyeq B \implies B^{-1} \succcurlyeq A^{-1}$. Since $B \succ 0$  for any $x,y\in \R^{n}$ we obtain
\begin{align*}
0 &\leq \rdp{y-B^{-1}x, B (y-B^{-1}x)} \\
&=\rdp{y,By}-\rdp{y,x}-\rdp{B^{-1}x, By}+\rdp{x,B^{-1}x}\\
&=\rdp{y,By}-2\rdp{x,y}+\rdp{x,B^{-1}x}
\end{align*}
so
\begin{equation}
\label{eq:inv-op-1}
2\rdp{x,y}-\rdp{y,By}\leq \rdp{x,B^{-1}x}
\end{equation}
Since $A \succcurlyeq B$ it follows from \eqref{eq:inv-op-1} that 
\begin{equation}
\label{eq:inv-op-2}
2\rdp{x,y}-\rdp{y,Ay}\leq 2\rdp{x,y}-\rdp{y,Ay}\leq \rdp{x,B^{-1}x}
\end{equation}
Letting $y=A^{-1}x$ in the leftmost expression of \eqref{eq:inv-op-2} we obtain
\[
\rdp{x,A^{-1}x} \leq \rdp{x,B^{-1}x}
\]
Since $x\in \R^{n}$ is is arbitrary, we get $B^{-1} \succcurlyeq A^{-1}$.
\end{proof}
\begin{restatable}{lemma}{lemnegspectralclose}\label{lem:neg-spectral-close}
Let $H,\widetilde{H}\in \R^{n\times n}$ be invertible, positive definite matrices. Let $\delta<1$. Suppose that for any vector $x\in \mathbb{R}^n$ with $\|x\|_2=1$ we have
$ (1-\delta) x^T Hx   \leq x^T\widetilde{H} x  \leq (1+\delta) x^T H x  
   \text{.}$
Then for any vector $y\in \R^n$ with $\|y\|_2=1$ we have
$ \frac1{1+\delta} y^T H^{-1}y   \leq y^T \widetilde{H}^{-1} y  \leq \frac1{1-\delta} y^T H^{-1} y  
   \text{.}$ 
\end{restatable}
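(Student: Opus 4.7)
The plan is to recast the hypothesis in PSD-ordering form and then apply the operator monotonicity of the matrix inverse (Theorem~\ref{thm:operator-monotonicity}) stated just above. First, I would observe that the quadratic-form inequality in the statement, which is assumed to hold for every unit vector $x$, is homogeneous of degree two and therefore extends to all $x \in \mathbb{R}^n$. Hence the hypothesis is equivalent to the PSD sandwich
\[
(1-\delta)\, H \;\preccurlyeq\; \widetilde{H} \;\preccurlyeq\; (1+\delta)\, H.
\]
Since $H \succ 0$ and $\delta < 1$, both $(1-\delta)H$ and $(1+\delta)H$ are positive definite, so every matrix appearing above is invertible and positive definite.

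Next, I would apply Theorem~\ref{thm:operator-monotonicity} separately to each half of the sandwich. From $\widetilde{H} \preccurlyeq (1+\delta) H$ the theorem yields
\[
\widetilde{H}^{-1} \;\succcurlyeq\; \bigl((1+\delta) H\bigr)^{-1} \;=\; \tfrac{1}{1+\delta}\, H^{-1},
\]
and from $(1-\delta) H \preccurlyeq \widetilde{H}$ it yields
\[
\widetilde{H}^{-1} \;\preccurlyeq\; \bigl((1-\delta) H\bigr)^{-1} \;=\; \tfrac{1}{1-\delta}\, H^{-1}.
\]
Combining these gives the PSD chain $\tfrac{1}{1+\delta} H^{-1} \preccurlyeq \widetilde{H}^{-1} \preccurlyeq \tfrac{1}{1-\delta} H^{-1}$, and then evaluating the corresponding quadratic forms on any unit vector $y$ produces exactly the desired two-sided inequality.

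There is essentially no obstacle here: the whole argument is a one-line invocation of operator monotonicity once the hypothesis is rewritten in PSD form. The only points that need a sentence of care are (i) checking that positive definiteness is preserved when we multiply by $1\pm\delta$ so that Theorem~\ref{thm:operator-monotonicity} applies, and (ii) noting the elementary fact that $(cH)^{-1} = c^{-1} H^{-1}$ for $c>0$, which lets us pull the scalars $1\pm\delta$ outside the inverse. No further computation is needed.
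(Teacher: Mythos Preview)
Your proposal is correct and matches the paper's proof essentially line for line: the paper also rewrites the hypothesis as $(1-\delta)H \preccurlyeq \widetilde{H} \preccurlyeq (1+\delta)H$ and then invokes Theorem~\ref{thm:operator-monotonicity} on each side to obtain $\frac{1}{1+\delta}H^{-1} \preccurlyeq \widetilde{H}^{-1} \preccurlyeq \frac{1}{1-\delta}H^{-1}$. If anything, you have been slightly more careful than the paper in spelling out why the matrices involved remain positive definite.
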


\begin{proof}
Note that we have $ (1-\delta) H\preceq  \widetilde{H} \preceq (1+\delta) H$ therefore, by Theorem \ref{thm:operator-monotonicity} we have \[\frac{1}{(1-\delta)}\cdot H^{-1} \succcurlyeq \widetilde{H}^{-1} \succcurlyeq \frac{1}{(1+\delta)}\cdot H^{-1} \]
\if 0
Let $a,b\in \R$. We write $a\approx_\delta b$ if 
\[(1-\delta)a\leq b \leq a(1+\delta)\text{.}\]
Let $P,Q\in \R^{n\times n}$, we write $P\approx_\delta Q$ if for any vector $x\in \R^n$ 
\[ x^T Px  \approx_\delta  x^T Q x  \text{.}\] 
Note that $H \approx_\delta  \widetilde{H}$, i.e. for every $x$ one has $x^T Hx \approx_\delta x^T \widetilde{H} x$. 
Therefore we have 
$$
(1-\delta) I\prec H^{+1/2} \widetilde{H} H^{+1/2} \prec (1+\delta) I.
$$
\xxx[MK]{this is not true, since $H^{+1/2} \widetilde{H}H^{+1/2}$ is a projection onto the range of $H$, not an identity.}
Since $I$ and $H^{+1/2} \widetilde{H} H^{+1/2}$ commute, this in turn implies that
$$
\frac1{1+\delta}I\leq H^{1/2} \widetilde{H}^+ H^{1/2}\leq \frac1{1-\delta} I.
$$

Hence,
\begin{equation}
\label{eq:appI}
 I \approx_\delta \left(H^{1/2} \widetilde{H}^{+} H^{1/2}\right)
\end{equation}
Note that $y$ is in the image space of $H$, therefore there exists a vector $x$ such that $y=Hx$. Therefore we have
\begin{align*}
y^T H^{+} y&= x^T H H^{+} H x && \text{By definition of }y=Hx \\
&= x^T H x && \text{Since }H HH^{+}=I\\
&= x^T H^{1/2}(I)H^{1/2} x \\
&\approx_\delta x^T H^{1/2} \left(H^{1/2} \widetilde{H}^{+} H^{1/2}\right) H^{1/2} x && \text{By \eqref{eq:appI}} \\
&= x^T H \widetilde{H}^{+} H x  \\
&= y^T \widetilde{H}^{+} y && \text{By definition  }y=Hx
\end{align*}
Therefore we have
\[y^T H^{+} y \approx_\delta y^T \widetilde{H}^{+} y \]
Thus
\[ (1-\delta) y^T H^{+}y   \leq y^T \widetilde{H}^{+} y  \leq (1+\delta) y^T H^{+} y  
   \text{.}\] 
   \fi
\end{proof}

\begin{lemma}
\label{lem:QQ-1}
Let $k \geq 2$, $\varphi \in (0,1)$ and $\frac{\e}{\varphi^2}$ be smaller than an absolute positive constant. Let $G=(V,E)$ be a $d$-regular graph that admits $(k,\varphi,\e)$-clustering $C_1, \ldots, C_k$.  Let $S=\{\mu_1,\ldots,\mu_k\}\setminus \{\mu_i\}$. Let $H=[\mu_1,\mu_2, \ldots, \mu_{i-1}, \mu_{i+1}, \ldots, \mu_k]$ denote a matrix such that its columns are the vectors in $S$.  Let $W\in \R^{(k-1)\times (k-1)}$ denote a diagonal matrix such that for all $j<i$ we have $W(j,j)=\sqrt{|C_j|}$ and for all $j\geq i$ we have $W(j,j)=\sqrt{|C_{j+1}|}$. Let $Z=HW$.  Then $Z^TZ$ is invertible, and for any vector $x\in \R^{k-1}$ with $||x||_2=1$ we have 
\[|x^T ((Z^T Z)^{-1}- I)x| \leq \frac{5\sqrt{\epsilon}}{\varphi}  \text{.}\] 
\end{lemma}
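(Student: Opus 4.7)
The plan is to reduce the statement to Lemma~\ref{lem:spectraldistance}(2) applied to the full $k\times k$ matrix of cluster means, by observing that $Z^TZ$ is a principal submatrix of that full Gram matrix, and then to invert the bound using Lemma~\ref{lem:neg-spectral-close}.

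First, I would set up the full matrices: let $H^{\text{full}}=[\mu_1,\ldots,\mu_k]\in\R^{k\times k}$ and let $W^{\text{full}}\in\R^{k\times k}$ be diagonal with $W^{\text{full}}(j,j)=\sqrt{|C_j|}$. Lemma~\ref{lem:spectraldistance}(\ref{lem-spec-dist-itm2}) gives for every unit $\tilde\alpha\in\R^k$:
\[
\bigl|\tilde\alpha^T\bigl((H^{\text{full}}W^{\text{full}})^T(H^{\text{full}}W^{\text{full}})-I\bigr)\tilde\alpha\bigr|\leq \frac{4\sqrt{\epsilon}}{\varphi}.
\]
Now the key observation is that $Z^TZ$ is exactly the $(k-1)\times(k-1)$ principal submatrix of $(H^{\text{full}}W^{\text{full}})^T(H^{\text{full}}W^{\text{full}})$ obtained by deleting the $i$-th row and column, because the $j$-th column of $Z$ is $\sqrt{|C_{j'}|}\,\mu_{j'}$ for the appropriate reindexing $j'\neq i$.

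Next, I would transfer the bound to $Z^TZ$ by zero-extension. Given a unit vector $x\in\R^{k-1}$, let $\tilde x\in\R^k$ be the vector whose $i$-th coordinate is $0$ and whose other coordinates agree with $x$ (with the same reindexing); then $\|\tilde x\|_2=1$ and
\[
x^T(Z^TZ)x \;=\; \tilde x^T(H^{\text{full}}W^{\text{full}})^T(H^{\text{full}}W^{\text{full}})\tilde x,
\]
so the displayed bound above yields $|x^T(Z^TZ)x-1|\leq 4\sqrt{\epsilon}/\varphi$. In particular, setting $\delta:=4\sqrt{\epsilon}/\varphi$, all eigenvalues of $Z^TZ$ lie in $[1-\delta,1+\delta]$. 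Since $\epsilon/\varphi^2$ is smaller than an absolute constant we can ensure $\delta\leq 1/5$, so $Z^TZ$ is positive definite and in particular invertible.

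Finally, I would invoke Lemma~\ref{lem:neg-spectral-close} with the roles $H\mapsto I$ and $\widetilde H\mapsto Z^TZ$ (and the same $\delta$) to conclude that for every unit $x\in\R^{k-1}$,
\[
\frac{1}{1+\delta}\leq x^T(Z^TZ)^{-1}x \leq \frac{1}{1-\delta},
\]
so $|x^T((Z^TZ)^{-1}-I)x|\leq \delta/(1-\delta)\leq (5/4)\delta = 5\sqrt{\epsilon}/\varphi$, using $\delta\leq 1/5$. This is the claimed bound. There is no real obstacle here; the only mild point is ensuring that the absolute constant governing $\epsilon/\varphi^2$ is chosen small enough that $4\sqrt{\epsilon}/\varphi\leq 1/5$, so that inversion only inflates the bound by the harmless factor $5/4$.
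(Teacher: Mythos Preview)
Your proposal is correct and follows essentially the same approach as the paper: both apply Lemma~\ref{lem:spectraldistance}(\ref{lem-spec-dist-itm2}) to the full matrix $H^{\text{full}}W^{\text{full}}$, transfer the bound to $Z^TZ$ via zero-extension of $x$, and then invert using Lemma~\ref{lem:neg-spectral-close}. Your final arithmetic making the $\delta/(1-\delta)\le(5/4)\delta$ step explicit is in fact slightly more detailed than the paper's version.
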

\begin{proof}
Let $Y\in \R^{k\times k}$ be a matrix, whose $i$-th column is equal to $\sqrt{C_i}\cdot\mu_i$. By Lemma \ref{lem:spectraldistance} item \eqref{lem-spec-dist-itm2}  for any vector $z\in \R^k$ with $||\alpha||_2=1$ we have
\[|\alpha^T(Y^T Y-I)\alpha| \leq \frac{4\sqrt{\epsilon}}{\varphi}\]
Let $x\in \R^{k-1}$ be a vector with $||x||_2=1$, and let $\alpha\in \R^k$ be a vector defined as follows: 
\[\alpha_j= \begin{cases} 
      x_j & j<i\\
     0 &  j=i\\
      x_{j+1} & j>i
   \end{cases}
\]
Thus we have $||\alpha||_2=||x||_2=1$ and $Y\alpha=Zx$. Hence, we get  
\[
| x^T (Z^T Z- I) x|= |\alpha^T(Y^T Y-I)\alpha| \leq\frac{4\sqrt{\epsilon}}{\varphi}
\]
Thus for any vector $x\in \R^{k-1}$ with $||x||_2=1$ we have \[1-\frac{4\sqrt{\epsilon}}{\varphi}\leq x^T (Z^T Z) x \leq  1+\frac{4\sqrt{\epsilon}}{\varphi}\]
Note that $Z^T Z$ is symmetric and positive semidefinit. Also note that  $Z^T Z$ is spectrally close to $I$, hence, $Z^T Z$ is  invertible.
Thus by Lemma \ref{lem:neg-spectral-close} for any vector $x\in\R^{k-1}$ we have
\[1-\frac{5\sqrt{\epsilon}}{\varphi}\leq  x^T (Z^T Z)^{-1} x \leq  1+\frac{5\sqrt{\epsilon}}{\varphi}\]
Therefore we get
\[|x^T ((Z^T Z)^{-1}- I)x| \leq  \frac{5\sqrt{\epsilon}}{\varphi}  \text{.}\] 
\end{proof}

\begin{lemma}
\label{lem:QQT}
Let  $k \geq 2$ be an integer, $\varphi \in (0,1)$, and $\e \in (0,1)$. Let $G=(V,E)$ be a $d$-regular graph that admits $(k,\varphi,\e)$-clustering $C_1, \ldots, C_k$. Let $S=\{\mu_1,\ldots,\mu_k\}\setminus \{\mu_i\}$. Let $H=[\mu_1,\mu_2, \ldots, \mu_{i-1}, \mu_{i+1}, \ldots, \mu_k]$ denote a matrix such that its columns are the vectors in $S$.  Let $W\in \R^{(k-1)\times (k-1)}$ denote a diagonal matrix such that for all $j<i$ we have $W(j,j)=\sqrt{|C_j|}$ and for all $j\geq i$ we have $W(j,j)=\sqrt{|C_{j+1}|}$. Let $Z=HW$. Then we have \[\mu_i^T Z Z^{T} \mu_i \leq \frac{8\sqrt{\epsilon}}{\varphi} \cdot ||\mu_i||_2^2 \text{.}\] 
\end{lemma}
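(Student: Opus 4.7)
\textbf{Proof plan for Lemma \ref{lem:QQT}.}
The plan is to relate $ZZ^T$ to the ``full'' matrix that includes $\mu_i$ (for which Lemma \ref{lem:spectraldistance} gives near-identity behavior), write $ZZ^T$ as the difference of that near-identity matrix and the rank-one piece $|C_i|\mu_i\mu_i^T$, and then control the two contributions separately.

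Concretely, I would introduce $\widetilde{H}\in\R^{k\times k}$ whose $j$-th column is $\mu_j$ for all $j\in[k]$ and the diagonal matrix $\widetilde{W}\in\R^{k\times k}$ with $\widetilde{W}(j,j)=\sqrt{|C_j|}$. Then a direct expansion yields the key identity
\[
(\widetilde{H}\widetilde{W})(\widetilde{H}\widetilde{W})^T \;=\; \sum_{j=1}^{k}|C_j|\,\mu_j\mu_j^T \;=\; ZZ^T \,+\, |C_i|\,\mu_i\mu_i^T,
\]
so that
\[
\mu_i^T Z Z^T \mu_i \;=\; \mu_i^T(\widetilde{H}\widetilde{W})(\widetilde{H}\widetilde{W})^T\mu_i \;-\; |C_i|\,\|\mu_i\|_2^4.
\]

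For the first term on the right, I would apply Lemma \ref{lem:spectraldistance}(\ref{lem-spec-dist-itm1}) to the unit vector $\mu_i/\|\mu_i\|_2$, obtaining
\[
\mu_i^T(\widetilde{H}\widetilde{W})(\widetilde{H}\widetilde{W})^T\mu_i \;\le\; \Bigl(1+\tfrac{4\sqrt{\epsilon}}{\varphi}\Bigr)\|\mu_i\|_2^2.
\]
For the second term, I would use Lemma \ref{lem:dotmu}(\ref{itm1:lem-dotmu}) in the form $|C_i|\,\|\mu_i\|_2^2\ge 1-\tfrac{4\sqrt{\epsilon}}{\varphi}$, which multiplied by $\|\mu_i\|_2^2$ gives
\[
|C_i|\,\|\mu_i\|_2^4 \;\ge\; \Bigl(1-\tfrac{4\sqrt{\epsilon}}{\varphi}\Bigr)\|\mu_i\|_2^2.
\]
Subtracting yields exactly $\mu_i^T ZZ^T\mu_i \le \tfrac{8\sqrt{\epsilon}}{\varphi}\|\mu_i\|_2^2$, as claimed.

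There is no real obstacle here: the lemma is essentially a bookkeeping consequence of the facts that (i) the rescaled cluster means form a near-orthonormal system (Lemma \ref{lem:spectraldistance}) and (ii) each $\|\mu_i\|_2^2$ is close to $1/|C_i|$ (Lemma \ref{lem:dotmu}). The only subtlety worth flagging is the notational collision between the $H,W$ of this lemma (which omit the $i$-th column/row) and the $H,W$ of Lemma \ref{lem:spectraldistance} (which include all $k$ of them); I will therefore use fresh symbols $\widetilde{H},\widetilde{W}$ for the full matrices to keep the two uses distinct.
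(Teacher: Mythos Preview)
Your proposal is correct and follows essentially the same approach as the paper's proof: both write $ZZ^T = \sum_{j=1}^k |C_j|\mu_j\mu_j^T - |C_i|\mu_i\mu_i^T$, apply Lemma~\ref{lem:spectraldistance}(\ref{lem-spec-dist-itm1}) to bound the full sum, and use Lemma~\ref{lem:dotmu}(\ref{itm1:lem-dotmu}) to lower bound $|C_i|\,\|\mu_i\|_2^2$. Your introduction of $\widetilde{H},\widetilde{W}$ to distinguish the full matrices from the truncated ones in the statement is a helpful notational clarification that the paper does not make explicit.
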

\begin{proof}
Note that $ZZ^T=(\sum_{j=1}^k |C_j|\mu_j\mu_j^T) -|C_i|\mu_i \mu_i^T$. Thus we have 
\begin{equation}
\label{eq:1qqt-val}
\mu_i^T Z Z^{T} \mu_i = \mu_i^T \left(\sum_{j=1}^k |C_j|\mu_j\mu_j^T \right)\mu_i- |C_i|\cdot||\mu_i||_2^4\text{.}
\end{equation}
By Lemma \ref{lem:spectraldistance} item \eqref{lem-spec-dist-itm1} for any vector $x$ with $||x||_2=1$ we have 
\[  x^T \left(\sum_{j=1}^k |C_j|\mu_j\mu_j^T - I\right)x \leq \frac{4\sqrt{\epsilon}}{\varphi}\]
Hence we can write 
\begin{align*}
\mu_i^T \left(\sum_{j=1}^k |C_j|\mu_j\mu_j^T \right) \mu_i &= \mu_i^T \left(\sum_{j=1}^k |C_j|\mu_j\mu_j^T -I \right)\mu_i+ \mu_i^T  \mu_i 
\leq    \left(1+ \frac{4\sqrt{\epsilon}}{\varphi} \right) ||\mu_i||_2^2  
\end{align*}
Therefore by \eqref{eq:1qqt-val} we get
\begin{align*}
\mu_i^T Z Z^{T} \mu_i &= \mu_i^T  \left(\sum_{j=1}^k |C_j|\mu_j\mu_j^T \right)\mu_i- |C_i|\cdot||\mu_i||_2^4 \\
&\leq \left(1+ \frac{4\sqrt{\epsilon}}{\varphi} -|C_i|\cdot||\mu_i||_2^2 \right) ||\mu_i||_2^2 
\end{align*}
By Lemma \ref{lem:dotmu} we have  $|C_i|\cdot||\mu_i||_2^2 \geq \left(1- \frac{4\sqrt{\epsilon}}{\varphi} \right)$. Thus we get
\begin{align*}
\mu_i^T Z Z^{T} \mu_i 
&\leq \left(1+ \frac{4\sqrt{\epsilon}}{\varphi} -|C_i|\cdot||\mu_i||_2^2 \right) ||\mu_i||_2^2 \\
&\leq \left(1+ \frac{4\sqrt{\epsilon}}{\varphi} -1+\frac{4\sqrt{\epsilon}}{\varphi} \right) ||\mu_i||_2^2 \\
&\leq \frac{8\sqrt{\epsilon}}{\varphi} \cdot ||\mu_i||_2^2
\end{align*}
\end{proof}

Now we prove the main result of the subsection (Lemma \ref{lem:dosubspace}).
\lemdosubspace*

\begin{proof}
\textbf{Proof of item \eqref{lem-dosubspace:itm1}:} Since $\Pi$ is a orthogonal projection matrix we have $||\Pi||_2=1$. Hence, we have $||\Pi\mu_i||_2^2\leq ||\mu_i||_2^2 \leq \left(1+\frac{16\sqrt{\epsilon}}{\varphi}\right)||\mu_i||_2^2$.
Thus it's left to prove $||\Pi\mu_i||_2^2 \geq \left(1-\frac{16\sqrt{\epsilon}}{\varphi}\right)||\mu_i||_2^2 $. Note that by Pythagoras' theorem $||\Pi\mu_i||^2_2 = ||\mu_i||^2_2 - ||(I-\Pi)\mu_i||^2_2$. We will prove $||(I-\Pi)\mu_i||^2_2 \leq \frac{16\sqrt{\epsilon}}{\varphi}||\mu_i||^2_2$ which implies \[||\Pi\mu_i||^2_2 \geq \left(1 - 16\frac{\sqrt{\epsilon}}{\varphi} \right) ||\mu_i||_2^2\text{.}\]

 Let $S'=\{\mu_1,\ldots,\mu_k\}\setminus \{\mu_i\}$. Let $\Pi'$ denote the orthogonal projection matrix onto $span(S')^{\perp}$. Note that $S\subseteq S'$, hence ${span}(S)$ is a subspace of ${span}(S')$, therefore we have $||(I-\Pi)\mu_i||^2_2\leq ||(I-\Pi')\mu_i||^2_2$. Thus it suffices to prove $||(I-\Pi')\mu_i||^2_2\leq \frac{16\sqrt{\epsilon}}{\varphi}||\mu_i||^2_2$. Let $H=[\mu_1,\mu_2, \ldots, \mu_{i-1}, \mu_{i+1}, \ldots, \mu_k]$ denote a matrix such that its columns are the vectors in $S'$.  Let $W\in \R^{(k-1)\times (k-1)}$ denote a diagonal matrix such that for all $j<i$ we have $W(j,j)=\sqrt{|C_j|}$ and for all $j\geq i$ we have $W(j,j)=\sqrt{|C_{j+1}|}$. Let $Z=HW$. The orthogonal projection matrix onto the span of $S'$ is defined as $(I-\Pi')=Z(Z^T Z)^{-1} Z^{T}$, and using Lemma \ref{lem:QQ-1} we get
\begin{align*}
||(I-\Pi')\mu_i||^2_2&= \mu_i^T Z(Z^T Z)^{-1} Z^{T} \mu_i \\
&= \mu_i^T Z((Z^T Z)^{-1}- I) Z^{T} \mu_i + \mu_i^T Z Z^{T} \mu_i 
\end{align*}
By Lemma \ref{lem:QQ-1} $(Z^T Z)^{-1}$ is spectrally close to $I$, therefore we have 
\[\left|\mu_i^T Z \left((Z^T Z)^{-1}-I \right)Z^{T} \mu_i  \right|\leq \frac{5\sqrt{\epsilon}}{\varphi}||Z^{T} \mu_i ||_2^2\]
Thus we get 
\[||(I-\Pi')\mu_i||^2_2 \leq \left(\frac{5\sqrt{\epsilon}}{\varphi}+1 \right)||Z^{T} \mu_i ||_2^2 \leq 2 ||Z^{T} \mu_i ||_2^2 \]
By Lemma \ref{lem:QQT} we have 
\[||Z^{T} \mu_i ||_2^2= \mu_i^T Z Z^{T} \mu_i \leq \frac{8\sqrt{\epsilon}}{\varphi} \cdot ||\mu_i||_2^2\]
Therefore  we get
\begin{equation}\label{eq:Pdotbnd-4}
||(I-\Pi)\mu_i||^2_2 \leq ||(I-\Pi')\mu_i||^2_2 \leq 2||Z^{T} \mu_i ||_2^2\leq \frac{16\sqrt{\epsilon}}{\varphi} ||\mu_i||_2^2 
\end{equation} 
Hence,
\[||\Pi\mu_i||^2_2 \geq \left(1 - 16\frac{\sqrt{\epsilon}}{\varphi}\right)  ||\mu_i||_2^2\text{.}\]
\textbf{Proof of item \eqref{lem-dosubspace:itm2}:} 
Note that
\begin{align*}
\langle \mu_i, \mu_j \rangle &= \langle (I-\Pi)\mu_i+\Pi\mu_i ,  (I-\Pi)\mu_j+\Pi\mu_j\rangle = \langle (I-\Pi)\mu_i ,  (I-\Pi)\mu_j\rangle + \langle \Pi\mu_i ,  \Pi\mu_j\rangle 
\end{align*}
Thus by triangle inequality we have
\begin{equation}\label{eq:prject-dot-i-mu1}
 |\langle \Pi\mu_i ,  \Pi\mu_j\rangle|  \leq |\langle \mu_i, \mu_j \rangle |+ |\langle (I-\Pi)\mu_i ,  (I-\Pi)\mu_j\rangle|
\end{equation}
By Cauchy Schwarz we have
\begin{align}
|\langle (I-\Pi)\mu_i ,  (I-\Pi)\mu_j\rangle| &\leq || (I-\Pi)\mu_i||_2|| (I-\Pi)\mu_i||_2 \nonumber\\
&\leq \frac{16\sqrt{\epsilon}}{\varphi}\cdot ||\mu_i||_2||\mu_j||_2 && \text{By \eqref{eq:Pdotbnd-4}}\nonumber\\ 
&\leq \frac{32\sqrt{\epsilon}}{\varphi}\cdot\frac{1}{\sqrt{|C_i||C_j|}} && \text{By Lemma \ref{lem:dotmu} for small enough }\frac{\epsilon}{\varphi^2} \label{eq:prject-dot-i-mu2}
\end{align}
Also by Lemma \ref{lem:dotmu} we have
\begin{equation}\label{eq:prject-dot-i-mu3}
|\langle \mu_i, \mu_j \rangle | \leq  \frac{8\sqrt{\epsilon}}{\varphi}\cdot\frac{1}{\sqrt{|C_i||C_j|}}
\end{equation}
Therefore by \eqref{eq:prject-dot-i-mu1}, \eqref{eq:prject-dot-i-mu2} and \eqref{eq:prject-dot-i-mu3} we get 
\[|\langle \Pi\mu_i ,  \Pi\mu_j\rangle|  \leq |\langle \mu_i, \mu_j \rangle |+ |\langle (I-\Pi)\mu_i ,  (I-\Pi)\mu_j\rangle| \leq \frac{40\sqrt{\epsilon}}{\varphi}\cdot \frac{1}{\sqrt{|C_i||C_j|}} \text{.}\]
\end{proof}

\subsection{Robustness property of $(k,\varphi, \e)$-clusterable graphs}

In this subsection we show a Lemma that establishes a robustness property of $(k,\varphi,\e)$-clusterable graphs. That is we show that any collection $\{S_1, S_2, \dots, S_k\}$ of pairwise disjoint subsets of vertices must match clusters $\{C_1, \dots, C_k\}$ well.

\begin{lemma}\label{lem:howtocluster}
Let $G=(V,E)$ be a $d$-regular graph that admits a $(k,\varphi,\epsilon)$-clustering $C_1,\dots,C_k$. Let $k \geq 2$, $\varphi \in (0,1)$ and $\frac{\e}{\varphi^3}$ be smaller than an absolute positive constant. If $S_1, S_2, \dots, S_k \subseteq V$ are $k$ disjoint sets such that for all $i \in [k]$
$$\phi(S_i) \leq O \left(\frac{\e}{\varphi^2}  \cdot \log(k)\right) $$
then there exists a permutation $\pi$ on $k$ elements so that for all $i \in [k]$:
$$|C_{\pi(i)} \triangle S_i| \leq O \left(\frac{\e}{\varphi^3}  \cdot \log(k) \right) |C_{\pi(i)}|$$
\end{lemma}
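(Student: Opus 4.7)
The plan is to argue combinatorially, using only the inner conductance of each $C_j$ and the outer conductance bound on each $S_i$. No spectral machinery is needed. Fix $i \in [k]$, let $\tau_i := \phi(S_i)$, write $T_{ij} := S_i \cap C_j$, and define the ``majority'' index set
$$J_i := \bigl\{\, j \in [k] \,:\, |T_{ij}| > |C_j|/2 \,\bigr\}.$$
The permutation we will output is $i \mapsto \pi(i)$, where $\pi(i)$ is the unique element of $J_i$ (once we show $|J_i| = 1$).

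\textbf{Step 1: each $S_i$ is close to a union of clusters.} Every edge $\{u,v\}$ with $u \in T_{ij}$ and $v \in C_j \setminus T_{ij}$ satisfies $u \in S_i$ and $v \in C_j \setminus S_i \subseteq V \setminus S_i$, so $E(T_{ij}, C_j \setminus T_{ij}) \subseteq E(S_i, V \setminus S_i)$. Moreover, edges in $E(T_{ij}, C_j \setminus T_{ij})$ have both endpoints in $C_j$, so these edge sets are pairwise disjoint across $j$. Summing,
$$\sum_{j \in [k]} |E(T_{ij}, C_j \setminus T_{ij})| \leq |E(S_i, V \setminus S_i)| = \tau_i \, d \, |S_i|.$$
On the other hand, $\phi^G(C_j) \geq \varphi$, applied to whichever of $T_{ij}$ or $C_j \setminus T_{ij}$ has cardinality $\leq |C_j|/2$ (using symmetry $|E(T,C_j\setminus T)|=|E(C_j\setminus T,T)|$), gives
$$|E(T_{ij}, C_j \setminus T_{ij})| \;\geq\; \varphi \, d \, \min\bigl(|T_{ij}|, \, |C_j| - |T_{ij}|\bigr).$$
For $j \notin J_i$ the minimum equals $|T_{ij}|$, which is the contribution of $C_j$ to $|S_i \setminus \bigcup_{j' \in J_i} C_{j'}|$; for $j \in J_i$ the minimum equals $|C_j| - |T_{ij}|$, the contribution of $C_j$ to $|\bigcup_{j' \in J_i} C_{j'} \setminus S_i|$. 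Therefore
$$\Bigl|\, S_i \,\triangle\, \bigcup_{j \in J_i} C_j \,\Bigr| \;\leq\; \frac{\tau_i}{\varphi}\, |S_i| \;=\; O\!\left( \frac{\epsilon \log k}{\varphi^3} \right) |S_i|. \qquad (\star)$$

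\textbf{Step 2: the sets $J_i$ partition $[k]$.} The $J_i$'s are pairwise disjoint: if $j \in J_i \cap J_{i'}$ for $i \neq i'$, then $T_{ij}$ and $T_{i'j}$ are disjoint subsets of $C_j$ each of size $> |C_j|/2$, contradiction. No $J_i$ can be empty: if $J_i = \emptyset$, $(\star)$ becomes $|S_i| \leq (\tau_i/\varphi)|S_i|$, which is impossible since $\tau_i/\varphi \ll 1$ under the hypothesis on $\epsilon/\varphi^3$. Combining $\sum_i |J_i| \leq k$ (disjoint subsets of $[k]$), $|J_i| \geq 1$ for each of the $k$ indices $i$, forces $|J_i| = 1$ for every $i$, and the $J_i$'s exhaust $[k]$. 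So $\pi(i) := $ unique element of $J_i$ defines a permutation.

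\textbf{Step 3: convert the error from $|S_i|$ to $|C_{\pi(i)}|$.} Because $J_i = \{\pi(i)\}$, the inequality $(\star)$ reads $|S_i \triangle C_{\pi(i)}| \leq (\tau_i/\varphi)|S_i|$. In particular $|S_i \cap C_{\pi(i)}| \geq (1-\tau_i/\varphi)|S_i|$, and also $|S_i \cap C_{\pi(i)}| \leq |C_{\pi(i)}|$, so $|S_i| \leq |C_{\pi(i)}|/(1-\tau_i/\varphi) \leq 2|C_{\pi(i)}|$ for $\epsilon/\varphi^3$ small enough. Hence
$$|C_{\pi(i)} \triangle S_i| \;\leq\; \frac{\tau_i}{\varphi}\,|S_i| \;\leq\; O\!\left(\frac{\epsilon \log k}{\varphi^3}\right) |C_{\pi(i)}|,$$
which is the claimed bound.

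\textbf{Anticipated obstacle.} The argument is essentially combinatorial and I do not expect a genuine difficulty. The two places to be careful are (i) verifying that the edge sets $E(T_{ij}, C_j \setminus T_{ij})$ across different $j$ are genuinely disjoint (they are, since both endpoints lie in $C_j$), so the sum does not overcount $|E(S_i, V\setminus S_i)|$, and (ii) applying inner conductance symmetrically on both sides of the majority threshold so that the bound $|E(T_{ij},C_j\setminus T_{ij})| \geq \varphi d \min(|T_{ij}|, |C_j| - |T_{ij}|)$ holds uniformly in $j$. Everything else is bookkeeping and the final rescaling in Step 3, which is routine given the smallness assumption on $\epsilon/\varphi^3$.
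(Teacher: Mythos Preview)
Your proof is correct and follows essentially the same approach as the paper's: both use the inner conductance of each $C_j$ to lower-bound $|E(S_i\cap C_j,\,C_j\setminus S_i)|$, compare the sum of these against $|E(S_i,V\setminus S_i)|=\phi(S_i)\,d\,|S_i|$, and then use disjointness of the $S_i$'s to force a unique ``majority'' cluster per $S_i$. The only cosmetic difference is that you define $J_i$ as the majority set while the paper defines it as the minority set (the complement), and you package both halves of the symmetric difference into a single inequality $(\star)$ rather than treating $|S_i\setminus C_{\pi(i)}|$ and $|C_{\pi(i)}\setminus S_i|$ separately.
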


\begin{proof}
Fix $i \in [k]$ and let $J_i = \{j : |S_i \cap C_j | \leq |C_j|/2\}$. Then observe that because the inner conductance of every $C_i$ is at least $\varphi$ we get:
\begin{equation}\label{eq:smallintersectionsmall}
\varphi\sum_{j \in J_i} |S_i \cap C_j| \leq O \left(\frac{\e}{\varphi^2}  \cdot \log(k)\right) |S_i|
\end{equation}
Using \eqref{eq:smallintersectionsmall} and the assumption $\frac{\e}{\varphi^3}$ is sufficiently small we get that
\begin{equation}\label{eq:outsidecpiismall}
\sum_{j \in J_i} |S_i \cap C_j| \leq O \left(\frac{\e}{\varphi^3}  \cdot \log(k) \right) |S_i| < |S_i|
\end{equation}
\eqref{eq:outsidecpiismall} and $\sum_{j \in [k]} |S_i \cap C_j| = |S_i|$ gives us that 
\begin{equation}\label{eq:setsnonempty}
\text{For all } i \in [k], J_i \neq [k]
\end{equation}
We will show that for each $i$: $|[k] \setminus J_i| = 1$ and that a function $i \mapsto \pi(i) \in [k] \setminus J_i$ (that is $\pi(i)$ is the only element of $[k] \setminus J_i$) is a permutation and that it satisfies the claim of the Lemma.



Assume that there exist $i_1 \neq i_2 \in [k]$ and $j \in ([k] \setminus J_{i_1}) \cap ([k] \setminus J_{i_2})$. By definition of $J_i$'s we get that $|S_{i_1} \cap C_j|, |S_{i_2} \cap C_j| > |C_j|/2$ but $S_i$'s are disjoint so it's impossible that two of them intersect more than half of the same $C_j$. That means that sets $([k] \setminus J_i)$ are pairwise disjoint for all $i$'s. But we also know from \eqref{eq:setsnonempty} that for all $i$ $([k] \setminus J_i) \neq \emptyset$. So we have $k$ nonempty, pairwise disjoint subsets of $[k]$, which means that every set contains one element and all elements are different. That in turn means that we can define $\pi$ as a function $i \mapsto \pi(i) \in [k] \setminus J_i$ and $\pi$ is a permutation.

Now we show that $\pi$ satisfies the claim of the Lemma. Observe that because for all $i \in [k]$ the set $[k] \setminus J_i$ contains only one element we get for all $i \in [k]$.
\begin{equation}\label{eq:siminuscpi}
    \sum_{j \in J_i} |S_i \cap C_j| = |S_i \setminus C_{\pi(i)}|
\end{equation}
Note that because of \eqref{eq:smallintersectionsmall} and \eqref{eq:siminuscpi} for all $i \in [k]$:
\begin{equation}\label{eq:outsidecibound}
|S_i \setminus C_{\pi(i)}| \leq O \left(\frac{\e}{\varphi^3}  \cdot \log k\right)|S_i|.
\end{equation}
Moreover because inner conductance of every $C_i$ is at least $\varphi$ and $|C_{\pi(i)} \setminus S_i| < |C_{\pi(i)}|/2$ we get that for all $i \in [k]$
\begin{equation}\label{eq:outsidesibound}
\varphi \cdot |C_{\pi(i)} \setminus S_i| \leq O \left(\frac{\e}{\varphi^2}  \cdot \log(k) \right)  |S_i|   
\end{equation}
Finally combining \eqref{eq:outsidecibound} and \eqref{eq:outsidesibound} we get that:
$$|C_{\pi(i)} \triangle S_i| \leq  O \left(\frac{\e}{\varphi^3}  \cdot \log(k) \right) |C_{\pi(i)}|$$

\end{proof}

\newpage

\section{A spectral dot product oracle}
\label{sec:dot}
Our goal in this section is to develop what we call a \emph{spectral dot product oracle}. The oracle is a sublinear time and space data structure that 
has oracle access to a $(k, \varphi, \epsilon)$-clusterable graph $G$ and after a preprocessing step can answer dot products queries for the spectral embedding.
Specifically, if $L = U \Lambda U^T$ is the normalized Laplacian of $G$ and the $x$-th column of $F=U_{[k]}^T$ is called $f_x$ for $x\in V$ then our oracle gets as input 
two vertices $x,y$ and returns an approximation of $\langle f_x, f_y\rangle$. Both the preprocessing time and the time to evaluate an oracle 
query are $k^{O(1)} \cdot n^{1/2 + O(\epsilon/\varphi^2)}\cdot (\log n)^{O(1)}$, that is, sublinear in $n$ for $\epsilon \ll \varphi^2$.
We now state the main theorem that we prove in this section. The algorithms mentioned in Theorem \ref{thm:dot} can be found later in this section.

\thmdot*

\subsection{The spectral dot product oracle - overview}

In the following sections we provide the proof of the spectral dot product oracle. Recall from the technical overview that 
we are using the following algorithms (we restate them for convenience of the reader). Our main tool for accessing the spectral embedding of the graph is a primitive that runs a few short (logarithmic length) random walks from a given vertex. \setcounter{algorithm}{0}
\begin{algorithm}[H]
\caption{\textsc{RunRandomWalks}($G,R,t,x$)}
\label{alg:random-walk}
\begin{algorithmic}[1]
	\State Run $R$ random walks of length $t$ starting from $x$
	\State Let $\m_x(y)$ be the fraction of random walks that ends at $y$ \Comment vector $\m_x$ has support at most $R$ \label{ln:defm}
	\State \Return $\m_x$
\end{algorithmic}
\end{algorithm}
Another key primitive uses collision statistics to estimate the Gram matrix of random walk distributions started at vertices in a set $S$.  
\begin{algorithm}[H]
\caption{\textsc{EstimateCollisionProbabilities($G,I_S,R,t$)}  }\label{alg:gram}
\begin{algorithmic}[1]
	\For{$i=1$ to $O(\log n)$}
		\State  ${\Q_i}:=\textsc{EstimateTransitionMatrix}(G,I_S,R,t)$ \label{ln:Qi}
		\State  ${\widehat{P}_i}:=\textsc{EstimateTransitionMatrix}(G,I_S,R,t)$ \label{ln:Pi}
		\State $\G_i:=\frac{1}{2}\left(\widehat{P}_i^T \Q_i+\Q_i^T \widehat{P}_i\right)$  \label{ln:Gi}\Comment{$\G_i$ is symmetric}
	\EndFor
	\State Let $\G$ be a matrix obtained by taking the entrywise median of $\mathcal{G}_i$'s \Comment{$\G$ is symmetric} \label{ln:GG}
	\State \Return $\G$
\end{algorithmic}
\end{algorithm}
We also need the following procedure.
\begin{algorithm}[H]
\caption{\textsc{EstimateTransitionMatrix$(G,I_S,R,t)$}}\label{alg:compQ}
\begin{algorithmic}[1]
	\For{each sample $x\in I_S$}
		\State $\m_x:=\textsc{RunRandomWalks}(G,R,t,x)$  \label{ln:ranwalk22}
	\EndFor
	\State Let ${\Q}$ be the matrix whose columns are $\m_x$  for $x\in I_S$ \label{ln:defQ}
	\State \Return $\Q$ \Comment $\Q$ has at most $Rs$ non-zeros
\end{algorithmic}
\end{algorithm}

Then we can initialize the dot product oracle.
\begin{algorithm}[H]
\caption{\textsc{InitializeOracle($G,\delta,\xi$)}  \Comment Need: $\epsilon/\varphi^2 \leq \frac{1}{10^5}$}\label{alg:LearnEmbedding}
\begin{algorithmic}[1]
	\State $t:= \frac{20\cdot \log n}{\varphi^2}$	\label{ln:set-t}			
	\State $R_{\text{init}}:=O{(n^{1-\delta +980 \cdot\epsilon / \varphi^2}  \cdot k^{17}/{\xi}^{2})}$	\label{ln:setRR}	
	
	\State $s:= O(n^{480\cdot\epsilon / \varphi^2}\cdot \log n \cdot k^{8}/{\xi}^2)$  \label{ln:sets}
	
	\State Let $I_S$ be the multiset of $s$ indices chosen independently and uniformly at random from $\{1,\ldots,n\}$ \label{ln:sample}	
	
	\For{$i=1$ to $O(\log n)$}
		\State  ${\Q_i}:=\textsc{EstimateTransitionMatrix}(G,I_S,R_{\text{init}},t)$  \label{ln:setQi} \Comment $\Q_i$ has at most $R_{\text{init}}\cdot s$ non-zeros
	\EndFor	
	\State $\G:=$\textsc{EstimateCollisionProbabilities}$(G,I_S,R_{\text{init}},t)$
\label{ln:setG}	
	
	\State Let $\frac{n}{s}\cdot\G:=\widehat{W}\widehat{\Sigma} \widehat{W}^T$ be the eigendecomposition of $\frac{n}{s}\cdot\G$ \label{ln:QSVD} \Comment $\G\in \R^{s\times s}$
	\If{$\widehat{\Sigma}^{-1}$ exists}
	
	\State $\Psi:=\frac{n}{s}\cdot\widehat{W}_{[k]}\widehat{\Sigma}_{[k]}^{-2} \widehat{W}_{[k]}^T$ \Comment $\Psi\in\R^{s\times s}$ \label{ln:final-init} 
	\State \Return  $\mathcal{D}:=\{\Psi,\Q_1,\ldots ,\Q_{O(\log n)}\}$  \label{ln:retPiQ}
		\EndIf
\end{algorithmic}
\end{algorithm}
Finally, we have the query algorithm.
\begin{algorithm}[H]
\caption{\textsc{SpectralDotProductOracle}($G,x,y, \delta, \xi, \mathcal{D}$)  \Comment Need: $\epsilon/\varphi^2 \leq \frac{1}{10^5}$
\newline \text{ }\Comment $\mathcal{D}:=\{\Psi,\Q_1,\ldots ,\Q_{O(\log n)}\}$}
\label{alg:dotProduct}
\begin{algorithmic}[1]
		\State $R_{\text{query}}:=O{(n^{\delta +500\cdot\epsilon / \varphi^2}\cdot  k^{9}/{\xi}^{2})}$ \label{ln:set_r_q}
				
	\For{$i=1$ to $O(\log n)$}
		\State ${\m^i_x:=\textsc{RunRandomWalks}(G,R_{\text{query}},t,x)}$ \label{ln:mx}
		\State ${\m^i_y:=\textsc{RunRandomWalks}(G,R_{\text{query}},t,y)}$ \label{ln:my}
	\EndFor
	\State Let ${\alpha}_x$ be a vector obtained by taking the entrywise median of $(\Q_i)^T(\m^i_x)$ over all runs \label{ln:alx}
	\State Let ${\alpha}_y$ be a vector  obtained by taking the entrywise median of $(\Q_i)^T(\m^i_y)$ over all runs \label{ln:aly}
	\State  \Return $\adp{f_{x},f_y}:={\alpha}_x^T \Psi {\alpha}_y$  \label{ln:axay}
\end{algorithmic}
\end{algorithm}

Let $I_S=\{i_1,\ldots, i_s\}$ be a multiset of $s$ indices chosen independently and uniformly at random from $\{1,\dots,n\}$. Let $S$ be the $n\times s$ matrix whose $j$-th column equals $\mathds{1}_{i_j}$. As already explained in detail in the technical overview, we first prove stability bounds for the pseudoinverse.  Then we show that that $M^t$ is approximated by $M^tS$ and finally we show
that algorithm {\sc RunRandomWalks} approximates the $M^t\mathds{1}_x$ sufficiently well.  We conclude with the proof of Theorem \ref{thm:dot}.

\subsection{Stability bounds for the low rank approximation}
\label{subsubsec:topk}

The main result of this section is a bound on the stability of the pseudoinverse of the rank-$k$ approximation of two symmetric, positive semi-definite matrices $A, \widetilde A \in \R^{n \times n}$
that are spectrally close and that have an eigenvalue gap between the $k$-th and $(k+1)$-st eigenvalue.
In order to prove this result, we use Weyl's inequality, which gives bounds on the eigenvalues of the sum of a matrix $A$ and a perturbation matrix $P$. Recall that for a symmetric matrix $A$, we write $\nu_i(A)$ (resp. $\nu_{\max}(A), \nu_{\min}(A))$ to denote the $i^{\text{th}}$ largest (resp. maximum, minimum) eigenvalue of $A$. 
\begin{lemma}[Weyl's Inequality]\label{lem_Weyl}
Let $A,P \in \R^{n\times n}$ be two symmetric matrices. Then we have for all $i \in \{1,\dots,n\}$:
$$
\nu_i(A)+\nu_{\min}(P)\leq\nu_i(A+P)\leq\nu_i(A)+\nu_{\max}(P),
$$
where for a symmetric matrix $H \in \R^{n \times n}$ $\nu_i(H)$ denotes its $i^\text{th}$ largest eigenvalue and $\nu_{\min}(H)$ and $\nu_{\max}(H)$ refer to the smallest and largest eigenvalues of $H$.
\end{lemma}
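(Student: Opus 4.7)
The plan is to prove Weyl's inequality via the Courant-Fischer min-max characterization of eigenvalues. Recall that for any symmetric matrix $H \in \R^{n\times n}$ and any $i \in \{1,\ldots,n\}$, the $i$-th largest eigenvalue admits the representation
\[
\nu_i(H) = \max_{\substack{V \subseteq \R^n \\ \dim V = i}} \min_{\substack{x \in V \\ \|x\|_2 = 1}} x^T H x.
\]
Applied to $H = A+P$, this rewrites $\nu_i(A+P)$ as a max-min over $i$-dimensional subspaces $V$ of $x^T A x + x^T P x$, which is the form I would work with.

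For the lower bound, I would observe that for every unit vector $x \in \R^n$ one has $x^T P x \geq \nu_{\min}(P)$, because $P$ is symmetric and its Rayleigh quotient is minimized at its smallest eigenvector. Hence for any fixed $i$-dimensional subspace $V$,
\[
\min_{\substack{x \in V \\ \|x\|_2=1}} \bigl(x^T A x + x^T P x\bigr) \;\geq\; \min_{\substack{x \in V \\ \|x\|_2=1}} x^T A x \;+\; \nu_{\min}(P).
\]
Taking the maximum over all $i$-dimensional subspaces on both sides and invoking Courant-Fischer for $A$ yields $\nu_i(A+P) \geq \nu_i(A) + \nu_{\min}(P)$. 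For the upper bound I would argue symmetrically using $x^T P x \leq \nu_{\max}(P)$ for all unit vectors $x$, which gives $\nu_i(A+P) \leq \nu_i(A) + \nu_{\max}(P)$ after the same max-min manipulation.

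There is essentially no obstacle here: the entire content of the inequality is the fact that adding $P$ shifts each Rayleigh quotient by a number lying in $[\nu_{\min}(P),\nu_{\max}(P)]$, and Courant-Fischer lets this pointwise bound pass through the max-min. The only point to be a bit careful about is that the Rayleigh-quotient bounds $\nu_{\min}(P) \leq x^T P x \leq \nu_{\max}(P)$ for $\|x\|_2=1$ rely on the symmetry of $P$ (so that $P$ has a real orthonormal eigenbasis); this is given by hypothesis. Thus the plan is a one-line min-max calculation preceded by a brief reminder of Courant-Fischer, and the two inequalities in the statement fall out immediately.
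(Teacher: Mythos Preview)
Your proposal is correct: the Courant--Fischer min--max argument you outline is the standard textbook proof of Weyl's inequality, and all the steps go through as you describe. The paper itself does not prove this lemma; it is stated as a classical result and used as a black box throughout Section~5, so there is no paper proof to compare against.
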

We will use the Davis-Kahan $\sin(\theta)$ Theorem \cite{davis1970rotation} (the version given in the note \cite{daviskahan}).
\begin{theorem} [Davis-Kahan $sin(\theta)$-Theorem \cite{davis1970rotation}].
\label{lem:davis-kahan}
Let $H = E_0A_0E_0^T +E_1A_1E_1^T$ and $\widetilde{H} = F_0\Lambda_0 F_0^T + F_1\Lambda_1 F_1^T$ be symmetric real-valued matrices with $E_0, E_1$ and $F_0, F_1$ orthogonal. If the eigenvalues of $A_0$ are contained in an interval $(a,b)$,
and the eigenvalues of $\Lambda_1 $are excluded from the interval $(a-\eta, b+\eta)$for some $\eta > 0$, then for any unitarily invariant norm $\|.\|$
\[\|F_1^T E_0\| \leq \frac{\|F_1^T (\widetilde{H}-H) E_0\|}{\eta} \text{.}\]
\end{theorem}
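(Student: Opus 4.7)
The plan is to reduce the theorem to a Sylvester-type matrix equation whose unique solution is exactly the quantity we want to bound, then estimate that solution in the given norm using the spectral separation hypothesis.

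First I would exploit the two spectral decompositions to derive an algebraic identity for $F_1^T E_0$. Since $H E_0 = E_0 A_0$ and $F_1^T \widetilde{H} = \Lambda_1 F_1^T$, I would compute $F_1^T(\widetilde{H} - H) E_0$ in two different ways, obtaining
\[
  \Lambda_1 (F_1^T E_0) - (F_1^T E_0) A_0 \;=\; F_1^T(\widetilde{H}-H)E_0.
\]
Setting $X := F_1^T E_0$ and $R := F_1^T(\widetilde{H}-H)E_0$, this is a Sylvester equation $\Lambda_1 X - X A_0 = R$. The whole theorem reduces to showing that $\|X\| \leq \|R\|/\eta$ in every unitarily invariant norm.

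Next I would diagonalize. Both $A_0$ and $\Lambda_1$ are symmetric, so write $A_0 = V_0 D_0 V_0^T$ and $\Lambda_1 = V_1 D_1 V_1^T$ with $D_0, D_1$ diagonal and $V_0, V_1$ orthogonal. Multiplying the Sylvester equation on the left by $V_1^T$ and on the right by $V_0$ and setting $\widetilde{X} := V_1^T X V_0$, $\widetilde{R} := V_1^T R V_0$, the equation becomes $D_1 \widetilde{X} - \widetilde{X} D_0 = \widetilde{R}$, which in coordinates reads
\[
  \widetilde{X}_{ij} \;=\; \frac{\widetilde{R}_{ij}}{(D_1)_{ii} - (D_0)_{jj}}.
\]
The separation hypothesis says $(D_0)_{jj} \in (a,b)$ while $(D_1)_{ii} \notin (a-\eta, b+\eta)$, so every denominator has absolute value at least $\eta$. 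Hence $\widetilde{X}$ is obtained entrywise from $\widetilde{R}$ by multiplying each entry by a scalar of magnitude at most $1/\eta$.

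Finally I would invoke the standard fact that every unitarily invariant norm is majorized by singular values: the map $\widetilde{R}\mapsto \widetilde{X}$ is a Hadamard (Schur) multiplication by a matrix whose entries are bounded by $1/\eta$ in absolute value, and such a multiplier has unitarily invariant operator norm at most $1/\eta$ (this is the content of the Davis–Kahan–Weinberger/Bhatia-type contraction lemma; alternatively one can prove it by noting that for the Frobenius norm it follows immediately by summing squares, and then extend to general unitarily invariant norms via Ky Fan dominance of singular values). Since orthogonal conjugation preserves any unitarily invariant norm, $\|X\| = \|\widetilde{X}\| \leq \|\widetilde{R}\|/\eta = \|R\|/\eta$, which is precisely the claimed bound. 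The main obstacle I expect is making the last step — the passage from the entrywise bound on $\widetilde{X}$ to a bound in an arbitrary unitarily invariant norm — fully rigorous; for the special cases the paper actually needs (spectral or Frobenius norm) this step is routine, but the general statement requires the Ky Fan / pinching machinery.
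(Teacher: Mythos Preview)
The paper does not prove this theorem; it is quoted from Davis--Kahan (1970) and used as a black box. There is therefore no paper-proof to compare against.

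Your outline follows the classical Davis--Kahan argument and the first three steps (deriving the Sylvester equation $\Lambda_1 X - X A_0 = R$, diagonalizing, and using the separation to bound each denominator) are correct. One caution about step~4: the bare statement that ``Hadamard multiplication by a matrix whose entries are bounded by $1/\eta$ has unitarily invariant operator norm at most $1/\eta$'' is \emph{false} in general --- bounded entries of $M$ do not imply that $A\mapsto M\circ A$ is a contraction in, say, the spectral norm. What rescues the argument here is the specific Cauchy-like structure $M_{ij}=1/(\mu_i-\lambda_j)$: after splitting into the one-sided cases (eigenvalues of $\Lambda_1$ all above $b+\eta$, or all below $a-\eta$) one writes
\[
X=\int_0^\infty e^{-tD_1}\,\widetilde R\,e^{tD_0}\,dt,
\]
bounds $\|e^{-tD_1}\|_2\|e^{tD_0}\|_2\le e^{-t\eta}$, and integrates; submultiplicativity then gives $\|X\|\le\|R\|/\eta$ in every unitarily invariant norm. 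You correctly flag step~4 as the delicate point, and the Bhatia-type machinery you cite is the right place to find this argument made rigorous; just be aware that the ``entrywise bound $\Rightarrow$ multiplier bound'' shortcut does not stand on its own, nor does the Frobenius-plus-Ky-Fan route work without invoking the structure of $M$.
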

Let $m\leq n$ be integers.  For any matrix $A\in \R^{n\times m}$ with singular value decomposition (SVD) $A=Y\Gamma Z^T$ we assume $Y\in \R^{n\times n}$, $\Gamma \in \R^{n\times n}$ is a diagonal matrix of singular values and $Z\in \R^{m\times n}$ (this is a slightly non-standard definition of the SVD, but having $\Gamma$ be a square matrix will be convenient).  $Y$ has orthonormal columns, the first $m$ columns of $Z$ are orthonormal, and the rest of the columns of $Z$ are zero. For any integer $q\in[m]$ we denote $Y_{[q]}\in \R^{n \times q}$ as the first $q$ columns of $Y$ and $Y_{-[q]}$ to denote the matrix of the remaining columns of $Y$.  We also denote by $Z_{[q]}\in \R^{m \times q}$ as the first $q$ columns of $Z$ and $Z_{-[q]}$ to denote the matrix of the remaining $n-q$ columns of $Z$. Finally we denote by $\Gamma_{[q]}\in \R^{q \times q}$ the submatrix of $\Gamma$ corresponding to the first $q$ rows and columns of $\Gamma$ and we use $\Gamma_{-[q]}$ to denote the submatrix corresponding to the last $n-q$ rows and $n-q$ columns of $\Gamma$. So for any $q\in[m]$ the span of $Y_{-[q]}$ is the orthogonal complement of the span of $Y_{[q]}$ in $\R^n$, also the span of the columns of $Z_{-[q]}$ is the orthogonal complement of the span of $Z_{[q]}$ in $\R^m$. Thus we can write $A=Y_{[q]}\Gamma_{[q]}Z^T_{[q]} + Y_{-[q]}\Gamma_{-[q]}Z^T_{-[q]}$. 
 
 \begin{claim}\label{cl:ptildep}
For every symmetric matrix $E$ and every pair of orthogonal projection matrices $P, \wt{P}$ one has
\begin{equation*}
\begin{split}
||P \cdot E\cdot P-\wt{P} \cdot E\cdot \wt{P}||_2\leq 2\| E \|_2\cdot (\|P\cdot (I-\wt{P})\|_2+\|\wt{P}\cdot (I-P)\|_2).
\end{split}
\end{equation*}
 \end{claim}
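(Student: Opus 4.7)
The plan is to prove the bound by a telescoping decomposition followed by a careful expansion of $P-\widetilde{P}$ in a form that produces exactly the norms $\|P(I-\widetilde{P})\|_2$ and $\|\widetilde{P}(I-P)\|_2$ appearing on the right-hand side.

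First I would write
\begin{equation*}
PEP-\widetilde{P}E\widetilde{P} = PE(P-\widetilde{P}) + (P-\widetilde{P})E\widetilde{P},
\end{equation*}
which is the standard telescoping step through the intermediate matrix $PE\widetilde{P}$. By the triangle inequality it then suffices to bound each of the two summands by $\|E\|_2\cdot(\|P(I-\widetilde{P})\|_2 + \|\widetilde{P}(I-P)\|_2)$.

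The key observation I would use next is the identity
\begin{equation*}
P-\widetilde{P} = P(I-\widetilde{P}) - (I-P)\widetilde{P},
\end{equation*}
which one verifies by expanding the right-hand side. Substituting this into $(P-\widetilde{P})E\widetilde{P}$ gives
\begin{equation*}
(P-\widetilde{P})E\widetilde{P} = P(I-\widetilde{P})\cdot E\widetilde{P} - (I-P)\widetilde{P}\cdot E\widetilde{P},
\end{equation*}
and by submultiplicativity of the spectral norm together with $\|\widetilde{P}\|_2\le 1$ (an orthogonal projection), each of these two pieces has norm at most $\|P(I-\widetilde{P})\|_2\|E\|_2$ and $\|(I-P)\widetilde{P}\|_2\|E\|_2$, respectively. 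Since $\|(I-P)\widetilde{P}\|_2 = \|\widetilde{P}(I-P)\|_2$ by transpose invariance of the spectral norm, we obtain
\begin{equation*}
\|(P-\widetilde{P})E\widetilde{P}\|_2 \le \bigl(\|P(I-\widetilde{P})\|_2 + \|\widetilde{P}(I-P)\|_2\bigr)\|E\|_2.
\end{equation*}

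For the other term $PE(P-\widetilde{P})$, I would use the same decomposition (or equivalently take transposes, noting that $E$, $P$, $\widetilde{P}$ are symmetric and that $\|\cdot\|_2$ is invariant under transposition) to get an identical bound. Summing the two contributions yields the factor of $2$ and completes the proof. There is no real obstacle here: the whole argument is a two-line telescoping plus the algebraic identity for $P-\widetilde{P}$, and no property of $P$ and $\widetilde{P}$ is needed beyond $\|P\|_2,\|\widetilde{P}\|_2\le 1$.
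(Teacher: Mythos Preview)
Your proof is correct. Both your argument and the paper's bound the difference by splitting it into four terms and applying submultiplicativity with $\|P\|_2,\|\wt{P}\|_2\le 1$, but the decompositions differ slightly: the paper inserts $I=\wt{P}+(I-\wt{P})$ on both sides of $PEP$ and $I=P+(I-P)$ on both sides of $\wt{P}E\wt{P}$, then cancels the common $\wt{P}PEP\wt{P}$ term, whereas you telescope through $PE\wt{P}$ and use the identity $P-\wt{P}=P(I-\wt{P})-(I-P)\wt{P}$. Your route is arguably a bit more streamlined and makes it explicit (as you note at the end) that symmetry of $E$ is not actually needed---only $\|P\|_2,\|\wt{P}\|_2\le 1$ and transpose invariance of the spectral norm are used.
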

 \begin{proof}
Since $\wt{P}+(I-\wt{P})=I$ we can write
\begin{align}\label{eq:PEP}
P \cdot E\cdot P&=(\wt{P}+(I-\wt{P}))P \cdot E\cdot P\cdot (\wt{P}+(I-\wt{P}))\nonumber\\
  &= P \cdot E\cdot P\cdot (I-\wt{P})+ \wt{P}\cdot P \cdot E\cdot P\cdot \wt{P} +(I-\wt{P}) \cdot P \cdot E\cdot P\cdot \wt{P}
\end{align}
Since $P+(I-P)=I$ we have
\begin{align}
\label{eq:PEP-tild}
\wt{P} \cdot E\cdot \wt{P}&=\wt{P} (P+(I-P))  \cdot E\cdot \left(P+(I-P)\right)  \wt{P} ||_2 \nonumber\\
&= \wt{P} \cdot E\cdot (I-P)\wt{P} + \wt{P} \cdot P \cdot E\cdot P\cdot \wt{P}+\wt{P} \cdot (I-P) \cdot E\cdot P\cdot \wt{P}
\end{align}
Putting \eqref{eq:PEP} and \eqref{eq:PEP-tild} together and by triangle inequality we get
 \begin{align*}
&||P \cdot E\cdot P-\wt{P} \cdot E\cdot \wt{P}||_2\\
&\leq \|P \cdot E\cdot P\cdot (I-\wt{P})\|_2+\| (I-\wt{P}) \cdot P \cdot E\cdot P\cdot \wt{P}\|_2+\| \wt{P} \cdot E\cdot (I-P)\wt{P}\|_2+\|\wt{P} \cdot (I-P) \cdot E\cdot P\cdot \wt{P}\|_2
 \end{align*}
Thus by submultiplicativity of the operator norm we get
\begin{align*}
&||P \cdot E\cdot P-\wt{P} \cdot E\cdot \wt{P}||_2\\
&\leq  \|P \|_2 \|  E \|_2 \| P\cdot (I-\wt{P})\|_2 + \| (I-\wt{P}) \cdot P \|_2 \| E\|_2 \| P \|_2\| \wt{P}\|_2 
+ \| \wt{P} \|_2 \| E \|_2 \| (I-P)\wt{P}\|_2 + \| \wt{P} \|_2 \| E \|_2 \| (I-P)\wt{P}\|_2\\
&\leq \| E\|_2 \left(\| P\cdot (I-\wt{P})\|_2 + \|  (I-\wt{P})\cdot P\|_2 + \| (I-P)\wt{P}\|_2 + \| \wt{P}(I-P)\|_2 \right) \text{~Since $||P||=||\wt{P}||_2=1$}\\
&=2\cdot\| E\|_2\cdot (\|P\cdot (I-\wt{P})\|_2+\|\wt{P}\cdot (I-P)\|_2),
\end{align*}
where the last equality holds since $\|P\cdot (I-\wt{P})\|_2=\| (I-\wt{P})^T \cdot P^T\|_2=\| (I-\wt{P}) \cdot P\|_2$ and similarly since $\|\wt{P}\cdot (I-P)\|_2=\| (I-P)^T\cdot \wt{P}^T\|_2=\| (I-P)\cdot \wt{P}\|_2$.
 \end{proof}

Recall that for matrices $A,\widetilde{A}\in \R^{n\times n}$, we write $A\preccurlyeq \widetilde{A}$, if $\forall x \in \R^n$ we have $x^TAx \leq x^T \widetilde{A} x$ and we write $A\prec \widetilde{A}$, if $\forall x \in \R^n$ we have $x^TAx < x^T \widetilde{A} x$. 
Now we can state the main technical result of this section (Lemma \ref{lem:neg-l2-close}), whose proof 
relies on matrix perturbation bounds Davis-Kahan $\sin \theta$ theorem (Theorem \ref{lem:davis-kahan}).
\if 0\begin{restatable}{lemma}{lemnegclose}
\label{lem:neg-l2-close}
Let $A,\widetilde{A}\in \R^{n\times n}$ be symmetric matrices with eigendecompositions $A= Y \Gamma Y^T$ and $\widetilde{A} = \widetilde{Y} \widetilde{\Gamma} \widetilde{Y}^T$. Let the eigenvalues of $A$ be
$\gamma_1 \geq \dots \geq \gamma_n \geq 0$. Suppose that 
$\|A-\widetilde{A}\|_2\leq \frac{\gamma_k}{100}$ and $\gamma_{k+1} < \gamma_k/4$. Then we have
\[ \| Y_{[k]} \Gamma^{-1}_{[k]} Y^T_{[k]}- \widetilde{Y}_{[k]} \widetilde{\Gamma}^{-1}_{[k]} \widetilde{Y}^T_{[k]}\|_2 \leq \frac{12\|A-\widetilde{A}\|_2+ 4\gamma_{k+1}}{\gamma_k^2}
   \text{.}\] 
\end{restatable}
\fi

We will need the following claim, whose proof is inspired by the proof of the operator monotonicity of negative matrix inverse~\cite{toda2011operator}:
\begin{claim}\label{cl:384g8g834gDD}
Let $A, B\in \R^{n\times n}$ be symmetric positive semidefinite matrices. Let $\Pi_B$ denote orthogonal projection operator onto the range space of $B$. Then if $A\succeq B$, we have for every orthogonal projection $\Pi_A$ satisfying $\Pi_AA^+=A^+\Pi_A$ that
$$
(\Pi_AA\Pi_A)^+\preceq B^++ 2\|\Pi_AA^+\|_2 \|\Pi_A(I-\Pi_B)\|_2\cdot I.
$$

\end{claim}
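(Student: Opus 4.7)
The plan is to mimic the operator-monotonicity proof used for Theorem~\ref{thm:operator-monotonicity}, but with $\Pi_A$ and $\Pi_B$ inserted in the appropriate places. The starting point is the same inequality: expanding $0 \leq \langle y - B^+x,\ B(y - B^+ x)\rangle$ and using $B B^+ = \Pi_B$ and $B^+B B^+ = B^+$ gives, for every $x, y \in \R^n$,
\[\langle x, B^+x\rangle \geq 2\langle y, \Pi_B x\rangle - \langle y, By\rangle,\]
and the assumption $A \succeq B$ upgrades this to $\langle x, B^+ x\rangle \geq 2\langle y, \Pi_B x\rangle - \langle y, Ay\rangle$.

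Next I would substitute $y := A^+ \Pi_A x$. First, one checks that $\wt{A}^+ := (\Pi_A A \Pi_A)^+ = A^+ \Pi_A = \Pi_A A^+ \Pi_A$; this is precisely where the hypothesis $\Pi_A A^+ = A^+ \Pi_A$ is used (and is where one also sees the two operators must be simultaneously diagonalizable, since both are symmetric). Then, using the Moore--Penrose identity $A^+ A A^+ = A^+$,
\[\langle y, Ay\rangle = x^T \Pi_A A^+ A A^+ \Pi_A x = x^T \Pi_A A^+ \Pi_A x = \langle x, \wt{A}^+ x\rangle.\]
Writing $\Pi_B = I - (I - \Pi_B)$ gives similarly
\[2\langle y, \Pi_B x\rangle = 2\langle x, \wt{A}^+ x\rangle - 2\langle A^+ \Pi_A x,\ (I - \Pi_B) x\rangle.\]
Plugging both expressions into the operator-monotonicity inequality and rearranging yields
\[\langle x, \wt{A}^+ x\rangle \leq \langle x, B^+ x\rangle + 2\langle A^+ \Pi_A x,\ (I - \Pi_B) x\rangle.\]

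The last step is to bound the cross term. Since $A^+ \Pi_A x = \Pi_A A^+ \Pi_A x$ lies in the range of $\Pi_A$, we can insert a $\Pi_A$ on the right side and pass it across via symmetry of $\Pi_A$:
\[\langle A^+ \Pi_A x,\ (I - \Pi_B) x\rangle = \langle A^+ \Pi_A x,\ \Pi_A (I - \Pi_B) x\rangle \leq \|\Pi_A A^+\|_2 \cdot \|\Pi_A (I - \Pi_B)\|_2 \cdot \|x\|_2^2\]
by Cauchy--Schwarz and submultiplicativity of the operator norm. Combined with the previous display, this gives the claimed inequality $\wt{A}^+ \preceq B^+ + 2\|\Pi_A A^+\|_2 \|\Pi_A (I - \Pi_B)\|_2 \cdot I$.

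The only subtle point I anticipate is the pseudoinverse bookkeeping: one must not implicitly assume $\mathrm{range}(\Pi_A) \subseteq \mathrm{range}(A)$ or $AA^+ \Pi_A = \Pi_A$, since neither follows from the commutativity hypothesis alone. The saving grace is that the key simplification $\langle y, Ay\rangle = \langle x, \wt{A}^+ x\rangle$ goes through using only $A^+ A A^+ = A^+$, which is valid regardless of whether $A$ is singular; once this is observed, the rest of the argument is a direct mirror of the invertible case.
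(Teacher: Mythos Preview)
Your proof is correct and follows essentially the same approach as the paper: expand $\langle y-B^+x, B(y-B^+x)\rangle\ge 0$, replace $B$ by $A$ via $A\succeq B$, substitute $y=A^+\Pi_A x$ (the paper writes $y=\Pi_A A^+x$, which is the same vector by the commutativity hypothesis), and bound the resulting cross term using Cauchy--Schwarz after inserting a $\Pi_A$. If anything, your write-up is slightly cleaner: you make explicit the identity $(\Pi_A A\Pi_A)^+=\Pi_A A^+\Pi_A=\Pi_A A^+$ up front, whereas the paper's displayed conclusion is actually for $\Pi_A A^+\Pi_A$ and leaves that identification implicit.
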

\begin{proof}
For every $x\in \R^n$, and every $y\in \R^n$ (to be chosen as $y=A^+x$ later) since $B$ is positive semidefinite we have
\begin{equation*}
(y-B^+x)^TB(y-B^+x)\geq 0,
\end{equation*}
which in particular implies that 
\begin{equation*}
y^TBy-2x^TB^+B y+x^TB^+x\geq 0,
\end{equation*}
and since $A\succeq B$ by assumption, 
\begin{equation*}
y^TAy-2x^TB^+B y+x^TB^+x\geq 0.
\end{equation*}
We now chose $y=\Pi_A A^+x$ and rearrange, getting
\begin{equation}\label{eq:9034ghh294g0j}
2x^TB^+B\Pi_A A^+x-x^T\Pi_AA^+\Pi_Ax\leq x^TB^+x.
\end{equation}
Noting that $B^+B=\Pi_B$ and $\Pi_A\Pi_AA^+=\Pi_AA^+\Pi_A$, we write the lhs of~\eqref{eq:9034ghh294g0j} as
\begin{equation*}
\begin{split}
2x^T\Pi_B \Pi_AA^+x-x^T\Pi_AA^+\Pi_Ax&=2x^T\Pi_AA^+\Pi_Ax+2x^T(\Pi_B \Pi_A-\Pi_A) \Pi_AA^+x-x^T\Pi_AA^+\Pi_Ax\\
&=x^T\Pi_AA^+\Pi_Ax+2x^T((\Pi_B-I) \Pi_A) \Pi_AA^+x.
\end{split}
\end{equation*}
Substituting the above into~\eqref{eq:9034ghh294g0j}, and noting that 
$$
|x^T(\Pi_B \Pi_A-\Pi_A) \Pi_AA^+x|\leq \|\Pi_AA^+\|_2\cdot \|(\Pi_B-I)\Pi_A\|_2\cdot x^Tx,
$$
we get
\begin{equation*}
\begin{split}
x^T\Pi_A A^+ \Pi_Ax\leq x^TB^+x+2\|\Pi_AA^+\|_2\cdot \|(\Pi_B-I)\Pi_A\|_2\cdot x^Tx.
\end{split}
\end{equation*}
The above holds for all $x\in \R^n$. Also, $\|(\Pi_B-I)\Pi_A\|_2=\|\Pi_A(I-\Pi_B)\|_2$, since $\Pi_A, \Pi_B$ are projection
matrices. Therefore, for all $x\in \R^n$ we have
\begin{equation*}
\begin{split}
\Pi_A A^+ \Pi_A\preceq B^++2\|\Pi_AA^+\|_2\cdot \|\Pi_A(I-\Pi_B)\|_2\cdot I,
\end{split}
\end{equation*}
as required.
\end{proof}

\begin{restatable}{lemma}{lemnegclose}
\label{lem:neg-l2-close}
Let $A,\widetilde{A}\in \R^{n\times n}$ be symmetric matrices with eigendecompositions $A= Y \Gamma Y^T$ and $\widetilde{A} = \widetilde{Y} \widetilde{\Gamma} \widetilde{Y}^T$. Let the eigenvalues of $A$ be
$1\geq \gamma_1 \geq \dots \geq \gamma_n \geq 0$. Suppose that 
$\|A-\widetilde{A}\|_2\leq \frac{\gamma_k}{100}$ and $\gamma_{k+1} < \gamma_k/4$. Then we have
\[ \| Y_{[k]} \Gamma^{-1}_{[k]} Y^T_{[k]}- \widetilde{Y}_{[k]} \widetilde{\Gamma}^{-1}_{[k]} \widetilde{Y}^T_{[k]}\|_2 \leq  \frac{16 \|A-\wt{A}\|_2+4\gamma_{k+1}}{\gamma_k^2}\text{.}\] 
\end{restatable}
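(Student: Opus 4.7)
My plan is to combine Weyl's inequality, the Davis--Kahan $\sin\theta$ theorem, and a pseudoinverse-specific algebraic decomposition. Write $M := Y_{[k]}\Gamma_{[k]}^{-1}Y_{[k]}^T$ and $\widetilde M := \widetilde Y_{[k]}\widetilde\Gamma_{[k]}^{-1}\widetilde Y_{[k]}^T$, and let $P := Y_{[k]}Y_{[k]}^T$ and $\widetilde P := \widetilde Y_{[k]}\widetilde Y_{[k]}^T$ denote the orthogonal projections onto the respective top-$k$ eigenspaces. First, Weyl's inequality (Lemma~\ref{lem_Weyl}) gives $|\widetilde\gamma_i - \gamma_i| \leq \|A - \widetilde A\|_2 \leq \gamma_k/100$ for every $i$; in particular $\widetilde\gamma_k \geq 0.99\gamma_k$ (so $\widetilde M$ is well-defined with $\|\widetilde M\|_2 \leq 1/(0.99\gamma_k)$), and every eigenvalue of $\widetilde A$ outside the top $k$ has absolute value at most $\gamma_{k+1} + \|A-\widetilde A\|_2$, hence $\|\widetilde A - \widetilde A_{[k]}\|_2 \leq \gamma_{k+1} + \|A-\widetilde A\|_2$.

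Second, I apply Davis--Kahan (Theorem~\ref{lem:davis-kahan}) taking the ``inside'' interval to be the range of eigenvalues of $\Gamma_{[k]}$ and the ``outside'' eigenvalues to be those of $\widetilde\Gamma_{-[k]}$; the gap is at least $\eta \geq \gamma_k - \widetilde\gamma_{k+1} \geq \gamma_k - \gamma_{k+1} - \|A-\widetilde A\|_2 \geq \gamma_k/2$ by $\gamma_{k+1} < \gamma_k/4$ and $\|A-\widetilde A\|_2 \leq \gamma_k/100$. This yields $\|(I-\widetilde P)P\|_2 \leq 2\|A-\widetilde A\|_2/\gamma_k$, and swapping the roles of $A$ and $\widetilde A$ (using the gap $\widetilde\gamma_k - \gamma_{k+1} \geq \gamma_k/2$) also gives $\|(I-P)\widetilde P\|_2 \leq 2\|A-\widetilde A\|_2/\gamma_k$.

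Third, the identities $M = MP$, $\widetilde M = \widetilde P\widetilde M$, $MA_{[k]} = P$, and $\widetilde A_{[k]}\widetilde M = \widetilde P$ yield, by direct expansion and cancellation,
\[
M - \widetilde M \;=\; M(I - \widetilde P) \;-\; M\bigl(A_{[k]} - \widetilde A_{[k]}\bigr)\widetilde M \;+\; (P - I)\widetilde M.
\]
I bound each term. For the first, $\|M(I-\widetilde P)\|_2 = \|MP(I-\widetilde P)\|_2 \leq \|M\|_2\,\|P(I-\widetilde P)\|_2 \leq 2\|A-\widetilde A\|_2/\gamma_k^2$. For the third, $\|(P-I)\widetilde M\|_2 = \|(P-I)\widetilde P\widetilde M\|_2 \leq \|(I-P)\widetilde P\|_2\,\|\widetilde M\|_2 \leq 2\|A-\widetilde A\|_2/(0.99\gamma_k^2)$. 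For the middle term, the triangle inequality combined with Step~1 gives $\|A_{[k]} - \widetilde A_{[k]}\|_2 \leq \|A-\widetilde A\|_2 + \|A-A_{[k]}\|_2 + \|\widetilde A-\widetilde A_{[k]}\|_2 \leq 2\|A-\widetilde A\|_2 + 2\gamma_{k+1}$, whence $\|M(A_{[k]}-\widetilde A_{[k]})\widetilde M\|_2 \leq (2\|A-\widetilde A\|_2 + 2\gamma_{k+1})/(0.99\gamma_k^2)$. Summing the three estimates and folding $1/0.99$ into loose constants yields the claimed $(16\|A-\widetilde A\|_2 + 4\gamma_{k+1})/\gamma_k^2$.

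The key subtlety is the choice of decomposition in the third step. A more naive triangle inequality that passes through an intermediate matrix like $\widetilde P A^+ \widetilde P$ and uses Claim~\ref{cl:ptildep} would introduce a factor of $\|A^+\|_2 = 1/\gamma_n$, which can be arbitrarily large; the pseudoinverse of the full $A$ has nothing to do with the top-$k$ pseudoinverse we care about. The trick is to exploit $M = MP$ and $\widetilde M = \widetilde P\widetilde M$ to make sure every appearance of $I-\widetilde P$ or $I-P$ is multiplied by $P$ or $\widetilde P$ on the adjacent side, so that Davis--Kahan contributes a $\|A-\widetilde A\|_2/\gamma_k$ factor on top of the $1/\gamma_k$ coming from $\|M\|_2$, giving the crucial $1/\gamma_k^2$ scaling in the final bound.
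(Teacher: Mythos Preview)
Your proof is correct and takes a genuinely different route from the paper. Both proofs share the same two building blocks---Weyl's inequality for eigenvalue perturbation and Davis--Kahan for $\|P(I-\widetilde P)\|_2$, $\|(I-P)\widetilde P\|_2$---but combine them very differently. The paper proceeds via Loewner order: it first establishes $\widetilde M \preceq M + \eta I$ and $M \preceq \widetilde M + \eta I$ with $\eta = 2\|A-\widetilde A\|_2 + \gamma_{k+1}$, then invokes a custom pseudoinverse-monotonicity claim (Claim~\ref{cl:384g8g834gDD}) asserting that $A \succeq B$ implies $(\Pi_A A \Pi_A)^+ \preceq B^+ + 2\|\Pi_A A^+\|_2\,\|\Pi_A(I-\Pi_B)\|_2 \cdot I$, applied with carefully chosen projectors; this yields two-sided Loewner bounds on $M^+ - \widetilde M^+$ which are then converted to a spectral-norm bound. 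By contrast, you write down the exact algebraic identity $M - \widetilde M = M(I-\widetilde P) - M(A_{[k]}-\widetilde A_{[k]})\widetilde M + (P-I)\widetilde M$ (essentially the Wedin-type decomposition for pseudoinverses) and bound each term directly. Your approach is more elementary, avoids the auxiliary Claim~\ref{cl:384g8g834gDD} entirely, and in fact gives a slightly sharper constant (about $6\|A-\widetilde A\|_2 + 2\gamma_{k+1}$ over $\gamma_k^2$ before you relax it to $16$ and $4$). The paper's route has the conceptual appeal of working in the Loewner order throughout, but your decomposition is shorter and more transparent here.
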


\begin{proof}
We define $P=Y_{[k]} Y^T_{[k]}$ and $\wt{P}=\wt{Y}_{[k]} \wt{Y}^T_{[k]}$, and let $M=PAP=Y_{[k]} \Gamma_{[k]} Y^T_{[k]}$ and 
$\wt{M}=\wt{P}\wt{A}\wt{P}=\wt{Y}_{[k]} \wt{\Gamma}_{[k]} \wt{Y}^T_{[k]}$.
 First note that
 \begin{align}
 \label{eq:823gt8g8GDSas}
\wt{M}&=\wt{P} \wt{A}\wt{P} \nonumber\\
&\preceq \wt{P} (\wt{A}+\|A-\wt{A}\|_2 \cdot I)\wt{P} \nonumber\\
&\preceq \wt{P} (\wt{A}+\|A-\wt{A}\|_2 \cdot I)\wt{P}+(I-\wt{P})(\wt{A}+\|A-\wt{A}\|_2 \cdot I)(I-\wt{P}) \nonumber\\
&= \wt{A}+\|A-\wt{A}\|_2\cdot I \nonumber\\
&\preceq A+2\|A-\wt{A}\|_2\cdot I \nonumber\\
&= P (A+2\|A-\wt{A}\|_2\cdot I)P+(I-P) (A+2\|A-\wt{A}\|_2\cdot I) (I-P) \nonumber\\
&\preceq M+ (2\|A-\wt{A}\|_2+\gamma_{k+1})I \nonumber\\
&=M+ \eta\cdot I,
 \end{align}
where we let $\eta=2\|A-\wt{A}\|_2+\gamma_{k+1}$. The transition from line~2 to line~3 is due to the fact that $\wt{A}+\|A-\wt{A}\|_2 \cdot I\succeq A\succeq 0$, and therefore $(I-\wt{P}) (\wt{A}+\|A-\wt{A}\|_2 \cdot I)(I-\wt{P})\succeq 0$. The transition from line~4 to line~5 is due to $\wt{A}\preceq A+\|A-\wt{A}\|_2 \cdot I$. The transition from line~6 to line~7 is due to the fact that $(I-P) A (I-P)\preceq \gamma_{k+1} I$.

Similarly,
\begin{equation}\label{eq:823gt8g8GDSihwf8hqw8gas}
\begin{split}
M&=P A P\\
&\preceq P A P+(I-P) A (I-P)\\
&= A\\
&\preceq \wt{A}+\|A-\wt{A}\|_2\cdot I\\
&= \wt{P} (\wt{A}+\|A-\wt{A}\|_2\cdot I)\wt{P}+(I-\wt{P}) (\wt{A}+\|A-\wt{A}\|_2\cdot I) (I-\wt{P})\\
&\preceq \wt{M}+ (2\|A-\wt{A}\|_2+\gamma_{k+1})I.
\end{split}
\end{equation}
The transition from line~1 to line~2 is due to the fact that $A\succeq 0$, and therefore $(I-P)A(I-P)^T\succeq 0$. The transition from line~3 to line~4 is due to $A\preceq \wt{A}+\|A-\wt{A}\|_2 \cdot I$. The transition from line~5 to line~6 is due to the fact that 
$$
(I-\wt{P}) \wt{A} (I-\wt{P})\preceq \nu_{k+1}(\wt{A})\cdot I\preceq (\|A-\wt{A}\|_2+\gamma_{k+1}) I.
$$

We now apply Claim~\ref{cl:384g8g834gDD} with $A=M+ (2\|A-\wt{A}\|_2+\gamma_{k+1})I$, $\Pi_A=P$, $B=\wt{M}$ and $\Pi_B=\wt{P}$. Note
that A is symmetric and positive semidefinite. Also, $B$ is symmetric and positive semidefinite because $\nu_{\min}(B)=\nu_{k}(\wt{A})\geq \nu_k(A)-||A-\wt{A}||_2 \geq \frac{99\cdot\gamma_k}{100}\geq 0$ by Weyl's inequality and the fact that $||A-\wt{A}||_2\leq \frac{\gamma_k}{100}$. Note that $\Pi_A A^+= A^+\Pi_A$, as required,  and $A\succeq B$ by~\eqref{eq:823gt8g8GDSas}. We get
\begin{align}
\label{eq:ig82eihSIINFS}
\wt{M}^+&\succeq (P(M+\eta I)P)^+-2\|P(M+\eta I)^+P\|_2\cdot \|P(I-\wt{P})\|_2\cdot I \nonumber\\
&\succeq Y_{[k]} (\Gamma_{[k]}+\eta I_k)^{-1} Y^T_{[k]}+\frac{2}{\gamma_k}\cdot \|P(I-\wt{P})\|_2\cdot I\text{~~~~~(since $\|P(M+\eta I)^+P\|_2\leq 1/\gamma_k$)} \nonumber\\
&\succeq M^+-\left(\frac{\eta}{\gamma_k^2}+\frac{2}{\gamma_k}\cdot \|P(I-\wt{P})\|_2\right)\cdot I \nonumber\\
&\succeq M^+-\left(\frac{\eta}{\gamma_k^2}+\frac{8 \|A-\wt{A}\|_2}{\gamma_k^2}\right)\cdot I.
\end{align}
The transition from line~2 to line~3 used the fact that 
\begin{equation}\label{eq:83hg8haugfugUGSUH}
\|Y_{[k]} (\Gamma_{[k]}+\eta I_k)^{-1} Y^T_{[k]}-M^+\|\leq \frac{\eta}{\gamma_k^2}.
\end{equation}
The transition from line~3 to line~4 used
\begin{equation}\label{eq:8gABIAHDxccxfsDSD}
\|P(I-\wt{P})\|_2\leq \frac{\|A-\wt{A}\|_2}{\gamma_k/4}.
\end{equation}
We verify both~\eqref{eq:83hg8haugfugUGSUH} and~\eqref{eq:8gABIAHDxccxfsDSD} below.

Similarly, to upper bound $\wt{M}^+$ in terms of $M^+$ we apply Claim~\ref{cl:384g8g834gDD} with $A=\wt{M}+ (2\|A-\wt{A}\|_2+\gamma_{k+1})I$, $\Pi_A=\wt{P}$, $B=M$ and $\Pi_B=P$. Note that $\Pi_A A= A\Pi_A$, as required, $A$ and $B$ are both symmetric and positive semidefinite, and $A\succeq B$ by~\eqref{eq:823gt8g8GDSihwf8hqw8gas}. We get
\begin{align}
\label{eq:iwehf823g8g8hASDwe23ds}
M^+&\succeq (\wt{P}(\wt{M}+\eta \cdot I)\wt{P})^+ +2\|\wt{P}(\wt{M}+\eta \cdot I)^+\|_2\cdot \|\wt{P}(I-P)\|_2\cdot I \nonumber\\
&\succeq \wt{Y}_{[k]} (\wt{\Gamma}_{[k]}+I_k)^{-1} \wt{Y}^T_{[k]} +\frac{2}{\gamma_k}\cdot \|\wt{P}(I-P)\|_2\cdot I \nonumber\\
&\succeq \wt{M}^+-\left(\frac{4\eta}{\gamma_k^2}+\frac{2}{\gamma_k}\cdot \|\wt{P}(I-P)\|_2\right)\cdot I \nonumber\\
&\succeq \wt{M}^+-\left(\frac{4\eta}{\gamma_k^2}+\frac{8 \|A-\wt{A}\|_2}{\gamma_k^2}\right)\cdot I.
\end{align}

The transition from line~1 to line~2 uses the fact that by Weil's inequality
$$
\|\wt{P}(\wt{M}+\eta \cdot I)^+\|_2=\frac1{\nu_k(\wt{A}+\eta \cdot I)}\leq \frac1{\nu_k(A)-\|A-\wt{A}\|_2+\eta}=\frac1{\nu_k(A)+\|A-\wt{A}\|_2+\gamma_{k+1}}\leq \frac1{\gamma_k},
$$ 
since $\eta=2\|A-\wt{A}\|_2+\gamma_{k+1}$. The transition from line~2 to line~3 used the fact that
\begin{equation}\label{eq:892g38gGiihDUZGD}
\|\wt{Y}_{[k]} (\wt{\Gamma}_{[k]}+\eta I_k)^{-1} Y^T_{[k]}-\wt{M}^+\|\leq \frac{4\eta}{\gamma_k}.
\end{equation}
The transition from line~3 to line~4 used
\begin{equation}\label{eq:23r8g8gdDBDN}
\|\wt{P}(I-P)\|_2\leq \frac{\|A-\wt{A}\|_2}{\gamma_k/4}.
\end{equation}
We verify both~\eqref{eq:892g38gGiihDUZGD} and~\eqref{eq:23r8g8gdDBDN} below.

Putting~\eqref{eq:ig82eihSIINFS} and~\eqref{eq:iwehf823g8g8hASDwe23ds} together, we get 
\begin{equation*}
\begin{split}
\|M^+-\wt{M}^+\|_2&\leq \frac{4\eta}{\gamma_k^2}+\frac{8 \|A-\wt{A}\|_2}{\gamma_k^2}\leq \frac{16 \|A-\wt{A}\|_2+4\gamma_{k+1}}{\gamma_k^2}
\end{split}
\end{equation*}
as required.

We now verify~\eqref{eq:83hg8haugfugUGSUH}, \eqref{eq:8gABIAHDxccxfsDSD}, \eqref{eq:892g38gGiihDUZGD}
and~\eqref{eq:23r8g8gdDBDN}. First, one has
\begin{equation*}
\begin{split}
\|Y_{[k]} (\Gamma_{[k]}^{-1}-(\Gamma_{[k]}+\eta\cdot I_k)^{-1})Y^T_{[k]}\|_2&\leq \max_{\xi\geq\gamma_k} \left(\frac1{\xi}-\frac1{\xi+\eta}\right)\\
&=\max_{\xi\geq\gamma_k} \frac{\eta}{\xi (\xi+\eta)}\\
&\leq \frac{\eta}{\gamma_k^2}\\
\end{split}
\end{equation*}
and similarly, since $\nu_k(\wt{A})\geq \nu_k(A)-\|A-\wt{A}\|_2$ by Weyl's inequality (Lemma \ref{lem_Weyl}),
\begin{align*}
\|\wt{Y}_{[k]} (\wt{\Gamma}_{[k]}^{-1}-(\wt{\Gamma}_{[k]}+\eta\cdot I_k)^{-1})\wt{Y}^T_{[k]}\|_2&\leq\max_{\xi\geq\gamma_k-\|A-\wt{A}\|_2} \left(\frac1{\xi}-\frac1{\xi+\eta}\right)\\
&=\max_{\xi\geq \gamma_k-\|A-\wt{A}\|_2} \frac{\eta}{\xi (\xi+\eta)}\\
&\leq \frac{4\eta}{\gamma_k^2}&&\text{Since $\|A-\wt{A}\|_2\leq \gamma_k/2$ by assumption}\\
\end{align*}
This verifies~\eqref{eq:83hg8haugfugUGSUH} and~\eqref{eq:892g38gGiihDUZGD}.

It remains to verify~\eqref{eq:8gABIAHDxccxfsDSD} and~\eqref{eq:23r8g8gdDBDN}. In order to bound $\|P\cdot (I-\wt{P})\|_2$ and $\|\wt{P}\cdot (I-P)\|_2$, we first note that by Weyl's inequality 
$$
\nu_{k+1}(\wt{A})\leq \nu_{k+1}(A)+||A-\wt{A}||_2\leq \gamma_k/4+\gamma_k/100< (3/4)\gamma_k
$$
and $\nu_{k}(A)=\gamma_k$ by assumption of the lemma. Hence we can apply  Theorem~\ref{lem:davis-kahan} by choice of $H=A$, $E_0=Y_{[k]}$, $E_1=Y_{-[k]}$, $A_0=\Gamma_{[k]}$, $A_1=\Gamma_{-[k]}$, and $\widetilde{H}=\widetilde{A}$, $F_0=\widetilde{Y}_{[k]}$, $F_1=\widetilde{Y}_{-[k]}$, $\Lambda_0=\widetilde{\Gamma}_{[k]}$, $\Lambda_1=\widetilde{\Gamma}_{-[k]}$.
Let $\eta=\frac{\gamma_k}{4}$. Note that the eigenvalues of $A_0=\Gamma_{[k]}$ are at least $\gamma_k$
and the eigenvalues of $\Lambda_1= \widetilde{\Gamma}_{-[k]}$ are at most $(3/4) \gamma_k=\gamma_k-\eta$. Therefore, by Theorem~\ref{lem:davis-kahan} we have
 \[\| \wt{Y}_{-[k]}^T Y_{[k]}  \|_2=\|F_1^T E_0\|_2 \leq \frac{\|F_1^T (\widetilde{A}-A) E_0\|_2}{\eta}\leq \frac{\|A-\widetilde{A}\|_2}{\gamma_k/4} \text{.}\]
Thus we have 
$\|Y_{[k]}^T \wt{Y}_{-[k]}\|_2 \leq \frac{\|A-\wt{A}\|_2}{\gamma_k/4}$. Similarly, we have 
$$
\nu_{k+1}(A)\leq \gamma_k/4
$$
and $\nu_k(\wt{A})\geq \nu_k(A)-\|A-\wt{A}\|_2\geq \gamma_k-\gamma_k/100$. Hence we can apply  Theorem~\ref{lem:davis-kahan} by choice of $H=A$, $E_0=Y_{-[k]}$, $E_1=Y_{[k]}$, $A_0=\Gamma_{-[k]}$, $A_1=\Gamma_{[k]}$, and $\widetilde{H}=\widetilde{A}$, $F_0=\widetilde{Y}_{-[k]}$, $F_1=\widetilde{Y}_{[k]}$, $\Lambda_0=\widetilde{\Gamma}_{-[k]}$, $\Lambda_1=\widetilde{\Gamma}_{[k]}$. Let $\eta=\frac{\gamma_k}{4}$. Note that the eigenvalues of $A_0=\Gamma_{-[k]}$ are at most $\gamma_{k+1}$
and the eigenvalues of $\Lambda_1 = \widetilde{\Gamma}_{[k]}$ are at least $\gamma_k-\gamma_k/100\geq \gamma_k-\eta$. Therefore, by Theorem~\ref{lem:davis-kahan} we have
\[\|\wt{Y}_{[k]}^T Y_{-[k]}\|_2=\|F_1^T E_0\| \leq \frac{\|F_1^T (\widetilde{A}-A) E_0\|}{\eta} \leq \frac{\|A-\widetilde{A}\|}{\gamma_k/4} \text{.}\]
Thus, we have $\|\wt{Y}_{[k]}^T Y_{-[k]}\|_2 \leq \frac{\|A-\wt{A}\|_2}{\gamma_k/4}$. Putting these two bounds together, we get
$$
\| P(I-\wt{P})\|_2=\| Y_{[k]} Y_{[k]}^T\wt{Y}_{-[k]}\wt{Y}_{-[k]}^T\|_2=\|Y_{[k]}^T\wt{Y}_{-[k]}\|_2\leq \frac{\|A-\wt{A}\|_2}{\gamma_k/4},
$$
and similarly 
$$
\| \wt{P}(I-P)\|_2\leq \frac{\|A-\wt{A}\|_2}{\gamma_k/4}.
$$
\end{proof}

\subsection{Stability bounds under sampling of vertices}
\label{subsubsec:cols}

The main result of this section is Lemma \ref{lem:bnd-e1}, in which we give bounds for the stability of the pseudoinverse of the rank-$k$-approximation when we are sampling columns of the $k$-step random walk matrix of a $(k,\varphi,\epsilon)$-clusterable graph.

\begin{restatable}{lemma}{lembndeone}
\label{lem:bnd-e1}
Let $k \geq 2$ be an integer, $\varphi \in (0,1)$ and $\epsilon\in (0,1)$. Let $G=(V,E)$ be a $d$-regular and $(k,\varphi,\epsilon)$-clusterable graph. Let $M$ be  the random walk transition matrix of $G$. Let  $1/n^6 < \xi < 1$, $t\geq  \frac{20\log n}{\varphi^2}$. Let $c>1$ be a large enough constant and let $s\geq c
\cdot n^{(480\cdot \epsilon / \varphi^2)}\cdot \log n \cdot k^{8}/{\xi}^2$. Let $I_S=\{i_1,\ldots, i_s\}$ be a multiset of $s$ indices chosen independently and uniformly at random from
$\{1,\dots,n\}$. Let $S$ be the $n\times s$ matrix whose $j$-th column equals $\mathds{1}_{i_j}$.  Let $M^t=U\Sigma^tU^T$ be an eigendecomposition of $M^t$. Let $\sqrt{\frac{n}{s}} \cdot M^tS=\widetilde{U}\widetilde{\Sigma}\widetilde{W}^T$  be an SVD of $\sqrt{\frac{n}{s}} \cdot M^tS$ where $\widetilde{U}\in \R^{n\times n}, \widetilde{\Sigma}\in \R^{n\times n}, \widetilde{W}\in \R^{s\times n}$. If $\frac{\epsilon}{\varphi^2}\leq \frac{1}{10^5}$ then with probability at least $1-n^{-100}$ matrix $\widetilde \Sigma_{[k]}^{-4}$ exists and we have
\[\left|\mathds{1}_x^T U_{[k]}{U}_{[k]}^T  \mathds{1}_y - (M^{t}\mathds{1}_{x})^T  (M^tS)\left(\frac{n}{s}\cdot\widetilde{W}_{[k]} \widetilde{\Sigma}^{-4}_{[k]} \widetilde{W}^T_{[k]}\right) (M^tS)^T (M^{t}\mathds{1}_{y}) \right| \leq  \frac{\xi}{n}\text{.}
\]
\end{restatable}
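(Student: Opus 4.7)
\begin{proofof}{Lemma~\ref{lem:bnd-e1} (proposal)}
The plan is to reduce the claim to a spectral stability statement about the pseudoinverse of the top-$k$ part of $M^{2t}$, then apply Lemma~\ref{lem:neg-l2-close} together with a matrix Bernstein concentration bound on the column-sampled Gram matrix.

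First I would exploit the key algebraic identity
\[
U_{[k]} U_{[k]}^T \;=\; M^t \bigl(U_{[k]}\Sigma_{[k]}^{-2t} U_{[k]}^T\bigr) M^t,
\]
which follows from the eigendecomposition $M^t = U \Sigma^t U^T$ and the fact that $\Sigma$ is diagonal. Using the SVD $\sqrt{n/s}\cdot M^tS = \widetilde{U}\widetilde{\Sigma}\widetilde{W}^T$ and the preliminary conventions on zero padding of $\widetilde{W}$, a direct calculation gives $M^t S\,\widetilde{W}_{[k]} = \sqrt{s/n}\cdot\widetilde{U}_{[k]}\widetilde{\Sigma}_{[k]}$, and hence
\[
(M^t S)\Bigl(\tfrac{n}{s}\widetilde{W}_{[k]}\widetilde{\Sigma}_{[k]}^{-4}\widetilde{W}_{[k]}^T\Bigr)(M^t S)^T \;=\; \widetilde{U}_{[k]}\widetilde{\Sigma}_{[k]}^{-2}\widetilde{U}_{[k]}^T.
\]
Therefore the quantity to bound equals
\[
\mathds{1}_x^T M^t\bigl(U_{[k]}\Sigma_{[k]}^{-2t}U_{[k]}^T - \widetilde{U}_{[k]}\widetilde{\Sigma}_{[k]}^{-2}\widetilde{U}_{[k]}^T\bigr) M^t \mathds{1}_y,
\]
which by Cauchy--Schwarz is at most $\|M^t\mathds{1}_x\|_2\cdot\|M^t\mathds{1}_y\|_2\cdot \|U_{[k]}\Sigma_{[k]}^{-2t}U_{[k]}^T - \widetilde{U}_{[k]}\widetilde{\Sigma}_{[k]}^{-2}\widetilde{U}_{[k]}^T\|_2$.

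Next I would control the two norms $\|M^t\mathds{1}_x\|_2$. Expanding in the eigenbasis of $M$, writing $M^t\mathds{1}_x = \sum_i (1-\lambda_i/2)^t u_i(x)\,u_i$, the contribution from $i>k$ is at most $n^{-10}$ in $\ell_2^2$ norm because $\lambda_i\geq \varphi^2/2$ and $t\geq 20\log n/\varphi^2$, while the contribution from $i\leq k$ is bounded by $\sum_{i\leq k} u_i(x)^2\leq k\cdot \|u\|_\infty^2$ times the $\ell_\infty$ bound of Lemma~\ref{lem:l-inf-bnd}, giving $\|M^t\mathds{1}_x\|_2\leq k\cdot n^{-1/2 + O(\epsilon/\varphi^2)}$, uniformly over $x\in V$.

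The third factor, the operator-norm distance of the two top-$k$ pseudoinverses, is the place where I would invoke Lemma~\ref{lem:neg-l2-close} applied with $A := M^{2t} = U\Sigma^{2t}U^T$ and $\widetilde A := \tfrac{n}{s}(M^tS)(M^tS)^T = \widetilde{U}\widetilde{\Sigma}^2\widetilde{U}^T$. Its $k$-th eigenvalue $\gamma_k = (1-\lambda_k/2)^{2t}\geq n^{-O(\epsilon/\varphi^2)}$ (since $\lambda_k\leq 2\epsilon$), whereas $\gamma_{k+1}\leq e^{-\varphi^2 t/2}\leq n^{-5}$, so the required spectral gap $\gamma_{k+1}<\gamma_k/4$ holds. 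Lemma~\ref{lem:neg-l2-close} then yields
\[
\bigl\|U_{[k]}\Sigma_{[k]}^{-2t}U_{[k]}^T - \widetilde{U}_{[k]}\widetilde{\Sigma}_{[k]}^{-2}\widetilde{U}_{[k]}^T\bigr\|_2 \;\leq\; \frac{16\|A-\widetilde A\|_2 + 4\gamma_{k+1}}{\gamma_k^2},
\]
provided I can show $\|A-\widetilde A\|_2\leq \gamma_k/100$.

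The main obstacle, and the part where most effort will go, is the spectral concentration inequality
\[
\bigl\|M^{2t}-\tfrac{n}{s}(M^tS)(M^tS)^T\bigr\|_2 \;\leq\; \tfrac{\xi}{k^{O(1)}\cdot n^{O(\epsilon/\varphi^2)}}
\]
with probability $\geq 1-n^{-100}$. I would write $\tfrac{n}{s}(M^tS)(M^tS)^T = \sum_{j=1}^s X_j$ with i.i.d.\ rank-one summands $X_j = \tfrac{n}{s}(M^t\mathds{1}_{i_j})(M^t\mathds{1}_{i_j})^T$ whose expectation is $M^{2t}$, and apply matrix Bernstein. The per-summand operator-norm bound $\|X_j\|_2\leq \tfrac{n}{s}\max_a\|M^t\mathds{1}_a\|_2^2$ and the matrix variance proxy $\|\sum_j \mathbb{E}[X_j^2]\|_2\leq \tfrac{n}{s}\max_a\|M^t\mathds{1}_a\|_2^2\cdot \|M^{2t}\|_2$ both reduce, via the uniform bound $\max_a\|M^t\mathds{1}_a\|_2^2\leq k^2 n^{-1 + O(\epsilon/\varphi^2)}$ obtained above, to quantities of order $k^2 n^{O(\epsilon/\varphi^2)}/s$. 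Chasing these through the matrix Bernstein tail forces a choice of $s$ that matches the hypothesis $s\gtrsim n^{480\epsilon/\varphi^2}\log n\cdot k^8/\xi^2$. Combining this concentration bound with the Cauchy--Schwarz reduction and the uniform bounds on $\|M^t\mathds{1}_x\|_2$ and $\|M^t\mathds{1}_y\|_2$ yields the claimed error bound $\xi/n$; invertibility of $\widetilde\Sigma_{[k]}^{-4}$ follows from the same spectral concentration, since the $k$-th eigenvalue of $\widetilde A$ is at least $\gamma_k-\|A-\widetilde A\|_2 > 0$ by Weyl.
\end{proofof}
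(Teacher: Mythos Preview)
Your proposal is correct and follows essentially the same approach as the paper: the paper establishes the two algebraic identities you state (its Steps~1 and~2), then invokes Lemma~\ref{lem:v-close} (which in turn combines Lemma~\ref{lem:neg-l2-close} with the matrix-Bernstein concentration of Lemma~\ref{lem:G-Ghat} and the spectral gap estimates you give for $\gamma_k,\gamma_{k+1}$), and finishes with the submultiplicativity step $|m_x^T E\,m_y|\le\|m_x\|_2\|E\|_2\|m_y\|_2$ together with Lemma~\ref{lem:Mt-bnd} for $\|M^t\mathds{1}_x\|_2$. The only packaging difference is that the paper isolates the uniform bound on $\|M^t\mathds{1}_x\|_2$ and the concentration step as separate lemmas, whereas you sketch them inline.
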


To prove Lemma \ref{lem:bnd-e1} we require the following matrix concentration bound, which is a generalization of Bernstein's inequality to matrices.

\begin{lemma}[Matrix Bernstein \cite{tropp2012user}] 
\label{lem:Bernstein}
Consider a finite sequence ${X_i}$ of independent, random matrices with dimensions $d_1 \times d_2$. Assume that each random matrix satisfies $\mathbb{E}[X_i] = 0$ and $\|X_i\|_2 \leq b$ almost surely. Define $\sigma^2=\max\{\|\sum_i \mathbb{E}[X_iX_i^T]\|_2, \|\sum_i
\mathbb{E}[X_i^T X_i]\|_2\}$. Then for all $t\geq 0$, \[\mathbb{P}\left[\|\sum_i X_i\|_2 \geq t\right] \leq (d_1+d_2)\cdot \exp\left(\frac{-t^2 / 2}{\sigma^2+bt / 3}\right) \text{.}\]
\end{lemma}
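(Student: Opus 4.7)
The plan is to follow the standard matrix Laplace transform approach due to Ahlswede--Winter and Tropp. Since this is a well-known result, I would cite it from \cite{tropp2012user} but for completeness outline the four main ingredients.

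First, I would reduce the general rectangular case to the Hermitian case via the Hermitian dilation trick: for each $X_i \in \R^{d_1 \times d_2}$ define
\[
\mathcal{H}(X_i) := \begin{pmatrix} 0 & X_i \\ X_i^T & 0 \end{pmatrix} \in \R^{(d_1+d_2)\times(d_1+d_2)}.
\]
Each $\mathcal{H}(X_i)$ is symmetric with $\|\mathcal{H}(X_i)\|_2 = \|X_i\|_2 \leq b$, $\mathbb{E}[\mathcal{H}(X_i)]=0$, and by direct computation
$\mathcal{H}(X_i)^2 = \mathrm{diag}(X_iX_i^T,\, X_i^TX_i)$, so $\|\sum_i \mathbb{E}[\mathcal{H}(X_i)^2]\|_2 = \sigma^2$. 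Crucially, the eigenvalues of $\mathcal{H}(\sum_i X_i)$ are $\{\pm \sigma_j(\sum_i X_i)\}$, so $\lambda_{\max}(\mathcal{H}(\sum_i X_i)) = \|\sum_i X_i\|_2$. This reduces the claim to a Hermitian tail bound with ambient dimension $d = d_1+d_2$.

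Second, I would apply the matrix Laplace transform inequality: for any symmetric random matrix $Y$ and $\theta > 0$,
\[
\mathbb{P}[\lambda_{\max}(Y) \geq t] \leq e^{-\theta t}\, \mathbb{E}\,\mathrm{tr}\exp(\theta Y).
\]
Setting $Y = \sum_i \mathcal{H}(X_i)$, I would invoke Lieb's concavity theorem (the statement that $A \mapsto \mathrm{tr}\exp(H + \log A)$ is concave on PSD matrices) together with Jensen's inequality to obtain the master bound
\[
\mathbb{E}\,\mathrm{tr}\exp\!\Big(\theta\sum_i \mathcal{H}(X_i)\Big) \leq \mathrm{tr}\exp\!\Big(\sum_i \log \mathbb{E}\,e^{\theta \mathcal{H}(X_i)}\Big).
\]
This is the key step and the most non-trivial input; I would just cite Lieb's theorem rather than reprove it.

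Third, I would bound the matrix cumulant generating function of each $\mathcal{H}(X_i)$ in the semidefinite order. Using the power-series expansion $e^{\theta X} = I + \theta X + \sum_{q\geq 2} (\theta X)^q/q!$ together with $\|X_i\|_2 \leq b$, one obtains the pointwise bound
\[
\mathbb{E}\,e^{\theta \mathcal{H}(X_i)} \preceq I + \frac{\theta^2/2}{1 - \theta b/3}\,\mathbb{E}[\mathcal{H}(X_i)^2]
\]
valid for $0 < \theta < 3/b$, and hence by operator monotonicity of $\log$,
\[
\log \mathbb{E}\,e^{\theta \mathcal{H}(X_i)} \preceq g(\theta)\,\mathbb{E}[\mathcal{H}(X_i)^2], \qquad g(\theta) := \frac{\theta^2/2}{1 - \theta b/3}.
\]
Summing and using monotonicity of $\mathrm{tr}\exp$ in the PSD order yields
\[
\mathrm{tr}\exp\!\Big(\sum_i \log \mathbb{E}\,e^{\theta\mathcal{H}(X_i)}\Big) \leq d\cdot \exp\!\big(g(\theta)\,\sigma^2\big).
\]

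Finally, I would combine the pieces to get $\mathbb{P}[\|\sum_i X_i\|_2 \geq t] \leq d\cdot \exp(-\theta t + g(\theta)\sigma^2)$ and optimize: the choice $\theta = t/(\sigma^2 + bt/3) \in (0, 3/b)$ gives the stated bound with $d = d_1+d_2$. The main obstacle is really Lieb's concavity theorem (Step 2); the remaining manipulations are routine, and the scalar-to-matrix lifting of the Bernstein argument in Step 3 requires only operator monotonicity of $\log$ and the PSD bound on $\mathbb{E}[X^q] \preceq b^{q-2}\mathbb{E}[X^2]$ for $q \geq 2$.
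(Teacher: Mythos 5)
Your outline is correct and is exactly the proof of this result in the cited reference \cite{tropp2012user} (Hermitian dilation, the Laplace-transform/Lieb master bound, the Bernstein bound on the matrix cumulant generating function, and the optimization $\theta = t/(\sigma^2+bt/3)$); the paper itself gives no proof and simply cites Tropp, so there is nothing to compare beyond noting the match.
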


Equiped with the Matrix Bernstein bound, we can show that under certain spectral conditions we can approximate a matrix $A A^T$ by $(AS)(AS)^T$, i.e. by sampling rows of $M$. 
The idea is to write $AA^T = \sum_{i=1}^n (A\mathds{1}_i) (A\mathds{1}_i)^T$ as a sum over the outer products of its columns and make the sample size depend on the spectral norm
of the summands. 

\begin{restatable}{lemma}{lemGGhat}
\label{lem:G-Ghat}
Let $A \in \R^{n\times n}$ be a matrix. Let $B = \max_{\ell \in \{1,\dots,n\}} \|(A \mathds{1}_\ell) (A \mathds{1}_{\ell})^T\|_2$.
Let $1 > \xi > 0$. Let $s \ge \frac{40 n^2 B^2 \log n}{\xi^2}$. Let $I_S=\{i_1,\ldots, i_s\}$ be a multiset of $s$ indices chosen independently and uniformly at random from
$\{1,\dots,n\}$. Let $S$ be the $n\times s$ matrix whose $j$-th column equals $\mathds{1}_{i_j}$. Then we have
$$
 \mathbb{P}\left[ \left\|A A^T-\frac{n}{s}(AS)(AS)^T\right\|_2 \ge \xi  \right]  \leq n^{-100}.
$$
\end{restatable}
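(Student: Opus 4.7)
The plan is to apply the Matrix Bernstein inequality (Lemma~\ref{lem:Bernstein}) to a sum of independent, mean-zero random matrices encoding the sampling error. Write
$$
AA^T \;=\; \sum_{\ell=1}^{n} (A\mathds{1}_\ell)(A\mathds{1}_\ell)^T \qquad\text{and}\qquad \frac{n}{s}(AS)(AS)^T \;=\; \frac{n}{s}\sum_{j=1}^{s} (A\mathds{1}_{i_j})(A\mathds{1}_{i_j})^T,
$$
and for each $j\in\{1,\dots,s\}$ define the symmetric matrix
$$
X_j \;:=\; \frac{n}{s}(A\mathds{1}_{i_j})(A\mathds{1}_{i_j})^T - \frac{1}{s}AA^T.
$$
Since $i_j$ is uniform on $\{1,\dots,n\}$, we have $\mathbb{E}[(A\mathds{1}_{i_j})(A\mathds{1}_{i_j})^T] = \frac{1}{n}AA^T$, so $\mathbb{E}[X_j]=0$ and $\sum_j X_j = \frac{n}{s}(AS)(AS)^T - AA^T$.

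Next I would bound the two quantities needed by Matrix Bernstein. For the uniform bound, using the triangle inequality together with $\|AA^T\|_2 \leq nB$ (which itself follows from $AA^T = \sum_\ell (A\mathds{1}_\ell)(A\mathds{1}_\ell)^T$ and the definition of $B$), I get $\|X_j\|_2 \leq \frac{n}{s}B + \frac{1}{s}\|AA^T\|_2 \leq \frac{2nB}{s}=:b$. For the variance, using that $(A\mathds{1}_\ell)(A\mathds{1}_\ell)^T(A\mathds{1}_\ell)(A\mathds{1}_\ell)^T = \|A\mathds{1}_\ell\|_2^2 (A\mathds{1}_\ell)(A\mathds{1}_\ell)^T$ and $\|A\mathds{1}_\ell\|_2^2 = \|(A\mathds{1}_\ell)(A\mathds{1}_\ell)^T\|_2 \leq B$, I can show
$$
\mathbb{E}[X_j^2] \;\preceq\; \mathbb{E}\!\left[\frac{n^2}{s^2}\bigl((A\mathds{1}_{i_j})(A\mathds{1}_{i_j})^T\bigr)^2\right] \;\preceq\; \frac{nB}{s^2}\,AA^T,
$$
so that $\|\mathbb{E}[X_j^2]\|_2 \leq \frac{n^2B^2}{s^2}$, and by independence $\sigma^2 := \bigl\|\sum_j \mathbb{E}[X_j^2]\bigr\|_2 \leq \frac{n^2B^2}{s}$.

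Finally I would feed these parameters into Lemma~\ref{lem:Bernstein} with $t=\xi$ to obtain
$$
\mathbb{P}\!\left[\left\|\tfrac{n}{s}(AS)(AS)^T - AA^T\right\|_2 \geq \xi\right] \;\leq\; 2n\cdot \exp\!\left(-\frac{\xi^2/2}{\tfrac{n^2B^2}{s} + \tfrac{2nB\xi}{3s}}\right).
$$
Plugging in $s \geq \tfrac{40 n^2 B^2 \log n}{\xi^2}$ makes the exponent at most $-O(\log n)$ of sufficient magnitude to give the desired $n^{-100}$ bound (one may assume $B \geq \xi/n$, as otherwise $\|AA^T\|_2 \leq nB < \xi$ and the statement is trivial, which resolves the $b\,t$ term in the Bernstein denominator). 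There is no genuine obstacle in this proof; the only care required is verifying the spectral ordering in the variance computation and being explicit about why the $b\,t$ term in the Bernstein denominator is dominated by $\sigma^2$ for the chosen value of $s$.
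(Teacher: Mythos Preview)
Your proposal is correct and follows essentially the same approach as the paper: both center the rank-one samples, bound $\|X_j\|_2 \le 2nB/s$ and the variance by $O(n^2B^2/s)$, and then invoke Matrix Bernstein (Lemma~\ref{lem:Bernstein}). Your variance step via the PSD ordering $\mathbb{E}[X_j^2]\preceq \frac{nB}{s^2}AA^T$ is in fact a bit cleaner than the paper's triangle-inequality argument and yields the same bound; your parenthetical handling of the $b\,t$ term (reducing to the case $nB\gtrsim \xi$) is a detail the paper simply elides.
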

\begin{proof}
Observe that 
\begin{equation}
\label{eq:G-def}
A A^T=\sum_{\ell \in \{1,\dots,n\}} (A\mathds{1}_{\ell})(A\mathds{1}_{\ell})^T\text{.}
\end{equation}
and
\begin{equation}
\label{eq:G-hat-def}
\frac{n}{s}(AS)(AS)^T=\frac{n}{s}\cdot\sum_{i_j\in I_S} (A\mathds{1}_{i_j})(A\mathds{1}_{i_j})^T\text{.}
\end{equation}
For every $j=1,2,\ldots, s$ let $X_{j}=\frac{n}{s}\cdot (A\mathds{1}_{i_j})(A\mathds{1}_{i_j})^T$. Thus we have 
\begin{equation}
\label{eq:expXi}
\mathbb{E}[X_{j}]=\frac{n}{s}\cdot \mathbb{E}[(A\mathds{1}_{i_j})(A\mathds{1}_{i_j})^T] = \frac{n}{s}\cdot \frac{1}{n} \sum_{\ell \in \{1,\dots,n\} } (A\mathds{1}_{\ell})(A\mathds{1}_{\ell})^T = \frac{1}{s}\cdot A A^T
\end{equation}
By equality~\eqref{eq:G-hat-def} we have $\frac{n}{s}(AS)(AS)^T=\sum_{j=1}^s X_{j}$. Thus by equality~\eqref{eq:expXi} we get
\begin{equation}
\label{eq:G-Ghat}
\left\|\frac{n}{s}(AS)(AS)^T-A A^T\right\|_2=\left\|\sum_{j=1}^s (X_{j}- \mathbb{E}[X_{j}])\right\|_2 \text{.}
\end{equation}
Let $Z_{j}=X_{j}- \mathbb{E}[X_{j}]$.
We then have
$
\|Z_{j}\|_2 = \|X_{j}-\mathbb{E}[X_{j}]\|_2 \leq \|X_{j}\|_2 + \|\mathbb{E}[X_{j}]\|_2 
$
Now let $B = \max_{\ell \in \{1,\ldots,n\}} \|(A\mathds{1}_{\ell})(A\mathds{1}_\ell)^T\|_2$.
Furthermore, by our assumption we have 
\begin{equation}
\label{eq:XM}
\|X_{j}\|_2= \left\|\frac{n}{s}\cdot (A \mathds{1}_{j})(A\mathds{1}_{j})^T\right\|_2 \le \frac{n}{s} \cdot B
\end{equation}
By subadditivity of the spectral norm and \eqref{eq:expXi} we get
\begin{equation}
\label{eq:EXj}
\|\mathbb{E}[X_j]\|_2 \le \frac{n}{s} \cdot B
\end{equation}
Putting~\eqref{eq:XM} and~\eqref{eq:EXj} together we get
\begin{equation}
\label{eq:Z_i-bnd}
\|Z_{j}\|_2=\|X_{j}-\mathbb{E}[X_{j}]\|_2\leq \|X_{j}\|_2+\|\mathbb{E}[X_{j}]\|_2  \leq 2 \cdot \frac{n}{s} \cdot B
\end{equation}
We now bound for the variance. Since $Z_j$ is symmetric, we have $Z_j^T Z_j = Z_j Z_j^T = Z_j^2$.
\begin{align*}
\left\| \sum_{j=1}^s \mathbb{E}[Z_{j}^2]\right\|_2 = s\cdot \| \mathbb{E}[Z_{j}^2]\|_2 = s\cdot \| \mathbb{E}[X_{j}^2]-\mathbb{E}[X_{j}]^2 \|_2 
\leq s\cdot \| \mathbb{E}[X_{j}^2]\|_2 + s\cdot \|\mathbb{E}[X_{j}]^2 \|_2
\end{align*}
By submultiplicativity of the spectral norm we get
\begin{equation}
\label{eq:var1}
\| \mathbb{E}[X_{j}^2]\|_2=\left\|\frac{1}{n} \cdot \frac{n^2}{s^2} \sum_{\ell \in \{1,\dots, n\}} ((A\mathds{1}_{\ell})(A\mathds{1}_{\ell})^T)^2\right\|_2
\leq \frac{n^2}{s^2} \cdot B^2 
\end{equation}
Moreover by submultiplicativity of spectral norm we have $\|\mathbb{E}[X_{j}]^2 \|_2\le\|\mathbb{E}[X_{j}] \|^2_2 \le \frac{n^2}{s^2} \cdot B^2$.
Putting things together we obtain
$$
\| \sum_{j=1}^s \mathbb{E}[Z_{j}^2]\|_2 \leq \frac{2n^2 B^2}{s}
$$
Now we can apply Lemma ~\ref{lem:Bernstein} and we get with $b = 2 \frac{n}{s} B$ and $\sigma^2 \leq \frac{2n^2B^2}{s}$ using $s \ge \frac{40 n^2 B^2 \log n}{\xi^2}$
\begin{equation}
\label{eq:bernstein}
\mathbb{P}\left[ \|\sum_{j=1}^s Z_{j}\|_2 > \xi  \right]  \leq 2n \cdot \text{exp}\left(\frac{\frac{-\xi^2}{2}}{\sigma^2+\frac{b\xi}{3}} \right) \le n^{-100}
\end{equation}
\end{proof}
The following lemma upper bounds the collision probability from {\bf every} vertex in  a $(k, \varphi, \e)$-clusterable graph using our $\ell_\infty$ norm bounds on the bottom $k$ eigenvectors of the Laplacian of such graphs\footnote{It is interesting to note that a weaker average case version of this lemma was used in two prior works on testing graph cluster structure~\cite{DBLP:conf/stoc/CzumajPS15} and~\cite{chiplunkar2018testing}. The stronger version of the lemma presented here is important for spectral concentration bounds that we present, which are in turn crucial for sublinear time dot product access to the spectral embedding.}: 
\begin{restatable}{lemma}{lemMtbnd}\label{lem:Mt-bnd}
Let $k \geq 2$ be an integer, $\varphi \in (0,1)$ and $\epsilon\in (0,1)$. Let $G=(V,E)$ be a $d$-regular and that admits a $(k,\varphi,\epsilon)$-clustering  $C_1, \ldots , C_k$.  Let $M$ be the random walk transition matrix of $G$. For any $t\geq \frac{20\log n}{\varphi^2}$ and any $x\in V$ we have 
\[\|M^{t}\mathds{1}_{x}\|_2 \leq O(k\cdot n^{-1/2+(20\epsilon /\varphi^2)}) \text{.}\]
\end{restatable}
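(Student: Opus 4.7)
The plan is to use the eigendecomposition $M = U\Sigma U^T$ where $\Sigma_{ii} = 1 - \lambda_i/2$ (so that the bottom $k$ eigenvalues of $L$ correspond to the top $k$ eigenvalues of $M$), split the expression for $\|M^t\mathds{1}_x\|_2^2$ into a contribution from the top-$k$ eigenspace and a tail contribution, and control each piece using the spectral gap together with the $\ell_\infty$-bound from Lemma~\ref{lem:l-inf-bnd}.

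More concretely, writing $u_1,\dots,u_n$ for the orthonormal eigenbasis of $L$, Parseval gives
\begin{equation*}
\|M^t\mathds{1}_x\|_2^2 \;=\; \sum_{i=1}^n \Bigl(1-\tfrac{\lambda_i}{2}\Bigr)^{2t} u_i(x)^2 \;=\; \underbrace{\sum_{i=1}^k \Bigl(1-\tfrac{\lambda_i}{2}\Bigr)^{2t} u_i(x)^2}_{(\mathrm{I})} \;+\; \underbrace{\sum_{i=k+1}^n \Bigl(1-\tfrac{\lambda_i}{2}\Bigr)^{2t} u_i(x)^2}_{(\mathrm{II})}.
\end{equation*}
For term (II), Lemma~\ref{lem:bnd-lambda} gives $\lambda_i \geq \varphi^2/2$ for $i \geq k+1$, so each factor $(1-\lambda_i/2)^{2t}$ is bounded by $(1-\varphi^2/4)^{2t} \leq \exp(-t\varphi^2/2) \leq n^{-10}$ for $t \geq 20\log n/\varphi^2$; combined with $\sum_i u_i(x)^2 = \|\mathds{1}_x\|_2^2 = 1$, this gives $(\mathrm{II}) \leq n^{-10}$, which is negligible.

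For term (I), each factor $(1-\lambda_i/2)^{2t}$ is at most $1$, so it suffices to bound $\sum_{i=1}^k u_i(x)^2$. Since $\lambda_i \leq 2\epsilon$ for $i \leq k$ by Lemma~\ref{lem:bnd-lambda}, we can apply the $\ell_\infty$-bound of Lemma~\ref{lem:l-inf-bnd} to each such $u_i$, yielding $u_i(x)^2 \leq n^{40\epsilon/\varphi^2} \cdot \frac{160}{\min_j |C_j|}$. Proposition~\ref{lem:min_size} then gives $\min_j |C_j| = \Omega(n/k)$, so each $u_i(x)^2 \leq O(k \cdot n^{-1+40\epsilon/\varphi^2})$, and summing over $i=1,\dots,k$ yields $(\mathrm{I}) \leq O(k^2 \cdot n^{-1+40\epsilon/\varphi^2})$.

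Putting the two together gives $\|M^t\mathds{1}_x\|_2^2 \leq O(k^2 n^{-1+40\epsilon/\varphi^2})$, so taking square roots yields the claimed bound. The entire argument is essentially mechanical once the $\ell_\infty$-bound from Lemma~\ref{lem:l-inf-bnd} is in hand — that lemma is the key ingredient, since a naïve bound based only on $\|u_i\|_2=1$ would give only $u_i(x)^2 \leq 1$ and thus a trivial bound for (I). There is no real obstacle in this lemma itself; the substance was already absorbed into the earlier moment bounds of Section~\ref{sec:moment-bounds}.
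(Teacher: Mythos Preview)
Your proof is correct and essentially identical to the paper's: both split $\|M^t\mathds{1}_x\|_2^2$ via the eigendecomposition into a top-$k$ part and a tail, kill the tail using $\lambda_{k+1}\geq\varphi^2/2$ from Lemma~\ref{lem:bnd-lambda} together with the choice of $t$, and bound the top-$k$ part by applying the $\ell_\infty$-bound of Lemma~\ref{lem:l-inf-bnd} to each $u_i(x)$ and invoking $\min_j|C_j|=\Omega(n/k)$. The only cosmetic difference is notation (the paper writes $\beta_j=u_j(x)$).
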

\begin{proof}
Let $L$ be the normalized Laplacian of $G$. Recall that $(u_1,\ldots,u_n)$ are an orthonormal basis of eigenvectors of $L$ with corresponding eigenvalues $0=\lambda_1\leq\ldots \leq\lambda_n$. Observe that each $u_i$ is also an eigenvector of $M$, with eigenvalue $1-\frac{\lambda_i}{2}$. We write $\mathds{1}_x$ in the eigenbasis of $L$ as $\mathds{1}_x = \sum_{j=1}^n \beta_j u_j$ and note that the $\beta_j$ correspond to the row of $x$ in the matrix $U$.
We have
\[
M^t \mathds{1}_x = M^t \left( \sum_{j=1}^n \beta_j u_j \right)= \sum_{j=1}^n \beta_j M^t u_j = \sum_{j=1}^n \beta_j \left(1-\frac{\lambda_j}{2}\right)^t u_j.
\]
Thus we get
\begin{equation}\label{eq:mmt}
\| M^t \mathds{1}_x\|_2^2 = \sum_{j=1}^n \beta_j^2 \left(1-\frac{\lambda_j}{2}\right)^{2t} \le \sum_{j=1}^k \beta_j^2 + \left(1-\frac{\lambda_{k+1}}{2}\right)^{2t}  \cdot \sum_{j=k+1}^n \beta_j^2.
\end{equation}

Note that $G$ is $(k,\varphi,\epsilon)$-clusterable, therefore by Lemma \ref{lem:bnd-lambda} we have $\lambda_{k+1} \geq \frac{\varphi^2}{2}$. Note that $t\geq \frac{20\log n}{\varphi^2}$. Hence, we have
\begin{equation}\label{eq:lambkk11}
\left(1-\frac{\lambda_{k+1}}{2}\right)^{2t} \leq n^{-10} \text{.}
\end{equation}
Moreover since $G$ is $(k,\varphi,\epsilon)$-clusterable and $\min_i{|C_i|}\geq \Omega(\frac{n}{k})$ by Lemma \ref{lem:l-inf-bnd} for all $j\in [k]$ we have 
\begin{equation}\label{eq:betaj}
\beta_j \le \|u_j\|_\infty \leq O(\sqrt{k}\cdot n^{-1/2+(20\epsilon /\varphi^2)}) \text{.}
\end{equation}
Thus by \eqref{eq:mmt}, \eqref{eq:lambkk11} and \eqref{eq:betaj} we get
$$
\|M^t\mathds{1}_x\|^2_2  \le O(k\cdot k \cdot \frac{1}{n} \cdot n^{40\epsilon/\varphi^2}) + n \cdot n^{-10}.  
$$
Therefore we have
\[\|M^{t}\mathds{1}_{x}\|_2 \leq O(k\cdot n^{-1/2+(20\epsilon /\varphi^2)}) \text{.}\]
\end{proof}
Combining the previous lemmas and Lemma \ref{lem:neg-l2-close} we obtain Lemma \ref{lem:v-close}. We show that for $(k,\varphi,\epsilon)$-clusterable graphs, the outer products of the columns of the $t$-step random walk
transition matrix have small spectral norm. This is because the matrix power is mostly determined by the first $k$ eigenvectors and by the fact that these eigenvectors
have bounded infinity norm.

\begin{restatable}{lemma}{lemvclose}\label{lem:v-close}
Let $k \geq 2$ be an integer, $\varphi \in (0,1)$ and $\epsilon\in (0,1)$. Let $G=(V,E)$ be a $d$-regular and $(k,\varphi,\epsilon)$-clusterable graph. Let $M$ be the random walk transition matrix of $G$. Let $1> \xi>1/n^{8}$,
$t\geq \frac{20 \log n}{\varphi^2}$. Let $c>1$ be a large enough constant and let $s\ge  c\cdot k^4\cdot  n^{(400\cdot  \epsilon/\varphi^2)}\log n / \xi^2  $.  Let $I_S=\{i_1,\ldots, i_s\}$ be a multiset of $s$ indices chosen independently and uniformly at random from
$\{1,\dots,n\}$. Let $S$ be the $n\times s$ matrix whose $j$-th column equals $\mathds{1}_{i_j}$. Let $M^t=U\Sigma^tU^T$ be an eigendecomposition of $M^t$. Let $\sqrt{\frac{n}{s}} \cdot M^tS=\widetilde{U}\widetilde{\Sigma}\widetilde{W}^T$  be an SVD of $\sqrt{\frac{n}{s}} \cdot M^tS$ where $\widetilde{U}\in \R^{n\times n}, \widetilde{\Sigma}\in \R^{n\times n}, \widetilde{W}\in \R^{s\times n}$.
If $\frac{\epsilon}{\varphi^2}\leq \frac{1}{10^5}$ then with probability at least $1-n^{-100}$ matrix $\widetilde \Sigma_{[k]}^{-2}$ exists and we have
$$
\left|\left|U_{[k]} {\Sigma}_{[k]}^{-2t} U_{[k]}^T - \widetilde{U}_{[k]} \widetilde{\Sigma}_{[k]}^{-2} \widetilde{U}_{[k]}^T \right|\right|_2 < \xi
$$
\end{restatable}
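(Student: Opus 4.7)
\begin{proofof}{proposal for Lemma \ref{lem:v-close}}
The plan is to view this as an instance of the abstract perturbation bound in Lemma \ref{lem:neg-l2-close} applied to the pair of symmetric PSD matrices
\[
A := M^t (M^t)^T = U \Sigma^{2t} U^T, \qquad \widetilde{A} := \tfrac{n}{s}\, (M^t S)(M^t S)^T = \widetilde{U}\, \widetilde{\Sigma}^{2}\, \widetilde{U}^T,
\]
noting that $U_{[k]} \Sigma_{[k]}^{-2t} U_{[k]}^T$ and $\widetilde{U}_{[k]} \widetilde{\Sigma}_{[k]}^{-2} \widetilde{U}_{[k]}^T$ are exactly the ``inverse rank-$k$ parts'' of $A$ and $\widetilde{A}$. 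So the three things I need to check are (i) a spectral gap at the $k$-th eigenvalue of $A$, (ii) the perturbation $\|A-\widetilde{A}\|_2$ is small, and (iii) the final bound delivered by Lemma \ref{lem:neg-l2-close} is at most $\xi$.

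For the spectral gap of $A$: the eigenvalues of $A$ are $\gamma_i = (1-\lambda_i/2)^{2t}$. Using $\lambda_k \le 2\e$ from Lemma \ref{lem:bnd-lambda} together with $t = 20\log n/\varphi^2$ gives
\[
\gamma_k \;\geq\; (1-\e)^{2t} \;\geq\; n^{-80\e/\varphi^2},
\]
while using $\lambda_{k+1} \ge \varphi^2/2$ from Lemma \ref{lem:bnd-lambda} gives $\gamma_{k+1} \le (1-\varphi^2/4)^{2t} \le e^{-\varphi^2 t/2} = n^{-10}$. Since $\e/\varphi^2 \le 10^{-5}$, we have $\gamma_k \gg \gamma_{k+1}$, so in particular $\gamma_{k+1} < \gamma_k/4$, which is one of the hypotheses of Lemma \ref{lem:neg-l2-close}.

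For the perturbation bound, I apply Lemma \ref{lem:G-Ghat} to the matrix $M^t$, so that $M^t (M^t)^T = A$ and $(n/s)(M^tS)(M^tS)^T = \widetilde{A}$. The quantity $B$ in that lemma is $\max_{x\in V} \|M^t \mathds{1}_x\|_2^2$, and our row-$\ell_2$ bound for short random walks (Lemma \ref{lem:Mt-bnd}) gives $B \le O(k^2 n^{-1+40\e/\varphi^2})$. Setting $\zeta := \gamma_k^2 \xi / 32$ in Lemma \ref{lem:G-Ghat} then requires
\[
s \;\gtrsim\; \frac{n^2 B^2 \log n}{\zeta^2} \;\lesssim\; \frac{k^4 \, n^{80\e/\varphi^2}\, \log n}{\gamma_k^4\, \xi^2} \;\lesssim\; \frac{k^4\, n^{80\e/\varphi^2 + 320\e/\varphi^2} \log n}{\xi^2} \;=\; \frac{k^4\, n^{400\e/\varphi^2} \log n}{\xi^2},
\]
which is exactly the hypothesis on $s$. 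Consequently with probability $\ge 1-n^{-100}$ we get $\|A-\widetilde{A}\|_2 \le \gamma_k^2 \xi/32$; since $\gamma_k \le 1$ and $\xi < 1$ this is also at most $\gamma_k/100$, so the remaining hypothesis of Lemma \ref{lem:neg-l2-close} is satisfied and, in particular, $\widetilde{\Sigma}_{[k]}^{-2}$ exists (since Weyl's inequality gives $\nu_k(\widetilde{A}) \ge \gamma_k - \|A-\widetilde{A}\|_2 > 0$).

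Finally, plugging into Lemma \ref{lem:neg-l2-close},
\[
\Bigl\| U_{[k]} \Sigma_{[k]}^{-2t} U_{[k]}^T - \widetilde{U}_{[k]} \widetilde{\Sigma}_{[k]}^{-2} \widetilde{U}_{[k]}^T \Bigr\|_2 \;\le\; \frac{16 \|A-\widetilde{A}\|_2 + 4 \gamma_{k+1}}{\gamma_k^2} \;\le\; \frac{\xi}{2} + \frac{4 n^{-10}}{n^{-160\e/\varphi^2}} \;<\; \xi,
\]
using $\gamma_{k+1} \le n^{-10}$ and $\gamma_k^2 \ge n^{-160\e/\varphi^2}$ with $\e/\varphi^2 \le 10^{-5}$ and $\xi > n^{-8}$ to absorb the second term. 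The only delicate point I anticipate is bookkeeping the constants so that $\|A-\widetilde{A}\|_2 \le \gamma_k/100$ always holds for the regime of $\xi$ in the lemma; since the lemma assumes $\xi < 1$ this is easily arranged by the factor $1/32$ (or a slightly smaller constant) in the choice of $\zeta$. No other obstacle seems to arise, the proof being essentially a careful chaining of the three ingredients already established in the paper.
\end{proofof}
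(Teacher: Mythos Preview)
Your proposal is correct and follows essentially the same approach as the paper: both define $A=M^t(M^t)^T$ and $\widetilde A=\tfrac{n}{s}(M^tS)(M^tS)^T$, bound $\gamma_k,\gamma_{k+1}$ via Lemma~\ref{lem:bnd-lambda}, control $\|A-\widetilde A\|_2$ by combining Lemma~\ref{lem:Mt-bnd} with the matrix-Bernstein sampling bound of Lemma~\ref{lem:G-Ghat}, and then invoke Lemma~\ref{lem:neg-l2-close}. The only cosmetic difference is that the paper parameterizes the target precision as $\tfrac{1}{32}\xi\,n^{-160\epsilon/\varphi^2}$ rather than your $\gamma_k^2\xi/32$, but since $\gamma_k^2\ge n^{-160\epsilon/\varphi^2}$ these lead to the same sample-size requirement and the same final inequality; your caveat about possibly shrinking the $1/32$ constant to ensure $\|A-\widetilde A\|_2\le\gamma_k/100$ is well placed and is handled in the paper by the same kind of constant juggling.
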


\begin{proof}
Let  \[A=(M^{t})(M^{t})^T= U \Sigma^{2t} U^T ,\]and \[\widetilde{A}=\frac{n}{s} \left(M^{t}S \right)\left(M^{t}S\right)^T = \widetilde{U} \widetilde{\Sigma}^{2} \widetilde{U}^T\text{.}\] 
Let $\gamma_k$ and $\gamma_{k+1}$ denote the $k$-th and $(k+1)$-th largest eigenvalues of $A$.
Let $U$ be an orthonormal basis of eigenvectors of $L$ with corresponding eigenvalues $\lambda_1\leq\ldots \leq\lambda_n$. Observe that each $u_i$ is also an eigenvector of $M$, with eigenvalue $1-\frac{\lambda_i}{2}$. 
Note that $G$ is $(k,\varphi,\epsilon)$-clusterable, therefore by Lemma \ref{lem:bnd-lambda} we have $\lambda_k\leq 2\epsilon$ and $\lambda_{k+1} \geq \frac{\varphi^2}{2}$. Note that $t\geq \frac{20\log n}{\varphi^2}$. Hence, we have
\begin{equation}
\label{2mu_k+1-M}
\gamma_{k+1} = \left(1-\frac{\lambda_{k+1}}{2} \right)^{2t}\leq n^{-10}
\end{equation}
and
\begin{equation}
\label{2mu_k-M}
\gamma_k =  \left(1-\frac{\lambda_{k}}{2} \right)^{2t} \ge n^{(-80 \epsilon / \varphi^2)}
\text{.}
\end{equation}
In order to apply Lemma \ref{lem:G-Ghat} we need to derive an upper bound on the spectral norm of $(M^{t}\mathds{1}_{x})(M^{t}\mathds{1}_{x})^T$ for any column of $A$ corresponding to vertex $x$. By Lemma \ref{lem:Mt-bnd} we have
\[B=\|(M^{t}\mathds{1}_{x})(M^{t}\mathds{1}_{x})^T\|_2 = \|M^{t}\mathds{1}_{x}\|^2_2\leq O(k^2\cdot n^{-1+(40\epsilon /\varphi^2)}) \text{.}\]
Thus, with $1\ge \xi>1/n^{8}$ and for large enough $c$ we have $s \ge c\cdot k^4 n^{(400\cdot  \epsilon/\varphi^2)}\log n / \xi^2
\ge \frac{40n^2 B^2 \log n}{1/32^2\cdot\xi^2 n^{-320{\e/\varphi}}}$. Thus by Lemma \ref{lem:G-Ghat} we obtain that with probability at least $1-{n^{-100}}$ that 
\begin{equation}\label{2eq:gap_g}
\|A-\widetilde{A}\|_2 \leq  \frac{1}{32}\cdot \xi \cdot n^{-160\e/\varphi^2}  \text{.}
\end{equation}
We observe that equation \ref{2eq:gap_g} together with our bound on $\gamma_k$ \eqref{2mu_k-M} and the positive semi-definiteness of $\widetilde A$ imply 
that the $k$ largest eigenvalues of $\widetilde A$ are non-zero and so $\widetilde \Sigma_{[k]}^{-2}$ is exists with high probability.

Now observe that $A$ is positive semi-definite, we have $\gamma_k/4 > \gamma_{k+1}$  and $\|A-\widetilde A\| \leq \gamma_k/100$,
so the preconditions of Lemma \ref{lem:neg-l2-close} are met and we have with probability $1-n^{-100}$
\begin{align*}
\left|\left| U_{[k]} {\Sigma}_{[k]}^{-2t} U_{[k]}^T - \widetilde{U}_{[k]} \widetilde{\Sigma}_{[k]}^{-2} \widetilde{U}_{[k]}^T \right|\right|_2 &\leq  \frac{16 \|A-\wt{A}\|_2+4\gamma_{k+1}}{\gamma_k^2} \leq \frac{16\cdot \frac{1}{32}\cdot\xi\cdot n^{(-160\cdot\epsilon/\varphi^2)} + 4\cdot n^{-10}}{n^{(-160\cdot\epsilon/\varphi^2)} }\leq \frac{\xi}{2}+\frac{\xi}{2}
= \xi \text{.}
\end{align*}
\end{proof}

Now we are ready to prove Lemma \ref{lem:bnd-e1}.
\lembndeone*
\begin{proof}
Let $m_x=M^{t}\mathds{1}_{x}$ and $m_y=M^{t}\mathds{1}_{y}$.  We first prove $m_{x}^T (U_{[k]} {\Sigma}_{[k]}^{-2t} U_{[k]}^T)m_{y} = \mathds{1}_x^T U_{[k]} U_{[k]}^T \mathds{1}_y$ and $m_x^T  (M^tS)(\widetilde{W}_{[k]} \widetilde{\Sigma}^{-4}_{[k]} \widetilde{W}^T_{[k]}) (M^tS)^T m_y = m_x^T  \widetilde{U}_{[k]}\widetilde{\Sigma}^{-2}_{[k]} \widetilde{U}_{[k]}^T m_y$. Then we upper bound \[\left|m_x^T U_{[k]} {\Sigma}_{[k]}^{-2t} U_{[k]}^Tm_y - m_x^T\widetilde{U}_{[k]} \widetilde{\Sigma}_{[k]}^{-2} \widetilde{U}_{[k]}^T m_y \right| \text{.}\]
\paragraph{Step $1$:}
Note that $M^t=U\Sigma^tU^T$. Therefore we get $M^t\mathds{1}_x= U\Sigma^tU^T  \mathds{1}_x$, and $M^t\mathds{1}_y= U\Sigma^tU^T  \mathds{1}_y$. Thus we have
\begin{equation}
\label{eq:pxto1x}
m_x^T U_{[k]} {\Sigma}_{[k]}^{-2t} U_{[k]}^Tm_y = \mathds{1}_x^T \left(\left(U\Sigma^tU^T \right) \left(U_{[k]} {\Sigma}_{[k]}^{-2t} U_{[k]}^T\right) \left(U\Sigma^tU^T \right) \right)\mathds{1}_y
\end{equation}
Note that $U^T U_{[k]}$ is an $n\times k$ matrix such that the top $k\times k$ matrix is $I_{k\times k}$ and the rest is zero. Also $U_{[k]}^T U$ is a $k\times n$ matrix such that the left $k\times k$ matrix is $I_{k\times k}$ and the rest is zero. Therefore we have 
\[U{\Sigma}^{t}\left(U^T U_{[k]}\right) {\Sigma}^{-2t}_{[k]} \left(U^T_{[k]} U\right){\Sigma}^{t}U^T = UHU^T \text{,}\]
where $H$ is an $n\times n$ matrix such that the top left $k\times k$ matrix is $I_{k\times k}$ and the rest is zero. Hence, we have
\[ U H U^T =U_{[k]} U_{[k]}^T\text{.}\]
Thus we have
\begin{equation}
\label{eq:p-vk}
m_{x}^T (U_{[k]} {\Sigma}_{[k]}^{-2t} U_{[k]}^T)m_{y} = \mathds{1}_x^T U_{[k]} U_{[k]}^T \mathds{1}_y
\end{equation}
\paragraph{Step $2$:}
We have $\sqrt{\frac{n}{s}} \cdot M^tS=\widetilde{U}\widetilde{\Sigma}\widetilde{W}^T$ where $\widetilde{U}\in \R^{n\times n}$, $\widetilde{\Sigma}\in \R^{n\times n}$ and $\widetilde{W}\in \R^{s\times n}$. Therefore, 
\begin{align}
\label{eq:pxpy}
 &(m_x)^T  (M^tS)\left(\frac{n}{s}\cdot\widetilde{W}_{[k]} \widetilde{\Sigma}^{-4}_{[k]} \widetilde{W}^T_{[k]}\right) (M^tS)^T (m_{y}) \nonumber\\
 &= m_x^T  \left(\sqrt{\frac{s}{n}}\cdot \widetilde{U}\widetilde{\Sigma}\widetilde{W}^T\right)\left(\frac{n}{s}\cdot\widetilde{W}_{[k]} \widetilde{\Sigma}^{-4}_{[k]} \widetilde{W}^T_{[k]}\right) \left(\sqrt{\frac{s}{n}}\cdot\widetilde{W}\widetilde{\Sigma}\widetilde{U}^T\right) m_y \nonumber\\
 &= m_x^T  \left( \widetilde{U}\widetilde{\Sigma}\widetilde{W}^T\right)\left(\widetilde{W}_{[k]} \widetilde{\Sigma}^{-4}_{[k]} \widetilde{W}^T_{[k]}\right) \left(\widetilde{W}\widetilde{\Sigma}\widetilde{U}^T\right) m_y
\end{align}
Note that $\widetilde{W}^T\widetilde{W}_{[k]}$ is an $n\times k$ matrix such that the top $k\times k$ matrix is $I_{k\times k}$ and the rest is zero. Also $\widetilde{W}_{[k]}^T\widetilde{W}$ is a $k\times n$ matrix such that the left $k\times k$ matrix is $I_{k\times k}$ and the rest is zero. Therefore we have 
\[\widetilde{\Sigma}\left(\widetilde{W}^T\widetilde{W}_{[k]}\right) \widetilde{\Sigma}^{-4}_{[k]} \left(\widetilde{W}^T_{[k]} \widetilde{W}\right)\widetilde{\Sigma} = \widetilde{H} \text{,}\]
where $\widetilde{H}$ is an $n\times n$ matrix such that the top left $k\times k$ matrix is $\widetilde{\Sigma}^{-2}_{[k]}$ and the rest is zero. Hence, we have
\begin{equation}
\label{eq:vsigv}
(\widetilde{U}\widetilde{\Sigma}\widetilde{W}^T)\left(\frac{n}{s}\cdot\widetilde{W}_{[k]} \widetilde{\Sigma}^{-4}_{[k]} \widetilde{W}^T_{[k]}\right) (\widetilde{W}\widetilde{\Sigma}\widetilde{U}^T) = \widetilde{U}\widetilde{H}\widetilde{U}^T = \widetilde{U}_{[k]}\widetilde{\Sigma}^{-2}_{[k]} \widetilde{U}_{[k]}^T
\end{equation}
Putting \eqref{eq:vsigv} and \eqref{eq:pxpy} together we get
\begin{equation}
\label{eq:vtild}
 m_x^T  (M^tS)(\widetilde{W}_{[k]} \widetilde{\Sigma}^{-4}_{[k]} \widetilde{W}^T_{[k]}) (M^tS)^T m_y = m_x^T  \widetilde{U}_{[k]}\widetilde{\Sigma}^{-2}_{[k]} \widetilde{U}_{[k]}^T m_y
\end{equation}
\paragraph{Put together:}
Let $c'>1$ be a large enough constant we will set later. Let $\xi' = \frac{\xi}{c'\cdot k^2 \cdot n^{40 \e/\varphi^2}}$. Let $c_1$ be a constant in front of $s$ in Lemma  \ref{lem:v-close}. Thus for large enough $c$ we have $s\geq c
\cdot n^{(480\cdot \epsilon / \varphi^2)}\cdot \log n \cdot k^{8}/{\xi}^2 \geq c_1\cdot k^4\cdot  n^{(400\cdot  \epsilon/\varphi^2)}\log n / \xi'^2 $, hence, by Lemma \ref{lem:v-close} applied with 
$\xi'$,  with probability at least $1-n^{-100}$ we have
\[\left|\left|U_{[k]} {\Sigma}_{[k]}^{-2t} U_{[k]}^T - \widetilde{U}_{[k]} \widetilde{\Sigma}_{[k]}^{-2} \widetilde{U}_{[k]}^T \right|\right|_2\leq \xi' \]
Therefore by submultiplicativity of norm we have
\begin{align}
\left|m_x^T U_{[k]} {\Sigma}_{[k]}^{-2t} U_{[k]}^Tm_y - m_x^T\widetilde{U}_{[k]} \widetilde{\Sigma}_{[k]}^{-2} \widetilde{U}_{[k]}^T m_y \right| 
&\leq  \left|\left|U_{[k]} {\Sigma}_{[k]}^{-2t} U_{[k]}^T - \widetilde{U}_{[k]} \widetilde{\Sigma}_{[k]}^{-2} \widetilde{U}_{[k]}^T \right| \right|_2\|m_x\|_2\|m_y\|_2 \nonumber\\
&\leq \xi' \|m_x\|_2\|m_y\|_2\label{eq:pxpyv-til-close}
\end{align}
Therefore we have
\begin{align}
&\left|m_x^T  (M^tS)\left(\frac{n}{s}\cdot\widetilde{W}_{[k]} \widetilde{\Sigma}^{-4}_{[k]} \widetilde{W}^T_{[k]}\right) (M^tS)^T m_y - \mathds{1}_x^T U_{[k]}{U}_{[k]}^T  \mathds{1}_y \right| \nonumber \\
&= \left| m_x^T\widetilde{U}_{[k]} \widetilde{\Sigma}_{[k]}^{-2} \widetilde{U}_{[k]}^T m_y - m_x^T U_{[k]} {\Sigma}_{[k]}^{-2t} U_{[k]}^Tm_y\right| && \text{By \eqref{eq:p-vk} and \eqref{eq:vtild}}\nonumber \\
& \leq  \xi' \cdot\|m_x\|_2\|m_y\|_2 && \text{By \eqref{eq:pxpyv-til-close}} 
\end{align}
By Lemma \ref{lem:Mt-bnd} for any vertex $x\in V$ we have 
\begin{equation}
\|m_x\|^2_2=\|M^{t}\mathds{1}_{x}\|^2_2 \leq O(k^2\cdot n^{-1+(40\epsilon /\varphi^2)})  \text{.}
\end{equation}
Therefore by choice of $c'$ as a large enough constant and choosing $\xi' = \frac{\xi}{c'\cdot k^2 \cdot n^{40 \e/\varphi^2}}$ we have
\begin{equation}
\label{eq:e2-done}
\left|m_x^T  (M^tS)\left(\frac{n}{s}\cdot\widetilde{W}_{[k]} \widetilde{\Sigma}^{-4}_{[k]} \widetilde{W}^T_{[k]}\right) (M^tS)^T m_y - \mathds{1}_x^T U_{[k]}{U}_{[k]}^T  \mathds{1}_y \right|\leq O\left(\xi'\cdot k^2\cdot n^{-1+(40\epsilon /\varphi^2)}\right) \leq  \frac{\xi}{n} \text{.}
\end{equation}
\end{proof}
\subsection{Stability bounds under approximations of columns by random walks}
\label{subsubsec:rows}
The main result of this section is Lemma \ref{lem-u-close}, which shows that if a graph is $(k,\varphi,\epsilon)$-clusterable, then 
the pseudoinverseve of the low rank approximation of a random walk matrix are stable when it is empirically approximated by running random walks from sample vertices. 

\begin{restatable}{lemma}{lemuclose}
\label{lem-u-close}
Let $k \geq 2$ be an integer, $\varphi \in (0,1)$ and $\epsilon\in (0,1)$. Let $G=(V,E)$ be a $d$-regular and $(k,\varphi,\epsilon)$-clusterable graph. Let  $1/n^8 < \xi < 1$ and $t\geq \frac{20\log n}{\varphi^2}$. Let $c_1>1$ and $c_2>1$ be a large enough constants. Let $s\geq c_1\cdot n^{240\epsilon / \varphi^2}\cdot \log n \cdot k^{4} $ and $R\geq \frac{c_2\cdot k^{9}\cdot n^{(1/2+820\cdot\epsilon/\varphi^2)}}{\xi^2} $. Let $I_S=\{i_1,\ldots, i_s\}$ be a multiset of $s$ indices chosen independently and uniformly at random from
$\{1,\dots,n\}$. Let $S$ be the $n\times s$ matrix whose $j$-th column equals $\mathds{1}_{i_j}$.   Let $\G \in \R^{s\times s}$  be the output of $\textsc{EstimateCollisionProbabilities($G,I_S,R,t$)}$(Algorithm \ref{alg:gram}). Let $M$ be  the random walk transition matrix of $G$. Let $\sqrt{\frac{n}{s}} \cdot M^tS=\widetilde{U}\widetilde{\Sigma}\widetilde{W}^T$ be an SVD of $\sqrt{\frac{n}{s}} \cdot M^tS$ where $\widetilde{U}\in \R^{n\times n}, \widetilde{\Sigma}\in \R^{n\times n}, \widetilde{W}\in \R^{s\times n}$.
Let $\frac{n}{s}\cdot\G=\widehat{W} \widehat{\Sigma} \widehat{W}^T$ be an eigendecomposition of $\frac{n}{s}\cdot\G$.
If $\frac{\epsilon}{\varphi^2}\leq \frac{1}{10^5}$ then with probability at least $1-2\cdot n^{-100}$
matrices $\widehat \Sigma_{[k]}^{-2}$ and $\widetilde \Sigma_{[k]}^{-4}$ exist and we have
\[
\left|\left|\widehat{W}_{[k]}\widehat{\Sigma}_{[k]}^{-2}\widehat{W}_{[k]}^T - \widetilde{W}_{[k]}
\widetilde{\Sigma}^{-4}_{[k]} \widetilde{W}^T_{[k]} \right|\right|_2< \xi
\]
\end{restatable}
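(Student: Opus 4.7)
The plan is to first establish that $\widetilde A := \frac{n}{s}\G$ is spectrally close to $A := \frac{n}{s}(M^tS)^T(M^tS)$, then invoke Lemma~\ref{lem:neg-l2-close} to control $\|\widehat W_{[k]}\widehat\Sigma_{[k]}^{-1}\widehat W_{[k]}^T - \widetilde W_{[k]}\widetilde\Sigma_{[k]}^{-2}\widetilde W_{[k]}^T\|_2$, and finally convert this ``single inverse'' bound to the ``squared inverse'' bound demanded by the lemma using the identity $X^2-Y^2 = X(X-Y)+(X-Y)Y$. Note that $A$ has eigendecomposition $\widetilde W\widetilde\Sigma^2\widetilde W^T$, since $\sqrt{n/s}\cdot M^tS = \widetilde U\widetilde\Sigma\widetilde W^T$ is an SVD; and since $\widehat W_{[k]}$ and $\widetilde W_{[k]}$ have orthonormal columns, the orthonormality identities $\widehat W_{[k]}^T\widehat W_{[k]} = I_k$ and $\widetilde W_{[k]}^T\widetilde W_{[k]} = I_k$ give
\begin{equation*}
\widehat W_{[k]}\widehat\Sigma_{[k]}^{-2}\widehat W_{[k]}^T = (\widehat W_{[k]}\widehat\Sigma_{[k]}^{-1}\widehat W_{[k]}^T)^2, \qquad \widetilde W_{[k]}\widetilde\Sigma_{[k]}^{-4}\widetilde W_{[k]}^T = (\widetilde W_{[k]}\widetilde\Sigma_{[k]}^{-2}\widetilde W_{[k]}^T)^2,
\end{equation*}
so it really does suffice to control the single-inverse difference first.

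Second, I would bound $\|A-\widetilde A\|_2$ via concentration of $\G$. For fixed sampled indices $a,b$, the entry $\G_i(a,b) = \tfrac12(\widehat m_a^{P,i\,T}\widehat m_b^{Q,i} + \widehat m_a^{Q,i\,T}\widehat m_b^{P,i})$ is an unbiased estimator of $(M^tS)^T(M^tS)(a,b) = (M^t\mathds{1}_a)^T(M^t\mathds{1}_b)$, because $\widehat P_i$ and $\Q_i$ use fresh, independent walks. A standard birthday-paradox style variance calculation (as in \cite{DBLP:conf/stoc/CzumajPS15, chiplunkar2018testing}), combined with the uniform bound $\|M^t\mathds{1}_x\|_2 \leq O(k\cdot n^{-1/2+20\epsilon/\varphi^2})$ from Lemma~\ref{lem:Mt-bnd}, shows that with $R$ walks per vertex a single trial concentrates each entry to within $O(k n^{-1/2+40\epsilon/\varphi^2}/\sqrt R)$ with constant probability; taking entrywise medians over the $O(\log n)$ independent trials boosts the per-entry failure probability to $n^{-C}$ for any desired constant $C$, and a union bound over the $s^2$ entries gives a uniform entrywise guarantee. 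Converting to spectral norm via $\|\cdot\|_2 \leq \|\cdot\|_F \leq s\cdot\max_{a,b}|\cdot_{a,b}|$, with the stated choices of $s$ and $R$ one obtains with probability at least $1-n^{-100}$ the bound $\|A - \widetilde A\|_2 \leq \xi/(C\cdot n^{240\epsilon/\varphi^2})$ for a sufficiently large absolute constant $C$.

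Third, to apply Lemma~\ref{lem:neg-l2-close} I need control of the spectrum of $A$. Writing $A = \widetilde W\widetilde\Sigma^2\widetilde W^T$, the quantities $\widetilde\Sigma_i^2$ are the eigenvalues of $\frac{n}{s}(M^tS)^T(M^tS)$, which has the same nonzero eigenvalues as $\frac{n}{s}(M^tS)(M^tS)^T$. The proof of Lemma~\ref{lem:v-close} already shows that $\|\frac{n}{s}(M^tS)(M^tS)^T - M^tM^{tT}\|_2$ is small; combined with Lemma~\ref{lem:bnd-lambda}, Weyl's inequality yields $\gamma_k := \widetilde\Sigma_k^2 \geq \tfrac12 n^{-80\epsilon/\varphi^2}$ and $\gamma_{k+1} := \widetilde\Sigma_{k+1}^2 \leq O(n^{-10})$. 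With the spectral concentration bound from the previous step, both hypotheses $\|A - \widetilde A\|_2 \leq \gamma_k/100$ and $\gamma_{k+1} < \gamma_k/4$ of Lemma~\ref{lem:neg-l2-close} are met (which also ensures $\widehat\Sigma_{[k]}^{-2}$ and $\widetilde\Sigma_{[k]}^{-4}$ exist), and the lemma delivers, with $X := \widehat W_{[k]}\widehat\Sigma_{[k]}^{-1}\widehat W_{[k]}^T$ and $Y := \widetilde W_{[k]}\widetilde\Sigma_{[k]}^{-2}\widetilde W_{[k]}^T$,
\begin{equation*}
\|X - Y\|_2 \leq \frac{16\|A-\widetilde A\|_2 + 4\gamma_{k+1}}{\gamma_k^2} \leq \frac{\xi}{C'\cdot n^{80\epsilon/\varphi^2}}
\end{equation*}
for a suitable constant $C'$.

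Finally I would close the loop by writing $X^2 - Y^2 = X(X-Y) + (X-Y)Y$, which gives $\|X^2 - Y^2\|_2 \leq (\|X\|_2 + \|Y\|_2)\|X-Y\|_2$. We have $\|Y\|_2 = \gamma_k^{-1} \leq 2n^{80\epsilon/\varphi^2}$, and Weyl's inequality applied to the pair $(A,\widetilde A)$ together with the bound on $\|A-\widetilde A\|_2$ gives $\|X\|_2 = \widehat\Sigma_k^{-1} \leq 4n^{80\epsilon/\varphi^2}$, so the product absorbs the extra $n^{80\epsilon/\varphi^2}$ factor and $\|X^2-Y^2\|_2 \leq \xi$ as required. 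The main technical obstacle is the concentration step: even though each entry of $\G$ admits a fairly clean variance analysis, obtaining the operator-norm bound $\|A - \widetilde A\|_2 \leq \xi\cdot n^{-240\epsilon/\varphi^2}/C$ through an entrywise union bound over $s^2$ pairs is precisely what forces the stated dependence of $R$ and $s$ on $k$, $n^{\epsilon/\varphi^2}$, and $1/\xi$, and matching the exponents in the hypothesis requires care.
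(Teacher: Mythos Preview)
Your overall approach is sound and genuinely different from the paper's. Both routes reduce to Lemma~\ref{lem:neg-l2-close} after controlling the estimation error of the Gram matrix, but you apply that lemma to the pair $\bigl(\tfrac{n}{s}(M^tS)^T(M^tS),\tfrac{n}{s}\G\bigr)$ to bound $\|X-Y\|_2$ with $X=\widehat W_{[k]}\widehat\Sigma_{[k]}^{-1}\widehat W_{[k]}^T$, $Y=\widetilde W_{[k]}\widetilde\Sigma_{[k]}^{-2}\widetilde W_{[k]}^T$, and then square via $X^2-Y^2=X(X-Y)+(X-Y)Y$. The paper instead squares \emph{first}: it uses that same algebraic identity on the Gram matrices themselves to bound $\|\widetilde A^2-\widehat A^2\|_2$, and then applies Lemma~\ref{lem:neg-l2-close} directly to $\widetilde A^2=\widetilde W\widetilde\Sigma^4\widetilde W^T$ and $\widehat A^2=\widehat W\widehat\Sigma^2\widehat W^T$, obtaining the desired pseudoinverse bound in one shot. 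Your route avoids the intermediate estimate $\|(M^tS)^T(M^tS)\|_2=O(sk^2n^{-1+40\epsilon/\varphi^2})$ that the paper needs for the squaring step, at the price of having to control $\|X\|_2+\|Y\|_2$ at the end.

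There is, however, one real gap: the bound $\gamma_{k+1}=\widetilde\Sigma_{k+1}^2\leq O(n^{-10})$ does \emph{not} follow from Weyl's inequality combined with the proof of Lemma~\ref{lem:v-close}. With only $s\geq c_1 n^{240\epsilon/\varphi^2}k^4\log n$, the concentration you get from Lemma~\ref{lem:G-Ghat} is $\|\tfrac{n}{s}(M^tS)(M^tS)^T-M^{2t}\|_2=O(n^{-80\epsilon/\varphi^2}/\sqrt{c_1})$, so Weyl only yields $\gamma_{k+1}\leq n^{-10}+O(n^{-80\epsilon/\varphi^2}/\sqrt{c_1})$, which is of order $n^{-80\epsilon/\varphi^2}$, not $n^{-10}$. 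This matters: the term $4\gamma_{k+1}/\gamma_k^2$ in the conclusion of Lemma~\ref{lem:neg-l2-close} would then be $O(n^{80\epsilon/\varphi^2})$, and after multiplying by $\|X\|_2+\|Y\|_2=O(n^{80\epsilon/\varphi^2})$ you get $O(n^{160\epsilon/\varphi^2})$, which can be far larger than $\xi$. The fix is to invoke Lemma~\ref{lem:eigMS} item~\eqref{itm2:nu_k_mS}, which proves $\gamma_{k+1}\leq n^{-9}$ directly (without Weyl) by decomposing $S^TM^{2t}S$ along the eigenspaces of $M$ and using $\nu_{\max}(SS^T)\leq s$. With that correction your argument goes through.
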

To prove Lemma \ref{lem-u-close} we need the following lemma.

\begin{restatable}{lemma}{lemMtbnd2}\label{lem:Mt-bndr}
Let $k \geq 2$ be an integer, $\varphi \in (0,1)$ and $\epsilon\in (0,1)$. Let $G=(V,E)$ be a $d$-regular and $(k,\varphi,\epsilon)$-clusterable graph.  Let $L$ and $M$ be the normalized Laplacian and transition matrix of $G$ respectively. For any $t\geq \frac{10\log n}{\varphi^2}$ and any $r$ and any $x \in V$ we have 
\[\|M^{t}\mathds{1}_{x}\|_r \leq  O\left(k^2\cdot n^{-1+1/r+(40\epsilon /\varphi^2)}\right) \text{.}\]
\end{restatable}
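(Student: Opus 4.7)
The plan is to mimic the argument of Lemma~\ref{lem:Mt-bnd} but push it through for arbitrary $r$ by interpolating between $\ell_2$ and $\ell_\infty$ bounds. As in that proof, I would write $\mathds{1}_x = \sum_{j=1}^n \beta_j u_j$ in the eigenbasis of $L$ (so $\beta_j = u_j(x)$), and split
\[
M^t \mathds{1}_x = \underbrace{\sum_{j=1}^k \beta_j \left(1-\tfrac{\lambda_j}{2}\right)^t u_j}_{=: v} + \underbrace{\sum_{j=k+1}^n \beta_j \left(1-\tfrac{\lambda_j}{2}\right)^t u_j}_{=: w}.
\]
For the ``tail'' vector $w$, by Lemma~\ref{lem:bnd-lambda} we have $\lambda_{k+1}\geq \varphi^2/2$, and $t\geq 10\log n/\varphi^2$ gives $(1-\lambda_{k+1}/2)^{2t}\leq n^{-5}$, so $\|w\|_2\leq n^{-5/2}$ and hence $\|w\|_r\leq n^{1/r}\|w\|_\infty\leq n^{1/r}\|w\|_2\leq n^{1/r - 5/2}$, which is absorbed in the claimed bound.

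For the head vector $v$, the two key inputs are the $\ell_\infty$ bound on individual eigenvectors from Lemma~\ref{lem:l-inf-bnd}, which together with $\min_i|C_i|=\Omega(n/k)$ (Proposition~\ref{lem:min_size}) yields $\|u_j\|_\infty \leq O(\sqrt{k/n}\cdot n^{20\epsilon/\varphi^2})$ for every $j\leq k$. In particular $|\beta_j|\leq \|u_j\|_\infty$. Using the triangle inequality, $\|v\|_\infty \leq \sum_{j=1}^k |\beta_j|\cdot \|u_j\|_\infty\leq k\cdot O(k/n\cdot n^{40\epsilon/\varphi^2}) = O(k^2\cdot n^{-1+40\epsilon/\varphi^2})$. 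By orthonormality and the fact that the coefficients $(1-\lambda_j/2)^{2t}\leq 1$, we also get $\|v\|_2^2 = \sum_{j=1}^k \beta_j^2(1-\lambda_j/2)^{2t}\leq \sum_{j=1}^k \|u_j\|_\infty^2\leq O(k^2/n\cdot n^{40\epsilon/\varphi^2})$, i.e. $\|v\|_2\leq O(k\cdot n^{-1/2+20\epsilon/\varphi^2})$ (this reproduces Lemma~\ref{lem:Mt-bnd}).

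The interpolation step is the standard inequality $\|v\|_r^r = \sum_y |v(y)|^r \leq \|v\|_\infty^{r-2}\cdot \|v\|_2^2$, valid for $r\geq 2$, which yields $\|v\|_r \leq \|v\|_\infty^{1-2/r}\|v\|_2^{2/r}$. Plugging in the two bounds above,
\[
\|v\|_r \leq \bigl(O(k^2/n\cdot n^{40\epsilon/\varphi^2})\bigr)^{1-2/r}\cdot \bigl(O(k/\sqrt{n}\cdot n^{20\epsilon/\varphi^2})\bigr)^{2/r} = O\!\left(k^{2-2/r}\cdot n^{-1+1/r}\cdot n^{40\epsilon/\varphi^2\,(1-1/r)}\right),
\]
which is at most $O(k^2\cdot n^{-1+1/r+40\epsilon/\varphi^2})$. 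For $r<2$ the claim follows from $\|v\|_r\leq n^{1/r-1/2}\|v\|_2$ by Hölder, which gives an even better dependence on $k$. Combining the bounds for $v$ and $w$ via the triangle inequality yields the lemma. The only delicate point is the interpolation inequality, which is entirely routine; the substance of the proof is really the $\ell_\infty$ bound from Lemma~\ref{lem:l-inf-bnd}, which has already been established.
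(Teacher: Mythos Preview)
Your proof is correct and uses the same decomposition and key input (Lemma~\ref{lem:l-inf-bnd}) as the paper. The paper's route is slightly more direct, though: rather than interpolating between $\ell_2$ and $\ell_\infty$, it simply establishes the coordinatewise bound $|m_x(y)| \le O(k^2 n^{-1+40\epsilon/\varphi^2})$ for every $y$ (which you also derive as $\|v\|_\infty$) and then applies the trivial inequality $\|m_x\|_r \le n^{1/r}\|m_x\|_\infty$; your interpolation yields marginally sharper exponents on $k$ and $n^{\epsilon/\varphi^2}$, but these are not needed for the stated bound.
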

\begin{proof}
Let $L$ be the normalized Laplacian of $G$ with eigenvectors $u_1,\dots, u_n$ and corresponding eigenvalues $\lambda_1 \le \ldots \leq \lambda_n$.
Observe that each $u_i$ is also an eigenvector of $M$, with eigenvalue $1-\frac{\lambda_i}{2}$. Note that $G$ is $(k,\varphi,\epsilon)$-clusterable. 
Therefore by Lemma \ref{lem:bnd-lambda} we have 
\begin{equation}
\label{eq:lm-k+1}
\lambda_{k+1} \geq \frac{\varphi^2}{2}  \text{.}
\end{equation}
We write $\mathds{1}_{x}$ in the eigenbasis of $L$ as $\mathds{1}_{x}=\sum_{j=1}^{n} \beta_j u_j$ where $\beta_j=u_j\cdot \mathds{1}_{x}=u_j(x)$. Thus for any vertex $u$ we have
\begin{align*}
M^{t}\mathds{1}_{x}=M^{t} \left(\sum_{j=1}^{n} \beta_j u_j \right) 
= \sum_{j=1}^{n} \beta_j M^{t} u_j  
= \sum_{j=1}^{n} \beta_j \left(1-\frac{\lambda_j}{2}\right)^{t} u_j \text{.}
\end{align*}
Let $m_x=M^{t}\mathds{1}_{x}$. Therefore for any vertex $y\in V$ we have
\begin{align*}
m_x(y)&=\sum_{j=1}^{n} \beta_j \left(1-\frac{\lambda_j}{2}\right)^{t} u_j(y) \\
&= \sum_{j=1}^{k} \beta_j \left(1-\frac{\lambda_j}{2}\right)^{t} u_j(y) + \sum_{j=k+1}^{n} \beta_j \left(1-\frac{\lambda_j}{2}\right)^{t} u_j(y) 
\end{align*}
Therefore,
\begin{equation}
\label{eqpuabs}
 |m_x(y)| \leq  \left(1-\frac{\lambda_1}{2}\right)^{t} \sum_{j=1}^{k} |\beta_j| \cdot |u_j(y)| + \left(1-\frac{\lambda_{k+1}}{2}\right)^{t} \sum_{j=k+1}^{n} |\beta_j| \cdot| u_j(y)| 
\end{equation}
By \eqref{eq:lm-k+1} we have $\lambda_{k+1}\geq \frac{\varphi^2}{2}$, and $t\geq \frac{8\log n}{\varphi^2}$. Thus we have
\[\left(1-\frac{\lambda_{k+1}}{2}\right)^{t} \leq n^{-2}\]
Note that for any $j\in [n]$
\begin{equation}
\label{eq:bju}
|\beta_j| \leq \sqrt{\sum_{j=1}^n \beta_j ^2 } = \|\mathds{1}_{x}\|_2=1 \text{.}
\end{equation}
Morover for any $j\in [n]$ and any $y \in V$
\begin{equation}
\label{eq:vju}
|u_j(y)| \leq \|u_j\|_2=1
\end{equation}
Putting \eqref{eq:bju}, \eqref{eq:vju} and \eqref{eqpuabs} together we get 
\begin{align}
\label{eq:mt-1a}
 |m_x(y)|  &\leq   \sum_{j=1}^{k} |\beta_j| \cdot |u_j(y)| + \left(1-\frac{\lambda_{k+1}}{2}\right)^{t} \sum_{j=k+1}^{n} |\beta_j| \cdot | u_j(y)| \nonumber \\
 &\leq \sum_{j=1}^{k} |\beta_j| \cdot |u_j(y)| + n^{-2}\cdot n
\end{align}
Note that $G$ is $(k,\varphi,\epsilon)$-clusterable and $\min_{i}|C_i|\geq \Omega (\frac{n}{k})$. Therefore by Lemma \ref{lem:l-inf-bnd} for all $j\leq k$  we have 
\[\beta_j= u_j(x) \leq \|u_j\|_\infty  \leq O\left(\sqrt{k}\cdot n^{-1/2+(20\epsilon /\varphi^2)}\right)\text{.}\] 
Moreover
\[ u_j(y) \leq \|u_j\|_\infty  \leq O\left(\sqrt{k}\cdot n^{-1/2+(20\epsilon /\varphi^2)}\right) \]
Thus, we get
\begin{equation}
\label{eq:b_j-small}
\sum_{j=1}^k |\beta_j| \cdot |u_j(y)| \leq  O\left(k\cdot k\cdot n^{-1+(40\epsilon /\varphi^2)}\right) \text{.}
\end{equation}
Therefore by \eqref{eq:mt-1a} and \eqref{eq:b_j-small} we get 
\begin{align}
\label{eq:M-bnd}
 |m_x(y)| &\leq  O\left(k^2\cdot n^{-1+(40\epsilon /\varphi^2)}\right)  + n^{-1} \nonumber \\
 &\leq O\left(k^2\cdot n^{-1+(40\epsilon /\varphi^2)}\right) \text{.}  
\end{align}
Therefore we have
\begin{align*}
\|m_x\|_r \leq 
\left(n\cdot  O\left(k^2\cdot n^{-1+(40\epsilon /\varphi^2)}\right)^r \right)^{1/r}
&= O\left(k^2\cdot n^{-1+1/r+(40\epsilon /\varphi^2)}\right)\text{.}
\end{align*}
\end{proof}


\begin{restatable}{lemma}{lempairwisecollision}\label{lem:pairwise-collision}
Let $k \geq 2$ be an integer, $\varphi \in (0,1)$ and $\epsilon\in (0,1)$. Let $G=(V,E)$ be a $d$-regular and $(k,\varphi,\epsilon)$-clusterable graph. Let $M$ be  the random walk transition matrix of $G$. Let  $\sigma_{\text{err}}>0$. Let $t$, $R_1$ and $R_2$ be integers.
Let $a,b\in V$. Suppose that we run $R_1$ random walks of length $t$ from vertex $a$ and $R_2$ random walks of length $t$ from vertex $b$.
For any $x\in V$, let $\m_a(x)$ (resp. $\m_b(x)$) be a random variable which denotes the fraction out of the $R_1$ (resp. $R_2$) random walks starting from $a$ (resp. $b$), which end in $x$.  Let $c>1$ be a large enough constant.
If 
\begin{align*}
\min(R_1, R_2) \geq  \frac{c\cdot  k^5\cdot n^{-2+(100\epsilon /\varphi^2)}}{ \sigma_\text{err}^2} 
\text{,  and } &
 R_1 R_2 \geq     \frac{ c\cdot  k^2\cdot n^{-1+(40\epsilon /\varphi^2)}}{ \sigma_\text{err}^2}
 \end{align*}
then with probability at least $0.99$ we have
\[|\m_a^T \m_b-(M^{t}\mathds{1}_{a})^{T}(M^{t}\mathds{1}_{b})|\leq {\sigma_{\text{err}}} \text{.}\]
\end{restatable}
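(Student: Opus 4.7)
\begin{proofof}{Lemma \ref{lem:pairwise-collision} (sketch/proposal)}
The plan is a standard second-moment / Chebyshev argument on the collision indicator representation of $\m_a^T \m_b$, using the $\ell_2$ and $\ell_\infty$ bounds on $t$-step random walk distributions in $(k,\varphi,\e)$-clusterable graphs that we have just established (Lemma \ref{lem:Mt-bnd} and Lemma \ref{lem:Mt-bndr}).

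First I would rewrite the estimator as a sum of indicators. Let $p=M^t\mathds{1}_a$ and $q=M^t\mathds{1}_b$, let $X_1,\ldots,X_{R_1}$ be the endpoints of the $R_1$ walks from $a$ and $Y_1,\ldots,Y_{R_2}$ be the endpoints of the $R_2$ walks from $b$, all mutually independent. Set $Z_{ij}=\mathds{1}[X_i=Y_j]$. A direct unfolding of the definition in line~\ref{ln:defm} gives
\[
\m_a^T\m_b \;=\; \frac{1}{R_1 R_2}\sum_{i=1}^{R_1}\sum_{j=1}^{R_2} Z_{ij},
\]
and taking expectations yields $\mathbb{E}[Z_{ij}]=\sum_{x\in V} p(x)q(x)=p^Tq$, so the estimator is unbiased. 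The main task is to bound its variance.

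Next I would split $\mathrm{Var}\bigl[\sum_{i,j} Z_{ij}\bigr]=\sum_{i,j,i',j'}\mathrm{Cov}(Z_{ij},Z_{i'j'})$ into the four cases based on whether $i=i'$ and/or $j=j'$. The independence of $X$'s from $Y$'s makes the fully distinct case $i\ne i',\ j\ne j'$ contribute zero covariance. The case $i=i',j=j'$ gives at most $R_1R_2 \cdot p^Tq$. The case $i=i', j\ne j'$ contributes at most $R_1R_2^2\cdot \sum_x p(x)q(x)^2$, and symmetrically $i\ne i',j=j'$ contributes at most $R_1^2 R_2\cdot \sum_x q(x)p(x)^2$. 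Dividing by $R_1^2R_2^2$, the bound becomes
\[
\mathrm{Var}[\m_a^T\m_b]\;\le\; \frac{p^Tq}{R_1 R_2}\;+\;\frac{\sum_x p(x)q(x)^2}{R_1}\;+\;\frac{\sum_x q(x)p(x)^2}{R_2}.
\]

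Now I would plug in the norm bounds for $(k,\varphi,\e)$-clusterable graphs. By Lemma~\ref{lem:Mt-bnd}, $\|p\|_2,\|q\|_2=O(k\cdot n^{-1/2+20\e/\varphi^2})$, so by Cauchy--Schwarz $p^Tq=O(k^2\cdot n^{-1+40\e/\varphi^2})$. By Lemma~\ref{lem:Mt-bndr} with $r=\infty$, $\|p\|_\infty,\|q\|_\infty=O(k^2\cdot n^{-1+40\e/\varphi^2})$; combining these via H\"older, $\sum_x p(x)q(x)^2 \le \|p\|_\infty\|q\|_2^2 = O(k^4\cdot n^{-2+80\e/\varphi^2})$, and symmetrically for $\sum_x q(x)p(x)^2$. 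Substituting, the variance is at most
\[
O\!\left(\frac{k^2\cdot n^{-1+40\e/\varphi^2}}{R_1 R_2}\right)\;+\;O\!\left(\frac{k^4\cdot n^{-2+80\e/\varphi^2}}{R_1}\right)\;+\;O\!\left(\frac{k^4\cdot n^{-2+80\e/\varphi^2}}{R_2}\right).
\]
The hypotheses of the lemma, namely $\min(R_1,R_2)\ge c\,k^5 n^{-2+100\e/\varphi^2}/\sigma_{\mathrm{err}}^2$ and $R_1R_2 \ge c\,k^2 n^{-1+40\e/\varphi^2}/\sigma_{\mathrm{err}}^2$, are exactly tuned so that (for $c$ sufficiently large) each of the three terms above is at most $\sigma_{\mathrm{err}}^2/300$, giving $\mathrm{Var}[\m_a^T\m_b]\le \sigma_{\mathrm{err}}^2/100$. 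A single application of Chebyshev's inequality then yields $\Pr[|\m_a^T\m_b-p^Tq|>\sigma_{\mathrm{err}}]\le 1/100$, as required. The main obstacle is purely bookkeeping of the norm bounds; no substantially new idea is needed beyond the moment bounds already proved, which is precisely why those bounds were developed in Section~\ref{sebsec:moment-bounds}.
\end{proofof}
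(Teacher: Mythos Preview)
Your proposal is correct and follows essentially the same second-moment/Chebyshev approach as the paper: write $\m_a^T\m_b$ as an average of collision indicators, split the variance into the diagonal and the two ``one shared index'' terms, bound these using the norm estimates on $M^t\mathds{1}_x$, and apply Chebyshev. The only cosmetic difference is that the paper bounds $\sum_x p(x)q(x)^2$ via Cauchy--Schwarz as $\|p\|_2\|q\|_4^2$ (invoking Lemma~\ref{lem:Mt-bndr} with $r=4$) rather than your $\|p\|_\infty\|q\|_2^2$; both estimates are valid and are covered by the hypotheses on $R_1,R_2$.
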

\begin{remark}
The success probability of Lemma \ref{lem:pairwise-collision} can be boosted up to $1-n^{-100}$ using
standard techniques (taking the median of $O(\log n)$  independent runs).
\end{remark}

\begin{proof}
Let $m_a=M^{t}\mathds{1}_{a}$ and $m_b=M^{t}\mathds{1}_{b}$. Let $X_{a,r}^i$ be a random variable which is $1$ if the $r^{\text{th}}$ random walk starting from $a$, ends at vertex $i$, and $0$ otherwise. Let $Y_{b,r}^i$ be a random variable which is $1$ if the $r^{\text{th}}$ random walk starting from $b$, ends at vertex $i$, and $0$ otherwise. Thus, $\mathbb{E}[X_{a,r}^i]={m_a(i)}$ and $\mathbb{E}[Y_{b,r}^i]={m_b(i)}$. 
For any two vertices $a,b\in S$, let  $Z_{a,b}=\m_a^T\m_b$ be a random variable given by
\[Z_{a,b}=\frac{1}{R_1 R_2} \sum_{i\in V} (\sum_{r_1=1}^{R_1} X_{a,r_1}^i)(\sum_{r_2=1}^{R_2} Y_{b,r_2}^i).\]
Thus, 
\begin{align} 
\mathbb{E}[Z_{a,b}] &=\frac{1}{R_1 R_2} \sum_{i\in V} (\sum_{r_1=1}^{R_1} \mathbb{E}[X_{a,r_1}^i])(\sum_{r_2=1}^{R_2} \mathbb{E}[Y_{b,r_2}^i])  \nonumber \\
&=\frac{1}{R_1 R_2} \sum_{i\in V} \left(R_1\cdot {m_a(i)} \right)\left(  R_2\cdot {m_b(i)}   \right)\nonumber \\
&= \sum_{i\in V} {m_a(i)}\cdot {m_b(i)} = (m_a)^T (m_b)\text{.}  \label{eq:expz}
\end{align} 
We know that $\text{Var}(Z_{a,b}) = \mathbb{E}[Z^2_{a,b}]-\mathbb{E}[Z_{a,b}]^2$. Let us first compute $\mathbb{E}[Z^2_{a,b}]$. 
\begin{align*}
 \mathbb{E}[Z^2_{a,b}] &= \mathbb{E}\left[\frac{1}{(R_1 R_2)^2} \sum_{i\in V} \sum_{j\in V} \sum_{r_1=1}^{R_1} \sum_{r_2=1}^{R_2} \sum_{r^{\prime}_1=1}^{R_1} \sum_{r^{\prime}_2=1}^{R_2} X_{a,r_1}^i  Y_{b,r_2}^i X_{a,r^{\prime}_1}^j  Y_{b,r^{\prime}_2}^j\right] \\
 &= \frac{1}{(R_1 R_2)^2} \sum_{i\in V} \sum_{j\in V} \sum_{r_1=1}^{R_1} \sum_{r_2=1}^{R_2} \sum_{r^{\prime}_1=1}^{R_1} \sum_{r^{\prime}_2=1}^{R_2} \mathbb{E}[X_{a,r_1}^i  Y_{b,r_2}^i X_{a,r^{\prime}_1}^j  Y_{b,r^{\prime}_2}^j]  
\end{align*}
To compute $\mathbb{E}[X_{a,r_1}^i  Y_{b,r_2}^i X_{a,r^{\prime}_1}^j  Y_{b,r^{\prime}_2}^j] $, we need to consider the following cases.
\begin{enumerate}
\item $i\neq j $: 
$\mathbb{E}[X_{a,r_1}^i  Y_{b,r_2}^i X_{a,r^{\prime}_1}^j  Y_{b,r^{\prime}_2}^j] \leq 
{m_a(i)} \cdot 
{m_b(i)} \cdot 
{m_a(j)} \cdot 
{m_b(j)} $. 
(This is an equality if $r_1\neq r^{\prime}_1$ and $r_2\neq r^{\prime}_2$. Otherwise, the expectation is zero.)
\item $i=j, \quad r_1=r_1^\prime, \quad r_2=r_2^\prime$: 
$\mathbb{E}[X_{a,r_1}^i  Y_{b,r_2}^i X_{a,r^{\prime}_1}^j  Y_{b,r^{\prime}_2}^j] = 
{m_a(i)} \cdot 
{m_b(i)} 
 $.
\item $i=j, \quad r_1=r_1^\prime, \quad r_2 \neq r_2^\prime$: 
$\mathbb{E}[X_{a,r_1}^i  Y_{b,r_2}^i X_{a,r^{\prime}_1}^j  Y_{b,r^{\prime}_2}^j] = 
{m_a(i)} \cdot 
{m_b(i)} \cdot  \cdot 
{m_b(i)}$.
\item $i=j, \quad r_1 \neq r_1^\prime, \quad r_2 = r_2^\prime$: 
$\mathbb{E}[X_{a,r_1}^i  Y_{b,r_2}^i X_{a,r^{\prime}_1}^j  Y_{b,r^{\prime}_2}^j] = 
{m_a(i)} \cdot 
{m_b(i)} \cdot 
{m_a(i)} 
$.

\item $i=j, \quad r_1 \neq r_1^\prime, \quad r_2 \neq r_2^\prime$: 
$\mathbb{E}[X_{a,r_1}^i  Y_{b,r_2}^i X_{a,r^{\prime}_1}^j  Y_{b,r^{\prime}_2}^j] = 
{m_a(i)} \cdot 
{m_b(i)} \cdot 
{m_a(i)} \cdot 
{m_b(i)}$.

\end{enumerate}
Thus we have,
\begin{align*}
 \mathbb{E}[Z^2_{a,b}] &= \frac{1}{(R_1 R_2)^2} \sum_{i\in V} \sum_{j\in V} \sum_{r_1=1}^{R_1} \sum_{r_2=1}^{R_2} \sum_{r^{\prime}_1=1}^{R_1} \sum_{r^{\prime}_2=1}^{R_2} \mathbb{E}[X_{a,r_1}^i  Y_{b,r_2}^i X_{a,r^{\prime}_1}^j  Y_{b,r^{\prime}_2}^j]   \\
 &\leq \sum_{i\in V} \sum_{j\in V\setminus\{i\}} 
{m_a(i) \cdot m_a(j) \cdot m_b(i) \cdot m_b(j)} + \sum_{i\in V} {{m_a(i)}^2\cdot {m_b(i)}^2} \\
&+ \frac{1}{R_1 R_2} \sum_{i\in V} {m_a(i)\cdot m_b(i)} + \frac{1}{R_1} \sum_{i\in V} {m_a(i)\cdot {m_b(i)}^2} + \frac{1}{R_2} \sum_{i\in V} { m_a(i)^2\cdot m_b(i)}  \\
&= \sum_{i,j\in V} {m_a(i) \cdot m_a(j) \cdot m_b(i) \cdot m_b(j)} + \frac{1}{R_1 R_2} \sum_{i\in V} {m_a(i)\cdot m_b(i)} \\
 &+ \frac{1}{R_1} \sum_{i\in V} {m_a(i)\cdot {m_b(i)}^2} + \frac{1}{R_2} \sum_{i\in V} { m_a(i)^2\cdot m_b(i)}.
\end{align*}
Therefore we get, 
\begin{align}
\text{Var}(Z_{a,b}) & = \mathbb{E}[Z^2_{a,b}]-\mathbb{E}[Z_{a,b}]^2 \nonumber \\
& \leq \sum_{i,j\in V} {m_a(i) \cdot m_a(j) \cdot m_b(i) \cdot m_b(j)} + \frac{1}{R_1 R_2} \sum_{i\in V} {m_a(i)\cdot m_b(i)} \\
 &+ \frac{1}{R_1} \sum_{i\in V} {m_a(i)\cdot {m_b(i)}^2} + \frac{1}{R_2} \sum_{i\in V} { m_a(i)^2\cdot m_b(i)} - \left(\sum_{i\in V} {m_a(i)\cdot m_b(i)}\right)^2 \nonumber\\
 &=   \frac{1}{R_1 R_2} \sum_{i\in V} {m_a(i)\cdot m_b(i)} + \frac{1}{R_1} \sum_{i\in V} {m_a(i)\cdot {m_b(i)}^2} + \frac{1}{R_2} \sum_{i\in V} { m_a(i)^2\cdot m_b(i)} \nonumber\\
&\leq  \frac{1}{R_1 R_2} \|m_a\|_2 \|m_b\|_2 + \frac{1}{R_1 } \|m_a\|_2 \|m_b\|_4^2 + \frac{1}{R_2} \|m_a\|_4^2 \|m_b\|_2 && \text{By Cauchy-Schwarz} \nonumber
 \label{eq:varz}
 \end{align}
Since $G=(V,E)$ is $(k,\varphi,\epsilon)$ clusterable by Lemma \ref{lem:Mt-bndr} we have
\[\|m_a\|_4 \leq O\left(k^2\cdot n^{-3/4+(40\epsilon /\varphi^2)}\right)  \text{.}\]
and by Lemma \ref{lem:Mt-bnd} we have
\[\|m_a\|_2 \leq O(k\cdot n^{-1/2+(20\epsilon /\varphi^2)}) \text{.}\]

Thus we get
\begin{align}
\text{Var}(Z_{a,b}) \leq O\left(\frac{k^2\cdot n^{-1+(40\epsilon /\varphi^2)}}{R_1 R_2}  + \left(\frac{1}{R_1 }+\frac{1}{R_2 } \right) \cdot k^5\cdot n^{-2+(100\epsilon /\varphi^2)}\right) 
\end{align}
Then by Chebyshev's inequality, we get,
\begin{align}
\Pr\left[|Z_{a,b}-\mathbb{E}[Z_{a,b}]|>{{\sigma_\text{err}}}\right]&\leq 
\frac{\text{Var}[Z_{a,b}]} {{{\sigma_\text{err}}}^2}\nonumber \\
&\leq O\left( \frac{1}{{\sigma_\text{err}}^2}\cdot \left(\frac{k^2\cdot n^{-1+(40\epsilon /\varphi^2)}}{R_1 R_2}  + \left(\frac{1}{R_1 }+\frac{1}{R_2 } \right) \cdot k^5\cdot n^{-2+(100\epsilon /\varphi^2)}\right)\right)  \label{eq:const-o}\\
&\leq  \frac{1}{100} \text{.} \nonumber
\end{align}
The last inequality holds by our choice of $R_1$ and $R_2$ as follows where $c$ is a large enough constant that cancels the constant hidden in $O\left(\cdot\right)$ in \eqref{eq:const-o}.
\[
\min(R_1, R_2) \geq  \frac{c\cdot  k^5\cdot n^{-2+(100\epsilon /\varphi^2)}}{ \sigma_\text{err}^2} \]
and
\[
 R_1 R_2 \geq     \frac{ c\cdot  k^2\cdot n^{-1+(40\epsilon /\varphi^2)}}{ \sigma_\text{err}^2}
\]
\end{proof}


\begin{restatable}{lemma}{lemcol}\label{lem:collision}
Let $k \geq 2$ be an integer, $\varphi \in (0,1)$ and $\epsilon\in (0,1)$. Let $G=(V,E)$ be a $d$-regular and $(k,\varphi,\epsilon)$-clusterable graph. Let $\sigma_{\text{err}}>0$ and let $s>0$, $R>0$, $t>0$ be integers. Let $I_S=\{i_1,\ldots, i_s\}$ be a multiset of $s$ indices chosen from
$\{1,\dots,n\}$. Let $S$ be the $n\times s$ matrix whose $j$-th column equals $\mathds{1}_{i_j}$. Let $c>1$ be a large enough constant. Let $R\geq \max\left\{\frac{c\cdot k^{5}\cdot n^{-2+100\epsilon/\varphi^2}}{\sigma_{\text{err}}^2} , \frac{c\cdot  k\cdot n^{-1/2+20\epsilon/\varphi^2}}{\sigma_{\text{err}}}\right\}$
  Let $\G\in \R^{s\times s}$ be the output of Algorithm \textsc{EstimateCollisionProbabilities($G,I_S,R,t$)} (Algorithm \ref{alg:gram}). Let $M$ be  the random walk transition matrix of $G$. 
then with probability at least $1-n^{-100}$ we have
\[\|\G-(M^tS)^{T}(M^tS)\|_2\leq s\cdot \sigma_{\text{err}}\text{.}\]
\end{restatable}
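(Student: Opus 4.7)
The plan is to argue entry-wise concentration via Lemma~\ref{lem:pairwise-collision} and then pass from an entry-wise bound to a spectral-norm bound using the Frobenius norm. Fix indices $a,b\in [s]$ and let $x=i_a$, $y=i_b$ be the corresponding sampled vertices. The $(a,b)$-entry of $(M^tS)^T(M^tS)$ equals $(M^t\mathds{1}_x)^T(M^t\mathds{1}_y)$, and the $(a,b)$-entry of $\G_i=\tfrac{1}{2}(\widehat{P}_i^T\Q_i+\Q_i^T\widehat{P}_i)$ equals $\tfrac{1}{2}\bigl((\widehat{m}^P_x)^T\widehat{m}^Q_y+(\widehat{m}^Q_x)^T\widehat{m}^P_y\bigr)$, where the superscripts $P,Q$ indicate which of the two independent batches of random walks produced the empirical distribution. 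Crucially, because $\Q_i$ and $\widehat{P}_i$ are produced from \emph{independent} invocations of \textsc{EstimateTransitionMatrix}, the two factors in each product $(\widehat{m}^P_x)^T\widehat{m}^Q_y$ are independent of each other, so the precondition of Lemma~\ref{lem:pairwise-collision} is met even when $x=y$.

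Next I would verify that the assumed lower bound on $R$ exactly matches the two hypotheses of Lemma~\ref{lem:pairwise-collision} with $R_1=R_2=R$: namely, $R\ge c\cdot k^5 n^{-2+100\epsilon/\varphi^2}/\sigma_{\text{err}}^2$ and $R^2\ge c\cdot k^2 n^{-1+40\epsilon/\varphi^2}/\sigma_{\text{err}}^2$ (the latter being implied by the second branch of the max, $R\ge c\cdot k\cdot n^{-1/2+20\epsilon/\varphi^2}/\sigma_{\text{err}}$). Thus Lemma~\ref{lem:pairwise-collision} guarantees that each of the two summands inside $\G_i(a,b)$ approximates $(M^t\mathds{1}_x)^T(M^t\mathds{1}_y)$ within $\sigma_{\text{err}}$ with probability at least $0.99$, and hence so does their average $\G_i(a,b)$. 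The median-of-$O(\log n)$ amplification built into \textsc{EstimateCollisionProbabilities} (taking the entry-wise median over the $O(\log n)$ iterations $\G_i$) then boosts the confidence to $1-n^{-200}$ for each fixed entry.

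A union bound over the $s^2\le n^2$ entries gives, with probability at least $1-n^{-100}$, the entry-wise bound
\[
\bigl|\G(a,b)-\bigl(M^tS\bigr)^T\bigl(M^tS\bigr)(a,b)\bigr|\le \sigma_{\text{err}}\qquad\text{for every }a,b\in[s].
\]
Finally I would convert this to a spectral-norm bound via the standard estimate $\|\cdot\|_2\le\|\cdot\|_F$: the Frobenius norm of the $s\times s$ difference is at most $\sqrt{s^2\cdot \sigma_{\text{err}}^2}=s\cdot\sigma_{\text{err}}$, yielding $\|\G-(M^tS)^T(M^tS)\|_2\le s\cdot\sigma_{\text{err}}$, as required.

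There is no real obstacle here; the only point that requires care is the independence argument, i.e.\ making sure that when we apply Lemma~\ref{lem:pairwise-collision} to the product $(\widehat{m}^P_x)^T\widehat{m}^Q_y$ we are using two \emph{independent} batches of walks even in the diagonal case $x=y$. This is precisely why the algorithm draws $\widehat{P}_i$ and $\Q_i$ as separate calls to \textsc{EstimateTransitionMatrix} and then symmetrizes.
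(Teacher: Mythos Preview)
Your proposal is correct and follows essentially the same approach as the paper: apply Lemma~\ref{lem:pairwise-collision} entry-wise, boost confidence via the median over $O(\log n)$ independent repetitions, union bound over the entries, and pass from the entry-wise (hence Frobenius) bound to the spectral norm. You are in fact slightly more careful than the paper in spelling out the independence point---that the two factors in $(\widehat{m}^P_x)^T\widehat{m}^Q_y$ come from separate invocations of \textsc{EstimateTransitionMatrix}, which is exactly what makes the diagonal case $x=y$ go through---whereas the paper's proof leaves this implicit.
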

\begin{proof}
Note that as per line \eqref{ln:Qi} and \eqref{ln:Pi} of Algorithm \ref{alg:gram} we first construct matrices $\widehat P_i \in \R^{n\times s}$ and $\Q_i \in \R^{n\times s}$ using Algorithm \ref{alg:compQ}. as per line \eqref{ln:defQ} of Algorithm \ref{alg:compQ} matrix $\widehat P_i$ (or $\Q_i$) has $s$ columns each corresponds to a vertex $x\in S$. The column corresponding to vertex $x$ is $\m_x$. as per line \ref{ln:ranwalk22} of Algorithm \ref{alg:compQ},  $\m_x$ is defined as the empirical probability distribution of running $R$ random walks of length $t$ starting from vertex $x$. Thus for any $x,y\in S$ we have the entry corresponding to the $x^{\text{th}}$ row and $y^{\text{th}}$ column of $\Q_i^T \widehat P_i$ (or $\widehat P_i^T\Q_i$) is $\langle \m_x , \m_y \rangle$. Since 
\[R\geq \max\left\{\frac{c\cdot k^{5}\cdot n^{-2+100\epsilon/\varphi^2}}{\sigma_{\text{err}}^2} , \frac{c\cdot  k\cdot n^{-1/2+20\epsilon/\varphi^2}}{\sigma_{\text{err}}}\right\}\]
 then by Lemma \ref{lem:pairwise-collision} with probability at least $0.99$ we have 
\[|\m_x^T \m_y-(M^{t}\mathds{1}_{x})^{T}(M^{t}\mathds{1}_{y})|\leq \sigma_{\text{err}} \text{.}\] 
Note that as per line \ref{ln:Gi} of Algorithm \ref{alg:gram} we define
$\G_i:=\frac{1}{2}\left(\widehat{P}_i^T \Q_i+\Q_i^T \widehat{P}_i\right)$. Thus for any $x,y\in I_S$ we have the entry corresponding to the $x^{\text{th}}$ row and $y^{\text{th}}$ column of $\G_i$ (i.e., $\G_i(x,y)$) with probability $0.99$ satisfies the following:
\[|\G_i(x,y)-(M^{t}\mathds{1}_{x})^{T}(M^{t}\mathds{1}_{y})|\leq \sigma_{\text{err}} \text{.}\] 
Note that as Line \ref{ln:GG} of Algorithm \ref{alg:gram} we define $\G$ as a matrix obtained by taking the entrywises median of $\mathcal{G}_i$'s over $O(\log n)$ runs. Thus with probability at least $1-n^{-100}$ we have for all $x,y\in I_S$
\[|\G(x,y)-(M^{t}\mathds{1}_{x})^{T}(M^{t}\mathds{1}_{y})|\leq \sigma_{\text{err}} \text{.}\] 
which implies
\[\|\G-(M^tS)^{T}(M^tS)^T\|_F\leq s\cdot {\sigma_\text{err}} \text{.}\]
Since the Frobenius norm of a matrix bounds its maximum eigenvalue from above we get
 \[\|\G-(M^tS)^{T}(M^tS)^T\|_2 \leq s\cdot{\sigma_\text{err}} \text{.}\]
\end{proof}
Recall that for a symmetric matrix $A$, we write $\nu_i(A)$ (resp. $\nu_{\max}(A), \nu_{\min}(A))$ to denote the $i^{\text{th}}$ largest (resp. maximum, minimum) eigenvalue of $A$.
\begin{lemma}
\label{lem:eigMS}
Let $k \geq 2$ be an integer, $\varphi \in (0,1)$ and $\epsilon\in (0,1)$. Let $G=(V,E)$ be a $d$-regular and $(k,\varphi,\epsilon)$-clusterable graph. Let  $t\geq  \frac{20\log n}{\varphi^2}$. Let $c>1$ be a large enough constant and $s\geq
c\cdot n^{240\cdot\epsilon / \varphi^2} \cdot \log n \cdot k^{4}$. Let $I_S=\{i_1,\ldots, i_s\}$ be a multiset of $s$ indices chosen independently and uniformly at random from
$\{1,\dots,n\}$. Let $S$ be the $n\times s$ matrix whose $j$-th column equals $\mathds{1}_{i_j}$.   Let $M$ be  the random walk transition matrix of $G$.
If $\frac{\epsilon}{\varphi^2}\leq \frac{1}{10^5}$ then with probability at least $1-n^{-100}$ we have
\begin{enumerate}
\item $\nu_{k}\left(\frac{n}{s} \cdot (M^tS)(M^tS)^T \right)\geq  \frac{ n^{-80\epsilon/\varphi^2}}{2}$ \label{itm1:nu_k_mS}
\item $\nu_{k+1}\left(\frac{n}{s} \cdot (M^tS)(M^tS)^T \right) \leq n^{-9} \text{.}$ \label{itm2:nu_k_mS}
\end{enumerate}
\end{lemma}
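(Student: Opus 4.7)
The plan is to relate the spectrum of $\widetilde{A} := \frac{n}{s}(M^tS)(M^tS)^T$ to that of $A := M^t(M^t)^T = U\Sigma^{2t}U^T$, which has known eigenvalues $\gamma_i = (1-\lambda_i/2)^{2t}$. From Lemma~\ref{lem:bnd-lambda} we have $\lambda_k\le 2\epsilon$ and $\lambda_{k+1}\ge \varphi^2/2$. With $t \ge 20\log n/\varphi^2$, standard estimates give
\[
\gamma_k \;=\; (1-\lambda_k/2)^{2t} \;\ge\; (1-\epsilon)^{40\log n/\varphi^2} \;\ge\; n^{-80\epsilon/\varphi^2},
\qquad
\gamma_{k+1} \;\le\; (1-\varphi^2/4)^{2t} \;\le\; n^{-10}.
\]

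For item~\eqref{itm1:nu_k_mS}, the plan is to invoke Weyl's inequality after using the matrix Bernstein bound of Lemma~\ref{lem:G-Ghat} to control $\|A-\widetilde{A}\|_2$. Take $\xi := \frac{1}{2}n^{-80\epsilon/\varphi^2}$ in Lemma~\ref{lem:G-Ghat}. The key ingredient is the column norm bound: Lemma~\ref{lem:Mt-bnd} gives $B = \max_{\ell}\|M^t\mathds{1}_\ell(M^t\mathds{1}_\ell)^T\|_2 \le O(k^2 n^{-1+40\epsilon/\varphi^2})$, so
\[
\frac{40 n^2 B^2 \log n}{\xi^2} \;\le\; O\!\left(k^4\, n^{240\epsilon/\varphi^2}\, \log n\right),
\]
which is satisfied by our choice $s \ge c\cdot n^{240\epsilon/\varphi^2}\log n\cdot k^4$ for a sufficiently large constant $c$. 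Hence with probability $1-n^{-100}$, $\|A-\widetilde{A}\|_2 \le \tfrac{1}{2}n^{-80\epsilon/\varphi^2}$, and Weyl's inequality yields
\[
\nu_k(\widetilde{A}) \;\ge\; \nu_k(A) - \|A-\widetilde{A}\|_2 \;\ge\; n^{-80\epsilon/\varphi^2} - \tfrac{1}{2} n^{-80\epsilon/\varphi^2} \;=\; \tfrac{1}{2} n^{-80\epsilon/\varphi^2},
\]
which is item~\eqref{itm1:nu_k_mS}.

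For item~\eqref{itm2:nu_k_mS}, applying Weyl alone is insufficient since the perturbation bound $\|A-\widetilde{A}\|_2$ derived above is far larger than $n^{-9}$. Instead, I would argue directly via the Courant--Fischer min--max characterization: choose the $k$-dimensional subspace $W = \operatorname{span}(U_{[k]})$ as the ``witness'' in
\[
\nu_{k+1}(\widetilde{A}) \;\le\; \max_{v \perp W,\ \|v\|_2=1} v^T \widetilde{A} v \;=\; \max_{v \perp W,\ \|v\|_2=1} \frac{n}{s}\,\|S^T M^t v\|_2^2.
\]
For $v \perp U_{[k]}$, writing $v = U_{-[k]} z$ with $\|z\|_2 = \|v\|_2 = 1$, we get $M^t v = U_{-[k]}\Sigma^t_{-[k]} z$, and therefore $\|M^t v\|_2 \le (1-\lambda_{k+1}/2)^t \le n^{-5}$. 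Since $S$ has $s$ columns of unit norm, the Frobenius-norm bound gives $\|S\|_2 \le \sqrt{s}$, hence $\|S^T M^t v\|_2^2 \le s\cdot n^{-10}$. Combining,
\[
\nu_{k+1}(\widetilde{A}) \;\le\; \frac{n}{s}\cdot s\cdot n^{-10} \;=\; n^{-9},
\]
proving item~\eqref{itm2:nu_k_mS}. Note that this argument is deterministic (once the sample $I_S$ is fixed), so the only probabilistic event is the Lemma~\ref{lem:G-Ghat} bound used for item~\eqref{itm1:nu_k_mS}, which fails with probability at most $n^{-100}$.

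The main subtlety is precisely this split in strategy: item~\eqref{itm1:nu_k_mS} needs a matrix-concentration argument calibrated to the $n^{-80\epsilon/\varphi^2}$ scale of $\gamma_k$ (dictating the polynomial factor $n^{240\epsilon/\varphi^2}$ in $s$), whereas item~\eqref{itm2:nu_k_mS} cannot be obtained from concentration and instead relies on the spectral-gap argument together with the trivial bound $\|S\|_2^2\le \|S\|_F^2 = s$. Everything else is a routine application of Lemmas~\ref{lem:bnd-lambda}, \ref{lem:G-Ghat}, \ref{lem_Weyl}, and \ref{lem:Mt-bnd} already established earlier.
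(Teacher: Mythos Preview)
Your proof is correct and follows essentially the same approach as the paper. Item~\eqref{itm1:nu_k_mS} is identical: apply Lemma~\ref{lem:G-Ghat} with $\xi=\tfrac12 n^{-80\epsilon/\varphi^2}$ (the sample size $s$ is calibrated exactly for this) and then Weyl. For item~\eqref{itm2:nu_k_mS}, the paper takes a slightly different packaging: it passes to $\frac{n}{s}\nu_{k+1}(S^TM^{2t}S)$ via Lemma~\ref{lem_commute}, splits $S^TM^{2t}S$ into the rank-$k$ piece $S^TU_{[k]}\Sigma_{[k]}^{2t}U_{[k]}^TS$ (whose $(k{+}1)$-st eigenvalue vanishes) plus the tail $S^TU_{-[k]}\Sigma_{-[k]}^{2t}U_{-[k]}^TS$, and bounds the latter by $\nu_{\max}(\Sigma_{-[k]}^{2t})\cdot\nu_{\max}(SS^T)\le n^{-10}\cdot s$. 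Your Courant--Fischer argument with witness subspace $\operatorname{span}(U_{[k]})$ is the same computation viewed variationally, reaching the identical endpoint $\frac{n}{s}\cdot s\cdot n^{-10}=n^{-9}$; both rely on the same two ingredients, the spectral gap $\lambda_{k+1}\ge\varphi^2/2$ and the trivial bound $\|S\|_2^2\le s$.
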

\begin{proof}
Let $(u_1,\ldots,u_n)$ be an orthonormal basis of eigenvectors of $L$ with corresponding eigenvalues $0 \leq \lambda_1\leq \ldots \leq \lambda_n$. 
Observe that each $u_i$ is also an eigenvector of $M$, with eigenvalue $1-\frac{\lambda_i}{2}$.
Note that $G$ is $(k,\varphi,\epsilon)$-clusterable, therefore by Lemma \ref{lem:bnd-lambda} we have $\lambda_k\leq 2\epsilon$ and $\lambda_{k+1} \geq \frac{\varphi^2}{2}$. We have
\begin{equation}
\label{mu_k+1-M}
\nu_{k+1}(M^{2t})=\left(1-\frac{\lambda_{k+1}}{2} \right)^{2t}\leq n^{-10} \text{, and}
\end{equation}
\begin{equation}
\label{mu_k-M}
\nu_{k}(M^{2t})=\left(1-\frac{\lambda_{k}}{2} \right)^{2t}\geq n^{-80\epsilon/\varphi^2}
\end{equation}
\textbf{Proof of item \eqref{itm1:nu_k_mS}:}
Let $A= (M^{t}) \left(M^{t}\right)^T$, and $\widetilde{A}=\frac{n}{s} \cdot (M^{t}S) \left(M^{t}S\right)^T$. By Lemma \ref{lem:Mt-bnd} we have 
\[B=\|(M^{t}\mathds{1}_x)(M^{t}\mathds{1}_x)^T\|_2 \le \|M^{t}\mathds{x} \mathds{1}_x\|_2^2 \le O\left(k^2 \cdot n^{-1+40 \e/\varphi^2)}\right)\text{.}\] 
Let $\xi = n^{-80\e/\varphi^2}/2$.
Therefore for large enough constant $c$ and by choice of $s = c \cdot k^4 n^{240\e/\varphi^2}\log n$ we have $s \ge \frac{ 40n^2 B^2 \log n}{(\xi)^2}$. Thus  Lemma \ref{lem:G-Ghat}
yields that with probability at least $1-\frac{1}{n^{100}}$ we have
\begin{equation}
\|A-\widetilde{A}\|_2\leq  \frac{n^{-80\e/\varphi^2}}{2}  \text{.}
\end{equation}
Hence, by Weyl's Inequality (see Lemma~\ref{lem_Weyl}) we have
$$
\nu_{k}(\widetilde{A}) \geq \nu_{k}(A)+\nu_{\min}(\widetilde{A}-A) =\nu_{k}(A)-\nu_{\max}(A-\widetilde{A}) =\nu_{k}(A)-\|A-\widetilde{A}\|_2
$$
By \eqref{mu_k-M} we have $\nu_{k}(A)=\nu_{k}(M^{2t})\geq n^{-10\epsilon/\varphi^2}$ and so
\[
\nu_{k}(\widetilde{A})\geq \nu_{k}(A)-\|\widetilde{A}-A\|_2 \geq  n^{-80\epsilon/\varphi^2}-  \frac{n^{-80\e/\varphi^2}}{2}  \geq \frac{n^{-80\e/\varphi^2}}{2}  \text{.}
\]
\textbf{Proof of item \eqref{itm2:nu_k_mS}:} By Lemma \ref{lem_commute} we have
\begin{align*}
\nu_{k+1}(\widetilde{A})=\frac{n}{s}\cdot \nu_{k+1}((M^tS)(M^tS)^T) = \frac{n}{s}\cdot \nu_{k+1}((M^tS)^T(M^tS)) 
= \frac{n}{s}\cdot \nu_{k+1}(S^{T}{M}^{2t}S).
\end{align*}

Recall that $1-\frac{\lambda_1}{2}\geq \cdots\geq 1-\frac{\lambda_n}{2}$ are the eigenvalues of $M$, and $\Sigma$ is the diagonal matrix of these eigenvalues in descending order, and $U$ is the matrix whose columns are orthonormal eigenvectors of $M$ arranged in descending order of their eigenvalues. We have $M^{2t}=U\Sigma^{2t}U^{T}$. Recall that $\Sigma_{[k]}$ is $k \times k$ diagonal matrix with entries  $1-\frac{\lambda_1}{2}\geq \cdots\geq 1-\frac{\lambda_k}{2}$, and $\Sigma_{-[k]}$ is a  $(n-k) \times (n-k)$ diagonal matrix with  entries  $1-\frac{\lambda_{k+1}}{2}\geq \cdots\geq 1-\frac{\lambda_n}{2}$. We can write $U\Sigma^{2t}U=U_{[k]}\Sigma_{[k]}^{2t}U_{[k]}^T+U_{-[k]}\Sigma_{-[k]}^{2t}U^T_{-[k]}$, thus we get
\begin{align*}
\nu_{k+1}(\widetilde{A})&=\frac{n}{s}\cdot \nu_{k+1}\left(S^{T}M^{2t} S\right) \\
&=\frac{n}{s}\cdot\nu_{k+1}\left(S^{T}(U\Sigma^{2t}U^{T})S\right) \\
&=\frac{n}{s}\cdot\nu_{k+1}\left(S^{T}\left(U_{[k]}\Sigma_{[k]}^{2t}U_{[k]}^T+U_{-[k]}\Sigma_{-[k]}^{2t}U^T_{-[k]}\right)S\right) \\
&\leq \frac{n}{s}\cdot\nu_{k+1}\left(S^{T}U_{[k]}\Sigma_{[k]}^{2t}U_{[k]}^TS\right) + \frac{n}{s}\cdot\nu_{\max}\left(S^{T}U_{-[k]}\Sigma_{-[k]}^{2t}U^T_{-[k]}S\right)  && {\text{By Weyl's inequality (Lemma \ref{lem_Weyl})}}
\end{align*}
Here $\nu_{k+1}(S^{T}U_{[k]}\Sigma_{[k]}^{2t}U_{[k]}^{T}S) =0$, because the rank of $\Sigma_{[k]}^{2t}$ is $k$. We then need to bound $\nu_{\max}(S^{T}U_{-[k]}\Sigma_{-[k]}^{2t}U^T_{-[k]} S) $. We have,
\begin{align*}
&\nu_{\max}\left(S^{T}U_{-[k]}\Sigma_{-[k]}^{2t}U^T_{-[k]}S\right)  =\nu_{\max}\left(U_{-[k]}\Sigma_{-[k]}^{2t}U^T_{-[k]}SS^\top\right) && {\text{By Lemma \ref{lem_commute}}}\\
&\leq \nu_{\max}\left(U_{-[k]}\Sigma_{-[k]}^{2t}U^T_{-[k]}\right) \cdot \nu_{\max}\left(SS^\top\right) && {\text{By submultiplicativity of norm}} \\
&=\nu_{\max}\left(\Sigma_{-[k]}^{2t}U^T_{-[k]}U_{-[k]}\right) \cdot \nu_{\max}\left(SS^\top\right)  && {\text{By Lemma \ref{lem_commute}}} \\
&=\nu_{\max}\left(\Sigma_{-[k]}^{2t}\right) \cdot \nu_{\max}\left(SS^\top\right)  && {\text{Since } U^T_{-[k]}U_{-[k]}=I} \\
\end{align*}
Next, observe that $SS^{\top}\in \R^{n\times n}$ is a diagonal matrix whose $(a,a)^{\text{\tiny{th}}}$ entry is the multiplicity of vertex $a$ is sampled in $S$. Thus, $\nu_{\max}(SS^{\top})$ is the maximum multiplicity over all vertices, which is at most $s$. Also note that $\nu_{\max}(\Sigma_{-[k]}^{2t}) = \left(1-\frac{\lambda_{k+1}}{2}\right)^{2t}$. Thus by \eqref{mu_k+1-M} we get,
\[\nu_{k+1}(\widetilde{A}) \leq \frac{n}{s}\cdot \nu_{\max}\left(S^{T}U_{-[k]}\Sigma_{-[k]}^{2t}U^T_{-[k]}S\right) \leq \frac{n}{s} \cdot s\cdot \left(1-\frac{\lambda_{k+1}}{2}\right)^{2t} \leq n\cdot n^{-10}=n^{-9}\text{.}\]

\end{proof}
Now we are ready to prove the main result of this section (Lemma \ref{lem-u-close}).
\lemuclose*
\begin{proof}
Let $\widetilde{A}=\frac{n}{s} \cdot (M^{t}S)^T \left(M^{t}S\right)=\widetilde{W}\widetilde{\Sigma^2}\widetilde{W}^T$ and $\widehat{A}=\frac{n}{s}\cdot\G$. Thus we have
\[\widetilde{A}^2= \left(\frac{n}{s} \cdot (M^{t}S)^T \left(M^{t}S\right) \right)^2 =\widetilde{W}\widetilde{\Sigma^4}\widetilde{W}^T \]
and
\[\widehat{A}^2=\left(\frac{n}{s}\cdot\G\right)^2= \widehat{W} \widehat{\Sigma^2} \widehat{W}^T \text{.}\]
Recall that for a symmetric matrix $A$, we write $\nu_i(A)$  to denote the $i^{\text{th}}$ largest  eigenvalue of $A$. We want to apply Lemma \ref{lem:neg-l2-close} to get
\begin{align*}
\left|\left|\widetilde{W}_{[k]} \widetilde{\Sigma}^{-4}_{[k]} \widetilde{W}^T_{[k]} - \widehat{W}_{[k]}\widehat{\Sigma}_{[k]}^{-2}\widehat{W}_{[k]}^T\right|\right|_2 &\leq  \frac{16 \cdot \|\widetilde{A}^2-\widehat{A}^2\|_2+4\cdot\nu_{k+1}(\widetilde{A}^2)}{\nu_k(\widetilde{A}^2)^2} 
\end{align*}
Hence, we first need to verify the prerequisites of Lemma \ref{lem:neg-l2-close}. Let $c_3>1$ be a large enough constant that we will define soon, and let $\sigma_{\text{err}}=  \frac{ \xi \cdot n^{(-1-360\cdot \epsilon/\varphi^2)}}{c_3\cdot k^2}$. Let $c$ be a constant from Lemma \ref{lem:collision}. By the assumption of the lemma for  large enough constant $c_2>1$ we have
\[R\geq \frac{ c_2 \cdot k^{9}\cdot n^{1/2+820\cdot\epsilon/\varphi^2}}{\xi^2}  \geq \max\left\{ \frac{c 
\cdot k^{5}\cdot n^{-2+100\epsilon/\varphi^2}}{\sigma_{\text{err}}^2} ,  \frac{c\cdot k\cdot n^{-1/2+20\epsilon/\varphi^2}}{\sigma_{\text{err}}} \right\}\text{.}\] 
Thus we can apply Lemma \ref{lem:collision}. Hence, with probability at least $1-n^{-100}$ we have 
\begin{equation}
\label{eq:q-mts}
\|\G-(M^{t}S)^{T}(M^{t}S)\|_2\leq s\cdot\sigma_{\text{err}}\text{.}
\end{equation}
Therefore we have
\begin{align}
\|\G^2-\left((M^{t}S)^{T}(M^{t}S)\right)^2\|_2 &= \|\G \left(\G-(M^{t}S)^{T}(M^{t}S)\right) + \left(\G-(M^{t}S)^{T}(M^{t}S)\right) (M^{t}S)^{T}(M^{t}S)\|_2 \nonumber\\
&\leq \|\G-(M^{t}S)^{T}(M^{t}S)\|_2 \left( \|\G\|_2 +  \|(M^{t}S)^{T}(M^{t}S)\|_2 \right) \nonumber \\
&\leq s\cdot\sigma_{\text{err}} \left( (s\cdot\sigma_{\text{err}} + \|(M^{t}S)^{T}(M^{t}S)\|_2) + \|(M^{t}S)^{T}(M^{t}S)\|_2 \right) \nonumber\\
&=(s\cdot\sigma_{\text{err}})^2 + 2
\cdot s\cdot\sigma_{\text{err}} \|(M^{t}S)^{T}(M^{t}S)\|_2 \label{eq:gap-gram}
\end{align}
Note that
\begin{align}
\|(M^{t}S)^{T}(M^{t}S)\|_2&\leq \|(M^{t}S)^{T}(M^{t}S)\|_F \nonumber\\
&= \sqrt{\sum_{x,y\in S} \left((M^{t}\mathds{1}_x)^T (M^{t}\mathds{1}_y) \right)^2} \nonumber\\
&\leq \sqrt{\sum_{x,y\in S} \|M^{t}\mathds{1}_x\|_2^2 \|M^{t}\mathds{1}_y\|_2^2 } && \text{By Cauchy Schwarz} \nonumber\\
&\leq O\left(\sqrt{s^2\cdot \left(k^2\cdot n^{-1+(40\epsilon /\varphi^2)}\right)^2}\right) && \text{By Lemma \ref{lem:Mt-bnd}} \nonumber\\
&=O\left(s\cdot k^2\cdot n^{-1+(40\epsilon /\varphi^2)}\right) && \text{.} \label{eq:MtsForb}
\end{align}
Puuting \eqref{eq:MtsForb} and \eqref{eq:gap-gram} and by choice of $\sigma_{\text{err}}= \frac{ \xi \cdot n^{(-1-360\cdot \epsilon/\varphi^2)}}{c_3\cdot k^2}$ we get
\begin{equation}
\label{eq:A2aTIL2}
\|\widetilde{A}^2-\widehat{A}^2 \|_2=  \left(\frac{n}{s}\right)^2\|\G^2-\left((M^{t}S)^{T}(M^{t}S)\right)^2\|_2 \leq O\left(\frac{\xi^2\cdot n^{-720\cdot\epsilon/\varphi^2}}{(c_3)^2\cdot k^4}+ \frac{ \xi\cdot n^{-320\epsilon/\varphi^2}}{c_3}\right) = O\left(\frac{ \xi\cdot n^{-320\epsilon/\varphi^2}}{c_3}\right)
\end{equation}
By Lemma \ref{lem_commute} for any $i\in [s]$ we have
\[\nu_i (\widetilde{A})=\nu_i\left(\frac{n}{s} \cdot (M^{t}S) \left(M^{t}S\right)^T \right) = \nu_i\left(\frac{n}{s} \cdot (M^{t}S)^T \left(M^{t}S\right) \right)\]
Let $c_1$ be the constant from Lemma \ref{lem:eigMS}. Since $s\geq c_1 \cdot n^{240\epsilon / \varphi^2}\cdot \log n \cdot k^{4}$ therefore by Lemma \ref{lem:eigMS}  with probability at least $1-n^{-100}$ we have 
\begin{equation}
\label{eq:mu-k-gtild}
\nu_{k}{\left(\widetilde{A}^2\right)}=\nu_{k}{\left(\left(\frac{n}{s} \cdot (M^{t}S)^T \left(M^{t}S\right) \right)^2\right)} \geq \left(\frac{n^{-80\epsilon/\varphi^2}}{2} \right)^2 \geq \frac{n^{-160\epsilon/\varphi^2}}{4}
\end{equation}
and
\begin{equation}
\label{eq:mu-k+1-gtild}
\nu_{k+1}{\left(\widetilde{A}^2\right)}=\nu_{k+1}{\left(\left(\frac{n}{s} \cdot (M^{t}S)^T \left(M^{t}S\right) \right)^2\right)}\leq  n^{-18 }
\end{equation}
By the bound on the $\nu_{k}{(\widetilde{A}^2)}$ and the  inequality on $\|\widetilde{A}^2-\widehat{A}^2 \|_2$, we know that $\nu_{k}{(\widehat{A}^2)}$ is non-zero and so $\widehat \Sigma_{[k]}^{-2}$ exist.
Recall that $\widetilde{A}= \widetilde{W}\widetilde{\Sigma}^{2} \widetilde{W}^T \text{.}$ Observing that $\widetilde{A}$ is positive semi-definite, $\nu_{k+1}(\widetilde{A}^2)< \nu_{k}(\widetilde{A}^2)/4$,
and $\|\widetilde{A}^2-\widehat{A}^2 \|_2  \le \frac{1}{100}\cdot \nu_k(\widetilde{A}^2) $ we can apply Lemma \ref{lem:neg-l2-close} and we get
\begin{align*}
\left|\left|\widetilde{W}_{[k]} \widetilde{\Sigma}^{-4}_{[k]} \widetilde{W}^T_{[k]} - \widehat{W}_{[k]}\widehat{\Sigma}_{[k]}^{-2}\widehat{W}_{[k]}^T\right|\right|_2 &\leq  \frac{16 \cdot \|\widetilde{A}^2-\widehat{A}^2\|_2+4\cdot\nu_{k+1}(\widetilde{A}^2)}{\nu_k(\widetilde{A}^2)^2} \\
&\leq
\frac{ O\left(\frac{\xi\cdot  n^{(-320\epsilon/\varphi^2)}}{c_3}  \right)+ 4\cdot  n^{-18}}{\frac{1}{16}\cdot n^{(-320\epsilon/\varphi^2)}} && \text{By \eqref{eq:A2aTIL2} and \eqref{eq:mu-k-gtild} }\\
&\leq O\left(\frac{\xi}{c_3}\right)+64\cdot n^{-17}\\
&\leq \xi
\end{align*}
The last inequality holds since $\xi\geq n^{-8}$ and by setting $c_3$ to a large enough constant to cancel the constant hidden in $O\left(\frac{\xi}{{c_3}} \right)$.
\end{proof}

\subsection{Proof of Theorem \ref{thm:dot}}
\thmdot*
To prove Theorem \ref{thm:dot} we need to combine Lemma \ref{lem:bnd-e1} from Section \ref{subsubsec:cols} with the following lemma.


\begin{restatable}{lemma}{lemuabsclose}\label{lem:u-abs-close}
Let $G=(V,E)$ be a $d$-regular and $(k,\varphi,\epsilon)$-clusterable graph. Let $0<\delta<1/2$, and $1/n^6 < \xi < 1$.
Let $\mathcal{D}$ denote the  data structure constructed by  Algorithm \textsc{InitializeOracle($G,\delta,\xi$)} (Algorithm \ref{alg:LearnEmbedding}). 
Let $x,y\in V$. Let $\adp{f_x,f_y}\in \R$ denote the value returned by 
$\textsc{SpectralDotProductOracle}(G,x,y, \delta, \xi, \mathcal{D})$ (Algorithm \ref{alg:dotProduct}).
Let  $t\geq  \frac{20\log n}{\varphi^2}$. Let $c>1$ be a large enough constant and let $s\geq c\cdot n^{240\cdot\epsilon / \varphi^2}\cdot \log n \cdot k^{4}$. Let $I_S=\{i_1,\ldots, i_s\}$ be a multiset of $s$ indices chosen independently and uniformly at random from
$\{1,\dots,n\}$. Let $S$ be the $n\times s$ matrix whose $j$-th column equals $\mathds{1}_{i_j}$.   Let $M$ be  the random walk transition matrix of $G$. Let $\sqrt{\frac{n}{s}} \cdot M^tS=\widetilde{U}\widetilde{\Sigma}\widetilde{W}^T$  be  an SVD of $\sqrt{\frac{n}{s}} \cdot M^tS$ where $\widetilde{U}\in \R^{n\times n}, \widetilde{\Sigma}\in \R^{n\times n}, \widetilde{W}\in \R^{s\times n}$. If $\frac{\epsilon}{\varphi^2}\leq \frac{1}{10^5}$, and Algorithm \ref{alg:LearnEmbedding} succeeds,  then with probability at least $1-n^{-100}$ matrix $\widetilde \Sigma_{[k]}^{-4}$ exists and we have
\[\left|\adp{f_{x},f_y} - (M^{t}\mathds{1}_{x})^T  (M^tS)\left(\frac{n}{s}\cdot\widetilde{W}_{[k]} \widetilde{\Sigma}^{-4}_{[k]} \widetilde{W}^T_{[k]}\right) (M^tS)^T (M^{t}\mathds{1}_{y}) \right|<
\frac{\xi }{n} \text{.}
\]
\end{restatable}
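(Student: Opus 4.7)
The plan is to carry out a hybrid argument that interpolates between the oracle's output $\adp{f_x,f_y} = \alpha_x^T \Psi \alpha_y$ and the target quantity $A^T \Psi^\star B$, where $m_x := M^t \mathds{1}_x$, $m_y := M^t \mathds{1}_y$, $A := (M^tS)^T m_x$, $B := (M^tS)^T m_y$, and $\Psi^\star := \tfrac{n}{s} \widetilde{W}_{[k]} \widetilde{\Sigma}_{[k]}^{-4} \widetilde{W}_{[k]}^T$ (the existence of $\widetilde{\Sigma}_{[k]}^{-4}$ follows from Lemma~\ref{lem:eigMS}, which lower-bounds $\nu_k(\widetilde{\Sigma}^2)$ away from zero). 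The decomposition
\[
\alpha_x^T \Psi \alpha_y - A^T \Psi^\star B = (\alpha_x - A)^T \Psi \alpha_y + A^T \Psi (\alpha_y - B) + A^T (\Psi - \Psi^\star) B
\]
reduces the task to bounding three terms: two capture the concentration of the empirical random walk distributions $\m^i_x, \m^i_y$ around their means (hence of $\alpha_x, \alpha_y$ around $A, B$), while the third is exactly the quantity controlled by Lemma~\ref{lem-u-close}.

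First I would handle the concentration of $\alpha_x$ around $A$ (symmetrically $\alpha_y$ around $B$). The $j$-th coordinate of $\Q_i^T \m^i_x$ is the empirical collision estimate $\m_{i_j}^T \m^i_x$, whose expectation is $(M^t \mathds{1}_{i_j})^T m_x = A(j)$. Setting $\sigma_{\text{err}} := \Theta(\xi/(k^2 n^{1+200\epsilon/\varphi^2}))$, the parameter choices $R_{\text{query}} = \Theta(n^{\delta+500\epsilon/\varphi^2}k^9/\xi^2)$ and $R_{\text{init}} = \Theta(n^{1-\delta+980\epsilon/\varphi^2}k^{17}/\xi^2)$ satisfy the prerequisites of Lemma~\ref{lem:pairwise-collision} (using $\delta \leq 1/2$ to lower bound $\min(R_{\text{init}}, R_{\text{query}})$). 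Boosting via the median over the $O(\log n)$ independent runs of Algorithm~\ref{alg:dotProduct} drives the per-coordinate failure probability below $n^{-102}$, and a union bound over $j \in [s]$ yields $\|\alpha_x - A\|_2, \|\alpha_y - B\|_2 \leq \sqrt{s}\,\sigma_{\text{err}}$ with probability at least $1 - n^{-100}/2$.

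Next I would collect the operator-norm bounds needed in each of the three subterms. Lemma~\ref{lem:Mt-bnd} gives $\|m_x\|_2, \|m_y\|_2 \leq O(k n^{-1/2+20\epsilon/\varphi^2})$, and Cauchy--Schwarz applied coordinate-wise then yields $\|A\|_2, \|B\|_2 \leq O(\sqrt{s}\, k^2 n^{-1+40\epsilon/\varphi^2})$. Lemma~\ref{lem:eigMS} supplies $\nu_k(\widetilde{\Sigma}^2) \geq \tfrac{1}{2} n^{-80\epsilon/\varphi^2}$; since $\widehat{\Sigma}$ is obtained by eigendecomposing $\tfrac{n}{s} \G$, with $\G$ close to $(M^tS)^T(M^tS)$ whose nonzero spectrum agrees with that of $(M^tS)(M^tS)^T$ (Lemma~\ref{lem_commute}), Weyl's inequality propagates this lower bound to $\nu_k(\widehat{\Sigma}_{[k]}^2) = \Omega(n^{-160\epsilon/\varphi^2})$, giving $\|\Psi\|_2, \|\Psi^\star\|_2 \leq O(\tfrac{n}{s}\, n^{160\epsilon/\varphi^2})$. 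Finally, $\|\alpha_y\|_2 \leq \|B\|_2 + \sqrt{s}\,\sigma_{\text{err}}$ is dominated by $\|B\|_2$ for this choice of $\sigma_{\text{err}}$.

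For the third term I would invoke Lemma~\ref{lem-u-close} with its precision parameter set to $\xi' := \xi/(ck^4 n^{80\epsilon/\varphi^2})$ for a sufficiently large constant $c$, giving $\|\Psi - \Psi^\star\|_2 \leq \tfrac{n}{s}\xi'$; the value of $R_{\text{init}}$ baked into \textsc{InitializeOracle} satisfies Lemma~\ref{lem-u-close}'s requirement $R_{\text{init}} \geq \Omega(k^9 n^{1/2+820\epsilon/\varphi^2}/\xi'^2)$ precisely because its extra factors $k^{17}$ and $n^{980\epsilon/\varphi^2}$ absorb the $k^8$ and $n^{160\epsilon/\varphi^2}$ blowup from $1/\xi'^2$ (using $1-\delta \geq 1/2$). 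Multiplying the three operator-norm bounds then drives each of the three summands below $\xi/(3n)$ after absorbing constants into the large-$c$ choices, and a final union bound over the failure events from the concentration argument and Lemma~\ref{lem-u-close} gives total failure probability at most $n^{-100}$. The main obstacle is the bookkeeping: one must verify that the polynomial factors built into $R_{\text{init}}$, $R_{\text{query}}$, and $s$ by Algorithms~\ref{alg:LearnEmbedding} and~\ref{alg:dotProduct} are jointly large enough so that after cascading through Lemma~\ref{lem:pairwise-collision}, Lemma~\ref{lem-u-close}, and the above norm estimates, every summand lands below $\xi/(3n)$; the hybrid decomposition itself and the individual norm bounds are each routine.
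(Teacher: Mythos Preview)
Your proposal is correct and follows essentially the same approach as the paper: both decompose $\alpha_x^T\Psi\alpha_y - \mathbf{a}_x^T\Psi^\star\mathbf{a}_y$, bound $\|\alpha_x-\mathbf{a}_x\|_2$ and $\|\alpha_y-\mathbf{a}_y\|_2$ via Lemma~\ref{lem:pairwise-collision} (with the median trick), bound $\|\Psi-\Psi^\star\|_2$ via Lemma~\ref{lem-u-close} applied at precision $\xi' = \Theta(\xi/(k^4 n^{80\epsilon/\varphi^2}))$, bound $\|\mathbf{a}_x\|_2,\|\mathbf{a}_y\|_2$ via Lemma~\ref{lem:Mt-bnd} and Cauchy--Schwarz, and bound $\|\Psi^\star\|_2$ via Lemma~\ref{lem:eigMS}. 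The only cosmetic difference is that you use the three-term telescoping identity $\alpha_x^T\Psi\alpha_y - A^T\Psi^\star B = (\alpha_x-A)^T\Psi\alpha_y + A^T\Psi(\alpha_y-B) + A^T(\Psi-\Psi^\star)B$, whereas the paper expands the full product $(\mathbf{a}_x+\mathbf{e}_x)(A+E)(\mathbf{a}_y+\mathbf{e}_y)-\mathbf{a}_xA\mathbf{a}_y$ into seven cross terms and bounds each; your version is slightly cleaner but requires the extra bound on $\|\Psi\|_2$ (which you correctly obtain either by Weyl as you sketch, or more directly from $\|\Psi\|_2\le\|\Psi^\star\|_2+\|\Psi-\Psi^\star\|_2$).
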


\begin{proof}
Note that as per line \ref{ln:axay} of Algorithm \ref{alg:dotProduct}  $\adp{f_{x},f_y}$ is defined as 
\[\adp{f_{x},f_y}=\alpha_x^T\Psi\alpha_y \text{.}\] 
where as per line \ref{ln:Pi} of Algorithm \ref{alg:LearnEmbedding} we define matrix $\Psi\in\R^{s\times s}$ as 
\[\Psi=\frac{n}{s}\cdot\widehat{W}_{[k]}\widehat{\Sigma}_{[k]}^{-2} \widehat{W}_{[k]}^T\text{,}\] 
and $\alpha_x,\alpha_y\in\R^{s}$ are vectors obtained by taking entrywise median over all $(\Q_i)^T(\m^i_x)$ and $(\Q_i)^T(\m^i_y)$. (See line \ref{ln:alx} and \ref{ln:aly} of Algorithm \ref{alg:dotProduct}). 
For any vertex $a\in V$ recall that $m_a$ denote $m_a=M^{t}\mathds{1}_{a}$. We then define  
\[\mathbf{a}_x=m_x^T  (M^tS), \quad A=\frac{n}{s}\cdot\widetilde{W}_{[k]} \widetilde{\Sigma}^{-4}_{[k]} \widetilde{W}^T_{[k]}, \quad  \mathbf{a}_y=(M^tS)^T m_y \text{, and}\]
\[\mathbf{e}_x=\alpha_x^T -\mathbf{a}_x, \quad  E= \Psi-A, \quad  \mathbf{e}_y=\alpha_y-\mathbf{a}_y\]
Thus by triangle inequality we have
\begin{align*}
&\left|\left| \alpha_x^T\Psi\alpha_y -  m_x^T  (M^tS)\left(\frac{n}{s}\cdot\widetilde{W}_{[k]} \widetilde{\Sigma}^{-4}_{[k]} \widetilde{W}^T_{[k]}\right) (M^tS)^T m_y \right|\right|_2  \\
&=\| \left(\mathbf{a}_x +\mathbf{e}_x\right) \left(A+E\right) \left(\mathbf{a}_y +\mathbf{e}_y\right) -\mathbf{a}_xA\mathbf{a}_y\|_2 \\
&\leq \|\mathbf{e}_x\|_2\|A\|_2\|\mathbf{a}_y\|_2+\|\mathbf{a}_x\|_2\|E\|_2\|\mathbf{a}_y\|_2+\|\mathbf{a}_x\|_2\|A\|_2\|\mathbf{e}_y\|_2  \\
&+\|\mathbf{e}_x\|_2\|E\|_2\|\mathbf{a}_y\|_2+\|\mathbf{a}_x\|_2\|E\|_2\|\mathbf{e}_y\|_2+\|\mathbf{e}_x\|_2\|A\|_2\|\mathbf{e}_y\|_2+\|\mathbf{e}_x\|_2\|E\|_2\|\mathbf{e}_y\|_2
\end{align*}
Therefore we need to bound $\|\mathbf{e}_x\|_2$, $\|\mathbf{e}_y\|_2$, $\|E\|_2$, $\|\mathbf{a}_x\|_2$, $\|\mathbf{a}_y\|_2$ and $\|A\|_2$. 
Let $c'>1$ be a constant we will define soon, and let $\xi' = \frac{\xi}{c'\cdot k^4 \cdot n^{80 \e/\varphi^2}}$. Let $c_1$ be a constant in front of $s$ and let $c_2$ be a constant in front of $R$ in Lemma \ref{lem-u-close}. Thus for large enough $c$ we have
$s  \geq c_1\cdot n^{240\epsilon / \varphi^2}\cdot \log n \cdot k^{4}$ and $R_\text{init}= \varTheta{(n^{1-\delta +980 \cdot\epsilon / \varphi^2} \cdot k^{17}/{\xi}^{2})}\geq \frac{c_2\cdot k^{9}\cdot n^{1/2+820\cdot\epsilon/\varphi^2}}{\xi'^2}$ as per line \ref{ln:setRR} of Algorithm \ref{alg:LearnEmbedding}, hence,  by Lemma \ref{lem-u-close} applied with $\xi'$ we have with probability at least $1-n^{-100}$, $\widehat{W}_{[k]}^T-$ and $\widetilde{\Sigma}^{-4}_{[k]}$ exist and we have
\begin{equation}
\label{eq:Eu}
\|E\|_2=\frac{n}{s}\cdot\left|\left| \widehat{W}_{[k]}\widehat{\Sigma}_{[k]}^{-2}\widehat{W}_{[k]}^T- \widetilde{W}_{[k]} \widetilde{\Sigma}^{-4}_{[k]} \widetilde{W}^T_{[k]} \right|\right|_2 \leq \frac{n}{s} \cdot  \xi'  = \frac{\xi\cdot n}{c'\cdot k^4 \cdot n^{80 \e/\varphi^2}\cdot s} \text{.}
\end{equation}
Recall that for a symmetric matrix $A$, we write $\nu_i(A)$ (resp. $\nu_{\max}(A), \nu_{\min}(A))$ to denote the $i^{\text{th}}$ largest (resp. maximum, minimum) eigenvalue of $A$. We have
\[\|A\|_2=\frac{n}{s}\cdot\|\widetilde{W}_{[k]} \widetilde{\Sigma}^{-4}_{[k]} \widetilde{W}^T_{[k]} \|_2 = \frac{n}{s}\cdot\nu_{\max}\left(\widetilde{W}_{[k]} \widetilde{\Sigma}^{-4}_{[k]} \widetilde{W}^T_{[k]} \right) = \frac{n}{s}\cdot\frac{1}{\nu_{k}\left(\widetilde{W}_{[k]} \widetilde{\Sigma}^{4}_{[k]} \widetilde{W}^T_{[k]} \right)}\]
Note that $\frac{n}{s}\cdot(M^{t}S)^{T}(M^{t}S)=\widetilde{W} \widetilde{\Sigma}^{2} \widetilde{W}^T$. Thus by Lemma \ref{lem:eigMS} item \eqref{itm1:nu_k_mS} we have
\[
\nu_{k}\left(\widetilde{W}_{[k]} \widetilde{\Sigma}^{4}_{[k]} \widetilde{W}^T_{[k]} \right)
=\nu_k\left(\widetilde{W} \widetilde{\Sigma}^{4} \widetilde{W}^T \right)  
= \nu_{k}\left(\left( \frac{n}{s}\cdot (M^{t}S)^{T}(M^{t}S) \right)^2\right)  
\geq \frac{ n^{-160\epsilon/\varphi^2}}{4}  \text{.}
\]
Therefore we have
\begin{equation}
\label{eq:Au}
\|A\|_2 \leq  4\cdot \frac{n}{s}\cdot  n^{160\epsilon/\varphi^2} = \frac{4\cdot n^{1+160\epsilon/\varphi^2}}{ s}\text{.}
\end{equation}
Since $G$ is $(k,\varphi,\epsilon)$-clusterable by Lemma \ref{lem:Mt-bnd} for any vertex $x\in V$ we have 
\begin{equation}
\label{eq:Mt1aa}
\|m_{x}\|^2_2 \leq O\left(k^2\cdot n^{-1+(40\epsilon /\varphi^2)}\right)  \text{.}
\end{equation}
Then we get
\begin{align}
\label{eq:Ax}
\|\mathbf{a}_x\|_2&=\|(m_{x})^T(M^tS)\|_2 \nonumber\\
&=\sqrt{\sum_{a\in I_S} \left((m_{x})^T(m_a) \right)^2} \nonumber\\
&\leq \sqrt{\sum_{a\in I_S} \|m_{x}\|^2_2\|m_a\|_2^2} && \text{By Cauchy Schwarz}\nonumber\\
&\leq O\left(\sqrt{s\cdot \left( k^2\cdot n^{-1+(40\epsilon /\varphi^2)} \right)^2} \right)&&\text{By \eqref{eq:Mt1aa}} \nonumber\\
&=O\left(\sqrt{s}\cdot  k^2\cdot n^{-1+(40\epsilon /\varphi^2)}  \right)
\end{align}
By the same analysis we get 
\begin{equation}
\label{eq:Ay}
\|\mathbf{a}_y\|_2\leq O\left(\sqrt{s}\cdot  k^2\cdot n^{-1+(40\epsilon /\varphi^2)}  \right)
\end{equation}
Now we left to bound $\|\mathbf{e}_x\|_2$ and $\|\mathbf{e}_y\|_2$. Recall that $\mathbf{e}_x=\alpha_x- (M^{t}\mathds{1}_{x})^T(M^tS)$ where $\alpha_x,\alpha_y\in\R^{s}$ are vectors obtained by taking entrywises median over all $(\Q_i)^T(\m^i_x)$ and $(\Q_i)^T(\m^i_y)$. (See line \ref{ln:alx} and \ref{ln:aly} of Algorithm \ref{alg:dotProduct}). Also note that as per line \ref{ln:mx} and line \ref{ln:my} of Algorithm \ref{alg:dotProduct}, $\m^i_x$ and $\m^i_y$ are defined as the empirical probability distribution of running $R_{\text{query}}$ random walks of length $t$ starting from vertex $x$ and $y$. Also note that $\Q_i$s are generated by Algorithm \ref{alg:compQ} which runs $R_{\text{init}}$ random walks from vertices in $I_S$. For any $z\in I_S$ any $i\in \{1,\ldots, O(\log n)\}$ let $\mathbf{q}_z^i$ denote the column corresponding to vertex $z$ in $\Q_i$. 

Let $c_3$ be a constant in front of $R_1$ and  $R_2$ in Lemma \ref{lem:pairwise-collision}. Let $\sigma_{\text{err}}=\frac{\xi}{c'\cdot k^2\cdot  n^{(1+200\epsilon/\varphi^2)}}$.  Thus by choice of $R_\text{init}= \varTheta{(n^{1-\delta +980 \cdot\epsilon / \varphi^2} \cdot k^{17}/{\xi}^{2})}$ as per line \ref{ln:setRR} of Algorithm \ref{alg:LearnEmbedding} and $R_\text{query}=\varTheta{(n^{\delta +500\cdot \epsilon / \varphi^2}\cdot k^{9}/{\xi}^{2})} $ as per line \ref{ln:set_r_q} of Algorithm \ref{alg:dotProduct},  the prerequisites of Lemma \ref{lem:pairwise-collision} are satisfied:
\[\min(R_\text{init}, R_\text{query}) \geq  \frac{c_3\cdot  k^5\cdot n^{-2+(100\epsilon /\varphi^2)}}{ \sigma_\text{err}^2}
\text{,  and, }
 R_\text{init} \cdot R_\text{query} \geq     \frac{ c_3\cdot k^2\cdot n^{-1+(40\epsilon /\varphi^2)}}{ \sigma_\text{err}^2}\]
Thus we can apply Lemma \ref{lem:pairwise-collision}. Hence, for any $z\in I_S$ with probability at least $0.99$ we have 
\[|{(\m^i_x)}^T {\mathbf{q}^i_z} - ({m_{x}})^T (m_{z}) |\leq {\sigma_{\text{err}}}\] 
 Note that as per line \ref{ln:alx} and line \ref{ln:aly} of Algorithm \ref{alg:dotProduct} we take entrywise median over all $(\Q_i)^T(\m^i_x)$ and $(\Q_i)^T(\m^i_y)$. Since we are running $O(\log n)$ copies of the same algorithm with success probability at least $0.99$, thus by simple Chernoff bound  with probability at least $1-n^{-100}$ for all $z\in I_S$ we have
\[|\alpha_x(z) - {(m_{x})}^T (m_{z}) |\leq {\sigma_{\text{err}}}\] 
Therefore by choice of $\sigma_{\text{err}}=\frac{\xi}{c'\cdot k^2\cdot  n^{1+200\cdot\epsilon/\varphi^2}}$ we get
\begin{equation}
\label{eq:Ex}
\|\mathbf{e}_x\|_2=\|\alpha_x-(m_{x})^T(M^tS)\|_2\leq {\sqrt{s}}\cdot\sigma_{\text{err}}=\frac{\sqrt{s}\cdot\xi}{c'\cdot k^2\cdot  n^{(1+200\epsilon/\varphi^2)}}\text{.}
\end{equation}
By the same analysis we get 
\begin{equation}
\label{eq:Ey}
\|\mathbf{e}_y\|_2\leq \frac{\sqrt{s}\cdot\xi}{c'\cdot k^2\cdot  n^{(1+200\epsilon/\varphi^2)}}\text{.}
\end{equation}
Putting \eqref{eq:Eu},  \eqref{eq:Au}, \eqref{eq:Mt1aa}, \eqref{eq:Ax}, \eqref{eq:Ay}, \eqref{eq:Ex}, and \eqref{eq:Ey} and for large enough $n$ we get:      

\begin{align*}
&\left|\left| \adp{f_x,f_y} - \cdot m_x^T  (M^tS)\left(\frac{n}{s}\cdot\widetilde{W}_{[k]} \widetilde{\Sigma}^{-4}_{[k]} \widetilde{W}^T_{[k]}\right) (M^tS)^T m_y \right|\right|  \leq \\
& \|\mathbf{e}_x\|_2\|A\|_2\|\mathbf{a}_y\|_2+\|\mathbf{a}_x\|_2\|E\|_2\|\mathbf{a}_y\|_2+\|\mathbf{a}_x\|_2\|A\|_2\|\mathbf{e}_y\|_2 + \\
&\|\mathbf{e}_x\|_2\|E\|_2\|\mathbf{a}_y\|_2+\|\mathbf{a}_x\|_2\|E\|_2\|\mathbf{e}_y\|_2+\|\mathbf{e}_x\|_2\|A\|_2\|\mathbf{e}_y\|_2+\|\mathbf{e}_x\|_2\|E\|_2\|\mathbf{e}_y\|_2 \\ 
&\leq  2\cdot \left( \frac{\sqrt{s}\cdot\xi}{c'\cdot k^2\cdot   n^{(1+200\epsilon/\varphi^2)}}  \right)\left( \frac{4\cdot n^{1+160\epsilon/\varphi^2}}{ s} \right) \cdot O\left(\sqrt{s}\cdot k^2\cdot n^{-1+(40\epsilon /\varphi^2)}\right)  \\
&+ 2\cdot\left( \frac{\sqrt{s}\cdot n^{(80\epsilon/\varphi^2)} k^2}{n}\right)\left( \frac{\xi\cdot n}{c'\cdot k^4 \cdot n^{80 \e/\varphi^2}\cdot s} \right)\left( \frac{\xi}{c'\cdot \sqrt{s}\cdot n^{(1+20\epsilon/\varphi^2)}}\right) \\
&+ \left( \frac{ \sqrt{s}\cdot\xi}{c'\cdot k^2\cdot  n^{(1+200\epsilon/\varphi^2)}} \right)^2\left( \frac{4\cdot n^{1+160\epsilon/\varphi^2}}{ s} \right)  \\
&+ O\left(\sqrt{s}\cdot k^2\cdot n^{-1+(40\epsilon /\varphi^2)}  \right)^2 \left(  \frac{\xi \cdot n}{c'\cdot k^4 \cdot n^{80 \e/\varphi^2}\cdot s} \right)  \\
& + \left(  \frac{\sqrt{s}\cdot\xi}{c'\cdot k^2\cdot  n^{(1+200\epsilon/\varphi^2)}} \right)^2\left(  \frac{\xi\cdot n}{c'\cdot k^4 \cdot n^{80 \e/\varphi^2}\cdot s} \right)\\
&\leq O\left(\frac{\xi}{c'\cdot n}\right) \\
&\leq \frac{\xi}{ n} \text{.}
\end{align*}
The last inequality holds by setting $c'$ to a large enough constant to cancel the hidden constant of $O\left(\frac{\xi}{c'\cdot n} \right)$.
\end{proof}

Now we are able to complete the proof of Theorem \ref{thm:dot}.
\thmdot*
\begin{proof}[Proof of Theorem \ref{thm:dot}]
\textbf{Correctness:}
Note that as per line \ref{ln:sets} of Algorithm \ref{alg:LearnEmbedding} we set $s= \varTheta{(n^{480\cdot \epsilon / \varphi^2}\cdot \log n \cdot k^{8}/{\xi}^2)}$. Recall that $I_S=\{i_1,\ldots, i_s\}$ is the multiset of $s$ vertices each sampled uniformly at random (see line \ref{ln:sample} of Algorithm \ref{alg:LearnEmbedding}).  Let $S$ be the $n\times s$ matrix whose $j$-th column equals $\mathds{1}_{i_j}$. Recall that $M$ is  the random walk transition matrix of $G$. Let $\sqrt{\frac{n}{s}} \cdot M^tS=\widetilde{U}\widetilde{\Sigma}\widetilde{W}^T$  be  the eigendecomposition of $\sqrt{\frac{n}{s}} \cdot M^tS$. We define
\[e_1=\left|(M^{t}\mathds{1}_{x})^T  (M^tS)\left(\frac{n}{s}\cdot\widetilde{W}_{[k]} \widetilde{\Sigma}^{-4}_{[k]} \widetilde{W}^T_{[k]}\right) (M^tS)^T (M^{t}\mathds{1}_{y}) -  \mathds{1}_x^T U_{[k]}{U}_{[k]}^T  \mathds{1}_y \right| \]
and 
\[e_2=\left| \adp{f_{x},f_y} - (M^{t}\mathds{1}_{x})^T  (M^tS)\left(\frac{n}{s}\cdot\widetilde{W}_{[k]} \widetilde{\Sigma}^{-4}_{[k]} \widetilde{W}^T_{[k]}\right) (M^tS)^T (M^{t}\mathds{1}_{y}) \right| \]
By triangle inequality we have 
\[\left|\adp{f_{x},f_y}  - \langle f_x, f_y \rangle \right|  = \left|\adp{f_{x},f_y} - \mathds{1}_x^T U_{[k]}{U}_{[k]}^T  \mathds{1}_y \right| \leq e_1 + e_2 \text{.}\]
Let $\xi'=\xi/2$.  Let $c$ be a constant in front of $s$ in Lemma \ref{lem:bnd-e1} and  $c'$ be a constant in front of $s$ in Lemma \ref{lem:u-abs-close}. Note that as per line \ref{ln:sets} of Algorithm \ref{alg:LearnEmbedding} we set $s= \varTheta{(n^{480\cdot \epsilon / \varphi^2}\cdot \log n \cdot k^{8}/{\xi}^2)}$. Since $\frac{\epsilon}{\varphi^2}\leq \frac{1}{10^5}$ and $s\geq c\cdot n^{480\epsilon / \varphi^2}\cdot \log n \cdot k^{8}/{\xi'}^2$ by Lemma \ref{lem:bnd-e1} with probability at least $1-n^{-100}$ we have $e_1\leq \frac{\xi'}{n}=\frac{\xi}{2\cdot n} \text{.}$ Since  $s\geq c'\cdot n^{240\epsilon / \varphi^2}\cdot \log n \cdot k^{4}$, by Lemma \ref{lem:u-abs-close} with probability at least $1-2\cdot n^{-100}$ we have $e_2\leq \frac{\xi}{2\cdot n}\text{.}$ Thus with probability at least $1-3\cdot n^{-100}$ we have \[\left| \adp{f_{x},f_y}  - \langle f_x, f_y \rangle \right|  \leq e_1 + e_2 \leq \frac{\xi}{2\cdot n}+\frac{\xi}{2\cdot n} \leq \frac{\xi}{n}\text{.}\]

\textbf{Space and runtime of \textsc{InitializeOracle}:} 
Algorithm  \textsc{InitializeOracle}($G,\delta,\xi$) (Algorithm~\ref{alg:LearnEmbedding}) samples a set $I_S$. Then as per line \ref{ln:setQi} of Algorithm~\ref{alg:LearnEmbedding} it estimates the empirical probability distribution of random walks starting from any vertex $x\in I_S$ for $O(\log n)$ times. To that end as per line \ref{ln:ranwalk22} of Algorithm \ref{alg:compQ} it runs $R_{\text{init}}$ random walks of length $t$ from each vertex $x\in I_S$. So it takes $O(\log n\cdot s\cdot R_{\text{init}} \cdot t)$ time and requires $O(\log n\cdot s\cdot R_{\text{init}})$ space to store endpoints of random walks. Then as per line \ref{ln:setG} of Algorithm~\ref{alg:LearnEmbedding} it estimates matrix $\G$ such that the entry corresponding to the $x^\text{th}$ row and $y^{\text{th}}$ column of $\G$ is an estimation of pairwise collision probability of random walks starting from $x,y \in I_S$. To compute $\G$ we call Algorithm \textsc{EstimateCollisionProbabilities}($G,I_S,R_{\text{init}},t$) (Algorithm \ref{alg:gram}) for $O(\log n)$ times. Algorithm \ref{alg:gram}  runs $R_{\text{init}}$ random walks of length $t$ from each vertex $x\in I_S$, hence, It takes $O(s\cdot R_{\text{init}}\cdot t \cdot \log n)$ time and it requires $O(s^2\cdot \log n)$ space to store matrix $\G$. Then as per line \ref{ln:QSVD} of Algorithm \ref{alg:LearnEmbedding} we compute the SVD of matrix $\G$ in time $O(s^3)$. Thus overall Algorithm \ref{alg:LearnEmbedding} runs in time $O \left(\log n\cdot s\cdot R_{\text{init}} \cdot t+ s^3\right)$. Thus, by choice of $t= \varTheta\left(\frac{\log n}{\varphi^2}\right)$	, $R_{\text{init}}=\varTheta{(n^{1-\delta +980 \cdot\epsilon / \varphi^2}  \cdot k^{17}/{\xi}^{2})}$ and $s= \varTheta(n^{480\cdot \epsilon / \varphi^2}\cdot \log n \cdot k^{8}/{\xi}^2)$  as in Algorithm \ref{alg:LearnEmbedding} we get that Algorithm \ref{alg:LearnEmbedding} runs in time $O \left(\log n\cdot s\cdot R_{\text{init}} \cdot t+ s^3\right)= (\frac{k}{\xi})^{O(1)}\cdot n^{1-\delta+O(\epsilon/\varphi^2)}\cdot \log^3 n\cdot\frac{1}{\varphi^2}$ and returns a data structure of size $O \left(s^2+\log n\cdot s \cdot R_{\text{init}}\right)=(\frac{k}{\xi})^{O(1)}\cdot n^{1-\delta+O(\epsilon/\varphi^2)}\cdot \log^2 n\text{.}$

\textbf{Space and runtime of \textsc{SpectralDotProductOracle}:} 
Algorithm \textsc{SpectralDotProductOracle}($G,x,y, \delta, \xi, \mathcal{D})$(Algorithm \ref{alg:dotProduct}) repeats $O(\log n)$ copies of the following procedure: it runs $R_{\text{query}}$ random walks of lenght $t$ from vertex $x$ and vertex $y$, then it computes $\m_x\cdot \Q_i$ and $\m_y\cdot \Q_i$. Since $\Q_i\in \R^{n\times s}$ has $s$ columns and since $\m_x$ has at most $R_{\text{query}}$ non-zero entries, thus one can compute $\m_x\cdot \Q_i$ in time $R_{\text{query}}\cdot s$. Finally Algorithm \ref{alg:dotProduct} take entrywises median of computed vectors (see line \ref{ln:alx} and line \ref{ln:aly} of Algorithm \ref{alg:dotProduct}), and returns value $\alpha_x \Psi \alpha_y$ (see line \ref{ln:axay} of Algorithm \ref{alg:dotProduct}). Since $\alpha_x,\alpha_y\in \R^{s}$ and $\Psi\in \R^{s\times s}$ one can compute  $\alpha_x \Psi \alpha_y$ in time $O(s^2)$. Thus overall Algorithm \ref{alg:dotProduct} takes $O\left(t\cdot R_{\text{query}}\cdot \log n + s\cdot R_{\text{query}} \cdot \log n + s^2\right)$ time and $O\left( R_{\text{query}}\cdot \log n + s\cdot R_{\text{query}} \cdot \log n + s^2\right)$ space. Thus, by choice of $t= \varTheta\left(\frac{\log n}{\varphi^2}\right)$	, $R_{\text{query}}=\varTheta{(n^{\delta +500 \cdot\epsilon / \varphi^2}  \cdot k^{9}/{\xi}^{2})}$ and $s= \varTheta(n^{480\cdot\epsilon / \varphi^2}\cdot \log n \cdot k^{8}/{\xi}^2)$  as in Algorithm \ref{alg:LearnEmbedding} and Algorithm \ref{alg:dotProduct} we get that the  Algorithm \ref{alg:dotProduct} runs in  time $ (\frac{k}{\xi})^{O(1)}\cdot n^{\delta+O(\epsilon/\varphi^2)}\cdot \frac{\log^2 n}{\varphi^2}$ and returns a data structure of size $ (\frac{k}{\xi})^{O(1)}\cdot n^{\delta+O(\epsilon/\varphi^2)}\cdot \log^2 n \text{.}$

\end{proof}

\subsection{Computing approximate norms and spectral dot products (Proof of Theorem \ref{thm:dotpi})}\label{sec:dotproductcomp}
To design the clustering algorithm in Section \ref{sec:algo}, since we cannot evaluate the dot-product of the spectral embedding exactly in sublinear time, we prove that it is enough to have access to approximate dot-product of the spectral embedding. 
In Algorithm~\ref{alg:ballcarving}, Algorithm~\ref{alg:inside}  and throughout the analysis of in Section \ref{sec:algo} we will use $\adp{\cdot,\cdot}$ to denote approximate spectral dot products and $ \an{\cdot}$ to denote the approximate norm of a vector. Let $r\in [k]$ and $B,B_1, \ldots, B_r \subseteq V$. Let $\widehat{\mu}, \widehat{\mu}_1,\ldots, \widehat{\mu}_r \in \R^k$ where $\wh{\mu} = \frac{\sum_{z \in B} f_z}{|B|}$ and $\wh{\mu}_i = \frac{\sum_{z \in B_i} f_z}{|B_i|}$. All dot products we will try to approximate in Section \ref{sec:algo} will be of the form $\rdp{f_x, \wh{\Pi}(\wh{\mu})} $ and all the norms that we approximate are of the form $\an{\wh{\Pi}(\wh{\mu})}$, where $x \in V$ and $\wh{\Pi}$ is defined as a orthogonal projection onto $span(\{\wh{\mu}_1,\dots, \wh{\mu}_r \})^{\perp}$. To compute such dot products we call Algorithm~\ref{alg:dot-apx-pi} in the following way (see Corollary \ref{corr:dotpi}):
\begin{equation}\label{eq:fx-pi-mu}
\adp{f_x,\widehat{\Pi} \widehat{\mu}}:=  \frac{1}{|B|}\cdot\sum_{y\in B} \adp{f_x,\widehat{\Pi} f_y} \text{,}
\end{equation}
\begin{equation}\label{eq:norm-pi-mu}
\an{\widehat{\Pi} \widehat{\mu}}^2:=  \frac{1}{|B|}\cdot\sum_{x\in B} \adp{f_x,\widehat{\Pi} \widehat{\mu}}  \text{.}
\end{equation}

\begin{algorithm}[H]
\caption{\textsc{DotProductOracleOnSubspace}($G,x,y, \delta, \xi, \mathcal{D}, B_1,\ldots, B_r$)  \Comment Need: $\epsilon/\varphi^2 \leq \frac{1}{10^5}$
\newline \text{ }\Comment $\mathcal{D}:=\{\Psi,\Q_1,\ldots ,\Q_{O(\log n)}\}$}
\label{alg:dot-apx-pi}
\begin{algorithmic}[1]
		\State Let $X\in \R^{r\times r}, h_x\in \R^r, h_y\in \R^r$.
		\State Let $\xi':=\Theta(\xi\cdot n^{(-80\epsilon/\varphi^2)}\cdot k^{-6})$ \label{ln:alg-dot-pi-xi}
	\For{$i,j$ in $[r]$}
		\State $X(i,j):= \frac{1}{|B_i||B_j|} \cdot \sum_{z_i\in B_i}\sum_{z_j \in B_j}\textsc{SpectralDotProduct}(G,z_i, z_j, \delta,\xi', \mathcal{D})$  \label{ln:Yij} 
		\State \Comment{$X(i,j)=\adp{\widehat{\mu}_i, \widehat{\mu}_j}$}
	\EndFor
	\For{$i$ in $[r]$}
		\State $h_x(i):= \frac{1}{|B_i|} \cdot \sum_{z_i\in B_i}\textsc{SpectralDotProduct}(G,z_i, x, \delta, \xi',\mathcal{D})$ \label{ln:hx} \Comment{$h_x(i)=\adp{\widehat{\mu}_i, f_x}$}
		\State $h_y(i):= \frac{1}{|B_i|} \cdot \sum_{z_i\in B_i}\textsc{SpectralDotProduct}(G,z_i, y, \delta, \xi',\mathcal{D})$ \label{ln:hy}
		\Comment{$h_y(i)=\adp{\widehat{\mu}_i, f_y}$}
	\EndFor
	\State  \Return $\adp{f_x,\widehat{\Pi} f_y}:= \textsc{SpectralDotProduct}(G,x, y, \delta,\xi', \mathcal{D}) - h_x^T X^{-1} h_y$ \label{ln:fin-apx}
\end{algorithmic}
\end{algorithm}

The following Lemma is a generalization of Lemma \ref{lem:QQ-1} to the  approximation of the cluster means (i.e, $\wh{\mu}_1,\ldots, \wh{\mu}_k$), where  $\wh{\mu}_i\in \R^k$ is a vector that approximates the center of cluster $C_i$ (i.e., $\mu_i$) such that $||\wh{\mu}_i-\mu_i||_2$ is small. 
\begin{restatable}{lemma}{lemapxQQ}
\label{lem:apxQQ-1}
Let  $k \geq 2$ be an integer, $\varphi \in (0,1)$, and $\e \in (0,1)$. Let $G=(V,E)$ be a $d$-regular graph that admits a $(k,\varphi,\e)$-clustering $C_1, \ldots, C_k$.  Let $\mu_1,\ldots,\mu_k$ denote the cluster means of $C_1, \ldots, C_k$. Let $0<\zeta<\frac{\sqrt{\epsilon}}{20\cdot k\cdot \varphi}$. Let $\wh{\mu}_1, \dots, \wh{\mu}_k \in \R^k$ denote an approximation of the cluster means such that for each $i\in[k]$, $||\mu_i-\wh{\mu}_i||_2\leq \zeta ||\mu_i||_2$. Let $S \subseteq \{\wh{\mu}_1,\ldots,\wh{\mu}_k\}$. Let $|S|=r$ and $\wh{H}\in \R^{k\times r}$ denote a matrix whose columns are the vectors in $S$.  Let $\sigma:[r]\rightarrow[k]$ denote a mapping from the the columns of $H$ to the corresponding cluster.  Let $\wh{W}\in \R^{r\times r}$ be a diagonal matrix such that $\wh{W}(i,i)=\sqrt{|C_{\sigma(i)}|}$. Let $\wh{Z}=\wh{H}\wh{W}$.Then for any vector $x\in \R^{r}$ with $||x||_2=1$ we have 
\begin{enumerate}
\item $
| x^T (\wh{Z}^T \wh{Z}- I) x| \leq\frac{5\sqrt{\epsilon}}{\varphi} \label{itm1:apxZZ}
$
\item $|x^T ((\wh{Z}^T \wh{Z})^{-1}- I)x| \leq  \frac{5\sqrt{\epsilon}}{\varphi}  \text{.}$ \label{itm2:apxZZ}
\end{enumerate}
\end{restatable}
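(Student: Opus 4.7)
The plan is to reduce the statement to Lemma \ref{lem:QQ-1} (or rather, its natural generalization to an arbitrary subset $S$) via a perturbation argument, and then obtain item~\eqref{itm2:apxZZ} from item~\eqref{itm1:apxZZ} by invoking Lemma \ref{lem:neg-spectral-close}.

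First I would introduce the ``exact'' counterpart of $\wh{Z}$. Let $H\in \R^{k\times r}$ be the matrix whose $i$-th column is $\mu_{\sigma(i)}$ and set $Z=HW$. Repeating the argument in the proof of Lemma \ref{lem:QQ-1} verbatim (the proof of that lemma only uses Lemma \ref{lem:spectraldistance} applied to the indicator vector supported on the coordinates corresponding to $\sigma([r])\subseteq [k]$, which is available for any subset, not only for the complement of a single mean), one obtains for every unit $x\in \R^r$
\[
|x^T(Z^TZ-I)x|\leq \frac{4\sqrt{\e}}{\varphi}.
\]

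Next I would control the perturbation $\wh{Z}-Z$. The $i$-th column of $\wh{Z}-Z$ equals $\sqrt{|C_{\sigma(i)}|}(\wh{\mu}_{\sigma(i)}-\mu_{\sigma(i)})$, so using $\|\wh{\mu}_i-\mu_i\|_2\leq \zeta \|\mu_i\|_2$ together with the bound $\|\mu_i\|_2^2\leq (1+4\sqrt{\e}/\varphi)/|C_i|\leq 2/|C_i|$ from Lemma \ref{lem:dotmu}, we get
\[
\|\wh{Z}-Z\|_F^2=\sum_{i=1}^r |C_{\sigma(i)}|\cdot \|\wh{\mu}_{\sigma(i)}-\mu_{\sigma(i)}\|_2^2 \leq 2r\zeta^2\leq 2k\zeta^2,
\]
hence $\|\wh{Z}-Z\|_2\leq \sqrt{2k}\,\zeta$. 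Combining with $\|Z\|_2\leq \sqrt{1+4\sqrt{\e}/\varphi}\leq 2$ and $\|\wh{Z}\|_2\leq \|Z\|_2+\|\wh{Z}-Z\|_2\leq 3$ (for the assumed range of $\zeta$), the identity $\wh{Z}^T\wh{Z}-Z^TZ=\wh{Z}^T(\wh{Z}-Z)+(\wh{Z}-Z)^T Z$ gives
\[
\|\wh{Z}^T\wh{Z}-Z^TZ\|_2\leq 5\sqrt{2k}\,\zeta\leq \frac{5\sqrt{2k}\,\sqrt{\e}}{20k\,\varphi}\leq \frac{\sqrt{\e}}{\varphi},
\]
where the penultimate inequality uses the hypothesis $\zeta\leq \sqrt{\e}/(20k\varphi)$. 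Adding the two bounds via the triangle inequality proves item~\eqref{itm1:apxZZ}.

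Finally, item~\eqref{itm1:apxZZ} implies that for every unit $x\in \R^r$ one has
\[
\bigl(1-\tfrac{5\sqrt{\e}}{\varphi}\bigr)\,x^T I\,x\leq x^T (\wh{Z}^T\wh{Z})\,x \leq \bigl(1+\tfrac{5\sqrt{\e}}{\varphi}\bigr)\,x^T I\,x,
\]
so in particular $\wh{Z}^T\wh{Z}$ is positive definite (since $\e/\varphi^2\leq 1/10^5$). Applying Lemma \ref{lem:neg-spectral-close} with $H=I$, $\wt{H}=\wh{Z}^T\wh{Z}$ and $\delta=5\sqrt{\e}/\varphi$ yields
\[
\frac{1}{1+\delta}\leq x^T (\wh{Z}^T\wh{Z})^{-1}x\leq \frac{1}{1-\delta},
\]
from which $|x^T((\wh{Z}^T\wh{Z})^{-1}-I)x|\leq \delta/(1-\delta)\leq 5\sqrt{\e}/\varphi$, using once more that $\e/\varphi^2$ is tiny so that $1/(1-\delta)\leq 1+O(\delta)$ absorbs the constant. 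This establishes item~\eqref{itm2:apxZZ}.

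The only mildly delicate point in the argument is tracking constants tightly enough that the additive perturbation coming from $\wh{Z}-Z$ plus the $4\sqrt{\e}/\varphi$ from the exact bound stays below $5\sqrt{\e}/\varphi$, and that the inversion step in Lemma \ref{lem:neg-spectral-close} does not inflate the constant beyond $5$; both are guaranteed by the generous assumption $\zeta\leq \sqrt{\e}/(20k\varphi)$ together with $\e/\varphi^2\leq 1/10^5$. No new ideas beyond Lemmas \ref{lem:dotmu}, \ref{lem:spectraldistance}, and \ref{lem:neg-spectral-close} are needed.
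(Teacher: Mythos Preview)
Your proposal is correct and follows essentially the same strategy as the paper: reduce to the exact-means bound coming from Lemma~\ref{lem:spectraldistance}, control the Gram-matrix perturbation induced by $\wh{\mu}_i-\mu_i$, and then invert via Lemma~\ref{lem:neg-spectral-close}. The only technical difference is in how the perturbation is bounded: the paper works with the full $k\times k$ matrix $\wh{Y}$ and bounds each entry $|(\wh{Y}^T\wh{Y}-Y^TY)(i,j)|\le 6\zeta$ to get a Frobenius bound of $6k\zeta$, whereas you bound $\|\wh{Z}-Z\|_F$ directly and use $\wh{Z}^T\wh{Z}-Z^TZ=\wh{Z}^T(\wh{Z}-Z)+(\wh{Z}-Z)^TZ$, yielding the slightly sharper $O(\sqrt{k}\,\zeta)$. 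Either way the perturbation sits comfortably below $\sqrt{\e}/\varphi$ given $\zeta\le \sqrt{\e}/(20k\varphi)$, and the paper (like you) uses the slack between the item~\eqref{itm1:apxZZ} constant (actually $\le 4.5$) and $5$ to absorb the $1/(1-\delta)$ blowup in the inversion step.
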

\begin{proof}
\textbf{Proof of item \eqref{itm1:apxZZ}}
Let $Y\in \R^{k\times k}$ be a matrix, whose $i$-th column is equal to $\sqrt{C_i}\cdot{\mu}_i$. By Lemma \ref{lem:spectraldistance} item \eqref{lem-spec-dist-itm2}  for any vector $\alpha\in \R^k$ with $||\alpha||_2=1$ we have
\begin{equation}\label{eq:zYI}
|\alpha^T(Y^T Y-I)\alpha| \leq \frac{4\sqrt{\epsilon}}{\varphi}
\end{equation}
Let $\wh{Y}\in \R^{k\times k}$ be a matrix, whose $i$-th column is equal to $\sqrt{C_i}\cdot\wh{\mu}_i$. Note that for any $i,j \in [k]$ we have $(Y^TY)(i,j)=\sqrt{|C_i||C_j|}\rdp{\mu_i, \mu_j}$ and  $(\wh{Y}^T\wh{Y})(i,j)=\sqrt{|C_i||C_j|}\rdp{\wh{\mu}_i, \wh{\mu}_j}$.  Therefore for any $i \in [k]$ we have
\begin{align*}
\left| (Y^TY)(i,i)-(\wh{Y}^T\wh{Y})(i,i)\right| &= |C_i| \left| ||\mu_i||^2_2 - ||\wh{\mu_i}||^2_2 \right| \\
&\leq |C_i| \cdot |(||\mu_i||_2 - ||\wh{\mu_i}||_2) (||\mu_i||_2 + ||\wh{\mu_i}||_2) |  \\
&\leq |C_i| \cdot  \left|(\zeta||\mu_i||_2) (||\mu_i||_2 + (1+\zeta)||{\mu_i}||_2) \right| &&\text{Since }   ||\wh{\mu_i}||_2 \leq (1+\zeta)||\mu_i||_2    \\
&\leq 3\cdot \zeta |C_i| \cdot   ||\mu_i||^2_2  &&\text{Since }  \zeta <1\\
&\leq 6\cdot \zeta &&\text{By Lemma \ref{lem:dotmu} }  ||\mu_i||^2_2\leq \frac{2}{|C_i|}
\end{align*}
Also for any $i\neq j\in [k]$ we have
\begin{align}
&\left| (Y^TY)(i,j)-(\wh{Y}^T\wh{Y})(i,j)\right| \\
&= \sqrt{|C_i||C_j|} \cdot \left|  \rdp{\wh{\mu}_i, \wh{\mu}_j} - \rdp{\mu_i, \mu_j}\right| \nonumber\\
&= \sqrt{|C_i||C_j|} \cdot\left| \rdp{ {\mu}_i+(\wh{\mu}_i- {\mu}_i),  {\mu}_j+(\wh{\mu}_j- {\mu}_j)} - \rdp{{\mu}_i, {\mu}_j} \right|   \nonumber\\
&\leq \sqrt{|C_i||C_j|}\cdot\left( | \rdp{ \wh{\mu}_i- {\mu}_i , \wh{\mu}_j- {\mu}_j}| + |\rdp{ \wh{\mu}_i- {\mu}_i,  \mu_j}| + |\rdp{ \wh{\mu}_j- {\mu}_j,  \mu_i} | \right) &&\text{By triangle inequality}  \nonumber \\
&\leq \sqrt{|C_i||C_j|} \cdot \left( || \wh{\mu}_i- {\mu}_i ||_2|| \wh{\mu}_j- {\mu}_j||_2 + || \wh{\mu}_i- {\mu}_i||_2||\mu_j||_2 + || \wh{\mu}_j- {\mu}_j||_2||  \mu_i||_2 \right) &&\text{By Cauchy-Schwarz}  \nonumber \\
&\leq \sqrt{|C_i||C_j|} \cdot  (\zeta^2+2\zeta)\left( ||\mu_i||_2 ||\mu_j||_2   \right) &&\text{Since }   ||\wh{\mu_i}-\mu_i||_2 \leq \zeta||\mu_i||_2 \text{ for all }i \nonumber \\
&\leq \sqrt{|C_i||C_j|} \cdot  6\cdot \zeta \cdot \frac{1}{\sqrt{|C_i||C_j|}} && \text{By Lemma \ref{lem:dotmu} }  ||\mu_i||^2_2\leq \frac{2}{|C_i|} \text{ for all }i  \nonumber\\
&\leq 6\cdot\zeta \label{eq:muijhats}
\end{align}
Therefore we have
\begin{align*} 
||(Y^TY)-(\wh{Y}^T\wh{Y})||_2 &\leq ||(Y^TY)-(\wh{Y}^T\wh{Y})||_F \\
&\leq \sqrt{\sum_{i=1}^k\sum_{j=1}^k \left((Y^TY)(i,j)-(\wh{Y}^T\wh{Y})(i,j)\right)^2} \\
&\leq  6\cdot k\cdot  \zeta \\
&\leq \frac{\sqrt{\epsilon}}{2\varphi} && \text{Since $\zeta\leq \frac{\sqrt{\epsilon}}{20\cdot k\cdot \varphi}$}
\end{align*}
Thus for any $\alpha\in \R^k$ with $||\alpha||_2=1$ we have
\begin{equation}\label{eq:YYzeta}
\left| \alpha^T \left((Y^TY)-(\wh{Y}^T\wh{Y})\right) \alpha \right| \leq \frac{\sqrt{\epsilon}}{2\varphi}
\end{equation}
Putting \eqref{eq:YYzeta} and \eqref{eq:zYI} together we get
\[\left| \alpha^T \left(\wh{Y}^T\wh{Y} -I \right) \alpha \right|\leq 4.5 \frac{\sqrt{\epsilon}}{\varphi}\]
Let $x\in \R^{r}$ be a vector with $||x||_2=1$, and let $\alpha\in \R^k$ be a vector that is $x_j=\alpha_j$ if $\wh{\mu}_j\in S$ and otherwise $x_j=0$. Thus we have $||\alpha||_2=||x||_2=1$ and $\wh{Y}z=\wh{Z}x$. Hence, we get  
\[
| x^T (\wh{Z}^T \wh{Z}- I) x|= |\alpha^T(\wh{Y}^T \wh{Y}-I)\alpha| \leq\frac{4.5\sqrt{\epsilon}}{\varphi}
\]
\textbf{Proof of item \eqref{itm2:apxZZ}}
For any vector $x\in \R^{r}$ with $||x||_2=1$ we have 
\begin{equation}
\label{eq:zhatzhat^t-inv}
1-\frac{4.5\sqrt{\epsilon}}{\varphi}\leq x^T (\wh{Z}^T \wh{Z}) x \leq  1+\frac{4.5\sqrt{\epsilon}}{\varphi}
\end{equation}
Note that $\wh{Z}^T \wh{Z}$ is symmetric and positive semidefinit. Also note that  $\wh{Z}^T \wh{Z}$ is spectrally close to $I$, hence, $\wh{Z}^T \wh{Z}$ is  invertible.  Thus by \eqref{eq:zhatzhat^t-inv} and Lemma \ref{lem:neg-spectral-close} for any vector $x\in\R^{r}$ we have
\[1-\frac{5\sqrt{\epsilon}}{\varphi}\leq  x^T (\wh{Z}^T \wh{Z})^{-1} x \leq  1+\frac{5\sqrt{\epsilon}}{\varphi}\]
Therefore we get
\[|x^T ((\wh{Z}^T \wh{Z})^{-1}- I)x| \leq  \frac{5\sqrt{\epsilon}}{\varphi}  \text{.}\] 
\end{proof}

\begin{restatable}{theorem}{thmdotpi}
\label{thm:dotpi}
Let $G=(V,E)$ be a $d$-regular graph that admits a $(k,\varphi,\epsilon)$-clustering $C_1,\dots,C_k$. Let $k \geq 2$ be an integer, $\varphi \in (0,1)$, $\frac{1}{n^5}<\xi<1$, and $\frac{\e }{\varphi^2} $ be smaller than a positive absolute constant. Then there exists an event $\mathcal{E}$ such that $\mathcal{E}$ happens with probability $1 - n^{-48}$ and conditioned on $\mathcal{E}$ the following holds.

Let $r\in[k]$. Let $\delta\in (0,1)$. Let $B_1,\ldots, B_r$ denote multisets of points. Let $b=\max_{i\in r}|B_i|$. Let $\sigma:[r]\rightarrow[k]$ denote a mapping from the set $B$ to the cluster $C=\sigma(B)$. Suppose that for all $i\in[r]$, $B_i\subseteq \sigma(B_i)$ and for all $i\neq j\in [r]$, $\sigma(B_i)\neq \sigma(B_j)$. Let $\widehat{\mu}_i=\frac{1}{|B_i|}\cdot \sum_{z\in B_i} f_z$. Suppose that for each $i\in[r]$, $||\wh{\mu}_i-\mu_{\sigma(i)}||_2\leq \frac{\sqrt{\epsilon}}{20\cdot k\cdot\varphi}||\mu_i||_2$. Let $\wh{\Pi}$ is defined as a orthogonal projection onto then $span(\{\wh{\mu}_1,\dots, \wh{\mu}_r \})^{\perp}$. Then for all $x,y \in V$ we have   
\[\left|\adp{f_x,\widehat{\Pi} f_y}-\rdp{f_x,\widehat{\Pi} f_y}\right|\leq \frac{\xi}{n} ,\]
where $\adp{f_x,\widehat{\Pi} f_y}:=\textsc{DotProductOracleOnSubspace}(G,x,y, \delta, \xi, \mathcal{D}, B_1,\ldots, B_r)$.
Algorithm \ref{alg:dot-apx-pi} runs in time $b^2\cdot (\frac{k}{\xi})^{O(1)} \cdot n^{\delta+O(\epsilon/\varphi^2)}\cdot  \frac{(\log n)^2}{\varphi^2}$. 
\end{restatable}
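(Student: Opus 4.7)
The plan is to express the exact value $\rdp{f_x,\widehat{\Pi} f_y}$ in terms of the quantities the algorithm estimates and then bound each source of error separately. Writing $\widehat{X}\in\mathbb{R}^{k\times r}$ for the matrix whose $i$-th column is $\widehat{\mu}_i$, the projector onto the complement of $\mathrm{span}(\widehat{\mu}_1,\dots,\widehat{\mu}_r)$ is $\widehat{\Pi}=I-\widehat{X}(\widehat{X}^T\widehat{X})^{-1}\widehat{X}^T$, so
\[
\rdp{f_x,\widehat{\Pi} f_y}=\rdp{f_x,f_y}-\widehat{h}_x^T\widehat{Y}^{-1}\widehat{h}_y,
\]
where $\widehat{Y}:=\widehat{X}^T\widehat{X}$ has entries $\rdp{\widehat{\mu}_i,\widehat{\mu}_j}$ and $\widehat{h}_x:=\widehat{X}^T f_x$ has entries $\rdp{\widehat{\mu}_i,f_x}$ (analogously for $\widehat{h}_y$). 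The algorithm's output $\rdp{f_x,f_y}_{apx}-h_x^T X^{-1}h_y$ has exactly the same form with approximate ingredients, so the proof reduces to controlling $\|X-\widehat{Y}\|_2$, $\|h_x-\widehat{h}_x\|_2$, $\|h_y-\widehat{h}_y\|_2$, and the extra direct-call error, all in terms of $\xi'$.

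Entrywise, each of $X(i,j),h_x(i),h_y(i)$ is an average of dot-product estimates, each produced by \textsc{SpectralDotProduct} at precision $\xi'$; Theorem~\ref{thm:dot} gives error at most $\xi'/n$ per call and I will union-bound over the at most $r^2+2r\leq 3k^2$ entries times $\max|B_i|^2\leq b^2$ summands (well within the $n^{-48}$ budget), defining $\mathcal{E}$ to be the joint success of the preprocessing and of all these calls. This yields $\|X-\widehat{Y}\|_2\leq r\xi'/n$ and $\|h_x-\widehat{h}_x\|_2,\|h_y-\widehat{h}_y\|_2\leq\sqrt{r}\,\xi'/n$, together with the direct-call bound $|\rdp{f_x,f_y}_{apx}-\rdp{f_x,f_y}|\leq\xi'/n$.

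The main obstacle is bounding $\|\widehat{Y}^{-1}\|_2$, which I handle via Lemma~\ref{lem:apxQQ-1}: with $\widehat{W}(i,i)=\sqrt{|C_{\sigma(i)}|}$ one has $\widehat{Y}=\widehat{W}^{-1}(\widehat{Z}^T\widehat{Z})\widehat{W}^{-1}$, and the lemma's hypothesis $\|\widehat{\mu}_i-\mu_{\sigma(i)}\|_2\leq\tfrac{\sqrt{\epsilon}}{20k\varphi}\|\mu_i\|_2$ is precisely what is assumed in the theorem, so $(\widehat{Z}^T\widehat{Z})^{-1}$ lies within a $5\sqrt{\epsilon}/\varphi$ spectral window of $I$ and therefore $\|\widehat{Y}^{-1}\|_2\leq 2\max_i|C_i|=O(n/k)$ by Proposition~\ref{lem:min_size}. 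Using the standard identity $X^{-1}-\widehat{Y}^{-1}=-\widehat{Y}^{-1}(X-\widehat{Y})X^{-1}$, together with $\|X-\widehat{Y}\|_2\leq k\xi'/n\leq\|\widehat{Y}^{-1}\|_2^{-1}/2$, gives $X^{-1}$ well-defined and $\|X^{-1}-\widehat{Y}^{-1}\|_2\leq O(n\xi'/k)$ as well as $\|X^{-1}\|_2\leq O(n/k)$.

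To finish, I use Lemma~\ref{lem:l-inf-bnd} and Lemma~\ref{lem:dotmu} to bound $\|\widehat{h}_x\|_2,\|\widehat{h}_y\|_2\leq O(k^2 n^{-1+20\epsilon/\varphi^2})$ (from $\|f_x\|_2^2\leq k\|U_{[k]}\|_\infty^2$ and $\|\widehat{\mu}_i\|_2=O(\sqrt{k/n})$). Expanding
\[
h_x^T X^{-1}h_y-\widehat{h}_x^T\widehat{Y}^{-1}\widehat{h}_y=\widehat{h}_x^T(X^{-1}-\widehat{Y}^{-1})\widehat{h}_y+(\Delta_x)^TX^{-1}\widehat{h}_y+\widehat{h}_x^TX^{-1}\Delta_y+(\Delta_x)^TX^{-1}\Delta_y,
\]
each of the four terms is a product of three operator/vector norms, and a direct calculation shows that for the choice $\xi'=\Theta(\xi\cdot n^{-80\epsilon/\varphi^2}\cdot k^{-6})$ made on line~\ref{ln:alg-dot-pi-xi} every term is $O(\xi/n)$; combined with the $\xi'/n$ error of the direct $\rdp{f_x,f_y}_{apx}$ call this gives the claimed $\xi/n$ total. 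The runtime bound follows by observing that Algorithm~\ref{alg:dot-apx-pi} performs at most $r^2 b^2+2rb+1\leq O(k^2b^2)$ invocations of \textsc{SpectralDotProduct} with parameter $\xi'$, plus one $r\times r$ matrix inversion; since $(k/\xi')^{O(1)}=(k/\xi)^{O(1)}\cdot n^{O(\epsilon/\varphi^2)}$ the per-call cost from Theorem~\ref{thm:dot} matches the desired $b^2\cdot(k/\xi)^{O(1)}\cdot n^{\delta+O(\epsilon/\varphi^2)}\cdot(\log n)^2/\varphi^2$ bound.
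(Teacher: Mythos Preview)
Your approach is essentially the same as the paper's: both express $\rdp{f_x,\widehat\Pi f_y}=\rdp{f_x,f_y}-h_x^T(\widehat X^T\widehat X)^{-1}h_y$, bound the entrywise errors using Theorem~\ref{thm:dot}, invoke Lemma~\ref{lem:apxQQ-1} to control the inverse Gram matrix, and then expand the bilinear form into a sum of error products. The only cosmetic difference is that the paper pre- and post-multiplies by the weight matrix $W$ (with $W(i,i)=\sqrt{|C_{\sigma(i)}|}$) so that the central matrix $(WH^THW)^{-1}$ is spectrally close to the identity, whereas you work directly with $\widehat Y^{-1}$ and carry the $O(n/k)$ norm through the calculation; both routes yield the same final bound.

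There is one genuine (if easily fixed) gap in your definition of the event $\mathcal{E}$. You write that you will union-bound over the $O(k^2b^2)$ oracle calls made for the specific $x,y,B_1,\dots,B_r$, and define $\mathcal{E}$ as the success of \emph{those} calls. But the theorem requires $\mathcal{E}$ to be a single event, chosen before $x,y$ and the $B_i$'s are specified, on which the conclusion holds for \emph{all} $x,y\in V$ (and for any valid $B_i$'s). Your $\mathcal{E}$ as stated depends on those choices. The paper resolves this by union-bounding over all $n^2$ pairs $(a,b)\in V^2$ of \textsc{SpectralDotProduct} inputs; since each call succeeds with probability $1-n^{-100}$, the joint success over all pairs plus preprocessing has probability at least $1-n^{-98}\geq 1-n^{-48}$, and once this universal event holds every estimate used downstream is deterministically accurate. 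You should adjust your definition of $\mathcal{E}$ accordingly.
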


\begin{proof}

\textbf{Runtime:} Note that Algorithm \ref{alg:dot-apx-pi}, first computes matrix $X\in \R^{r\times r}$, and vectors $h_x,h_y\in \R^k$. To compute $X(i,j)$ for any $i,j\in [r]$, as per line \ref{ln:Yij} of Algorithm \ref{alg:dot-apx-pi}, we run $\textsc{SpectralDotProduct}(G,z_i, z_j, \delta,\xi', \mathcal{D})$ for all $z_i\in B_i$ and $z_j\in B_j$, where $|B_i|\leq b$ and $|B_j|\leq b$. 

Note that by Theorem \ref{thm:dot}, Algorithm  $\textsc{SpectralDotProduct}(G,z_i, z_j, \delta,\xi', \mathcal{D})$ runs in time $(\frac{k}{\xi'})^{O(1)} \cdot n^{\delta+O(\epsilon/\varphi^2)}\cdot \frac{ (\log n)^2}{\varphi^2} $. Thus one can compute the matrix $X^{-1}$ in time $O(k^3 + k^2\cdot b^2\cdot (\frac{k}{\xi'})^{O(1)}  n^{\delta+O(\epsilon/\varphi^2)} \cdot \frac{(\log n)^2}{\varphi^2}  )$. Also, to compute $h_x(i)$ (respectively, $h_y(i)$) for any $i\in[r]$, as per line \ref{ln:hx} and line \ref{ln:hy} of Algorithm \ref{alg:dot-apx-pi}, we run $\textsc{SpectralDotProduct}(G,x, z, \delta,\xi', \mathcal{D})$ for all $z \in B_i$ (respectively, $z\in B_j$). Thus one can compute $h_x$ and $h_y$ in time $k\cdot b\cdot (\frac{k}{\xi'})^{O(1)} \cdot n^{\delta+O(\epsilon/\varphi^2)} \cdot \frac{(\log n)^2}{\varphi^2} $. As per line \eqref{ln:alg-dot-pi-xi} of Algorithm \ref{alg:dot-apx-pi} we set $\xi':=\Theta(\xi\cdot n^{(-80\epsilon/\varphi^2)}\cdot k^{-6})$.
Therefore the runtime of the algoritm is $b^2\cdot (\frac{k}{\xi})^{O(1)} \cdot n^{\delta+O(\epsilon/\varphi^2)}\cdot  \frac{(\log n)^2}{\varphi^2}$.

\textbf{Correctness:} 
Let $x,y \in V$. Let $H \in \R^{k \times r}$ be a matrix whose columns are $\wh{\mu}_1, \dots, \wh{\mu}_r$.  Then we have $H\left(H^T H\right)^{-1}H^T$ is the orthogonal projection matrix onto $span(\{\wh{\mu}_1,\dots, \wh{\mu}_r \})$. Let $W\in \R^{r\times r}$ denote a matrix such that for any $i\in [r]$, $W(i,i)=\sqrt{|C_{\sigma(i)}|}$. Note that
\[(HW)\left((HW)^T (HW)\right)^{-1}(HW)^T = HW \left(W^{-1}\left(H^T H\right)^{-1}W^{-1}\right) WH^T =  H\left(H^T H\right)^{-1}H^T\]
Thus we have $(HW)\left((HW)^T (HW)\right)^{-1}(HW)^T$ is the orthogonal projection matrix onto $span(\{\wh{\mu}_1,\dots, \wh{\mu}_r \})$ and we get 
$$\wh{\Pi} = I - HW\left(WH^T HW\right)^{-1}WH^T $$
Therefore, we have
\begin{equation}\label{eq:expressionfordots}
\rdp{f_x, \wh{\Pi}f_y} = \rdp{f_x,f_y} - f_x^THW\left(WH^T HW\right)^{-1}WH^T f_y
\end{equation}
Let $\adp{f_x,f_y}:=\textsc{SpectralDotProduct}(G,x, y, \delta,\xi', \mathcal{D})$. Then as per line \ref{ln:fin-apx} of Algorithm \ref{alg:dot-apx-pi} we have
\begin{equation}\label{eq:fpihat}
\adp{f_x,\widehat{\Pi} f_y}:= \adp{f_{x},f_y} - h_x^T X^{-1} h_y \text{,}
\end{equation}
where as per line \eqref{ln:Yij} of Algorithm \ref{alg:dot-apx-pi} for any $i,j\in [r]$ we have $X(i,j)=\adp{\widehat{\mu}_i, \widehat{\mu}_j}$, and as per line \eqref{ln:hx} and line \eqref{ln:hy} of Algorithm \ref{alg:dot-apx-pi} for any $i\in [r]$ we have $h_x(i)=\adp{\widehat{\mu}_i, f_x} $ and $h_y(i)=\adp{\widehat{\mu}_i, f_y}$. Note that 
\[h_x^T X^{-1} h_y = h_x^T W W^{-1}X^{-1}W^{-1} W h_y =h_x^T W (WX W)^{-1} W h_y\]
Therefore by \eqref{eq:fpihat}, \eqref{eq:expressionfordots} and triangle inequality we have
\begin{align*}
\label{eq:th7-main}
\left|\adp{f_x,\widehat{\Pi} f_y}-\rdp{f_x,\widehat{\Pi} f_y}\right| 
\leq |\adp{f_{x},f_y} - \rdp{f_x,f_y}| + \left|    h_x^T W (WXW)^{-1} W h_y -   f_x^THW\left(WH^T HW\right)^{-1}WH^T f_y \right| 
\end{align*}
Note that by Theorem \ref{thm:dot} and by union bound over all pair of vertices with probability at least $1-n^{-100}\cdot n^2 $ for all $a,b\in V$ we have
\begin{equation}
\label{eq:fxhatfy}
|\adp{f_{a},f_b} - \rdp{f_a,f_b}|\leq \frac{\xi'}{n}
\end{equation}
We define  
\[\mathbf{a}_x= f_x^THW, \quad A=(WH^T HW)^{-1}, \quad  \mathbf{a}_y=WH^T f_y \text{, and}\]
\[\mathbf{e}_x=h_x^TW  -\mathbf{a}_x, \quad  E= (WXW)^{-1}-A, \quad  \mathbf{e}_y= Wh_y-\mathbf{a}_y\]
Thus by triangle inequality we have
\begin{align}
& \left|   h_x^T W(WXW)^{-1} Wh_y - f_x^T H W(WH^THW)^{-1}W H^T f_y\right|   = \nonumber \\
&\| \left(\mathbf{a}_x +\mathbf{e}_x\right) \left(A+E\right) \left(\mathbf{a}_y +\mathbf{e}_y\right) -\mathbf{a}_xA\mathbf{a}_y\|_2 \leq \nonumber \\
& \|\mathbf{e}_x\|_2\|A\|_2\|\mathbf{a}_y\|_2+\|\mathbf{a}_x\|_2\|E\|_2\|\mathbf{a}_y\|_2+\|\mathbf{a}_x\|_2\|A\|_2\|\mathbf{e}_y\|_2 + \nonumber \\
&\|\mathbf{e}_x\|_2\|E\|_2\|\mathbf{a}_y\|_2+\|\mathbf{a}_x\|_2\|E\|_2\|\mathbf{e}_y\|_2+\|\mathbf{e}_x\|_2\|A\|_2\|\mathbf{e}_y\|_2+\|\mathbf{e}_x\|_2\|E\|_2\|\mathbf{e}_y\|_2 \label{eq:fin-7}
\end{align}
Thus we need to bound $||\mathbf{a}_x||_2,||\mathbf{a}_y||_2,||\mathbf{e}_x||_2,||\mathbf{e}_y||_2,||A||_2,||E||_2$.  Note that $||\mathbf{a}_x||_2=||f_x^THW||_2$, Thus we have $||\mathbf{a}_x||_2\leq ||f_x^TH||_2||W||_2$. Note that 
\begin{equation}\label{eq:l2-Z}
||W||_2\leq \max_i W(i,i) = \max_i{\sqrt{|C_i|}}\leq \sqrt{n}
\end{equation}
Then we bound $||f_x^TH||_2$. 
Note that  $||f_x^TH||_2=\sqrt{\sum_{i=1}^r{\rdp{f_x,\wh{\mu}_i}}^2}$.  We first bound $\rdp{f_x,\wh{\mu}_i}$.
\begin{align*}
\rdp{f_x,\wh{\mu}_i}&=\frac{1}{|B_i|}\cdot \sum_{z\in B_i}\rdp{f_x, f_z} \\
&\leq \frac{1}{|B_i|} \sum_{z\in B_i} ||f_x||_2 ||f_z||_2\\
&\leq \frac{1}{|B_i|}\cdot \sum_{z\in B_i} \sqrt{k^2\cdot ||f_x||^2_\infty ||f_z||^2_\infty} \\
&\leq  \frac{1}{|B_i|}\cdot |B_i|\cdot k\cdot O\left(\frac{k\cdot n^{40\epsilon/\varphi^2}}{n}\right) && \text{By Lemma \ref{lem:l-inf-bnd} and since }\min_{i\in k}|C_i|\geq \Omega\left(\frac{n}{k}\right)\\
&\leq O(k^{2}\cdot n^{-1+40\epsilon/\varphi^2})
\end{align*} 
Since, $r<k$, we get
\begin{equation}\label{eq:fxTY}
||f_x^TH||_2=\sqrt{\sum_{i=1}^r{\rdp{f_x,\wh{\mu}_i}}^2}\leq \sqrt{k}\cdot O(k^{2}\cdot n^{-1+40\epsilon/\varphi^2}) \leq O(k^{2.5}\cdot n^{-1+40\epsilon/\varphi^2})
\end{equation}
Thus we get
\begin{equation} \label{eq:ax-7}
||\mathbf{a}_x||_2= ||f_x^THW||_2 \leq ||f_x^TH||_2||W||_2\leq  O\left(k^{2.5}\cdot n^{-1/2+40\epsilon/\varphi^2}\right)
\end{equation}
By the same computation we also have
\begin{equation} \label{eq:ay-7}
||\mathbf{a}_y||_2\leq  O\left(k^{2.5}\cdot n^{-1/2+40\epsilon/\varphi^2}\right)
\end{equation}
Next we bound $||\mathbf{e}_x||_2$. We have $\mathbf{e}_x=h_x^T W-f_x^THW$. Thus we get $||\mathbf{e}_x||_2\leq ||h_x^T -f_x^TH||_2||W||_2$. By \eqref{eq:l2-Z} we have a bound on $||W||_2$. 
Note that for any $i\in r$, we have $h_x(i)=\frac{1}{|B_i|}\sum_{z\in B_i} \adp{f_x, f_z}$ and $(f_x^TH)(i)=\frac{1}{|B_i|}\sum_{z\in B_i} \rdp{f_x, f_z}$. Therefore with probability at least $1-n^{-98}$ we have
\begin{align*}
|h_x(i)-(f_x^TH)(i)|&= \left|\frac{1}{b}\sum_{z\in B_i} (\adp{f_x, f_z} - \rdp{f_x, f_z})\right| \nonumber \\
&\leq  \frac{1}{|B_i|}\sum_{z\in B_i} | \adp{f_x, f_z}- \rdp{f_x, f_z}| &&\text{By triangle inequality} \nonumber \\
&\leq   \frac{1}{|B_i|}\cdot |B_i| \cdot \frac{\xi'}{n} &&\text{By \eqref{eq:fxhatfy}}
\end{align*}
Since $r\leq k$,  we have
\[||h_x^T -f_x^TH||_2=\sqrt{\sum_{i=1}^r (h_x(i)-\mathbf{a}_x(i))^2} \leq \sqrt{k}\cdot \frac{\xi'}{n}\]
Therefore by \eqref{eq:l2-Z} we have
\begin{equation}\label{eq:ex-bnd}
||\mathbf{e}_x||_2\leq ||h_x^T -f_x^TH||_2||W||_2 \leq \frac{\xi'\sqrt{k}}{\sqrt{n}}
\end{equation}
By the same computation we also have
\begin{equation} \label{eq:ey-bnd}
||\mathbf{e}_y||_2\leq \frac{\xi'\sqrt{k}}{\sqrt{n}}
\end{equation}
Next we bound $||A||_2$. Note that $A=((HW)^T(HW))^{-1}$. By Lemma \ref{lem:apxQQ-1} item \eqref{itm2:apxZZ}  for any vector $x\in \R^r$ with $||x||_2=1$ we have
\[
\left| x^T \left( \left((HW)^T(HW)\right)^{-1}- I\right) x\right| \leq\frac{5\sqrt{\epsilon}}{\varphi} 
\]
Therefore
\begin{equation}
\label{eq:l2A-bnd}
||A||_2=||((HW)^T(HW))^{-1}||_2\leq 1+\frac{5\sqrt{\epsilon}}{\varphi} \leq 2
\end{equation}
Now we bound $||E||_2=||(WXW)^{-1}-(WH^THW)^{-1}||_2$. For any $i,j\in [r]$ we have  
\[(WXW)(i,j)= \sqrt{|C_{\sigma(B_i)}||C_{\sigma(B_j)}|}\cdot \frac{1}{|B_i|\cdot |B_j|} \cdot\sum_{z_i\in B_i, z_j\in B_j} \adp{f_{z_i}, f_{z_j}}\]
and 
\[(WH^THW)(i,j)= \sqrt{|C_{\sigma(B_i)}||C_{\sigma(B_j)}|} \cdot \frac{1}{|B_i|\cdot |B_j|} \cdot \sum_{z_i\in B_i, z_j\in B_j} \rdp{f_{z_i}, f_{z_j}}\] 
Therefore with probability at least $1-n^{-98}$ we have
\begin{align}
&|(WXW)(i,j)-(WH^THW)(i,j)| \nonumber \\
&=\left|\sqrt{|C_{\sigma(B_i)}||C_{\sigma(B_j)}|}\cdot \frac{1}{|B_i|\cdot |B_j|}\sum_{z_i\in B_i, z_j\in B_j} (\hat{f}_{z_i z_j} -  \rdp{f_{z_i}, f_{z_j}}) \right| \nonumber\\
&\leq \sqrt{|C_{\sigma(B_i)}||C_{\sigma(B_j)}|}\cdot \frac{1}{|B_i|\cdot |B_j|}\sum_{z_i\in B_i, z_j\in B_j} |\hat{f}_{z_i z_j} -  \rdp{f_{z_i}, f_{z_j}}|   &&\text{By triangle inequality} \nonumber \\
&\leq n\cdot \frac{1}{|B_i|\cdot |B_j|}\cdot |B_i|\cdot |B_j|\cdot\frac{\xi'}{n}   &&\text{By \eqref{eq:fxhatfy} and since }|C|\leq n \label{eq:Eij}
\end{align}
Since $r\leq k$ and by \eqref{eq:Eij} we get
\begin{align*}
 \left|||WXW-WH^THW||_2 \right| &\leq ||WXW-WH^THW||_F \\
 &\leq \sqrt{\sum_{i=1}^r\sum_{j=1}^r \left((WXW)(i,j)-(WH^THW)(i,j)\right)^2} \\
 &\leq k\cdot \xi'
\end{align*}
Thus for any vector $x\in \R^{r}$ with $||x||_2=1$ we have
\begin{equation}\label{eq:xixiA}
x^T(WH^THW)x -  k\cdot \xi' \leq  x^T(WXW)x\leq  x^T(WH^THW)x +  k\cdot\xi' 
\end{equation}
By Lemma \ref{lem:apxQQ-1} item \eqref{itm1:apxZZ}  for any vector $x\in \R^r$ with $||x||_2=1$ we have
\[
| x^T \left((HW)^T(HW)- I\right)x| \leq\frac{5\sqrt{\epsilon}}{\varphi} 
\]
Hence we have 
\begin{equation}\label{eq:A-1xi}
x^T (HW)^T(HW) x \geq 1- \frac{5\sqrt{\epsilon}}{\varphi}  \geq \frac{1}{2}
\end{equation}
Therfore by \eqref{eq:xixiA} and \eqref{eq:A-1xi} we get for any vector $x\in \R^r$ with $||x||_2=1$  we have
\begin{equation}
\label{eq:useinv}
(1-2\cdot k \cdot \xi') \cdot x^T(WH^THW)x  \leq  x^T(WXW)x\leq  (1+  2\cdot k\cdot\xi') \cdot x^T(WH^THW)x  
\end{equation}
Note that $WH^THW$ is a symmetric matrix. Also note that by definition of $X$ in line \ref{ln:Yij} of Algorithm \ref{alg:dot-apx-pi}, $X$ is a symmetric matrix, hence, $WXW$ is symmetric and positive semidefinit. Also note that  $WXW$ is spectrally close to $WH^THW$ and $I$, hence, $WXW$ is  invertible. Thus by \eqref{eq:useinv} and Lemma \ref{lem:neg-spectral-close} we have
\[(1-4\cdot k\cdot \xi') \cdot x^T(WH^THW)^{-1}x \leq  x^T(WXW)^{-1} x\leq (1+  4\cdot k\cdot \xi') \cdot x^T(WH^THW)^{-1} x  \]
Therefore by \eqref{eq:l2A-bnd} we have
\begin{equation}\label{eq:Ebnd}
||E||_2=||(WH^THW)^{-1} - (WXW)^{-1} ||_2 \leq 4\cdot k\cdot \xi' \cdot ||(WH^THW)^{-1}||_2 = 8\cdot k\cdot \xi'
\end{equation}
Putting \eqref{eq:Ebnd}, \eqref{eq:l2A-bnd}, \eqref{eq:ex-bnd}, \eqref{eq:ey-bnd}, \eqref{eq:ax-7}, \eqref{eq:ay-7} and\eqref{eq:fin-7} together, with probability at least $1-n^{-50}$  we have
\begin{align} 
& \left|   h_x^T W(WXW)^{-1} Wh_y - f_x^T H W(WH^THW)^{-1}W H^T f_y\right|   = \nonumber\\
&\| \left(\mathbf{a}_x +\mathbf{e}_x\right) \left(A+E\right) \left(\mathbf{a}_y +\mathbf{e}_y\right) -\mathbf{a}_xA\mathbf{a}_y\|_2 \leq \nonumber\\
& \|\mathbf{e}_x\|_2\|A\|_2\|\mathbf{a}_y\|_2+\|\mathbf{a}_x\|_2\|E\|_2\|\mathbf{a}_y\|_2+\|\mathbf{a}_x\|_2\|A\|_2\|\mathbf{e}_y\|_2 + \nonumber\\
&\|\mathbf{e}_x\|_2\|E\|_2\|\mathbf{a}_y\|_2+\|\mathbf{a}_x\|_2\|E\|_2\|\mathbf{e}_y\|_2+\|\mathbf{e}_x\|_2\|A\|_2\|\mathbf{e}_y\|_2+\|\mathbf{e}_x\|_2\|E\|_2\|\mathbf{e}_y\|_2
\nonumber \\
&\leq O\left( \xi'\cdot\frac{\sqrt{k}}{\sqrt{n}}\cdot k^{2.5}\cdot n^{-1/2 +40\epsilon/\varphi^2} \right) 
+O\left( k\cdot\xi'\cdot k^{5}\cdot n^{-1+80\epsilon/\varphi^2} \right) \nonumber \\
&+ O\left(\xi'\cdot\frac{\sqrt{k}}{\sqrt{n}}\cdot k\cdot\xi'\cdot k^{2.5}\cdot n^{-1/2+40\epsilon/\varphi^2} \right) 
+ O\left( \xi'^2\cdot\frac{k}{n} \right) 
+ O \left(\xi'^2\cdot\frac{k}{n} \cdot k\cdot \xi' \right) \nonumber\\ 
&\leq O\left( \frac{\xi'\cdot k^6\cdot n^{80\epsilon/\varphi^2}}{n}\right) \nonumber \\
&\leq \frac{1}{2}\cdot \frac{\xi}{n} \label{eq:xi2-finn}
\end{align}
The last inequality holds by setting $\xi'=\frac{\xi\cdot n^{(-80\epsilon/\varphi^2)}\cdot k^{-6}}{c}$ as per line \label{ln:xi''} of Algorithm \ref{alg:dot-apx-pi} where $c$ is a large enough constant to cancel the constant hidden in $O\left( \frac{\xi'\cdot k^6\cdot n^{80\epsilon/\varphi^2}}{n}\right)$.

Therefore with probability at least $1-n^{-98}\geq 1-n^{-50}$ we have
\begin{align}
&\left|\adp{f_x,\widehat{\Pi} f_y}-\rdp{f_x,\widehat{\Pi} f_y}\right| \nonumber \\
&\leq |\wh{f}_{xy} - \rdp{f_x,f_y}| + \left|    h_x^T W (WXW)^{-1} W h_y -   f_x^THW\left(WH^T HW\right)^{-1}WH^T f_y \right| \nonumber \\
&\leq \frac{\xi'}{n}+\frac{1}{2}\cdot \frac{\xi}{n} \nonumber\\
&\leq \frac{\xi}{n} &&\text{By \eqref{eq:fxhatfy}, \eqref{eq:xi2-finn}, and since }\xi'<\xi/2 \label{eq:lastguarantee}
\end{align}

Now let $\mathcal{E}$ be the event that for all $x,y \in V$ we have $|\adp{f_x, \wh{\Pi}f_y} - \rdp{f_x, \wh{\Pi}f_y} | \leq \frac{\xi}{n}$. Then by \eqref{eq:lastguarantee} and the union bound we get that $\mathcal{E}$ happens with probability at least $1 - n^{-48}$ and it is the claimed high probability event from the statement.

\end{proof}


\begin{corollary}
\label{corr:dotpi}
Let $G=(V,E)$ be a $d$-regular graph that admits a $(k,\varphi,\epsilon)$-clustering $C_1,\dots,C_k$. Let $k \geq 2$ be an integer, $\varphi \in (0,1)$, $\delta\in (0,1)$, $\frac{1}{n^5}<\xi<1$, $\frac{\e }{\varphi^2}$ be smaller than a positive absolute constant. Let $\mathcal{E}$ be the event that happens with probability $1 - n^{-48}$ that is guaranteed by Theorem~\ref{thm:dotpi}. Then conditioned on $\mathcal{E}$ the following conditions hold. 

Let $r\in[k]$.  Let $B_1,\ldots, B_r,B'$ denote multisets of points. Let $b=\max \{ |B_1|,\ldots, |B_r|, |B'|\} $. Let $\sigma:[r]\rightarrow[k]$ denote a mapping from the set $B$ to the cluster $C=\sigma(B)$. Suppose that for all $i\in[r]$, $B_i\subseteq \sigma(B_i)$ and for all $i\neq j\in [r]$, $\sigma(B_i)\neq \sigma(B_j)$. Let $\widehat{\mu}_i=\frac{1}{|B_i|}\cdot \sum_{z\in B} f_z$ for all $i\in[r]$, and let $\widehat{\mu}=\frac{1}{|B'|}\cdot \sum_{z\in B_i} f_z$. Suppose that for each $i\in[r]$, $||\wh{\mu}_i-\mu_{\sigma(i)}||_2\leq \frac{\sqrt{\epsilon}}{20\cdot k\cdot\varphi}||\mu_i||_2$. Let $\wh{\Pi}$ is defined as a orthogonal projection onto then $span(\{\wh{\mu}_1,\dots, \wh{\mu}_r \})^{\perp}$.  Then  the following hold:
\begin{enumerate}
\item There exits an algorithm that runs in time $b^3\cdot(\frac{k}{\xi})^{O(1)} \cdot n^{\delta+O(\epsilon/\varphi^2)}\cdot \frac{(\log n)^2}{\varphi^2}$ and for any $x\in V$ returns a value $\adp{f_x,\widehat{\Pi} \wh{\mu}}$ such that \[\left|\adp{f_x,\widehat{\Pi} \wh{\mu}}-\rdp{f_x,\widehat{\Pi} \wh{\mu}}\right|\leq \frac{\xi}{n}\text{.}\] \label{itm:dtpi1}
\item There exits an algorithm that runs in time $b^4\cdot (\frac{k}{\xi})^{O(1)} \cdot n^{\delta+O(\epsilon/\varphi^2)}\cdot \frac{(\log n)^2}{\varphi^2}$ and returns a value $\an{\widehat{\Pi} \widehat{\mu}}^2$ such that $\left| \an{\widehat{\Pi} \widehat{\mu}}^2 -  ||\widehat{\Pi} \widehat{\mu}||_2^2\right|\leq \frac{\xi}{n}$. \label{itm:dtpi2}
\end{enumerate}

\end{corollary}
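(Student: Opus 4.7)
\textbf{Proof proposal for Corollary~\ref{corr:dotpi}.}
The plan is to reduce both statements to averages of quantities of the form $\adp{f_x,\wh{\Pi}f_y}$ supplied by Theorem~\ref{thm:dotpi}, so that the per-query error $\xi/n$ is preserved by linearity, and the total running time incurs only an extra factor polynomial in $b$ over the single-query time from Theorem~\ref{thm:dotpi}.

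For item~\eqref{itm:dtpi1}, I will use the identity
\[
\rdp{f_x,\wh{\Pi}\wh{\mu}} \;=\; \frac{1}{|B'|}\sum_{y\in B'}\rdp{f_x,\wh{\Pi}f_y},
\]
which follows from $\wh{\mu} = \frac{1}{|B'|}\sum_{y\in B'} f_y$ and linearity of $\wh{\Pi}$ and of the inner product. I will therefore define
\[
\adp{f_x,\wh{\Pi}\wh{\mu}} \;:=\; \frac{1}{|B'|}\sum_{y\in B'}\adp{f_x,\wh{\Pi}f_y},
\]
where each term is computed by a single call to \textsc{DotProductOracleOnSubspace} from Theorem~\ref{thm:dotpi} with precision parameter $\xi$. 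Conditioned on the high-probability event $\mathcal{E}$ of Theorem~\ref{thm:dotpi}, every summand is accurate to within $\xi/n$, so by the triangle inequality the average is accurate to within $\xi/n$ as well. The runtime is $|B'|\le b$ invocations of Theorem~\ref{thm:dotpi}, each costing $b^2\cdot(k/\xi)^{O(1)}\cdot n^{\delta+O(\e/\varphi^2)}\cdot\frac{(\log n)^2}{\varphi^2}$, for a total of $b^3\cdot(k/\xi)^{O(1)}\cdot n^{\delta+O(\e/\varphi^2)}\cdot\frac{(\log n)^2}{\varphi^2}$.

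For item~\eqref{itm:dtpi2}, the key observation is that since $\wh{\Pi}$ is an orthogonal projection, $\wh{\Pi}^T\wh{\Pi}=\wh{\Pi}$, and hence
\[
\|\wh{\Pi}\wh{\mu}\|_2^2 \;=\; \rdp{\wh{\mu},\wh{\Pi}\wh{\mu}} \;=\; \frac{1}{|B'|}\sum_{x\in B'}\rdp{f_x,\wh{\Pi}\wh{\mu}}.
\]
Accordingly, I will set
\[
\an{\wh{\Pi}\wh{\mu}}^2 \;:=\; \frac{1}{|B'|}\sum_{x\in B'}\adp{f_x,\wh{\Pi}\wh{\mu}},
\]
where each $\adp{f_x,\wh{\Pi}\wh{\mu}}$ is computed by the algorithm from item~\eqref{itm:dtpi1}. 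Again, on the event $\mathcal{E}$ each term is within $\xi/n$ of its target, and averaging preserves the bound, giving $|\an{\wh{\Pi}\wh{\mu}}^2-\|\wh{\Pi}\wh{\mu}\|_2^2|\le \xi/n$. The runtime is $|B'|\le b$ invocations of the item~\eqref{itm:dtpi1} procedure, yielding $b^4\cdot(k/\xi)^{O(1)}\cdot n^{\delta+O(\e/\varphi^2)}\cdot\frac{(\log n)^2}{\varphi^2}$.

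There is essentially no technical obstacle here; the only subtlety worth flagging is verifying that passing precision $\xi$ (rather than a rescaled $\xi/b$) into Theorem~\ref{thm:dotpi} suffices. This is fine because Theorem~\ref{thm:dotpi} provides an absolute additive error bound that is simultaneously valid for all pairs of vertices on the event $\mathcal{E}$, so the triangle inequality applied to a convex combination of $|B'|$ such approximations yields the same $\xi/n$ bound without any precision amplification. All probabilistic statements are absorbed into the single high-probability event $\mathcal{E}$ from Theorem~\ref{thm:dotpi}.
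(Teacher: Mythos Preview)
Your proposal is correct and follows essentially the same approach as the paper's proof: define $\adp{f_x,\wh{\Pi}\wh{\mu}}$ as the average of $\adp{f_x,\wh{\Pi}f_y}$ over $y\in B'$, then define $\an{\wh{\Pi}\wh{\mu}}^2$ as the average of $\adp{f_x,\wh{\Pi}\wh{\mu}}$ over $x\in B'$, and in each case bound the error by the triangle inequality over a convex combination and the runtime by multiplying by $|B'|\le b$. Your explicit justification of why no precision amplification is needed and of the projection identity $\wh{\Pi}^T\wh{\Pi}=\wh{\Pi}$ is a nice touch that the paper leaves implicit.
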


\begin{proof}
\textbf{Proof of \ref{itm:dtpi1}:}
To compute $\adp{f_x,\widehat{\Pi} \widehat{\mu}}$ we call Algorithm \ref{alg:dot-apx-pi}, $b$ times in the following way:
\begin{equation}
\adp{f_x,\widehat{\Pi} \widehat{\mu}}:=  \frac{1}{|B|}\cdot\sum_{y\in B} \textsc{DotProductOracleOnSubspace}(G,x,y, \delta,  \mathcal{D}, \xi, B_1,\ldots, B_r) 
\end{equation}
The runtime of Algorithm \ref{alg:dot-apx-pi} is $b^2\cdot (\frac{k}{\xi})^{O(1)} \cdot n^{\delta+O(\epsilon/\varphi^2)}\cdot (\log n)^2\cdot \frac{1}{\varphi^2}$, thus the runtime of computation of  $\adp{f_x,\widehat{\Pi} \widehat{\mu}}$ is $b^3\cdot (\frac{k}{\xi})^{O(1)} \cdot n^{\delta+O(\epsilon/\varphi^2)}\cdot (\log n)^2\cdot \frac{1}{\varphi^2}$. Moreover by Theorem \ref{thm:dotpi} and the assumption that $\mathcal{E}$ holds we have
\begin{align*}
\left|\adp{f_x,\widehat{\Pi} \wh{\mu}}-\rdp{f_x,\widehat{\Pi} \wh{\mu}}\right| &= \left| \frac{1}{|B'|}\sum_{y\in B'} \adp{f_x,\widehat{\Pi} y}-\rdp{f_x,\widehat{\Pi} \wh{\mu}}\right| \\
&\leq \frac{1}{|B'|}\sum_{y\in B'}  \left|  \adp{f_x,\widehat{\Pi} y}-\rdp{f_x,\widehat{\Pi} \wh{\mu}}\right| &&\text{By triangle inequality} \\
&\leq \frac{1}{|B'|}\cdot |B'|\cdot \frac{\xi}{n} &&\text{By Theorem \ref{thm:dotpi}}\\
&\leq \frac{\xi}{n}
\end{align*}
\textbf{Proof of \ref{itm:dtpi2}:}
To compute $\an{\widehat{\Pi} \widehat{\mu}}^2$ we call the procedure from item \eqref{itm:dtpi1} $b$ times in the following way:
\begin{equation}
\an{\widehat{\Pi} \widehat{\mu}}^2:=  \frac{1}{|B|}\cdot\sum_{x\in B} \adp{f_x,\widehat{\Pi} \widehat{\mu}}  \text{.}
\end{equation}
The runtime of the procedure from item \eqref{itm:dtpi1} is $b^3\cdot (\frac{k}{\xi})^{O(1)} \cdot n^{\delta+O(\epsilon/\varphi^2)}\cdot (\log n)^2\cdot \frac{1}{\varphi^2}$, thus the runtime of computation of  $\adp{f_x,\widehat{\Pi} \widehat{\mu}}$ is $b^4\cdot (\frac{k}{\xi})^{O(1)} \cdot n^{\delta+O(\epsilon/\varphi^2)}\cdot (\log n)^2\cdot \frac{1}{\varphi^2}$. Moreover by  item \eqref{itm:dtpi1} we have
\begin{align*}
\left| \an{\widehat{\Pi} \widehat{\mu}}^2 -  ||\widehat{\Pi} \widehat{\mu}||_2^2\right| &= \left|  \adp{\wh{\mu},\widehat{\Pi} \widehat{\mu}} - \rdp{\wh{\mu},\widehat{\Pi} \widehat{\mu}} \right| 
\\&= \left| \frac{1}{|B'|}\cdot\sum_{x\in B'} \adp{f_x,\widehat{\Pi} \widehat{\mu}} - \frac{1}{|B'|}\cdot\sum_{x\in B'} \rdp{f_x,\widehat{\Pi} \widehat{\mu}}
\right|  \\
&\leq \frac{1}{|B'|}\cdot  \sum_{x\in B'} \left| \adp{f_x,\widehat{\Pi} \widehat{\mu}} - \sum_{x\in B'} \rdp{f_x,\widehat{\Pi} \widehat{\mu}}  \right|  &&\text{By triangle inequality} \\
&\leq \frac{1}{|B'|}\cdot |B'|\cdot \frac{\xi}{n} &&\text{By item \eqref{itm:dtpi1}} \\
&\leq \frac{\xi}{n} \text{.}
\end{align*}
\end{proof}

\newpage
\section{The main algorithm and its analysis}\label{sec:algo}

In this section we show that, by having access to approximate spectral dot-products for a $(k,\varphi,\epsilon)$-clusterable graph $G$, we can assign each vertex in $G$ to a cluster in sublinear time so that the resulting collection of clusters is, with high probability, a good approximation of a $(k,\varphi,\epsilon)$-clustering of $G$. In particular, we can show that the fraction of wrong assignments per cluster is at most $C \cdot \frac{\e}{\varphi^3} \cdot \log(k)$, for some constant $C>0$. In the next subsection we describe our algorithm then in the  remaining part of the section we present its analysis.

\subsection{The Algorithm (Partitioning Scheme, Algorithm~\ref{alg:ballcarving})}\label{sec:algorithmhighlevel}

We first present an idealized version of the sublinear clustering scheme defined by Algorithm~\ref{alg:ballcarving} and Algorithm~\ref{alg:findrepresentatives}. In this section to simplify presentation we assume $\varphi$ to be constant.


The algorithm can be thought of as consisting of $3$ parts. The first part, described in paragraph \textbf{Idealized Clustering Algorithm}, is a procedure that explicitly, in iterative fashion, produces a $k$-clustering of $G$. More precisely it recovers clusters in $O(\log(k))$ stages, where for every $i$ after the $i$-th stage at most $k/2^i$ clusters are left unrecovered. The algorithm can be thought of as a version of carving of halfspaces in $\R^k$ and it relies on the knowledge of cluster means $\mu_1, \dots, \mu_k$ (recall that $\mu_i = \frac{1}{|C_i|}\sum_{x \in C_i} f_x$). That is why in paragraph \textbf{Finding approximate centers} we show how to compute approximations of $\mu_i$'s. To find good approximation to $\mu_i$'s we need to test many candidate sets $\{ \wh{\mu}_1, \dots, \wh{\mu}_k \}$, which also means considering many candidate clusterings. This is a problem as we want our procedure to run in sublinear time but the idealized partitioning algorithm constructs clusterings explicitly! To solve this we explain in paragraph \textbf{Verifying a clustering} how to emulate the partitioning algorithm to test that, for a set of $\{ \wh{\mu}_1, \dots, \wh{\mu}_k \}$, it indeed induces a good clustering.  

\noindent\paragraph{Idealized Clustering Algorithm.} Assume that the we have access to cluster means $\{\mu_1, \dots, \mu_k \}$ and dot product evaluations. The algorithm proceeds in $O(\log(k))$ stages, in the first stage it considers $k$ candidate sets $\wh{C}_i$, where $x \in \wh{C}_i$ iff $f_x$ has big correlation with $\mu_i$ but small correlation with all other $\mu_j$'s. More precisely $x \in \wh{C}_i$ iff:
$$\rdp{f_x, \mu_i} \geq 0.93 \rn{\mu_i}^2 \text{ and for all } j \neq i \rdp{f_x, \mu_j} < 0.93 \rn{\mu_j}^2 \text{.}$$
Note that by definition all these clusters are disjoint. Moreover we are able to show (see Lemma~\ref{lem:induction}) that at least $k/2$ out of $\wh{C}_i$'s are good approximate clusters, that is for each one of them there exists $j$ such that $|\wh{C}_i \triangle C_j| \leq O (\e ) \cdot |C_j|$ . At this point we return these good clusters, remove the corresponding vertices from the graph, remove the corresponding $\mu$'s from the set $\{\mu_1, \dots, \mu_k \}$ of still alive centers and proceed to the next stage.

In the next stage we restrict our attention to a lower dimensional subspace $\Pi$ of $\R^k$. Intuitively we want to project out all the directions corresponding to the removed cluster centers. Recall that $\mu_i$'s are close to being orthogonal (see Lemma~\ref{lem:dosubspace} and \ref{lem:dotmu}) so projecting the returned directions out is almost equivalent to considering the subspace $\Pi := \text{span}(\{\mu_1, \dots, \mu_b \})$, where $\{\mu_1, \dots, \mu_b \}$ is the set of still alive $\mu$'s. Now the algorithm considers $b$ candidate clusters where the condition for $x$ being in a cluster $i$ changes to: 
$$\rdp{f_x, \Pi\mu_i} \geq 0.93 \rn{\Pi\mu_i}^2 \text{ and for all } j \in [b], j \neq i \rdp{f_x, \Pi\mu_j} < 0.93 \rn{\Pi\mu_j}^2 \text{.}$$
We are still able to show (also Lemma~\ref{lem:induction}) that at least $b/2$ out of them are good approximate clusters. That is for each $i$ there exists $j$ such that $|\wh{C}_i \triangle C_j| \leq O (\e ) \cdot |C_j|$ but this time the constant hidden in the $O$ notation is bigger than in the first stage. In general at any stage $t$ the bound degrades to $O (\e \cdot t )$. At the end of the stage we proceed in a similar fashion by returning the clusters, removing the corresponding vertices and $\mu$'s and considering a lower dimensional subspace of $\Pi$ in the next stage.

The algorithm continues in such a fashion for $O(\log(k))$ steps, as we guarantee that in each stage at least half of the remaining cluster means is removed. Thus the final guarantee is: there exists a permutation $\pi$ on $k$ elements such that for every $i$:
$$|\wh{C}_{\pi(i)} \triangle C_i| \leq O \left(\e \log(k) \right) \cdot |C_i| \text{.}$$
The decreasing (in the inclusion sense) sequence of subspaces $(\Pi_1, \dots, \Pi_{\log(k)})$ corresponds to the subspaces constructed in Algorithm~\ref{alg:ballcarving}, while this offline algorithm as a whole corresponds to the sublinear Algorithm~\ref{alg:findrepresentatives} that implicitly tries to construct a sequence of subspaces that (with respect to Algorithm~\ref{alg:ballcarving}) defines a good clustering.


\noindent\paragraph{Finding approximate centers.} Note that cluster means are defined by the clustering, so it may seem that finding approximate means is a difficult operation. However, there is a relatively simple solution to this. In 
Algorithm~\ref{alg:findrepresentatives} we find approximate cluster means by sampling $O(\frac{\varphi^2}{\e} k^4 \log(k))$ points, guessing cluster memberships and considering the means of the samples as cluster centers. We use that
the mean of a random sample of a cluster is typically close to the true mean of its cluster and so our sample means will provide a good estimation of the true means. We also remark that sampling a single vertex
from each cluster does not seem to provide a sufficiently good estimate, i.e. we require to take the mean of a sample \emph{set}.

\noindent\paragraph{Verifying a clustering.} We also need a procedure that given an implicit sequence of  subspaces $(\Pi_1, \dots, \Pi_{\log(k)})$ checks whether they indeed define (via Algorithm~\ref{alg:ballcarving}) a good clustering. In fact, for every guess of cluster centers and the corresponding (as implicitly created by Algorithm~\ref{alg:findrepresentatives}) sequence of $\Pi$'s we need to be able to check efficiently if the resulting clustering is a good approximation of a $(k,\varphi,\epsilon)$-clustering. Since we would like to do this in sublinear time as well,
we need to do this verification by random sampling. Then we design a procedure that consists of two steps. In a first step, we check if the cluster sizes are not too small. This is only a technical step, which is needed to
make sure that the later steps work. The main step is to test whether every cluster has small outer conductance (Algorithm~\ref{alg:estimateconductande}). In order to do so, we sample vertices uniformly at random and check
whether they are contained in the cluster that is currently checked. If this is the case, we sample a random edge incident to the sample vertex. This way, we obtain a random edge incident to a random vertex from the current
cluster (this follows since the conditional distribution is uniform over the cluster). We use standard concentration bounds to prove that we get a good approximation.

\paragraph{}In the partitioning scheme and in the analysis a useful definition are subsets of vertices called threshold sets. A threshold set of a point $y$ is the set of vertices with dot products (or approximate dot product) with $y$ being above a specific threshold, more formally:
\begin{definition}[\textbf{Threshold sets}]\label{def:thresholdsets}
Let $G=(V,E)$ be a $(k,\varphi,\epsilon)$-clusterable graph (\textit{as in Definition~\ref{def:clusterable}}). Recall that $f_x=F  \mathds{1}_x$. For $y \in \mathbb{R}^k, \theta \in \mathbb{R}^{+}$ we define:
$$\Cr{y,\theta} := \{x \in V : \rdp{f_x,y} \geq \theta \rn{y}^2 \}$$
\end{definition}

\begin{definition}[\textbf{Approximate threshold sets}]\label{def:apxthreshold sets}
Let $G=(V,E)$ be a $(k,\varphi,\epsilon)$-clusterable graph (\textit{as in Definition~\ref{def:clusterable}}). Recall that $f_x=F  \mathds{1}_x$. For $\theta \in \mathbb{R}^{+}$ and $y \in \mathbb{R}^k$ such that $y = \wh{\Pi}(\wh{\mu})$, where $\wh{\Pi}$ is the orthogonal projection onto $span(\{\wh{\mu}_1,\dots, \wh{\mu}_b \})^{\perp}$ and each $\wh{\mu}, \wh{\mu}_1, \dots, \wh{\mu}_b$ is an average of a set of embedded vertices:
\begin{equation}\label{eq:capxdef}
\Ca{y,\theta} := \{x \in V : \adp{f_x,y} \geq \theta \an{y}^2\}\text{.}
\end{equation}
Recall that a discussion of how $\langle \cdot, \cdot \rangle_{apx}$ and $\|\cdot\|_{apx}$ are  computed is presented in Section~\ref{sec:dotproductcomp}. 
\end{definition}


\begin{algorithm}
\caption{\textsc{HyperplanePartitioning}($x, (T_1,T_2,\ldots, T_{b}))$ \newline \text{ }
\Comment $T_i$'s are sets of $\widehat{\mu}_j$ where $\widehat{\mu}_j$'s are given as sets of points \newline \text{ } \Comment  see Section~\ref{sec:dotproductcomp} for the reason of such representation}\label{alg:ballcarving}
\begin{algorithmic}[1]
    \For{$i=1$ to $b$}
    	\State Let $\Pi$ be the projection onto the $\text{span} (\bigcup_{j<i} T_j)^{\perp}$.
    	\State Let $S_i=\bigcup_{j\geq i} T_j$  
    	\For{$\hat{\mu}\in T_i$}
    		
        	\If{$x\in   \Ca{\Pi \widehat{\mu},0.93} \setminus \bigcup_{\widehat{\mu}'\in S_i\setminus \{\widehat{\mu}\}} \Ca{\Pi \widehat{\mu}',0.93}  $}  \label{ln:dot-x-mu1}\Comment see \eqref{eq:capxdef} for definition of $\Ca{y,\theta}$
            \State \Return $\widehat{\mu}$
        	\EndIf
        \EndFor
    \EndFor
\end{algorithmic}
\end{algorithm}

\textsc{HyperplanePartitioning} is the algorithm that, after preprocessing, is used to assign vertices to clusters. In the preprocessing step (see \textsc{ComputeOrderedPartition} in Section~\ref{sec:realcenterswork}) an ordered partition $(T_1, \dots, T_b)$ of approximate cluster means $\{\wh{\mu}_1, \dots, \wh{\mu}_k \}$ is computed. \textsc{HyperplanePartitioning} invoked with this ordered partition as a parameter induces a collection of clusters as follows:

\begin{definition}[\textbf{Implicit clustering}]\label{def:implicitclustering}
For an ordered partition $(T_1, \dots, T_b)$ of approximate cluster means $\{\wh{\mu}_1, \dots, \wh{\mu}_k \}$ we say that \textbf{$(T_1, \dots, T_b)$ induces a collection of clusters $\{\wh{C}_{\wh{\mu}_1}, \dots, \wh{C}_{\wh{\mu}_k}\}$} if for all $i \in [k]$:
$$\wh{C}_{\wh{\mu}_i} = \left\{ x \in V : \textsc{HyperplanePartitioning}(x,(T_1, \dots, T_b)) = \wh{\mu}_i \right\} \text{.}$$
\end{definition}

\begin{remark} 
Ordered partition $(T_1, \dots, T_b)$, precomputed in the preprocessing step (assuming access to $\{\mu_1, \dots, \mu_k\}$), will correspond to the \textbf{Idealized Clustering Algorithm} in the following sense. Number of sets in the partition (i.e. $b$) corresponds to the number of stages of \textbf{Idealized Clustering Algorithm} and for every $i \in [b]$ $T_i$ contains exactly the $\mu$'s returned in stage $i$.
\end{remark}


In the rest of this section we explain how to compute an ordered partition $(T_1, \dots, T_b)$ of a set of approximate centers $(\widehat{\mu}_1,\widehat{\mu}_2, \dots, \widehat{\mu}_k )$ such that the induced clustering $\{\wh{C}_{\wh{\mu}_1}, \dots, \wh{C}_{\wh{\mu}_k} \}$ satisfies that there exists a permutation $\pi$ on $k$ elements such that for all $i \in [k]$:
$$\left|\wh{C}_{\wh{\mu}_{i}} \triangle C_{\pi(i)}\right| \leq O \left(\frac{\e}{\varphi^3}  \cdot \log(k) \right)|C_{\pi(i)}| \text{.}$$  
We start, in Subsection~\ref{sec:bound_int}, by studying geometric properties of our clustering instance. Recall, that we denote with $\mu_i$ the center of cluster $C_i$ in the spectral embedding. We show that, for specific choices of $\theta$, the threshold sets of $\mu_i$ have large intersection with the cluster $C_i$ and small intersections with all other cluster $C_j$. This fact intuitively suggests that our partitioning algorithm works. Unfortunately, as discussed in the technical overview, this is not enough to prove a per cluster guarantee. For this reason in Subsection~\ref{sec:realcenterswork} we analyze the overlap structure of $\{\Cr{\mu_1, \theta}, \dots, \Cr{\mu_k, \theta}\}$ more carefully and we give an algorithm (see \textsc{ComputeOrderedPartition}) that given real centers $\{\mu_1, \dots, \mu_k\}$ and access to exact dot product evaluations computes an ordered partition of $\{\mu_1, \dots, \mu_k\}$ that induces a valid clustering. In Subsection~\ref{sec:findthecenters} we present an algorithm that guesses the cluster memberships for a set of randomly selected nodes and, using those guesses, approximates cluster centers. Interestingly, we can show, in Subsection~\ref{sec:approx-muis},  that for the set of correct guesses the algorithm returns a good approximation of the cluster centers. Finally in Subsection~\ref{sec:h_works} we show that we can find an ordered partition that induces a good clustering even if we have access only to approximate quantities. That is we show that even if we have access only to approximate means $\{\wh{\mu}_1, \dots, \wh{\mu}_k \}$ and the dot product evaluations are only approximately correct then we can find an ordered partition $(T_1, \dots, T_b)$ that induces a good collection of clusters. The last ingredient is to show that we are able to check if the clustering induced by a specific ordered partition is good. To solve this problem, we design an efficient and simple sampling algorithm which is also analyzed in Subsection~\ref{sec:h_works}.

\subsection{Bounding intersections of $C_{\mu_i,\theta}$ with true clusters $C_i$}\label{sec:bound_int}

In this subsection we show that, for specific choices of $\theta$, the threshold sets of $\mu_i$ (recall that $\mu_i$'s are cluster means in the spectral embedding) have large intersection with $C_i$ and small intersections with other clusters. The main idea behind the proof is to use the bounds on dot product of cluster centers presented in Lemma~\ref{lem:dotmu}. In particular, we use Lemma~\ref{lem:dirvariance} to relate $\frac{\epsilon}{\varphi^2}$ with the directional variance of the spectral embedding in the direction of $\mu_i$ (i.e. $\sum_{x \in C_i} \langle f_x - \mu_i, \alpha \rangle^2$). Then we use the definition of threshold set to upper and lower bound $\langle f_x, \frac{\mu_i}{\|\mu_i\|} \rangle$ and Lemma~\ref{lem:dotmu} to upper and lower bound the dot product between cluster centers. By combining the bounds we obtain the following result:

\begin{lemma}\label{lem:mostinset}
Let $k \geq 2$, $\varphi \in (0,1)$ and $\frac{\e}{\varphi^2}$ be smaller than a sufficiently small constant.
Let $G=(V,E)$ be a $d$-regular graph that admits a $(k,\varphi,\epsilon)$-clustering $\{ C_1, \dots, C_k \}$. If $\mu_i$'s are cluster means then the following conditions hold. Let $S \subset \{{\mu}_1, \dots, {\mu}_k\}$. Let $\Pi$ denote the orthogonal projection matrix on to the $ span(S)^\perp$. Let $\mu \in \{{\mu}_1, \dots, {\mu}_k\} \setminus S$. Let $C$ denote the cluster corresponding to the center ${\mu}$. Let $$\widehat{C} := \{ x \in V :  \rdp{\Pi f_x, \Pi{\mu}} \geq 0.96 \| \Pi\mu \|_2^2 \}$$ then we have:
$$\left| C \setminus \widehat{C} \right|  \leq  \frac{10^4 \e}{\varphi^2} |C| \text{.}$$
\end{lemma}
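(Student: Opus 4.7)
The natural test direction is $\alpha := \Pi\mu / \|\Pi\mu\|_2$, which is a unit vector (and is well defined: by Lemma~\ref{lem:dosubspace}(1), $\|\Pi\mu\|_2^2 \geq (1 - 16\sqrt{\epsilon}/\varphi)\|\mu\|_2^2$, and combined with Lemma~\ref{lem:dotmu}(1) this gives $\|\Pi\mu\|_2^2 \geq \frac{1}{4|C|}$ for $\epsilon/\varphi^2$ small enough). The plan is to rewrite the defining inequality of $\widehat{C}$ in terms of the directional deviation $\langle f_x - \mu, \alpha\rangle$ and then apply the directional variance bound (Lemma~\ref{lem:dirvariance}) to the cluster $C$.

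Since $\Pi$ is an orthogonal projection, $\Pi = \Pi^T = \Pi^2$, so $\langle f_x, \Pi\mu\rangle = \langle \Pi f_x, \Pi\mu\rangle$ and $\langle \mu, \Pi\mu\rangle = \|\Pi\mu\|_2^2$. Hence
\[
\langle f_x - \mu, \alpha\rangle \;=\; \frac{\langle \Pi f_x, \Pi\mu\rangle - \|\Pi\mu\|_2^2}{\|\Pi\mu\|_2}.
\]
For any $x \in C \setminus \widehat{C}$ the numerator is strictly less than $(0.96 - 1)\|\Pi\mu\|_2^2 = -0.04\,\|\Pi\mu\|_2^2$, so
\[
\langle f_x - \mu, \alpha\rangle^2 \;>\; (0.04)^2\,\|\Pi\mu\|_2^2 \;\geq\; \frac{(0.04)^2}{4|C|}.
\]
This is the key quantitative step: every misclassified point contributes at least $\Omega(1/|C|)$ to the variance in direction $\alpha$.

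Now sum over $x \in C \setminus \widehat{C}$ and compare against Lemma~\ref{lem:dirvariance} applied to the unit vector $\alpha$: since the sum over $C$ (which is a superset of $C \setminus \widehat{C}$) is bounded by the total directional variance, we obtain
\[
|C \setminus \widehat{C}| \cdot \frac{(0.04)^2}{4|C|} \;<\; \sum_{x \in C} \langle f_x - \mu, \alpha\rangle^2 \;\leq\; \sum_{i=1}^k \sum_{x \in C_i} \langle f_x - \mu_i, \alpha\rangle^2 \;\leq\; \frac{4\epsilon}{\varphi^2},
\]
and rearranging gives the claimed bound $|C \setminus \widehat{C}| \leq \frac{10^4\,\epsilon}{\varphi^2}|C|$ (with constants to spare).

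There is no real obstacle here: the only place that requires care is the lower bound on $\|\Pi\mu\|_2^2$, where one needs both Lemma~\ref{lem:dosubspace}(1) to argue that projecting away from $\mathrm{span}(S)$ barely shrinks $\mu$ (since the other cluster means are nearly orthogonal to $\mu$), and Lemma~\ref{lem:dotmu}(1) to lower-bound $\|\mu\|_2^2$ by roughly $1/|C|$. Both require the hypothesis that $\epsilon/\varphi^2$ is smaller than a sufficiently small absolute constant, which is assumed. Everything else is a two-line calculation.
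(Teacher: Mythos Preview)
Your proof is correct and follows essentially the same approach as the paper: choose the test direction $\alpha = \Pi\mu/\|\Pi\mu\|_2$, lower-bound $\|\Pi\mu\|_2^2$ via Lemma~\ref{lem:dosubspace} and Lemma~\ref{lem:dotmu}, observe that each $x\in C\setminus\widehat{C}$ contributes at least $(0.04)^2\|\Pi\mu\|_2^2$ to the directional variance, and conclude by Lemma~\ref{lem:dirvariance}. The constants and the chain of inequalities match the paper's proof exactly.
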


\begin{proof}
Let $x \in C \setminus \widehat{C}$. Then:
\begin{align*}
\left| \rdp{\mu-f_x, \frac{\Pi{\mu}}{\|\Pi {\mu}\|_2}} \right|
&=
\left|\rdp{\Pi( {\mu}-f_x), \frac{\Pi{\mu}}{\|\Pi{\mu}\|_2}} \right|    \\
&\geq 0.04 \cdot \|\Pi {\mu} \|_2  && \text{Since $\rdp{\Pi f_x, \Pi{\mu}} <0.96 \| \Pi\mu \|_2^2$} \\
&\geq 0.04 \cdot \left(1 - 24\frac{\sqrt{\e}}{\varphi}\right) ||\mu||_2 && \text{By Lemma~\ref{lem:dosubspace}} \\
&\geq 
0.04 \cdot \left(1 - 40\frac{\sqrt{\e}}{\varphi}\right) \sqrt{\frac{1}{|C|}} && \text{By Lemma~\ref{lem:dotmu}} \\
&\geq 0.02 \cdot \sqrt{\frac{1}{|C|}} && \text{Since } \frac{\e}{\varphi^2} \text{ is  sufficiently small}
\end{align*}
Then by Lemma~\ref{lem:dirvariance} applied to direction $\alpha = \frac{\Pi {\mu}}{\|\Pi {\mu}\|_2}$ we have
$
\sum_{i=1}^k \sum_{x \in C_i}  \rdp{f_x - \mu_i,  \alpha}^2 \leq \frac{4\epsilon}{\varphi^2}.
$
On the other hand
\[
\frac{4\epsilon}{\varphi^2}\geq \sum_{i=1}^k \sum_{x \in C_i}  \rdp{f_x - \mu_i,  \alpha}^2 \geq \sum_{x\in C\setminus  \widehat{C}} \rdp{f_x-\mu, \frac{\Pi{\mu}}{\|\Pi {\mu}\|_2}}^2 \geq 0.0004\cdot \frac{| C \setminus \widehat{C}|}{|C|}  .
\]
Using the above we conclude with $| C \setminus \widehat{C}| \leq 10^4 \frac{\e}{\varphi^2} |C|$.
\end{proof}


\begin{remark}\label{rem:twothresholds}
Notice that the constants in Lemma~\ref{lem:notalostfromoutside} are different, they are equal $0.96$ and $0.9$. The reason is that the real tests for membership in Algorithm~\ref{alg:ballcarving} are performed with constant $0.93$ and the slacks are needed as we have access only to approximate dot products. See \eqref{eq:goodintersection} for the formal reason.
\end{remark}

\begin{lemma}\label{lem:notalostfromoutside}
Let $k \geq 2$, $\varphi \in (0,1)$ and $\frac{\e}{\varphi^2}$ be smaller than a sufficiently small constant.
Let $G=(V,E)$ be a $d$-regular graph that admits a $(k,\varphi,\epsilon)$-clustering $\{ C_1, \dots, C_k \}$. If $\mu_i$'s are cluster means then the following conditions hold. Let $S \subset \{{\mu}_1, \dots, {\mu}_k\}$. Let $\Pi$ denote the projection matrix on to $ span(S)^\perp$. Let $\mu \in \{{\mu}_1, \dots, {\mu}_k\} \setminus S$. Let $C$ denote the cluster corresponding to the center ${\mu}$. Let $$\widehat{C} := \{ x \in V :  \rdp{\Pi f_x, \Pi{\mu}} \geq 0.9 \| \Pi\mu \|_2^2 \}$$ then we have:
$$\left| \widehat{C} \cap (V \setminus C) \right|  \leq 100 \frac{\e}{\varphi^2} |C| \text{.}$$
\end{lemma}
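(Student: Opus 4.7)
}
The plan is to fix the unit direction $\alpha := \Pi \mu / \|\Pi\mu\|_2$ and bound, for each vertex $x \in \widehat{C} \cap (V \setminus C)$, the deviation $|\langle f_x - \mu_x, \alpha\rangle|$ from below by a constant multiple of $1/\sqrt{|C|}$, then invoke the directional variance bound Lemma~\ref{lem:dirvariance} in direction $\alpha$.

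Let me carry out the steps. Fix $x \in \widehat{C} \cap C_j$ with $C_j \neq C$, and write $\mu_j$ for the mean of $C_j$. Since $\Pi$ is a symmetric projection, $\langle \Pi f_x, \Pi\mu\rangle = \langle f_x, \Pi\mu\rangle$, so the defining inequality gives $\langle f_x, \Pi\mu\rangle \geq 0.9\|\Pi\mu\|_2^2$. Two cases for the comparison with $\langle \mu_j, \Pi\mu\rangle = \langle \Pi\mu_j, \Pi\mu\rangle$: if $\mu_j \in S$, then $\Pi\mu_j = 0$ and this inner product vanishes; otherwise $\mu_j \in \{\mu_1,\dots,\mu_k\} \setminus (S \cup \{\mu\})$, and Lemma~\ref{lem:dosubspace} item~2 gives
\[
|\langle \Pi\mu_j, \Pi\mu\rangle| \leq \frac{40\sqrt{\epsilon}}{\varphi}\cdot \frac{1}{\sqrt{|C_j||C|}}.
\]
Combining with Proposition~\ref{lem:min_size} (cluster sizes are comparable up to a constant) and Lemma~\ref{lem:dotmu} item~1 (so $\|\mu\|_2^2 \geq (1 - 4\sqrt{\epsilon}/\varphi)/|C|$) together with Lemma~\ref{lem:dosubspace} item~1 ($\|\Pi\mu\|_2^2 \geq (1 - 16\sqrt{\epsilon}/\varphi)\|\mu\|_2^2$), for $\epsilon/\varphi^2$ small enough we obtain both $\|\Pi\mu\|_2 \geq 0.9/\sqrt{|C|}$ and $|\langle \Pi\mu_j, \Pi\mu\rangle| \leq c_0 \sqrt{\epsilon}/(\varphi |C|)$ for an absolute constant $c_0$.

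Putting these together, for every $x \in \widehat{C} \cap C_j$ with $C_j \neq C$,
\[
\langle f_x - \mu_j,\, \Pi\mu\rangle \geq 0.9\|\Pi\mu\|_2^2 - |\langle \Pi\mu_j, \Pi\mu\rangle| \geq \bigl(0.9 - O(\sqrt{\epsilon}/\varphi)\bigr)\|\Pi\mu\|_2^2,
\]
and dividing by $\|\Pi\mu\|_2$ we get $|\langle f_x - \mu_j, \alpha\rangle| \geq (0.9 - O(\sqrt{\epsilon}/\varphi)) \|\Pi\mu\|_2 \geq c_1/\sqrt{|C|}$ for a positive absolute constant $c_1$ provided $\epsilon/\varphi^2$ is small enough. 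Squaring and summing over all such $x$, Lemma~\ref{lem:dirvariance} applied to the unit direction $\alpha$ yields
\[
\frac{c_1^2}{|C|}\cdot \bigl|\widehat{C} \cap (V\setminus C)\bigr| \leq \sum_{i=1}^k \sum_{y \in C_i} \langle f_y - \mu_i, \alpha\rangle^2 \leq \frac{4\epsilon}{\varphi^2},
\]
which gives $|\widehat{C} \cap (V \setminus C)| \leq (4/c_1^2)\cdot (\epsilon/\varphi^2)\cdot |C|$; choosing the hidden constants carefully, this is at most $100\epsilon/\varphi^2 \cdot |C|$.

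The only mildly delicate step is verifying that the constants line up so that the final constant is $100$: one needs $c_1 \geq 1/5$, which requires $\epsilon/\varphi^2$ to be below an explicit threshold. This comes down to bookkeeping of the constants appearing in Lemmas~\ref{lem:dotmu} and~\ref{lem:dosubspace}, and is not a conceptual obstacle; the structural content of the proof is entirely in the directional-variance bound plus near-orthogonality of the (projected) cluster means.
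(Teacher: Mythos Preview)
Your proposal is correct and follows essentially the same argument as the paper: fix the unit direction $\alpha = \Pi\mu/\|\Pi\mu\|_2$, use Lemma~\ref{lem:dosubspace} (near-orthogonality of projected means, with the case split on whether $\mu_j \in S$) together with Lemma~\ref{lem:dotmu} to lower-bound $|\langle f_x - \mu_j, \alpha\rangle|$ by a constant over $\sqrt{|C|}$ for every $x \in \widehat{C}\cap(V\setminus C)$, and then invoke the directional-variance bound Lemma~\ref{lem:dirvariance}. The paper arrives at the explicit lower bound $0.2/\sqrt{|C|}$, which is exactly your $c_1 \geq 1/5$ check, yielding the same constant $100$.
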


\begin{proof}
Let $x \in \widehat{C} \cap (V \setminus C)$. Then there exists cluster $C' \neq C$ such that $x \in C'$. Let $\mu'$ be the cluster mean of $C'$. Then:
\begin{align}
\left| \rdp{f_x -\mu', \frac{\Pi{\mu}}{\|\Pi{\mu}\|_2}} \right|
&\geq \left| \rdp{\Pi f_x, \frac{\Pi\mu}{\|\Pi\mu\|_2}} \right| - \left| \rdp{\Pi\mu', \frac{\Pi\mu}{\|\Pi\mu\|_2}} \right| && \text{By triangle inequality} \nonumber \\
&\geq 0.9 \| \Pi\mu \|_2 - \left| \rdp{\Pi \mu', \frac{\Pi\mu}{\| \Pi\mu\|_2}} \right| && \text{As $ x \in \widehat{C}$} \nonumber
\end{align}
Note that either $\mu' \in S$ and then $\Pi\mu' = 0$ and in turn $|\rdp{\Pi\mu', \Pi\mu} | = 0$ or $\mu' \not\in S$ and then $|\rdp{\Pi\mu', \Pi\mu} | \leq \frac{60\sqrt{\e}}{\varphi^2}\frac{1}{\sqrt{|C| \cdot |C'|}}$ by Lemma~\ref{lem:dosubspace}. Thus we have
\begin{align}
\left| \rdp{f_x -\mu', \frac{\Pi{\mu}}{\|\Pi{\mu}\|_2}} \right|
&\geq 0.9 \| \Pi\mu \|_2 - \frac{60\sqrt{\e}}{\varphi^2}\frac{1}{\sqrt{|C| \cdot |C'|}} \frac{1}{\|\Pi\mu\|_2} &&  \nonumber \\
&\geq 0.8  \frac{1}{\sqrt{|C| }} - \frac{120\sqrt{\e}}{\varphi^2}\frac{1}{\sqrt{|C| \cdot |C'|}}\cdot \sqrt{|C|} && \text{by Lemma~\ref{lem:dosubspace} and Lemma \ref{lem:dotmu},  $\|\Pi\mu\|_2\geq \frac{1}{2\cdot \sqrt{|C|}}$} \nonumber \\
&\geq 0.2 \sqrt{\frac{1}{|C|}} && \text{Since } \frac{\e}{\varphi^2} \text{ sufficiently small and } \frac{|C|}{|C'|} \text{ constant} \label{eq:lowerboundinproj}
\end{align}
Then by Lemma~\ref{lem:dirvariance} applied to direction $\alpha = \frac{\Pi {\mu}}{\|\Pi {\mu}\|_2}$ we have
$
\sum_{i=1}^k \sum_{x \in C_i}  \rdp{f_x - \mu_i,  \alpha}^2 \leq \frac{4\epsilon}{\varphi^2}.
$
On the other hand using \eqref{eq:lowerboundinproj} we get
\[
\frac{4\epsilon}{\varphi^2}\geq \sum_{i=1}^k \sum_{x \in C_i}  \rdp{f_x - \mu_i,  \alpha}^2 \geq \sum_{x\in \widehat{C} \cap (V \setminus C)} \rdp{f_x-\mu_x, \frac{\Pi{\mu_x}}{\|\Pi {\mu_x}\|_2}}^2 \geq 0.04\cdot \frac{|  \widehat{C} \cap (V \setminus C)|}{|C|}.
\]
Therefore we have  $\left| \widehat{C} \cap (V \setminus C) \right|  \leq 100 \frac{\e}{\varphi^2} |C|$.
\end{proof}

\subsection{Partitioning scheme works with \textit{exact} cluster means \& dot products}\label{sec:realcenterswork}

The goal of this section is to present the main ideas behind the algorithms and the analysis. In this section we make a couple of simplifying assumptions. We assume that:
\begin{itemize}
    \item We have access to real centers $\{\mu_1, \dots, \mu_k\},$
    \item Dot products computed by the algorithm are exact,
    \item A test, that relies on computing outer-conductance of candidate sets, for assessing the quality of clusters is perfect.
\end{itemize}
Whenever we use one (or more) of these assumptions we state them explicitly in the Lemmas. Later in Section~\ref{sec:h_works} we show that we can get rid of all of these assumptions.

In the previous section we showed geometric properties of the threshold sets. Recall that threshold sets are defined as follows:
$$\Cr{y,\theta} := \{x \in V : \rdp{f_x,y} \geq \theta \rn{y}^2 \}\text{.} $$
In this section, using these properties of threshold sets, we show an algorithm that given exact centers, access to real dot products and a perfect primitive for computing outer-conductance computes an ordered partition $(T_1, \dots, T_b)$ of $\{\mu_1, \dots, \mu_k \}$ such that $(T_1, \dots, T_b)$ induces a good collection of clusters.

\begin{algorithm}[H]
\caption{\textsc{ComputeOrderedPartition}($G,\widehat{\mu}_1,\widehat{\mu}_2, \dots, \widehat{\mu}_k,s_1,s_2)$  \text{ } \Comment $\widehat{\mu}_i$'s given as sets of points \newline \text{ } \Comment $s_1$ is \# sampled points for size estimation \newline \text{ } \Comment $s_2$ is \# of sampled points for conductance estimation}\label{alg:testmus}
\label{alg:cluster}
\begin{algorithmic}[1]
		\State $S := \{\hat{\mu}_1, \dots, \hat{\mu}_k \}$
		\For{$i = 1$ to $\lceil \log(k) \rceil$} \label{ln:testcentersmainloop}
		    \State $T_i := \emptyset$
		    \For{$\widehat{\mu} \in S$} \label{testcentersforloop}

				\State $\psi := \textsc{OuterConductance}\left(G,\widehat{\mu}, (T_1,T_2,\ldots, T_{i-1}), S,s_1,s_2\right)$\label{ln:estout} \Comment{Algorithm \ref{alg:estimateconductande}}
				\If{$ \psi \leq O( \frac{\epsilon}{\varphi^2}\cdot \log(k) )$} \label{testcenterscondactancetest}
		            \State $T_i := T_i\cup \{\widehat{\mu}\}$
		          
		       \EndIf
		    \EndFor
		      \State $S := S \setminus T_i$
		    \If{$S = \emptyset$} 
             \State \Return $(\textsc{True},(T_1, \dots, T_i) )$	\label{ln:testcenterstruereturn}
			\EndIf
		\EndFor
		\State \Return $(\textsc{False}, \bot)$
\end{algorithmic}
\end{algorithm}

To explain and analyze \textsc{ComputeOrderedPartition} we first need to introduce another algorithm and some definitions.

\begin{definition}
For a set $\{a_1, \dots, a_i \}$ we say a sequence $(S_1, \dots, S_p)$ is an ordered partial partition of $\{a_1, \dots, a_i \}$ if:
\begin{itemize}
    \item $\bigcup_{j \in [p]} S_j \subseteq \{a_1, \dots, a_i \}$,
    \item $S_i$'s are pairwise disjoint.
\end{itemize}
\end{definition}

\begin{algorithm}
\caption{\textsc{IsInside}($x,\widehat{\mu},(T_1,T_2,\ldots, T_{b}), S$) \text{ } \newline \text{ } \Comment $T_i$'s are sets of $\widehat{\mu}_j$ where $\widehat{\mu}_j$'s are given as sets of points \newline \text{ } \Comment see Section~\ref{sec:dotproductcomp} for the reason of such representation \newline \text{ } \Comment $S = $ set of not yet processed centers, $\wh{\mu} \in S$}\label{alg:inside}
\begin{algorithmic}[1]
		\For{$i=1$ to $b$}
    	\State Let $\Pi$ be the projection onto the $\text{span} (\bigcup_{j<i} T_j)^{\perp}$.  
    	\State Let $S_i=\left(\bigcup_{j\geq i} T_j \right)\cup S$
    	\For{$\hat{\mu}_i\in T_i$}
    		
        	\If{$x\in   \Ca{\Pi \widehat{\mu}_i,0.93} \setminus \bigcup_{\widehat{\mu}'\in S_i\setminus \{\widehat{\mu}_i\}} \Ca{\Pi \widehat{\mu}',0.93}  $}  \label{ln:dot-x-mu2}\Comment see \eqref{eq:capxdef} for definition of $\Ca{y,\theta}$
            \State \Return \textsc{False}
        	\EndIf
        \EndFor
    \EndFor
    \State Let $\Pi$ be the projection onto the $\text{span} (\bigcup_{j\leq b} T_j)^{\perp}$.  
     \If{$x\in   \Ca{\Pi \widehat{\mu},0.93} \setminus \bigcup_{\widehat{\mu}'\in  S\setminus \{\widehat{\mu}\}} \Ca{\Pi \widehat{\mu}',0.93}  $}  \label{ln:dot-x-mu}\Comment see \eqref{eq:capxdef} for definition of $\Ca{y,\theta}$
            \State \Return \textsc{True}
        	\EndIf
	\State \Return \textsc{False}
\end{algorithmic}
\end{algorithm}

Intuitively Algorithm \textsc{IsInside} emulates \textsc{ClassifyByHyperplanePartitioning} on ordered partial partition $(T_1, \dots, T_b)$. This intuition is made formal, after introducing Definition~\ref{def:candidateclusters}, in Remark~\ref{rem:equivalenceofdef}. For this we need additional notation for clusters that are implicitly created by \textsc{IsInside}. We define:

\begin{definition}[\textbf{Candidate cluster}]\label{def:candidateclusters}
For an ordered partial partition $P = (T_1, \dots, T_p)$ of approximate cluster means $\{\wh{\mu}_1, \dots, \wh{\mu}_k \}$ and $\wh{\mu} \in \{\wh{\mu}_1, \dots, \wh{\mu}_k \} \setminus \bigcup_{i \in [p]} T_i$ we say that $\wh{C}_{\wh{\mu}}^P$ is a \textbf{candidate cluster corresponding to $\wh{\mu}$ with respect to} \textbf{$P$} if:
$$\wh{C}_{\wh{\mu}}^{P} = \left\{ x \in V : \textsc{IsInside} \left(x,\wh{\mu},P, \{\wh{\mu}_1, \dots, \wh{\mu}_k \} \setminus \bigcup_{i \in [p]} T_i \right) = \textsc{True} \right\} \text{.}$$
Furthermore we define:
$V^{P} := V \setminus \bigcup_{j < p} \bigcup_{\wh{\mu} \in T_j} \wh{C}_{\wh{\mu}}^{(T_1, \dots, T_{j-1})}$.
\end{definition}

Algorithm \textsc{IsInside} receives a vertex $x$, the centre of a cluster $\widehat{\mu}$, and an ordered partial partition, then it tests if vertex $x$ is not recovered by any of the previous stages (see line \eqref{ln:dot-x-mu2} of Algorithm \ref{alg:inside}) and can be recovered at the current stage using $\widehat{\mu}$. More formally, it can be recovered at the current stage if it only belongs to the candidate cluster corresponding to the center $\widehat{\mu}$ (see line \eqref{ln:dot-x-mu} of Algorithm \ref{alg:inside}).

\begin{remark}\label{rem:equivalenceofdef}
Note that Definitions~\ref{def:implicitclustering} and \ref{def:candidateclusters} are compatible in the following sense. For an ordered partition $(T_1, \dots, T_b)$ of approximate cluster means $\{\wh{\mu}_1, \dots, \wh{\mu}_k \}$ that induces a collection of clusters $\{\wh{C}_{\wh{\mu}_1}, \dots, \wh{C}_{\wh{\mu}_k}\}$ it is true that:
$$ \{\wh{C}_{\wh{\mu}_1}, \dots, \wh{C}_{\wh{\mu}_k}\} = \bigcup_{i \in [b]} \bigcup_{\wh{\mu} \in T_i} \{\wh{C}_{\wh{\mu}}^{(T_1, \dots, T_{i-1})} \}\text{,}$$
\end{remark}

\noindent Equipped with Definition~\ref{def:candidateclusters} we are ready to explain Algorithm \textsc{ComputeOrderedPartition}. The Algorithm proceeds in $O(\log(k))$ stages. It maintains a set $S$ of approximate cluster means, that initially is equal to $\{ \wh{\mu}_1, \dots, \wh{\mu}_k\}$, from which $\wh{\mu}$'s are removed after every stage. At every stage $i$ a collection of sets 
$$\mathcal{C}_i := \bigcup_{\wh{\mu} \in S} \{\wh{C}_{\wh{\mu}}^{(T_1, \dots, T_{i-1})} \} \text{,}$$ 
is implicitly considered. In fact sets in this collection are, by definition, pairwise disjoint (see Defnition~\ref{def:candidateclusters} and line:~\ref{ln:dot-x-mu} of \textsc{IsInside}). $\wh{C}_{\wh{\mu}}^{(T_1, \dots, T_{i-1})}$'s are defined as threshold sets (see Definition~\ref{def:thresholdsets}) that are made disjoint by removing intersections. The main idea behind the Algorithm is to use properties from Section~\ref{sec:bound_int} so that we can show that $\wh{C}_{\wh{\mu}}^{(T_1, \dots, T_{i-1})}$'s match some $C_j$'s well. Unfortunately after removing the intersections the above property might not hold for every cluster in $\mathcal{C}_i$. In the rest of this section we show however that it is true for a constant fraction of sets from $\mathcal{C}_i$. The Algorithm \textsc{ComputeOrderedPartition} proceeds by discarding, from set $S$, the $\wh{\mu}$'s for which $\wh{C}_{\wh{\mu}}^{(T_1, \dots, T_{i-1})}$ matches some $C_j$'s well and implicitly removes the vertices of $\wh{C}_{\wh{\mu}}^{(T_1, \dots, T_{i-1})}$ from consideration. Moreover it projects out the directions corresponding to the removed $\wh{\mu}$'s and restricts its attention to a lower dimensional subspace $\Pi$ of $\R^k$ (see \textbf{Idealized Clustering Algorithm} from Section~\ref{sec:algorithmhighlevel} for comparison). The Algorithm doesn't know which sets from $\mathcal{C}_i$ are good as it runs in sublinear time. That is why we develop a simple sampling procedure that computes outer-conductance of candidate clusters (see Algorithm~\ref{alg:estimateconductande}). Then the Algorithm removes the $\wh{\mu}$'s for which the corresponding $\wh{C}_{\wh{\mu}}^{(T_1, \dots, T_{i-1})}$ have small outer-conductance. We conclude using the robustness property of $(k,\varphi,\e)$-clusterable graphs (Lemma~\ref{lem:howtocluster}) that these tests are enough.

The rest of this subsection is devoted to showing that if \textsc{ComputeOrderedPartition} is called with $(\hat{\mu}_1, \dots, \hat{\mu}_k)$ equal to $(\mu_1, \dots, \mu_k)$ and the algorithm has access to real dot products then \textsc{ComputeOrderedPartition} returns \textsc{True} and an ordered partition $(T_1, \dots, T_b)$ (of $\{\mu_1, \dots, \mu_k\}$) that induces a collection of pairwise disjoint clusters $\{\wh{C}_{\mu_1}, \dots, \wh{C}_{\mu_k}\}$ such that for every $i$:
\begin{equation}\label{eq:sec3outerconductancegoal}
\phi \left( \wh{C}_{\mu_{i}} \right) \leq O \left(\frac{\e}{\varphi^2}  \cdot \log(k) \right) \text{.}
\end{equation}
Then using Lemma~\ref{lem:howtocluster} we get that there exists a permutation $\pi$ such that for all $i \in [k]$:
\begin{equation}\label{eq:sec3maingoal}
\left|\wh{C}_{\mu_{i}} \triangle C_{\pi(i)}\right| \leq O \left(\frac{\e}{\varphi^3}  \cdot \log(k) \right)|C_{\pi(i)}| \text{.}
\end{equation}


The core of the argument is an averaging argument that, for every linear subspace of $\R^k$, bounds the average distance of embedded points to their centers in this subspace. What is important is that the bound depends linearly on the dimensionality of the subspace.  

\begin{lemma}\label{lem:smallinsubspace}
Let $k \geq 2$, $\varphi \in (0,1)$ and $\frac{\e}{\varphi^2}$ be smaller than a sufficiently small constant.
Let $G=(V,E)$ be a $d$-regular graph that admits a $(k,\varphi,\epsilon)$-clustering $\{ C_1, \dots, C_k \}$.
Then for all $L \subseteq \R^k$ - a linear subspace of $\R^k$, $\Pi$ the orthogonal projection onto $L$ we have:
$$\sum_{x \in V} \|\Pi f_x - \Pi \mu_x \|_2^2 \leq O \left( \text{dim}(L) \cdot \frac{\e}{\varphi^2} \right)$$
\end{lemma}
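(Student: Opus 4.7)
The plan is short because Lemma~\ref{lem:dirvariance} already gives exactly the right directional variance bound, so the proof reduces to an orthonormal expansion inside the subspace $L$.

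First I would fix an orthonormal basis $\{\alpha_1, \dots, \alpha_d\}$ of $L$, where $d = \dim(L)$. By Parseval's identity in $L$, for any $v \in \R^k$,
\[
\|\Pi v\|_2^2 \;=\; \sum_{j=1}^d \langle v, \alpha_j\rangle^2.
\]
Applying this with $v = f_x - \mu_x$ (note $\Pi$ is linear, so $\Pi f_x - \Pi \mu_x = \Pi(f_x - \mu_x)$) and summing over $x \in V$, I would swap the order of summation to get
\[
\sum_{x \in V} \|\Pi f_x - \Pi \mu_x\|_2^2 \;=\; \sum_{j=1}^d \sum_{x \in V} \langle f_x - \mu_x, \alpha_j\rangle^2 \;=\; \sum_{j=1}^d \sum_{i=1}^k \sum_{x \in C_i} \langle f_x - \mu_i, \alpha_j\rangle^2,
\]
where the second equality uses that $\mu_x = \mu_i$ whenever $x \in C_i$ and that $(C_1,\dots,C_k)$ is a partition of $V$.

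Next I would invoke Lemma~\ref{lem:dirvariance} once for each $\alpha_j$: since each $\alpha_j$ is a unit vector in $\R^k$, the lemma gives
\[
\sum_{i=1}^k \sum_{x \in C_i} \langle f_x - \mu_i, \alpha_j\rangle^2 \;\leq\; \frac{4\epsilon}{\varphi^2}.
\]
Summing over $j = 1, \dots, d$ yields
\[
\sum_{x \in V} \|\Pi f_x - \Pi \mu_x\|_2^2 \;\leq\; d \cdot \frac{4\epsilon}{\varphi^2} \;=\; O\!\left(\dim(L) \cdot \frac{\epsilon}{\varphi^2}\right),
\]
which is the claim. There is no real obstacle here; the only thing to check is that the bound in Lemma~\ref{lem:dirvariance} is uniform over unit directions and hence can be applied coordinatewise in the basis of $L$, which is immediate.
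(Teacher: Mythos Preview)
Your proposal is correct and essentially identical to the paper's proof: both fix an orthonormal basis of $L$, expand $\|\Pi(f_x-\mu_x)\|_2^2$ via Parseval, swap the order of summation, and apply Lemma~\ref{lem:dirvariance} once per basis vector to get the bound $\dim(L)\cdot \frac{4\epsilon}{\varphi^2}$.
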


\begin{proof}
Let $b := dim(L)$ and $\{w_1, \dots, w_b \}$ be any orthonormal basis of $L$ and recall that for $x \in V$ $\mu_x$ is the cluster mean of the cluster which $x$ belongs to. Then 
\begin{align}
\sum_{x \in V} \|\Pi f_x - \Pi \mu_x \|_2^2 
&= \sum_{x \in V} \sum_{i=1}^b \rdp{f_x - \mu_x, w_i}^2 \nonumber \\
&= \sum_{i=1}^b \sum_{x \in V} \rdp{f_x - \mu_x, w_i}^2 \nonumber \\
&\leq b \cdot \frac{4\e}{\varphi^2} && \text{By Lemma~\ref{lem:dirvariance}} \nonumber \end{align}
\end{proof}

In order to show \eqref{eq:sec3outerconductancegoal} we need to show that a constant fraction of candidate sets $\wh{C}_{\mu}^{(T_1, \dots, T_{i-1})}$'s match some $C_j$'s well. To do that we argue that that sets of the form $\Cr{\Pi\wh{\mu}, 0.9}$ (where $\Pi$ is the orthogonal projection onto the $\text{span}(\bigcup_{j < i} T_j)^{\perp}$) don't overlap too much. We do this in two steps. First in Lemma~\ref{clm:technicalaboutalpha2} and Lemma~\ref{lem:farfromcenter} we show that points from the intersections are far from their centers. Then in Lemma~\ref{lem:pointsoutside} below we show that having too many such vertices would contradict Lemma~\ref{lem:smallinsubspace}.

\begin{restatable}{lemma}{claimtechnicalaboutalpha}\label{clm:technicalaboutalpha2}
Let $k \geq 2$, $\varphi \in (0,1)$ and $\frac{\e}{\varphi^2} $ be smaller than a sufficiently small constant.
Let $G=(V,E)$ be a $d$-regular graph that admits a $(k,\varphi,\epsilon)$-clustering $\{ C_1, \dots, C_k \}$. 
Let $\{v_1, \dots, v_k \} \in \R^k$ be a set of vectors satisfying:
\begin{itemize}
    \item $|\rdp{v_i, v_j}| \leq O \left(\frac{\sqrt{\e}}{\varphi} \right) \frac{1}{\sqrt{|C_i| |C_j|}}$
    \item $\left| \rn{v_i}^2 - \frac{1}{|C_i|} \right| \leq O \left(\frac{\sqrt{\e}}{\varphi} \right) \frac{1}{|C_i|}$
\end{itemize}

Then for every pair $i \neq j \in [k]$ for every $\theta \in (0,1)$ if $\alpha := \frac{v_i\frac{\|v_j\|}{\|v_i\|}+v_j\frac{\|v_i\|}{\|v_j\|}}{\sqrt{\|v_i\|^2+\|v_j\|^2}}$ and 
$I := C_{v_i,\theta} \cap C_{v_j,\theta} = \{ x \in V : \langle f_x, v_i \rangle  \geq \theta\|v_i\|^2 \wedge \langle f_x, v_j \rangle  \geq \theta\|v_j\|^2\}$
then the following conditions hold: 

\begin{enumerate}
\item Correlation of vector $v_p$ with the direction $\alpha$ is as follows: 
\begin{itemize}
\item $\text{for all } p \in [k] \setminus \{i,j\},  \rdp{\frac{\alpha}{\|\alpha\|}, v_p} \leq O\left(\frac{\sqrt{\e}}{\varphi} \right) \cdot \frac{\|v_i\| \cdot \|v_j\|}{\sqrt{\|v_i\|^2+\|v_j\|^2}} $, for all $i \neq j \in [k]$
\item $\text{for all } p \in \{i,j\},  \rdp{\frac{\alpha}{\|\alpha\|}, v_p} \leq \left(1+ O\left(\frac{\sqrt{\e}}{\varphi} \right) \right) \cdot \frac{\|v_i\| \cdot \|v_j\|}{\sqrt{\|v_i\|^2+\|v_j\|^2}} $ for all $i \in [k]$
\end{itemize}
\item Spectral embeddings of vertices from set $I$ have big correlation with direction $\alpha$.
$$\min_{x \in I} \rdp{\frac{\alpha}{\|\alpha\|},f_x} \geq \left(2\theta - O\left(\frac{\sqrt{\e}}{\varphi} \right)\right) \cdot \frac{\|v_i\| \cdot \|v_j\|}{\sqrt{\|v_i\|^2+\|v_j\|^2}} $$
\end{enumerate}
\end{restatable}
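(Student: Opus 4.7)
Plan of attack. The proof is essentially a direct calculation leveraging the three hypotheses on the $v_i$'s (almost-orthogonality, near-unit norms when rescaled by $\sqrt{|C_i|}$, and comparable cluster sizes from Proposition~\ref{lem:min_size}). The only non-trivial conceptual step is to massage the direction $\alpha$ into a form where the cross terms cancel cleanly; once that is done, everything reduces to bounding each dot product term by term.

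Step 1: normalize $\alpha$. Write $a_i := \|v_i\|$, $a_j := \|v_j\|$, so $\alpha = \frac{(a_j/a_i)\, v_i + (a_i/a_j)\, v_j}{\sqrt{a_i^2 + a_j^2}}$. Expanding $\|\alpha\|^2$ using $\|u+w\|^2 = \|u\|^2 + \|w\|^2 + 2\langle u,w\rangle$, the two squared norms contribute $a_j^2 + a_i^2$, while the cross term is $2\langle v_i, v_j\rangle$. Using the hypothesis $|\langle v_i,v_j\rangle| \leq O(\sqrt{\e}/\varphi)/\sqrt{|C_i||C_j|}$ together with $a_i^2 + a_j^2 \geq (1 - O(\sqrt{\e}/\varphi))(1/|C_i| + 1/|C_j|) \geq \Omega(1/\sqrt{|C_i||C_j|})$ by AM-GM, I get $\|\alpha\|^2 = 1 + O(\sqrt{\e}/\varphi)$, and hence $1/\|\alpha\| = 1 - O(\sqrt{\e}/\varphi)$.

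Step 2: prove claim~1 by computing $\langle \alpha, v_p\rangle$ for each case. For $p = i$, the calculation gives $\langle \alpha, v_i\rangle = \frac{a_i a_j}{\sqrt{a_i^2 + a_j^2}} + \frac{\langle v_i, v_j\rangle \cdot (a_i/a_j)}{\sqrt{a_i^2+a_j^2}}$; bounding the second term by $O(\sqrt{\e}/\varphi) \cdot \frac{a_i a_j}{\sqrt{a_i^2+a_j^2}}$ (using comparable cluster sizes to absorb the $a_i/a_j$ factor) and dividing by $\|\alpha\|$ yields the claim. The case $p=j$ is symmetric. For $p \notin \{i,j\}$, both numerator terms are almost-orthogonal cross products, each bounded by $O(\sqrt{\e}/\varphi) \cdot a_i a_p$ or $O(\sqrt{\e}/\varphi) \cdot a_j a_p$; summing and using $a_p = \Theta(a_i) = \Theta(a_j)$ to conclude that $a_p \leq O(1) \cdot a_i a_j/\sqrt{a_i^2+a_j^2}$ gives the desired bound.

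Step 3: prove claim~2. For $x \in I$ we have by definition $\langle f_x, v_i\rangle \geq \theta a_i^2$ and $\langle f_x, v_j\rangle \geq \theta a_j^2$. Plugging into the expression for $\langle \alpha, f_x\rangle$ we obtain $\langle \alpha, f_x\rangle \geq \frac{\theta a_i^2 \cdot (a_j/a_i) + \theta a_j^2 \cdot (a_i/a_j)}{\sqrt{a_i^2+a_j^2}} = \frac{2\theta a_i a_j}{\sqrt{a_i^2+a_j^2}}$, and dividing by $\|\alpha\| = 1 + O(\sqrt{\e}/\varphi)$ gives the $(2\theta - O(\sqrt{\e}/\varphi))$ factor.

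I do not expect any genuine obstacle: the main point is simply to choose $\alpha$ as the symmetric combination with the $\|v_j\|/\|v_i\|$ and $\|v_i\|/\|v_j\|$ weights (rather than, say, an unweighted sum) so that in the lower bound of Step~3 the two contributions $\theta a_i^2 \cdot (a_j/a_i)$ and $\theta a_j^2 \cdot (a_i/a_j)$ both collapse to $\theta a_i a_j$, producing a clean factor of $2\theta$. Every other estimate is just bookkeeping with the near-orthogonality of the $v_\ell$'s and the comparable cluster sizes.
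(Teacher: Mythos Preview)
Your proposal is correct and follows essentially the same approach as the paper: compute $\|\alpha\|^2 = 1 + O(\sqrt{\e}/\varphi)$ from the cross term, bound $\langle\alpha, v_p\rangle$ case by case using near-orthogonality and comparable cluster sizes, and for $x\in I$ observe that the weighted combination makes both threshold contributions collapse to $\theta a_i a_j$. The only cosmetic difference is that the paper introduces the parameter $\gamma = \|v_j\|/\sqrt{\|v_i\|^2+\|v_j\|^2}$ and repeatedly invokes the identity $\sqrt{1-\gamma^2}\,\|v_j\| = \gamma\,\|v_i\|$ to express every bound as a multiple of $\gamma\|v_i\|$, whereas you work directly with $a_i a_j/\sqrt{a_i^2+a_j^2}$; these are the same quantity.
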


\begin{proof}
For all $p \in [k]$ let $\wt{v}_p:=v_p/||v_p||$. Let $\gamma:=\frac{||v_j||}{\sqrt{||v_i||^2+||v_j||^2}}$, $\alpha := \gamma \wt{v}_i + \sqrt{1-\gamma^2}\wt{v}_j,$ and $\wt{\alpha} := \alpha/||\alpha||$. Fix $i \neq j \in [1, \dots, k]$. First we show that since $v_i$'s are close to orthogonal we have $||\alpha||^2 \approx 1$. More precisely we will upper bound $|\rn{\alpha}^2 - 1|$
\begin{align}
\left|\rn{\alpha}^2 - 1\right| 
&= 
\left|\gamma^2 \rn{\wt{v}_i}^2 + (1-\gamma^2)\rn{\wt{v}_j}^2 + 2\gamma \sqrt{1-\gamma^2} \rdp{\wt{v}_i, \wt{v}_j} - 1\right| \nonumber \\
&= \frac{2\rdp{v_i,v_j}}{\rn{v_i}^2 + ||v_j||^2} &&\text{as } ||\wt{v}_i|| = ||\wt{v}_j|| = 1 \nonumber \\
&\leq \frac{2 \cdot O\left(\frac{\sqrt{\epsilon}}{\varphi}\right)\frac{1}{\sqrt{|C_i||C_j|}}}{\left(1-O\left(\frac{\sqrt{\epsilon}}{\varphi}\right)\right)(\frac{1}{|C_i|} + \frac{1}{|C_j|})} && \text{By assumptions} \nonumber \\
&\leq O\left(\frac{\sqrt{\epsilon}}{\varphi} \right) \frac{\sqrt{|C_i||C_j|}}{|C_i| + |C_j|} \nonumber \\
&\leq O\left(\frac{\sqrt{\epsilon}}{\varphi}\right) && \text{as } \frac{\sqrt{|C_i||C_j|} }{ \max(|C_i|,|C_j|)} \leq 1\label{eq:normofalpha}
\end{align}

Observe the following fact:
\begin{equation}\label{eq:gammaproperty}
\sqrt{1-\gamma^2} \cdot \rn{v_j} = \gamma \cdot \rn{v_i}    
\end{equation}

Next notice the following:
\begin{equation}\label{eq:dotwithmui}
\langle \alpha, v_i \rangle = \gamma ||v_i||+\langle \wt{v}_i, \wt{v}_j \rangle\cdot \sqrt{1-\gamma^2}  ||v_i||
\end{equation}

\begin{equation}\label{eq:dotwithmuj}
\langle \alpha, v_j \rangle = \langle \wt{v}_i, \wt{v}_j \rangle\cdot\gamma ||v_j||+\sqrt{1-\gamma^2}  ||v_j||
\end{equation}

For all $p\in \{1, 2,\ldots, k\}\setminus \{i,j\}$
\begin{equation}\label{eq:dotwithmup}
\langle \alpha, v_p \rangle = \langle \wt{v}_i, \wt{v}_p \rangle\cdot\gamma ||v_p||+ \rdp{\wt{v}_j, \wt{v}_p}\sqrt{1-\gamma^2}  ||v_p||
\end{equation}

Moreover for all $p \neq q \in [1,\dots,k]$ we have 
\begin{align}
\left|\frac{1}{\rn{v_p}^2} \cdot \rdp{v_q, v_p} \right|
&\leq 
O\left(\frac{\sqrt{\epsilon}}{\varphi}\right) \frac{1}{\sqrt{|C_q||C_p|}} |C_p|\frac{1}{\left(1 - O\left(\frac{\sqrt{\epsilon}}{\varphi}\right)\right)} && \text{By assumptions} \nonumber \\
&\leq 
O\left(\frac{\sqrt{\epsilon}}{\varphi}\right) \sqrt{\frac{|C_p|}{|C_q|}} && \text{for small enough } \frac{\epsilon}{\varphi^2}\nonumber \\
&\leq 
O\left(\frac{\sqrt{\e}}{\varphi} \right) && \text{as } \frac{|C_p|}{|C_q|}  = O(1)  \label{eq:dotbylensquared}
\end{align}

Using the above we can prove:
\begin{align}
\left| \langle \wt{v}_i, \wt{v}_j \rangle\cdot \sqrt{1-\gamma^2}  ||v_i|| \right| 
&= \left|\sqrt{1-\gamma^2} \cdot \rn{v_j} \cdot \frac{1}{\rn{v_j}^2} \cdot \rdp{v_i, v_j} \right| \nonumber \\
&\leq
\sqrt{1-\gamma^2} \cdot \rn{v_j} \cdot O\left(\frac{\sqrt{\e}}{\varphi} \right) && \text{By \eqref{eq:dotbylensquared}} \nonumber \\
&= O\left(\frac{\sqrt{\e}}{\varphi} \right) \cdot \gamma \cdot \rn{v_i} && \text{By \eqref{eq:gammaproperty}} \label{eq:bounderrormui}
\end{align}

And similarly we show:
\begin{align}
\left| \langle \wt{v}_i, \wt{v}_j \rangle\cdot \gamma  ||v_j|| \right| 
&= \left|\gamma \cdot \rn{v_i} \cdot \frac{1}{\rn{v_i}^2} \cdot \rdp{v_i, v_j} \right| \nonumber \\
&\leq
O\left(\frac{\sqrt{\e}}{\varphi}  \right) \cdot \gamma \cdot \rn{v_i} && \text{By \eqref{eq:dotbylensquared}} \label{eq:bounderrormuj}
\end{align}
For all $p\in \{1, 2,\ldots, k\}\setminus \{i,j\}$ we get
\begin{align}
\left| \rdp{\alpha, 
v_p} \right| 
&\leq \Big| \langle \wt{v}_i, \wt{v}_p \rangle\cdot\gamma ||v_p|| \Big|+ \left| \rdp{\wt{v}_j, \wt{v}_p}\sqrt{1-\gamma^2}  ||v_p|| \right| &&\text{By \eqref{eq:dotwithmup}}\nonumber \\
&= \left| \langle v_i, v_p \rangle\cdot \frac{1}{\rn{v_i}^2} \rn{v_i}\gamma \right|+ \left| \rdp{v_j, v_p} \frac{1}{\rn{v_j}^2} \rn{v_j} \sqrt{1-\gamma^2}   \right| \nonumber \\
&\leq O\left(\frac{\sqrt{\e}}{\varphi} \right) \cdot \gamma \cdot \rn{v_i} && \text{By \eqref{eq:dotbylensquared} and \eqref{eq:gammaproperty}} \label{eq:bounderrormup}
\end{align}
Combining \eqref{eq:dotwithmui}, \eqref{eq:dotwithmuj}, \eqref{eq:bounderrormui}, \eqref{eq:bounderrormuj} and \eqref{eq:bounderrormup} we get that for all $p \in \{i,j\}$ we have
\begin{equation}
\rdp{\alpha, v_p} \leq \left(1 + O\left(\frac{\sqrt{\e}}{\varphi}  \right)\right) \cdot \gamma \cdot \rn{v_i}
\end{equation}
and
for all $p \in \{1, \dots, k\} \setminus \{i,j\}$
\begin{equation}
\rdp{\alpha, v_p} \leq  O\left(\frac{\sqrt{\e}}{\varphi}  \right) \cdot \gamma \cdot \rn{v_i}
\end{equation}
Now using \eqref{eq:normofalpha} we get that for all $p \in \{i,j\}$
$$\rdp{\wt{\alpha}, v_p} \leq \frac{1}{\sqrt{1 - O\left(\frac{\sqrt{\e}}{\varphi} \right)}} \left(1 + O\left(\frac{\sqrt{\e}}{\varphi}  \right)\right) \cdot \gamma \cdot \rn{v_i} \leq \left(1 + O\left(\frac{\sqrt{\e}}{\varphi}  \right)\right) \cdot \frac{||v_i|| ||v_j||}{\sqrt{||v_i||^2+||v_j||^2}} $$
and for all $p \in \{1, \dots, k\} \setminus \{ i,j\}$
$$\rdp{\wt{\alpha}, v_p} \leq \frac{1}{\sqrt{1 - O\left(\frac{\sqrt{\e}}{\varphi} \right)}} O\left(\frac{\sqrt{\e}}{\varphi}  \right) \cdot \gamma \cdot \rn{v_i} \leq  O\left(\frac{\sqrt{\e}}{\varphi}  \right) \cdot \frac{||v_i|| ||v_j||}{\sqrt{||v_i||^2+||v_j||^2}} $$
These two inequalities establish the first statement of the Claim.

Recall that
$$I = \{ x \in V : \langle f_x, v_i \rangle  \geq \theta||v_i||^2 \wedge \langle f_x, v_j \rangle  \geq \theta||v_j||^2\}$$
Now let $x \in I$. Then observe
\begin{align*}
\rdp{\alpha, f_x} 
&= \rdp{\gamma \cdot \wt{v}_i,f_x} + \rdp{\sqrt{1-\gamma^2} \cdot \wt{v}_j, f_x} \\
&\geq \gamma \cdot \theta \cdot \rn{v_i} + \sqrt{1-\gamma^2} \cdot \theta \cdot \rn{v_j} && \text{because } x \in I\\
&= 2\theta \cdot \gamma \cdot \rn{v_i} && \text{by \eqref{eq:gammaproperty} }
\end{align*}
Hence
\begin{align*}
\rdp{\wt{\alpha}, f_x} 
&\geq \frac{1}{\sqrt{1+O\left(\frac{\sqrt{\e}}{\varphi} \right)}} 2\theta \cdot \gamma \cdot \rn{v_i} &&\text{By \eqref{eq:normofalpha}} \\
&\geq \left(2\theta - O\left(\frac{\sqrt{\e}}{\varphi} \right)\right) \cdot \gamma \cdot \rn{v_i}
\end{align*}

\end{proof}

Now we use technical Lemma~\ref{clm:technicalaboutalpha2} to show that vertices from the intersections of $\Cr{\Pi \mu, 0.9}$'s are far from their centers.

\begin{lemma}\label{lem:farfromcenter}
Let $k \geq 2$, $\varphi \in (0,1)$ and $\frac{\e}{\varphi^2} $ be smaller than a sufficiently small constant.
Let $G=(V,E)$ be a $d$-regular graph that admits a $(k,\varphi,\epsilon)$-clustering $\{ C_1, \dots, C_k \}$.
If $\mu_i$'s are cluster means then the following conditions hold. For all $S \subset \{\mu_1, \dots, \mu_k\}$ if $L := span(S)^\perp$ and $\Pi$ is the projection on $L$ then if $x \in V$ is such that 
$$ \rdp{\Pi f_x, \Pi\mu_i} \geq 0.9 \| \Pi\mu_i \|_2^2 \wedge \rdp{\Pi f_x, \Pi\mu_j} \geq 0.9 \| \Pi\mu_j \|_2^2$$
for some $\mu_i, \mu_j \in \{\mu_1, \dots, \mu_k\} \setminus S, \mu_i \neq \mu_j$. Then:
$$\|\Pi f_x - \Pi \mu_x \| \geq 0.3 \sqrt{\frac{1}{\max_{p \in [k]} |C_p|}} $$

\end{lemma}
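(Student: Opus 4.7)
The plan is to apply Claim \ref{clm:technicalaboutalpha2} to the projected centers $v_p := \Pi \mu_p$ for $\mu_p \in \{\mu_1,\ldots,\mu_k\}\setminus S$, and then exploit the resulting high-correlation direction $\alpha$ to produce a lower bound on $\|\Pi f_x - \Pi \mu_x\|_2$ via Cauchy--Schwarz. First I would verify that the projected centers satisfy the hypotheses of Claim \ref{clm:technicalaboutalpha2}: for $\mu_p, \mu_q \notin S$, Lemma \ref{lem:dosubspace} gives $|\langle \Pi\mu_p, \Pi\mu_q\rangle|\leq O(\sqrt{\e}/\varphi)\cdot 1/\sqrt{|C_p||C_q|}$ and $|\|\Pi\mu_p\|^2-\|\mu_p\|^2|\leq O(\sqrt{\e}/\varphi)\|\mu_p\|^2$, which combined with Lemma \ref{lem:dotmu} yields $|\|\Pi\mu_p\|^2-1/|C_p||\leq O(\sqrt{\e}/\varphi)\cdot 1/|C_p|$.

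Next I would invoke Claim \ref{clm:technicalaboutalpha2} with $\theta=0.9$, the indices $i,j$, and the vectors $\{v_p\}_{\mu_p\notin S}$ (simply reindexing to fewer than $k$ vectors). Since $x$ lies in both $C_{v_i,0.9}$ and $C_{v_j,0.9}$ (using that $\alpha$ and the $v_p$ all lie in $L$, so $\langle f_x, v_p\rangle=\langle \Pi f_x, v_p\rangle$), the claim yields a unit direction $\tilde\alpha := \alpha/\|\alpha\|$ with
\[
\langle \tilde\alpha, \Pi f_x\rangle \;\geq\; \left(1.8 - O(\sqrt{\e}/\varphi)\right)\cdot \frac{\|v_i\|\|v_j\|}{\sqrt{\|v_i\|^2+\|v_j\|^2}},
\]
while for every $p$, $\langle\tilde\alpha, v_p\rangle\leq (1+O(\sqrt{\e}/\varphi))\cdot \|v_i\|\|v_j\|/\sqrt{\|v_i\|^2+\|v_j\|^2}$.

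I would then bound $\langle \tilde\alpha, \Pi\mu_x\rangle$ by case analysis on $\mu_x$: if $\mu_x\in S$ then $\Pi\mu_x=0$; if $\mu_x=\mu_i$ or $\mu_x=\mu_j$, the first part of Claim \ref{clm:technicalaboutalpha2} gives the bound $(1+O(\sqrt{\e}/\varphi))\cdot \|v_i\|\|v_j\|/\sqrt{\|v_i\|^2+\|v_j\|^2}$; otherwise the stronger bound $O(\sqrt{\e}/\varphi)$ times the same factor. Subtracting, using $\alpha\in L$ so that $\langle\tilde\alpha, \Pi f_x-\Pi\mu_x\rangle=\langle\tilde\alpha, f_x-\mu_x\rangle$, I obtain
\[
\langle \tilde\alpha,\Pi f_x - \Pi\mu_x\rangle \;\geq\; \left(0.8 - O(\sqrt{\e}/\varphi)\right)\cdot \frac{\|v_i\|\|v_j\|}{\sqrt{\|v_i\|^2+\|v_j\|^2}}.
\]

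Finally, by Cauchy--Schwarz $\|\Pi f_x - \Pi\mu_x\|_2 \geq \langle\tilde\alpha,\Pi f_x-\Pi\mu_x\rangle$, and the geometric-mean bound $\|v_i\|\|v_j\|/\sqrt{\|v_i\|^2+\|v_j\|^2}\geq \min(\|v_i\|,\|v_j\|)/\sqrt{2}$ combined with $\|v_p\|\geq 1/\sqrt{2|C_p|}$ (from the norm estimate above for sufficiently small $\e/\varphi^2$) yields a lower bound of at least $(0.8 - O(\sqrt{\e}/\varphi))/(2\sqrt{\max_p |C_p|})$, which exceeds $0.3/\sqrt{\max_p |C_p|}$ for $\e/\varphi^2$ below a small enough absolute constant. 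The main technical obstacle is purely bookkeeping — tracking how the $O(\sqrt{\e}/\varphi)$ slacks from Lemma \ref{lem:dosubspace} and Claim \ref{clm:technicalaboutalpha2} propagate and collapsing them into a single absorbed constant — but no new ideas are needed beyond applying Claim \ref{clm:technicalaboutalpha2} to the projected system.
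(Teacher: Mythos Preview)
Your proposal is correct and follows essentially the same route as the paper's proof: verify via Lemma~\ref{lem:dosubspace} (and Lemma~\ref{lem:dotmu}) that the projected centers $\{\Pi\mu_p\}_{\mu_p\notin S}$ satisfy the hypotheses of Claim~\ref{clm:technicalaboutalpha2}, apply that claim with $\theta=0.9$ to extract a unit direction $\tilde\alpha\in L$ along which $\Pi f_x$ has correlation $\approx 1.8$ while every $\Pi\mu_p$ has correlation $\lesssim 1$ (in units of $\|v_i\|\|v_j\|/\sqrt{\|v_i\|^2+\|v_j\|^2}$), subtract, and finish with Cauchy--Schwarz and the elementary bound $\|v_i\|\|v_j\|/\sqrt{\|v_i\|^2+\|v_j\|^2}\geq \min(\|v_i\|,\|v_j\|)/\sqrt{2}$. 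Your case analysis on whether $\mu_x\in S$ is slightly more explicit than the paper's (which simply uses the uniform upper bound for all $p$, noting that $\Pi\mu_x=0$ when $\mu_x\in S$ is trivially covered), but the argument is otherwise identical.
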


\begin{proof}
Let $x \in V$ be such that $ \rdp{\Pi f_x, \Pi\mu_i} \geq 0.9 \| \Pi\mu_i \|_2^2$ and $ \rdp{\Pi f_x, \Pi\mu_j} \geq 0.9 \| \Pi\mu_j \|_2^2$. 
Note that by Lemma~\ref{lem:dosubspace} set $\{\Pi \mu_1, \dots, \Pi \mu_k \}$ satisfies assumptions of Lemma~\ref{clm:technicalaboutalpha2}. So applying Lemma~\ref{clm:technicalaboutalpha2} for $\theta = 0.9$ we get that there exists $\alpha \in \text{span}\{\Pi\mu_i, \Pi\mu_j \}, \rn{\alpha} = 1$ such that:
\begin{itemize}
    \item $\rdp{\alpha, f_x} = \rdp{\alpha, \Pi f_x} \geq (1.8 - O(\frac{\sqrt{\e}}{\varphi})) \cdot \frac{\|\Pi\mu_i\| \cdot \|\Pi\mu_j\|}{\sqrt{\|\Pi\mu_i\|^2+\|\Pi\mu_j\|^2}} $
    \item $\rdp{\alpha, \Pi \mu_p} \leq (1 + O(\frac{\sqrt{\e}}{\varphi} )) \cdot \frac{\|\Pi\mu_i\| \cdot \|\Pi\mu_j\|}{\sqrt{\|\Pi\mu_i\|^2+\|\Pi\mu_j\|^2}} $, for all $p \in [k]$
\end{itemize}
Thus we get 
\begin{align}
\rn{\Pi f_x - \Pi \mu_x}
&\geq |\rdp{\alpha, \Pi f_x} - \rdp{\alpha, \Pi \mu_x} |  \nonumber \\
&\geq \left(0.8 - O \left(\frac{\sqrt{\e}}{\varphi} \right) \right) \cdot \frac{\|\Pi\mu_i\| \cdot \|\Pi\mu_j\|}{\sqrt{\|\Pi\mu_i\|^2+\|\Pi\mu_j\|^2}} \nonumber \\
&\geq 0.75 \cdot \frac{\|\Pi\mu_i\| \cdot \|\Pi\mu_j\|}{\sqrt{\|\Pi\mu_i\|^2+\|\Pi\mu_j\|^2}} && \text{By assumption that }   \frac{\e}{\varphi^2} \text{ small} \label{eq:lemma34step1} 
\end{align}

without loss of generality we can assume $\rn{\Pi\mu_i} \geq \rn{\Pi\mu_j}$. Then we get:

\begin{align}
\frac{\|\Pi\mu_i\| \cdot \|\Pi\mu_j\|}{\sqrt{\|\Pi\mu_i\|^2+\|\Pi\mu_j\|^2}}
&= \frac{ \|\Pi\mu_j\|}{\sqrt{1+\|\Pi\mu_j\|^2/\rn{\Pi\mu_i}^2}} \nonumber \\
&\geq \frac{1}{\sqrt{2}} \rn{\Pi\mu_j} \nonumber \\
&\geq \frac{1}{2 \sqrt{\max_{p \in [k]} |C_p|}} && \text{Lemma~\ref{lem:dosubspace}, assumption that }   \frac{\e}{\varphi^2} \text{ small} \label{eq:lemma34step2}
\end{align}
Combining \eqref{eq:lemma34step1} and \eqref{eq:lemma34step2} we get:
$$\rn{\Pi f_x - \Pi \mu_x} \geq 0.3 \cdot  \frac{1}{\sqrt{\max_{p \in [k]} |C_p|}} $$

\end{proof}

Combining Lemma~\ref{lem:smallinsubspace} and Lemma~\ref{lem:farfromcenter} we show that sets $\Cr{\Pi\mu,0.9}$'s don't overlap much.

\begin{lemma}\label{lem:pointsoutside}
Let $k \geq 2$, $\varphi \in (0,1)$ and $\frac{\e}{\varphi^2}$ be smaller than a sufficiently small constant.
Let $G=(V,E)$ be a $d$-regular graph that admits a $(k,\varphi,\epsilon)$-clustering $\{ C_1, \dots, C_k \}$.
If $\mu_i$'s are cluster means then the following conditions hold. For all $S \subset \{\mu_1, \dots, \mu_k\}$ if $L := span(S)^\perp$, $dim(L) = b$ and $\Pi$ is projection on $L$ then:
$$\left| \bigcup_{\substack{\mu, \mu' \in \{\mu_1, \dots, \mu_k\} \setminus S\\\mu\neq \mu'}}
\Cr{\Pi \mu, 0.9} \cap \Cr{\Pi \mu', 0.9} \right|  \leq O \left( b \cdot  \frac{\e}{\varphi^2} \right) \cdot \frac{n}{k} \text{.}$$
\end{lemma}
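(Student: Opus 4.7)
The plan is a direct averaging argument combining the pointwise lower bound from Lemma~\ref{lem:farfromcenter} with the global directional variance bound from Lemma~\ref{lem:smallinsubspace}. Let $I$ denote the union of pairwise intersections on the left-hand side, i.e.\
\[
I := \bigcup_{\substack{\mu,\mu'\in\{\mu_1,\ldots,\mu_k\}\setminus S\\ \mu\neq\mu'}} \Cr{\Pi\mu,0.9}\cap\Cr{\Pi\mu',0.9}.
\]
By the definition of $\Cr{\cdot,\cdot}$ (Definition~\ref{def:thresholdsets}) and since $\Pi$ is an orthogonal projection so that $\langle f_x,\Pi\mu\rangle=\langle\Pi f_x,\Pi\mu\rangle$, every $x\in I$ satisfies the hypothesis of Lemma~\ref{lem:farfromcenter} for some pair $\mu_i\neq\mu_j$ in $\{\mu_1,\ldots,\mu_k\}\setminus S$. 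Hence Lemma~\ref{lem:farfromcenter} gives
\[
\|\Pi f_x-\Pi\mu_x\|_2 \;\geq\; 0.3\,\sqrt{\tfrac{1}{\max_{p\in[k]}|C_p|}}\qquad\text{for every }x\in I.
\]

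Next I would invoke Proposition~\ref{lem:min_size}, which guarantees $\max_{p\in[k]}|C_p|=O(n/k)$ under the balanced-cluster assumption of Definition~\ref{def:clusterable}. Squaring the previous display, this yields the pointwise bound
\[
\|\Pi f_x-\Pi\mu_x\|_2^2 \;\geq\; \Omega\!\left(\tfrac{k}{n}\right)\qquad\text{for every }x\in I.
\]
On the other hand, Lemma~\ref{lem:smallinsubspace} applied to the subspace $L$ of dimension $b$ gives the global upper bound
\[
\sum_{x\in V}\|\Pi f_x-\Pi\mu_x\|_2^2 \;\leq\; O\!\left(b\cdot\tfrac{\e}{\varphi^2}\right).
\]

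To conclude, I would restrict the left-hand sum to $x\in I\subseteq V$ and combine with the pointwise lower bound:
\[
|I|\cdot \Omega\!\left(\tfrac{k}{n}\right) \;\leq\; \sum_{x\in I}\|\Pi f_x-\Pi\mu_x\|_2^2 \;\leq\; \sum_{x\in V}\|\Pi f_x-\Pi\mu_x\|_2^2 \;\leq\; O\!\left(b\cdot\tfrac{\e}{\varphi^2}\right),
\]
which rearranges to $|I|\leq O(b\cdot\e/\varphi^2)\cdot n/k$, as required. No step is really a genuine obstacle here since Lemma~\ref{lem:farfromcenter} and Lemma~\ref{lem:smallinsubspace} have already done the heavy lifting; the only care needed is to verify that $x\in I$ indeed falls under the hypothesis of Lemma~\ref{lem:farfromcenter} (which is immediate from the definition of $\Cr{\Pi\mu,0.9}$) and to track the balanced-size assumption to convert $\max_p |C_p|$ into $n/k$.
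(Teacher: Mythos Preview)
Your proposal is correct and follows essentially the same approach as the paper's proof: both combine the pointwise lower bound from Lemma~\ref{lem:farfromcenter} with the global upper bound from Lemma~\ref{lem:smallinsubspace} via an averaging argument, using the balanced-cluster assumption to convert $\max_p|C_p|$ into $n/k$.
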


\begin{proof}
Let $x \in V$ be such that $\rdp{\Pi f_x, \Pi\mu} \geq 0.9 \| \Pi\mu \|_2^2$ and $ \rdp{\Pi f_x, \Pi\mu'} \geq 0.9 \| \Pi\mu' \|_2^2$ for some $\mu, \mu' \in \{\mu_1, \dots, \mu_k\} \setminus S$. Then by Lemma~\ref{lem:farfromcenter} we get that 
\begin{equation}\label{eq:farfromcenter}
\|\Pi f_x - \Pi \mu_x \| \geq 0.3 \sqrt{\frac{1}{\max_{p \in [k]} |C_p|}} \text{.}
\end{equation}
On the other hand Lemma~\ref{lem:smallinsubspace} guarantees:
\begin{equation}\label{eq:varianceinsubspace}
\sum_{x \in V} \|\Pi f_x - \Pi \mu_x \|_2^2 \leq O \left( dim(L) \cdot \frac{\e}{\varphi^2} \right)
\end{equation}
Combining \eqref{eq:farfromcenter}, \eqref{eq:varianceinsubspace} and the fact that $\frac{\max_{p \in [k]} |C_p|}{\min_{p \in [k]} |C_p|} = O(1)$ we get
$$\left| \bigcup_{\mu, \mu' \in \{\mu_1, \dots, \mu_k\} \setminus S}
\Cr{\Pi \mu, 0.93} \cap \Cr{\Pi \mu', 0.93} \right|  \leq O\left(  b \cdot \frac{\e}{\varphi^2} \right) \cdot \frac{n}{k} $$ 

\end{proof}

Our bounds above enable the following analysis. At every stage of the for loop from line~\ref{testcentersforloop} of Algorithm~\ref{alg:testmus} at least half of the candidate clusters:
$$\mathcal{C}_i := \bigcup_{\wh{\mu} \in S} \{ \wh{C}_{\wh{\mu}}^{(T_1, \dots, T_{i-1})} \} \text{,}$$
passes the test from line~\ref{testcenterscondactancetest} of Algorithm~\ref{alg:testmus}, which means that they have small outer-conductance and satisfy condition \eqref{eq:sec3outerconductancegoal}.

\begin{lemma}\label{lem:induction}
Let $k \geq 2$, $\varphi \in (0,1)$ and $\frac{\e}{\varphi^2}  \cdot \log(k)  $ be smaller than a sufficiently small constant.
Let $G=(V,E)$ be a $d$-regular graph that admits a $(k,\varphi,\epsilon)$-clustering $\{ C_1, \dots, C_k \}$.

If \textsc{ComputeOrderedPartition}($G,\widehat{\mu}_1,\widehat{\mu}_2, \dots, \widehat{\mu}_k,s_1,s_2)$ is invoked with $(\hat{\mu}_1, \dots, \hat{\mu}_k) = (\mu_1, \dots, \mu_k)$ and we assume that all tests Algorithm~\ref{alg:testmus} performs $\left( \textit{i.e.} \adp{f_x, \wh{\Pi}\wh{\mu}} \stackrel{?}{\geq} 0.93 \an{\wh{\Pi}\wh{\mu}}^2 \right)$ are exact and \textsc{OuterConductance} computes outer-conductance precisely then there exists an absolute constant $\Upsilon$ such that the following conditions hold.

For any $i \in [0..\log(k)]$ assume that at the beginning of the $i$-th iteration of the for loop from line~\ref{testcentersforloop} of Algorithm~\ref{alg:testmus} $|S| = b$ and, up to renaming of $\mu$'s, $S = \{\mu_1,\dots, \mu_b\}$, the corresponding clusters are $\mathcal{C} = \{C_1, \dots, C_b \}$ respectively and the ordered partial partition of $\mu$'s is equal to $(T_1, \dots, T_{i-1})$. 
Then if for every $C \in \mathcal{C}$ we have that $|V^{(T_1, \dots, T_{i-1})} \cap C| \geq \left(1 - \Upsilon \cdot i \cdot \frac{\e}{\varphi^2} \right) |C|$ then at the beginning of $(i+1)$-th iteration:
\begin{enumerate}
    \item $|S| \leq b/2$ (that is at least half of the remaining cluster means were removed in $i$-th iteration), 
    \item\label{cond:second1} for every $\mu \in S$ the corresponding cluster $C$ satisfies $|V^{(T_1, \dots, T_{i})} \cap C| \geq \left(1- \Upsilon \cdot (i+1) \cdot \frac{\e}{\varphi^2} \right) |C| $, where $(T_1, \dots, T_{i})$ is the ordered partial partition of $\mu$'s created in the first $i$ iterations. 
\end{enumerate}
\end{lemma}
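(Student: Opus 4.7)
The plan is to bound, for each $\mu \in S$, the symmetric difference $|C_\mu \triangle \wh{C}_\mu^{(T_1,\dots,T_{i-1})}|$, sum these over $\mu \in S$, and conclude by averaging that at least half of the candidate clusters have symmetric difference (and hence outer conductance) small enough to be added to $T_i$, yielding condition~1; condition~2 will then follow from a shorter, separate accounting.

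\emph{Per-cluster decomposition.} For each $\mu \in S$ let $C = C_\mu$ and let $\Pi$ be the orthogonal projection onto $\mathrm{span}\bigl(\bigcup_{j<i} T_j\bigr)^{\perp}$. A vertex $x \in C$ fails to lie in $\wh{C}_\mu^{(T_1,\dots,T_{i-1})}$ only if (a) $x \notin V^{(T_1,\dots,T_{i-1})}$, (b) $x \notin C_{\Pi\mu,0.93}$, or (c) $x \in C_{\Pi\mu',0.93}$ for some $\mu' \in S \setminus \{\mu\}$. The inductive hypothesis bounds (a) by $\Upsilon i \tfrac{\epsilon}{\varphi^2}|C|$, and Lemma~\ref{lem:mostinset} (applied with its threshold $0.96 > 0.93$) bounds (b) by $\tfrac{10^4\epsilon}{\varphi^2}|C|$. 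For case~(c), if additionally $x \in C_{\Pi\mu,0.93}$ then $x$ lies in the multi-claim set $M := \bigcup_{\mu_1 \neq \mu_2 \in S} C_{\Pi\mu_1,0.93} \cap C_{\Pi\mu_2,0.93}$, while otherwise it is already accounted for in~(b). Symmetrically, each $x \in \wh{C}_\mu^{(T_1,\dots,T_{i-1})} \setminus C$ lies in $C_{\Pi\mu,0.93}\setminus C \subseteq C_{\Pi\mu,0.9}\setminus C$, which Lemma~\ref{lem:notalostfromoutside} bounds by $\tfrac{100\epsilon}{\varphi^2}|C|$.

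\emph{Summation and averaging.} Summing over $\mu \in S$, each vertex of $M$ is counted at most once (only for $\mu = \mu_x$), so the total multi-claim contribution is at most $|M|$. Applying Lemma~\ref{lem:pointsoutside} to the $b$-dimensional subspace $\mathrm{span}\bigl(\bigcup_{j<i} T_j\bigr)^{\perp}$ (the dimension is $b$ by near-orthogonality of the $\mu_i$'s, which follows from Lemma~\ref{lem:dosubspace}), this is $O\!\bigl(b\tfrac{\epsilon}{\varphi^2}\bigr)\cdot n/k = O\!\bigl(\tfrac{\epsilon}{\varphi^2}\bigr)\cdot\sum_\mu |C_\mu|$ since all cluster sizes are $\Theta(n/k)$. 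Collecting the bounds yields
\[
\sum_{\mu \in S} \bigl|C_\mu \triangle \wh{C}_\mu^{(T_1,\dots,T_{i-1})}\bigr| \;\leq\; (\Upsilon i + K_0)\,\tfrac{\epsilon}{\varphi^2}\sum_{\mu \in S}|C_\mu|
\]
for a universal constant $K_0$. Averaging then shows at least $b/2$ of the $\mu \in S$ satisfy $|C_\mu \triangle \wh{C}_\mu| \leq O\!\bigl((\Upsilon i + K_0)\tfrac{\epsilon}{\varphi^2}\bigr)|C_\mu|$. The standard cut estimate $\phi(\wh{C})\leq 2|C\triangle\wh{C}|/|C| + 2\epsilon$, together with $i \leq \log k$, gives $\phi(\wh{C}_\mu)\leq O\!\bigl(\tfrac{\epsilon}{\varphi^2}\log k\bigr)$, so by choosing the hidden constant in the conductance threshold at line~\ref{testcenterscondactancetest} sufficiently large, each such $\mu$ passes the test and is added to $T_i$, proving condition~1.

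For condition~2, any $\mu$ that remains in $S$ after iteration~$i$ loses vertices from $V^{(T_1,\dots,T_{i-1})}\cap C_\mu$ only to some $\wh{C}_{\mu'}^{(T_1,\dots,T_{i-1})}$ with $\mu'\in T_i$; by the definition of \textsc{IsInside}, any such vertex must lie outside $C_{\Pi\mu,0.93}$, so Lemma~\ref{lem:mostinset} bounds the loss by $\tfrac{10^4\epsilon}{\varphi^2}|C_\mu|$. Combined with the inductive hypothesis, $|V^{(T_1,\dots,T_i)}\cap C_\mu| \geq (1-(\Upsilon i + 10^4)\tfrac{\epsilon}{\varphi^2})|C_\mu| \geq (1-\Upsilon(i+1)\tfrac{\epsilon}{\varphi^2})|C_\mu|$ whenever $\Upsilon \geq 10^4$, so taking $\Upsilon$ a sufficiently large absolute constant closes the induction. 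The main obstacle is precisely the overlap bound: naively summing per-cluster contributions of pairwise intersections $C_{\Pi\mu,0.93}\cap C_{\Pi\mu',0.93}$ produces a factor of $b$ and would force the per-stage error to grow multiplicatively in $b$; it is precisely the fact that Lemma~\ref{lem:pointsoutside} is \emph{linear} (rather than quadratic) in the subspace dimension $b$ that lets the averaging step introduce only a $b$-independent additive increase in the error per stage, which is what makes the overall $O(\epsilon \log(k)/\varphi^2)$ conductance guarantee possible.
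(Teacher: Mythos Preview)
Your proposal is correct and follows essentially the same route as the paper's proof: decompose the symmetric difference into the same pieces (vertices already carved out, vertices failing the $\Pi\mu$-threshold via Lemma~\ref{lem:mostinset}, vertices doubly claimed via Lemma~\ref{lem:pointsoutside}, and vertices outside $C$ in the candidate via Lemma~\ref{lem:notalostfromoutside}), then average. The only organizational difference is that the paper applies Markov solely to the overlap set $I$ (your $M$) and keeps the per-cluster terms~(a),\,(b),\,(d) separate, whereas you aggregate everything before averaging; this is harmless since cluster sizes are comparable, and your argument for condition~2 is identical to the paper's.
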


\begin{proof}
Let $i \in [0..\log(k)]$, without loss of generality we can assume that $S = \{\mu_1,\dots, \mu_b\}$ (if not we can rename the $\mu$'s) at the beginning of the $i$-th iteration and the corresponding clusters be $\mathcal{C} = \{C_1, \dots, C_b\}$ respectively. Assume that for every $C \in \mathcal{C}$ we have that $|V^{(T_1, \dots, T_{i-1})} \cap C| \geq \left(1 - \Upsilon \cdot i \cdot \frac{\e}{\varphi^2} \right) |C|$. We start by showing the first part of the Lemma.

\textbf{At least half of the cluster means is removed from $S$:}

Let $\mu \in S$, $\Pi_i$ be the orthogonal projection onto the $\text{span} (\bigcup_{j<i} T_j)^{\perp}$, where $(T_1, \dots, T_{i-1})$ is the ordered partial partition of $\{\mu_1, \dots, \mu_k \}$ created before iteration $i$ by \textsc{ComputeOrderedPartition}. For brevity we will refer to $(T_1, \dots, T_{i-1})$ as $P$ in this proof. Let 
$$ I :=  \bigcup_{\mu', \mu'' \in \{\mu_1, \dots, \mu_b\} } \Cr{\Pi_i \mu', 0.93} \cap \Cr{\Pi_i \mu'', 0.93} \text{.}$$

\noindent By Lemma~\ref{lem:pointsoutside} we have that 
$$\left| I \right|  \leq O \left( b \cdot \frac{\e}{\varphi^2} \right)  \cdot \frac{n}{k} $$
So by Markov inequality we get that there exists a subset of clusters $\mathcal{R}  \subseteq \mathcal{C}$ such that $|\mathcal{R}| \geq b/2$ and for every $C \in \mathcal{R}$ we have that 
\begin{equation}\label{eq:smallnumboutsideeasy}
|C \cap I| \leq 2 \cdot O \left( \frac{\e}{\varphi^2} \right) \cdot \frac{n}{k}
\end{equation}
We will argue that for any order of the for loop from line~\ref{testcentersforloop} of Algorithm~\ref{alg:testmus} it is true that for every $C \in \mathcal{R}$ with corresponding mean $\mu$ the 
candidate cluster $\wh{C}_{\mu}^P$ satisfies the if statement from line \ref{testcenterscondactancetest}.

First note that behavior of the algorithm is independent of the order of the for loop from line~\ref{testcentersforloop} of Algorithm~\ref{alg:testmus}  as by definition $\wh{C}_{\mu}^P$'s for $\mu \in S$ are pairwise disjoint. Now let $C \in \mathcal{R}$, $\mu$ be the corresponding mean to $C$ and $\wh{C}_{\mu}^P$ be the candidate cluster corresponding to $\mu$ with respect to $P = (T_1, \dots, T_{i-1})$. By inductive assumption $|V^P \cap C| \geq \left(1 - \Upsilon \cdot i \cdot \frac{\e}{\varphi^2} \right) |C|$ so by \eqref{eq:smallnumboutsideeasy}, Lemma~\ref{lem:mostinset} and the fact that $\frac{\max_{p \in [k]} |C_p|}{\min_{p \in [k]} |C_p|} = O(1)$ we get that:  
\begin{align}
|\wh{C}_{\mu}^P \cap C| 
&\geq 
\left(1 - \Upsilon \cdot i \cdot \frac{\e}{\varphi^2} \right) |C| - O \left( \frac{\e}{\varphi^2}  \right) \frac{n}{k} - O \left( \frac{\e}{\varphi^2}  \right) |C|  \nonumber \\
&\geq 
\left(1 - O \left(\frac{\e}{\varphi^2}  \cdot \log(k)  \right)\right)|C| \label{eq:regioncontainslotofclustereasy}
\end{align}
To prove that $\widehat{C}_{\mu}^P$ passes the outer-conductance test we also  need to show that $\widehat{C}_{\mu}^P$ doesn't contain a lot of points from $V^{P} \setminus C$. By Lemma~\ref{lem:notalostfromoutside} we get that:
\begin{equation}\label{eq:regioncontainsfewfromoutside}
|\widehat{C}_{\mu}^P \cap (V^P \setminus C) | \leq  |\widehat{C}_{\mu}^P \cap (V \setminus C) | \leq O \left( \frac{\e}{\varphi^2} \right)  |C|  \text{.}
\end{equation}
Combining \eqref{eq:regioncontainsfewfromoutside} and \eqref{eq:regioncontainslotofclustereasy} we get that:
\begin{equation}\label{eq:smallsymdiffeasy}
    |\widehat{C}_{\mu}^P \triangle C| \leq O \left(\frac{\e}{\varphi^2}  \cdot \log(k) \right) |C|
\end{equation}
Now we want to argue that $\widehat{C}_{\mu}^P$ passes the outerconductance test from line \ref{testcenterscondactancetest} of Algorithm~\ref{alg:testmus}. From the definition of outer conductance:
\begin{align*}
\phi(\widehat{C}_{\mu}^P) 
&\leq 
\frac{E(C,V \setminus C) + d|\widehat{C}_{\mu}^P \triangle C|}{d(|C| - |\widehat{C}_{\mu}^P \triangle C|)} \\
&\leq 
\frac{E(C,V \setminus C) + d \cdot O \left(\frac{\e}{\varphi^2}  \cdot \log(k) \right)|C|}{d(|C| - O \left(\frac{\e}{\varphi^2}  \cdot \log(k) \right) |C|)} && \text{from \eqref{eq:smallsymdiffeasy}} \\
&\leq 
\frac{O \left(\frac{\e}{\varphi^2} \right) +O \left(\frac{\e}{\varphi^2}  \cdot \log(k) \right)}{1-O \left(\frac{\e}{\varphi^2}  \cdot \log(k) \right)} && \text{because } \frac{E(C,V \setminus C)}{d|C|} \leq O \left(\frac{\e}{\varphi^2} \right)\\
&\leq 
O \left(\frac{\e}{\varphi^2}  \cdot \log(k) \right) && \text{for sufficiently small } \frac{\e}{\varphi^2}  \cdot \log(k)  
\end{align*}

 and it follows that
  $$\phi(\widehat{C}_{\mu}^P) \leq O \left(\frac{\e}{\varphi^2}  \cdot \log(k) \right) \text{,}$$ 
which means that $\widehat{C}_{\mu}^P$ passes the test as we assumed that \textsc{OuterConductance} computes outer-conductance precisely.

\textbf{Clusters corresponding to unremoved $\mu$'s satisfy condition \ref{cond:second1}:}

\noindent
Now we prove that for every $\mu$ that was not removed from set $S$ only small fraction of its corresponding cluster is removed. 

Let $\mu \in S$ be such that it is not removed in the $i$-th step. Let $\Pi_i$ be the orthogonal projection onto the $\text{span} (\bigcup_{j<i} T_j)^{\perp}$. Let $C \in \mathcal{C}$ be the cluster corresponding to $\mu$. By assumption $|V^P \cap C| \geq \left(1- \Upsilon \cdot i \cdot  \frac{\e}{\varphi^2}  \right) |C| $. Now let $x \in V^{(T_1, \dots, T_{i-1})} \setminus V^{(T_1, \dots, T_{i})}$, where $(T_1, \dots, T_{i})$ is the partial partition of $\mu$'s created in the first $i$-th steps of the for loop. We get that there exists $\mu' \in \{\mu_1, \dots, \mu_b \}$ such that $x \in \wh{C}_{\mu'}^P$ (recall that $\wh{C}_{\mu'}^P$ is the candidate cluster corresponding to $\mu'$ with respect to $P = (T_1, \dots, T_{i-1})$). Recall (Definition~\ref{def:candidateclusters}) that $\wh{C}_{\mu'}^P$ is defined as:
$$\wh{C}_{\mu'}^P = \left\{ x \in V : \textsc{IsInside} \left(x,\mu',P, \{\mu_1, \dots, \mu_k \} \setminus \bigcup_{j \in [i-1]} T_j \right) = \textsc{True} \right\} \text{.}$$
This in particular means (see line~\ref{ln:dot-x-mu}: of Algorithm \textsc{IsInside}) that:
$$\wh{C}_{\mu'}^P \subseteq \Cr{\Pi_i \mu',0.93} \setminus \bigcup_{\mu''\in  S\setminus \{\mu'\}} \Cr{\Pi_i \mu'',0.93}  ,$$
which, as $\mu \in S \setminus \{ \mu' \}$, gives us that:
$$\wh{C}_{\mu'}^P \cap  \Cr{\Pi_i \mu,0.93} = \emptyset \text{,}$$
and finally, using Definition~\ref{def:thresholdsets}, we have:
\begin{equation}\label{eq:propertyofreturnedeasy1}
\rdp{f_x, \Pi_i \mu} < 0.93 \rn{\Pi_i \mu}^2\text{.}
\end{equation}

\noindent
But by Lemma~\ref{lem:mostinset}:
\begin{equation}\label{eq:smalloutsideeasy1}
|\{x \in C :\rdp{\Pi_i f_x, \Pi_i \mu} < 0.93 \| \Pi_i \mu \|_2^2 \}| \leq O \left( \frac{\e}{\varphi^2} \right) \cdot |C| 
\end{equation}
Combining \eqref{eq:propertyofreturnedeasy1} and \eqref{eq:smalloutsideeasy1} we get that: 
\begin{equation}\label{eq:indsecondtolast}
|C \cap (V^{(T_1, \dots, T_{i-1})} \setminus V^{(T_1, \dots, T_{i})})| \leq O \left( \frac{\e}{\varphi^2} \right) |C| \text{.}
\end{equation}
By assumption that $|V^{(T_1, \dots, T_{i-1})} \cap C| \geq \left(1 - \Upsilon \cdot i \cdot  \frac{\e}{\varphi^2} \right) |C| $ and \eqref{eq:indsecondtolast} we get that: 
$$|V^{(T_1, \dots, T_{i})} \cap C| \geq \left(1 - \Upsilon \cdot  (i+1) \cdot \frac{\e}{\varphi^2 } \right) |C| \text{,}$$
provided that $\Upsilon$ is bigger than the constant from  $O$ notation in \eqref{eq:indsecondtolast}, which is the same constant as the one in the statement of Lemma~\ref{lem:mostinset}.

\end{proof}

\begin{remark}\label{rem:firstphithensym}
Note that in this section we assume that the Algorithm has access to real centers $\{\mu_1, \dots, \mu_k \}$. If it was the case in the final algorithm we could in fact prove a stronger guarantee, i.e. "Algorithm~\ref{alg:testmus} returns \textsc{True} and an ordered partition $(T_1, \dots, T_b)$ (of $\{\mu_1, \dots, \mu_k\}$) that induces a collection of pairwise disjoint clusters $\{\wh{C}_{\mu_1}, \dots, \wh{C}_{\mu_k}\}$ such that there exists a permutation $\pi$ such that for all $i \in [k]$:
$$ \left|\wh{C}_{\mu_{i}} \triangle C_{\pi(i)}\right| \leq O \left(\frac{\e}{\varphi^2}  \cdot \log(k) \right)|C_{\pi(i)}| \text{\text{".}}$$
Compare the above statement with with \eqref{eq:sec3maingoal} and the main theorem of this section, Theorem~\ref{lem:realcenterswork}. The reason we present it this way is the following. 

The final algorithm doesn't have access to $\mu$'s but instead tests many candidate sets $\{\wh{\mu}_1, \dots, \wh{\mu}_k\}$. Moreover Algorithm~\ref{alg:testmus} returns an ordered partition $(T_1, \dots, T_b)$ that induces a collection of clusters $\{ \wh{C}_1, \dots, \wh{C}_k \}$ whenever every set from this collection passes the test from line~\ref{testcenterscondactancetest} of \textsc{ComputeOrderedPartition}, that is when for every $\wh{C} \in \{ \wh{C}_1, \dots, \wh{C}_k \}$:
$$\phi \left( \wh{C} \right) \leq O \left(\frac{\e}{\varphi^2}  \cdot \log(k) \right) \text{.}$$
This in particular means that Algorithm~\ref{alg:testmus} may return \textsc{True} even for a set $ \{\wh{\mu}_1, \dots, \wh{\mu}_k\}$ that is \textbf{not} a good approximation to $\{\mu_1, \dots, \mu_k \}$. 

Because of that, once we know that \textsc{ComputeOrderedPartition} invoked with $\{\mu_1, \dots, \mu_k \}$ returns an ordered partition $(T_1, \dots, T_b)$ that induces a collection of clusters $\{ \wh{C}_1, \dots, \wh{C}_k \}$, when proving the final result of this section (Theorem~\ref{lem:realcenterswork}) the only thing we assume about $\wh{C}$'s is that they passed the outer-conductance test. And that is why we use Lemma~\ref{lem:howtocluster} and we "loose" a factor $\frac{1}{\varphi}$ in the final guarantee.

Moreover structuring the argument in this way helps the presentation as later, in Section~\ref{sec:h_works}, the proof will follow a similar structure.
\end{remark}

The following Theorem concludes this subsection by showing \eqref{eq:sec3maingoal}. It does so by induction using Lemma~\ref{lem:induction} as an inductive step. At the end it uses Lemma~\ref{lem:howtocluster} to go from the guarantees for outer-conductance to guarantees for recovery.

\begin{theorem}\label{lem:realcenterswork}
Let $k \geq 2$, $\varphi \in (0,1)$ and $\frac{\e}{\varphi^2}  \log(k) $ be smaller than a sufficiently small constant.
Let $G=(V,E)$ be a $d$-regular graph that admits a $(k,\varphi,\epsilon)$-clustering $\{C_1, \dots, C_k \}$.

If \textsc{ComputeOrderedPartition}($G,\widehat{\mu}_1,\widehat{\mu}_2, \dots, \widehat{\mu}_k,s_1,s_2)$ is invoked with $(\hat{\mu}_1, \dots, \hat{\mu}_k) = (\mu_1, \dots, \mu_k)$ and we assume that all tests Algorithm~\ref{alg:testmus} performs $\left( \textit{i.e.} \adp{f_x, \wh{\Pi}\wh{\mu}} \stackrel{?}{\geq} 0.93 \an{\wh{\Pi}\wh{\mu}}^2 \right)$ are exact and \textsc{OuterConductance} computes outer-conductance precisely then the following conditions hold. 

\textsc{ComputeOrderedPartition} returns $(\textsc{True}, (T_1 ,\dots, T_b))$ such that $(T_1, \dots, T_b)$ induces a collection of clusters $\{\wh{C}_{\mu_1}, \dots, \wh{C}_{\mu_k}\}$ such that there exists a permutation $\pi$ on $k$ elements such that for all $i \in [k]$:
$$\left|\wh{C}_{\mu_{i}} \triangle C_{\pi(i)}\right| \leq O \left(\frac{\e}{\varphi^3}  \cdot \log(k) \right)|C_{\pi(i)}|$$ and 
 $$\phi(\wh{C}_{\mu_i}) \leq O \left(\frac{\e}{\varphi^2}  \cdot \log(k) \right) \text{.}$$
\end{theorem}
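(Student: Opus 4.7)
The plan is to argue by induction on the iteration counter $i$ of the main for-loop of \textsc{ComputeOrderedPartition}, using Lemma~\ref{lem:induction} as the inductive step, and then convert the resulting outer-conductance guarantees into symmetric-difference guarantees via Lemma~\ref{lem:howtocluster}.

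First I would set up the induction. At the start of iteration $i=1$ we have $S = \{\mu_1,\ldots,\mu_k\}$, the ordered partial partition is empty, and $V^{()} = V$, so the hypothesis $|V^{()}\cap C_j| \geq (1 - \Upsilon \cdot 0 \cdot \tfrac{\e}{\varphi^2})|C_j| = |C_j|$ of Lemma~\ref{lem:induction} holds vacuously for every cluster. Suppose inductively that at the beginning of iteration $i \leq \lceil \log k\rceil$ the set $S$ of remaining means has size $b_i$, the current partial partition is $(T_1,\ldots,T_{i-1})$, and $|V^{(T_1,\ldots,T_{i-1})} \cap C| \geq (1 - \Upsilon\cdot i \cdot \tfrac{\e}{\varphi^2})|C|$ for every cluster $C$ whose mean is still in $S$. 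Lemma~\ref{lem:induction} then yields $b_{i+1} \leq b_i/2$ and that the inductive hypothesis is preserved at iteration $i+1$ with the constant $\Upsilon\cdot(i+1)\cdot\tfrac{\e}{\varphi^2}$, which remains smaller than an absolute constant since $i \leq \log k$ and $\tfrac{\e}{\varphi^2}\log k$ is assumed sufficiently small.

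Because $b_i$ halves at every iteration, after at most $\lceil \log k\rceil$ iterations we have $S = \emptyset$ and the algorithm reaches line~\ref{ln:testcenterstruereturn} and returns $(\textsc{True},(T_1,\ldots,T_b))$ with $b \leq \lceil \log k\rceil$. By Remark~\ref{rem:equivalenceofdef}, the returned ordered partition induces a collection of pairwise disjoint clusters $\{\wh C_{\mu_1},\ldots,\wh C_{\mu_k}\}$. Every mean $\mu$ was added to some $T_i$ only after passing the test on line~\ref{testcenterscondactancetest}, which (since by assumption \textsc{OuterConductance} is exact) implies
\[
\phi(\wh C_{\mu}) \leq O\!\left(\tfrac{\e}{\varphi^2}\log k\right)
\]
for every $\mu \in \{\mu_1,\ldots,\mu_k\}$, establishing the second claimed bound.

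Finally, with $k$ pairwise disjoint sets $\wh C_{\mu_1},\ldots,\wh C_{\mu_k}$ all having outer conductance $O(\tfrac{\e}{\varphi^2}\log k)$, I invoke Lemma~\ref{lem:howtocluster} (whose hypothesis $\tfrac{\e}{\varphi^3}$ sufficiently small is given), which produces a permutation $\pi$ on $[k]$ such that
\[
\bigl|\wh C_{\mu_i} \triangle C_{\pi(i)}\bigr| \leq O\!\left(\tfrac{\e}{\varphi^3}\log k\right)|C_{\pi(i)}|
\]
for every $i\in[k]$, completing the proof. I do not expect any serious obstacle here: the entire work has been offloaded to Lemma~\ref{lem:induction} (the geometric/averaging heart of the argument) and Lemma~\ref{lem:howtocluster} (the robustness property turning conductance bounds into symmetric-difference bounds), and the only thing left is to check that the inductive invariants match up and that $O(\log k)$ halvings suffice to empty $S$, as discussed in Remark~\ref{rem:firstphithensym}.
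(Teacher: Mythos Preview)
Your proposal is correct and follows essentially the same approach as the paper's own proof: induction via Lemma~\ref{lem:induction} to establish that $|S|$ halves each iteration and the invariant on $|V^{(T_1,\ldots,T_{i-1})}\cap C|$ is maintained, termination in $\lceil\log k\rceil$ rounds, the observation that every returned cluster passed the exact conductance test on line~\ref{testcenterscondactancetest}, and then Lemma~\ref{lem:howtocluster} to convert conductance bounds into symmetric-difference bounds. There is nothing to add.
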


\begin{proof}
Note that for $i=0$ in the for loop in line~\ref{ln:testcentersmainloop} of \textsc{ComputeOrderedPartition} $S$ and clusters $\{C_1, \dots, C_k \}$ trivially satisfy assumptions of Lemma~\ref{lem:induction}. So using Lemma~\ref{lem:induction} and induction we get that for every $i \in [0..\lceil \log(k) \rceil]$ at the beginning of the $i$-th iteration:
\begin{itemize}
    \item $|S| \leq k / 2^i$, 
    \item for every $\mu \in S$ and the corresponding cluster $C$ we have $ |V^{(T_1, \dots, T_{i-1})} \cap C| \geq \left(1 - \Upsilon \cdot i \cdot \frac{\e}{\varphi^2} \right)|C|$ (where $\Upsilon$ is the constant from the statement of Lemma~\ref{lem:induction}).
\end{itemize} 
In particular this means that after at most $\lceil \log(k) \rceil$ iterations set $S$ becomes empty. This also means that \textsc{ComputeOrderedPartition} returns in line~\ref{ln:testcenterstruereturn}, so it returns \textsc{True} and the ordered partial partition $(T_1, \dots, T_b)$ is in fact an ordered partition of $\{\mu_1, \dots, \mu_k\}$. 

Note that by definition (see Definition~\ref{def:implicitclustering}) all the approximate clusters $\{\wh{C}_{\mu_1}, \dots, \wh{C}_{\mu_k}\}$ are pairwise disjoint and moreover for every constructed cluster $\wh{C} \in \{\wh{C}_{\mu_1}, \dots, \wh{C}_{\mu_k}\}$ we have:
$$\phi(\widehat{C}) \leq O \left(\frac{\e}{\varphi^2}  \cdot \log(k) \right) \text{,}$$
as it passed the test in line~\ref{testcenterscondactancetest} of \textsc{ComputeOrderedPartition}. So by Lemma~\ref{lem:howtocluster} it means that there exists a permutation $\pi$ on $k$ elements such that for all $i \in [k]$:
$$\left|\wh{C}_{\mu_{i}} \triangle C_{\pi(i)}\right| \leq O \left(\frac{\e}{\varphi^3}  \cdot \log(k) \right)|C_{\pi(i)}| \text{.}$$
\end{proof}

\subsection{Finding the cluster means}\label{sec:findthecenters}

In the previous subsection we showed that \textsc{ComputeOrderedPartition} succeeds if we have access to real cluster centers (i.e. $\mu_i$'s). In this section we present a search procedure for finding the centers.

The main idea behind our algorithm is to guess the clustering assignment of few random nodes and use this assignment to compute the approximate cluster means. More precisely, the first step of our algorithm is to learn the spectral embedding as described in Section~\ref{sec:dot}. Then we sample $s=\Omega(\frac{\varphi^2}{\e} \cdot k^4 \log(k))$ random nodes and we consider all the possible clustering assignments for them. For each assignment, we implicitly define the cluster center for a specific cluster as $\widehat{\mu}_i := \frac{1}{|P_i|}\sum_{x \in P_i} f_x$.

\begin{remark}
We note that in \textsc{FindCenters} we don't necessarily find $\mu_1, \dots, \mu_k$ exactly but we are able to show (see Section~\ref{sec:approx-muis}) that it finds a good approximation to $\mu_i$'s. Then in Section~\ref{sec:h_works} we show that such approximation is sufficient for the partitioning scheme to work.
\end{remark}

\begin{algorithm}
\caption{\textsc{FindCenters}($G,\eta, \delta)$}\label{alg:findrepresentatives}
\begin{algorithmic}[1]
    \State $\textsc{InitializeOracle}(G,\delta)$
    \For {$t \in [1 \dots {\log(2/\eta)} ]$}	
    	\State $S := $ Random sample of vertices of $V$ of size $s=\varTheta(\frac{\varphi^2}{\e} k^4 \log(k))$ \label{ln:sampleS}
    	\For  {$(P_1,P_2,\dots,P_k) \in \textsc{Partitions}(S)$}
    	    \For {$i =1$ to $k$}
                \State $\widehat{\mu}_i := \frac{1}{|P_i|}\sum_{x \in P_i} f_x$\label{ln:emrical-mean} \Comment{Note that we compute the centers only implicitly.} \label{ln:emp-cent}
    	    \EndFor
    	    \State $(r, C) :=  $
    	    \State \textsc{ComputeOrderedPartition$\left(G,(\widehat{\mu}_1,\widehat{\mu}_2,\dots,\widehat{\mu}_k),\Theta\left( \frac{\varphi^2}{\e} k^5 \log^2(k) \log(1/\eta)\right), \Theta\left( \frac{\varphi^4}{\e^2} k^5 \log^2(k) \log(1/\eta)\right)\right)$} 
    	    \label{ln:testcent}
    	    \If{$r = $ \textsc{True}}
                \State \Return $C$	    
    	    \EndIf
    	\EndFor 
    \EndFor
\end{algorithmic}
\end{algorithm}

\subsubsection{Quality of cluster means approximation}\label{sec:approx-muis}

In the previous Section~\ref{sec:realcenterswork} we showed that the partitioning scheme works if we can find $\mu_1, \dots, \mu_k$ exactly. In this section we show that it is possible to estimate the cluster means with a small error factor (i.e $\mu_i \approx \widehat{\mu}_i$). Later in Section~\ref{sec:h_works} we show that such an approximation to $\mu_i$'s is enough for the partitioning scheme to work.

In the rest of this section we show that if $\textsc{Partitions}(S)$ (see Algorithm~\ref{alg:findrepresentatives}) computes a correct guess of cluster assignments then the cluster means computed in line \eqref{ln:emrical-mean} are close to the real cluster means with constant probability. Then we repeat the procedure $O(\log(1/\eta))$ times to achieve success probability of at least $1 - \eta$.

In particular, in Lemma~\ref{lem:app-mu-norm} we show using Matrix Bernstein that if we have enough samples in a cluster $i$ then $\|\mu_i-\widehat{\mu}_i\|_2\leq \zeta \cdot \|\mu_i\|_2$ . Then we prove that if we sample enough random nodes we have enough samples in every cluster.

Before proving Lemma~\ref{lem:app-mu-norm} we show a tail bound for the spectral projection of a node that will be useful to apply Matrix Bernstein.

\begin{lemma}
\label{lem:ap-mu-s}
Let $k \geq 2$, $\varphi \in (0,1)$ and $\frac{\e}{\varphi^2}  \log(k) $ be smaller than a sufficiently small constant. Let $G=(V,E)$ be a $d$-regular and a $(k,\varphi,\epsilon)$-clusterable graph. Let $\beta>1$ .Let  
\[T=\left\{x\in V: ||f_x||_\infty \geq \beta\cdot \sqrt{\frac{10}{\min_{i\in [k]} |C_i| }  } \right\}\text{.}\]
Then we have $|T|\leq k\cdot \left(\frac{\beta}{2}\right)^{- \varphi^2/20\cdot\epsilon}\cdot (\min_{i\in [k]}|C_i|)$.
\end{lemma}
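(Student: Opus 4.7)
The plan is to reduce this $\ell_\infty$-type statement about the whole spectral embedding to the tail bound on a single eigenvector established as Lemma~\ref{lem:tail_bound}, and then take a union bound over the $k$ bottom eigenvectors. First I would observe that $f_x = U_{[k]}^T \mathds{1}_x$, so if $u_1,\dots,u_k$ denote the columns of $U_{[k]}$ (the bottom $k$ eigenvectors of $L$), then $\|f_x\|_\infty = \max_{i \in [k]} |u_i(x)|$. Writing $s_{\min} = \min_{i \in [k]} |C_i|$ and defining, for each $i \in [k]$,
\[
T_i := \left\{x \in V : |u_i(x)| \geq \beta \cdot \sqrt{10/s_{\min}}\right\},
\]
we have $T \subseteq \bigcup_{i=1}^k T_i$.

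Next I would check that Lemma~\ref{lem:tail_bound} applies to each $u_i$. The assumption that $\frac{\e}{\varphi^2}\log k$ is small enough in particular implies $\e \leq \varphi^2/100$, so the hypothesis $\e \leq \varphi^2/100$ of Lemma~\ref{lem:tail_bound} is met. Moreover, by Lemma~\ref{lem:bnd-lambda}, every eigenvalue of $L$ corresponding to $u_1,\dots,u_k$ is at most $\lambda_k \leq 2\e$, so each $u_i$ is a unit eigenvector with eigenvalue at most $2\e$. Applying Lemma~\ref{lem:tail_bound} to $u_i$ yields
\[
|T_i| \leq n \cdot \left(\tfrac{\beta}{2}\right)^{-\varphi^2/(20\e)}.
\]

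Finally, I would take a union bound over $i \in [k]$ and convert from $n$ to $s_{\min}$ using Proposition~\ref{lem:min_size}, which gives $n = O(k \cdot s_{\min})$ since the cluster sizes differ by at most a constant factor. This produces
\[
|T| \leq \sum_{i=1}^k |T_i| \leq k \cdot n \cdot \left(\tfrac{\beta}{2}\right)^{-\varphi^2/(20\e)} \leq O(k) \cdot s_{\min} \cdot \left(\tfrac{\beta}{2}\right)^{-\varphi^2/(20\e)},
\]
which, absorbing the constant from Proposition~\ref{lem:min_size}, matches the claimed bound (up to the implicit constants used throughout the paper). I do not foresee any substantive obstacle: the argument is a routine reduction to previously established results. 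The only bookkeeping that needs care is the passage from the fractional tail bound (which is stated relative to $n$) to the absolute bound stated relative to $s_{\min}$, and verifying that the hypothesis on $\e/\varphi^2$ is indeed strong enough for Lemma~\ref{lem:tail_bound} to apply.
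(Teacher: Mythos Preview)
Your overall strategy---reduce to Lemma~\ref{lem:tail_bound} per eigenvector and take a union bound over $i\in[k]$---is exactly what the paper does. The gap is in the last step, where you pass from $n$ to $s_{\min}$. The statement of Lemma~\ref{lem:tail_bound} gives $|T_i|\le n\cdot(\beta/2)^{-\varphi^2/(20\e)}$, so summing over $i$ yields $|T|\le k\cdot n\cdot(\beta/2)^{-\varphi^2/(20\e)}$. Invoking Proposition~\ref{lem:min_size} then gives $n=O(k\cdot s_{\min})$, which produces $|T|\le O(k^2)\cdot s_{\min}\cdot(\beta/2)^{-\varphi^2/(20\e)}$, not $O(k)\cdot s_{\min}$ as you wrote. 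You have silently dropped a factor of $k$, and that factor is not a constant, so you do not recover the stated bound.

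The paper avoids this loss by using the sharper inequality that is actually established inside the proof of Lemma~\ref{lem:tail_bound}: the final displayed line there shows $|S^+\cup S^-|\le s_{\min}\cdot(\beta/2)^{-\varphi^2/(20\e)}$ before weakening $s_{\min}$ to $n$. Citing that stronger intermediate bound gives $|T_i|\le s_{\min}\cdot(\beta/2)^{-\varphi^2/(20\e)}$ directly, and then the union bound over $k$ eigenvectors produces exactly $k\cdot s_{\min}\cdot(\beta/2)^{-\varphi^2/(20\e)}$ with no detour through Proposition~\ref{lem:min_size}. So the fix is simply to appeal to the $s_{\min}$-based bound from the proof of Lemma~\ref{lem:tail_bound} rather than to its stated $n$-based conclusion.
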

\begin{proof}
Recall that $f_x=U_{[k]}^T \mathds{1}_x$, and $u_i$ denote the $i^{\text{th}}$ column of $U_{[k]}$. Thus we have $\|f_x\|_\infty =\max_{i\in [k]}\{u_i(x)\}$. Let $s_{\min}=\min_{i\in k}|C_i|$. We define  
\[T_i=\left \{x\in V: |u_i(x)|\geq \beta\cdot \sqrt{\frac{10}{s_{\min}}} \right \}\]
Therefore, by Lemma \ref{lem:tail_bound} we have $|T_i|\leq \left(\frac{\beta}{2}\right)^{-{\varphi^2/20\cdot\epsilon}} \cdot s_{\min} \text{.} $ Note that $T=\bigcup_{i=1}^k T_i $. Therefore we have
\[|T|\leq k\cdot \left(\frac{\beta}{2}\right)^{-{\varphi^2/20\cdot\epsilon}} \cdot s_{\min} \]
\end{proof}

Now we are ready to derive a bound on the difference between $\mu_i$ and $\widehat{\mu}_i$.
\begin{lemma}
\label{lem:app-mu-norm}
 Let $\zeta,\delta \in (0,1)$, $k \geq 2$, $\varphi \in (0,1)$, $\frac{\e \log k}{\varphi^2}$ be smaller than a positive sufficiently small constant. Let $G=(V,E)$ be a $d$-regular graph that admits a $(k,\varphi,\epsilon)$-clustering $C_1, \ldots, C_k$. Let $s\geq c\cdot\left(k\cdot \log \left(\frac{k}{\delta}\right) \cdot \left(\frac{1}{\delta}\right)^{(80\cdot\epsilon/\varphi^2)}\cdot\left(\frac{1}{\zeta}\right)^2 \right)^{1/(1-(80\cdot\epsilon/\varphi^2))}$ for large enough constant $c$. Let $S=\{x_1,x_2,\ldots, x_s\}$ be the multiset with $s$ vertices sampled uniformly at random from cluster $C$. Let $\mu=\frac{1}{|C|} \sum_{x\in C} f_x$ denote the cluster mean, and let $\widehat{\mu}=\frac{1}{s} \sum_{i=1}^s f_{x}$ denote the empirical cluster mean. Then with probability at least $1-\delta$ we have
\[\|\mu-\widehat{\mu}\|_2\leq \zeta \cdot \|\mu\|_2\]
\end{lemma}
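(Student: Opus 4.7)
The plan is to apply the matrix Bernstein inequality (Lemma~\ref{lem:Bernstein}) to the sum $\widehat{\mu} - \mu = \frac{1}{s}\sum_{i=1}^s (f_{x_i} - \mu)$, where each term $\mathbf{z}_i := \frac{1}{s}(f_{x_i}-\mu)$ is mean zero. To do this I need a uniform bound $b$ on $\|\mathbf{z}_i\|_2$ and a bound $\sigma^2$ on $\bigl\|\sum_i \mathbb{E}[\mathbf{z}_i\mathbf{z}_i^\top]\bigr\|_2$. The variance bound is straightforward: applying Lemma~\ref{lem:dirvariance} with any orthonormal basis $\alpha_1,\ldots,\alpha_k$ of $\R^k$ and summing yields $\sum_{x\in C} \|f_x-\mu\|_2^2 \leq \tfrac{4\epsilon k}{\varphi^2}$, so $\mathbb{E}[\|f_{x_i}-\mu\|_2^2] \leq O\bigl(\tfrac{k}{|C|}\cdot\tfrac{\epsilon}{\varphi^2}\bigr)$ and therefore $\sigma^2 \leq \tfrac{1}{s}\mathbb{E}[\|f_{x_i}-\mu\|_2^2] \leq O\bigl(\tfrac{k}{s\cdot \min_j|C_j|}\bigr)$ (using Proposition~\ref{lem:min_size}).

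The uniform bound on $\|f_{x_i}\|_2$ is the delicate part, because $\|f_x\|_\infty$ can be as large as $n^{O(\epsilon/\varphi^2)}/\sqrt{\min_j|C_j|}$ in the worst case (Lemma~\ref{lem:l-inf-bnd}), which is too weak. The idea is to truncate via Lemma~\ref{lem:ap-mu-s}: for a parameter $\beta$ to be chosen, the set $T := \{x : \|f_x\|_\infty \geq \beta\sqrt{10/\min_j|C_j|}\}$ has size at most $k(\beta/2)^{-\varphi^2/(20\epsilon)}\cdot\min_j|C_j|$. Choosing $\beta \sim (s/\delta)^{20\epsilon/\varphi^2}\cdot k^{O(\epsilon/\varphi^2)}$ makes $|T|/|C| \leq \delta/(2s)$, so by a union bound over the $s$ samples, with probability at least $1-\delta/2$ none of the $x_i$ lies in $T$. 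Condition on this event; then $\|f_{x_i}\|_2 \leq \sqrt{k}\|f_{x_i}\|_\infty \leq \sqrt{k}\beta\sqrt{10/\min_j|C_j|}$, and together with $\|\mu\|_2 \leq \sqrt{2/|C|}$ from Lemma~\ref{lem:dotmu} this gives $b \leq O\bigl(\tfrac{\sqrt{k}\beta}{s\sqrt{\min_j|C_j|}}\bigr)$.

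Now apply Lemma~\ref{lem:Bernstein} with target deviation $q = \zeta\|\mu\|_2$. Since $\|\mu\|_2 \geq \tfrac{1}{\sqrt{2|C|}} = \Omega(1/\sqrt{\min_j|C_j|})$ by Lemma~\ref{lem:dotmu} and Proposition~\ref{lem:min_size}, we need $q \geq \Omega(\zeta/\sqrt{\min_j|C_j|})$. Plugging the bounds into Bernstein, the failure probability is at most $(k+1)\exp\bigl(-\Omega\bigl(\min\bigl(\tfrac{q^2}{\sigma^2},\tfrac{q}{b}\bigr)\bigr)\bigr)$, which is $\leq (k+1)\exp\bigl(-\Omega\bigl(\min\bigl(\tfrac{s\zeta^2}{k},\tfrac{s\zeta}{\sqrt{k}\beta}\bigr)\bigr)\bigr)$. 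The second term is the bottleneck and requires $s \gtrsim \tfrac{\sqrt{k}\beta\log(k/\delta)}{\zeta}$; substituting $\beta \sim (s/\delta)^{20\epsilon/\varphi^2}$ gives a self-referential inequality $s^{1-20\epsilon/\varphi^2} \gtrsim \tfrac{\sqrt{k}\log(k/\delta)}{\zeta\cdot\delta^{20\epsilon/\varphi^2}}$, yielding the stated sample complexity after squaring (the $\zeta^{-2}$ in the bound comes from the variance term, which dominates when $\epsilon/\varphi^2$ is very small).

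The main obstacle is the joint choice of the truncation threshold $\beta$ and the sample size $s$: $\beta$ must be large enough that sampled vertices almost surely avoid $T$, but small enough that the Bernstein $b$-term does not overwhelm the variance term. The exponent $1/(1-80\epsilon/\varphi^2)$ in the statement is precisely the result of solving this self-referential inequality, and the factor $(1/\delta)^{80\epsilon/\varphi^2}$ comes from the tail bound of Lemma~\ref{lem:ap-mu-s} evaluated at $\beta$ scaling polynomially with $1/\delta$. A union bound over the $\delta/2$ failure probability of $T$ avoidance and the $\delta/2$ failure probability from Bernstein closes the argument.
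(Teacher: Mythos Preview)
Your proposal has a genuine gap in the conditioning step. You first condition on the event that all $s$ samples avoid the heavy-tail set $T$, and then apply Bernstein to $\mathbf{z}_i = \tfrac{1}{s}(f_{x_i}-\mu)$, claiming these are mean zero. But once you condition on $x_i \in C\setminus T$, the samples are uniform on $C' := C\setminus T$, not on $C$, so
\[
\mathbb{E}\bigl[f_{x_i}\,\big|\, x_i\in C'\bigr] \;=\; \mu' \;:=\; \frac{1}{|C'|}\sum_{x\in C'} f_x \;\neq\; \mu
\]
in general. Bernstein therefore only controls $\|\widehat\mu - \mu'\|_2$, not $\|\widehat\mu - \mu\|_2$, and you still owe a bound on the bias $\|\mu' - \mu\|_2$. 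Your write-up never addresses this term.

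This bias term is not free. Writing $\mu - \mu' = \tfrac{1}{|C|}\sum_{x\in C\setminus C'} f_x + \bigl(\tfrac{1}{|C|}-\tfrac{1}{|C'|}\bigr)\sum_{x\in C'} f_x$, the second piece is tiny, but the first piece involves $\sum_{x\in C\setminus C'}\|f_x\|_2$ summed over vertices whose $\|f_x\|_\infty$ is \emph{large} by construction. The crude bound $\|f_x\|_2 \leq \sqrt{k}\,\|f_x\|_\infty \leq O\bigl(\sqrt{k}\, n^{20\epsilon/\varphi^2}/\sqrt{\min_j|C_j|}\bigr)$ from Lemma~\ref{lem:l-inf-bnd} brings in an $n^{O(\epsilon/\varphi^2)}$ factor that cannot be absorbed into a sample size of the form stated in the lemma. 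The paper resolves this by integrating the tail bound of Lemma~\ref{lem:ap-mu-s} over the level sets $T(y)=\{x: \|f_x\|_\infty \ge 2y(s k/\delta)^{40\epsilon/\varphi^2}\sqrt{10/s_{\min}}\}$, which gives $\sum_{x\in C\setminus C'}\|f_x\|_\infty \le O(k^{-2}s^{-1}\sqrt{s_{\min}})$ and hence $\|\mu'-\mu\|_2 \le \tfrac{\zeta}{2}\|\mu\|_2$. This layer-cake step is the missing ingredient in your argument; once you add it (and re-center the Bernstein application at $\mu'$), the rest of your outline goes through, and indeed your variance bound via Lemma~\ref{lem:dirvariance} is sharper than the one the paper actually uses.
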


\begin{proof}
Let $s_{\min} := \min_{i \in [k]} |C_i|$. We define 
\[C'=\left\{x\in C: ||f_x||_\infty \leq 2\cdot \left(\frac{s\cdot k}{\delta}\right)^{(40\cdot\epsilon/\varphi^2)}\cdot\sqrt{\frac{10}{s_{\min}}} \right\} \]
Note that by Lemma \ref{lem:ap-mu-s} and by choice of $\beta=2\cdot \left(\frac{s\cdot k}{\delta}\right)^{(40\cdot\epsilon/\varphi^2)}$ we have
\[|C\setminus C'| \leq k\cdot \left(\frac{\beta}{2}\right)^{-\varphi^2/(20\cdot\epsilon)}\cdot s_{\min} \leq k\cdot \left(\frac{s\cdot k}{\delta}\right)^{-2}\cdot |C| = (k^{-1}\cdot s^{-2}\cdot \delta^{2})\cdot |C| \]
Thus we have
\begin{equation}
\label{eq:bnd-C'}
|C'|\geq \left(1-(k^{-1}\cdot s^{-2}\cdot \delta^{2})\right)|C|
\end{equation}
Let $\mu'=\frac{1}{|C'|}\sum_{x\in C'} f_x$. By triangle inequality we have
\begin{equation}
\label{eq:mu-tri}
||\wh{\mu}-\mu||_2 \leq  ||\wh{\mu}-\mu'||_2 + ||\mu'-\mu||_2
\end{equation}
In the rest of the proof we will upper bound both of these terms by $\frac{\zeta}{2}\cdot ||\mu||_2$.

\textbf{Step $1$:} We first prove $||\wh{\mu}-\mu'||_2\leq \frac{\zeta}{2}\cdot ||\mu||_2$.     By the assumption of the lemma for sufficiently small $\frac{\epsilon \log k}{\varphi^2}$ we have $k^{(40\cdot\epsilon/\varphi^2)} \leq 2$. Thus for any $x\in C'$ we have $||f_x ||_\infty \leq \left(\frac{s}{\delta}\right)^{(40\cdot\epsilon/\varphi^2)}\cdot\sqrt{\frac{160}{s_{\min}}}$. Therefore by triangle inequality we have 
\begin{equation}
\label{eq:mu'bnd}
||\mu'||_2=\left|\left|  \frac{1}{|C'|}\cdot \sum_{x\in C'} f_x \right|\right| \leq \frac{1}{|C'|} \cdot \sum_{x\in C'} ||f_x ||_2 \leq  \frac{\sqrt{k}}{|C'|} \cdot \sum_{x\in C'} ||f_x ||_\infty \leq  \left(\frac{s}{\delta}\right)^{(40\cdot\epsilon/\varphi^2)}\cdot \sqrt{\frac{160\cdot k}{s_{\min}}} \text{.}
\end{equation}
By \eqref{eq:bnd-C'} and by union bound over all samples in $S$ with probability at least $1-s\cdot (k^{-1}\cdot s^{-2}\cdot \delta^{2})=1-s^{-1}\cdot k^{-1}\cdot \delta^2\geq 1-\frac{\delta}{2}$ for all $x_i\in S$ we have $x_i\in C'$, hence, $||f_x ||_\infty \leq \left(\frac{s}{\delta}\right)^{(40\cdot\epsilon/\varphi^2)}\cdot\sqrt{\frac{160}{s_{\min}}}$. Thus with probability at least $1-\frac{\delta}{2}$, $S$ is chosen uniformly at random from $C' $ so for all $x_i\in S$ we have
\begin{equation}
\label{eq:f_xi-bnd}
|f_x ||_\infty \leq \left(\frac{s}{\delta}\right)^{(40\cdot\epsilon/\varphi^2)}\cdot\sqrt{\frac{160}{s_{\min}}}
\end{equation}
In the rest of the proof of step $1$ we assume $S\subseteq C'$  which holds with probability at least $1-\frac{\delta}{2}$. Therefore conditioned on $S\subseteq C'$ we have $\mathbb{E}[f_{x_i}]=\mu'$.
\[
\|\widehat{\mu}- \mu'\|_2=\left|\left|\sum_{i=1}^s\left(\frac{f_{x_i}}{s}- \mu'\right)\right|\right|_2 \text{.}
\]
We define $\mathbf{z}_{i}=\frac{f_{x_i}}{s}- \frac{\mu'}{s}$, so $\|\widehat{\mu}-\mu'\|_2=\|\sum_{i=1}^s \mathbf{z}_{i}\|_2$. Observe that $\mathbb{E}\left[ \mathbf{z}_{i}  \right]=\mathbb{E}\left[ \frac{f_{x_i}}{s}  \right]- \frac{\mu'}{s}=0$, thus we can apply Lemma~\ref{lem:Bernstein}. Therefore we get
\begin{equation}
\label{1eq:GZ}
\mathbb{P}\left[\left|\left| \widehat{\mu}- {\mu'}\right|\right|_2 > q \right]
= \mathbb{P}\left[ \|\sum_{i=1}^s \mathbf{z}_{i}\|_2 > q  \right] \leq (k+1) \cdot \text{exp}\left(\frac{\frac{-q^2}{2}}{\sigma^2+\frac{bq}{3}} \right) \text{,}
\end{equation}
where $\sigma^2= \max\{\|\sum_{i=1}^s \mathbb{E}[\mathbf{z}_{i} \mathbf{z}_{i}^T]\|_2, \|\sum_{i=1}^s
\mathbb{E}[\mathbf{z}_{i}^T \mathbf{z}_{i}]\|_2\} $ and $b$ is an upper bound on $\|\mathbf{z}_{i}\|_2$ for all random variables $\mathbf{z}_{i}$. Therefore we need to upperbound $\|\mathbf{z}_{i}\|_2$ and $\sigma^2$. Note that
\begin{equation}\label{eq:zibnd}
\|\mathbf{z}_{i}\|_2 = \left|\left|  \frac{f_{x_i}}{s}-\frac{\mu'}{s} \right|\right|_2  \leq \left|\left|\frac{f_{x_i}}{s}\right|\right|_2 + \left|\left|  \frac{\mu'}{s} \right|\right|_2 \leq \frac{\sqrt{k}}{s} \cdot \|f_{x_i}\|_\infty + \frac{1}{s}\cdot ||\mu'||_2
\end{equation}
Therefore by \eqref{eq:mu'bnd}, \eqref{eq:f_xi-bnd} and \eqref{eq:zibnd} we have
\begin{equation}
\label{eq:BB}
\|\mathbf{z}_{i}\|_2  \leq \frac{\sqrt{k}}{s} \cdot \|f_{x_i}\|_\infty + \frac{1}{s}\cdot ||\mu'||_2  \leq \frac{2}{s}\cdot \left(\frac{s}{\delta}\right)^{(40\cdot\epsilon/\varphi^2)}\cdot\sqrt{\frac{160\cdot k}{s_{\min}}}\text{,} 
\end{equation}
Thus $b\leq \frac{2}{s}\cdot \left(\frac{s}{\delta}\right)^{(40\cdot\epsilon/\varphi^2)}\cdot\sqrt{\frac{160\cdot k}{s_{\min}}}$. We also need to upper bound $\sigma^2$. By \eqref{eq:BB} we get
\begin{align}
\label{eq:sigsig}
\sigma^2&= \max\{\|\sum_{i=1}^s \mathbb{E}[\mathbf{z}_{i} \mathbf{z}_{i}^T]\|_2, \|\sum_{i=1}^s
\mathbb{E}[\mathbf{z}_{i}^T \mathbf{z}_{i}]\|_2\}  
= s\cdot   \mathbb{E}\left[\|\mathbf{z}_{i}\|^2_2\right] 
\leq s\cdot \frac{4}{s^2}\cdot  \left(\frac{s}{\delta}\right)^{(80\cdot\epsilon/\varphi^2)}\cdot\frac{160 \cdot k}{s_{\min}}\text{.}
\end{align}
We set $q=\frac{\zeta}{2}\cdot ||\mu||_2$. 
Having upper bound for $\sigma^2$ by \eqref{eq:sigsig} and on $b$ by \eqref{eq:BB} we can apply Lemma~\ref{lem:Bernstein} and we get
\begin{align}
\label{1eq:bernstein}
\mathbb{P}\left[ \left|\left| \widehat{\mu}- {\mu'}\right|\right|_2 > \frac{\zeta}{2}\cdot ||\mu||_2  \right]  &\leq (k+1) \cdot \text{exp}\left(\frac{\frac{-q^2}{2}}{\sigma^2+\frac{bq}{3}} \right)  \nonumber\\
 &\leq (k+1) \cdot \text{exp}\left(\frac{\frac{-\zeta^2\cdot  ||\mu||^2_2}{8}}{ \frac{640\cdot k\cdot  \left(\frac{s}{\delta}\right)^{(80\cdot\epsilon/\varphi^2)}}{s\cdot s_{\min}} +\frac{\zeta}{2}\cdot ||\mu||_2\cdot \frac{2\cdot  \left(\frac{s}{\delta}\right)^{(40\cdot\epsilon/\varphi^2)}}{3\cdot s}\sqrt{\frac{160\cdot k}{s_{\min}}}} \right)
\end{align}
By Lemma \ref{lem:dotmu} for small enough $\frac{\epsilon}{\varphi^2}$ we have $\|\mu\|^2_2 \geq \frac{1}{2\cdot |C|}$ and since $\min_{i,j}\frac{|C_i|}{|C_j|}\geq \Omega(1)$. Thus for a small enough constant $c'$ we have 
\begin{equation}
\label{eq:bnd-r}
s_{\min}\cdot ||\mu||^2_2 \geq \frac{s_{\min}}{2\cdot |C| }\geq c'\text{,}
\end{equation}
Thus by \eqref{eq:bnd-r} and by choice of $s^{(1-80\cdot\epsilon/\varphi^2)} \geq \frac{10^6}{c'}\cdot k\cdot \log \left(\frac{k}{\delta}\right) \cdot \left(\frac{1}{\delta}\right)^{(80\cdot\epsilon/\varphi^2)} \cdot \left(\frac{1}{\zeta}\right)^2  \geq\frac{10^6 \cdot k\cdot \log \left(\frac{k}{\delta}\right) \cdot \left(\frac{1}{\delta}\right)^{(80\cdot\epsilon/\varphi^2)} \cdot \left(\frac{1}{\zeta}\right)^2}{s_{\min}\cdot ||\mu||^2_2}$ we get
\begin{equation}\label{eq:pw1-bn}
\frac{\zeta^2\cdot  ||\mu||^2_2}{8} \geq 400\cdot \log \left(\frac{k}{\delta}\right)\cdot \left( \frac{640\cdot k\cdot \left(\frac{s}{\delta}\right)^{(80\cdot\epsilon/\varphi^2)}}{s\cdot s_{\min}} \right)
\end{equation}
and
\begin{equation}\label{eq:pw2-bn}
\frac{\zeta^2\cdot  ||\mu||^2_2}{8} \geq 400\cdot  \log \left(\frac{k}{\delta}\right)\left(  \frac{\zeta}{2}\cdot  ||\mu||_2\cdot \frac{2\cdot \left(\frac{s}{\delta}\right)^{(80\cdot\epsilon/\varphi^2)}}{3\cdot s}\sqrt{\frac{160\cdot k}{s_{\min}}} \right)
\end{equation}
Therefore since $s\geq c\cdot\left(k\cdot \log \left(\frac{k}{\delta}\right) \cdot \left(\frac{1}{\delta}\right)^{(80\cdot\epsilon/\varphi^2)}\cdot\left(\frac{1}{\zeta}\right)^2 \right)^{1/(1-(80\cdot\epsilon/\varphi^2))}$ for large enough constant $c$, and putting \eqref{1eq:bernstein}, \eqref{eq:pw1-bn} and \eqref{eq:pw2-bn} together we get
\[
\mathbb{P}\left[\left|\left| \widehat{\mu}- {\mu'}\right|\right|_2 > \frac{\zeta}{2}\cdot ||\mu||_2  \right] \leq (k+1)\cdot e^{-200\cdot \log \left(\frac{k}{\delta}\right)}\leq \left(\frac{\delta}{k}\right)^{100}
\]
Thus with probability at least $1-\frac{\delta}{2}- \left(\frac{\delta}{k}\right)^{100}\geq 1-\delta$ we have
\begin{align}
\label{eq:finmu1}
\|\widehat{\mu}-\mu'\|_2 &\leq  \frac{\zeta}{2}\cdot \|\mu\|_2  \text{.}
\end{align}
\textbf{Step $2$:}
Next we want to bound $\|\mu-\mu'\|_2$. We have
\begin{align}
\|\mu'-\mu\|_2 &= \left|\left| \frac{1}{|C'|}\sum_{x\in C'} f_x  - \frac{1}{|C|}\sum_{x\in C} f_x \right|\right|_2 \nonumber\\
&\leq \left|\left| \frac{1}{|C'|}\sum_{x\in C} f_x  - \frac{1}{|C|}\sum_{x\in C} f_x \right|\right|_2+\left|\left|\frac{1}{|C'|}\sum_{x\in C\setminus C'} f_x \right|\right|_2 &&\text{By triangle inequality}\nonumber\\
&\leq \left(\frac{1}{1-(k^{-1}\cdot s^{-2}\cdot \delta^{2})}-1\right) \left|\left| \mu \right|\right|_2 +\left|\left|  \frac{1}{|C'|}\sum_{x\in C\setminus C'} f_x \right|\right|_2 && \text{Since $|C'|\geq \left(1-(k^{-1}\cdot s^{-2}\cdot \delta^{2})\right)|C|$ by \eqref{eq:bnd-C'}} \nonumber\\
&\leq 2\cdot (k^{-1}\cdot s^{-2}\cdot \delta^{2})\cdot \|\mu\|_2+\left|\left| \frac{1}{|C'|}\sum_{x\in C\setminus C'} f_x\right|\right|_2 \label{eq:fin4}
\end{align}
It thus remains to upper bound the second term. We now note that 
\begin{equation}\label{eq:2nd-term}
\left|\left| \frac{1}{|C'|}\sum_{x\in C\setminus C'} f_x \right|\right|_2 \leq \frac{1}{|C'|}\sum_{x\in C\setminus C'} ||f_x ||_2 \leq \frac{\sqrt{k}}{|C'|}\sum_{x\in C\setminus C'} \|f_x\|_\infty
\end{equation}
For any $y\geq 1$ we define 
\[T(y)=\left \{x\in V: ||f_x||_\infty \geq 2\cdot y \cdot \left(\frac{s\cdot k}{\delta}\right)^{(40\cdot \epsilon/\varphi^2)} \cdot \sqrt{\frac{10}{s_{\min}}} \right \}\]
Therefore, by Lemma \ref{lem:ap-mu-s} we have 
\begin{equation}\label{eq:T'-bn}
|T(y)|\leq  k\cdot \left(\frac{2\cdot y \cdot \left(\frac{s\cdot k}{\delta}\right)^{(40\cdot \epsilon/\varphi^2)} }{2}\right)^{-\varphi^2/(20\cdot\epsilon)}\cdot s_{\min} = \left(\frac{s\cdot k}{\delta}\right)^{-2}\cdot y^{-\varphi^2/(20\cdot\epsilon)} \cdot s_{\min} \text{.}
\end{equation}
Using the bound on $|T(y)|$ above, we now get 
\begin{align}
&\sum_{x\in C \setminus C'} \|f_x\|_\infty \\
& \leq \int_1^\infty \left( y \cdot \left(\frac{s\cdot k}{\delta}\right)^{(40\cdot \epsilon/\varphi^2)}\cdot \sqrt{\frac{40}{s_{\min}}}  \right)\cdot |T(y)|\cdot dy &&\text{By definition of $T(y)$ and $C'$} \nonumber \\
&\leq \sqrt{\frac{160}{s_{\min}}}   \cdot \left(\frac{s}{\delta}\right)^{(40\cdot \epsilon/\varphi^2)}\cdot \int_1^\infty y\cdot| T(y)| \cdot dy &&\text{Since $k^{(40\cdot \epsilon/\varphi^2)}\leq 2$ for small enough $\frac{\epsilon \cdot \log k}{\varphi^2}$} \nonumber \\
&\leq \sqrt{\frac{160}{s_{\min}}}   \cdot  \left(\frac{s}{\delta}\right)^{(40\cdot \epsilon/\varphi^2)}\cdot  \int_1^\infty \left(\frac{s\cdot k}{\delta}\right)^{-2}\cdot y^{\left( 1-\varphi^2/(20\cdot\epsilon)\right)} \cdot s_{\min} \cdot dy &&\text{By \eqref{eq:T'-bn}} \nonumber \\
&\leq \sqrt{\frac{160}{s_{\min}}}   \cdot s_{\min}\cdot \left(\frac{s}{\delta}\right)^{(40\cdot \epsilon/\varphi^2)}\cdot \left(\frac{s\cdot k}{\delta}\right)^{-2} \frac{1}{\varphi^2/(20\cdot \epsilon)-2} &&\text{Since for any $c<0, \int_1^\infty y^{c} dy=\frac{-1}{c+1}$ } \nonumber \\
&\leq k^{-2}\cdot s^{-1}\cdot\sqrt{s_{\min }} && \text{For small enough $\frac{\epsilon }{\varphi^2}$} \label{eq:anteg}
\end{align}
Therefore we get
\begin{align*}
\left|\left| \frac{1}{|C'|}\sum_{x\in C\setminus C'} f_x \right|\right|_2 
&\leq  \frac{\sqrt{k}}{|C'|}\sum_{x\in C\setminus C'} \|f_x\|_\infty  && \text{By \eqref{eq:2nd-term}} \\
&\leq \frac{\sqrt{k}\cdot k^{-2}\cdot s^{-1}\cdot \sqrt{s_{\min}}}{|C'|}&& \text{By \eqref{eq:anteg}} \\
&\leq   \frac{2\cdot  k^{-1}\cdot s^{-1}}{\sqrt{|C|}} \cdot \frac{\sqrt{s_{\min}}}{\sqrt{|C|}} && \text{By \eqref{eq:bnd-C'} } \\
&\leq   \frac{k^{-1}\cdot s^{-1}}{\sqrt{|C|}}  && \text{Since $|C|\geq s_{\min}$} \\
&\leq 2\cdot k^{-1}\cdot s^{-1}\cdot ||\mu||_2 && \text{By Lemma \ref{lem:dotmu} $||\mu||_2\geq \frac{1}{2\cdot\sqrt{|C|}}$ }
\end{align*}
Therefore by \eqref{eq:fin4} we have
\begin{equation}\label{eq:step2}
\|\mu'-\mu\|_2 \leq 2\cdot (k^{-1}\cdot s^{-2}\cdot\delta^2) \|\mu\|_2+\left|\left| \frac{1}{|C'|}\sum_{x\in C\setminus C'} f_x\right|\right|_2  \leq 2\left(k^{-1}\cdot s^{-2}\cdot\delta^2+ \cdot k^{-1}\cdot s^{-1}\right) ||\mu||_2 \leq \frac{\zeta}{2}\cdot ||\mu||_2
\end{equation}
The last inequality holds since $s\geq 8\cdot \left(\frac{1}{\zeta}\right)^2 $, hence, $2\left(k^{-1}\cdot s^{-2}\cdot\delta^2+ \cdot k^{-1}\cdot s^{-1}\right)\leq \frac{\zeta}{2}$.
Putting \eqref{eq:mu-tri}, \eqref{eq:finmu1} and \eqref{eq:step2} together with probability at least $1-\delta$  we get
\[ ||\wh{\mu}-\mu||_2 \leq  ||\wh{\mu}-\mu'||_2 + ||\mu'-\mu||_2\leq \frac{\zeta}{2}\cdot ||\mu||_2+ \frac{\zeta}{2}\cdot ||\mu||_2 \leq \zeta \cdot ||\mu||_2 \]
\end{proof}

To conclude our argument we show that if we sample enough nodes, we have a large number of samples in each cluster. 

\begin{restatable}{lemma}{lemseachCi}
\label{lem:s-eachCi}
Let $k \geq 2$, $\varphi \in (0,1)$, $\frac{\e \log k}{\varphi^2}$ be smaller than a positive sufficiently small constant. Let $G=(V,E)$ be a $d$-regular graph that admits a $(k,\varphi,\epsilon)$-clustering $C_1, \ldots, C_k$.   Let $S$ be the multiset of  $s\in \Omega(k \log k )$ vertices each sampled independently at random from  $V$. Then with probability at least $\frac{9}{10}$, for every $i\in [k]$,
\[|S\cap C_i| \geq  \frac{0.9 \cdot s }{k} \cdot \min_{p,q \in [k]}\frac{|C_p|}{|C_q|}
\text{.}\]
\end{restatable}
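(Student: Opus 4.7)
The statement is a standard concentration result, so the plan is to combine a Chernoff bound with a union bound, using Proposition~\ref{lem:min_size} to convert between ``fraction of samples in $C_i$'' and $s/k$. Concretely, for each fixed $i \in [k]$, the random variable $|S \cap C_i|$ is a sum of $s$ i.i.d.\ Bernoulli random variables with expectation $\mathbb{E}[|S\cap C_i|] = s\cdot |C_i|/n$. Since by Proposition~\ref{lem:min_size} we have $|C_i|/n = \Omega(1/k)$ and $\max_{p,q}|C_p|/|C_q| = O(1)$, the target lower bound $\frac{0.9 s}{k}\cdot\min_{p,q}\frac{|C_p|}{|C_q|}$ is at most (say) $0.9\,\mathbb{E}[|S\cap C_i|]$, after noticing that $\max_q |C_q| \geq n/k$.

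The first step would be to reduce the claim to showing $|S\cap C_i| \geq 0.9\,\mathbb{E}[|S\cap C_i|]$ for every $i$, via the chain
\[
\frac{0.9 s}{k}\cdot\frac{\min_p |C_p|}{\max_q |C_q|} \;\leq\; \frac{0.9 s}{k}\cdot\frac{\min_p |C_p|\cdot k}{n} \;=\; 0.9\, s\cdot \frac{\min_p|C_p|}{n} \;\leq\; 0.9\,\mathbb{E}[|S\cap C_i|].
\]
The next step is to apply the multiplicative Chernoff bound to each coordinate: $\Pr[|S \cap C_i| < 0.9\,\mathbb{E}[|S \cap C_i|]] \leq \exp(-\Omega(s\,|C_i|/n))$. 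Using $|C_i|/n = \Omega(1/k)$ from Proposition~\ref{lem:min_size} and the assumption $s = \Omega(k\log k)$ (with a sufficiently large hidden constant), this failure probability is at most $\exp(-\Omega(\log k)) = k^{-c}$ for a constant $c > 1$ that we choose.

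Finally, I would take a union bound over the $k$ clusters, which yields total failure probability at most $k\cdot k^{-c} \leq 1/10$. There is no real obstacle here; the only thing to be careful about is keeping the constants straight when converting between the quantity $\min_{p,q}|C_p|/|C_q|$ in the statement and the simpler lower bound $0.9\,\mathbb{E}[|S \cap C_i|]$, and to make sure the hidden constant in $s = \Omega(k \log k)$ is taken large enough relative both to the Chernoff exponent and to the implicit constant in Proposition~\ref{lem:min_size}.
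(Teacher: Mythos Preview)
Your proposal is correct and follows essentially the same approach as the paper: reduce to showing $|S\cap C_i|\geq 0.9\,\mathbb{E}[|S\cap C_i|]$ via the inequality $\max_q|C_q|\geq n/k$, apply the multiplicative Chernoff bound to each cluster, and union bound over $i\in[k]$. The paper's proof is identical in structure, only differing in that it tracks the constant in $s$ explicitly as $500\cdot k\log k\cdot s_{\max}/s_{\min}$ rather than appealing to Proposition~\ref{lem:min_size}.
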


\begin{proof}
For $i\in [k]$, and $1\leq r \leq s$, let $X_{i}^r$ be a random variable which is $1$ if the $r$-th sampled vertex is in $C_i$, and $0$ otherwise. Thus $\mathbb{E}[X_{i}^r]=\frac{|C_i|}{n}$. Observe that $|S\cap C_i|$ is a random variable defined as $\sum_{r=1}^{s} X_i^r$, where its expectation is given by
\[\mathbb{E}[|S\cap C_i|]=\sum_{r=1}^{s} \mathbb{E}[X_i^r]= s\cdot\frac{|C_i|}{n} \geq \frac{ s \cdot s_{\min}}{k \cdot s_{\max}}\text{.}\]
Notice that random variables $X_{i}^r$ are independent,  Therefore, by Chernoff bound,
\[\Pr\left[|S\cap C_i|<\frac{9s}{10}\cdot\frac{|C_i|}{n}\right] \leq \exp\left(-\frac{1}{200}\cdot\frac{ s \cdot s_{\min}}{k \cdot s_{\max}}\right)\text{.}\]
By union bound and since $s=500\cdot k\cdot \log k \cdot \frac{s_{\max}}{s_{\min}}$ we have
\[\Pr\left[\exists i\text{: }|S\cap C_i|<\frac{9s}{10}\cdot\frac{|C_i|}{n}\right] \leq k\cdot \exp\left(-\frac{1}{200}\cdot\frac{ s \cdot s_{\min}}{k \cdot s_{\max}}\right) \leq\frac{1}{10}\text{.}\]
Therefore with probability at least $\frac{9}{10}$ for all $i\in [k]$ we have 
\begin{align*}
|S\cap C_i|&\geq \frac{9\cdot s}{10}\cdot \frac{|C_i|}{n} \geq \frac{0.9\cdot s}{k}\cdot\frac{ s_{\min}}{ s_{\max}} 
\end{align*}
\end{proof}
\subsubsection{Approximate Centers are strongly orthogonal}
The main result of this section is Lemma \ref{lem:apxdosubspace} that generalizes Lemma \ref{lem:dosubspace} to the approximate of cluster means.

\begin{restatable}{lemma}{lemapxdosubspace}
\label{lem:apxdosubspace}
Let  $k \geq 2$ be an integer, $\varphi \in (0,1)$, and $\e \in (0,1)$. Let $G=(V,E)$ be a $d$-regular graph that admits a $(k,\varphi,\e)$-clustering $C_1, \ldots, C_k$. Let $0<\zeta<\frac{\sqrt{\epsilon}}{20\cdot k\cdot \varphi}$. Let $\mu_1,\ldots,\mu_k$ denote the cluster means of $C_1, \ldots, C_k$. Let $\wh{\mu}_1, \dots, \wh{\mu}_k \in \R^k$ denote an approximation of the cluster means such that for each $i\in[k]$, $||\mu_i- \wh{\mu}_i||_2\leq \zeta ||\mu_i||_2$. Let $S \subset \{\wh{\mu}_1, \dots, \wh{\mu}_k\}$ denote a subset of cluster means. 
Let $\wh{\Pi}\in \R^{k\times k}$ denote the orthogonal projection matrix into the $span({S})^{\perp}$. Then the following holds:
\begin{enumerate}
\item For all $\wh{\mu}_i \in \{\wh{\mu}_1, \ldots, \wh{\mu}_k\} \setminus S$  we have $\left| \|\wh{\Pi}\wh{\mu}_i \|_2^2 - ||\wh{\mu_i}||_2^2 \right| \leq \frac{20\sqrt{\e}}{\varphi}\cdot ||\wh{\mu_i}||_2^2 \text{.}$ \label{itm1}
\item For all $\wh{\mu}_i \neq \wh{\mu}_j\in \{\wh{\mu}_1, \ldots, \wh{\mu}_k\} \setminus S$  we have $|\langle \wh{\Pi}\wh{\mu}_i, \wh{\Pi}\wh{\mu}_j \rangle | \leq \frac{50\sqrt{\e}}{\varphi}\cdot\frac{1}{\sqrt{|C_i|\cdot|C_j|}} \text{.}$ \label{itm2}
\end{enumerate}
\end{restatable}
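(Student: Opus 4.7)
The plan is to follow the structure of Lemma~\ref{lem:dosubspace}, but replacing the exact centres $\mu_j$ with the approximations $\widehat{\mu}_j$ and carefully tracking the extra errors introduced by $\|\mu_j - \widehat{\mu}_j\|_2 \le \zeta\|\mu_j\|_2$ with $\zeta \le \frac{\sqrt{\epsilon}}{20k\varphi}$. Item~\ref{itm1} is the technical heart of the argument; item~\ref{itm2} then follows from item~\ref{itm1} together with a direct estimate of $\langle \widehat{\mu}_i, \widehat{\mu}_j\rangle$.

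For item~\ref{itm1}, I will first reduce to the projection onto the orthogonal complement of the span of \emph{all} other approximate means. Let $\widehat{S}' = \{\widehat{\mu}_1,\dots,\widehat{\mu}_k\}\setminus\{\widehat{\mu}_i\}$ and let $\widehat{\Pi}'$ be the orthogonal projection onto $\mathrm{span}(\widehat{S}')^\perp$. Since $S \subseteq \widehat{S}'$, we have $\mathrm{span}(S)\subseteq \mathrm{span}(\widehat{S}')$, so $\|(I-\widehat{\Pi})\widehat{\mu}_i\|_2 \le \|(I-\widehat{\Pi}')\widehat{\mu}_i\|_2$ and by Pythagoras it suffices to bound $\|(I-\widehat{\Pi}')\widehat{\mu}_i\|_2^2$. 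Writing $\widehat{H}$ for the matrix whose columns are the $\widehat{\mu}_j$ for $j\neq i$, $\widehat{W}$ the diagonal matrix with $\widehat{W}(j,j)=\sqrt{|C_j|}$, and $\widehat{Z}=\widehat{H}\widehat{W}$, I have $I-\widehat{\Pi}'=\widehat{Z}(\widehat{Z}^T\widehat{Z})^{-1}\widehat{Z}^T$, so
\[
\|(I-\widehat{\Pi}')\widehat{\mu}_i\|_2^2 \;=\; \widehat{\mu}_i^T \widehat{Z}(\widehat{Z}^T\widehat{Z})^{-1}\widehat{Z}^T \widehat{\mu}_i \;\le\; \Bigl(1+\tfrac{5\sqrt{\epsilon}}{\varphi}\Bigr)\,\widehat{\mu}_i^T \widehat{Z}\widehat{Z}^T\widehat{\mu}_i,
\]
by Lemma~\ref{lem:apxQQ-1} applied to $\widehat{S}'$.

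To bound $\widehat{\mu}_i^T \widehat{Z}\widehat{Z}^T\widehat{\mu}_i = \sum_{j\neq i}|C_j|\langle \widehat{\mu}_i,\widehat{\mu}_j\rangle^2$, I will use the identity $\sum_{j\neq i}|C_j|\widehat{\mu}_j\widehat{\mu}_j^T = \sum_{j=1}^k|C_j|\widehat{\mu}_j\widehat{\mu}_j^T - |C_i|\widehat{\mu}_i\widehat{\mu}_i^T$. Applying Lemma~\ref{lem:apxQQ-1} to the full set $S=\{\widehat{\mu}_1,\dots,\widehat{\mu}_k\}$ yields $\widehat{Y}^T\widehat{Y}\approx I$ with $\widehat{Y}$ the $k\times k$ matrix with columns $\sqrt{|C_j|}\widehat{\mu}_j$, and since $\widehat{Y}$ is square, Lemma~\ref{lem_commute} gives that $\widehat{Y}\widehat{Y}^T = \sum_j |C_j|\widehat{\mu}_j\widehat{\mu}_j^T$ has the same spectrum, hence is also within $5\sqrt{\epsilon}/\varphi$ of $I$ in operator norm. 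Thus
\[
\widehat{\mu}_i^T\widehat{Z}\widehat{Z}^T\widehat{\mu}_i \;\le\; \|\widehat{\mu}_i\|_2^2\Bigl(1+\tfrac{5\sqrt{\epsilon}}{\varphi}-|C_i|\|\widehat{\mu}_i\|_2^2\Bigr),
\]
and I will then show $|C_i|\|\widehat{\mu}_i\|_2^2 \ge 1 - O(\sqrt{\epsilon}/\varphi)$ by combining Lemma~\ref{lem:dotmu} (which gives $|C_i|\|\mu_i\|_2^2 \ge 1-4\sqrt{\epsilon}/\varphi$) with $|\|\widehat{\mu}_i\|_2^2 - \|\mu_i\|_2^2| \le 3\zeta\|\mu_i\|_2^2$. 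Putting everything together gives $\|(I-\widehat{\Pi})\widehat{\mu}_i\|_2^2 \le \frac{20\sqrt{\epsilon}}{\varphi}\|\widehat{\mu}_i\|_2^2$, which is item~\ref{itm1} by Pythagoras.

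For item~\ref{itm2}, I will decompose $\langle \widehat{\mu}_i, \widehat{\mu}_j\rangle = \langle \widehat{\Pi}\widehat{\mu}_i, \widehat{\Pi}\widehat{\mu}_j\rangle + \langle (I-\widehat{\Pi})\widehat{\mu}_i, (I-\widehat{\Pi})\widehat{\mu}_j\rangle$ and bound both terms separately. The cross term is bounded by Cauchy--Schwarz together with item~\ref{itm1}, giving $|\langle (I-\widehat{\Pi})\widehat{\mu}_i,(I-\widehat{\Pi})\widehat{\mu}_j\rangle| \le \frac{20\sqrt{\epsilon}}{\varphi}\|\widehat{\mu}_i\|_2\|\widehat{\mu}_j\|_2$. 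The direct term $|\langle \widehat{\mu}_i,\widehat{\mu}_j\rangle|$ is bounded by expanding $\widehat{\mu}_p = \mu_p + (\widehat{\mu}_p-\mu_p)$ for $p\in\{i,j\}$ and using Lemma~\ref{lem:dotmu} for $|\langle\mu_i,\mu_j\rangle| \le \frac{8\sqrt{\epsilon}}{\varphi\sqrt{|C_i||C_j|}}$, the triangle inequality, and $\zeta\le \frac{\sqrt{\epsilon}}{20k\varphi}$ with $\|\mu_p\|_2 \le \sqrt{2/|C_p|}$ (again from Lemma~\ref{lem:dotmu}). Combining both bounds and rearranging via $\|\widehat{\mu}_i\|_2\|\widehat{\mu}_j\|_2 \le 3/\sqrt{|C_i||C_j|}$ yields $|\langle \widehat{\Pi}\widehat{\mu}_i, \widehat{\Pi}\widehat{\mu}_j\rangle| \le \frac{50\sqrt{\epsilon}}{\varphi\sqrt{|C_i||C_j|}}$.

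The main obstacle is the step where I need $\sum_j|C_j|\widehat{\mu}_j\widehat{\mu}_j^T$ to be spectrally close to $I$: Lemma~\ref{lem:apxQQ-1} is phrased in terms of $\widehat{Z}^T\widehat{Z}$, whereas I need $\widehat{Y}\widehat{Y}^T$, which requires the observation that for the \emph{square} $k\times k$ matrix $\widehat{Y}$ the two Gram matrices share eigenvalues (Lemma~\ref{lem_commute}). Once that transfer is in place, the rest is careful bookkeeping of the $\zeta$-corrections, and the slack between the $16\sqrt{\epsilon}/\varphi$ and $40\sqrt{\epsilon}/\varphi$ constants of Lemma~\ref{lem:dosubspace} and the $20\sqrt{\epsilon}/\varphi$ and $50\sqrt{\epsilon}/\varphi$ constants here is exactly large enough to absorb the perturbation terms coming from $\zeta$.
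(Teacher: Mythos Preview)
Your proposal is correct and follows essentially the same approach as the paper. The only cosmetic difference is in how you bound $\widehat{\mu}_i^T\widehat{Z}\widehat{Z}^T\widehat{\mu}_i$: the paper isolates this as a separate Lemma~\ref{lem:apxQQT} (proved via Lemma~\ref{lem:spectraldistance} plus a direct perturbation of $\sum_j|C_j|\mu_j\mu_j^T$ to $\sum_j|C_j|\widehat{\mu}_j\widehat{\mu}_j^T$), whereas you inline the same argument by invoking Lemma~\ref{lem:apxQQ-1} on the full set of approximate means and transferring from $\widehat{Y}^T\widehat{Y}$ to $\widehat{Y}\widehat{Y}^T$ via Lemma~\ref{lem_commute}; both routes yield the same operator-norm bound and the remainder of the proof (item~\ref{itm2} via the decomposition, Cauchy--Schwarz, and the expansion $\widehat{\mu}_p=\mu_p+(\widehat{\mu}_p-\mu_p)$) is identical.
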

To prove Lemma \ref{lem:apxdosubspace} we use Lemma \ref{lem:apxQQ-1} from Section \ref{sebsec:moment-bounds} 
and we prove Lemma \ref{lem:apxQQT}.

\begin{lemma}
\label{lem:apxQQT}
Let  $k \geq 2$ be an integer, $\varphi \in (0,1)$, and $\e \in (0,1)$. Let $G=(V,E)$ be a $d$-regular graph that admits a $(k,\varphi,\e)$-clustering $C_1, \ldots, C_k$. Let $0<\zeta<\frac{\sqrt{\epsilon}}{20\cdot k\cdot \varphi}$. Let $\wh{\mu}_1, \dots, \wh{\mu}_k \in \R^k$ denote an approximation of the cluster means such that for each $i\in[k]$, $||\mu_i-\wh{\mu}_i||_2\leq \zeta ||\mu_i||_2$. Let $S=\{\wh{\mu}_1,\ldots,\wh{\mu}_k\}\setminus \{\wh{\mu}_i\}$. Let $\wh{H}=[\wh{\mu}_1,\wh{\mu}_2, \ldots, \wh{\mu}_{i-1}, \wh{\mu}_{i+1}, \ldots, \wh{\mu}_k]$ denote a matrix such that its columns are the vectors in $S$.  Let $\wh{W}\in \R^{(k-1)\times (k-1)}$ denote a diagonal matrix such that for all $j<i$ we have $\wh{W}(j,j)=\sqrt{|C_j|}$ and for all $j\geq i$ we have $\wh{W}(j,j)=\sqrt{|C_{j+1}|}$. Let $\wh{Z}=\wh{H}\wh{W}$. Then we have \[\wh{\mu}_i^T \wh{Z} \wh{Z}^{T} \wh{\mu}_i \leq \frac{10\sqrt{\epsilon}}{\varphi} \cdot ||\wh{\mu}_i||_2^2 \text{.}\] 
\end{lemma}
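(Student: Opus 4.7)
The plan is to mirror the proof of Lemma~\ref{lem:QQT} but carefully track the extra slack introduced by the approximation $\|\wh{\mu}_i-\mu_i\|_2 \leq \zeta \|\mu_i\|_2$. Let $\wh{Y}\in\R^{k\times k}$ be the matrix whose $j$-th column is $\sqrt{|C_j|}\cdot\wh{\mu}_j$. Then $\wh{Y}\wh{Y}^T = \sum_{j=1}^k |C_j|\wh{\mu}_j\wh{\mu}_j^T$ and $\wh{Z}\wh{Z}^T = \wh{Y}\wh{Y}^T - |C_i|\wh{\mu}_i\wh{\mu}_i^T$, so
$$\wh{\mu}_i^T \wh{Z}\wh{Z}^T \wh{\mu}_i \;=\; \wh{\mu}_i^T \wh{Y}\wh{Y}^T \wh{\mu}_i \;-\; |C_i|\,\|\wh{\mu}_i\|_2^4.$$

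The first step is to upper bound $\wh{\mu}_i^T \wh{Y}\wh{Y}^T \wh{\mu}_i$. The proof of Lemma~\ref{lem:apxQQ-1} already establishes (see the derivation leading up to the bound on $|x^T(\wh{Z}^T\wh{Z}-I)x|$) that for every unit $\alpha\in\R^k$,
$\bigl|\alpha^T(\wh{Y}^T\wh{Y}-I)\alpha\bigr| \leq \tfrac{4.5\sqrt{\e}}{\varphi}$.
Since $\wh{Y}$ is square, $\wh{Y}\wh{Y}^T$ and $\wh{Y}^T\wh{Y}$ share the same eigenvalues (Lemma~\ref{lem_commute}), so the same spectral bound holds for $\wh{Y}\wh{Y}^T - I$; in particular
$\wh{\mu}_i^T \wh{Y}\wh{Y}^T \wh{\mu}_i \leq \bigl(1 + \tfrac{4.5\sqrt{\e}}{\varphi}\bigr)\|\wh{\mu}_i\|_2^2$.

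The second step is to lower bound $|C_i|\|\wh{\mu}_i\|_2^2$. By the reverse triangle inequality and the hypothesis on $\zeta$, $\|\wh{\mu}_i\|_2 \geq (1-\zeta)\|\mu_i\|_2$, so $\|\wh{\mu}_i\|_2^2 \geq (1-2\zeta)\|\mu_i\|_2^2$. Combining this with Lemma~\ref{lem:dotmu}, which gives $|C_i|\|\mu_i\|_2^2 \geq 1 - \tfrac{4\sqrt{\e}}{\varphi}$, yields
$|C_i|\|\wh{\mu}_i\|_2^2 \geq 1 - 2\zeta - \tfrac{4\sqrt{\e}}{\varphi}$.
Subtracting, I obtain
$$\wh{\mu}_i^T \wh{Z}\wh{Z}^T \wh{\mu}_i \;\leq\; \Bigl(\tfrac{4.5\sqrt{\e}}{\varphi} + 2\zeta + \tfrac{4\sqrt{\e}}{\varphi}\Bigr)\|\wh{\mu}_i\|_2^2 \;=\; \Bigl(\tfrac{8.5\sqrt{\e}}{\varphi} + 2\zeta\Bigr)\|\wh{\mu}_i\|_2^2.$$

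Finally I use the hypothesis $\zeta < \tfrac{\sqrt{\e}}{20k\varphi} \leq \tfrac{\sqrt{\e}}{20\varphi}$ to conclude $2\zeta \leq \tfrac{0.1\sqrt{\e}}{\varphi}$, giving the claimed bound $\tfrac{10\sqrt{\e}}{\varphi}\|\wh{\mu}_i\|_2^2$. There is no real obstacle here beyond constant tracking: the structural bound on $\wh{Y}^T\wh{Y}$ is already inside the proof of Lemma~\ref{lem:apxQQ-1}, and the only new ingredient is the multiplicative stability of $\|\wh{\mu}_i\|_2$ under $\zeta$-perturbation. The one small pitfall to watch is that the exact-mean proof used $\|\mu_i\|_2^2 \leq 2/|C_i|$ to absorb several terms; here the analogue $\|\wh{\mu}_i\|_2^2 \leq (1+\zeta)^2 \|\mu_i\|_2^2 \leq 4/|C_i|$ is equally cheap, so no new structural input is needed.
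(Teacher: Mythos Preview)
Your proposal is correct and follows essentially the same approach as the paper: decompose $\wh{\mu}_i^T\wh{Z}\wh{Z}^T\wh{\mu}_i$ as $\wh{\mu}_i^T\wh{Y}\wh{Y}^T\wh{\mu}_i - |C_i|\|\wh{\mu}_i\|_2^4$, bound $\wh{Y}\wh{Y}^T$ spectrally close to $I$, and lower bound $|C_i|\|\wh{\mu}_i\|_2^2$ via Lemma~\ref{lem:dotmu} and the $\zeta$-perturbation. The only cosmetic difference is that the paper rederives the spectral bound on $\wh{Y}\wh{Y}^T - I$ directly (expanding $\wh{\mu}_j\wh{\mu}_j^T - \mu_j\mu_j^T$ and combining with Lemma~\ref{lem:spectraldistance} item~\eqref{lem-spec-dist-itm1}), whereas you cite the equivalent $\wh{Y}^T\wh{Y}$ bound already established inside the proof of Lemma~\ref{lem:apxQQ-1} and transfer it via Lemma~\ref{lem_commute}; this is a perfectly legitimate shortcut.
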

\begin{proof}
Note that $\wh{Z}\wh{Z}^T=(\sum_{j=1}^k |C_j|\wh{\mu}_j\wh{\mu}_j^T) -|C_i|\wh{\mu}_i \wh{\mu}_i^T$. Thus we have 
\begin{equation}
\label{eq:qqt-val}
\wh{\mu}_i^T \wh{Z} \wh{Z}^{T} \wh{\mu}_i = \wh{\mu}_i^T \left(\sum_{j=1}^k |C_j|\wh{\mu}_j\wh{\mu}_j^T \right)\wh{\mu}_i- |C_i|\cdot||\wh{\mu}_i||_2^4\text{.}
\end{equation}
By Lemma \ref{lem:spectraldistance} for any vector $x$ with $||x||_2=1$ we have 
\begin{equation}\label{eq:lem9Cm}
x^T\left(\sum_{j=1}^k |C_j|{\mu}_j{\mu}_j^T - I \right)x\leq \frac{4\sqrt{\epsilon}}{\varphi}
\end{equation}
Note that
\begin{align*}
&||\sum_{j=1}^k |C_j|\wh{\mu}_j\wh{\mu}_j^T - \sum_{j=1}^k |C_j|{\mu}_j{\mu}_j^T ||_2 \\
&\leq \sum_{j=1}^k |C_j|\cdot ||\wh{\mu}_j\wh{\mu}_j^T -{\mu}_j{\mu}_j^T ||_2 && \text{By triangle inequality} \\
&= \sum_{j=1}^k |C_j| \left( ||\left({\mu}_j+(\wh{\mu}_j-{\mu}_j)\right)\left({\mu}_j+(\wh{\mu}_j-{\mu}_j)\right)^T -{\mu}_j{\mu}_j^T ||_2 \right) \\
&\leq \sum_{j=1}^k |C_j| \left( ||\left(\wh{\mu}_j-{\mu}_j\right)\left(\wh{\mu}_j-{\mu}_j\right)^T ||_2 + ||{\mu}_j\left(\wh{\mu}_j-{\mu}_j\right)^T ||_2+ ||\left(\wh{\mu}_j-{\mu}_j\right){\mu}_j^T ||_2 \right) && \text{By triangle inequality} \\
&\leq \sum_{j=1}^k |C_j| \cdot (\zeta^2+2\zeta)\cdot||\mu_j||_2^2&& \text{Since }||\wh{\mu}_j-\mu_j||_2\leq \zeta ||\mu_j||_2 \\
&\leq \sum_{j=1}^k |C_j| \cdot 6\cdot \zeta \cdot\frac{1}{|C_j|}&& \text{By Lemma \ref{lem:dotmu} } ||\mu_j||^2_2\leq\frac{2}{|C_i|} \\
&\leq 6\cdot \zeta\cdot k \\
&\leq \frac{\sqrt{\epsilon}}{2\varphi} &&\text{Since } \zeta\leq  \frac{\sqrt{\epsilon}}{20\cdot k\cdot \varphi}
\end{align*}
Thus for any vector $x$ with $||x||_2=1$ we have 
\begin{equation}\label{eq:Cmux}
x^T\left( \sum_{j=1}^k |C_j|\wh{\mu}_j\wh{\mu}_j^T - \sum_{j=1}^k |C_j|{\mu}_j{\mu}_j^T \right)x\leq \frac{\sqrt{\epsilon}}{2\varphi}
\end{equation}
Putting \eqref{eq:Cmux} and \eqref{eq:lem9Cm}  for any vector any vector $x$ with $||x||_2=1$ we have that
\[x^T\left( \sum_{j=1}^k |C_j|\wh{\mu}_j\wh{\mu}_j^T - I \right)x\leq \frac{5\sqrt{\epsilon}}{\varphi}\] 
Hence we can write 
\begin{align*}
\wh{\mu}_i^T \left(\sum_{j=1}^k |C_j|\wh{\mu}_j\wh{\mu}_j^T \right) \wh{\mu}_i &= \wh{\mu}_i^T \left(\sum_{j=1}^k |C_j|\wh{\mu}_j\wh{\mu}_j^T -I \right)\wh{\mu}_i+ \wh{\mu}_i^T  \wh{\mu}_i 
\leq    \left(1+ \frac{5\sqrt{\epsilon}}{\varphi} \right) ||\wh{\mu}_i||_2^2  
\end{align*}
Therefore by \eqref{eq:qqt-val} we get
\begin{align*}
\wh{\mu}_i^T \wh{Z} \wh{Z}^{T} \wh{\mu}_i &= \wh{\mu}_i^T  \left(\sum_{j=1}^k |C_j|\wh{\mu}_j\wh{\mu}_j^T \right)\wh{\mu}_i- |C_i|\cdot||\wh{\mu}_i||_2^4 
\leq \left(1+ \frac{5\sqrt{\epsilon}}{\varphi} -|C_i|\cdot||\wh{\mu}_i||_2^2 \right) ||\wh{\mu}_i||_2^2 
\end{align*}
By Lemma \ref{lem:dotmu}, and since $||\wh{\mu_i}||\geq (1-\zeta)||\mu_i||_2$ and $\zeta \leq \frac{\sqrt{\epsilon}}{20\cdot k\cdot \varphi}$ we have that 
\[|C_i|\cdot||\wh{\mu}_i||_2^2 \geq \left(1- \frac{4\sqrt{\epsilon}}{\varphi} \right)  (1-\zeta)^2 \geq  1- \frac{5 \sqrt{\epsilon}}{\varphi} \]
Thus we get
\begin{align*}
\wh{\mu}_i^T \wh{Z}\wh{Z} ^{T} \wh{\mu}_i 
\leq \left(1+ \frac{5\sqrt{\epsilon}}{\varphi} -|C_i|\cdot||\wh{\mu}_i||_2^2 \right) ||\wh{\mu}_i||_2^2 
\leq \left(1+ \frac{5\sqrt{\epsilon}}{\varphi} -1+\frac{5\sqrt{\epsilon}}{\varphi} \right) ||\wh{\mu}_i||_2^2 
\leq \frac{10\sqrt{\epsilon}}{\varphi} \cdot ||\wh{\mu}_i||_2^2
\end{align*}
\end{proof}
We now prove the main result of this section (Lemma \ref{lem:apxdosubspace}).
\lemapxdosubspace*
\begin{proof}
\textbf{Proof of item \eqref{itm1}:} Since $\wh{\Pi}$ is a orthogonal projection matrix we have $||\wh{\Pi}||_2=1$. Hence, we have 
\[||\wh{\Pi}\wh{\mu}_i||_2^2\leq ||\wh{\mu}_i||_2^2 \leq \left(1+\frac{20\sqrt{\epsilon}}{\varphi}\right)||\wh{\mu}_i||_2^2\text{.}\]
Thus it's left to prove $||\wh{\Pi}\wh{\mu}_i||_2^2 \geq \left(1-\frac{20\sqrt{\epsilon}}{\varphi}\right)||\wh{\mu}_i||_2^2 $. Note that by Pythagoras $||\wh{\Pi}\wh{\mu}_i||^2_2 = ||\wh{\mu}_i||^2_2 - ||(I-\wh{\Pi})\wh{\mu}_i||^2_2$. We will prove $||(I-\wh{\Pi})\wh{\mu}_i||^2_2 \leq \frac{20\sqrt{\epsilon}}{\varphi}||\wh{\mu}_i||^2_2$ which implies
\[||\wh{\Pi}\wh{\mu}_i||^2_2 \geq \left(1 - 20\frac{\sqrt{\epsilon}}{\varphi} \right) ||\wh{\mu}_i||_2^2\text{.}\]

Thus in order to complete the proof we need to show $||(I-\wh{\Pi})\wh{\mu}_i||^2_2 \leq \frac{20\sqrt{\epsilon}}{\varphi}||\wh{\mu}_i||^2_2$. Let $S'=\{\wh{\mu}_1,\ldots,\wh{\mu}_k\}\setminus \{\wh{\mu}_i\}$. Let $\wh{\Pi}'$ denote the orthogonal projection matrix into $span(S')^{\perp}$. Note that $S\subseteq S'$, hence $span(S)$ is a subspace of $span(S')$, therefore we have $||(I-\wh{\Pi})\wh{\mu}_i||^2_2\leq ||(I-\wh{\Pi}')\wh{\mu}_i||^2_2$. Thus it suffices to prove $||(I-\wh{\Pi}')\wh{\mu}_i||^2_2\leq \frac{20\sqrt{\epsilon}}{\varphi}||\wh{\mu}_i||^2_2$. Let $\wh{H}=[\wh{\mu}_1,\wh{\mu}_2, \ldots, \wh{\mu}_{i-1}, \wh{\mu}_{i+1}, \ldots, \wh{\mu}_k]$ denote a matrix such that its columns are the vectors in $S'$.  Let $\wh{W}\in \R^{(k-1)\times (k-1)}$ denote a diagonal matrix such that for all $j<i$ we have $\wh{W}(j,j)=\sqrt{|C_j|}$ and for all $j\geq i$ we have $\wh{W}(j,j)=\sqrt{|C_{j+1}|}$. Let $\wh{Z}=\wh{H}\wh{W}$. Then the orthogonal projection matrix onto the span of $S'$ is defined as $(I-\wh{\Pi}')=\wh{Z}(\wh{Z}^T \wh{Z})^{-1} \wh{Z}^{T}$. By Lemma \ref{lem:apxQQ-1} item \eqref{itm2:apxZZ}, $(\wh{Z}^T \wh{Z})^{-1}$ is spectrally close to $I$, hence, $(\wh{Z}^T \wh{Z})^{-1}$ exists. Therefore we have 
\begin{align}
||(I-\wh{\Pi}')\wh{\mu}_i||^2_2&= \wh{\mu}_i^T \wh{Z}(\wh{Z}^T \wh{Z})^{-1} \wh{Z}^{T} \wh{\mu}_i \nonumber \\
&= \wh{\mu}_i^T \wh{Z}((\wh{Z}^T \wh{Z})^{-1}- I) \wh{Z}^{T} \wh{\mu}_i + \wh{\mu}_i^T \wh{Z} \wh{Z}^{T} \wh{\mu}_i \label{eq:hat-pi-11}
\end{align}
By Lemma \ref{lem:apxQQ-1} item \eqref{itm2:apxZZ}  we have 
\begin{equation}\label{eq:hat-pi-22}
\left| \wh{\mu}_i^T \wh{Z} \left((\wh{Z}^T \wh{Z})^{-1}-I \right)\wh{Z}^{T}  \wh{\mu}_i \right| \leq \frac{5\sqrt{\epsilon}}{\varphi}||\wh{Z}^{T} \wh{\mu}_i ||_2^2
\end{equation}
Thus we get
\begin{align*}
||(I-\wh{\Pi}')\wh{\mu}_i||^2_2 &\leq \wh{\mu}_i^T \wh{Z}((\wh{Z}^T \wh{Z})^{-1}- I) \wh{Z}^{T} \wh{\mu}_i + \wh{\mu}_i^T \wh{Z} \wh{Z}^{T} \wh{\mu}_i && \text{By \eqref{eq:hat-pi-11}}\\
&\leq \left(\frac{5\sqrt{\epsilon}}{\varphi}+1 \right)||\wh{Z}^{T} \wh{\mu}_i ||_2^2 &&\text{By \eqref{eq:hat-pi-22}} \\
&\leq 2 \cdot ||\wh{Z}^{T} \wh{\mu}_i ||_2^2 &&\text{For small enough }\frac{\epsilon}{\varphi^2} \\
\end{align*}
By Lemma \ref{lem:apxQQT} we have 
\[||\wh{Z}^{T} \wh{\mu}_i ||_2^2= \wh{\mu}_i^T \wh{Z} \wh{Z}^{T} \wh{\mu}_i \leq \frac{10\sqrt{\epsilon}}{\varphi} \cdot ||\wh{\mu}_i||_2^2\]
Therefore  we get
\begin{equation}\label{eq:Pdotbnd}
||(I-\wh{\Pi})\wh{\mu}_i||^2_2 \leq ||(I-\wh{\Pi}')\wh{\mu}_i||^2_2 \leq 2||\wh{Z}^{T} \wh{\mu}_i ||_2^2\leq \frac{20\sqrt{\epsilon}}{\varphi} ||\wh{\mu}_i||_2^2 
\end{equation} 
Hence,
\[||\wh{\Pi}\wh{\mu}_i||^2_2 \geq \left(1 - 20\frac{\sqrt{\epsilon}}{\varphi} ||\wh{\mu}_i||_2^2\right)\text{.}\]
\textbf{Proof of item \eqref{itm2}:} 
Note that
\begin{align*}
\langle \wh{\mu}_i, \wh{\mu}_j \rangle &= \langle (I-\wh{\Pi})\wh{\mu}_i+\wh{\Pi}\wh{\mu}_i ,  (I-\wh{\Pi})\wh{\mu}_j+\wh{\Pi}\wh{\mu}_j\rangle = \langle (I-\wh{\Pi})\wh{\mu}_i ,  (I-\wh{\Pi})\wh{\mu}_j\rangle + \langle \wh{\Pi}\wh{\mu}_i ,  \wh{\Pi}\wh{\mu}_j\rangle 
\end{align*}
Thus by triangle inequality we have
\[ |\langle \wh{\Pi}\wh{\mu}_i ,  \wh{\Pi}\wh{\mu}_j\rangle|  \leq |\langle \wh{\mu}_i, \wh{\mu}_j \rangle |+ |\langle (I-\wh{\Pi})\wh{\mu}_i ,  (I-\wh{\Pi})\wh{\mu}_j\rangle|\]
By Cauchy-Schwarz we have
\begin{align*}
|\langle (I-\wh{\Pi})\wh{\mu}_i ,  (I-\wh{\Pi})\wh{\mu}_j\rangle| 
&\leq || (I-\wh{\Pi})\wh{\mu}_i||_2|| (I-\wh{\Pi})\wh{\mu}_i||_2 \\
&\leq \frac{20\sqrt{\epsilon}}{\varphi} ||\wh{\mu}_i||_2  ||\wh{\mu}_j||_2 &&\text{By \eqref{eq:Pdotbnd}}\\
&\leq \frac{40\sqrt{\epsilon}}{\varphi}\cdot\frac{1}{\sqrt{|C_i||C_j|}} &&\text{By  Lemma \ref{lem:dotmu} and }  ||\wh{\mu_i}-\mu_i||_2 \leq \zeta||\mu_i||_2
\end{align*}
Also for any $i ,j\in [k]$ we have
\begin{align}
&\left|  \rdp{\wh{\mu}_i, \wh{\mu}_j} - \rdp{\mu_i, \mu_j}\right| \nonumber\\
&= \left| \rdp{ {\mu}_i+(\wh{\mu}_i- {\mu}_i),  {\mu}_j+(\wh{\mu}_j- {\mu}_j)} - \rdp{{\mu}_i,  {\mu}_j} \right|   \nonumber\\
&\leq  | \rdp{ \wh{\mu}_i- {\mu}_i , \wh{\mu}_j- {\mu}_j}| + |\rdp{ \wh{\mu}_i- {\mu}_i,  \mu_j}| + |\rdp{ \wh{\mu}_j- {\mu}_j,  \mu_i} |  &&\text{By triangle inequality}  \nonumber \\
&\leq \ || \wh{\mu}_i- {\mu}_i ||_2|| \wh{\mu}_j- {\mu}_j||_2 + || \wh{\mu}_i- {\mu}_i||_2||\mu_j||_2 + || \wh{\mu}_j- {\mu}_j||_2||  \mu_i||_2 &&\text{By Cauchy-Schwarz}  \nonumber \\
&\leq   (\zeta^2+2\zeta)\left( ||\mu_i||_2 ||\mu_j||_2   \right) &&\text{Since }   ||\wh{\mu_i}-\mu_i||_2 \leq \zeta||\mu_i||_2 \text{ for all }i \nonumber \\
&\leq   6\cdot \zeta \cdot \frac{1}{\sqrt{|C_i||C_j|}} && \text{By Lemma \ref{lem:dotmu} }  ||\mu_i||^2_2\leq \frac{2}{|C_i|} \text{ for all }i   \label{eq:6.4muijhats}
\end{align}
Note that
\begin{align*}
|\rdp{\wh{\mu}_i, \wh{\mu}_j}| &\leq |\rdp{{\mu}_i, {\mu}_j}| + | \rdp{{\mu}_i, {\mu}_j} - \rdp{\wh{\mu}_i, \wh{\mu}_j}  | && \text{By triangle inequality} \\
&\leq  \frac{8\sqrt{\epsilon}}{\varphi}\cdot\frac{1}{\sqrt{|C_i||C_j|}}   + 6\zeta \cdot\frac{1}{\sqrt{|C_i||C_j|}} && \text{By Lemma \ref{lem:dotmu} and \eqref{eq:6.4muijhats}} \\
&\leq \frac{10\sqrt{\epsilon}}{\varphi}\frac{1}{\sqrt{|C_i||C_j|}} && \text{Since }\zeta \leq \frac{\sqrt{\epsilon}}{20\cdot k\cdot \varphi}
\end{align*}
Therefore we get 
\[|\langle \wh{\Pi}\wh{\mu}_i ,  \wh{\Pi}\wh{\mu}_j\rangle|  \leq |\langle \wh{\mu}_i, \wh{\mu}_j \rangle |+ |\langle (I-\wh{\Pi})\wh{\mu}_i ,  (I-\wh{\Pi})\wh{\mu}_j\rangle| \leq \frac{50\sqrt{\epsilon}}{\varphi}\cdot \frac{1}{\sqrt{|C_i||C_j|}}\text{.}\]
\end{proof}

\newpage

\subsection{Partitioning scheme works with \textit{approximate} cluster means \& dot products}\label{sec:h_works}

In Section~\ref{sec:realcenterswork} we showed that the partitioning scheme works if we have access to real centers (i.e. $\mu_1, \dots, \mu_k$), to exact dot product evaluations (i.e $\rdp{\cdot, \cdot}$) and \textsc{OuterConductance} is precise.

In this section we show that approximations to all above is enough for the partitioning scheme to work. More precisely we show that if we have access only to $\adp{\cdot, \cdot} \approx \rdp{\cdot, \cdot}$, the search procedure finds $\hat{\mu}_i$'s that are only approximately equal to $\mu_i$'s and \textsc{OuterConductance} is only approximately correct then \textsc{FindCenters} still succeeds with high probability.

In order to prove such a statement we first show a technical Lemma (\textit{Lemma~\ref{lem:qualityofapproximation}}), that relates the approximate dot product with approximate centers to the dot product with the actual cluster centers. 

Note that the following Lemma~\ref{lem:qualityofapproximation} works for any $S \subset \{\mu_1, \dots, \mu_k \}$ and the corresponding $\wh{S}$. This is useful for application in Lemma~\ref{lem:ifclosetomuswin} because it allows to reason about candidate sets $\wh{C}^(T_1, \dots, T_b)_{\wh{\mu}}$, after we associate $\bigcup_{i \in [b]} T_i$ with $\wh{S}$.  

\begin{restatable}{lemma}{lemmqualityofapproximation}\label{lem:qualityofapproximation}
Let $k \geq 2$, $\varphi \in (0,1)$, $\frac{\e }{\varphi^2} $ be smaller than a sufficiently small constant.
Let $G=(V,E)$ be a $d$-regular graph that admits a $(k,\varphi,\epsilon)$-clustering $C_1,\dots,C_k$. 
Then conditioned on the success of the spectral dot product oracle the following conditions hold.

Let  $\widehat{\mu}_1, \widehat{\mu}_2,  \dots, \widehat{\mu}_k$ be such that for all $i \in [k]$ $ \|\wh{\mu}_i - \mu_i\|^2 \leq 10^{-12} \cdot \frac{\e}{\varphi^2 \cdot k^2}\|\mu_i\|^2$. Let $i \in[k]$ and $S \subseteq \{\mu_1, \dots, \mu_k \} \setminus \{\mu_i \}$ and $\wh{S} \subseteq \{\hat{\mu}_1, \dots, \hat{\mu}_k \} \setminus \{ \hat{\mu}_i \}$ be the corresponding subset to $S$. Let $\Pi$ be the orthogonal projection onto $span(S)^{\perp}$ and $\wh{\Pi}$ be the orthogonal projection onto $span(\wh{S})^{\perp}$. 
Let also $\pi_i : \mathbb{R}^k \xrightarrow{} \mathbb{R}^k$ be the projection onto the subspace spanned by $\Pi\mu_i$ and $\wh{\Pi} \widehat{\mu}_i$. Then if $\|\Pi_i f_x\|^2 \leq \frac{10^4}{ \min_{p \in [k]} |C_p|}$ then:

$$ \left|\frac{\langle f_x,\Pi\mu_i \rangle}{\|\Pi\mu_i\|^2} - \frac{\adp{f_x, \wh{\Pi} \widehat{\mu}_i}}{\an{\wh{\Pi} \widehat{\mu}_i}^2} \right| \leq 0.02 $$

Furthermore if $\widehat{\mu}_i$'s are averages of $s$ points, then $\frac{\adp{f_x, \wh{\Pi} \widehat{\mu}_i}}{\an{\wh{\Pi} \widehat{\mu}_i}^2} $ can be computed in $\widetilde{O}_{\varphi} \left(s^4 \cdot \left(\frac{k}{\e} \right)^{O(1)} \cdot n^{1/2 + O(\e/\varphi^2)} \right)$ time with preprocessing time of $\widetilde{O}_{\varphi} \left( \left(\frac{k}{\e} \right)^{O(1)} \cdot n^{1/2 + O(\e/\varphi^2)} \right)$ and space $ \widetilde{O}_{\varphi} \left( \left(\frac{k}{\e} \right)^{O(1)} \cdot n^{1/2 + O(\e/\varphi^2)} \right) $
\end{restatable}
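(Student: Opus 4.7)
The plan is to split the error into two pieces: an algorithmic error coming from the approximate dot product oracle, and a geometric error coming from the fact that $\widehat{\mu}_i \neq \mu_i$ and $\widehat{\Pi} \neq \Pi$. Formally, I will use the triangle inequality
\[
\left|\frac{\rdp{f_x,\Pi\mu_i}}{\|\Pi\mu_i\|^2} - \frac{\adp{f_x,\widehat{\Pi}\widehat{\mu}_i}}{\an{\widehat{\Pi}\widehat{\mu}_i}^2}\right| \leq \left|\frac{\rdp{f_x,\Pi\mu_i}}{\|\Pi\mu_i\|^2} - \frac{\rdp{f_x,\widehat{\Pi}\widehat{\mu}_i}}{\|\widehat{\Pi}\widehat{\mu}_i\|^2}\right| + \left|\frac{\rdp{f_x,\widehat{\Pi}\widehat{\mu}_i}}{\|\widehat{\Pi}\widehat{\mu}_i\|^2} - \frac{\adp{f_x,\widehat{\Pi}\widehat{\mu}_i}}{\an{\widehat{\Pi}\widehat{\mu}_i}^2}\right|
\]
and argue each term is at most $0.01$.

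For the algorithmic term, I will invoke Corollary~\ref{corr:dotpi} with a suitably small precision parameter $\xi$. Concretely, conditioning on the success event of the spectral dot product oracle, Corollary~\ref{corr:dotpi} gives $|\adp{f_x,\widehat{\Pi}\widehat{\mu}_i} - \rdp{f_x,\widehat{\Pi}\widehat{\mu}_i}| \leq \xi/n$ and $|\an{\widehat{\Pi}\widehat{\mu}_i}^2 - \|\widehat{\Pi}\widehat{\mu}_i\|^2| \leq \xi/n$. Using Lemma~\ref{lem:apxdosubspace} together with $\|\widehat{\mu}_i - \mu_i\|\leq \zeta\|\mu_i\|$ to lower bound $\|\widehat{\Pi}\widehat{\mu}_i\|^2 \geq \Omega(1/|C_i|) = \Omega(k/n)$, and using the $\ell_\infty$-bound of Lemma~\ref{lem:l-inf-bnd} to upper bound $|\rdp{f_x,\widehat{\Pi}\widehat{\mu}_i}| \leq \rn{f_x}\rn{\widehat{\mu}_i} \leq O(k/n \cdot n^{O(\epsilon/\varphi^2)})$, choosing $\xi = \Theta(k^{-O(1)} n^{-O(\epsilon/\varphi^2)})$ makes this error at most $0.01$.

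For the geometric term, I will show that the unit vector $\widehat{\Pi}\widehat{\mu}_i/\|\widehat{\Pi}\widehat{\mu}_i\|$ is very close to $\Pi\mu_i/\|\Pi\mu_i\|$ and that $\|\widehat{\Pi}\widehat{\mu}_i\|$ is multiplicatively close to $\|\Pi\mu_i\|$. The first bound follows from $\|\widehat{\Pi}\widehat{\mu}_i - \Pi\mu_i\| \leq \|\widehat{\Pi}\widehat{\mu}_i - \widehat{\Pi}\mu_i\| + \|(\widehat{\Pi}-\Pi)\mu_i\|$; the first summand is at most $\zeta\|\mu_i\|$ since $\|\widehat{\Pi}\|_2 \leq 1$ and $\|\widehat{\mu}_i - \mu_i\| \leq \zeta\|\mu_i\|$, and the second summand follows from a Davis--Kahan-style argument using Lemma~\ref{lem:dotmu} and Lemma~\ref{lem:apxdosubspace} applied to the nearly-orthogonal systems of (exact and approximate) centers. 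The ratio bound then follows from Lemma~\ref{lem:dosubspace} and Lemma~\ref{lem:apxdosubspace}, both giving that the projected norms are within $(1 \pm O(\sqrt{\epsilon}/\varphi))$ of $\|\mu_i\|$ and $\|\widehat{\mu}_i\|$ respectively. Then, projecting $f_x$ onto $\pi_i$, we have $\rdp{f_x,\Pi\mu_i} = \rdp{\pi_i f_x, \Pi\mu_i}$ and $\rdp{f_x,\widehat{\Pi}\widehat{\mu}_i} = \rdp{\pi_i f_x, \widehat{\Pi}\widehat{\mu}_i}$, so Cauchy--Schwarz combined with the hypothesis $\|\pi_i f_x\|^2 \leq 10^4/\min_p |C_p|$ yields $|\rdp{\pi_i f_x, \Pi\mu_i - \widehat{\Pi}\widehat{\mu}_i}| \leq \|\pi_i f_x\| \cdot O(\sqrt{\zeta}\|\mu_i\|)$. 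Dividing by $\|\Pi\mu_i\|^2 = \Theta(1/|C_i|)$ and using $\zeta^2 = 10^{-12} \epsilon/(\varphi^2 k^2)$ from the hypothesis shows this contribution is at most $0.01$.

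The main obstacle will be the careful quantitative tracking of the geometric perturbation term, since both the projection and the vector being projected are changing simultaneously, and we must compare the resulting two-dimensional configurations of $(\Pi\mu_i, \widehat{\Pi}\widehat{\mu}_i)$ in a way that is tight enough to give a constant like $0.02$ in the end (rather than a bound that depends on $\sqrt{\epsilon}/\varphi$ through constants we cannot control). The reason this is ultimately fine is that the approximation quality hypothesis $\zeta^2 = 10^{-12}\epsilon/(\varphi^2 k^2)$ is polynomially stronger than the typical $\sqrt{\epsilon}/\varphi$ slack appearing in Lemma~\ref{lem:dosubspace} and Lemma~\ref{lem:apxdosubspace}, so the geometric errors are actually much smaller than $0.01$. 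Finally, for the runtime statement, item~\eqref{itm:dtpi1} and item~\eqref{itm:dtpi2} of Corollary~\ref{corr:dotpi} give the claimed bounds with $b = s$ and $r \leq k$, after invoking \textsc{InitializeOracle} once in preprocessing as in Theorem~\ref{thm:dot}.
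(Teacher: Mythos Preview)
Your two-term split (geometric vs.\ algorithmic, each $\leq 0.01$) is exactly the paper's structure, and both terms can be closed essentially as you describe. The one noteworthy difference is in how you bound $\|\wh{\Pi}\wh{\mu}_i - \Pi\mu_i\|$: the paper avoids any Davis--Kahan/subspace-perturbation argument and instead routes through the unprojected vectors,
\[
\|\wh{\Pi}\wh{\mu}_i - \Pi\mu_i\| \;\leq\; \|\wh{\Pi}\wh{\mu}_i - \wh{\mu}_i\| + \|\wh{\mu}_i - \mu_i\| + \|\mu_i - \Pi\mu_i\| \;\leq\; O\!\left(\tfrac{\e^{1/4}}{\sqrt{\varphi}}\right)\|\mu_i\|,
\]
using Lemmas~\ref{lem:dosubspace} and~\ref{lem:apxdosubspace} for the outer terms. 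It then bounds $\bigl\|\tfrac{\Pi\mu_i}{\|\Pi\mu_i\|^2}-\tfrac{\wh{\Pi}\wh{\mu}_i}{\|\wh{\Pi}\wh{\mu}_i\|^2}\bigr\|$ by combining this with the norm comparisons from the same two lemmas. Thus in the paper's version the dominant geometric error is the $\e^{1/4}/\sqrt{\varphi}$ slack inherited from those lemmas, not anything involving $\zeta$; the argument closes simply because $\e/\varphi^2$ is assumed sufficiently small, not because ``$\zeta$ is polynomially stronger.'' Your Davis--Kahan route is legitimate and would in fact give a sharper bound (of order $O(\zeta\sqrt{k})\|\mu_i\|$, not the $O(\sqrt{\zeta})$ you wrote, which looks like a slip), at the cost of a subspace-perturbation step not already packaged in the paper's lemmas (though the near-orthonormality input is available via Lemma~\ref{lem:apxQQ-1}). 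For the algorithmic term the paper also reuses the $\|\pi_i f_x\|$ hypothesis (writing $\rdp{f_x,\wh{\Pi}\wh{\mu}_i}=\rdp{\pi_i f_x,\wh{\Pi}\wh{\mu}_i}$) with precision $\xi = 10^{-6}\sqrt{\e}/\varphi$ in Corollary~\ref{corr:dotpi}, rather than the $\ell_\infty$ bound from Lemma~\ref{lem:l-inf-bnd}; your variant with $\xi$ carrying an $n^{-O(\e/\varphi^2)}$ factor works equally well and gives the same runtime.
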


\begin{proof}
First we prove the runtime guarantee and then we show correctness.
\noindent
\paragraph{Runtime.}
We first bound the running time. 
If we set the precision parameter of Algorithm~\ref{alg:dot-apx-pi} to $\xi = 10^{-6} \cdot \frac{\sqrt{\e}}{\varphi}$ then by Theorem~\ref{thm:dot} the preprocessing time takes 
$ \widetilde{O}_{\varphi} \left( \left(\frac{k}{\e} \right)^{O(1)} \cdot n^{1/2 + O(\e/\varphi^2)} \right)$ time, $\widetilde{O}_{\varphi} \left( \left(\frac{k}{\e} \right)^{O(1)} \cdot n^{1/2 + O(\e/\varphi^2)} \right)$ space, and by Corollary~\ref{corr:dotpi} 
computing $\frac{\adp{f_x, \wh{\Pi} \widehat{\mu}_i}}{\an{\wh{\Pi} \widehat{\mu}_i}^2}$ takes
$\widetilde{O}_{\varphi} \left(s^4 \cdot \left(\frac{k}{\e} \right)^{O(1)} \cdot n^{1/2 + O(\e/\varphi^2)} \right)$ time.

\paragraph{Correctness.} Now we show that we also obtain a good approximation. We will show it in two steps:
\begin{enumerate}
    \item\label{step:1} $\left|\frac{\langle f_x,\Pi\mu_i \rangle}{\|\Pi\mu_i\|^2} - \frac{\rdp{f_x, \wh{\Pi} \widehat{\mu}_i}}{\rn{\wh{\Pi} \widehat{\mu}_i}^2} \right|  \leq 0.01$
    \item\label{step:2} $\left|\frac{\langle f_x,\wh{\Pi} \widehat{\mu}_i \rangle}{\|\wh{\Pi} \widehat{\mu}_i\|^2} - \frac{\adp{f_x, \wh{\Pi} \widehat{\mu}_i}}{\an{\wh{\Pi} \widehat{\mu}_i}^2} \right| \leq 0.01 $
\end{enumerate}

If we are able to prove \ref{step:1} and \ref{step:2} then the claim of the Lemma follows from triangle inequality.

\noindent
Before we present the two proofs we show a useful fact:
\begin{align}
\rn{\wh{\Pi} \widehat{\mu}_i - \Pi\mu_i}
&\leq 
\rn{\wh{\Pi} \widehat{\mu}_i - \hat{\mu}_i} + \rn{\Pi\mu_i - \mu_i} + \rn{\wh{\mu}_i - \mu_i} && \text{By triangle inequality} \nonumber \\
&\leq \frac{20 \e^{1/4}}{\sqrt{\varphi}} \rn{\hat{\mu}_i} +
\frac{16 \e^{1/4}}{\sqrt{\varphi}} \rn{\mu_i} + 10^{-6} \cdot \frac{\sqrt{\e}}{\varphi \cdot k}\|\mu_i\| && \text{By Lemma~\ref{lem:apxdosubspace},~\ref{lem:dosubspace} and the bound on } \|\wh{\mu}_i - \mu_i\|^2 \nonumber \\
&\leq \frac{40 \e^{1/4}}{\sqrt{\varphi}} \rn{\mu_i}&& \text{As } \|\wh{\mu}_i - \mu_i\|^2 \leq 10^{-12} \cdot \frac{\e}{\varphi^2 \cdot k^2}\|\mu_i\|^2 \label{eq:pihatmuhat}
\end{align}

\noindent
\textbf{Proof of \ref{step:1}:} Notice that 
\begin{align}
\left|
\frac{\langle f_x,\Pi\mu_i \rangle}{||\Pi\mu_i||^2} - 
\frac{\langle f_x, \wh{\Pi} \wh{\mu}_i \rangle}{||\wh{\Pi} \wh{\mu}_i||^2}  
\right| 
&= 
\left| \rdp{ f_x, \frac{\Pi\mu_i}{||\Pi\mu_i||^2} - \frac{\wh{\Pi} \wh{\mu}_i}{||\wh{\Pi} \wh{\mu}_i||^2} } \right| \nonumber \\
&= 
\left| \rdp{ \Pi_i f_x, \frac{\Pi\mu_i}{||\Pi\mu_i||^2} - \frac{\wh{\Pi} \wh{\mu}_i}{||\wh{\Pi} \wh{\mu}_i||^2} } \right| && \text{By definition of } \pi_i \nonumber \\
&\leq
\rn{\Pi_i f_x} \left|\left|\frac{\Pi\mu_i}{\rn{\Pi\mu_i}^2} - \frac{\wh{\Pi} \wh{\mu}_i}{\rn{\wh{\Pi} \wh{\mu}_i}^2}\right|\right| && \text{By Cauchy-Schwarz} \label{eq:splittwo2}
\end{align}
First we will upper bound $\left|\left|\frac{\Pi\mu_i}{\rn{\Pi\mu_i}^2} - \frac{\wh{\Pi} \wh{\mu}_i}{\rn{\wh{\Pi} \wh{\mu}_i}^2}\right|\right|$. We split it into two cases:

\textbf{Case 1.} 
If $\frac{\Pi\mu_i}{\rn{\Pi\mu_i}^2} \geq \frac{\wh{\Pi} \wh{\mu}_i}{\rn{\wh{\Pi} \wh{\mu}_i}^2}$ then we have:
\begin{align}
\left|\left|\frac{\Pi\mu_i}{\rn{\Pi\mu_i}^2} - \frac{\wh{\Pi} \wh{\mu}_i}{\rn{\wh{\Pi} \wh{\mu}_i}^2}\right|\right|
&\leq 
\left|\left|\frac{\Pi\mu_i}{(1 - \frac{16 \sqrt{\e}}{\varphi} )\rn{\mu_i}^2} - \frac{\wh{\Pi} \wh{\mu}_i}{\rn{\wh{\Pi} \wh{\mu}_i}^2}\right|\right| && \text{By Lemma~\ref{lem:dosubspace}} \nonumber \\
&\leq
\left|\left|\frac{\Pi\mu_i}{(1 - \frac{16 \sqrt{\e}}{\varphi} )\rn{\mu_i}^2} - \frac{\wh{\Pi} \wh{\mu}_i}{(1 + \frac{20 \sqrt{\e}}{\varphi})(1 + 10^{-12} \cdot \frac{\e}{\varphi^2 \cdot k^2} )\rn{\mu_i}^2}\right|\right| &&  \text{Lemma~\ref{lem:apxdosubspace}, assumptions}  \nonumber \\
&\leq \frac{2}{\rn{\mu_i}^2} \left|\left|\Pi\mu_i - \left(1 - \frac{1600 \sqrt{\e}}{\varphi}\right)\wh{\Pi} \widehat{\mu}_i\right|\right| \nonumber \\
&\leq \frac{2}{\rn{\mu_i}^2} \left( \left|\left|\frac{1600 \sqrt{\e}}{\varphi} \Pi\mu_i \right|\right| +  \left(1 - \frac{1600 \sqrt{\e}}{\varphi}\right)\rn{\wh{\Pi} \widehat{\mu}_i - \Pi\mu_i} \right) && \text{By triangle inequality} \nonumber \\
&\leq \frac{12800 \sqrt{\e}}{\varphi} \frac{1}{\rn{\mu_i}} &&\text{By \eqref{eq:pihatmuhat} and Lemma~\ref{lem:dosubspace}} \nonumber
\end{align}

\textbf{Case 2.} 
If $\frac{\Pi\mu_i}{\rn{\Pi\mu_i}^2} < \frac{\wh{\Pi} \wh{\mu}_i}{\rn{\wh{\Pi} \wh{\mu}_i}^2}$ then we have:
\begin{align}
\left|\left|\frac{\Pi\mu_i}{\rn{\Pi\mu_i}^2} - \frac{\wh{\Pi} \wh{\mu}_i}{\rn{\wh{\Pi} \wh{\mu}_i}^2}\right|\right|
&\leq 
\left|\left|\frac{\Pi\mu_i}{(1 + \frac{16 \sqrt{\e}}{\varphi} )\rn{\mu_i}^2} - \frac{\wh{\Pi} \wh{\mu}_i}{\rn{\wh{\Pi} \wh{\mu}_i}^2}\right|\right| && \text{By Lemma~\ref{lem:dosubspace}} \nonumber \\
&\leq
\left|\left|\frac{\Pi\mu_i}{(1 + \frac{16 \sqrt{\e}}{\varphi} )\rn{\mu_i}^2} - \frac{\wh{\Pi} \wh{\mu}_i}{(1 - \frac{20 \sqrt{\e}}{\varphi})(1 - 10^{-12} \cdot \frac{\e}{\varphi^2 \cdot k^2} )\rn{\mu_i}^2}\right|\right| &&  \text{Lemma~\ref{lem:apxdosubspace}, assumptions}  \nonumber \\
&\leq \frac{2}{\rn{\mu_i}^2} \left|\left|\Pi\mu_i - \left(1 + \frac{1600 \sqrt{\e}}{\varphi}\right)\wh{\Pi} \widehat{\mu}_i\right|\right| \nonumber \\
&\leq \frac{2}{\rn{\mu_i}^2} \left( \left|\left|\frac{1600 \sqrt{\e}}{\varphi} \Pi\mu_i \right|\right| +  \left(1 + \frac{1600 \sqrt{\e}}{\varphi}\right)\rn{\wh{\Pi} \widehat{\mu}_i - \Pi\mu_i} \right) && \text{By triangle inequality} \nonumber \\
&\leq \frac{12800 \sqrt{\e}}{\varphi} \frac{1}{\rn{\mu_i}} &&\text{By \eqref{eq:pihatmuhat} and Lemma~\ref{lem:dosubspace}} \nonumber
\end{align}

\noindent
Combining the two cases we get:
$$\left|\left|\frac{\Pi\mu_i}{\rn{\Pi\mu_i}^2} - \frac{\wh{\Pi} \wh{\mu}_i}{\rn{\wh{\Pi} \wh{\mu}_i}^2}\right|\right| \leq \frac{12800 \sqrt{\e}}{\varphi} \frac{1}{\rn{\mu_i}} \text{.}
$$

\noindent Substituting into \eqref{eq:splittwo2} we get:

\begin{align}
\left|
\frac{\langle f_x,\Pi\mu_i \rangle}{||\Pi\mu_i||^2} - 
\frac{\langle f_x, \wh{\Pi} \wh{\mu}_i \rangle}{||\wh{\Pi} \wh{\mu}_i||^2}  
\right|
&\leq \rn{\Pi_i f_x} \cdot \frac{12800 \sqrt{\e}}{\varphi} \frac{1}{\rn{\mu_i}} \nonumber \\
&\leq \frac{100}{\sqrt{\min_{p \in [k]} |C_p|}}\cdot \frac{12800 \sqrt{\e}}{\varphi} \frac{1}{\rn{\mu_i}} &&\text{By assumption of the Lemma} \nonumber \\
&\leq 0.005 \frac{1}{\sqrt{\max_{p \in [k]} |C_p|} \cdot \rn{\mu_i}} && \text{As } \frac{ \e}{ \varphi^2} \text{ is sufficiently small and } \frac{\max_{p \in [k]} |C_p|}{\min_{p \in [k]} |C_p|} = O(1) \nonumber \\
&\leq 0.01 && \text{By Lemma~\ref{lem:dotmu}} \nonumber
\end{align}

\noindent
\textbf{Proof of \ref{step:2}:}
\begin{align}
\an{\wh{\Pi} \widehat{\mu}_i}^2
&\geq \rn{\wh{\Pi} \wh{\mu}_i}^2 - 10^{-6} \cdot \frac{\sqrt{\e}}{\varphi} \cdot n^{-1} &&\text{By Corollary~\ref{corr:dotpi}, setting of $\xi$ and assumptions}  \nonumber\\
&\geq \left(1 - \frac{20 \sqrt{\e}}{\varphi} \right) \cdot \rn{\wh{\mu}_i}^2 - 0.01 \cdot n^{-1} &&\text{By Lemma~\ref{lem:apxdosubspace}  and } \frac{\e}{\varphi^2} \text{ small} \nonumber \\
&\geq \left(1 -10^{-12} \frac{\e}{\varphi^2 \cdot k} \right) \cdot 0.99 \cdot \rn{\mu_i}^2 - 0.01 \cdot n^{-1} && \text{By } \rn{\wh{\mu}_i - \mu_i}^2 \leq 10^{-12} \frac{\e}{\varphi^2 \cdot k} \rn{\mu_i}^2 \text{ and } \frac{\e}{\varphi^2} \text{ small}\nonumber \\
&\geq \left(1 - \frac{4\sqrt{\e}}{\varphi} \right) \cdot 0.98 \cdot n^{-1} - 0.01 \cdot n^{-1} && \text{By Lemma~\ref{lem:dotmu}, } |C_i| \leq n \text{, } \frac{\e}{\varphi^2} \text{ small} \nonumber \\
&\geq 0.5 \cdot n^{-1} &&\text{As } \frac{\e}{\varphi^2} \text{ small} \label{eq:mihatapxnormlwrbnd}
\end{align}

\noindent                                    Next notice that:

\begin{align}
\left|\frac{\langle f_x,\wh{\Pi} \widehat{\mu}_i \rangle}{\|\wh{\Pi} \widehat{\mu}_i\|^2} - \frac{\adp{f_x, \wh{\Pi} \widehat{\mu}_i}}{\an{\wh{\Pi} \widehat{\mu}_i}^2} \right|
&\leq 
\left|\frac{\langle f_x,\wh{\Pi} \widehat{\mu}_i \rangle}{\|\wh{\Pi} \widehat{\mu}_i\|^2} - \frac{\rdp{f_x, \wh{\Pi} \widehat{\mu}_i}}{\an{\wh{\Pi} \widehat{\mu}_i}^2} \right| + \left|\frac{10^{-6} \cdot \frac{\sqrt{\e}}{\varphi} \cdot n^{-1}}{\an{\wh{\Pi} \widehat{\mu}_i}^2}\right|  &&\text{By Corollary~\ref{corr:dotpi}} \nonumber\\
&\leq \left|\rdp{f_x, \wh{\Pi} \wh{\mu}_i} \left( \frac{1}{\|\wh{\Pi} \widehat{\mu}_i\|^2} - \frac{1}{\an{\wh{\Pi} \widehat{\mu}_i}^2}\right) \right| + \left|\frac{10^{-6} \cdot n^{-1}}{0.5 \cdot n^{-1}}\right| && \text{By \eqref{eq:mihatapxnormlwrbnd} and } \frac{\e}{\varphi^2} \text{ small} \nonumber\\
&\leq \left|\rdp{f_x, \wh{\Pi} \wh{\mu}_i} \right| \left| \frac{1}{\|\wh{\Pi} \widehat{\mu}_i\|^2} - \frac{1}{\an{\wh{\Pi} \widehat{\mu}_i}^2} \right| + 10^{-5}
\label{eq:firstofthree}
\end{align}

\noindent Now we will separately bound $\left|\rdp{f_x, \wh{\Pi} \wh{\mu}_i} \right|$ and $\left| \frac{1}{\|\wh{\Pi} \widehat{\mu}_i\|^2} - \frac{1}{\an{\wh{\Pi} \widehat{\mu}_i}^2} \right|$ from \eqref{eq:firstofthree}. As $|\rdp{a,b}| \leq \rn{a} \cdot \rn{b}$ we get:
\begin{equation}\label{eq:dotproducttonorms}
\left|\rdp{f_x, \wh{\Pi} \wh{\mu}_i} \right|
\leq \rn{\Pi_i f_x} \cdot \rn{\wh{\Pi} \wh{\mu}_i}
\end{equation}
Now we bound the second term from \eqref{eq:firstofthree}:
\begin{align}
\left|\frac{1}{\|\wh{\Pi} \widehat{\mu}_i\|^2} - \frac{1}{\an{\wh{\Pi} \widehat{\mu}_i}^2} \right|
&= 
\left|
\frac{ \an{\wh{\Pi} \widehat{\mu}_i}^2 - ||\wh{\Pi} \widehat{\mu}_i||^2 }{\rn{\wh{\Pi} \widehat{\mu}_i}^2\an{\wh{\Pi} \widehat{\mu}_i}^2} 
\right| 
\nonumber\\
&\leq
\left|
\frac{10^{-6} \cdot \frac{\sqrt{\e}}{\varphi} \cdot n^{-1}}{\rn{\wh{\Pi} \widehat{\mu}_i}^2\an{\wh{\Pi} \widehat{\mu}_i}^2} 
\right| 
&& \text{Corollary~\ref{corr:dotpi}, setting of $\xi$ and assumptions} \nonumber\\
&\leq
10^{-5} \cdot \frac{\sqrt{\e}}{\varphi} \cdot \left|
\frac{ 0.5 \cdot n^{-1}}{\rn{\wh{\Pi} \widehat{\mu}_i}^2 \cdot 0.5 \cdot n^{-1}} 
\right|  && \text{By \eqref{eq:mihatapxnormlwrbnd} }  \nonumber\\
&\leq
10^{-5} \cdot \frac{\sqrt{\e}}{\varphi} \cdot \left|
\frac{1}{\rn{\wh{\Pi} \wh{\mu}_i} \cdot (\rn{\Pi\mu_i} - \frac{40\e^{1/4}}{\sqrt{\varphi}}\rn{\mu_i} ) } 
\right| && \text{By \eqref{eq:pihatmuhat}} \nonumber \\
&\leq 10^{-4} \cdot \frac{\sqrt{\e}}{\varphi} \cdot  \frac{ 1}{\rn{\wh{\Pi} \wh{\mu}_i}  \cdot \rn{\mu_i}  }  && \text{By Lemma~\ref{lem:dosubspace} and } \frac{\e}{\varphi^2} \text{ small}
\label{eq:secondofthree}
\end{align}

\noindent
Substituting \eqref{eq:dotproducttonorms} and \eqref{eq:secondofthree} in \eqref{eq:firstofthree} we get:

\begin{align*}
\left|\frac{\langle f_x,\wh{\Pi} \widehat{\mu}_i \rangle}{\|\wh{\Pi} \widehat{\mu}_i\|^2} - \frac{\adp{f_x, \wh{\Pi} \widehat{\mu}_i}}{\an{\wh{\Pi} \widehat{\mu}_i}^2} \right|
&\leq 
10^{-5} + 10^{-4} \cdot \frac{\sqrt{\e}}{\varphi} \cdot \frac{ \rn{\Pi_i f_x}}{\rn{\mu_i}  }  \\
&\leq 
10^{-5} + 10^{-4} \cdot \frac{\sqrt{\e}}{\varphi} \cdot \frac{100}{\sqrt{\min_{p \in [k]} |C_p|}} \cdot \frac{ 1}{\rn{\mu_i}  } && \text{By assumption} \\
&\leq 
10^{-5} + 10^{-3} \frac{1}{\sqrt{\max_{p \in [k]} |C_p|} \cdot \rn{\mu_i}} && \text{As } \frac{\e}{\varphi^2} \text{ small, } \frac{\max_{p \in [k]} |C_p|}{\min_{p \in [k]} |C_p|} = O(1) \\
&\leq 0.01 && \text{By Lemma~\ref{lem:dotmu}}
\end{align*}

\end{proof}

Now we are ready to show that there exist an algorithm (Algorithm~\ref{alg:estimateconductande}) that can estimate accurately the size of candidate clusters of the form $\wh{C}^{(T_1, \dots, T_b)}_{\wh{\mu}}$ and then, if the size is not too small, estimate outer-conductance of all candidate clusters. The proof of  correctness of the algorithm is based on applications of standard concentration bounds.

\begin{algorithm}
\caption{\textsc{OuterConductance}($G,\wh{\mu},(T_1,T_2,\ldots, T_{b}),S,s_1,s_2)$ \newline \text{ } \Comment $T_i$'s are sets of $\widehat{\mu}_j$ where $\widehat{\mu}_j$'s are given as sets of points \newline \text{ } \Comment see Section~\ref{sec:dotproductcomp} for the reason of such representation \newline \text{ } \Comment $s_1$ is \# sampled points for size estimation \newline \text{ } \Comment $s_2$ is \# sampled points for outer-conductance estimation}\label{alg:estimateconductande}
\begin{algorithmic}[1]
	\State $\text{cnt} := 0$
	\For {$t =1$ to $s_1$}
	    \State $x \sim \textsc{Uniform}\{1..n\}$ \Comment Sample a random vertex and test if it belongs to the cluster
	    \If{$\textsc{IsInside}(x, \widehat{\mu},(T_1,T_2,\ldots, T_{b}), S)$}
            \State	$\text{cnt} := \text{cnt} + 1$
	    \EndIf
	\EndFor
	\If{$\frac{n}{s_1} \cdot \text{cnt} < \min_{p \in [k]} |C_p|/2$}
	    \State \Return $\infty$ \Comment If the estimated size is too small return $\infty$
	\EndIf
	\State $e := 0, a := 0$
	\For {$t =1$ to $s_2$}
	    \State $x \sim \textsc{Uniform}\{1..n\}$
	    \State $y \sim \textsc{Uniform}\{w \in \mathcal{N}(u)\}$ \Comment $\mathcal{N}(u) = $ neighbors of $u$ in $G$
	     \If{$\textsc{IsInside}(x, \widehat{\mu},(T_1,T_2,\ldots, T_{b}), S)$}
            \State	$a := a + 1$
           \If{$ \neg\textsc{IsInside}(y, \widehat{\mu},(T_1,T_2,\ldots, T_{b}), S)$} \label{ln:istheneighboroutside}
                \State	$e = e + 1$
	        \EndIf
	    \EndIf
	\EndFor
	\State \Return $\frac{e}{a}$

\end{algorithmic}
\end{algorithm}

\begin{lemma}\label{lem:conductanceglued}
Let $k \geq 2$, $\varphi,\e,\gamma \in (0,1)$. 
Let $G=(V,E)$ be a $d$-regular graph that admits a $(k,\varphi,\epsilon)$-clustering $C_1,\dots,C_k$.

For a set of approximate centers $\{\wh{\mu}_1, \dots, \wh{\mu}_k\}$, where each $\wh{\mu}_i$ is represented as an average of at most $s$ embedded vertices (i.e $f_x$'s), an ordered partial partition $(T_1, \dots, T_b)$ of $\{\wh{\mu}_1, \dots, \wh{\mu}_k\}$ and $\wh{\mu} \in \{\wh{\mu}_1, \dots, \wh{\mu}_k\} \setminus \bigcup_{j \in [b]} T_i$ the following conditions hold.

If Algorithm~\ref{alg:estimateconductande} is invoked with $(G, \wh{\mu}, (T_1, \dots, T_b), \{\wh{\mu}_1, \dots, \wh{\mu}_k\} \setminus \bigcup_{j \in [b]} T_i, s_1, s_2)$ then it runs in $
\widetilde{O}_{\varphi} \left( (s_1 + s_2) \cdot s^4 \cdot \left(\frac{k}{\e} \right)^{O(1)} \cdot n^{1/2 + O(\e/\varphi^2)} \right)
$ time and if $s_1 =  \Theta( k \log(\frac{1}{\eta}))$ and $s_2 = \Theta(\frac{\varphi^2 \cdot  k}{\e}  \log(\frac{1}{\eta}))$ then with probability $1 - \eta$ it returns a value $q$ with the following properties.

\begin{itemize}
    \item If $|\wh{C}_{\wh{\mu}}^{(T_1, \dots, T_b)}| \geq \frac{3}{4} \min_{p \in [k]} |C_p|$ then 
    $q \in \left[\frac{1}{2}\phi \left(\wh{C}_{\wh{\mu}}^{(T_1, \dots, T_b)} \right) - \e/\varphi^2, \frac{3}{2}\phi \left(\wh{C}_{\wh{\mu}}^{(T_1, \dots, T_b)} \right) + \e/\varphi^2 \right] $,
    \item If $|\wh{C}_{\wh{\mu}}^{(T_1, \dots, T_b)}| < \frac{3 }{4} \min_{p \in [k]} |C_p|$ then 
    $q \geq \frac{1}{2}\phi \left(\wh{C}_{\wh{\mu}}^{(T_1, \dots, T_b)} \right) - \e/\varphi^2 \text{.}$
\end{itemize}
\end{lemma}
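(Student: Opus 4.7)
\textbf{Proof plan for Lemma~\ref{lem:conductanceglued}.}

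The plan is to analyze the two phases of Algorithm~\ref{alg:estimateconductande} separately, bounding running time first and then correctness via concentration. For the runtime, I would note that every call to \textsc{IsInside}$(x, \wh{\mu}, (T_1,\dots,T_b), S)$ performs $O(k)$ tests of the form $\adp{f_x, \wh{\Pi}\wh{\mu}'} \stackrel{?}{\geq} 0.93\, \an{\wh{\Pi}\wh{\mu}'}^2$. Each such test requires one approximate dot product with a projected approximate mean, which by Corollary~\ref{corr:dotpi} (invoked through Lemma~\ref{lem:qualityofapproximation}, after setting the precision $\xi$ appropriately) costs $s^4 \cdot (k/\e)^{O(1)} \cdot n^{1/2+O(\e/\varphi^2)} \cdot \polylog(n)/\varphi^2$. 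There are $s_1 + s_2$ invocations of \textsc{IsInside} (the one on the neighbour $y$ is only done when $x$ is inside, so it is absorbed into the same bound), and the preprocessing for the spectral dot product oracle fits in the same budget by Theorem~\ref{thm:dot}. Multiplying gives exactly the runtime claimed.

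For the correctness of phase~1, let $p := |\wh{C}_{\wh{\mu}}^{(T_1,\dots,T_b)}|/n$ and note that $\text{cnt}/s_1$ is an average of $s_1$ i.i.d.\ Bernoulli$(p)$ random variables. Using Proposition~\ref{lem:min_size} we have $\min_p|C_p| = \Omega(n/k)$, so a multiplicative Chernoff bound with $s_1 = \Theta(k \log(1/\eta))$ shows that, with probability at least $1-\eta/3$,
\[
\text{if } p \geq \tfrac{3}{4} \cdot \tfrac{\min_p |C_p|}{n} \text{ then } \tfrac{n\,\text{cnt}}{s_1} \geq \tfrac{1}{2}\min_p|C_p|,
\]
so the algorithm does not return $\infty$ in the first case of the lemma; and conversely, if the test passes, then with the same probability $|\wh{C}_{\wh{\mu}}^{(T_1,\dots,T_b)}| \geq \tfrac{1}{4}\min_p|C_p| = \Omega(n/k)$, which is the input we need for phase~2. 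If $|\wh{C}_{\wh{\mu}}^{(T_1,\dots,T_b)}|$ is so small that the threshold test fails, the algorithm returns $\infty$ and the second case of the lemma holds trivially.

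For phase~2 I would use that, because $G$ is $d$-regular, sampling $x$ uniformly and then a uniform neighbour $y$ produces a uniformly random oriented edge. Set $a^* = s_2 \cdot p$ and $e^* = s_2 \cdot q$, where $q := |E(\wh{C}, V\setminus\wh{C})|/(nd)$, so that the true outer conductance equals $\phi = q/p$. By multiplicative Chernoff with $s_2 = \Theta(\varphi^2 k \log(1/\eta)/\e)$ and using $p = \Omega(1/k)$ from the phase~1 guarantee, with probability $1-\eta/3$,
\[
a/s_2 \in \bigl[\tfrac{3}{4}p,\ \tfrac{5}{4}p\bigr].
\]
For $e/s_2$ I would split into two regimes. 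If $q \leq \e p/\varphi^2$ then Bernstein's inequality (or a one-sided Chernoff) gives $e/s_2 \leq 2\e p/\varphi^2$ w.p.\ $1-\eta/3$, so $e/a \leq 4\e/\varphi^2$, which fits inside $\tfrac32\phi + \e/\varphi^2$ after rescaling the constants. If $q \geq \e p/\varphi^2$, then $s_2 q \geq \Omega(\log(1/\eta))$ and multiplicative Chernoff yields $e/s_2 \in [\tfrac{3}{4}q, \tfrac{5}{4}q]$, so $e/a \in [\tfrac{3}{5}\phi, \tfrac{5}{3}\phi] \subset [\tfrac12\phi - \e/\varphi^2, \tfrac32\phi + \e/\varphi^2]$. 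A union bound over the $O(1)$ bad events gives total success probability $1-\eta$.

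The main obstacle I anticipate is handling the conductance estimation when $q$ is very small: a naive Hoeffding bound on $e/s_2$ requires $\Omega(1/q^2)$ samples and would force $s_2$ to depend on $k^2/\e^2$, blowing past the budget. Splitting into the two regimes of $q$ and using the Bernstein-type tail (one-sided for the small-$q$ regime, two-sided multiplicative for the large-$q$ regime) is what lets $s_2 = \Theta(\varphi^2 k \log(1/\eta)/\e)$ suffice; the $\e/\varphi^2$ additive slack in the lemma statement is exactly the error budget absorbing the small-$q$ case. Everything else (the routine Chernoff for phase~1 and for $a$ in phase~2) plugs in directly.
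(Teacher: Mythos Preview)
Your plan is correct and follows the paper's proof almost exactly: runtime via the cost of \textsc{IsInside} calls (each being $k^{O(1)}$ approximate dot products costed through Lemma~\ref{lem:qualityofapproximation}/Corollary~\ref{corr:dotpi}), phase~1 via Chernoff on Bernoulli$(|\wh C|/n)$ samples, and phase~2 via concentration once the size lower bound is in hand.

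The one genuine difference is how phase~2 is organized. The paper does not estimate $e/s_2$ and $a/s_2$ separately and then form the ratio; instead it observes that, conditioned on the $a$ iterations where $x\in\wh C$, the vertex $x$ is uniform on $\wh C$ and hence each indicator $Y_j=\mathbb{1}[y\notin\wh C]$ is an independent Bernoulli with mean exactly $\phi(\wh C)$. Thus $e/a$ is already an unbiased estimator of $\phi$, and a single Chernoff--Hoeffding bound with multiplicative slack $1/2$ and additive slack $\e/\varphi^2$ (applied after showing $a=\Omega(s_2/k)$) gives the interval directly. Your unconditional route (treating $e$ as Binomial$(s_2,q)$ with $q=|E(\wh C,V\setminus\wh C)|/(nd)$ and splitting on the size of $q$) also works, but the constants you wrote do not close as stated: from $a/s_2\in[\tfrac34 p,\tfrac54 p]$ and $e/s_2\in[\tfrac34 q,\tfrac54 q]$ you get $e/a\in[\tfrac35\phi,\tfrac53\phi]$, and $\tfrac53>\tfrac32$, so the upper inclusion fails unless $\phi\le 6\e/\varphi^2$. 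This is trivially repaired by tightening the Chernoff parameter (e.g.\ $\delta=1/6$ gives $e/a\in[\tfrac57\phi,\tfrac75\phi]\subset[\tfrac12\phi,\tfrac32\phi]$), and the $\Theta(\cdot)$ in $s_2$ absorbs it. The paper's conditional framing sidesteps this bookkeeping entirely.
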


\begin{proof}

We start with the runtime analysis then follows the correctness analysis.

\paragraph{Runtime.} Algorithm~\ref{alg:estimateconductande} has two phases: one from line 1 to line 7 and second from line 8 to line 16. 

During the first phase Algorithm~\ref{alg:estimateconductande} calls Algorithm~\ref{alg:inside} $s_1$ times and Algorithm~\ref{alg:inside} runs in $\widetilde{O}_{\varphi} (s^4 \cdot \left(\frac{k}{\e} \right)^{O(1)} \cdot n^{1/2 + O(\e/\varphi^2)} ) $  
time as it computes $k^{O(1)}$ values of the form $\frac{\langle f_x, \hat{\mu}_i \rangle_{apx}}{||\hat{\mu}_i||^2_{apx}}$ which are computed in time $\widetilde{O}_{\varphi} (s^4 \cdot \left(\frac{k}{\e} \right)^{O(1)} \cdot n^{1/2 + O(\e/\varphi^2)} )$ by Lemma~\ref{lem:qualityofapproximation}, so in total the runtime of this phase is $\widetilde{O}_{\varphi} (s_1 \cdot s^4 \cdot \left(\frac{k}{\e} \right)^{O(1)} \cdot n^{1/2 + O(\e/\varphi^2)} ) $.

During the second phase Algorithm~\ref{alg:estimateconductande} calls Algorithm~\ref{alg:inside} $2s_2$ times so the runtime of this phase is $\widetilde{O}_{\varphi} (s_2 \cdot s^4 \cdot \left(\frac{k}{\e} \right)^{O(1)} \cdot n^{1/2 + O(\e/\varphi^2)} )$ in total.

So in total the runtime is $\widetilde{O}_{\varphi} ((s_1 +s_2) \cdot s^4 \cdot \left(\frac{k}{\e} \right)^{O(1)} \cdot n^{1/2 + O(\e/\varphi^2)} )$.

\paragraph{Correctness.} For simplicity we denote $\wh{C}_{\wh{\mu}}^{(T_1, \dots, T_b)}$ by $\wh{C}$ and $\min_{p \in [k]} |C_p|$ by $r_{\min}$ in this proof. Notice that the Algorithm~\ref{alg:estimateconductande} in the first phase computes $\text{cnt} = \sum_{i=1}^{s} X_i$, where $X_i$'s are independent Bernoulli trials with success probability $p = \frac{|\wh{C}|}{n}$. Let $z := \frac{n}{s_1} \sum_{i=1}^{s_1} X_i$. We introduce notation $x \approx_{\delta,\alpha} y$ to denote $x \in [ (1-\delta)y - \alpha, (1+ \delta)y + \alpha]$. By Chernoff-Hoeffding bounds we get that there exists a universal constant $\Gamma$ such that for all $0 < \delta \leq 1/2, \alpha > 0$ 
$$z \approx_{\delta, \alpha \cdot n} |\wh{C}| \text{ with probability } 
1 - 2^{-\Gamma s_1 \alpha \delta} \text{.}$$
Setting $\delta = 1/2, \alpha = \frac{r_{\min}}{8 n}$ we get that $z \approx_{1/2, r_{\min}/8} |\wh{C}|$ with probability 
$$
1 - 2^{-\Gamma s_1 \frac{r_{\min}}{32 n}} 
\geq 1 - 2^{-\Omega( s_1/k)} \text{,}
$$
as $\frac{\max_{p \in [k]} |C_p|}{\min_{p \in [k]} |C_p|} = O(1)$.
So if $s_1 = \Theta(k \log(1/\eta))$ then with probability $1 - \eta/2$ we have 
\begin{equation}\label{eq:firststageguaranteen}
z \approx_{1/2, r_{\min}/8} |\wh{C}| \text{.}
\end{equation}

Observe that if $\wh{C} < r_{\min}/4$ then by \eqref{eq:firststageguaranteen} we have that $z \leq (1+1/2)|\wh{C}| + r_{\min}/8 < r_{\min}/2$, which means that Algorithm~\ref{alg:estimateconductande} returns $\infty$. Note that it is consistent with the conclusion of the Lemma.

For the analysis of the second stage we assume that $|\wh{C}| \geq r_{\min}/4$.
We will analyze what value is returned in the second stage. First we will bound the probability that $a \leq  \frac{s_2 \cdot r_{\min}}{8 \cdot n}$. For $i \in [1 \dots s_2]$ let $X_i$ be a binary random variable which is equal $1$ iff in $i-th$ iteration of the for loop we increase the $a$ counter. We have that, for every $i$, $P[X_i = 1] = |\wh{C}|/n$ and the $X_i$'s are independent. Notice that $a = \sum_{i = 1}^{s_2} X_i$. From Chernoff bound we have that for $\delta < 1$:
\begin{equation}\label{eq:twosided1}
P\left[ \left|\sum_{i = 1}^{s_2} X_i - \mathbb{E}\left[\sum_{i=1}^{s_2} X_i\right] \right| > \delta \cdot \mathbb{E}\left[\sum_{i=1}^{s_2} X_i\right] \right] 
\leq
2e^{-\frac{\delta^2}{3} \mathbb{E}\left[\sum_{i=1}^{s_2} X_i\right]}, 
\end{equation}
Noticing that $\mathbb{E}\left[\sum_{i=1}^{s_2} X_i\right] = s_2 \frac{ |\wh{C}|}{n}$ if we set $\delta = 1/2$ we get that
\begin{equation}\label{eq:twosided}
P\left[ \left|\sum_{i = 1}^{s_2} X_i - s_2 \frac{ |\wh{C}|}{n} \right| > s_2 \frac{|\wh{C}|}{2n} \right] 
\leq
2e^{- s_2 \frac{|\wh{C}|}{12n}}
\leq 2e^{- \frac{s_2 \cdot r_{\min} }{48 \cdot n}}, 
\end{equation}
So with probability at least $1 - 2e^{- \frac{s_2 \cdot r_{\min}}{48 \cdot n}} \geq 1 - 2e^{-\Omega(s_2/k)}$ (as $\frac{\max_{p \in [k]} |C_p|}{\min_{p \in [k]} |C_p|} = O(1)$) we have that 
\begin{equation}\label{eq:manys}
a = \sum_{i = 1}^{s} X_i \geq \frac{1}{2} \cdot s_2 \cdot \frac{|\wh{C}|}{n} \geq \frac{s_2 \cdot r_{\min} }{8 \cdot n} \geq \Omega(s_2/k).
\end{equation}

Now observe that line~\ref{ln:istheneighboroutside} of \textsc{OuterConductance} is invoked exactly $a$ times. Let $Y_j$ be the indicator random variable that is $1$ iff $e$ is increased in the $j$-th call of line~\ref{ln:istheneighboroutside}. 
Notice that 
\begin{equation}
P[Y_i = 1] = \phi(\wh{C})
\end{equation}
That is because if $U_i$ is a random variable denoting a vertex $u$ sampled in $i$-th step then  $U_i$  is uniform on set $\wh{C}$ conditioned on $X_i = 1$ and the graph is regular. Now by the Chernoff-Hoeffging bounds we get that for all $0 < \delta \leq 1/2,\alpha > 0$ we have:
$$\frac{1}{a}\sum_{i = 1}^{a} Y_i \approx_{\delta, \alpha} \phi(\wh{C}) \text{ with probability } 1 - 2e^{-\Gamma a \alpha \delta}  \text{.}$$
Setting $\delta = 1/2, \alpha = \frac{\e}{\varphi^2}$ we get that $\frac{1}{a}\sum_{i = 1}^{a} Y_i \approx_{1/2,\e/\varphi^2} \phi(\wh{C})$ with probability:
\begin{equation}\label{eq:second2guarantee}
1 - 2e^{-\Gamma a \e /(4\varphi^2)} \geq 1 - 2e^{-\Omega(a \e /\varphi^2)} 
\end{equation}
Now taking the union bound over \eqref{eq:manys} and \eqref{eq:second2guarantee} we get that if we set $s_2 = \Theta(\frac{\varphi^2 \cdot k}{\e} \log(1/\eta))$ then  $\frac{1}{a}\sum_{i = 1}^{a} Y_i \approx_{1/2,\e/\varphi^2} \phi(\wh{C})$ with probability:
\begin{align*}
1 - 2e^{-\Omega(s_2/k)} - 2e^{-\Omega(a\e/\varphi^2)}
&\geq 1 - 2e^{-\Omega(s_2/k)} - 2e^{-\Omega(\frac{\e \cdot s_2 }{\varphi^2 \cdot k})} && \text{By \eqref{eq:manys}}  \\
&\geq 1 -\eta/2
\end{align*}
To conclude the proof we observe the following. 
\begin{itemize}
    \item If $|\wh{C}| < \frac{r_{\min}}{4}$ then with probability $1 - \eta/2$ the Algorithm returns $\infty$,
    \item If $|\wh{C}| \in [\frac{r_{\min}}{4},\frac{3\cdot r_{\min}}{4}) $ then either the Algorithm returns $\infty$ in the first stage or it reaches the second stage and with probability $1 - \eta$ it returns a value $\psi$ such that $\psi \approx_{1/2, \e/\phi^2} \varphi(\wh{C})$,
    \item If $|\wh{C}| \geq \frac{r_{\min}}{4}$ then by the union bound over the two stages with probability $1 - \eta$ it reaches the second stage and returns a value $\psi$ such that $\psi \approx_{1/2, \e/\varphi^2} \phi(\wh{C})$.
\end{itemize}
The above covers all the cases and is consistent with the conclusions of the Lemma.\

\end{proof}

Before we give the statement of the next Lemma we introduce some definitions. In Lemma~\ref{lem:conductanceglued} we proved that for every call to \textsc{OuterConductance} the value returned by the Algorithm~\ref{alg:estimateconductande} is, in a sense given by the conclusions of Lemma~\ref{lem:conductanceglued}, a good approximation to outer-conductance of $\wh{C}^{(T_1, \dots, T_b)}_{\wh{\mu}}$ (where $\wh{\mu}, (T_1, \dots, T_b)$ are the parameters of the call) with high probability. What follows is a definition of an event that the values returned by \textsc{OuterConductance} throughout the run of the final algorithm always satisfy one conclusion of  Lemma~\ref{lem:conductanceglued}. Later we use Definition~\ref{def:event1} in Lemma~\ref{lem:ifclosetomuswin} and then in the proof of Theorem~\ref{thm:uglythm} we will lower bound the probability of $\mathcal{E}_{\text{conductance}}$.

\begin{definition}[\textbf{Event $\mathcal{E}_{\text{conductance}}$}]\label{def:event1}
Let $k \geq 2$, $\varphi,\e,\gamma \in (0,1)$. 
Let $G=(V,E)$ be a $d$-regular graph that admits a $(k,\varphi,\epsilon)$-clustering $C_1,\dots,C_k$.

We define $\mathcal{E}_{\text{conductance}}$ as an event:

For every call to Algorithm~\ref{alg:estimateconductande} (i.e. \textsc{OuterConductance}$ $) that is made throughout the run of \textsc{FindCenters} the following is true. If Algorithm~\ref{alg:estimateconductande} is invoked with $(G, \wh{\mu}, (T_1, \dots, T_b), \{\wh{\mu}_1, \dots, \wh{\mu}_k\} \setminus \bigcup_{j \in [b]} T_i, s_1, s_2)$ then it returns a value $q$ with the following property.

\begin{itemize}
    \item If $|\wh{C}_{\wh{\mu}}^{(T_1, \dots, T_b)}| \geq \frac{3}{4} \min_{p \in [k]} |C_p|$ then $q \in \left[\frac{1}{2}\phi \left(\wh{C}_{\wh{\mu}}^{(T_1, \dots, T_b)} \right) - \e/\varphi^2, \frac{3}{2}\phi \left(\wh{C}_{\wh{\mu}}^{(T_1, \dots, T_b)} \right) + \e/\varphi^2 \right]$.
\end{itemize}
\end{definition}

The following Lemma is the key part of the corresponding proof of correctness of Algorithm~\ref{alg:testmus} (see Theorem~\ref{thm:uglythm} below).
It is a generalization of Lemma~\ref{lem:induction}.
We show that if $\wh{\mu}$'s are close to real centers and $\mathcal{E}$ and $\mathcal{E}_{\text{conductance}}$ hold then at every stage of the for loop from line~\ref{testcentersforloop} of Algorithm~\ref{alg:testmus} at least half of the candidate clusters:
$$\mathcal{C}_i := \bigcup_{\wh{\mu} \in S} \{ \wh{C}_{\wh{\mu}}^{(T_1, \dots, T_{i-1})} \} \text{,}$$
pass the test from line~\ref{testcenterscondactancetest} of Algorithm~\ref{alg:testmus}, which means that they have small outer-conductance and satisfy condition \eqref{eq:sec3outerconductancegoal}.

\begin{lemma}\label{lem:ifclosetomuswin}
Let $k \geq 2$, $\varphi \in (0,1)$, $\frac{\e}{\varphi^2} $ be smaller than a sufficiently small constant. 
Let $G=(V,E)$ be a $d$-regular graph that admits a $(k,\varphi,\epsilon)$-clustering $C_1,\dots,C_k$. 
Then conditioned on the success of the spectral dot product oracle there exists an absolute constant $\Upsilon$ such that the following conditions hold.

If \textsc{ComputeOrderedPartition}($G,\wh{\mu}_1,\wh{\mu}_2, \dots, \widehat{\mu}_k,s_1,s_2)$ is called with $(\wh{\mu}_1, \dots, \wh{\mu}_k)$ such that for every $i \in [k]$ we have $\|\wh{\mu}_i - \mu_i\|^2 \leq 10^{-12} \cdot \frac{\e}{\varphi^2 \cdot k^2}\|\mu_i\|^2$ then the following holds. Assume that at the beginning of the $i$-th iteration of the for loop from line~\ref{testcentersforloop} of Algorithm~\ref{alg:testmus} $|S| = b$ and, up to renaming of $\wh{\mu}$'s, $S = \{\wh{\mu}_1,\dots, \wh{\mu}_b\}$, the corresponding clusters are $\mathcal{C} = \{C_1, \dots, C_b \}$ respectively and the ordered partial partition of $\mu$'s is equal to $(T_1, \dots, T_{i-1})$. 
Then if for every $C \in \mathcal{C}$ we have that $|V^{(T_1, \dots, T_{i-1})} \cap C| \geq \left(1 - \Upsilon \cdot i \cdot \frac{\e}{\varphi^2} \right) |C|$ then at the beginning of $(i+1)$-th iteration:
\begin{enumerate}
    \item $|S| \leq b/2$ (that is at least half of the remaining cluster means were removed in $i$-th iteration), 
    \item\label{cond:second} for every $\mu \in S$ the corresponding cluster $C$ satisfies $|V^{(T_1, \dots, T_{i})} \cap C| \geq \left(1- \Upsilon \cdot (i+1) \cdot \frac{\e}{\varphi^2} \right) |C| $, where $(T_1, \dots, T_{i})$ is the ordered partial partition of $\mu$'s created in the first $i$ iterations. 
\end{enumerate}

\end{lemma}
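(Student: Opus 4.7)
The plan is to mirror the proof of Lemma~\ref{lem:induction}, translating approximate dot-product tests into exact ones via Lemma~\ref{lem:qualityofapproximation} and replacing exact outer-conductance checks with the approximate ones controlled by the event $\mathcal{E}_{\text{conductance}}$ of Definition~\ref{def:event1}, whose high probability is what the ``success of the spectral dot product oracle'' hypothesis effectively packages together with the oracle event $\mathcal E$. Write $P=(T_1,\dots,T_{i-1})$, let $\Pi_i$ be the orthogonal projection onto $\mathrm{span}(\bigcup_{j<i}T_j)^{\perp}$ in true-centre coordinates, and let $\wh{\Pi}_i$ be its approximate-centre analogue.

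The first step is to identify a bad set adapted to Lemma~\ref{lem:qualityofapproximation}. For each $\wh{\mu}\in\{\wh{\mu}_1,\dots,\wh{\mu}_k\}$ let $\pi_i^{\wh{\mu}}$ be the orthogonal projection onto the two-dimensional subspace $\mathrm{span}\{\Pi_i\mu,\wh{\Pi}_i\wh{\mu}\}$, and define
\[
T^{\star}(\wh{\mu}):=\bigl\{x\in V:\|\pi_i^{\wh{\mu}}f_x\|^{2}>10^{4}/\min_p|C_p|\bigr\}.
\]
Applying Lemma~\ref{lem:smallinsubspace} to this two-dimensional subspace gives $\sum_{x\in V}\|\pi_i^{\wh{\mu}}(f_x-\mu_x)\|^{2}\le O(\e/\varphi^{2})$; combining with $\|\pi_i^{\wh{\mu}}\mu_x\|\le\|\mu_x\|=O(1/\sqrt{|C_x|})$ (Lemma~\ref{lem:dotmu}) and Markov's inequality yields the key global bound
\[
|T^{\star}(\wh{\mu})|\le O(\e/\varphi^{2})\cdot\min_p|C_p|=O(\e/\varphi^{2})\cdot|C|\ \text{for every cluster } C.
\]
For any $x\notin T^{\star}(\wh{\mu})$, Lemma~\ref{lem:qualityofapproximation} guarantees that $\rdp{f_x,\Pi_i\mu}/\rn{\Pi_i\mu}^{2}$ and $\adp{f_x,\wh{\Pi}_i\wh{\mu}}/\an{\wh{\Pi}_i\wh{\mu}}^{2}$ agree up to $\pm 0.02$, so the approximate $0.93$-test used in line~\ref{ln:dot-x-mu} of \textsc{IsInside} is sandwiched between exact tests at thresholds $0.91$ and $0.95$.

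Next I would re-run the ``$b/2$ centres survive'' argument of Lemma~\ref{lem:induction}. Define $I:=\bigcup_{\mu'\neq\mu''\in\{\mu_1,\dots,\mu_b\}}\Cr{\Pi_i\mu',0.9}\cap\Cr{\Pi_i\mu'',0.9}$; by Lemma~\ref{lem:pointsoutside} $|I|\le O(b\e/\varphi^{2})\cdot n/k$, so Markov produces a subset $\mathcal R\subseteq\mathcal C$ of size $\ge b/2$ with $|C\cap I|\le O(\e/\varphi^{2})|C|$ for every $C\in\mathcal R$. For such $C$ with mean $\mu$ and approximate mean $\wh{\mu}$, Lemma~\ref{lem:mostinset} at threshold $0.96$ (which by Step~1 implies passing the approximate $0.93$-test outside $T^{\star}(\wh{\mu})$) combined with the bounds on $|C\cap I|$, $|C\cap T^{\star}(\wh{\mu})|$ and the inductive hypothesis gives $|\wh{C}_{\wh{\mu}}^{P}\cap C|\ge(1-O(\Upsilon i\cdot\e/\varphi^{2}))|C|$, while Lemma~\ref{lem:notalostfromoutside} at threshold $0.9\le 0.91$ together with the global bound $|T^{\star}(\wh{\mu})|\le O(\e/\varphi^{2})|C|$ gives $|\wh{C}_{\wh{\mu}}^{P}\cap(V\setminus C)|\le O(\e/\varphi^{2})|C|$. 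The symmetric-difference bound $|\wh{C}_{\wh{\mu}}^{P}\triangle C|\le O(\e/\varphi^{2}\log k)|C|$ converts into $\phi(\wh{C}_{\wh{\mu}}^{P})\le O(\e/\varphi^{2}\log k)$ exactly as in Lemma~\ref{lem:induction}, and since also $|\wh{C}_{\wh{\mu}}^{P}|\ge\tfrac34\min_p|C_p|$ the event $\mathcal{E}_{\text{conductance}}$ forces \textsc{OuterConductance} to return a value $\le O(\e/\varphi^{2}\log k)$; hence $\wh{\mu}$ is added to $T_i$. For condition~\ref{cond:second}, fix any $\wh{\mu}\in S$ \emph{not} removed with cluster $C$; every $x\in V^{P}\setminus V^{(T_1,\dots,T_i)}$ lies in some $\wh{C}_{\wh{\mu}'}^{P}$ with $\wh{\mu}'\neq\wh{\mu}$, so by line~\ref{ln:dot-x-mu} of \textsc{IsInside} (recalling $\wh{\mu}\in S_i$) we have $x\notin\Ca{\Pi_i\wh{\mu},0.93}$. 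For $x\notin T^{\star}(\wh{\mu})$ this translates to $\rdp{f_x,\Pi_i\mu}<0.95\rn{\Pi_i\mu}^{2}$, and a verbatim version of Lemma~\ref{lem:mostinset} at threshold $0.95$ bounds the number of such $x\in C$ by $O(\e/\varphi^{2})|C|$; adding $|C\cap T^{\star}(\wh{\mu})|\le O(\e/\varphi^{2})|C|$ and combining with the inductive hypothesis concludes the argument for a sufficiently large absolute constant $\Upsilon$.

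The main obstacle is controlling $T^{\star}(\wh{\mu})$ uniformly across clusters. If one instead worked with the full projection $\Pi_i$, whose dimension can be as large as $k$, Lemma~\ref{lem:smallinsubspace} would only yield $|T^{\star}|\le O(k\e/\varphi^{2})\cdot n/k=\Omega(\e/\varphi^{2})\,n$, which is too large by a factor of $k$ to close condition~\ref{cond:second} on every remaining cluster with an \emph{absolute} constant $\Upsilon$. The crucial observation is that Lemma~\ref{lem:qualityofapproximation} only requires a small projection norm on the two-dimensional subspace $\mathrm{span}\{\Pi_i\mu,\wh{\Pi}_i\wh{\mu}\}$; working with this much smaller projection $\pi_i^{\wh{\mu}}$ drops the dimension factor, produces the global bound $|T^{\star}(\wh{\mu})|=O(\e/\varphi^{2})|C|$, and makes the induction go through without any $k$-dependent degradation in $\Upsilon$.
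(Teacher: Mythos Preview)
Your overall strategy matches the paper's, and the per-center two-dimensional outlier sets $T^{\star}(\wh{\mu})$ are exactly the sets $X'$ the paper uses for the second conclusion and for bounding $|\wh{C}_{\wh{\mu}}^{P}\cap(V\setminus C)|$. However, there is a real gap in your argument for the first conclusion.

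To show $|\wh{C}_{\wh{\mu}}^{P}\cap C|$ is large, it is not enough to know that $x\in C$ passes the approximate $0.93$-test for $\wh{\mu}$ and lies outside $I$: membership in $\wh{C}_{\wh{\mu}}^{P}$ also requires that $x$ \emph{fails} the approximate $0.93$-test for every other $\wh{\mu}'\in S\setminus\{\wh{\mu}\}$ (line~\ref{ln:dot-x-mu} of \textsc{IsInside}). Translating ``$x\notin I$ so $x$ fails the exact $0.9$-test for $\mu'$'' into ``$x$ fails the approximate $0.93$-test for $\wh{\mu}'$'' via Lemma~\ref{lem:qualityofapproximation} requires $x\notin T^{\star}(\wh{\mu}')$, not $x\notin T^{\star}(\wh{\mu})$. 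Your Markov step only controls $|C\cap I|$ and you separately control $|C\cap T^{\star}(\wh{\mu})|$, but you never control $|C\cap T^{\star}(\wh{\mu}')|$ for the other $b-1$ centres. The paper handles this by working with a single outlier set $X$ defined via the projection $Q$ onto $\mathrm{span}\{\Pi_i\mu_j,\wh{\Pi}_i\wh{\mu}_j:j\in[b]\}$; this set has size $O(b\e/\varphi^{2})\cdot n/k$ by Lemma~\ref{lem:smallinsubspace}, and because $\|Qf_x\|$ dominates every $\|\pi_i^{\wh{\mu}'}f_x\|$, being outside $X$ simultaneously validates the translation for \emph{all} $\wh{\mu}'$. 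Markov is then applied to $I\cup X$.

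Your approach can be repaired in the same spirit: replace the single set $T^{\star}(\wh{\mu})$ in the Markov step by $\bigcup_{j\in[b]}T^{\star}(\wh{\mu}_j)$, whose total size is again $O(b\e/\varphi^{2})\cdot n/k$ by summing the per-centre bounds. The two-dimensional trick you highlight is genuinely needed, but only for the second conclusion and for the $|\wh{C}_{\wh{\mu}}^{P}\cap(V\setminus C)|$ bound, where a $b$-dependent outlier set would be fatal; for the first conclusion the $b$ factor is harmless because it is absorbed by the same Markov averaging that handles $|I|$.
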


\begin{proof}
\noindent
\textbf{Outline of the proof.} We start but defining a subset of vertices called outliers and then we show that the number of them is small. Next we prove that for vertices that are not outliers the evaluations of $\frac{\adp{f_x, \wh{\Pi} \widehat{\mu}_i}}{\an{\wh{\Pi} \widehat{\mu}_i}^2}$ are approximately correct (as in Lemma~\ref{lem:qualityofapproximation}). Next we mimic the structure, and on the high level the logic, of the proof of Lemma~\ref{lem:induction}: we first show the first conclusion of the Lemma and then the second one.

For simplicity we will denote $\min_{p \in [k]} |C_p|$ by $r_{\min}$ in this proof. Without loss of generality we can assume $S = \{\hat{\mu}_1,\dots, \hat{\mu}_b\}$ at the beginning of the $i$-th iteration of the for loop from line~\ref{testcentersforloop} of Algorithm~\ref{alg:testmus} and the corresponding clusters be $C_1, \dots, C_b$ respectively. Assume that for every $C \in \mathcal{C}$ we have that $|V^{(T_1, \dots, T_{i-1})} \cap C| \geq \left(1 - \Upsilon \cdot i \cdot \frac{\e}{\varphi^2} \right) |C|$.


Let $\wh{\Pi}$ be the projection onto the $\text{span} (\bigcup_{j<i} T_j)^{\perp}$. Recall that each $T_j$ is a subset of $\{\wh{\mu}_1, \dots, \wh{\mu}_k \}$. For every $j < i$ let 
$$T_j' := \bigcup_{\wh{\mu} \in T_j} \{\mu \} \text{.}$$
That is $T_j'$'s are $T_j$'s with $\wh{\mu}$'s replaced by the corresponding $\mu$'s. Now let $\Pi$ be the projection onto the $\text{span} (\bigcup_{j<i} T_j')^{\perp}$.

\paragraph{Outliers.} First we define a set of outliers, i.e. $X$, as the set of points with abnormally long projection onto the subspace spanned by $\{\Pi \mu_1, \dots, \Pi \mu_b, \wh{\Pi} \wh{\mu}_1, \dots, \wh{\Pi} \wh{\mu}_b \}$. Then we show that the number of outliers is small.

Let $Q$ be the orthogonal projection onto the $\text{span}(\{\Pi \mu_1, \dots, \Pi \mu_b, \wh{\Pi} \hat{\mu}_1, \dots, \wh{\Pi} \hat{\mu}_b \})$ and let:
$$X := \left\{ x \in V : \rn{Q f_x}^2 > \frac{10^4 }{r_{\min}} \right\} $$
By Lemma~\ref{lem:smallinsubspace} we get that 
\begin{equation}\label{eq:varianceinP}
\sum_{x \in V} \rn{Q f_x - Q \mu_x}^2 \leq O \left(b \cdot \frac{\e}{\varphi^2} \right) \text{.}
\end{equation}
Moreover for every $x \in X$:
\begin{align}
\rn{Q f_x - Q \mu_x} 
&\geq 
\rn{Q f_x} - \rn{Q \mu_x} && \text{By triangle  inequality} \nonumber \\
&\geq 
\rn{Q f_x} - \rn{\mu_x} && \text{As projection can only decrease the norm} \nonumber \\
&> 
\frac{10^2}{\sqrt{r_{\min}}} - \left(1 + O\left(\frac{\sqrt{\e}}{\varphi}\right) \right)\frac{1}{\sqrt{r_{\min}}} && \text{By Lemma~\ref{lem:dotmu} and Definition of $X$} \nonumber \\
&\geq \frac{90}{\sqrt{r_{\min}}} &&\text{For } \frac{\e}{\varphi^2} \text{ small enough}\label{eq:normofcentersmall}
\end{align}
Combining \eqref{eq:varianceinP} and \eqref{eq:normofcentersmall} we get:
\begin{equation}\label{eq:numberofoutliers}
|X| \leq O\left(b \cdot \frac{\e}{\varphi^2}\right) \cdot r_{\min} \leq O\left(b \cdot \frac{\e}{\varphi^2}\right) \cdot \frac{n}{k}
\end{equation}

\paragraph{Tests performed for non-outliers are approximately correct.} 
Observe that by the fact that spectral dot product succeeds we have by Lemma~\ref{lem:qualityofapproximation} that for all $x \in V \setminus X$ and for all $i \in [k]$:
\begin{equation}\label{eq:nonoutliers}
 \left|\frac{\langle f_x,\Pi\mu_i \rangle}{\|\Pi\mu_i\|^2} - \frac{\adp{f_x, \wh{\Pi} \widehat{\mu}_i}}{\an{\wh{\Pi} \widehat{\mu}_i}^2} \right| \leq 0.02 \text{,}
\end{equation}
as $\rn{Q f_x}^2 \leq \frac{10^4 }{r_{\min}}$ and the norm in any subspace can only be smaller and thus the assumption of Lemma~\ref{lem:qualityofapproximation} is satisfied.

\paragraph{1. At least half of the cluster means is removed from $S$.} Now we proceed with proving that most of the candidate clusters $\wh{C}_{\wh{\mu}}^{(T_1, \dots, T_{i-1})}$ have small outer-conductance and thus the corresponding $\wh{\mu}$'s are removed from set $S$ (see line~\ref{testcenterscondactancetest} of \textsc{ComputeOrderedPartition}). For brevity we will refer to $(T_1, \dots, T_{i-1})$ as $P$ in this proof.

Let $\mu \in S$. Let
$$ I :=  \bigcup_{\mu', \mu'' \in \{\mu_1, \dots, \mu_d\} } \Cr{\Pi  \mu', 0.9} \cap \Cr{\Pi \mu'', 0.9} \text{.}$$
\noindent
By Lemma~\ref{lem:pointsoutside} we have that 
\begin{equation}\label{eq:bndonintersections}
\left| I \right|  \leq O \left( b \cdot \frac{\e}{\varphi^2} \right) \cdot \frac{n}{k} \text{.}
\end{equation}

So by \eqref{eq:numberofoutliers} and \eqref{eq:bndonintersections} and Markov inequality we get that there exists a subset of clusters $\mathcal{R}  \subseteq \mathcal{C}$ such that $|\mathcal{R}| \geq b/2$ and for every $C \in \mathcal{R}$ we have that: 
\begin{equation}\label{eq:smallnumboutside}
|C \cap (I \cup X)| \leq O \left( \frac{\e}{\varphi^2} \right) \cdot \frac{n}{k}
\end{equation}
We will argue that for any order of the for loop from line~\ref{testcentersforloop} of Algorithm~\ref{alg:testmus} it is true that for every $C \in \mathcal{R}$ with corresponding means $\mu, \wh{\mu}$ the candidate cluster $\widehat{C}_{\wh{\mu}}^P$ satisfies the if statement from line~\ref{testcenterscondactancetest} of Algorithm~\ref{alg:testmus}. Recall that as per Definition~\ref{def:candidateclusters}:
$$\wh{C}_{\wh{\mu}}^{P} = \left\{ x \in V : \textsc{IsInside} \left(x,\wh{\mu},P, \{\wh{\mu}_1, \dots, \wh{\mu}_k \} \setminus \bigcup_{j \in [i-1]} T_j \right) = \textsc{True} \right\} \text{.}$$

First note that behavior of the algorithm is independent of the order of the for loop from line~\ref{testcentersforloop} of Algorithm~\ref{alg:testmus}  as by definition $\wh{C}_{\wh{\mu}}^P$'s for $\wh{\mu} \in S$ are pairwise disjoint. Now let $C \in \mathcal{R}$, $\mu, \wh{\mu}$ be the means corresponding to $C$ and $\wh{C}_{\wh{\mu}}^P$ be the candidate cluster corresponding to $\wh{\mu}$ with respect to $P = (T_1, \dots, T_{i-1})$. 

\textit{Now the goal is to show:
$$ |\widehat{C}_{\wh{\mu}}^P \triangle C| \leq O \left(\frac{\e}{\varphi^2}  \cdot \log(k) \right) \cdot |C| \text{,}$$
from which we will later conclude that the outer-conductance of the candidate set $\wh{C}_{\wh{\mu}}^P$ is small. Intuitively we would like to argue that
\begin{equation}\label{eq:informalgoal}
\Cr{\Pi \mu, 0.96} \stackrel{\sim}{\subseteq} \Ca{\wh{\Pi} \wh{\mu},0.93} \stackrel{\sim}{\subseteq} \Cr{\Pi \mu, 0.9} \text{,}
\end{equation}
and then use Lemmas from Section~\ref{sec:bound_int}. The equation \eqref{eq:informalgoal} is true up to the outliers as Lemma~\ref{lem:qualityofapproximation} guarantees a bound of $0.02$ for the test computations for vertices of small norm.} 

Now we give a formal proof, which is split into $2$ parts:

\paragraph{Showing $|\wh{C}_{\wh{\mu}}^P \cap C| \geq \left(1 - O \left(\frac{\e}{\varphi^2}  \cdot \log(k) \right) \right) |C|$.}
First we note that by \eqref{eq:nonoutliers} $\Cr{\Pi \mu,0.96}$ is mostly contained in $\Ca{\wh{\Pi} \wh{\mu},0.93}$. Recall that (see  Definition~\ref{def:apxthreshold sets} and Definition~\ref{def:thresholdsets}) we have:
$$\Ca{\wh{\Pi} \wh{\mu},0.93} = \left\{x \in V : \adp{f_x,\wh{\Pi} \wh{\mu}} \geq 0.93 \an{\wh{\Pi} \wh{\mu}}^2 \right\} \text{,}$$
$$\Cr{\Pi \mu,0.96} = \left\{x \in V : \rdp{f_x,\Pi \mu} \geq 0.96 \rn{\Pi \mu}^2 \right\} \text{.}$$
And \eqref{eq:nonoutliers} gives us that the errors for non-outliers are bounded by $0.02$, so formally we get:
\begin{equation}\label{eq:mostlytakes96}
\Cr{\Pi \mu,0.96} \setminus \Ca{\wh{\Pi} \wh{\mu},0.93} \subseteq X    
\end{equation}
Similarly, also by \eqref{eq:nonoutliers} we get that the intersections of candidate clusters $\Ca{\wh{\Pi} \wh{\mu},0.93}$ lie mostly in $I$. Formally:
\begin{equation}\label{eq:intersectionstruct}
\Ca{\wh{\Pi} \wh{\mu},0.93} \cap \bigcup_{\hat{\mu}' \neq \hat{\mu}} \Ca{\wh{\Pi} \hat{\mu}',0.93} \subseteq I \cup X 
\end{equation}
By Lemma~\ref{lem:mostinset} we get that
\begin{equation}\label{eq:goodintersection}
|C \cap \Cr{\Pi \mu, 0.96}| \geq \left(1 - O \left( \frac{\e}{\varphi^2} \right) \right) |C|
\end{equation}
\textit{Note that having two thresholds ($0.9$ and $0.96$) is very important here (see Remark~\ref{rem:twothresholds}). Intuitively we need some slack to show $\Cr{\Pi \mu, 0.96} \stackrel{\sim}{\subseteq} \Ca{\wh{\Pi} \wh{\mu},0.93} \stackrel{\sim}{\subseteq} \Cr{\Pi \mu, 0.9}$ as there is always some error in computation of $\frac{\adp{f_x, \wh{\Pi} \widehat{\mu}_i}}{\an{\wh{\Pi} \widehat{\mu}_i}^2}$.} 

Now combining inductive assumption  $|V^P \cap C| \geq \left(1 - \Upsilon \cdot i \cdot \frac{\e}{\varphi^2} \right) |C|$, \eqref{eq:smallnumboutside}, \eqref{eq:mostlytakes96}, \eqref{eq:intersectionstruct} and \eqref{eq:goodintersection} we get that:  
\begin{align}
|\widehat{C}_{\wh{\mu}}^P \cap C| 
&\geq 
\left(1 - \Upsilon \cdot i \cdot \frac{\e}{\varphi^2} \right) |C| - O \left( \frac{\e}{\varphi^2} \right) \cdot \frac{n}{k} - O \left( \frac{\e}{\varphi^2}  \right) \cdot |C| \nonumber \\
&\geq 
\left(1 - O \left(\frac{\e}{\varphi^2}  \cdot \log(k) \right)\right)|C| \label{eq:regioncontainslotofcluster}
\end{align}
\paragraph{Showing $|\wh{C}_{\wh{\mu}}^P \cap (V^P \setminus C)| \leq O\left( \frac{\e}{\varphi^2} \right) |C|$.}
Recall that as per Definition~\ref{def:candidateclusters} we have:
$$V^{P} = V \setminus \bigcup_{j < i} \bigcup_{\wh{\mu} \in T_j} \wh{C}_{\wh{\mu}}^{(T_1, \dots, T_{j-1})}$$
By Lemma~\ref{lem:notalostfromoutside} we get that:
\begin{equation}\label{eq:intersectionwithrest}
|\Cr{\Pi \mu, 0.9} \cap (V^P \setminus C)| \leq |\Cr{\Pi \mu, 0.9} \cap (V \setminus C)| \leq O \left(\frac{\e}{\varphi^2} \right) |C|
\end{equation}
By \eqref{eq:nonoutliers} we get:
\begin{equation}\label{eq:allinoutliersfirst}
\Ca{\wh{\Pi} \wh{\mu}, 0.93} \setminus \Cr{\Pi \mu, 0.9} \subseteq X    
\end{equation}

\noindent Let $\pi'$ be the projection onto the span of $\{ \Pi\mu, \wh{\Pi} \hat{\mu} \}$. Moreover let:
$$X' := \left\{x \in V : \|\pi' f_x\|^2 > \frac{10^4 }{r_{\min}} \right\} \text{.}$$
Note that by Lemma~\ref{lem:smallinsubspace} we have:
\begin{equation}\label{eq:boundin2dim}
\sum_{x \in V} \rn{\pi' f_x - \pi' \mu_x}^2 \leq O \left(\frac{\e}{\varphi^2} \right)    
\end{equation}
Moreover for every $x \in X'$ we have:
\begin{align}
\rn{\pi' f_x - \pi' \mu_x}
&\geq \rn{\pi' f_x} - \rn{\pi' \mu_x} && \text{By } \triangle \text{ inequality} \nonumber \\
&\geq \frac{10^2}{\sqrt{r_{\min}}} - \frac{2}{\sqrt{r_{\min}}} && \text{By Lemma~\ref{lem:dotmu}} \nonumber \\
&\geq \frac{90}{\sqrt{r_{\min}}} \label{eq:distin2dim}
\end{align}
Combining \eqref{eq:boundin2dim} and \eqref{eq:distin2dim} we get that:
\begin{equation}\label{eq:bndonoutliers}
|X'| \leq O\left(\frac{\e}{\varphi^2}\right) \cdot r_{\min} \leq O\left(\frac{\e}{\varphi^2}\right) \cdot \frac{n}{k}
\end{equation}
Then similarly to the analysis of \eqref{eq:nonoutliers} by Lemma~\ref{lem:qualityofapproximation} and the fact that spectral dot product succeeds we have that for every $x \in V \setminus X'$:
$$\left|\frac{\langle f_x,\Pi\mu \rangle}{\|\Pi\mu\|^2} - \frac{\adp{f_x, \wh{\Pi} \widehat{\mu}}}{\an{\wh{\Pi} \widehat{\mu}}^2} \right| \leq 0.02$$
Thus we get:
\begin{equation}\label{eq:allinoutliers}
\Ca{\wh{\Pi} \wh{\mu}, 0.93} \setminus \Cr{\Pi \mu, 0.9} \subseteq X' \text{,}    
\end{equation}
as for points not belonging to $X'$ the error in the tests performed by the Algorithm is upper bounded by $0.02$. Combining \eqref{eq:intersectionwithrest} and \eqref{eq:allinoutliers} we have:
\begin{equation}\label{eq:takenfromoutside}
|\wh{C}_{\wh{\mu}}^P \cap (V^P \setminus C)| \leq O\left( \frac{\e}{\varphi^2} \right) |C|
\end{equation}
And finally putting \eqref{eq:regioncontainslotofcluster} and \eqref{eq:takenfromoutside} together we have:
\begin{equation}\label{eq:smallsymdiff}
    |\widehat{C}_{\wh{\mu}}^P \triangle C| \leq O \left(\frac{\e}{\varphi^2}  \cdot \log(k) \right) \cdot |C|
\end{equation}

\paragraph{Outer-conductance of $\wh{C}_{\wh{\mu}}^P$ is small.}
Now we want to argue that $\wh{C}_{\wh{\mu}}^P$ passes the outer-conductance test from line~\ref{testcenterscondactancetest} in Algorithm~\ref{alg:testmus}. From the definition of outer-conductance:
\begin{align*}
\phi(\wh{C}_{\wh{\mu}}^P) 
&\leq 
\frac{E(C,V \setminus C) + d|\wh{C}_{\wh{\mu}}^P \triangle C|}{d(|C| - |\wh{C}_{\wh{\mu}}^P \triangle C|)} \\
&\leq 
\frac{E(C,V \setminus C) + d \cdot O \left(\frac{\e}{\varphi^2}  \cdot \log(k) \right) |C|}{d(|C| - O \left(\frac{\e}{\varphi^2}  \cdot \log(k) \right) |C|)} && \text{from \eqref{eq:smallsymdiff}} \\
&\leq 
\frac{O \left(\frac{\e}{\varphi^2} \right) + O \left(\frac{\e}{\varphi^2}  \cdot \log(k) \right)}{1-O \left(\frac{\e}{\varphi^2}  \cdot \log(k) \right)} && \text{because } \frac{E(C,V \setminus C)}{d|C|} \leq O \left(\frac{\e}{\varphi^2} \right)\\
&\leq 
O \left(\frac{\e}{\varphi^2}  \cdot \log(k) \right) && \text{for sufficiently small } \frac{\e}{\varphi^2}  \cdot \log(k)  
\end{align*}
 and it follows that
  $$\phi(\wh{C}_{\wh{\mu}}^P) \leq O \left(\frac{\e}{\varphi^2}  \cdot \log(k) \right) \text{,}$$ 
To conclude we notice that by \eqref{eq:smallsymdiff} we have $|\wh{C}_{\wh{\mu}}^P| > \frac{3 \cdot r_{\min}}{4}$, so as $\mathcal{E}_{\text{conductance}}$ is true we get that the candidate cluster $\wh{C}_{\wh{\mu}}^P$ passes the test.

\paragraph{2. Clusters corresponding to unremoved $\wh{\mu}$'s satisfy condition \ref{cond:second}.}

\noindent
Now we prove that for every $\wh{\mu}$ that was not removed from set $S$ only small fraction of its corresponding cluster is removed. 

Let $\wh{\mu} \in S$ be such that it is not removed in the $i$-th step and let $\mu$ be the corresponding real center. Let $C \in \mathcal{C}$ be the cluster corresponding to $\mu$. By assumption $|V^P \cap C| \geq \left(1- \Upsilon \cdot i \cdot  \frac{\e}{\varphi^2} \right) |C| $, where recall that $P = (T_1, \dots, T_{i-1})$. 

\textit{Now the goal is to show:
$$ |C \cap (V^{(T_1, \dots, T_{i-1})} \setminus V^{(T_1, \dots, T_{i})})| \leq O \left( \frac{\e}{\varphi^2} \right) |C| + O\left(\frac{\e}{\varphi^2}\right) \cdot \frac{n}{k} \leq  O \left( \frac{\e}{\varphi^2} \right) |C|  \text{,} $$
that is, that there is only a small number of vertices that were removed in the $i$-th stage and belong to $C$ at the same time. Intuitively we want to show that:
$$(V^{(T_1, \dots, T_{i-1})} \setminus V^{(T_1, \dots, T_{i})}) \cap \Cr{\Pi \mu, 0.96} \approx \emptyset \text{,}$$
and then use Lemmas from Section~\ref{sec:bound_int}. The equation above is true up to the outliers as Lemma~\ref{lem:qualityofapproximation} guarantees a bound of $0.02$ for the test computations for vertices of small norm.} 

Now we give a formal proof. Let $x \in V^{(T_1, \dots, T_{i-1})} \setminus V^{(T_1, \dots, T_{i})} = V^P \setminus V^{(T_1, \dots, T_{i})} $, where $(T_1, \dots, T_{i})$ is the partial partition of $\wh{\mu}$'s created in the first $i$ steps of the for loop of \textsc{ComputeOrderedPartition}. Then there exists $\wh{\mu'} \in \{\wh{\mu}_1, \dots, \wh{\mu}_b \}$ such that $x \in \wh{C}_{\wh{\mu}'}^P$ (recall that $\wh{C}_{\wh{\mu}'}^P$ is the candidate cluster corresponding to $\wh{\mu}'$ with respect to $P = (T_1, \dots, T_{i-1})$). Recall (Definition~\ref{def:candidateclusters}) that $\wh{C}_{\wh{\mu}'}^P$ is defined as:
$$\wh{C}_{\wh{\mu}'}^P = \left\{ x \in V : \textsc{IsInside} \left(x,\wh{\mu}',P, \{\wh{\mu}_1, \dots, \wh{\mu}_k \} \setminus \bigcup_{j \in [i-1]} T_j \right) = \textsc{True} \right\} \text{.}$$
This in particular means (see line~\ref{ln:dot-x-mu} of Algorithm \textsc{IsInside}) that:
$$\wh{C}_{\wh{\mu}'}^P \subseteq \Ca{\wh{\Pi} \wh{\mu}',0.93} \setminus \bigcup_{\wh{\mu}''\in  S\setminus \{\wh{\mu}'\}} \Ca{\wh{\Pi} \wh{\mu}'',0.93}  ,$$
which, as $\wh{\mu} \in S \setminus \{ \wh{\mu}' \}$, gives us that:
$$\wh{C}_{\wh{\mu}'}^P \cap  \Ca{\wh{\Pi} \wh{\mu},0.93} = \emptyset \text{,}$$
which using Definition~\ref{def:thresholdsets} gives that:
\begin{equation}\label{eq:propertyofreturnedeasy}
\adp{f_x, \wh{\Pi} \wh{\mu}} < 0.93 \an{\wh{\Pi} \wh{\mu}}^2\text{.}
\end{equation}

\noindent
We define $X'$ similarly as in point $1$. Let $\pi'$ be the projection onto the span of $\{ \Pi\mu, \wh{\Pi} \hat{\mu} \}$. Moreover let:
$$X' := \left\{x \in V : \|\pi' f_x\|^2 > \frac{10^4 }{r_{\min}} \right\} \text{.}$$
Similarly to the proof of \eqref{eq:bndonoutliers} we get
\begin{equation}\label{eq:bndoutliersu}
|X'| \leq O\left( \frac{\e}{\varphi^2} \right) \cdot r_{\min} \leq O\left( \frac{\e}{\varphi^2} \right) \cdot \frac{n}{k}
\end{equation}
Again similarly to the analysis of \eqref{eq:nonoutliers} we note that by Lemma~\ref{lem:qualityofapproximation} and the fact that spectral dot product succeeds:
\begin{equation}\label{eq:qualityfornonoutliers'}
\text{for every } y \in V \setminus X' \text{ we have  }\left|\frac{\langle f_y,\Pi\mu \rangle}{\|\Pi\mu\|^2} - \frac{\adp{f_y, \wh{\Pi} \widehat{\mu}}}{\an{\wh{\Pi} \widehat{\mu}}^2} \right| \leq 0.02
\end{equation}
Combining \eqref{eq:qualityfornonoutliers'} and \eqref{eq:propertyofreturnedeasy} we get that if $x \in V \setminus X'$ then
\begin{align*}
\frac{\langle f_x,\Pi\mu \rangle}{\|\Pi\mu\|^2}   
&\leq \frac{\adp{f_y, \wh{\Pi} \widehat{\mu}}}{\an{\wh{\Pi} \widehat{\mu}}^2} + 0.02 \\
&< 0.93 + 0.02 \\
&< 0.96
\end{align*}
which also means that $x \not\in \Cr{\Pi\mu,0.96}$. This means that:
\begin{equation}\label{eq:takeonlyoutliers}
(V^{(T_1, \dots, T_{i-1})} \setminus V^{(T_1, \dots, T_{i})} ) \cap \Cr{\Pi \mu, 0.96} \subseteq X'
\end{equation}

\noindent
But by Lemma~\ref{lem:mostinset}:
\begin{equation}\label{eq:smalloutsideeasy}
|\{x \in C :\rdp{\Pi f_x, \Pi\mu} < 0.96 \| \Pi\mu \|_2^2 \}| \leq O \left( \frac{\e}{\varphi^2} \right) \cdot |C| 
\end{equation}
Combining \eqref{eq:takeonlyoutliers}, \eqref{eq:bndoutliersu} and \eqref{eq:smalloutsideeasy} we get that: 
\begin{equation}\label{eq:ind2secondtolast}
|C \cap (V^{(T_1, \dots, T_{i-1})} \setminus V^{(T_1, \dots, T_{i})})| \leq O \left( \frac{\e}{\varphi^2} \right) |C| + O\left(\frac{\e}{\varphi^2}\right) \cdot \frac{n}{k} \leq  O \left( \frac{\e}{\varphi^2} \right) |C|  \text{.}
\end{equation}
By assumption that  $|V^{(T_1, \dots, T_{i-1})} \cap C| \geq \left(1 - \Upsilon \cdot i \cdot  \frac{\e}{\varphi^2} \right) |C| $  and \eqref{eq:ind2secondtolast} we get that: 
$$|V^{(T_1, \dots, T_{i})} \cap C| \geq \left(1 - \Upsilon \cdot (i+1) \cdot \frac{\e}{\varphi^2 } \right) |C| \text{,}$$
provided that $\Upsilon$ is bigger than the constant hidden under $O$ notation in \eqref{eq:ind2secondtolast}.

\end{proof}

The following Lemma is a generalization of Theorem~\ref{lem:realcenterswork} that uses Lemma~\ref{lem:ifclosetomuswin} as an inductive step to show that if \textsc{ComputeOrderedPartition} is called with $\wh{\mu}$'s that are good approximations to $\mu$'s then it returns an ordered partition that induces a good collection of clusters.

\begin{lemma}\label{thm:closeapxmuswin}
Let $k \geq 2$, $\varphi \in (0,1)$ and $\frac{\e}{\varphi^2}  \cdot \log(k) $ be smaller than a sufficiently small constant.
Let $G=(V,E)$ be a $d$-regular graph that admits a $(k,\varphi,\epsilon)$-clustering $C_1, \dots, C_k$.
Then conditioned on the success of the spectral dot product oracle the following conditions hold.

If \textsc{ComputeOrderedPartition}($G,\widehat{\mu}_1,\widehat{\mu}_2, \dots, \widehat{\mu}_k,s_1,s_2)$ is called with $(\hat{\mu}_1, \dots, \hat{\mu}_k)$ such that for every $i \in [k]$ we have $\|\wh{\mu}_i - \mu_i\|^2 \leq 10^{-12} \cdot \frac{\e}{\varphi^2 \cdot k^2}\|\mu_i\|^2$ then \textsc{ComputeOrderedPartition} returns $(\textsc{True}, (T_1 ,\dots, T_b))$ such that $(T_1, \dots, T_b)$ induces a collection of clusters $\{\wh{C}_{\wh{\mu}_1}, \dots, \wh{C}_{\wh{\mu}_k}\}$ such that there exists a permutation $\pi$ on $k$ elements such that for all $i \in [k]$:
$$\left|\wh{C}_{\wh{\mu}_{i}} \triangle C_{\pi(i)}\right| \leq O \left(\frac{\e}{\varphi^3}  \cdot \log(k) \right)|C_{\pi(i)}|$$ and 
 $$\phi(\wh{C}_{\wh{\mu}_i}) \leq O \left(\frac{\e}{\varphi^2}  \cdot \log(k) \right) \text{.}$$
\end{lemma}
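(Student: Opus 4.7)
The plan is to mimic the inductive structure used for Theorem~\ref{lem:realcenterswork}, but replace every application of Lemma~\ref{lem:induction} with its robust counterpart Lemma~\ref{lem:ifclosetomuswin}, and handle the fact that \textsc{OuterConductance} only approximates $\phi(\cdot)$ by conditioning on the event $\mathcal{E}_{\text{conductance}}$ from Definition~\ref{def:event1} (together with the success of the spectral dot product oracle, which is already assumed in the hypothesis). Since the hypothesis of the lemma matches the hypothesis of Lemma~\ref{lem:ifclosetomuswin} (namely, $\|\wh{\mu}_i - \mu_i\|^2 \leq 10^{-12} \tfrac{\e}{\varphi^2 k^2} \|\mu_i\|^2$ for every $i$), Lemma~\ref{lem:ifclosetomuswin} is directly applicable at every stage.

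First I would set up the induction over the iteration counter $i$ of the for loop in line~\ref{ln:testcentersmainloop} of \textsc{ComputeOrderedPartition}. The inductive hypothesis at the start of iteration $i$ is: $|S|\le k/2^{i-1}$, and for every $\wh{\mu}\in S$ the corresponding true cluster $C$ satisfies $|V^{(T_1,\dots,T_{i-1})}\cap C|\ge (1-\Upsilon\cdot i\cdot \e/\varphi^2)|C|$, where $\Upsilon$ is the constant produced by Lemma~\ref{lem:ifclosetomuswin}. The base case $i=0$ is trivial. For the inductive step, I would invoke Lemma~\ref{lem:ifclosetomuswin} to conclude that at least half of the remaining $\wh{\mu}$'s are moved from $S$ into $T_i$ (so $|S|$ halves), and that the surviving clusters still retain a $1-\Upsilon(i+1)\e/\varphi^2$ fraction of their mass in $V^{(T_1,\dots,T_i)}$. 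After at most $\lceil\log k\rceil$ iterations the set $S$ becomes empty, so the algorithm exits through line~\ref{ln:testcenterstruereturn} returning $\textsc{True}$ together with an ordered partition $(T_1,\dots,T_b)$ of the full set $\{\wh{\mu}_1,\dots,\wh{\mu}_k\}$; the assumption $\tfrac{\e}{\varphi^2}\log k$ sufficiently small makes $\Upsilon(\log k+1)\tfrac{\e}{\varphi^2} < 1$ so the inductive hypothesis remains meaningful throughout.

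Next I would translate the "passed the conductance test" guarantee into an actual bound on the outer conductance of each returned candidate cluster. By definition of $(T_1,\dots,T_b)$ every $\wh{\mu}\in T_j$ has its candidate cluster $\wh{C}_{\wh{\mu}}^{(T_1,\dots,T_{j-1})}$ certified by the test in line~\ref{testcenterscondactancetest}, i.e. the value returned by \textsc{OuterConductance} is at most $O(\tfrac{\e}{\varphi^2}\log k)$. The proof of Lemma~\ref{lem:ifclosetomuswin} shows that each such candidate cluster satisfies $|\wh{C}\triangle C|\le O(\tfrac{\e}{\varphi^2}\log k)|C|$, which in particular yields $|\wh{C}|\ge \tfrac{3}{4}\min_p|C_p|$. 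Conditioning on $\mathcal{E}_{\text{conductance}}$, the first bullet of Definition~\ref{def:event1} then turns the approximate-conductance bound into a true bound $\phi(\wh{C}_{\wh{\mu}_i})\le O(\tfrac{\e}{\varphi^2}\log k)$, which is the second conclusion of the lemma. By Definition~\ref{def:implicitclustering} and the way \textsc{IsInside} is written, the resulting clusters $\{\wh{C}_{\wh{\mu}_1},\dots,\wh{C}_{\wh{\mu}_k}\}$ are automatically pairwise disjoint.

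Finally, having pairwise disjoint sets of small outer conductance, I would invoke Lemma~\ref{lem:howtocluster} to obtain a permutation $\pi$ such that $|\wh{C}_{\wh{\mu}_i}\triangle C_{\pi(i)}|\le O(\tfrac{\e}{\varphi^3}\log k)|C_{\pi(i)}|$, giving the first conclusion of the lemma. The main obstacle, compared with the exact-center argument of Theorem~\ref{lem:realcenterswork}, is making sure that the two sources of noise, namely approximate centers and approximate dot products on one side and approximate conductance estimates on the other, do not compound: the former is handled cleanly by the outlier-plus-slack argument already packaged inside Lemma~\ref{lem:ifclosetomuswin} (which is why the Algorithm tests with threshold $0.93$ sandwiched between $0.9$ and $0.96$, see Remark~\ref{rem:twothresholds}), while the latter is handled by observing that the certified candidate clusters are large enough to land in the "good" bullet of Definition~\ref{def:event1}, so the $\tfrac{1}{2}\phi(\cdot)-\e/\varphi^2$ lower bound on the returned value translates cleanly into the desired $O(\tfrac{\e}{\varphi^2}\log k)$ upper bound on $\phi(\wh{C})$ after absorbing the additive $\e/\varphi^2$ into the asymptotic notation.
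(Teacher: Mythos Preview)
Your approach is essentially the same as the paper's: induct via Lemma~\ref{lem:ifclosetomuswin} to show $S$ empties in $\lceil\log k\rceil$ rounds, read off the conductance bound from the fact that every returned cluster passed the test, and finish with Lemma~\ref{lem:howtocluster}. You are in fact more explicit than the paper in spelling out the role of $\mathcal{E}_{\text{conductance}}$.

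There is one small slip in your middle step. You write that ``the proof of Lemma~\ref{lem:ifclosetomuswin} shows that each such candidate cluster satisfies $|\wh{C}\triangle C|\le O(\tfrac{\e}{\varphi^2}\log k)|C|$'' and use this to get $|\wh{C}|\ge\tfrac34\min_p|C_p|$ so that the first bullet of Definition~\ref{def:event1} applies. But Lemma~\ref{lem:ifclosetomuswin} only establishes that symmetric-difference bound for the ``good half'' $\mathcal{R}$ of clusters at each stage; a $\wh{\mu}$ can land in $T_j$ by passing the test without being in $\mathcal{R}$, and for such a $\wh{\mu}$ you have no a priori size lower bound. The clean fix is to condition not just on $\mathcal{E}_{\text{conductance}}$ but on the full conclusion of Lemma~\ref{lem:conductanceglued}: its second bullet gives $q\ge\tfrac12\phi(\wh{C})-\e/\varphi^2$ even when the cluster is small, so ``$q\le O(\tfrac{\e}{\varphi^2}\log k)$'' immediately yields $\phi(\wh{C})\le O(\tfrac{\e}{\varphi^2}\log k)$ with no size hypothesis needed. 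The paper's own proof of this lemma skates over the point with the phrase ``as it passed the test'' and defers the honest justification to the proof of Theorem~\ref{thm:uglythm}, where exactly this two-bullet conclusion of Lemma~\ref{lem:conductanceglued} is invoked.
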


\begin{proof}
Note that for $i=0$ in the for loop in line~\ref{ln:testcentersmainloop} of \textsc{ComputeOrderedPartition} $S$ and clusters $\{C_1, \dots, C_k \}$ trivially satisfy assumptions of Lemma~\ref{lem:ifclosetomuswin}. So using Lemma~\ref{lem:ifclosetomuswin} and induction we get that for every $i \in [0..\lceil \log(k) \rceil]$ at the beginning of the $i$-th iteration:
\begin{itemize}
    \item $|S| \leq k / 2^i$, 
    \item for every $\wh{\mu} \in S$ with corresponding $\mu$ and corresponding cluster $C$ we have $ |V^{(T_1, \dots, T_{i-1})} \cap C| \geq \left(1 - \Upsilon \cdot i \cdot \frac{\e}{\varphi^2} \right)|C|$ (where $\Upsilon$ is the constant from the statement of Lemma~\ref{lem:ifclosetomuswin}).
\end{itemize} 

In particular this means that after $O(\log(k))$ iterations set $S$ becomes empty. This also means that \textsc{ComputeOrderedPartition} returns in line~\ref{ln:testcenterstruereturn}, so it returns \textsc{True} and the ordered partial partition $(T_1, \dots, T_b)$ is in fact an ordered partition of $\{\wh{\mu}_1, \dots, \wh{\mu}_k\}$. 

Note that by definition (see Definition~\ref{def:implicitclustering}) all the approximate clusters $\{\wh{C}_{\wh{\mu}_1}, \dots, \wh{C}_{\wh{\mu}_k}\}$ are pairwise disjoint and moreover for every constructed cluster $\wh{C} \in \{\wh{C}_{\wh{\mu}_1}, \dots, \wh{C}_{\wh{\mu}_k}\}$ we have:
$$\phi(\widehat{C}) \leq O \left(\frac{\e}{\varphi^2}  \cdot \log(k) \right) \text{,}$$
as it passed the test in line~\ref{testcenterscondactancetest} of \textsc{ComputeOrderedPartition}. So by Lemma~\ref{lem:howtocluster} it means that there exists a permutation $\pi$ on $k$ elements such that for all $i \in [k]$:
$$\left|\wh{C}_{\wh{\mu}_{i}} \triangle C_{\pi(i)}\right| \leq  O \left(\frac{\e}{\varphi^3}  \cdot \log(k) \right)|C_{\pi(i)}| \text{.}$$

Recall Remark~\ref{rem:firstphithensym} for why the proof follows this framework of first arguing about outer-conductance and only after that, using Lemma~\ref{lem:howtocluster}, reasoning about symmetric difference.
\end{proof}

Now we present the final Theorem of this section which shows that \textsc{FindCenters} with high probability returns an ordered partition that induces a good collection of clusters. The proof is a careful union bound of error probabilities.

\begin{theorem}\label{thm:uglythm}
Let $k \geq 2$, $\varphi \in (0,1)$, $\frac{\e \log(k)}{\varphi^3}$  be smaller than a sufficiently small constant. Let $G=(V,E)$ be a $d$-regular graph that admits a $(k,\varphi,\epsilon)$-clustering $C_1,\dots,C_k$. Then Algorithm~\ref{alg:findrepresentatives} with probability $1 - \eta$ returns an ordered partition $(T_1 ,\dots, T_b)$ such that $(T_1, \dots, T_b)$ induces a collection of clusters $\{\wh{C}_{\wh{\mu}_1}, \dots, \wh{C}_{\wh{\mu}_k}\}$ such that there exists a permutation $\pi$ on $k$ elements such that for all $i \in [k]$:
$$\left|\wh{C}_{\wh{\mu}_{i}} \triangle C_{\pi(i)}\right| \leq  O \left(\frac{\e}{\varphi^3}  \cdot \log(k) \right)|C_{\pi(i)}|$$ and 
 $$\phi(\wh{C}_{\wh{\mu}_i}) \leq  O \left(\frac{\e}{\varphi^2}  \cdot \log(k) \right) \text{.}$$ 
Moreover
\begin{itemize}
    \item Algorithm~\ref{alg:findrepresentatives} (\textsc{FindCenters}) runs in time $$\widetilde{O}_{\varphi} \left(\log^2(1/\eta) \cdot 2^{O(\frac{\varphi^2}{\e} \cdot k^4 \log^2(k))} \cdot n^{1/2 + O(\e/\varphi^2)} \right) \text{,}$$ and uses $\widetilde{O}_{\varphi}\left( \left(\frac{k}{\e} \right)^{O(1)} \cdot n^{1/2 + O(\e/\varphi^2)} \right)$ space,
    \item Algorithm~\ref{alg:ballcarving} (\textsc{HyperplanePartitioning}) called with $(T_1, \dots, T_b)$ as a parameter runs in time $\widetilde{O}_{\varphi}\left( \left(\frac{k}{\e} \right)^{O(1)} \cdot n^{1/2 + O(\e/\varphi^2)} \right)$ per one evaluation.
\end{itemize}
\end{theorem}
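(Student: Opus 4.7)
The plan is to combine the structural results already established: that a sufficiently random sample contains enough representatives per cluster (Lemma~\ref{lem:s-eachCi}), that the empirical mean of such a sample concentrates around the true cluster mean (Lemma~\ref{lem:app-mu-norm}), and that once $\wh{\mu}_i \approx \mu_i$ the partitioning machinery succeeds (Lemma~\ref{thm:closeapxmuswin}). The main work will be to glue these three deterministic/high-probability events together via a careful union bound and to verify that one iteration of the outer \texttt{for} loop in Algorithm~\ref{alg:findrepresentatives} succeeds with at least constant probability, so that $O(\log(1/\eta))$ repetitions boost the success probability to $1-\eta$.

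First I would fix one iteration $t\in[\log(2/\eta)]$ and condition on the following three favourable events. \textbf{(a)} The spectral dot product oracle (Theorem~\ref{thm:dot}) succeeds throughout the iteration; by a union bound over the polynomially many queries this costs only $n^{-\Omega(1)}$ failure probability. \textbf{(b)} The sample $S$ drawn in line~\ref{ln:sampleS} satisfies $|S\cap C_i|\geq \Omega(s/k)$ for every $i\in[k]$; by Lemma~\ref{lem:s-eachCi} and our choice $s=\Theta(\tfrac{\varphi^2}{\e}k^4\log k)$ this holds with probability $\geq 9/10$. \textbf{(c)} The event $\mathcal{E}_{\text{conductance}}$ of Definition~\ref{def:event1} holds; by Lemma~\ref{lem:conductanceglued} applied to each invocation of \textsc{OuterConductance}, together with the choice of $s_1,s_2$ inside \textsc{FindCenters}, each individual test succeeds with probability $1-\eta'$ for $\eta'$ small enough that the union bound over the at most $2^{O(k\log k)}$ candidate partitions and $O(\log k)$ stages gives an overall success probability $\geq 9/10$.

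Next I would argue that, conditional on (a)--(c), at least one guess $(P_1,\ldots,P_k)\in\textsc{Partitions}(S)$ yields the desired output. Indeed, the ``correct'' guess assigning each $x\in S\cap C_i$ to $P_i$ has $|P_i|=\Omega(\tfrac{\varphi^2}{\e}k^3\log k)$ by event (b), so Lemma~\ref{lem:app-mu-norm} applied with $\zeta=\Theta\bigl(\tfrac{\sqrt{\e}}{\varphi k}\bigr)$ and failure probability $\eta/(10k)$ (using that $\tfrac{\e\log k}{\varphi^2}$ is small, so the exponent $1/(1-80\epsilon/\varphi^2)$ is $\approx 1$) and a union bound over $i\in[k]$ guarantees $\|\wh{\mu}_i-\mu_i\|^2\leq 10^{-12}\cdot\frac{\e}{\varphi^2k^2}\|\mu_i\|^2$ for every $i$. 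Then Lemma~\ref{thm:closeapxmuswin}, invoked with these $\wh{\mu}_i$'s, produces $(\textsc{True},(T_1,\ldots,T_b))$ with the claimed per-cluster recovery and conductance guarantees, so \textsc{ComputeOrderedPartition} in line~\ref{ln:testcent} returns \textsc{True} and \textsc{FindCenters} halts with a valid output. The probability that \textit{some} iteration $t$ succeeds is at least $1-(1-\tfrac{1}{2})^{\log(2/\eta)}\geq 1-\eta$ after absorbing the $n^{-\Omega(1)}$ and $\eta/10$ losses.

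The only subtle point, and what I expect to be the main obstacle, is bookkeeping the failure probabilities cleanly: the event $\mathcal{E}_{\text{conductance}}$ must hold uniformly over \emph{all} candidate partitions tested in a given iteration (not just the good one), because otherwise Lemma~\ref{lem:ifclosetomuswin} cannot be applied to conclude that \textsc{ComputeOrderedPartition} returns \textsc{True} on the good guess; this forces $\eta'$ inside Lemma~\ref{lem:conductanceglued} to be exponentially small in $k\log k$, which is why $s_1,s_2$ carry a $\log(1/\eta)\cdot k\log k$-type dependence hidden inside the $\widetilde{O}_{\varphi}$ notation. Finally, the runtime and space bounds follow by multiplying: the preprocessing for the dot product oracle costs $\widetilde{O}_\varphi((k/\e)^{O(1)}n^{1/2+O(\e/\varphi^2)})$ (Theorem~\ref{thm:dot}); one invocation of \textsc{ComputeOrderedPartition} does $k\cdot O(\log k)$ calls to \textsc{OuterConductance}, each taking $\widetilde{O}_\varphi((s_1+s_2)\cdot s^4(k/\e)^{O(1)}n^{1/2+O(\e/\varphi^2)})$ by Lemma~\ref{lem:conductanceglued}; and we try $|\textsc{Partitions}(S)|\leq k^s = 2^{O(\frac{\varphi^2}{\e}k^4\log^2 k)}$ guesses across $O(\log(1/\eta))$ outer iterations. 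Multiplying and absorbing lower-order factors into $\widetilde{O}_\varphi$ yields the stated bound. The per-query time of \textsc{HyperplanePartitioning} follows directly from the per-evaluation cost of $\textsc{IsInside}$, which performs $k^{O(1)}$ approximate dot-product evaluations, each costing $\widetilde{O}_\varphi((k/\e)^{O(1)}n^{1/2+O(\e/\varphi^2)})$ by Corollary~\ref{corr:dotpi}.
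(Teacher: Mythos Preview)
Your plan correctly identifies the main ingredients (Lemmas~\ref{lem:s-eachCi}, \ref{lem:app-mu-norm}, \ref{thm:closeapxmuswin}, \ref{lem:conductanceglued}) and the amplification-by-repetition structure, and the runtime bookkeeping is essentially right. However, there is a genuine logical gap in the correctness argument.

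You argue that, on the ``correct'' guess $(P_1,\ldots,P_k)$, \textsc{ComputeOrderedPartition} returns \textsc{True} with the claimed guarantees, and conclude that ``\textsc{FindCenters} halts with a valid output.'' But \textsc{FindCenters} iterates over \emph{all} partitions of $S$ and returns the \emph{first} one for which \textsc{ComputeOrderedPartition} returns \textsc{True}. Nothing in your argument rules out that some \emph{other} partition---one whose $\wh{\mu}_i$'s are far from the true $\mu_i$'s---causes \textsc{ComputeOrderedPartition} to return \textsc{True} before the correct partition is even reached. Lemma~\ref{thm:closeapxmuswin} says nothing about such partitions. The paper handles this with a separate argument: conditioned on $\mathcal{E}_{\text{conductance}}$, \emph{any} candidate cluster that passes the test in line~\ref{testcenterscondactancetest} must have $\phi(\wh{C})\leq O(\tfrac{\e}{\varphi^2}\log k)$ (since the returned value $q$ satisfies $q\geq \tfrac{1}{2}\phi(\wh{C})-\e/\varphi^2$ by Lemma~\ref{lem:conductanceglued}), and then Lemma~\ref{lem:howtocluster} converts this into the symmetric-difference guarantee. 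Thus whichever partition is returned first is automatically good. This step is essential and missing from your plan; without it, the proof does not go through. (This is also the real reason $\mathcal{E}_{\text{conductance}}$ must hold over all partitions---not merely so that Lemma~\ref{lem:ifclosetomuswin} applies to the good guess, as you write.)

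A minor point: in your step~(c) you union-bound over ``at most $2^{O(k\log k)}$ candidate partitions,'' but the number of partitions of $S$ is $k^s = 2^{O(\frac{\varphi^2}{\e}k^4\log^2 k)}$, which is what drives the required setting of $s_1,s_2$ (and you use the correct count later in the runtime analysis).
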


\begin{proof}
We first prove the runtime guarantee and then we show correctness.

\paragraph{Runtime.} The first step of \textsc{FindCenters} (Algorithm~\ref{alg:findrepresentatives}) is to call \textsc{InitializeOracle($G,1/2$)} (Algorithm~\ref{alg:LearnEmbedding}) which by Lemma~\ref{lem:qualityofapproximation} runs in time $\widetilde{O}_{\varphi}\left( \left(\frac{k}{\e} \right)^{O(1)} \cdot n^{1/2 + O(\e/\varphi^2)} \right)$ and uses $\widetilde{O}_{\varphi}\left( \left(\frac{k}{\e} \right)^{O(1)} \cdot n^{1/2 + O(\e/\varphi^2)} \right)$ space (It's the preprocessing time in the statement of Lemma~\ref{lem:qualityofapproximation}). Then Algorithm~\ref{alg:findrepresentatives} repeats the following procedure $O(\log(1/\eta))$ times.

It tests all partitions of a set of sampled vertices of size $s = O( \frac{\varphi^2}{\e} \cdot k^4 \log(k) )$ 
into $k$ sets. There is at most $k^{s} = 2^{O(\frac{\varphi^2}{\e} \cdot k^4 \log^2(k))}$ of them. Notice that for each partition each $\widehat{\mu}_i$ is defined as 
$$\widehat{\mu}_i := \frac{1}{|P_i|}\sum_{x \in P_i} f_x \text{,}$$ 
so as the number of sampled points is $O( \frac{\varphi^2}{\e} \cdot k^4 \log(k) )$ then each $\widehat{\mu}_i$ is an average of at most $O( \frac{\varphi^2}{\e} \cdot k^4 \log(k) )$ points. To analyze the runtime notice that:

\begin{itemize}
    \item For each partition Algorithm~\ref{alg:findrepresentatives} runs Algorithm~\ref{alg:testmus}, 
    \item Algorithm~\ref{alg:testmus} invokes Algorithm~\ref{alg:estimateconductande} (\textsc{OuterConductance}) $k^{O(1)}$ times,
    \item \textsc{OuterConductance} takes, by Lemma~\ref{lem:conductanceglued}, $(s_1 + s_2) \cdot \frac{1}{\varphi^2} \cdot s^4 \cdot \left(\frac{\varphi^2}{\e} k \right)^{O(1)} \cdot n^{1/2 + O(\e/\varphi^2)} \log^2(n)$ time,
    \item $s_1 = \Theta( \frac{\varphi^2}{\e} k^5 \log^2(k) \log(1/\eta))$ and $s_2 = \Theta( \frac{\varphi^4}{\e^2} k^5 \log^2(k) \log(1/\eta))$. 
\end{itemize}


\noindent
So in total the runtime of \textsc{FindCenters} is 
$$\frac{1}{\varphi^2} \left(\frac{\varphi^2}{\e} k \right)^{O(1)} n^{1/2 + O(\e/\varphi^2)} \log^3(n) + \log(1/\eta) 2^{O(\frac{\varphi^2}{\e} \cdot k^4 \log^2(k))} k^{O(1)} (s_1 + s_2) \frac{s^4}{\varphi^2} \left(\frac{\varphi^2}{\e} k \right)^{O(1)} n^{1/2 + O(\e/\varphi^2)} \log^2(n)$$
Substituting for $s, s_1, s_2$ it simplifies to:
$$\frac{1}{\varphi^2}\log^2(1/\eta) \cdot 2^{O(\frac{\varphi^2}{\e} \cdot k^4 \log^2(k))} \cdot n^{1/2 + O(\e/\varphi^2)} \log^3(n)$$

\noindent
Runtime of Algorithm~\ref{alg:ballcarving}: Each $\wh{\mu}_i$ is an average of at most $s$ points, where $ s \leq O( \frac{\varphi^2}{\e} \cdot k^4 \log(k) )$, Algorithm~\ref{alg:ballcarving} performs $k^{O(1)}$ tests $\adp{f_x, \wh{\Pi}(\wh{\mu})} \geq 0.93 \rn{\wh{\Pi}(\wh{\mu})}^2$ and by Lemma~\ref{lem:qualityofapproximation} each test takes $\widetilde{O}_{\varphi}\left(s^4 \cdot \left(\frac{k}{\e} \right)^{O(1)} \cdot n^{1/2 + O(\e/\varphi^2)} \right)$ time. So in total the runtime of one invokation of \textsc{ClassifyByHyperplanePartioning}$(\cdot, (T_1, \dots, T_b))$ is in:

$$ \widetilde{O}_{\varphi}\left( \left(\frac{k}{\e} \right)^{O(1)} \cdot n^{1/2 + O(\e/\varphi^2)} \right) $$

\paragraph{Error of \textsc{OuterConductance} algorithm.} Now we analyze the error probabilities of \textsc{OuterConductance} across all the iterations of our algorithm. Note that we run the test for each cluster for each partition and for each of the $\log\left(2/\eta\right)$ iterations of the algorithm. So in total we run \textsc{OuterConductance} test $2^{O(\frac{\varphi^2}{\e} \cdot k^4 \log(k)^2)}k\log\left(\frac{2}{\eta}\right)$ times. By setting $s_1$ in 
$$O \left( k \left(\log(4/\eta)+\log(k \log(1/\eta))+ \frac{\varphi^2}{\e} \cdot k^4\log^2(k) \right) \right) \leq O \left( \frac{\varphi^2}{\e} \cdot k^5 \cdot \log^2(k) \cdot \log(1/\eta)\right) \text{,}$$ 
and $s_2$ in:
$$O \left( \frac{\varphi^2 \cdot k }{ \e } \left(\log(4/\eta)+\log(k \log(1/\eta))+ \frac{\varphi^2}{\e} \cdot k^4\log^2(k) \right) \right) \leq O \left( \frac{\varphi^4}{\e^2} \cdot k^5 \cdot \log^2(k) \cdot \log(1/\eta)\right) \text{,}$$ 
we get by Lemma~\ref{lem:conductanceglued} that the probability that the conclusion of Lemma~\ref{lem:conductanceglued} is not satisfied in a single run is bounded by $$\frac{\eta}{100\cdot2^{\Omega \left(\frac{\varphi^2}{\e} \cdot k^4 \log^2(k) \right)}k\log\left(\frac{1}{\eta}\right)}$$ 
So by union bound over the clusters, the partitions and the iterations we conclude that with probability $1-\frac{\eta}{50}$ the algorithm for every invokation returns a value satisfying the statement of Lemma~\ref{lem:conductanceglued}. Moreover observe that this also means that $\mathcal{E}_{\text{conductance}}$ is true as conclusions of Lemma~\ref{lem:conductanceglued} are stronger than the property required for event $\mathcal{E}_{\text{conductance}}$ to be true.  

\paragraph{W.h.p. every returned ordered partition defines a good clustering.} 
By the lower bound on the error probability of \textsc{OuterConductance} algorithm above we get that with probability $1 - \frac{\eta}{50}$ every cluster $\wh{C}$ that passes the test from line~\ref{testcenterscondactancetest} of Algorithm~\ref{alg:testmus} has to satisfy:
$$\phi(\wh{C}) \leq  O \left(\frac{\e}{\varphi^2}  \cdot \log(k) \right) \text{,}$$
as for $\wh{C}$ to pass the test the value $q$ returned by \textsc{OuterConductance} has to satisfy $q \leq O \left(\frac{\e}{\varphi^2}  \cdot \log(k) \right)$ but by Lemma~\ref{lem:conductanceglued} we have $q \geq \frac{1}{2}\phi \left(\wh{C}_{\wh{\mu}}^{(T_1, \dots, T_b)} \right) - \e/\varphi^2$. Now by Lemma~\ref{lem:howtocluster} this implies that if Algorithm~\ref{alg:findrepresentatives} returns an ordered partition, then with probability $1 - \frac{\eta}{50}$ the collection of clusters it defines satisfies the statement of the Theorem. 

\paragraph{Each iteration succeeds with constant probability.} In the remaining part of the proof we will show that a clustering is accepted with probability $1-\frac{\eta}{2}$. First note that from the paragraph \textbf{Error of \textsc{OuterConductance} algorithm} we know that $\mathcal{E}_{\text{conductance}}$ holds with probability $1 - \frac{\eta}{50}$. Next we show that in each iteration of the outermost for loop of Algorithm~\ref{alg:findrepresentatives} it succeeds with probability $1/2$ (conditioned on $\mathcal{E}_{\text{conductance}}$). By amplification this will imply our result. 

Now consider one iteration. Let $S$ be the set of sampled vertices. Observe that there exists a partition of $S = P_1 \cup P_2 \cup \dots \cup P_k$ such that for all $i\in [k]$, $P_i=S\cap C_i$. We set $s = 10^{15} \cdot \frac{\varphi^2}{\e} \cdot k^4 \log(k)$. Therefore by Lemma \ref{lem:s-eachCi} with probability at least $\frac{9}{10}$ we have for all $i \in [k]$
\[|S\cap C_i| \geq\frac{0.9 \cdot s }{k} \cdot \min_{p,q \in [k]} \frac{|C_p|}{|C_q|} \geq 9 \cdot 10^{14} \cdot \frac{\varphi^2}{\e} \cdot k^3 \log(k) \text{.}\]
Let $\delta=k^{-50}$ and $\zeta=\frac{10^{-6} \sqrt{ \e }}{\varphi \cdot k}$.
Therefore, we have
\[|S\cap C_i|\geq 9 \cdot 10^{14} \cdot \frac{\varphi^2}{\e} \cdot k^3 \log(k) \geq c\cdot\left(k\cdot \log \left(\frac{k}{\delta}\right) \cdot \left(\frac{1}{\delta}\right)^{(80\cdot\epsilon/\varphi^2)}\cdot\left(\frac{1}{\zeta}\right)^2 \right)^{1/(1-(80\cdot\epsilon/\varphi^2))}\]
where $c$ is the constant from Lemma \ref{lem:app-mu-norm}. The last inequality holds since $\frac{\e}{\varphi^2}  \log(k) $ is smaller than a sufficiently small constant, hence, $\left(\frac{\varphi^2}{\epsilon}\right)^{(\epsilon/\varphi^2)}\in O(1)$, and $k^{(\epsilon/\varphi^2)}\in O(1)$. Therefore by Lemma \ref{lem:app-mu-norm} for all $i \in [k]$ with probability at least $1-k^{-50}$ we have: 
$$\|\widehat{\mu}_i - \mu_i\|_2 \leq \zeta\cdot\|\mu_i\|_2 =\frac{10^{-6} \sqrt{ \e }}{\varphi \cdot k}\|\mu_i\|_2 \text{.}$$ 
Hence, by union bound over all sets $P_i$, with probability at least $\frac{9}{10}-k\cdot k^{-50} \geq \frac{7}{8}$ we get $\|\widehat{\mu}_i - \mu_i\|_2 \leq \frac{10^{-6} \sqrt{ \e }}{\varphi \cdot k}\|\mu_i\|_2$ for all $i \in [k]$ simultaneously.  

Now by Theorem~\ref{thm:dot} and the union bound we get that spectral dot product oracle succeeds with probability $1 - n^{-48}$. So by
Lemma~\ref{thm:closeapxmuswin} and the union bound \textsc{FindCenters} with probability $\frac{7}{8} - n^{-48} \geq \frac{1}{2}$ returns an ordered partition $(T_1, \dots, T_b)$ which induces a collection of clusters $\{\wh{C}_{\wh{\mu}_1}, \dots, \wh{C}_{\wh{\mu}_k}\}$ such that there exists a permutation $\pi$ on $k$ elements such that for all $i \in [k]$:
$$\left|\wh{C}_{\wh{\mu}_{i}} \triangle C_{\pi(i)}\right| \leq  O \left(\frac{\e}{\varphi^3}  \cdot \log(k) \right)|C_{\pi(i)}|$$ and 
 $$\phi(\wh{C}_{\wh{\mu}_i}) \leq  O \left(\frac{\e}{\varphi^2}  \cdot \log(k) \right) \text{.}$$ 
\end{proof}

\subsection{LCA}\label{sec:combining}

Now we prove the main result of the paper. Recall that a clustering oracle (Definition~\ref{def:oracle}) is a randomized algorithm that when given query access to  a $d$-regular graph $G = (V,E)$ that admits $(k,\varphi,\e)$-clustering $C_1, \dots, C_k$ it provides consistent access to a \textbf{partition} $\widehat{C}_1, \dots, \widehat{C}_k$ such that there exists a permutation $\pi$ on $k$ elements such that for all $i \in [k]$:
\begin{equation}\label{eq:symdiffguarantee}
    \left|\wh{C}_{\wh{\mu}_{i}} \triangle C_{\pi(i)}\right| \leq  O \left(\frac{\e}{\varphi^3}  \cdot \log(k) \right)|C_{\pi(i)}| \text{.}
\end{equation}
Consistency means that a vertex $x \in V$ is classified in the same way every time it is queried. 

First we will show a Proposition (Proposition~\ref{prop:partitioncollection}) that shows that it is enough to design an algorithm that returns a \textbf{collection of disjoint clusters} (not necessarily a partition) that satisfies \eqref{eq:symdiffguarantee} to get a clustering oracle. Using this Proposition as a reduction we then show Theorem~\ref{thm:mainresult}, which is the main Theorem of the paper.

\begin{proposition}\label{prop:partitioncollection}
If there exists a randomized algorithm $\mathcal{O}$ that when given query access to a $d$-regular graph $G=(V,E)$ that admits a $(k,\varphi,\epsilon)$-clustering  $C_1, \ldots , C_k$, the algorithm $\mathcal{O}$ provides
consistent query access to a \textbf{collection of disjoint clusters} $\mathcal{C}=(\widehat{C}_1,\ldots, \widehat{C}_k)$ of $V$. The collection $\mathcal{C}$ is determined solely by $G$ and the algorithm's random seed.
Moreover, with probability at least $9/10$ over the random bits of $\mathcal{O}$ the collection $\mathcal{C}$ has the following property:
for some permutation $\pi$ on $k$ elements one has for every $i \in [k]$:
$$|C_i \triangle \widehat{C}_{\pi(i)}| \leq O\left(\frac{\epsilon}{\varphi^3}\right) |C_i|\text{.}$$
Then if clusters have equal sizes and $\frac{\e \cdot n}{\varphi^3 \cdot k \log(k)}$ is bigger than a constant then there exists an algorithm $\mathcal{O}'$ that is a $(k,\varphi,\e)$-clustering oracle with the same running time and space up to constant factors.
\end{proposition}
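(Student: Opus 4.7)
The plan is to define $\mathcal{O}'$ by a simple post-processing of $\mathcal{O}$: on query $x\in V$, first invoke $\mathcal{O}(x)$, and if $\mathcal{O}$ places $x$ in some $\widehat{C}_j$, return $j$; otherwise return $h(x)$, where $h:V\to[k]$ is a hash function determined by a dedicated slice of the random tape that is disjoint from the bits used by $\mathcal{O}$ and that hashes each vertex independently and uniformly to $[k]$ (any $O(\log n)$-wise independent family suffices; alternatively, one may just extract $\log k$ fresh random bits per vertex from the seed). Both $\mathcal{O}$ and $h$ are deterministic given the random seed, so $\mathcal{O}'$ is consistent, and evaluating $h$ costs $O(1)$, so the running time, space, and number of random bits of $\mathcal{O}'$ match those of $\mathcal{O}$ up to constant factors. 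The output is a partition because every vertex is assigned exactly one label.

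Correctness will follow from a standard Chernoff calculation. Let $U:=V\setminus\bigcup_j\widehat{C}_j$ denote the set of unclassified vertices. On the $9/10$-probability event $\mathcal{F}$ that $\mathcal{O}$ succeeds, there is a permutation $\pi$ such that
\[
|U|\;\le\;\sum_{i=1}^k\bigl|C_i\setminus\widehat{C}_{\pi(i)}\bigr|\;\le\;\sum_{i=1}^k O\!\left(\frac{\e}{\varphi^3}\right)|C_i|\;=\;O\!\left(\frac{\e n}{\varphi^3}\right).
\]
Conditionally on $\mathcal{O}$'s output (and thus on $U$), the number of vertices of $U$ reassigned by $h$ to cluster $\pi(i)$, call it $Z_i$, is a sum of $|U|$ independent Bernoulli($1/k$) variables with $\mathbb{E}[Z_i]=|U|/k\le O(\e n/(\varphi^3 k))$. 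A multiplicative Chernoff bound combined with an additive $\log k$-slack gives
\[
\Pr\!\left[Z_i\;>\;\frac{|U|}{k}+C\log k\right]\;\le\;k^{-10}
\]
for a suitable absolute constant $C$; a union bound over the $k$ clusters makes this simultaneously true with probability $1-k^{-9}=1-o(1)$.

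The new partition $\widehat{C}'_{\pi(i)}:=\widehat{C}_{\pi(i)}\cup\{x\in U:h(x)=\pi(i)\}$ therefore satisfies, on the intersection of $\mathcal{F}$ with the concentration event,
\[
\bigl|\widehat{C}'_{\pi(i)}\triangle C_i\bigr|\;\le\;\bigl|\widehat{C}_{\pi(i)}\triangle C_i\bigr|+Z_i\;=\;O\!\left(\frac{\e}{\varphi^3}\right)|C_i|+O\!\left(\frac{\e n}{\varphi^3 k}+\log k\right).
\]
Using $|C_i|=n/k$ (equal cluster sizes) and the hypothesis that $\e n/(\varphi^3 k\log k)$ is bounded below by a constant, both error terms $\e n/(\varphi^3 k)$ and $\log k$ are $O(\e|C_i|/\varphi^3)$, yielding the desired per-cluster bound.

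The one delicate point, and the main obstacle, is that Definition~\ref{def:oracle} requires success probability at least $9/10$, while $\mathcal{F}$ itself only has probability $9/10$; we must therefore drive the hashing failure probability to $o(1)$ so that $\Pr[\mathcal{F}\cap\{\text{hash concentrates}\}]\ge 9/10-o(1)$. This is handled by choosing $C$ above to make $k^{-10}$ negligible and, if one insists on the literal constant $9/10$, by a standard amplification of $\mathcal{O}$ (running it a constant number of times and combining the outputs) to compensate for the $o(1)$ loss; either way the overhead is absorbed by the constants in running time, space, and error bound.
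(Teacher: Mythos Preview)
Your approach is the same as the paper's: hash the unclassified vertices uniformly into $[k]$ and bound the per-bucket load via Chernoff plus a union bound over the $k$ labels. The one slip is the Chernoff step: when $|U|/k\gg\log k$ the standard deviation of $Z_i$ is $\Theta(\sqrt{|U|/k})$, so an additive $C\log k$ deviation does not give a $k^{-10}$ tail. The paper instead fixes the threshold at an absolute value $O\!\bigl(\tfrac{\e n}{\varphi^3 k}\bigr)\ge |U|/k$ and applies the $\delta\ge 1$ form of Chernoff to obtain tail $e^{-\Omega(\e n/(\varphi^3 k))}\le \tfrac{1}{100k}$, which is exactly where the hypothesis that $\tfrac{\e n}{\varphi^3 k\log k}$ exceeds a constant is used. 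With this fix your downstream bound $Z_i=O\!\bigl(\tfrac{\e n}{\varphi^3 k}\bigr)$ and the rest of the argument go through unchanged. Your remark about the $9/10$ success threshold is apt; the paper's own calculation in fact lands at $8/10$ and implicitly treats the constant as inessential.
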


\begin{proof}
The idea is to assign the points outside $\bigcup_{i \in [k]} \wh{C}_i$ randomly. That is to assign vertex $x \in V$, $\mathcal{O}'$ works exactly the same like $\mathcal{O}$ but if $\mathcal{O}$ left $x$ unassigned then $\mathcal{O}'$ assigns $x$ to a value chosen from $[k]$ uniformly at random.

Let $R = V \setminus \bigcup_{i \in [k]} \wh{C}_i$ and for every $i\in [k]$ let $S_i \subseteq R$ be the set of vertices that were randomly assigned to $\wh{C}_i$. By the fact that for every $i \in [k]$ $|C_i \triangle \widehat{C}_{\pi(i)}| \leq O\left(\frac{\epsilon}{\varphi^3}\right)  |C_i|$ we get that there exists a constant $C$ such that:
\begin{equation}\label{eq:outliers}
|R| \leq C \cdot \frac{\e}{\varphi^3}  \cdot  n \text{.}    
\end{equation}
Now let $i \in[k]$. By the Chernoff bound we have that for every $\delta \geq 1$:
\begin{equation}
P \left[\left| |S_i| - \frac{|R|}{k} \right| \geq  \delta \frac{|R|}{k} \right] \leq e^{- \delta \frac{|R|}{3 \cdot k}}    
\end{equation}
Setting $\delta = \frac{C \cdot \e \cdot n}{\varphi^3 \cdot |R|}$ we get:
\begin{equation}\label{eq:chernoff}
P \left[\left| |S_i| - \frac{|R|}{k} \right| \geq  C \cdot \frac{\e}{\varphi^3} \cdot \frac{n}{k} \right] \leq e^{- \frac{C \cdot \e \cdot n}{3 \cdot \varphi^3 \cdot k} }    
\end{equation}
Combining \eqref{eq:outliers} and \eqref{eq:chernoff} and the assumption that $\frac{\e \cdot n}{\varphi^3 \cdot k \log(k)}$ is bigger than a constant we get that 
$$P \left[|S_i| \geq 2C \cdot \frac{\e}{\varphi^3} \cdot \frac{n}{k} \right] \leq \frac{1}{100 \cdot k}$$
Using the union bound we get that with probability $9/10 - k \cdot \frac{1}{100 \cdot k} \geq 8/10$ we have that for every $i \in [k]$ $|S_i| \leq 2 C \cdot \frac{\e}{\varphi^3} \cdot \frac{n}{k}$.
So finally with probability $8/10$ for every $i \in [k]$:
\begin{align*}
|C_i \triangle (\wh{C}_{\pi(i)} \cup S_{\pi(i)}) | 
&\leq |C_i \triangle \wh{C}_{\pi(i)}| + |S_{\pi(i)}| \\
&\leq O \left(\frac{\e}{\varphi^3} \right) \cdot |C_i| + O \left(\frac{\e}{\varphi^3} \right) \cdot \frac{n}{k} && \text{By definition of } \mathcal{O}\\
&\leq O \left(\frac{\e}{\varphi^3} \right) \cdot |C_i| && \text{As } \frac{\max_{p \in [k]} |C_p|}{\min_{p \in [k]} |C_p|} = O(1)\text{,}  
\end{align*}
which means that $\mathcal{O}'$ is a $(k,\varphi,\e)$-clustering oracle.
\end{proof}

\thmmainresult*

\begin{proof}
By Theorem~\ref{thm:uglythm} we get that there exists an algorithm that runs in $\widetilde{O}_{\varphi} \left( 2^{O(\frac{\varphi^2}{\e} \cdot k^4 \log^2(k))} \cdot n^{1/2 + O(\e/\varphi^2)} \right)$ time and that with probability $9/10$ returns an ordered partition $(T_1, \dots, T_b)$ of $\{\wh{\mu}_1, \dots, \wh{\mu}_k \}$ such that the induced collection of clusters $\{ \wh{C}_{\wh{\mu}_1}, \dots, \wh{C}_{\wh{\mu}_k} \}$ satisfies the following.
There exists a permutation $\pi$ on $k$ elements such that for every $i \in [1,\dots,k]$:
$$|C_{\pi(i)} \triangle \wh{C}_{\wh{\mu}_i}| \leq O \left(\frac{\e}{\varphi^3}  \cdot \log(k) \right) |C_{\pi(i)}|$$
That algorithm is the preprocessing step of oracle $\mathcal{O}$. Then for each query $x_i \in V$ we run Algorithm~\ref{alg:ballcarving} which outputs $\wh{\mu}_j$ such that $x_i \in \wh{C}_{\wh{\mu}_j}$ (Note that $x_i$ might not belong to any of $\wh{C}_{\wh{\mu}_i}$, see Proposition~\ref{prop:partitioncollection} for how to deal with that). Algorithm~\ref{alg:ballcarving} by Theorem~\ref{thm:uglythm} runs in $\widetilde{O}_{\varphi}\left( \left(\frac{k}{\e} \right)^{O(1)} \cdot n^{1/2 + O(\e/\varphi^2)} \right)$ time.

\paragraph{Runtime tradeoff.}
Notice however that by Theorem~\ref{thm:dot} we can achieve a tradeoff in the preprocessing/query runtime and achieve $\widetilde{O}_{\varphi} \left( 2^{O(\frac{\varphi^2}{\e} \cdot k^4 \log^2(k))} \cdot n^{1 - \delta + O(\e/\varphi^2)} \right)$ for preprocessing time and $\widetilde{O}_{\varphi} ( \left(\frac{k}{\e} \right)^{O(1)} \cdot n^{1 - \delta + O(\e/\varphi^2)} )$ space and $\widetilde{O}_{\varphi}\left( \left(\frac{k}{\e} \right)^{O(1)} \cdot n^{\delta + O(\e/\varphi^2)} \right)$ for query time.

\paragraph{Random bits.} The only thing left to prove is to show that we can implement these two algorithms in LCA model using few random bits. There are couple of places in our Algorithms where we use randomness. 

First in \textsc{InitializeOracle} (Algorithm~\ref{alg:LearnEmbedding}) we sample 
$\widetilde{\Theta}(n^{O(\e/\varphi^2)}  \cdot k^{O(1)})$
random points. For that we need 
$\widetilde{\Theta}(n^{O(\e/\varphi^2)} \cdot k^{O(1)})$
random bits. 

For generating random walks in Algorithm~\ref{alg:LearnEmbedding} and Algorithm~\ref{alg:dotProduct} we need the following number of random bits. Notice that in all the proofs (see Lemma~\ref{lem:pairwise-collision}) we only need $4$-wise independence of random walks. That means that we can implement generating these random walks using a hash function $h(x)$ that for vertex $x \in V$ generates $O(\log(d) \cdot \frac{1}{\varphi^2} \cdot \log(n))$ bit string that can be interpreted as encoding a random walk of length $O(\frac{1}{\varphi^2} \cdot \log(n))$ (remember that graphs we consider are $d$-regular so $\log(d)$ bits is enough to encode a neighbour). It's enough for the hash function to be $4$-wise independent so it can be implemented using 
$O(\frac{1}{\varphi^2} \cdot \log(d) \cdot \log(n)) = \widetilde{O}_{\varphi}(1)$
random bits. 

The partitioning scheme (see Algorithm~\ref{alg:ballcarving}) works in $O(\log(k))$ adaptive stages. The stages are adaptive, that is why we use fresh randomness in every stage. For a single stage we observe that in the proof of Lemma~\ref{lem:conductanceglued} we only use Chernoff type bounds. So by~\cite{limitedIndependence} we don't need fully independent random variables. In our case it's enough to have 
$O(\log(n))$-wise independent random variables which can be implemented as hash functions using $O(\log^2(n))$ random bits. This means that in total we need 
$O(\log(k) \log^2(n)) = \widetilde{O}(1)$ 
random bits for this.

For sampling set $S$ in Algorithm~\ref{alg:findrepresentatives} we can use $O(\frac{\varphi^2}{\e} \cdot k^4 \log(k) \cdot \log(n)) = \widetilde{O}_{\varphi}(\frac{1}{\e} \cdot k^{O(1)} )$ fresh random bits. 

\noindent
So finally the total number of random bits we need is in: 
$$
\widetilde{O}_{\varphi}\left(n^{O(\e/\varphi^2)} \cdot k^{O(1)} + 1 + 1 + \frac{1}{\e} \cdot k^{O(1)} \right) \leq \widetilde{O}_{\varphi}\left(\frac{1}{\e} \cdot n^{O(\e/\varphi^2)} \cdot k^{O(1)} \right)$$
\end{proof}

\begin{remark}
Note that threshold sets $C_{y,\theta}$ (recall Definition~\ref{def:thresholdsets}) are well defined in LCA model because for all $x,y \in V$ whenever we compute $\adp{f_x,f_y}$ the result is the same as we use consistent randomness (see Definition~\ref{def:oracle}).
\end{remark}

\bibliographystyle{alpha}
\bibliography{ref.bib}
\end{document}